 \newcommand{\rmv}[1]{}
\newcommand{\br}{test-or-set}
\newcommand{\mar}{verifiable register}
\newcommand{\marc}{Verifiable Register}
\newcommand{\tar}{authenticated register}
\newcommand{\sr}{sticky register}
\newcommand{\src}{Sticky Register}
\newcommand{\precedes}{precedes}
\newcommand{\Set}{\textsc{Write}}
\newcommand{\Test}{\textsc{Read}}
\newcommand{\rSet}{\textsc{Set}}
\newcommand{\rTest}{\textsc{Test}}
\newcommand{\valid}{\textsc{Verify}}
\newcommand{\sign}{\textsc{Sign}}
\newcommand{\validp}[1]{\textcolor{black}{{\valid}\text{$#1$~execution}}}
\newcommand{\validpg}[1]{\textcolor{black}{{\valid}\text{$#1$~execution}}}
\newcommand{\set}{\mathit{set}}
\newcommand{\setb}{\mathit{set_{\bot}}}
\newcommand{\setv}{\mathit{set_{val}}}
\newcommand{\fresh}{\textbf{Help}}
\newcommand{\AW}{\mathcal{I}}
\newcommand{\iln}[1]{in}
\newcommand{\ct}{\text{correct}}
\newcommand{\hct}{\textsc{correct}}
\newcommand{\wof}{\text{witness of}}
\newcommand{\wsof}{\text{witnesses of}}
\newcommand{\Done}{\textsc{done}}
\newcommand{\done}{\textsc{done}}
\newcommand{\true}{\textsc{true}}
\newcommand{\false}{\textsc{false}}
\newcommand{\success}{\textsc{success}}
\newcommand{\fail}{\textsc{fail}}
 \title{\hspace{3.5cm}You can lie but not deny:\\\hspace{1.3cm}SWMR registers with signature properties\\\hspace{1.9cm}in systems with Byzantine processes}
\author{\hspace{4.2cm}Xing Hu \qquad Sam Toueg}{~\\\hspace{2.3cm}Department of Computer Science, University of Toronto, Canada}{}{}{}
\titlerunning{}
\authorrunning{ }
\keywords{distributed computing, concurrency, linearizability, shared registers}
\begin{document}
\nolinenumbers

\begin{CCSXML}
<ccs2012>
   <concept>
       <concept_id>10003752.10003753.10003761.10003763</concept_id>
       <concept_desc>Theory of computation~Distributed computing models</concept_desc>
       <concept_significance>500</concept_significance>
       </concept>
   <concept>
       <concept_id>10003752.10003809.10010172</concept_id>
       <concept_desc>Theory of computation~Distributed algorithms</concept_desc>
       <concept_significance>500</concept_significance>
       </concept>
   <concept>
       <concept_id>10003752.10003809.10011254</concept_id>
       <concept_desc>Theory of computation~Algorithm design techniques</concept_desc>
       <concept_significance>500</concept_significance>
       </concept>
   <concept>
       <concept_id>10002978.10002979</concept_id>
       <concept_desc>Security and privacy~Cryptography</concept_desc>
       <concept_significance>500</concept_significance>
       </concept>
   <concept>
       <concept_id>10002978.10003006.10003013</concept_id>
       <concept_desc>Security and privacy~Distributed systems security</concept_desc>
       <concept_significance>500</concept_significance>
       </concept>
 </ccs2012>
\end{CCSXML}

\ccsdesc[500]{Theory of computation~Distributed computing models}
\ccsdesc[500]{Theory of computation~Distributed algorithms}
\ccsdesc[500]{Theory of computation~Algorithm design techniques}
\ccsdesc[500]{Security and privacy~Cryptography}
\ccsdesc[500]{Security and privacy~Distributed systems security}

\maketitle 

\begin{abstract}

We define and show how to implement SWMR registers that provide properties of unforgeable digital signatures — without actually using such signatures — in systems with Byzantine processes.
Intuitively, processes can use these registers
    to write values as if they are ``signed'',
    such that these ``signed values'' can be ``verified'' by any process and ``relayed'' to any process.
All our register implementations are from SWMR registers, and they work in systems
	with $n > 3f$ processes, $f$~of which can be Byzantine.
We show that these implementations are optimal in the number of Byzantine processes they can tolerate:
	more precisely, we prove that if $3 \le n \le 3f$,
	the registers that we propose
	cannot be implemented from SWMR registers
	without using signatures.
    
 The registers that we introduce in this paper can also be implemented without signatures
 	in \emph{message-passing} systems with $n > 3f$ processes, $f$~of which can be Byzantine:
	this is because SWMR registers can be implemented in such systems
	(Mostéfaoui, Petrolia, Raynal, and Jard 2017).
	    
\end{abstract}

\section{Introduction}\label{Introduction}

We consider systems with Byzantine processes that communicate via single-writer multi-reader (SWMR) registers.
In such systems, a faulty process $p$ can first write a value $v$ that is read by some correct process $q$,
	and then it can erase $v$ and (intuitively) ``deny'' that it ever wrote $v$:
	now $q$ may be unable to prove to other processes that it actually read $v$ from $p$.
This difficulty in determining the authorship of a value $v$ greatly complicates
	the design of fault-tolerant algorithms in systems with Byzantine processes.
This is why several algorithms in the literature assume the existence of
	\emph{unforgeable digital signatures}\footnote{Unforgeable
	digital signatures are typically implemented using cryptographic techniques (such as Public-Key Cryptography).
	The assumption is that ``forging'' these signatures requires solving
	some computational problem that is known to be hard.} that processes can use to sign the values that they write (e.g., \cite{Aguilera2019, CohenKeidar2021}).
With this assumption, a process $q$ that reads a value $v$ signed by $p$
	can determine whether $p$ indeed wrote $v$ by \emph{verifying $p$’s signature}. 
Furthermore, $q$~can ``relay'' the value $v$ signed by $p$ to other processes so
	that they can also independently verify the signature and determine that $p$ wrote $v$.
So after a process $p$ properly signs a value $v$,
	and the signature is verified by some correct process, 
	$p$ can no longer deny that it has written this value.

In this paper, we introduce three
	types of SWMR registers that provide
	the above properties of unforgeable digital signatures, and show how to implement them from plain SWMR registers
	\emph{without using signatures}.
These implementations work in systems with $n > 3f$ processes,
	$f$ of which can be Byzantine,
	which we show is optimal in terms of fault-tolerance.
We now describe these registers. 

\smallskip
\textbf{Verifiable Registers.}
     A SWMR \emph{verifiable} register has $\Test$ and $\Set$ operations that behave exactly
    as with a ``normal'' SWMR register.
But in addition to $\Test$ and $\Set$, it also has $\sign$ and $\valid$ operations that
    can be used to emulate properties of unforgeable signatures. Intuitively,
    the writer $p$ of this register can sign any value that it previously wrote, and the readers can verify
    the signature of any value that may have been signed by $p$, as follows:

\rmv{ 
\begin{compactitem}
\item Process $p$ can execute $\sign(v)$ to ``sign'' any value $v$ that it has previously written. 

\item Any process $q$ can execute $\valid(v)$ on any value $v$ to determine whether
    $p$ indeed
        signed $v$: 
	$\valid(v)$ returns $\true$ if and only if $p$ previously signed $v$.
    
\item The $\valid$ operation also provides the ``relay'' property of signed values as follows:
	if any correct process executes $\valid(v)$ and this returns $\true$,
	then every correct process that executes $\valid(v)$ thereafter is guaranteed to also get $\true$.
\end{compactitem}
}

\begin{compactitem}
\item The writer $p$ can execute $\sign(v)$ to ``sign'' any value $v$ that it has previously written. 

\item Any reader $q$ can execute $\valid(v)$ on any value $v$ to determine whether
    $p$ indeed signed~$v$: 
	$\valid(v)$ returns $\true$ if and only if $p$ previously signed $v$.

\end{compactitem}

Note that the above property of the $\valid$ operation
	provides the ``relay'' property of signed values because
	if any correct reader executes $\valid(v)$ and this returns $\true$,
	then every correct reader that executes $\valid(v)$ thereafter is guaranteed to also get $\true$.
This holds even if the writer is Byzantine.
So if it writes a value $v$ into a {\mar} and signs it, the writer cannot later ``deny'' that it did so:
	all correct readers would be able to determine that the writer is lying
	by simply executing $\valid(v)$ and checking its response.
	
\rmv{
With a {\mar}, the writer is not required to sign every value that it writes,
    or to sign a value immediately after writing it:
    a writer may choose to sign only a subset of the values that it writes,
    and it is allowed to sign any of the values that it previously wrote, even older ones.}

We give the precise specification of {\mar}s in Section~\ref{marc}, and describe how to implement them from SWMR registers without using signatures in Section~\ref{marc-imp}.

 \smallskip
\textbf{Authenticated Registers.}
 With a {\mar}, $\Set$ and $\sign$ are \emph{separate} operations.
 So there is a ``gap'' between the writing of a value $v$ and its subsequent signing by the writer (if the writer decides to actually sign $v$).
Thus a {\mar} allows the following undesirable scenario:
	the writer writes a value $v$
	\emph{that it intends to sign immediately after}, then the writer slows down,
	a reader now reads $v$ from the register, but when it subsequently performs
	a $\valid(v)$ operation to check whether $v$ was \emph{signed},
	the reader gets $\false$ (because the writer did not sign $v$ yet).
 
 This delay between the writing of a value and its signing may be problematic for some algorithms that
 	use digital signatures to write \emph{only signed values} in their registers:
 	intuitively, in these algorithms
	the writing \emph{and} signing of a value
 	into a register is an \emph{atomic} operation (and
	this atomicity may be crucial to the algorithms correctness).
 For such algorithms, we propose a variation of the {\mar} that we call an \emph{\tar}.
 
 A SWMR \emph{authenticated} register has three operations, namely
 	$\Set$, $\Test$, and $\valid$.
 Intuitively:
 
 \begin{compactitem}
 \item When the writer executes a $\Set$ to write a value $v$ into the register,
		it is \emph{as if} $v$
		is automatically signed with the writer's signature
		at the moment of the writing.
		
		So, in contrast
		to a {\mar}, with an {\tar} the writting and ``signing'' of each value is atomic.
		Moreover, since \emph{every} value that is written is ``automatically signed'',
		now there is no need for a separate $\sign(-)$ operation.

 \item When a reader executes a $\Test$,
		it reads the latest value that was written. 

 \item To determine whether some value $v$ was previously written,
 	any reader
 	can execute $\valid(v)$: this operation 
	returns $\true$ if and only if $v$ was written 
	by the writer.
 \end{compactitem}

As with a {\mar}, this $\valid$ operation provides the ``relay'' property of signed values: 
	if any correct reader executes $\valid(v)$ and this returns $\true$,
	from this point on every correct reader that executes $\valid(v)$ is also guaranteed to~get~$\true$.
	
The precise specification of an {\tar} is given in Section~\ref{tar-sec},
	and its implementation from SWMR registers without using signatures is described in Section~\ref{tar-imp}.

\smallskip
\textbf{Sticky Registers.}
The use of signatures alone, however, is not sufficient to prevent another
	possible disruptive behaviour by Byzantine processes, as we now explain.
Consider a situation where each process is supposed to write a \emph{unique} value
	into one of the SWMR registers that it owns
	(e.g., this register holds the process proposal value in a consensus algorithm).
Even with unforgeable digital signatures, a Byzantine process can easily violate this ``uniqueness'' requirement
	by successively writing several \emph{properly signed values} into this register
	(e.g., it could successively propose several different values to try to foil consensus).

To prevent this malicious behaviour, processes could be required to use
	   SWMR \emph{sticky registers} for storing values that should be unique:
	these registers have {\Set} and {\Test} operations,
	but once a value is written into a sticky register, the register never changes its value again.\footnote{This is akin to the \emph{MWMR} sticky bit defined in~\cite{Plotkin89}.}
	So if any correct process reads a value $v \neq \bot$ from a sticky register $R$
	(where $\bot$ is the initial value of $R$),
	then every correct process that reads $R$ thereafter is guaranteed to read $v$.
Because $R$ is sticky, this holds even if the writer of $R$ is Byzantine.
Note that this ``stickiness'' property also gives the ``relay'' property and thus prevents ``deniability'':
	once any correct process reads a value $v \neq \bot$ from a SWMR \emph{sticky register} $R$,
	the writer of $R$ cannot deny that it wrote $v$ because all processes can now read $v$ directly from $R$.

Sticky registers can be easily used to broadcast a message
	with the following ``uniqueness'' property
	(also known as \emph{non-equivocation} in \cite{clement2012}):
	to broadcast a message $m$,
	a process $p$ simply writes $m$ into a SWMR sticky register $R$;
	to deliver $p$’s message, a process reads $R$ and
	if it reads some $m \neq \bot$, it delivers $m$.
Because $R$ is sticky, once any correct process delivers a message $m$ from $p$,
	every correct process that subsequently reads $R$ will also~deliver~$m$.\linebreak
So correct processes cannot deliver different messages from $p$, \emph{even if $p$ is Byzantine}.

We give the precise specification of {\sr}s in Section~\ref{srsec}, and describe how to implement them from SWMR registers without using signatures in Section~\ref{srsec-imp}.

\textbf{Fault-tolerance.} All our register implementations (which work
	in systems with $n > 3f$ processes, $f$ of which can be Byzantine)
	are optimal in terms of fault-tolerance.
More precisely, we prove that if $3 \le n \le 3f$,
	{\tar}s, {\mar}s and {\sr}s
	cannot be implemented from SWMR registers
	without using signatures.\footnote{When $n=2$ there is only one writer and one reader.
	In this special case, it is trivial to implement
	all the registers that we consider here using SWMR registers.}
To prove this, we consider a \emph{{\br}} object 
    that can be set by a single process and tested by multiple processes.
We show that in a system with Byzantine processes,
	this object cannot be implemented from SWMR registers without signatures if $3 \le n \le 3f$,
    	but it can be implemented from any one of the registers that we propose.

Using our register implementations, one can transform several algorithms
	that assume unforgeable digital signatures, into signature-free algorithms.
For example, they can be used to remove signatures from the implementations
	of the \emph{atomic snapshot} and \emph{reliable broadcast} objects
	described by Cohen and Keidar in \cite{CohenKeidar2021}. 
This gives the first known implementations of these objects in systems with Byzantine processes without signatures.
It should be noted, however,
    that while the implementations given in \cite{CohenKeidar2021}
    work for systems with $n>2f$ processes,
    their translated signature-free counterparts
    work only for systems with $n>3f$ processes.

Finally, we note that since SWMR registers can be implemented in \emph{message-passing} systems
    with $n > 3f$ processes, $f$~of which can be Byzantine~\cite{Mostefaoui2016},
    {\mar}s, {\tar}s, and {\sr}s can also
    be implemented in these message-passing systems without using signatures.
    
\smallskip
\textbf{Remark.}
A remark about the use of signatures is now in order.
Note that a process that reads a value $v$ from a SWMR~register~$R$
	does \emph{not} need signatures to identify who wrote $v$ into $R$:
	it knows that $v$ was written by the single process that can write $R$
	(namely, the ``owner'' of $R$).
This is because no process, even a Byzantine one, can access
	the ``write port'' of any SWMR register that it does not own.
So what are signatures used for? To the best of our knowledge,
	they are typically used to provide the following ``unforgeability'' and ``relay'' properties:
if a correct process $q$ reads {\emph{any} register (not necessarily one that is owned by $p$)
	and sees a value $v$ signed by $p$,
	then, after verifying $p$'s signature, $q$ is sure that $p$ indeed wrote $v$;
furthermore, $q$ can relay this signed value to other processes
	so they can also independently verify that $p$ wrote $v$.
As we explained above, {\mar}s, {\tar}s, and {\sr}s provide
	these two properties of signatures, but they can be implemented
	from SWMR registers
	without signatures.

\smallskip
\textbf{Roadmap}.
We briefly discuss related works in~Section \ref{relatedwork}, and our model in Section~\ref{model}.
We define {\mar}s in Section~\ref{marc}, and show how to implement them in Section~\ref{marc-imp}.
We specify {\tar}s in Section~\ref{tar-sec},
	and describe their implementation in Section~\ref{tar-imp}.
We define {\sr}s in Section~\ref{srsec}, and explain how to implement them in Section~\ref{srsec-imp}.
In Section~\ref{Impossibility-Result}, we define the {\br} object and use it to prove
    that our implementations of {\mar}s, {\tar}s, and {\sr}s are optimal
    in the number of Byzantine processes that they can tolerate.
We conclude the paper with some remarks in Section~\ref{conclusion}.
We prove that our implementations of {\mar}s, {\tar}s,  and {\sr}s are correct in Appendix~\ref{a-mar},~\ref{a-tar}, and~\ref{a-sar}, respectively.

\section{Related Work}\label{relatedwork}

Herlihy and Wing~\cite{linearizability} defined \emph{linearizability} --- a central concept of implementation correctness ---
    but their definition applies only to systems with crash failures.
To the best of our knowledge, 
Most\'efaoui, Petrolia, Raynal, and Jard~\cite{Mostefaoui2016} were the first to extend linearizability for the implementation of SWMR registers in systems with Byzantine processes.
Cohen and Keidar \cite{CohenKeidar2021} were the first to extend linearizability for \emph{arbitrary objects} in systems with Byzantine processes, and they called this extension \emph{Byzantine linearizability}.

In~\cite{CohenKeidar2021} Cohen and Keidar also gave Byzantine linearizable implementations of
    \emph{reliable broadcast}, \emph{atomic snapshot}, and \emph{asset transfer}
    objects from SWMR registers; their algorithms use signatures
    and work in systems with $n>2f$ processes.
The properties of signatures that the algorithms in~\cite{CohenKeidar2021}
    rely on are also provided
    by the properties of {\mar}s, {\tar}s, or {\sr}s,
    and so one can use our implementations of any of these registers
    to give signature-free counterparts of the algorithms in~\cite{CohenKeidar2021}
    that work for systems with $n>3f$ processes and SWMR registers.
In~\cite{Aguilera2019}, Aguilera, Ben-David, Guerraoui, Marathe and Zablotchi
    implemented \emph{non-equivocating} broadcast ---
    a weaker variant of reliable broadcast --- from SWMR \emph{regular} registers;
    their algorithm also uses signatures and works in systems with $n>2f$ processes, $f$ out of which can be Byzantine.

The results of this paper are, in some sense, the shared-memory counterparts
    of results by Srikanth and Toueg~\cite{SrikanthToueg} for message-passing systems with Byzantine processes.
In that work, they showed how to provide properties of
    signatures in message-passing systems without actually using signatures,
    and in particular, they showed how to implement \emph{authenticated broadcast}
    with a signature-free algorithm.
Since with SWMR registers one can simulate message-passing systems,
    one may think that it is possible to obtain the results of this paper
    by transforming the signature-free algorithm 
    that provides properties of signatures in message-passing systems (given in~\cite{SrikanthToueg})
    into a signature-free algorithm 
    that provides properties of signatures in shared-memory system with SWMR registers.
However, this does not work.
In a nutshell, with the asynchronous authenticated broadcast algorithm given in~\cite{SrikanthToueg}, message delivery is only \emph{eventual} and this does not
    give the linearizability property needed in shared-memory systems.

Most\'efaoui, Petrolia, Raynal, and Jard~\cite{Mostefaoui2016} provided an implementation of
    reliable broadcast in message-passing systems with Byzantine processes 
    and their implementation works for systems with $n>3f$ processes. 
However, simulating their message-passing algorithm for reliable broadcast 
    using SWMR registers does not yield a \emph{linearizable}
    implementation of a reliable broadcast \emph{object} (defined in~\cite{CohenKeidar2021}).
This is due to the eventual delivery in message-passing systems that we mentioned earlier.

Clement, Junqueira, Kate, and Rodrigues~\cite{clement2012} proposed the concept of \emph{non-equivocation} for message-passing systems with Byzantine processes.
Intuitively, {non-equivocation} prevents a process from ``sending different messages to different replicas in the same round while it was supposed to send the same message according to the protocol''. This property is similar to the  ``uniqueness'' property of a SWMR {\sr} in shared-memory systems: once this register is written, no two processes can read different values. In~\cite{clement2012} they also show that the combined capability of non-equivocation
    and \emph{transferable authentication} (such as signatures),
    can be used to transform algorithms designed to tolerate
    crash failures into algorithms that tolerate Byzantine failures in message-passing systems.

\section{Model Sketch}\label{model} 

We consider systems with $n$ asynchronous processes that communicate via SWMR registers and are subject to Byzantine failures.
A~\emph{correct} process takes infinitely many steps and executes its algorithm exactly as prescribed.
A process that is not correct is \emph{Byzantine}.
Note that a Byzantine process
    can behave arbitrarily, and in particular
    it can deviate in any way from the algorithm it is supposed to execute.

\subsection{Object Implementations}

Roughly speaking, the \emph{implementation} of an object from a set of registers
	is given by operation procedures that each process can execute
	to perform operations on the implemented object; these procedures can access the given registers.
So each operation on an implemented object \emph{spans an interval} that starts
	with an \emph{invocation}
	and completes with a corresponding \emph{response}.
An operation that has an invocation but no response is \emph{incomplete}.
In a history of an implemented object, each correct process invokes operations sequentially, where steps of different processes are interleaved.

\begin{definition}
Let $o$ and $o'$ be any two operations.
\begin{itemize}
\item $o$ \emph{precedes} $o'$ if the response of $o$ is 
	before the invocation~of~$o'$.

\item $o$ \emph{is concurrent with} $o'$ if neither precedes the other.

\end{itemize}
\end{definition}

\subsection{Linearizability of Object Implementations}

Roughly speaking, an implementation $I$ of an object $O$ (of type $T$) is linearizable,
    if processes that apply operations using the implementation $I$
    behave as if they are applying their operations to an \emph{atomic} 
    object $O$ (of type $T$).\footnote{For brevity, henceforth when we refer to an object $O$, we mean an object $O$ of type $T$, i.e., we omit the explicit reference to the \mbox{type $T$ of~$O$.}
    We sometimes we refer to $T$ as the ``sequential specification'' of~$O$.}
So with a linearizable implementation, every operation appears
	to take effect instantaneously at some point
	(the ``linearization point'') in its execution interval.

The formal definition of linearizability for systems with crash failures is given in~\cite{linearizability}.
    We informally describe it below.

\begin{definition}
    A history $H_c$ is a \emph{completion} of a history $H$ of an implementation
    if it can be obtained from $H$ as follows. For each incomplete operation $o$ in $H$, either remove $o$, or add a (corresponding) response of $o$ \mbox{(after the invocation of $o$).}
\end{definition}

\begin{definition}\label{LinearizableCrash}
A linearization of a history $H$ is a \emph{sequence} of operations $L$ such that
	an operation is in $L$ if and only if it is in $H$.
\end{definition}

\begin{definition}\label{LinearizableImplementationCrash}
    A history $H$ of an implementation is linearizable with respect to an object $O$
    if there is a completion $H_c$ of $H$ 
        with the following property. There is a linearization $L$ of $H_c$ such that:
    \begin{enumerate}
    \item  $L$ respects the precedence relation between the operations of $H_c$. That is: if $o$ precedes $o'$ in $H_c$ then $o$
    precedes
    $o'$ in $L$.
    \item $L$ conforms to the sequential specification of 
    \mbox{the object $O$.}
    \end{enumerate}
\end{definition}

\begin{definition}
    An implementation $\mathcal I$ of an object $O$
    is linearizable if every history $H$ of $\mathcal I$ is linearizable with respect to
    $O$.
\end{definition}

The above definition of linearizability for systems with crash failures does not work for systems with Byzantine failures.
For these systems, Cohen and Keidar define the concept of \emph{Byzantine linearizability} in~\cite{CohenKeidar2021}.
Intuitively, an \emph{implementation} of an object $O$ is Byzantine linearizable if it behaves as if it were an \emph{atomic} object $O$ (implemented in ``hardware'') that cannot be tampered with by any process, even Byzantine ones. With such an atomic object, all the operations are instantaneous, and while Byzantine processes can apply \emph{any} operation allowed by (the type of) the object, they cannot ``break'' it.\footnote{In sharp contrast, Byzantine processes may be able to break a ``software'' \emph{implementation} of an object because they are ``part'' of this implementation and so they can interfere with its functioning.}
So, roughly speaking, in a system with Byzantine failures,
    correct processes can use a \emph{Byzantine linearizable} implementation of an object as if it were an atomic object that can also be accessed by (but \emph{not} broken by) Byzantine processes.\footnote{This rough intuition is not quite accurate: it is known that 
    there are
    \emph{randomized} algorithms that work correctly when they use atomic objects
    but do not work if these objects are replaced with their linearizable implementations \cite{sl11,sl12,sl19,sl21}.}

The formal definition of Byzantine linearizability is given in~\cite{CohenKeidar2021}.
    We informally describe it below.

\begin{definition}
Let $H$ be a history of an implementation and $\hct$ be the set of processes that are correct in $H$.
$H|{\hct}$ is the history consisting of the steps of only the processes in $\hct$ (at the same times they occur in $H$).
\end{definition}

\begin{definition}\label{def-hbl}
    A history $H$ of an implementation is Byzantine linearizable with respect to an object~$O$
    if there exists a history $H'$ so that $H'|{\hct} = H|{\hct}$ and $H'$ is linearizable with respect to~$O$ (according to Definition~\ref{LinearizableImplementationCrash}).
\end{definition}

\begin{definition}\label{LinearizableByz-mar}\label{LinearizableImplementationByz}
An implementation $\mathcal I$ of an object $O$
	is \emph{Byzantine linearizable} if and only if every history $H$ of $\mathcal I$ is Byzantine linearizable with respect to $O$.

\end{definition}

\noindent
Since we consider systems with Byzantine processes, we define implementation correctness as follows:

\begin{definition}
An implementation $\mathcal I$ of an object $O$
	is \emph{correct} if and only for all histories $H$ of $\mathcal I$:
     \begin{itemize}
        \item \textsc{[Byzantine linearizability]} $H$ is Byzantine linearizable with respect to $O$.
        \item \textsc{[Termination]} All processes that are correct in $H$ complete all their operations.
    \end{itemize}
\end{definition}

\subsection{Implementation Steps}

When a process applies an operation to an \emph{implemented} object,
    it executes the implementation procedure that is associated with this operation.
So a process takes implementation steps ``inside'' the [invocation,response]
    intervals of its own operations.
We assume here that processes can also take
    implementation steps ``outside'' their own operation intervals.
Intuitively, they are allowed to take such steps to ``help'' other processes
    that are currently applying their own operations.

\section{\marc}\label{marc}

A SWMR \emph{verifiable} register has $\Test$ and $\Set$ operations that behave
    as with a ``normal'' SWMR register.
 In addition, this register also provides  $\sign$ and $\valid$ operations that work as follows.
 Intuitively, the writer of the register can apply a $\sign(v)$ operation to ``sign''
	any value $v$ that it has previously written.
Moreover, for any value $v$, readers can apply a $\valid(v)$ operation to determine 
    whether  $v$ was signed by the writer.
Note that the writer is not required to sign every value that it writes,
    or to sign a value immediately after writing it:
    a writer may choose to sign only a subset of the values that it writes,
    and it is allowed to sign any of the values that it previously wrote, even older ones.

More precisely, we define the sequential specification of a multivalued SWMR {\mar} where
    the writer can write any value from some domain $\mathcal V$, and the register is initialized to a value $v_o \in \mathcal V$, as follows:

\begin{definition}\label{def-mar}

A SWMR {\mar} has four operations.

\vspace*{1mm}

\textbf{Read and write operations:}

\begin{compactitem}
    \item $\Set(v)$ by the writer takes a value $v\in \mathcal{V}$ and returns $\done$.
    
    \item $\Test$ by any reader returns a value $v \in \mathcal{V}$ such that:
        \begin{itemize}
            \item either there is a $\Set(v)$ before it and this $\Set(v)$ is the last $\Set(-)$ before it,
            \item or $v = v_0$ (the initial value of the register) and there is no $\Set(-)$ before it.
        \end{itemize}
\end{compactitem}

\textbf{Sign and verify operations:}
\begin{compactitem}
      \item $\sign(v)$ by the writer takes a value $v\in \mathcal{V}$, and returns $\success$ or $\fail$.
      
      A $\sign(v)$ returns $\success$ if and only if there is a $\Set(v)$ before it. 
      
     \item $\valid(v)$ by any reader takes a value $v \in  \mathcal{V}$, and returns $\true$ or $\false$.
     
        A $\valid(v)$ returns $\true$ if and only if there is a $\sign(v)$ that returns $\success$~before~it. 
\end{compactitem}

\end{definition}

The above definition implies that a {\mar} has the following properties:

\begin{observation}\label{validity}
    [\textsc{validity}] If the writer executes a $\sign(v)$ that returns $\success$,
    then every subsequent $\valid(v)$ by any reader returns $\true$.
\end{observation}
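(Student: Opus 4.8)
The plan is to obtain the claim directly from the sequential specification of a {\mar} in Definition~\ref{def-mar}; no extra machinery is needed. First I would fix a $\sign(v)$ operation $s$ performed by the writer that returns $\success$, and let $t$ be an arbitrary $\valid(v)$ operation by some reader that is subsequent to $s$. By the specification of $\valid$ in Definition~\ref{def-mar}, $t$ returns $\true$ if and only if there is some $\sign(v)$ operation that returns $\success$ before $t$. Since $s$ is exactly such an operation --- it returns $\success$ and, by hypothesis, it is before $t$ --- the ``if'' direction of this equivalence immediately gives that $t$ returns $\true$, which is what the observation asserts.

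The only point that needs a line of care is the meaning of ``before''/``subsequent''. In a sequential history of the abstract register the operations are totally ordered, so this is unambiguous and the argument above is complete as stated. When the observation is later invoked for a (Byzantine-)linearizable implementation, one first passes to a linearization $L$ that witnesses correctness and uses clause~1 of Definition~\ref{LinearizableImplementationCrash} to conclude that $s$, which precedes $t$ in the history, is ordered before $t$ in $L$; then the sequential reasoning applies verbatim to $L$. I would also remark that nothing in the argument relies on the writer being correct: the hypothesis is only that some $\sign(v)$ returned $\success$ and the conclusion constrains the readers' $\valid(v)$ responses, so the property --- and hence the ``no-deniability'' discussed in the introduction --- holds even when the writer is Byzantine.

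I do not expect a genuine obstacle here: the observation is essentially an unfolding of the two ``if and only if'' clauses for $\sign$ and $\valid$ in Definition~\ref{def-mar}, together with transitivity of the precedence order. If anything, the only expository subtlety is making explicit that the $\sign(v)$ whose $\success$ response is assumed is the very operation that the $\valid$ clause refers to, rather than some unrelated $\sign(v)$ invocation.
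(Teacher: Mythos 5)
Your proposal is correct and matches the paper's treatment: the paper states this observation without proof as an immediate consequence of the sequential specification in Definition~\ref{def-mar}, and your unfolding of the two ``if and only if'' clauses is exactly the intended argument. The additional remarks about passing to a linearization and about the writer possibly being Byzantine are sound but not needed at this point, since the observation is a statement about the sequential specification itself.
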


\begin{observation}\label{unforgeability}
       [\textsc{unforgeability}] If a $\valid(v)$ returns $\true$ to some reader, then the writer previously executed
       a $\sign(v)$ that returned $\success$.
\end{observation}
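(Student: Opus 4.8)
The plan is to obtain Observation~\ref{unforgeability} --- and, dually, Observation~\ref{validity} --- by reading the ``iff'' in the $\valid$ clause of Definition~\ref{def-mar} directly off the sequential specification, and then lifting this spec-level fact to an arbitrary correct {\mar} implementation via Byzantine linearizability. Nothing deeper than the definition is needed; the only care required is in how the argument talks about return values of operations issued by Byzantine processes.

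Concretely, I would proceed in four steps. (1) At the level of the sequential specification the claim is immediate: Definition~\ref{def-mar} says a $\valid(v)$ returns $\true$ exactly when some earlier $\sign(v)$ returned $\success$, so the ``only if'' direction is precisely Observation~\ref{unforgeability}, and since $\sign$ is a writer operation that $\sign(v)$ is by the writer; symmetrically the ``if'' direction gives Observation~\ref{validity}, because a $\sign(v)$ that returned $\success$ and occurs before a later $\valid(v)$ forces that $\valid(v)$ to return $\true$. (2) Fix a history $H$ of an implementation $\mathcal I$ and a \emph{correct} reader $q$ whose $\valid(v)$ returns $\true$; by Byzantine linearizability there is a history $H'$ with $H'|{\hct}=H|{\hct}$ that is linearizable with respect to Definition~\ref{def-mar}, hence has a linearization $L$ of some completion of $H'$. (3) Since $q$'s $\valid(v)$ is complete and lies in $H|{\hct}=H'|{\hct}$, it occurs in $L$ still returning $\true$, so step~(1) applied inside $L$ yields a $\sign(v)$ by the writer, earlier in $L$, that returns $\success$. (4) Because $L$ respects precedence, $q$'s $\valid(v)$ cannot precede that $\sign(v)$ in $H'$, so the $\sign(v)$ either precedes or is concurrent with $q$'s $\valid(v)$ --- which I take to be the meaning of ``previously''. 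Observation~\ref{validity} transfers the same way, using instead that a correct reader's $\valid(v)$ issued after the writer's $\sign(v)$ has returned must follow that $\sign(v)$ in $L$.

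I expect the one genuinely delicate point to be the treatment of Byzantine participants. For readers it is handled by restricting the statements to correct readers --- a Byzantine reader can report any response at all, so the claim is only meaningful for a correct $q$ whose steps appear in $H|{\hct}$. For the writer, when it is Byzantine the $\sign(v)$ whose existence step~(3) deduces lives in the surrogate history $H'$ rather than necessarily in $H$ itself, so ``the writer previously executed a $\sign(v)$'' must be understood relative to $H'$; this is exactly what Byzantine linearizability licenses, and it is the strongest assertion one can hope for. Once that bookkeeping is in place, everything else is just unfolding the ``iff'' of the $\valid$ clause of Definition~\ref{def-mar}.
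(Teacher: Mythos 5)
Your step (1) is exactly the paper's (implicit) proof: Observation~\ref{unforgeability} is stated as a property of the sequential specification in Definition~\ref{def-mar}, so the ``only if'' half of the $\valid$ clause, together with the fact that $\sign$ is a writer operation, gives it immediately, and the paper offers no further argument. The lifting to implementations in your steps (2)--(4) is correct but is not part of this observation as stated; it is essentially the content of Lemma~\ref{TheSaver} (proved there for the {\br} object, with the same caveat you note about a Byzantine writer's $\sign(v)$ living only in the surrogate history $H'$) and of the Byzantine-linearizability proofs in Appendix~\ref{a-mar}.
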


\begin{observation}\label{relay}
       [\textsc{relay}] If a $\valid(v)$ returns $\true$ to some reader, then every subsequent $\valid(v)$ by any reader also returns $\true$.
\end{observation}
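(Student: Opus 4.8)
The plan is to get all three observations by directly unpacking the biconditionals in Definition~\ref{def-mar}, after passing from the sequential specification to an actual implementation history via Byzantine linearizability. At the level of the sequential specification the claims are immediate: the $\valid$ clause of Definition~\ref{def-mar} says that $\valid(v)$ returns $\true$ exactly when some $\sign(v)$ returning $\success$ precedes it, which is $[$unforgeability$]$; composing this with the same clause (a $\sign(v)$ returning $\success$ that is before one operation of a sequence is before every later operation of that sequence) yields $[$validity$]$ and $[$relay$]$. So the only real work is to transfer this reasoning to an arbitrary history $H$ of a correct {\mar} implementation.

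Concretely, I would invoke correctness to obtain a history $H'$ with $H'|{\hct} = H|{\hct}$ that is linearizable with respect to the {\mar} object, fix a completion $H'_c$ of $H'$ and a linearization $L$ of $H'_c$ that respects precedence and conforms to Definition~\ref{def-mar}, and observe that every completed operation of a correct process in $H$ occurs, with the same response, in $H'$ and hence in $L$. Then: (i) for $[$unforgeability$]$, a $\valid(v)$ returning $\true$ to a correct reader lies in $L$, so the ``only if'' half of the $\valid$ clause puts a $\sign(v)$ returning $\success$ before it in $L$, attributed to the writer; (ii) for $[$validity$]$, if the (correct) writer performs $\sign(v)$ returning $\success$ and a correct reader later performs $\valid(v)$, both operations are in $L$ and precedence-preservation puts the $\sign(v)$ first, so the ``if'' half of the $\valid$ clause forces the $\valid(v)$ to return $\true$; (iii) for $[$relay$]$, $[$unforgeability$]$ puts a $\sign(v)$ returning $\success$ before the earlier $\valid(v)$ in $L$, precedence-preservation puts it before the later $\valid(v)$ as well, and the ``if'' half again gives $\true$.

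I expect no genuine obstacle here — these are ``Observations'' for a reason — but the one point that needs a careful reading rather than actual work is the treatment of Byzantine processes. The phrases ``to some reader'' and ``by any reader'' in $[$unforgeability$]$ and $[$relay$]$ should be read as ranging over \emph{correct} readers (a Byzantine reader can report any response, and its steps are exactly what $H'|{\hct} = H|{\hct}$ discards), and ``the writer previously executed a $\sign(v)$'' is meant in the linearized sense, namely that such an operation, attributed to the writer, appears before the $\valid(v)$ in $L$. With this convention $[$unforgeability$]$ and $[$relay$]$ hold regardless of whether the writer is correct — which is precisely what makes deniability impossible for a Byzantine writer — whereas $[$validity$]$ is naturally a statement about a correct writer, since its hypothesis already posits a genuine $\sign(v)$ operation that returns $\success$.
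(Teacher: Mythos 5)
Your proposal is correct, but it does substantially more than the paper does for this statement, and the difference is worth pinpointing. Observation~\ref{relay} is a property of the \emph{sequential specification} in Definition~\ref{def-mar}: it concerns a sequence of operations on the atomic object, and the paper gives no proof because the one-line argument in your first paragraph is the entire proof --- the $\sign(v)$ returning $\success$ that the ``only if'' direction places before the earlier $\valid(v)$ is also before every subsequent $\valid(v)$, so the ``if'' direction forces $\true$. Everything after that in your proposal --- passing to $H'$ with $H'|{\hct}=H|{\hct}$, completing it, fixing a linearization $L$, and transferring responses of correct readers --- is a proof of a different, implementation-level statement. That transfer is sound, and it is exactly the shape of the paper's Lemma~\ref{TheSaver} for the {\br} object, including your (necessary) caveats that ``reader'' must mean \emph{correct} reader and that the later $\valid(v)$ completes by Termination. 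What each approach buys: your route gives the relay guarantee for any black-box correct implementation, which is the form one actually uses in applications; the paper's route keeps the Observation as a triviality about the object type and, for the concrete Algorithm~\ref{code-mar}, proves the implementation-level relay property \emph{directly from the code} as Lemma~\ref{testatestb-mar} (a $\true$ response certifies $f+1$ correct witnesses whose registers permanently contain $v$, which forces every later $\valid(v)$ to return $\true$). The paper has to do it that way because Lemma~\ref{testatestb-mar} is an ingredient \emph{of} its Byzantine-linearizability proof (it underlies $t^v_1>t^v_0$), so within the paper's development deriving relay from linearizability would be circular; as a standalone statement about an assumed-correct implementation, your derivation is fine.
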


\section{Implementation of a SWMR {\marc}}\label{marc-imp}

We now explain how to implement a {\mar} in a system with SWMR registers and $n>3f$ processes, $f$ of which may be Byzantine.
To do so, we first describe some of the difficulties that must be overcome, and how the algorithm works.

\subsection{Some Challenges and Their Solutions}\label{idea-mar}
To illustrate the main ideas, consider the special case when $n=3f+1$.
When a writer $p$ of a SWMR register $R$ writes a value $v$ into $R$, a process $q$ that reads this value from $R$ cannot immediately accept this value as if it was ``signed'' by $p$: $q$ knows that $p$ wrote $v$ into $R$, but if $p$ erases $v$ and then denies that it wrote this value, $q$ is left with no proof (no ``signature'') that it can use to convince other processes that $p$ actually wrote $v$ into $R$.

So we first adopt an idea that Srikanth and Toueg used in \cite{SrikanthToueg} to ``simulate message authentication'' (i.e., to send messages \emph{as if} they were signed by the sender using digital signatures, but without actually using signatures).
Roughly speaking, when any process reads a value $v$ \emph{directly} from $p$'s register $R$, it becomes a ``witness'' that $p$ wrote $v$ (but it does not necessarily accept $v$ as being ``signed'' by $p$), and then it publicizes that it is now a {\wof} $v$ by writing so in a SWMR register.
When a process sees that a value $v$ has at least $f+1$ witnesses, it knows at least one of them is a \emph{correct} {\wof} $v$, and so it also becomes a {\wof} $v$ (if it was not already one).
When a process $q$ sees that a value $v$ has at least $2f+1$ witnesses, it accepts that $v$ was written and ``signed'' by $p$: if $q$ now invokes $\valid(v)$, this operation should return $\true$.

Implementing the above idea in shared-memory systems, however, runs into a serious difficulty. In a nutshell, this is because in contrast to a message-passing system where message deliveries are ``eventual'', in a share-memory system with registers every invocation of a $\Test$ or $\valid(-)$ operation must complete without waiting for an eventual event \emph{that may never occur}, and it must do so with a response that is \emph{linearizable} with respect to all the $\Set(-)$ and $\sign(-)$ operations.

To illustrate this problem, suppose that a process $q$ invokes $\valid(v)$ to determine whether $p$ previously wrote $v$ into its SWMR register $R$ and ``signed'' it. Using the witnesses idea sketched above, $q$ can ``ask'' all processes whether they are now willing to be witnesses of $v$, and then wait for $2f+1$ processes to reply (it cannot wait for more replies, since $f$ of the $3f+1$ processes can be Byzantine): if at least $2f+1$ processes reply ``Yes''
then $q$'s $\valid(v)$ returns $\true$; and if strictly less than $f+1$ processes reply ``Yes'' then $q$'s $\valid(v)$ returns $\false$.
Intuitively, so far this ``partial algorithm'' ensures that: (a) if $\valid(v)$ returns $\true$, then $p$ previously wrote $v$ into $R$ and ``signed'' it, (b) if $\valid(v)$ returns $\false$, then $p$ did not previously write and ``sign'' $v$, and (c) if $\valid(v)$ returns $\true$, then any subsequent $\valid(v)$ by a correct process can \emph{not} return $\false$.

But what can $q$ do if it invokes $\valid(v)$ and the number $k$ of processes that reply ``Yes'' to its inquiry 
    is less than $2f+1$ but more that $f$ (i.e., if $f <k <2f+1$)?
    If $q$'s $\valid(v)$ returns $\false$, then it is possible that another correct process $q'$ previously invoked $\valid(v)$ and got $\true$ (because some of the $2f+1$ processes that replied ``Yes'' to $q'$, were Byzantine and they later replied ``No'' to $q$);
    and if $q$'s $\valid(v)$ returns $\true$, then a correct process $q'$ may later invoke $\valid(v)$ and get $\false$ (because some of the Byzantine processes that replied ``Yes'' to $q$ later reply ``No'' to $q'$).
Since both outcomes may violate the ``relay'' property of a {\mar} (Observation~\ref{relay}), and $q$'s $\valid(v)$ operation must complete with a $\true$ or $\false$ response, $q$ is now in a bind. To overcome this, $q$ may ask again all processes to recheck whether they are \emph{now} willing to be a {\wof} $v$, hoping that some of their Yes/No ``vote'' changes, but it is easy to see that a Byzantine writer can collude with other Byzantine processes to ensure
    that this second attempt (of asking all processes and waiting for the first $2f+1$ replies) yields exactly the same number $k$ of ``Yes''.

Solving the above issue is not obvious. The key idea is based on the following mechanism:

\begin{itemize}
\item When $q$ invokes $\valid(v)$ it proceeds in rounds. In each round $q$ does \emph{not} ask all the processes (for their Yes/No votes) and then wait for the first $2f+1$ responses to decide what to do next. Instead, to determine which processes to ask and wait for, it keeps track of two sets:
\begin{itemize}
    \item $\set_1$ that consists of all the processes that replied ``Yes'' 
    in \emph{any previous round}, and
    \item $\set_0$ that consists of all the processes that replied ``No'' \emph{after the last round in which a process replied ``Yes''}.
\end{itemize}
These sets are maintained and used as follows.

\item During each round, $q$ asks all processes that are \emph{not} in $\set_0 \cup \set_1$, and
    waits for \emph{one} of them (anyone) to reply.

\begin{itemize}

    \item If this reply is ``No'',
    then $q$ first inserts the process that replied ``No'' into $\set_0$.
    If $|\set_0| > f$ then $\valid(v)$ returns $\false$.
    Otherwise, $q$ goes to the next round.
    
    \item If this reply is ``Yes'',
    then $q$ first inserts the process that replied ``Yes'' into $\set_1$.
    If $|\set_1| > n-f$ then $\valid(v)$ returns $\true$.
    Otherwise, \emph{$q$ resets $\set_0$ to $\emptyset$} and goes to the next round.
    
    Note that in this new round, processes that were in $\set_0$ (because they voted ``No'')
    will be asked for their vote again (because $\set_0$ is reset to $\emptyset$).
        So these processes are given the opportunity to re-check whether they can be {\wsof} $v$,
        and they can change their vote from ``No'' to ``Yes''.
    But processes that replied ``Yes'' in \emph{any} round so far
        will never be asked again
    (because $\set_1$ is \emph{not} reset, in fact it is non-decreasing).
        So these processes can never go back and change their ``Yes'' votes to ``No''.

\end{itemize}
\end{itemize}

But how do we know that this way
    of implementing $\valid(v)$ guarantees \emph{termination}?
What if $q$ waits for any process that is not in $\set_0 \cup \set_1$ to reply,
    but none of them ever reply because they are \emph{all} Byzantine?
For example, suppose that at the beginning of a round $|\set_1| = 2f$ and $|\set_0| = f$ and these sets are disjoint (this can actually happen).
Then there is only \emph{one} process that is not in $\set_0 \cup \set_1$, and so $q$ must wait for its reply.
But what if this process is Byzantine and never replies?

In our proof of the algorithm,
    we show that every $\valid(v)$ by a correct process indeed completes.
To prove this, we show that at the beginning of each round,
    the set of processes that are \emph{not} in $\set_0 \cup \set_1$ contains at least one correct process,
    and so $q$ cannot get stuck waiting forever~for~one~reply.

In the above intuitive description of how the algorithm works, we said that processes tell $p_k$ whether they are {\wsof} a value $v$ or not, by replying ``Yes'' or ``No'' to $p_k$. For simplicity however, in the algorithm that we actually give (see Algorithm~\ref{code-mar}) processes do not reply with ``Yes'' or ``No'', instead they reply with the set of \emph{all} the values they are currently {\wsof}: $p_k$ can see for itself whether $v$ is included in these sets or not.

We now describe the algorithm in detail.

    \begin{algorithm}[ht!]
	\caption{Implementation of a SWMR multivalued \mar~writable by process~$p_1$ (the \emph{writer}) and readable by every process $p_k \in \{p_2,\dots, p_n\}$ (the \emph{readers}), 
		for $n > 3f$.
	}\label{code-mar} 
	\ContinuedFloat
    
        \footnotesize
        \vspace{-4mm}
	\begin{multicols}{2}

	\emph{Every process $p_i \in\{p_1,\dots,p_n\}$ has the following:}

	\hspace{4mm} \ul{shared registers}
	\vspace{.7mm}
		   
        \hspace{1cm}
		$R_{i}$: $\textsf{SWMR}$ register; initially $\emptyset$

	\hspace{1cm}
		For each reader $p_j \in \{p_2,\dots,p_n\}$:
		
		\hspace{1.5cm} $R_{ij}$: $\textsf{SWSR}$ register readable by $p_j$; 

		\hspace{2.2cm} initially $ \langle \emptyset, 0 \rangle$
	
	\columnbreak
	 \footnotesize
        \emph{The writer $p_1$ also has the following:}
 
	\hspace{4mm} \ul{shared register}
	\vspace{.7mm}

	\hspace{1cm}
            $R^*$: $\textsf{SWMR}$ register; initially $v_0$
  
		\hspace{4mm} \ul{local variable}
	
\vspace{.7mm}
        
        \hspace{1,1cm}$r^*$: a set of values; initially $\emptyset$

 \vspace{1.8mm}
    \mbox{\emph{Each reader $p_k \in\{p_2,\dots,p_n\}$ also has the following:}}

	\hspace{4mm} \ul{shared register}
	\vspace{.7mm}

	\hspace{1cm}
		$C_{k}$: $\textsf{SWMR}$ register; initially $0$

	\end{multicols}

	{

        \vspace{-2mm}
	\hrule
	\vspace{-2mm}
	\begin{algorithmic}[1]
	\begin{multicols}{2}

        \Statex \hspace{-5mm}  \textit{$\triangleright$ Executed by the writer $p_1$}
        \vspace{2mm}
	\Statex \textsc{\Set($v$):} 
	
	\Indent
        
	\State \label{r-m}$R^* \gets v$ 
        \State \label{addtoset-m}$r^* \gets r^* \cup \{v\}$ 
	\State \Return $\Done$
	\EndIndent
    
        \vspace{5mm}
        \Statex
        \textsc{\sign($v$):}  
	
	\Indent
    
        \State \label{checksign-m}\textbf{if} $v \in r^*$ \textbf{then}
        \Indent
	\State \label{setter1-m}$R_1 \gets R_1 \cup \{v\}$ 
	\State \label{success-m} \Return $\success$
        \EndIndent
        \State \textbf{else} 
        \Indent
            \State \Return $\fail$
        \EndIndent
	\EndIndent
    
        \vspace{5mm}

	\columnbreak

        \Statex \hspace{-5mm}  \textit{$\triangleright$ Executed by any reader $p_k \in \{p_2, \dots, p_n\}$}
        \vspace{2mm}
        
        \Statex
        \textsc{\Test():} 
	
	\Indent
    
	\State \label{read-m}$v\gets R^*$
	\State \Return $v$
	\EndIndent

        \vspace{5mm}
	\noindent
	\textsc{$\valid$($v$):}
	\Indent
	\State $\set_0, \set_1 \gets \emptyset$
	\State \label{whileloop-m} \textbf{while} true \textbf{do}
	\Indent
	\State \label{ckplus-m} $C_k \gets C_k+1$
	\State \label{repeat} \textbf{repeat}
	\Indent 
	\State \label{findone-m} \textbf{for every} process $p_j \notin \set_1 \cup \set_0$ \textbf{do}
		\Indent
			\State \label{readri-m}$\langle r_j ,c_j \rangle \gets R_{jk}$
		\EndIndent
	\EndIndent
	\State \label{until} \textbf{until} $\exists~ p_j \notin \set_1 \cup \set_0$ s.t. $c_j \ge C_k$
	\State \label{check1-m} \textbf{if} $v \in r_j$ \textbf{then} 
	\Indent
	\State \label{set1-m} $\set_1 \gets \set_1 \cup \lbrace p_j \rbrace$
	\State \label{empty0-m} $\set_0 \gets \emptyset$
	\EndIndent
	\State  \label{notv-mar}\textbf{if} $v \not \in r_j$ \textbf{then}
	\Indent
	\State \label{set0-m} $\set_0 = \set_0 \cup \lbrace p_j \rbrace$
	\EndIndent
	\State \label{return1-m} \textbf{if} $|\set_1| \ge n-f$ \textbf{then} \textbf{return} $\true$
	\State \label{return0-m} \textbf{if} $|\set_0| > f$ \textbf{then} \textbf{return} $\false$

	\EndIndent
	
	\EndIndent
	\end{multicols}
	\vspace{-1mm}
	\hrule
	\vspace{4mm}
	\Statex

        \Statex \hspace{-5mm}  \textit{$\triangleright$ Executed by every process $p_j\in\{p_1,\dots,p_n\}$}
        \vspace{2mm}
        
	\noindent
	\fresh():
        \Indent
        \State \label{collectck-init}\textbf{for every} reader $p_k\in \{p_2, \dots, p_n\}$ \textbf{do} $prev\_c_k \gets 0$ 
	\State \textbf{while} true \textbf{do}
	\Indent
	
	\State \label{collectck-m}\textbf{for every} reader $p_k\in \{p_2, \dots, p_n\}$ \textbf{do} $c_k \gets C_k$  
	\State \label{askers-m}$askers \gets \{ p_k\in \{p_2, \dots, p_n\} ~|~ c_k > prev\_{c_{k}} \}$
	
	\State\label{replyasker-m}\textbf{if} $askers \neq \emptyset$ \textbf{then} 
    
	\Indent 
	
	\State \label{sets-m} \textbf{for} every process $p_i \in\{p_1,\dots,p_n\}$ \textbf{do} $ r_i \gets R_i$
	
	\State \label{followcondition-m} \textbf{for} each value $v$ such that $v \in r_1$ or $|\{ r_i|~ v \in r_i\}| \ge f+1$ \textbf{do}
        
		\Indent
	\State \label{follow-m}$R_j \gets R_j \cup \{v\}$ 
        \State \label{readrj-m} $r_j \gets R_j$
       
	\EndIndent

	\State \label{tellasker-m}\textbf{for every} $p_k \in askers$ \textbf{do}
	
	\Indent 
	
	\State \label{fresh1-m} $R_{jk} \gets \langle r_j, c_k \rangle$
	\State \label{setprev-m}$prev\_c_k \gets c_k$
	\EndIndent
	\EndIndent
	
	\EndIndent
	\EndIndent
	
	\end{algorithmic}
	}
	\vspace{2mm}
	\end{algorithm}

\subsection{The algorithm}\label{algo-mar}

Algorithm~\ref{code-mar} gives a linearizable implementation of a SWMR {\mar} that is writable by process $p_1$ (the writer) and readable by the set $\{p_2,...,p_n\}$ of processes (the readers).
It works for systems 
    with $n >3f$ processes, $f$ of which can be Byzantine, as follows.

\begin{itemize}
\item
To \emph{write} a value $v$, 
    the writer $p_1$ calls the $\Set(v)$ procedure.
    In this procedure, $p_1$ writes $v$ in the shared register $R^*$ and inserts $v$ into the local set $r^*$.
\item
To \emph{read} a value, 
    a reader $p_k\in \{p_2,...,p_n\}$ calls the $\Test()$ procedure.
    In this procedure, $p_k$ reads the value stored in $R^*$ and returns it.
\item
To \emph{sign} a value $v$, 
    the writer $p_1$ calls the $\sign(v)$ procedure. 
    In this procedure, $p_1$ first checks whether $v$ is in the local set $r^*$,
    i.e., $p_1$ checks whether it has previously written $v$.
    If $v$ is in $r^*$,
        $p_1$ inserts $v$ into (the set of values stored in) the shared register $R_1$; by doing so
        $p_1$ considers that it has now ``signed $v$'',
        and it returns $\success$.
    If $v$ is not in $r^*$,
        $p_1$ returns $\fail$.
\item To \emph{verify} a value $v$, 
    a reader $p_k\in \{p_2,...,p_n\}$ calls the $\valid(v)$ procedure. 
    In this procedure, $p_k$ first initializes $\set_1$ and $\set_0$ to empty.

        After this initialization, $p_k$ enters a while loop. 
        Each iteration of this loop is a ``round''.
        
    In each round:
    \begin{itemize}
    \item  First $p_k$ increments a shared SWMR register $C_k$.

    \item Then $p_k$ enters a repeat-until loop.
        In this loop, $p_k$ reads the tuple $\langle r_j,c_j\rangle$
        from the SWSR register $R_{jk}$
            \emph{of every process $p_j$ that is not in $\set_1$ and $\set_0$}.
         
        As we will see below, $r_j$ is a set of values, and $c_j$ is a timestamp.
        Intuitively, $p_j$ writes a tuple $\langle r_j,c_j\rangle$ into $R_{jk}$ to tell $p_k$ that $p_j$ is a {\wof} all the values in the set $r_j$ at ``time $c_j$''.
        
    If $p_k$ reads any $\langle r_j,c_j\rangle$ with $c_j \ge C_k$,
    $p_k$ exits the repeat-until loop.
    
    \item Then $p_k$ checks whether $v$ is in $r_j$, 
            the set of values that $p_j$ is a {\wof}:
    \begin{itemize}
    \item If $v\in r_j$,
        $p_k$ inserts $p_j$ into $\set_1$, and resets $\set_0$ to empty.
    \item If $v\not\in r_j$,
        $p_k$ just inserts $p_j$ into $\set_0$.
    \item After the above changes to $\set_1$ or $\set_0$,
        if $|\set_1|\ge n-f$, $p_k$ returns $\true$.
        Otherwise, if $|\set_0| > f$, $p_k$ returns $\false$.
        If none of these conditions hold,
        $p_k$ goes to the next round.
        \end{itemize}

    \end{itemize}
\end{itemize}

In addition to the $\Set(-)$, $\Test()$, $\sign(-)$, and $\valid(-)$ procedures,
    the algorithm also includes a $\fresh()$ procedure
    that each process executes in the background (even when it is not currently performing any operation on the implemented register):
    intuitively,
    a process executes $\fresh()$
    to assist ongoing executions of the \valid$(-)$ procedure by other processes.

In the $\fresh()$ procedure,
    each process $p_j$ maintains the set of values
    that it is witness of, and stores this set in a SWMR register $R_j$.
    
In the procedure,
    a process $p_j\in \{p_1,...,p_n\}$ first initializes its local register $prev\_c_k$ to~$0$ for every reader $p_k\in\{p_2,...,p_n\}$.
The local variable
    $prev\_c_k$ stores the last value that $p_j$ observed in $C_k$ (the SWMR register written by $p_k$);
    intuitively, $prev\_c_k$ tracks the last round $p_j$ helped $p_k$.
After this initialization, $p_j$ enters an infinite loop, 
    continuously monitoring and assisting ongoing operations.
In each round:

\begin{itemize}
    \item Process $p_j$ first retrieves the latest value of $C_k$ and stores it in $c_k$ for every reader $p_k$.
    
        \item Then $p_j$ determines the set of \emph{askers} --- readers whose $C_k$ increased compared to $prev\_c_k$.
    Intuitively, \emph{askers} are readers who have ongoing $\valid$ operations and need additional witnesses to verify some value.
    
    \item If no \emph{askers} are found, 
    $p_j$ does nothing and repeats the loop.
    
    \item Otherwise:
    \begin{compactitem}
     \item $p_j$ updates $R_j$, which stores the sets of values it is a {\wof}, as follows:

    \begin{itemize}
        \item $p_j$ reads the SWMR register $R_i$ of every process $p_i\in\{p_1,...,p_n\}$.
        \item If $R_1$ (the writer's register) contains a value $v$, or 
    at least $f+1$ processes $p_i$ have $v$ in their~$R_i$,
    then $p_j$ inserts $v$ into its own register $R_j$.   
Intuitively, 
    $p_j$ becomes a {\wof} a value $v$ if it sees that the writer $p_1$ has ``signed'' $v$
    or there are at least $f+1$ {\wsof} $v$.
    
        \item $p_j$ reads $R_j$ into its local variable $r_j$.

    \end{itemize}
    
    \item Finally, $p_j$ helps each asker $p_k$ by updating $R_{jk}$ to $\langle r_j, c_k \rangle$, and then it sets $prev\_c_k$ to~$c_k$.

     \end{compactitem}

\end{itemize}

In Appendix~\ref{a-mar}, we prove that in Algorithm~\ref{code-mar}
    all the correct processes complete their operations,
    and this implementation is Byzantine linearizable.
More precisely:

\begin{restatable}{theorem}{verifycorrect}
In a system with $n > 3f$ processes,
    where $f$ processes can be Byzantine,
    Algorithm~\ref{code-mar} is a correct implementation of a SWMR {\mar}.
\end{restatable}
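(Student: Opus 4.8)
The plan is to prove two things separately: \textsc{[Termination]}, that every correct process completes every operation it invokes, and \textsc{[Byzantine linearizability]}, that every history $H$ has a linearizable ``cousin'' $H'$ agreeing with $H$ on the correct processes. Termination for $\Set(-)$, $\Test()$, and $\sign(-)$ is immediate since those procedures contain no unbounded loops. The only real work is $\valid(-)$, so I would start there. Following the roadmap sketched in Section~\ref{idea-mar}, the key invariant to establish is: at the start of every round of the \texttt{while} loop, the set of processes $p_j \notin \set_1 \cup \set_0$ contains at least one correct process whose $\fresh()$ procedure will eventually write a tuple with $c_j \ge C_k$ into $R_{jk}$. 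To see this, note $\set_1$ is non-decreasing and the round terminates with $\true$ as soon as $|\set_1| \ge n-f$; so whenever a new round begins we have $|\set_1| \le n-f-1$, and $\set_0$ was either just reset to $\emptyset$ (after a ``Yes'') or has size $\le f$ (since a round returns $\false$ once $|\set_0| > f$). In the worst case $|\set_1 \cup \set_0| \le (n-f-1) + f = n-1 < n$, so at least one process is outside, and since $|\set_1 \cup \set_0| \le n-1$ while the set of correct processes has size $\ge n-f \ge 2f+1 \ge f+2$, in fact at least two of the unqueried processes are correct — I would make the counting precise to confirm a \emph{correct} process is always available. A correct such $p_j$, upon seeing $C_k$ incremented (it becomes an \emph{asker}), reads the $R_i$'s, updates $R_j$, and writes $\langle r_j, c_k\rangle$ into $R_{jk}$ with $c_k = C_k$, so $p_k$'s repeat-until loop exits. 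Each round thus terminates; and since each round that does not return either grows $|\set_1|$ or grows $|\set_0|$ (and resets of $\set_0$ only happen when $|\set_1|$ just grew), a simple potential argument on $(|\set_1|, |\set_0|)$ bounds the number of rounds, giving termination of $\valid(-)$.

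For Byzantine linearizability, I would first pin down the semantics of the key shared objects even when the writer $p_1$ is Byzantine. Define a value $v$ to be \emph{certified} at a point in the history if either $v \in R_1$ at that point, or at least $f+1$ processes $p_i$ (at least one correct) have $v \in R_i$ — but the cleaner notion is: $v$ is \emph{endorsed} if at least one \emph{correct} process has $v$ in its $R_i$. The crucial monotonicity facts are: (i) correct processes only ever \emph{add} to their $R_i$ registers, so ``$p_i$ is a witness of $v$'' is stable once true for a correct $p_i$; (ii) a correct process adds $v$ to $R_i$ only after seeing $v \in R_1$ or seeing $\ge f+1$ witnesses of $v$ (hence $\ge 1$ correct witness), so endorsement of $v$ implies that at some earlier point either the writer wrote $v$ into $R_1$ (via a $\sign(v)$ that returned $\success$, which by line~\ref{checksign-m} requires a prior $\Set(v)$) or — by induction on the first correct process to witness $v$ — $v$ was previously written into $R_1$. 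This gives the chain ``$\valid(v)$ returns $\true$'' $\Rightarrow$ ``$n-f \ge f+1$ processes, hence $\ge 1$ correct process, witnessed $v$'' $\Rightarrow$ ``$v$ is endorsed'' $\Rightarrow$ ``some $\sign(v)$ by the writer returned $\success$'' $\Rightarrow$ ``some $\Set(v)$ preceded it'' — establishing \textsc{unforgeability} (Observation~\ref{unforgeability}) with respect to the history of correct processes. For \textsc{validity} (Observation~\ref{validity}), if a correct writer's $\sign(v)$ returned $\success$ then $v \in R_1$ forever after, so any correct process running $\fresh()$ on behalf of a later $\valid(v)$ will add $v$ to its $R_j$ and report it; a round-based argument then shows $\set_1$ reaches $n-f$. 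For \textsc{relay} (Observation~\ref{relay}): if some $\valid(v)$ returned $\true$, then $\ge n-f \ge f+1$ processes witnessed $v$, so $\ge 1$ correct process witnessed $v$ permanently, so $v$ stays endorsed, so $\ge f+1$ processes will (eventually, and in particular by the time a later $\valid(v)$ runs — this needs care) witness $v$, propagating to all correct processes and forcing every subsequent $\valid(v)$ to return $\true$ and never $\false$.

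With these behavioral properties in hand I would then \emph{construct} the linearization. Build $H'$ from $H|\hct$ by inserting, for each ``phantom'' action a correct process relied on, a corresponding operation by the Byzantine writer: specifically, whenever correctness above showed that a correct reader's response is justified by ``the writer did $X$'', splice a $\Set$ or $\sign$ operation by $p_1$ into $H'$ at an appropriate point. Then choose linearization points: each $\Set(v)$ at its write to $R^*$ (line~\ref{r-m}) or, for phantom writes, at the point the relevant value first becomes visible; each $\Test()$ at its read of $R^*$ (line~\ref{read-m}); each $\sign(v)$ at line~\ref{setter1-m} (or, if phantom, just after the justifying $\Set(v)$); each $\valid(v)$ that returns $\true$ at the first point $v$ became endorsed / certified, and each $\valid(v)$ that returns $\false$ at its own invocation time (or completion time). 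I would verify that this ordering (1) respects real-time precedence of correct operations — the delicate case being a $\valid(v)$ returning $\false$ that must be placed \emph{before} the $\sign(v)$ it contradicts, which is exactly what the ``$|\set_0| > f$'' test guarantees: more than $f$, hence at least one correct, process reported it was \emph{not} a witness of $v$ during this $\valid$, and a correct non-witness at time $t$ means $v$ was not yet endorsed at $t$, hence (by unforgeability's contrapositive) no $\sign(v)$ had returned $\success$ by then — and (2) conforms to the sequential specification of Definition~\ref{def-mar}, which reduces to the $\Test$/$\Set$ clause (standard atomic-register argument on $R^*$) plus exactly the four bullet properties of $\sign$/$\valid$ just proved.

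\textbf{Main obstacle.} The hardest part is the interaction between \emph{termination} of $\valid(-)$ and the \emph{stability} needed for \textsc{relay}: I must show that whenever $\valid(v)$ does \emph{not} return $\true$, it nonetheless makes progress (grows $\set_1$ or $\set_0$) \emph{and} that a $\set_0$ of size $> f$ genuinely certifies that, at a well-defined real time, no $\sign(v)$ had yet succeeded — all while the Byzantine writer and Byzantine readers are free to feed stale or inconsistent tuples through the $R_{jk}$ registers and to manipulate $R_1$. Nailing down the precise real-time point to which a $\false$-returning $\valid(v)$ should be linearized, so that it simultaneously (a) precedes any contradicting $\sign(v)$ and (b) follows any $\sign(v)$ it is \emph{allowed} to follow, is where the argument will require the most care; the freshness timestamps $c_j \ge C_k$ in lines~\ref{until} and~\ref{fresh1-m} are the mechanism that makes such a point exist, and I expect the bulk of the appendix proof to be devoted to exploiting them.
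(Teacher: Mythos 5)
Your overall decomposition (termination plus Byzantine linearizability, with the linearizability part driven by the relay property, the interval between the last $\false$-returning $\valid(v)$ and the first $\true$-returning one, and phantom $\Set$/$\sign$ operations spliced in for a Byzantine writer) matches the paper's proof closely, and your observation that the freshness timestamps $c_j \ge C_k$ are what let a $\false$-returning $\valid(v)$ be linearized at its invocation, before any successful $\sign(v)$, is exactly the paper's argument. The genuine gap is in the termination argument, at precisely the step you flagged but did not close. Your counting gives $|\set_1 \cup \set_0| \le (n-f-1)+f = n-1$, which shows only that \emph{some} process lies outside $\set_1 \cup \set_0$; that process may be Byzantine and never write a fresh tuple, in which case the repeat-until loop hangs forever. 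Your stronger claim that ``at least two of the unqueried processes are correct'' does not follow: the number of correct unqueried processes is at least $(n-f) - |\set_1\cup\set_0| \ge (n-f)-(n-1) = 1-f$, which is $\le 0$ whenever $f \ge 1$. The scenario $|\set_1|=2f$, $|\set_0|=f$ with the single remaining process Byzantine is reachable and is the crux of the whole algorithm; the paper calls it out explicitly in Section~\ref{idea-mar}.

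Closing the gap requires a structural invariant that your plan never states: \emph{if $\set_1$ contains at least $f+1$ correct processes, then $\set_0$ contains no correct process at all}. This is where the reset of $\set_0$ earns its keep --- not merely to bound $|\set_0|$ by $f$, as you use it, but to guarantee that every process currently in $\set_0$ was queried \emph{after} the last ``Yes'', hence after $f+1$ correct processes had permanently recorded $v$ in their $R_i$'s; by the help mechanism and the $c_j \ge C_k$ check, a correct process queried at that point must see those $f+1$ witnesses, become a witness itself, and answer ``Yes'', so it cannot land in $\set_0$. With that invariant the count splits into two cases: either $\set_1$ has at most $f$ correct members, so $\set_1\cup\set_0$ has at most $2f < n-f$ correct members; or $\set_1$ has at least $f+1$ correct members, so all correct members of $\set_1\cup\set_0$ lie in $\set_1$ and number fewer than $n-f$. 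Either way a correct process remains outside. Separately, a small repair to your linearization: placing a $\true$-returning $\valid(v)$ at ``the first point $v$ became endorsed'' may put its linearization point before its own invocation, breaking real-time order with the same reader's earlier operations; the paper simply uses the response step, which suffices because the relay property forces the effective point of the first successful $\sign(v)$ below $t^v_1$.
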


\section{Authenticated Register}\label{tar-sec}

A SWMR \emph{authenticated} register has $\Set$, $\Test$, and $\valid$ operations.
Intuitively, every value written into this register is ``automatically signed with the writer's signature''.
Any reader can apply the operation $\valid$ on any value $v$
    to verify the authenticity of $v$:
    $\valid(v)$ returns $\true$ if and only if
    $v$ was indeed written ``and signed with the writer's signature''.
An {\tar} can be initialized to any value $v_0$
    ``signed with the writer's signature''.

More precisely, we define the sequential specification of a multivalued SWMR {\mar} where
    the writer can write any value from some domain $\mathcal V$, and the register is initialized to a value $v_o \in \mathcal V$, as follows:
    
\begin{definition}\label{def-tar}
A SWMR {\tar} has three operations.

\textbf{Read and write operations:}

\vspace*{1mm}

\begin{compactitem}
    \item $\Set(v)$ by the writer takes a value $v\in \mathcal{V}$ and returns $\done$.

    \item $\Test$ by any reader returns a value $v \in \mathcal{V}$ such that:
        \begin{itemize}
            \item either there is a $\Set(v)$ before it and this $\Set(v)$ is the last $\Set(-)$ before it,
            \item or $v = v_0$ (the initial value of the register) and there is no $\Set(-)$~before~it.
        \end{itemize}

\end{compactitem}

\textbf{Verify operation:}

\begin{compactitem}
     \item $\valid(v)$ by any reader takes a value $v \in  \mathcal{V}$, and returns $\true$ or $\false$ such that:
   \begin{itemize}

    \item A $\valid(v)$ returns $\true$ if and only if there is a $\Set(v)$~before~it or $v=v_0$.
\end{itemize}
 \end{compactitem}
\end{definition}

The above definition implies that an {\tar} has the following properties:

\begin{observation}\label{validity-tar}
    [\textsc{validity}] If the writer executes a $\Set(v)$,
    then every subsequent $\valid(v)$ by any reader returns $\true$.
\end{observation}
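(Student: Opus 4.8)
The plan is to read Observation~\ref{validity-tar} off the sequential specification in Definition~\ref{def-tar}; it is the ``if'' direction of the characterization of the $\valid$ operation, and no new reasoning about the implementation internals is needed. Recall that Definition~\ref{def-tar} stipulates that a $\valid(v)$ returns $\true$ if and only if there is a $\Set(v)$ before it or $v = v_0$. The observation asserts precisely that whenever a $\Set(v)$ precedes a $\valid(v)$, that $\valid(v)$ returns $\true$, which is exactly one direction of this ``iff''.

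Concretely, I would first argue the statement for the abstract (sequential) object: in any sequence of operations conforming to Definition~\ref{def-tar}, if some $\Set(v)$ occurs and a $\valid(v)$ occurs later in the sequence, then this $\valid(v)$ has a $\Set(v)$ before it, and so by the characterization it returns $\true$ (the disjunct $v = v_0$ is not even needed). To transfer this to an arbitrary history $H$ of a correct implementation, I would invoke Byzantine linearizability (Definitions~\ref{def-hbl} and~\ref{LinearizableImplementationCrash}): there is a history $H'$ with $H'|\hct = H|\hct$ and a linearization $L$ of a completion of $H'$ that both respects the precedence relation and conforms to the sequential specification. If the writer executes a $\Set(v)$ and a reader subsequently executes a $\valid(v)$ --- i.e., the response of the $\Set(v)$ is before the invocation of the $\valid(v)$, so the $\Set(v)$ precedes the $\valid(v)$ --- then, because $L$ respects precedence (item~1 of Definition~\ref{LinearizableImplementationCrash}), the $\Set(v)$ appears before the $\valid(v)$ in $L$. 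Applying the sequential argument to $L$, which conforms to the specification (item~2 of Definition~\ref{LinearizableImplementationCrash}), the $\valid(v)$ returns $\true$.

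I do not anticipate a substantive obstacle: the statement is a direct corollary of the specification, and the sole point needing care is the translation of the real-time ``subsequent'' relation into an ordering inside $L$, which is supplied verbatim by the precedence-respecting clause of Definition~\ref{LinearizableImplementationCrash}. The one place to be mildly cautious is a Byzantine writer: I would phrase the final step entirely in terms of the operations as they appear in $L$ (where every operation, including one attributed to a faulty writer, is an ordinary operation conforming to the type), so that the characterization of $\valid$ applies uniformly regardless of which process is faulty.
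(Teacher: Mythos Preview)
Your proposal is correct and matches the paper's approach: the paper treats Observation~\ref{validity-tar} as an immediate consequence of the sequential specification in Definition~\ref{def-tar} (it is introduced by ``The above definition implies that an authenticated register has the following properties'') and gives no further proof. Your additional paragraph on transferring the property to implementations via Byzantine linearizability goes beyond what the paper does here, since the observation is stated purely at the level of the abstract object; that extra material is not wrong, but it is unnecessary for this particular statement.
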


\begin{observation}\label{unforgeability-tar}
       [\textsc{unforgeability}] If a $\valid(v)$ returns $\true$ to some reader, then the writer previously executed
       a $\Set(v)$ or $v=v_0$.
\end{observation}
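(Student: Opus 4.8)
The plan is to derive this observation directly from the sequential specification of the {\tar} in Definition~\ref{def-tar}, which characterizes the $\valid$ operation by a biconditional: a $\valid(v)$ returns $\true$ \emph{if and only if} there is a $\Set(v)$ before it or $v=v_0$. The observation is precisely the forward (``only if'') direction of that biconditional, reinterpreted in the Byzantine-linearizability framework of Definition~\ref{def-hbl}. So the work lies not in finding a clever argument but in carefully transporting the sequential-level statement into a statement about histories of a correct implementation.

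First I would fix an arbitrary history $H$ of a correct implementation in which some $\valid(v)$ operation returns $\true$. Since the implementation is correct, $H$ is Byzantine linearizable, so there is a history $H'$ with $H'|\hct = H|\hct$ that is linearizable with respect to the {\tar}; let $H'_c$ be the completion and $L$ the linearization witnessing this. The $\valid(v)$ operation in question appears in $L$ with response $\true$: if it is by a correct reader this is immediate, since its steps lie in $H'|\hct = H|\hct$; if it is by a Byzantine reader it is one of the operations of $H'$. Because $L$ conforms to the sequential specification, I can now apply the ``only if'' direction of the $\valid$ clause of Definition~\ref{def-tar}: the fact that $\valid(v)$ returns $\true$ in $L$ forces either $v = v_0$, or the existence of a $\Set(v)$ ordered before this $\valid(v)$ in $L$.

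Next I would discharge the two cases. If $v = v_0$, the conclusion holds directly. Otherwise there is a $\Set(v)$ preceding the $\valid(v)$ in $L$; since $\Set$ is an operation that only the single writer may invoke, this $\Set(v)$ is a write by the writer, which establishes the ``writer executed $\Set(v)$'' part of the claim. To justify the word ``previously'', I would invoke the precedence-preservation clause of linearizability (Definition~\ref{LinearizableImplementationCrash}): were the $\valid(v)$ to precede the $\Set(v)$ in real time, linearizability would order $\valid(v)$ before $\Set(v)$ in $L$, contradicting that $\Set(v)$ is before $\valid(v)$ in $L$. Hence the $\valid(v)$ does not precede the $\Set(v)$, i.e.\ the $\Set(v)$ is invoked no later than the response of the $\valid(v)$, so the writer indeed executed $\Set(v)$ before the $\valid(v)$ returned, which is exactly the observation.

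The only subtle point — and the one I would be most careful about — is the interpretation of ``the writer previously executed a $\Set(v)$'' when the writer is Byzantine. In that case the relevant $\Set(v)$ lives in $H'$ rather than in $H|\hct$, so the claim is really that, in the atomic object's linearized view, even a Byzantine writer cannot make $\valid(v)$ return $\true$ unless it behaved as if it wrote $v$. This is precisely the unforgeability guarantee one wants, and it follows from the argument above without extra work; I would state this interpretation explicitly so that the observation is not misread as a claim about the raw steps of a possibly-misbehaving writer.
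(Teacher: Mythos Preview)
You correctly identify the essential content: the observation is exactly the ``only if'' direction of the $\valid$ clause in Definition~\ref{def-tar}. But you then over-engineer the argument by lifting it to the Byzantine-linearizability framework. In the paper, Observations~\ref{validity-tar}--\ref{read-val} sit immediately after Definition~\ref{def-tar} under the sentence ``The above definition implies that an {\tar} has the following properties,'' so they are properties of the \emph{sequential specification} (equivalently, of the atomic object), not claims about histories of implementations. No proof is given or needed: the observation is a verbatim restatement of one direction of the biconditional in the $\valid$ clause, with ``previously executed'' meaning ``occurs earlier in the sequence.''

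Your detour through $H$, $H'$, $H'_c$, and $L$ establishes a different (and stronger) statement---that the unforgeability property transfers to any correct implementation---which the paper does separately, much later, via Lemma~\ref{settest-tar-H}(2) and the linearization arguments in Appendix~\ref{a-tar}. That machinery is appropriate there but is not what Observation~\ref{unforgeability-tar} is asserting. Your final paragraph about Byzantine writers is similarly misplaced: at the level of the sequential specification there is no Byzantine behaviour to worry about, since Definition~\ref{def-tar} describes an atomic object on which every $\Set$ is a well-formed operation by the (single) writer.
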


\begin{observation}\label{relay-tar}
       [\textsc{relay}] If a $\valid(v)$ returns $\true$ to some reader, then every subsequent $\valid(v)$ by any reader also returns $\true$.
\end{observation}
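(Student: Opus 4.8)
The plan is to derive the relay property directly from the sequential specification of an {\tar} given in Definition~\ref{def-tar}, since the three observations are stated as immediate consequences of that definition. The only fact I need is the ``if and only if'' characterization of the verify operation: a $\valid(v)$ returns $\true$ if and only if there is a $\Set(v)$ before it or $v = v_0$. Interpreting each operation on the atomic object as taking effect instantaneously at its linearization point, the relations ``before'' and ``subsequent'' refer to the induced total order on operations, which by the precedence requirement of linearizability (Definition~\ref{LinearizableImplementationCrash}) is consistent with real-time precedence.

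First I would fix a $\valid(v)$ operation $o$ that returns $\true$ and an arbitrary subsequent $\valid(v)$ operation $o'$, so that $o$ precedes $o'$ and hence $o$ is before $o'$ in the total order. Applying the characterization to $o$, there are two cases. In the first case $v = v_0$; then the disjunct ``$v = v_0$'' of the characterization holds for $o'$ as well, independently of where $o'$ sits in the order, so $o'$ returns $\true$. In the second case there is a $\Set(v)$ before $o$; since $o$ is before $o'$, transitivity of the ``before'' relation gives that this same $\Set(v)$ is before $o'$, so the disjunct ``there is a $\Set(v)$ before it'' holds for $o'$, and again $o'$ returns $\true$. In either case the subsequent $\valid(v)$ returns $\true$, which is exactly the claim.

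There is essentially no obstacle here: the statement is a one-line consequence of the iff characterization together with transitivity of the ``before'' relation, and it holds for \emph{any} reader (correct or Byzantine) because the characterization constrains the response of every $\valid(v)$ in a valid sequential execution regardless of which process issues it. The only point that warrants care is the precise meaning of ``subsequent.'' I would make explicit that for the abstract atomic object operations are instantaneous and totally ordered, so ``subsequent'' is unambiguous; and that when this property is later transferred to the implementation via Byzantine linearizability, the precedence-preservation clause of Definition~\ref{LinearizableImplementationCrash} guarantees that a real-time-subsequent $\valid(v)$ of a correct reader also appears after $o$ in the linearization, so the same argument applies verbatim.
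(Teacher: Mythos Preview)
Your proposal is correct and follows the same approach as the paper, which simply states that Observations~\ref{validity-tar}--\ref{relay-tar} are implied by Definition~\ref{def-tar} without giving any further argument. Your explicit case split on $v=v_0$ versus an earlier $\Set(v)$, together with transitivity of the ``before'' relation, is exactly the intended one-line justification; the additional remarks about linearization points and Byzantine linearizability are not needed here since the observation concerns only the sequential specification.
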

These properties also imply this:

\begin{observation}\label{read-val}
    If a $\Test$ returns $v$ to some reader,
    then every subsequent $\valid(v)$ by any reader returns $\true$.
\end{observation}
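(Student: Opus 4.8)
The plan is to derive Observation~\ref{read-val} directly from the sequential specification of the {\tar} (Definition~\ref{def-tar}) together with the \textsc{validity} property (Observation~\ref{validity-tar}), by a short case analysis on \emph{why} the $\Test$ returned the value $v$. Everything takes place inside a single legal sequential history (equivalently, a linearization), in which each operation's response is fixed by the specification and in which real-time precedence is preserved; so throughout, ``before'' and ``subsequent'' refer to the position of operations in this total order.

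First I would invoke the $\Test$ clause of Definition~\ref{def-tar}: since the $\Test$ in question returns $v$, exactly one of two situations holds. Either (Case~1) there is a $\Set(v)$ before the $\Test$ and it is the last $\Set(-)$ before it, or (Case~2) $v = v_0$ and there is no $\Set(-)$ before the $\Test$. The goal is then to show that in both cases every $\valid(v)$ that is subsequent to the $\Test$ returns $\true$.

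In Case~1 I would fix an arbitrary $\valid(v)$ subsequent to the $\Test$. Because the $\Set(v)$ precedes the $\Test$ and the $\Test$ precedes this $\valid(v)$, transitivity of the precedence relation gives that the $\Set(v)$ precedes the $\valid(v)$; Observation~\ref{validity-tar} (\textsc{validity}) then immediately yields that this $\valid(v)$ returns $\true$. In Case~2 the value being verified is the initial value $v_0$, so I would appeal directly to the $\valid$ clause of Definition~\ref{def-tar}, which stipulates that a $\valid(v)$ returns $\true$ whenever $v = v_0$, independently of whether any $\Set(-)$ has occurred. Hence every subsequent $\valid(v_0)$ returns $\true$ as well, and the two cases together establish the claim.

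I do not expect a genuine obstacle here, since the argument is essentially a two-line case split. The only point that requires a little care is the initial-value boundary case (Case~2): the \textsc{validity} observation alone does not cover it, because no explicit $\Set(v_0)$ need have been executed, so one must fall back on the specification's explicit ``or $v = v_0$'' disjunct in the $\valid$ clause rather than on Observation~\ref{validity-tar}. Treating $v_0$ as an implicit initializing write at the start of the history is an equivalent way to merge the two cases into one, but invoking the definition directly for the $v_0$ case is cleaner and avoids introducing a fictitious operation.
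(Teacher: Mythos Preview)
Your proposal is correct and matches the paper's (implicit) reasoning: the paper does not spell out a proof but simply states that Observation~\ref{read-val} follows from the preceding properties and the sequential specification, and your two-case argument (a prior $\Set(v)$ handled via \textsc{validity}, and the $v=v_0$ boundary case handled directly via the $\valid$ clause of Definition~\ref{def-tar}) is exactly the intended derivation. Your remark that the $v_0$ case cannot be covered by Observation~\ref{validity-tar} alone and requires the explicit ``or $v=v_0$'' disjunct is on point.
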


\section{Implementation of an Authenticated Register}\label{tar-imp}
\begin{algorithm}[]
	\caption{Implementation of a SWMR multivalued \tar~writable by process~$p_1$ (the \emph{writer}) and readable by every process $p_k \in \{p_2,\dots, p_n\}$ (the \emph{readers}), 
		for $n > 3f$.
	}\label{code-tar} 
	\ContinuedFloat
     \raggedright
        \footnotesize
       
\begin{multicols}{2}		   
	\emph{Every process $p_i \in\{p_1,\dots,p_n\}$ has the following:}

	\hspace{4mm} \underline{shared registers}
	\vspace{.7mm}

	\hspace{1cm}
		For each reader $p_j \in \{p_2,\dots,p_n\}$:
		
		\hspace{1.5cm} $R_{ij}$: $\textsf{SWSR}$ register readable by $p_j$; 

		\hspace{2.2cm} initially $ \langle \emptyset, 0 \rangle$
	
	\columnbreak
	 \footnotesize
        \emph{The writer $p_1$ also has the following:}

	\hspace{4mm} \underline{shared registers}
	\vspace{.7mm}
		   
	 \hspace{1,1cm}$R_1$: $\textsf{SWMR}$ register; initially $\{ \langle 0, v_0 \rangle \}$
\color{black}  

		\hspace{4mm} \underline{local variable}
	
\vspace{.7mm}
        
        \hspace{1,1cm}$\ell$: a counter; initially $0$

 \vspace{1.8mm}
    \mbox{\emph{Each reader $p_k \in\{p_2,\dots,p_n\}$ also has the following:}}

	\hspace{4mm} \underline{shared register}
	\vspace{.7mm}

 \hspace{1,1cm}	$R_k$: $\textsf{SWMR}$ register; initially $\{ v_0 \}$
 \color{black}   
 
	 \hspace{1,1cm}	$C_{k}$: $\textsf{SWMR}$ register; initially $0$

	\end{multicols}

	{
       
      \vspace{1mm}
      
	 \hrule
	\vspace{-2mm}
	\begin{algorithmic}[1]
	 \begin{multicols}{2}
    \vspace{1mm}
        \Statex \hspace{-5mm} \textit{$\triangleright$ Executed by the writer $p_1$}
        \vspace{2mm}
	\Statex \textsc{\Set($v$):} 
	
	\Indent
        
	\State \label{r-tar}$\ell \gets \ell+1$
	\State \label{r1-tar} $R_1 \gets R_1 \cup \{ \langle \ell, v \rangle \}$
	\State \Return $\Done$
	\EndIndent
  
       \vspace{1mm}

	 \columnbreak

        \Statex \hspace{-5mm} \textit{$\triangleright$ Executed by any reader $p_k \in \{p_2, \dots, p_n\}$}
        \vspace{1mm}
        
        \Statex
        \textsc{\Test():} 
	
	\Indent
	\State \label{readerr1-tar}$r \gets R_1$
	\State \label{checkform-tar}\textbf{if} $r$ is a set of tuples of the form $\langle \ell, v \rangle $ \textbf{then}
	 \Indent
    	\State\label{maxl-tar} Let $\langle \ell, v \rangle \in r $ s.t.
        \newline\hspace*{1.5cm}\mbox{$\forall \langle \ell', v' \rangle \in r: \langle \ell, v \rangle  \ge  \langle \ell', v' \rangle $}

	\State $\label{verifyp-tar}\mathit{verified} \gets \valid(v)$
	     \State \label{checksign-tar}\textbf{if} $\mathit{verified} = \true$ \textbf{then} \Return $v$
     
        	\EndIndent
            \State \label{returnv0-tar} \Return $v_0$
	\EndIndent

        \vspace{5mm}
	\noindent
	\textsc{$\valid$($v$):}
	\Indent
	\State $\set_0, \set_1 \gets \emptyset$
	\State \label{whileloop-tar} \textbf{while} true \textbf{do}
	\Indent
	\State \label{ckplus-tar} $C_k \gets C_k+1$
	\State \label{repeat-tar} \textbf{repeat}
	\Indent 
	\State \label{findone-tar} \textbf{for every} process $p_j \notin \set_1 \cup \set_0$ \textbf{do}
		\Indent
			\State \label{readri-tar}$\langle r_j ,c_j \rangle \gets R_{jk}$
		\EndIndent
	\EndIndent
	\State \label{until-tar} \textbf{until} $\exists~ p_j \notin \set_1 \cup \set_0$ s.t. $c_j \ge C_k$
	\State \label{check1-tar} \textbf{if} $v \in r_j$ \textbf{then} 
	\Indent
	\State \label{set1-tar} $\set_1 \gets \set_1 \cup \lbrace p_j \rbrace$
	\State \label{empty0-tar} $\set_0 \gets \emptyset$
	\EndIndent
	\State  \label{notv-tar}\textbf{if} $v \not \in r_j$ \textbf{then}
	\Indent
	\State \label{set0-tar} $\set_0 = \set_0 \cup \lbrace p_j \rbrace$
	\EndIndent
	\State \label{return1-tar} \textbf{if} $|\set_1| \ge n-f$ \textbf{then} \textbf{return} $\true$
	\State \label{return0-tar} \textbf{if} $|\set_0| > f$ \textbf{then} \textbf{return} $\false$

	\EndIndent
	
	\EndIndent
	\end{multicols}
	\vspace{-1mm}
	\hrule
	\vspace{4mm}
	\Statex

        \Statex \hspace{-5mm} \textit{$\triangleright$ Executed by every process $p_j\in\{p_1,\dots,p_n\}$}
        \vspace{1mm}
        
	\noindent
	\fresh():
        \Indent
        \State \label{collectck-tar}\textbf{for every} reader $p_k\in \{p_2, \dots, p_n\}$ \textbf{do} $prev\_c_k \gets 0$ 
	\State \textbf{while} true \textbf{do}
	\Indent
	\State \label{collectck-tar}\textbf{for every} reader $p_k\in \{p_2, \dots, p_n\}$ \textbf{do} $c_k \gets C_k$  
	\State \label{askers-tar}$askers \gets \{ p_k\in \{p_2, \dots, p_n\} ~|~ c_k > prev\_{c_{k}} \}$

	\State\label{replyasker-tar}\textbf{if} $askers \neq \emptyset$ \textbf{then} %\Comment{some readers ask for help}
	\Indent 

	\State \label{readr1-tar}$r \gets R_1$
	\State \label{extract-r1-tar}$r_1 \gets \{v ~|~  \langle -, v \rangle  \in r \}$
        \State  \textbf{if} $j \neq 1$ \textbf{then}
        \Indent
	\State \label{sets-tar} \textbf{for} every process $p_i \in\{p_2,\dots,p_n\}$ \textbf{do} $ r_i \gets R_i$
	
	\State \label{followcondition-tar} \textbf{for} each value $v$ such that $v \in r_1$ or $|\{ r_i~|~1 \le i \le n~\text{and}~ v \in r_i\}| \ge f+1$ \textbf{do}
		\Indent
	\State \label{follow-tar} $R_j \gets R_j \cup \{v\}$ 
        \EndIndent
        \State \label{readrj-tar} $r_j \gets R_j$
	\EndIndent

	\State \label{tellasker-tar}\textbf{for every} $p_k \in askers$ \textbf{do}
	
	\Indent 
	
	\State \label{fresh1-tar} $R_{jk} \gets \langle r_j, c_k \rangle$
	\State \label{setprev-tar}$prev\_c_k \gets c_k$
	\EndIndent
	\EndIndent
	
	\EndIndent
	\EndIndent
	
	\end{algorithmic}
	}
	\vspace{1mm}
	\end{algorithm}

We now explain how to implement an {\tar} in a system with SWMR registers and $n>3f$ processes,
	$f$ of which may be Byzantine.
The implementation is similar to the one for a {\mar} given in Algorithm~\ref{code-mar}.
We first explain the differences in Section~\ref{idea-tar}, and then describe the implementation in detail in Section~\ref{algo-tar}.

\subsection{Basic Ideas}\label{idea-tar}

Since {\mar}s and {\tar}s have similar Validity, Unforgeability,
and Relay properties, the signature and signature verification mechanisms of these two registers
	are quite similar.
In particular, the $\valid(-)$ procedure is the same, and the $\fresh(-)$ procedure is almost the same.
The $\Set(-)$ and $\Test$ procedures, however, are different because any implementation 
	of an {\tar} must ensure that the writing \emph{and} the signing of each value \emph{takes effect atomically}.
This is the reason for the following changes.

Recall that with our {\mar} implementation (Algorithm~\ref{code-mar}), the writer maintains two separate registers:
	$R^*$, which contains only the last value that it wrote,
	and $R_1$, which contains the values that it wrote \emph{and} subsequently ``signed''.
So to read the latest value written, the reader can simply read $R^*$.

In the {\tar} implementation, however, \emph{all} the values are atomically ``signed'' when they are written,
	and so the writer now maintains a \emph{single} register $R_1$ that contains these values (there is no separate register $R^*$ here).
The writer must now ``timestamp'' every value that it inserts into $R_1$ to indicate which one is the latest
	one that it wrote.

When a reader reads $R_1$ it selects the value $v$ with the highest timestamp from $R_1$.
But the reader can \emph{not} simply return $v$ as the value it read.
This is because if the writer is Byzantine, it can remove $v$ from $R_1$,
	and so a reader that executes $\valid(v)$ now may get $\false$ ---
	thus violating a property of {\tar}s (which is stated in Observation~\ref{read-val}).
Thus before returning $v$ as the value read, the reader now calls the $\valid(v)$ procedure, and it returns $v$ only if this procedure returns $\true$: it knows that from now on any $\valid(v)$ will also return $\true$
(and so Observation~\ref{read-val} holds).

But what value can a reader return
	if $v$ is the latest value that it sees in $R_1$ but $\valid(v)$ returns $\false$?
By the properties of the $\valid(-)$ procedure, this situation can occur \emph{only} if the writer is Byzantine.
So in this case, the reader returns the \emph{initial value $v_0$} of the register (which is deemed to have been ``signed'' by the writer):
	in our implementation every
	$\valid(v_0)$ is guaranteed to return $\true$ (and so Observation~\ref{read-val} also holds in this case).

In summary, to execute a $\Test$ operation, a reader
reads $R_1$,
selects the value $v$
	in $R_1$ with the latest timestamp,
calls the $\valid(v)$ procedure, and
	if this $\valid(v)$ returns $\true$,
	then the reader returns $v$ as the value read,
	otherwise it returns $v_0$.\footnote{Note that the $\valid(-)$ procedure is used to implement the $\valid$ operation,
	and it is \emph{also} used ``inside'' the implementation of the $\Test$ operation.
	This ``dual-use'' is reminiscent to the dual-use of the \textsc{Scan} procedure in the \emph{Atomic Snapshot} algorithm
	given in~\cite{snapshot}: in this algorithm, the \textsc{Scan}  procedure is used to implement the \textsc{Scan} operation,
	and it is also used ``inside''  the implementation of each \textsc{Update}$_i(-)$ operation.}
We now describe the algorithm in more detail.

\subsection{The Algorithm}\label{algo-tar}

Algorithm~\ref{code-tar} gives a linearizable implementation of a SWMR {\tar} that is writable by process $p_1$ (the writer) and readable by the set $\{p_2,...,p_n\}$ of processes (the readers).
It works for systems 
    with $n >3f$ processes, $f$ of which can be Byzantine, as follows.

The writer $p_1$  maintains a shared register  $R_1$ that stores a set of tuples of the form $\langle \ell, v \rangle$, where $\ell$ is a timestamp and $v$ is a value in $\mathcal{V}$. 
    Initially, $R_1$ stores the set $\{ \langle 0, v_0 \rangle \}$, where $v_0 \in \mathcal{V}$ is the initial value of the  {\tar} that the algorithm implements.

\begin{itemize}
\item
To \emph{write} a value $v$, 
    the writer $p_1$ calls the $\Set(v)$ procedure. 
    In this procedure, $p_1$ first increments its local counter $\ell$ and 
        then inserts the tuple $\langle \ell,  v\rangle$ into (the set of tuples stored in) the shared register $R_1$.
\item
To \emph{read} a value, 
    a reader $p_k\in \{p_2,...,p_n\}$ calls the $\Test()$ procedure. 
    In this procedure:
    \begin{itemize}
    \item First $p_k$ reads the set of tuples stored in $R_1$ into $r$. 

    \item If $r$ contains a set of tuples of the form $\langle \ell, v \rangle $ then:

    \begin{itemize}
        \item $p_k$ finds the tuple $\langle \ell, v \rangle \in r$ such that  $ \langle \ell, v \rangle \ge  \langle \ell', v' \rangle $
    for all $\langle \ell', v' \rangle \in r$,
    and executes the $\valid(v)$ procedure.\footnote{Here
    	$ \langle \ell, v \rangle \ge  \langle \ell', v' \rangle $ if and only if $\ell > \ell'$ or $\ell = \ell'$ and $v \ge v'$.}
        \item If the $\valid(v)$ returns $\true$, $p_k$ returns the value $v$.
    \end{itemize}

    \item If $p_k$ 
 has not returned yet, $p_k$ returns $v_0$.

    \end{itemize}
   
\item To \emph{verify} a value $v$, 
    a reader $p_k\in \{p_2,...,p_n\}$ calls the $\valid(v)$ procedure.
    We omit to describe this procedure here since it is identical to the one given in Algorithm~\ref{code-mar}.

\end{itemize}

In addition to the $\Set(-)$, $\Test()$, and $\valid(-)$ procedures,
    the algorithm also includes a $\fresh()$ procedure
    that each process executes in the background (even when it is not currently performing any operation on the implemented register):
    intuitively,
    a process executes $\fresh()$
    to assist ongoing executions of the \valid$(-)$ procedure by other processes.

In the $\fresh()$ procedure,
    each process $p_j$ maintains the set of values
    that it is witness of, and stores this set in a SWMR register $R_j$
    (intuitively, $p_j$ is a {\wof} $v$ if it knows that the writer wrote $v$ into $R_1$).
    
In the procedure,
    a process $p_j\in \{p_1,...,p_n\}$ first initializes its local register $prev\_c_k$ to~$0$ for every reader $p_k\in\{p_2,...,p_n\}$.
The local variable
    $prev\_c_k$ stores the last value that $p_j$ observed in $C_k$ (the SWMR register written by $p_k$);
    intuitively, $prev\_c_k$ tracks the last round $p_j$ helped $p_k$.
After this initialization, $p_j$ enters an infinite loop, 
    continuously assisting ongoing {\valid} executions that need help.
In each iteration of the loop:

\begin{itemize}
    \item Process $p_j$ first retrieves the latest value of $C_k$ and stores it in $c_k$ for every reader $p_k$.
    
    \item Then $p_j$ determines the set of \emph{askers} --- readers whose $C_k$ increased compared to $prev\_c_k$.
    Intuitively, \emph{askers} are readers who have ongoing executions of the $\valid(-)$ procedure and need additional witnesses to verify some value.
    
    \item If no \emph{askers} are found, 
    $p_j$ does nothing and repeats the loop.

        \item Otherwise:
        
        \begin{compactitem}
           \item  $p_j$ inserts into $r_1$ all the values $v$ such that $\langle -, v \rangle$ is in $R_1$.

        \item If $j\ne 1$, $p_j$ also updates $R_j$ --- the sets of values it is a {\wof} --- as follows:\footnote{Note that while the register $R_1$ contains \emph{timestamped} values, for $j \neq 1$ the register $R_j$ contains just values without timestamps.} 
            \begin{itemize}
        
        \item $p_j$ reads the SWMR register $R_i$ into $r_i$ for every process $p_i \ne p_1$.

        \item $p_j$ updates $R_j$ by inserting into $R_j$ every value $v$ such that $v$ is in $r_1$
        or $v$ is in at least $f+1$
        distinct $r_i$'s.
    Intuitively, $p_j$ becomes a {\wsof} $v$ if it sees that the writer $p_1$ wrote $v$
    or it sees that there are at least $f+1$ {\wsof} $v$.

    \item $p_j$ reads $R_j$ into its local variable $r_j$.
 \end{itemize}
    
    \item Finally, $p_j$ helps each asker $p_k$ by updating $R_{jk}$ to $\langle r_j, c_k \rangle$, and then it sets $prev\_c_k$ to~$c_k$.
 \end{compactitem}
\end{itemize}

In Appendix~\ref{a-tar}, we prove that in Algorithm~\ref{code-tar}
    all the correct processes complete their operations,
    and this implementation is Byzantine linearizable.
More precisely:

\begin{restatable}{theorem}{tarcorrect}
In a system with $n > 3f$ processes,
    where $f$ processes can be Byzantine,
    Algorithm~\ref{code-tar} is a correct implementation of a SWMR {\tar}.
\end{restatable}

\section{Sticky Register}\label{srsec}

Intuitively a SWMR {\sr} is a register such that once a value is written into it,
    the register never changes its value again.
More precisely, the sequential specification of a multivalued SWMR \emph{{\sr}} where
    the writer can write any value from some domain $\mathcal V$, and the register is initialized to a special value $\bot \notin \mathcal V$, is as follows:

\begin{definition}\label{def-sar}
A  SWMR \sr~has two operations.

\begin{itemize}
    \item $\Set(v)$ by the writer takes a value $v\in \mathcal V$ and returns $\done$.
    \item $\Test$ by any reader returns a value $v \in \mathcal V \cup \{\bot \}$ such that:
    \begin{itemize}
        \item If $v \in \mathcal V$ then $\Set(v)$ is the first $\Set(-)$ operation and it is before the $\Test$. 
        \item If $v =\bot$ then no $\Set(-)$ is before the $\Test$.
    \end{itemize}

\end{itemize}
\end{definition}

The above definition implies that a {\sr} has the following properties:

\begin{observation}\label{validity-sar}
    [\textsc{validity}] If $\Set(v)$ is the first $\Set(-)$ operation,
    then every subsequent $\Test$ by any reader returns $v$.
\end{observation}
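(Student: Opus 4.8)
The plan is to derive the claim directly from the sequential specification of a {\sr} given in Definition~\ref{def-sar}; this is the same route by which Observations~\ref{validity}--\ref{relay} follow from Definition~\ref{def-mar}, and no reasoning about the implementation is needed. So I would fix a sequence $L$ of operations conforming to the sequential specification in which some $\Set(v)$ is the first $\Set(-)$ operation, take an arbitrary $\Test$ operation $T$ by a reader that follows this $\Set(v)$ in $L$, and let $v'$ be the value that $T$ returns. By the $\Test$ clause of Definition~\ref{def-sar}, $v' \in \mathcal V \cup \{\bot\}$, and exactly one of the two listed cases applies to it.

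First I would rule out the case $v' = \bot$: by Definition~\ref{def-sar} this would force that no $\Set(-)$ operation precedes $T$, contradicting the hypothesis that $\Set(v)$ precedes $T$. Hence $v' \in \mathcal V$, and then the definition says that $\Set(v')$ is \emph{the} first $\Set(-)$ operation. Since $\Set(v)$ is also the first $\Set(-)$ operation, and in a sequential history the first $\Set(-)$ operation is unique, these must be the same operation, so $v' = v$. As $T$ was an arbitrary $\Test$ following $\Set(v)$, every such $\Test$ returns $v$, which is precisely the statement.

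There is no substantial obstacle here — the statement is essentially an unfolding of the specification — and the only place that needs a bit of care is the elimination of the $v' = \bot$ branch, which is exactly where the precedence hypothesis (``$\Set(v)$ is before $T$'', i.e.\ the word \emph{subsequent}) is used; dropping that hypothesis makes the claim false. If one later wants this property for the concrete implementation of Section~\ref{srsec-imp} rather than for the bare specification, the extra step is routine: from a history $H$ of the implementation pass to a history $H'$ with $H'|{\hct} = H|{\hct}$ that is linearizable with respect to the {\sr} (Byzantine linearizability), pick a linearization $L$ of a completion of $H'$ that conforms to Definition~\ref{def-sar}, and run the argument above inside $L$, restricting attention to the operations of correct processes.
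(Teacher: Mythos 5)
Your proof is correct and matches the paper's (implicit) treatment: the paper presents Observation~\ref{validity-sar} as an immediate consequence of the sequential specification in Definition~\ref{def-sar} and gives no separate proof, and your case analysis (ruling out the $\bot$ branch via the precedence hypothesis, then using uniqueness of the first $\Set(-)$ to force $v'=v$) is exactly the intended unfolding. Your closing remark correctly distinguishes this specification-level observation from the implementation-level claim, which the paper handles separately via Byzantine linearizability in Appendix~\ref{a-sar}.
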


\begin{observation}\label{unforgeability-sar}
       [\textsc{unforgeability}] If a $\Test$ returns $v \neq \bot$ to some reader,
       then $\Set(v)$ is the \emph{first} $\Set(-)$ operation
       and it is before the $\Test$.
\end{observation}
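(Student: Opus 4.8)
The plan is to read the statement off directly from the sequential specification of a sticky register in Definition~\ref{def-sar}; no induction, no case analysis on executions, and no appeal to any implementation is needed, exactly as for Observation~\ref{validity-sar} and for the validity/unforgeability/relay observations of the verifiable and authenticated registers. First I would note that, because the definition of a sticky register stipulates $\bot \notin \mathcal V$, a $\Test$ that returns a value $v \neq \bot$ necessarily returns a value $v \in \mathcal V$: the $\Test$ clause of Definition~\ref{def-sar} already restricts the return value to $\mathcal V \cup \{\bot\}$, so ruling out $\bot$ leaves $v \in \mathcal V$.

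Next I would invoke the $\Test$ clause of Definition~\ref{def-sar} itself. It gives two mutually exclusive and jointly exhaustive sub-cases for the returned value: one for $v \in \mathcal V$ and one for $v = \bot$. Since we are in the case $v \in \mathcal V$, the first sub-bullet applies, and it asserts verbatim that $\Set(v)$ is the first $\Set(-)$ operation and that it is before the $\Test$ in question. This is precisely the conclusion of the observation, so the argument is complete.

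I do not anticipate any real obstacle: the observation is a direct restatement of half of Definition~\ref{def-sar}. The only step that deserves an explicit (one-line) justification is the equivalence between ``$v \neq \bot$'' and ``$v \in \mathcal V$'' for a value returned by a $\Test$, which follows from the two facts just mentioned (a $\Test$ returns something in $\mathcal V \cup \{\bot\}$, and $\bot \notin \mathcal V$). Everything else is immediate from the definition, and one may additionally remark that the same reasoning shows the ``stickiness'' phrasing used informally in the introduction --- once a correct process reads $v \neq \bot$, every later $\Test$ by a correct process also returns $v$ --- since all such reads must select the unique first $\Set(-)$, namely $\Set(v)$.
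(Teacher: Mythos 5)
Your proposal is correct and matches the paper exactly: the paper offers no separate proof, introducing Observation~\ref{unforgeability-sar} with ``The above definition implies\dots'', and your unfolding of the $\Test$ clause of Definition~\ref{def-sar} (together with the remark that $\bot \notin \mathcal V$ makes $v \neq \bot$ equivalent to $v \in \mathcal V$) is precisely that intended one-line derivation.
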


\begin{observation}\label{uniqueness-sar}
       [\textsc{uniqueness}] If a $\Test$ returns $v \neq \bot$ to some reader, then every subsequent $\Test$ by any reader also returns $v$.
\end{observation}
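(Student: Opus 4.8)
The plan is to derive this \textsc{uniqueness} property directly from the sequential specification of a {\sr} (Definition~\ref{def-sar}), treating it as a purely semantic consequence of that specification rather than as a claim about any particular implementation. First I would fix a $\Test$ operation $t$ that returns some value $v \neq \bot$, and let $t'$ be an arbitrary $\Test$ operation that is subsequent to $t$, i.e., $t$ \precedes{} $t'$. The entire goal is then to show that $t'$ also returns $v$.

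The first step is to extract what Definition~\ref{def-sar} tells us about $t$. Since $t$ returns $v \in \mathcal V$, the first clause of the $\Test$ specification guarantees that $\Set(v)$ is the \emph{first} $\Set(-)$ operation and that this $\Set(v)$ is before $t$. I would then chain the precedence relations: because $\Set(v)$ is before $t$ and $t$ \precedes{} $t'$, the operation $\Set(v)$ is also before $t'$, so there \emph{is} a $\Set(-)$ before $t'$. Next I would determine the value $w$ that $t'$ returns by a short case split on the two clauses of the specification. The second clause permits $t'$ to return $\bot$ only when no $\Set(-)$ precedes it; but we just established that $\Set(v)$ precedes $t'$, so this case is impossible and $w \in \mathcal V$. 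Applying the first clause to $t'$, its returned value $w$ must be the value of the \emph{first} $\Set(-)$ operation. Since that first operation is unique and equals $\Set(v)$, we conclude $w = v$, as required.

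I do not expect a genuine obstacle here: the argument is just an unwinding of Definition~\ref{def-sar} together with transitivity of the precedence relation. The only point worth a moment of care is the uniqueness of the ``first $\Set(-)$ operation'' — both $t$ and $t'$ necessarily report the value of that \emph{same} first write, and it is precisely this that forces the two returned values to coincide. I would also note, consistent with the surrounding discussion, that no step appeals to honest behaviour of the writer: the property is an invariant of the abstract register semantics, so it is inherited by any Byzantine linearizable implementation regardless of whether the writer is correct.
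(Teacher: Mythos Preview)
Your proposal is correct and matches the paper's intent: the paper states this observation as an immediate consequence of Definition~\ref{def-sar} without giving an explicit proof, and your argument is precisely the straightforward unwinding of that sequential specification (first clause forces a unique first $\Set(v)$ before $t$, hence before $t'$, which rules out $\bot$ and pins the return value of $t'$ to the same first write). Nothing further is needed.
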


Note that when a process $p$ reads a value $v \neq \bot$ from a sticky register $R$,
    it can easily relay it to any other process $q$:
    $q$ can also verify that the owner of $R$ indeed wrote $v$ just by reading $R$.
Since $R$ is sticky, the value $v$ that $p$ saw is guaranteed to remain in $R$,
    and so $q$ will also see it there.
So in some precise sense,
    the Uniqueness property of sticky registers (Observation~\ref{uniqueness-sar})
    also provides the ``relay'' property of signed values.

\section{Implementation of a {\src}}\label{srsec-imp}

We now explain how to implement a {\sr} in a system with SWMR registers and $n>3f$ processes, $f$ of which may be Byzantine.

\medskip\noindent
\textbf{Notation.} For convenience, henceforth $v$ denotes a \emph{non-$\bot$ value}.
When we refer to the initial~value~$\bot$ of the sticky register, we do so explicitly. 
    
\subsection{Basic Ideas}
Since the Validity and Unforgeability properties of a {\sr} (Observations~\ref{validity-sar}-\ref{unforgeability-sar})
    are similar to their counterparts in
    a {\mar} (Observations~\ref{validity}-\ref{unforgeability}),
   and also in
    an {\tar} (Observations~\ref{validity-tar}-\ref{unforgeability-tar}),
    the implementations of these three types of registers have similar parts that ensure
    these properties.
    
But in contrast to a {\mar} or an {\tar}, which allows a process to write \emph{several} values, a SWMR {\sr}
    restricts the writer
    to write a \emph{single} value that it cannot change or erase later (even if it is Byzantine!).
This is captured by the Uniqueness property of sticky registers ({Observation~\ref{uniqueness-sar}).
So to implement a {\sr} we need to integrate the mechanism that ensures Validity and Unforgeability
    that we saw in Section~\ref{marc-imp} with a mechanism that ensures Uniqueness,
    and do so in a way that preserves all three properties.
    
To ensure Uniqueness, we proceed as follows.
    To write a value $v$, the writer first writes $v$ in a SWMR register $E_1$ --- the ``echo'' register of $p_1$.
        It then \emph{waits} to see that $v$ is ``witnessed'' by at least $n-f$ processes (we will see how processes become witnesses below).
    When this occurs, it returns done.

    The \emph{first time} a process $p_j$ sees a value $v$ in $E_1$,
    it ``echoes'' $v$ by writing $v$ in its own SWMR ``echo'' register $E_j$.
    Note that (a)~$p_j$ does this only for the first non-$\bot$ value that it see in $E_1$,
    and (b)~$p_j$ is not yet willing to become a {\wof} $v$.
    Process $p_j$ becomes a {\wof} $v$ if it sees that at least $n-f$
    processes have $v$ in their own ``echo'' registers.
    When this happens $p_j$ writes $v$ into a SWMR $R_j$ --- the ``witness'' register of $p_j$.
    Process $p_j$ also becomes a {\wof} $v$ if it sees that at
    least $n-f$ processes have $v$ in their own ``witness'' registers.
    
    Note that this policy for becoming a witness is more strict than the one used
    in the implementation of a {\mar} in Algorithm~\ref{code-mar}: in that algorithm, a process $p_i$
    was willing to become a {\wof} a value $v$ as soon as it saw $v$ in the register $R_1$ of the writer.
    The stricter policy used for the implementation of a {\sr}
    prevents correct processes from becoming witnesses for different values,
    and this also prevents readers from reading different values.
    
It is worth noting that
    in the implementation of a {\mar} (Algorithm~\ref{code-mar}),
    to write or sign a value $v$ the writer does \emph{not} wait for other processes
    to become {\wsof}~$v$: it returns done (almost) immediately
    after writing a single SWMR register.
So a reader may wonder why in our implementation of a {\sr}, to write a value $v$, the writer must wait for
$n-f$ {\wsof} $v$ before returning done.
It turns out that \emph{without this wait}, a process may invoke a $\Test$ after a $\Set(v)$ completes
    and get back $\bot$ rather~than~$v$.
Intuitively, this is because the stricter policy for becoming a {\wof} $v$
	(in our sticky register implementation) may delay
    the ``acceptance'' of $v$ as a legitimate return value.
    
 We now describe the algorithm in more detail.

\begin{algorithm}[]
	\caption{Implementation of a \sr~writable by process~$p_1$ (the \emph{writer}) and readable by every process $p_k \in \{p_2,\dots, p_n\}$ (the \emph{readers}), 
		for $n > 3f$.
	}\label{code-sar} 
	\ContinuedFloat
	
	\vspace{-3mm}
	
	{
	\footnotesize
	\begin{multicols}{2}
        \emph{Every process $p_i \in\{p_1,\dots,p_n\}$ has the following:}
        
	\hspace{4mm} \underline{shared registers}
	\vspace{.7mm}
        
        \hspace{1cm}
            $E_{i}$: $\textsf{SWMR}$ register; initially $\bot$
	
	\hspace{1cm}
		$R_{i}$: $\textsf{SWMR}$ register; initially $\bot$

	\hspace{1cm}
		For each reader $p_j \in \{p_2,\dots,p_n\}$:
		
		\hspace{1.5cm} $R_{ij}$: $\textsf{SWSR}$ register readable by $p_j$; 

		\hspace{2.2cm} initially $ \langle \bot, 0 \rangle$
	\columnbreak

		 \mbox{\emph{Each reader $p_k \in\{p_2,\dots,p_n\}$ also has the following:}}

	\hspace{4mm} \underline{shared register}
	\vspace{.7mm}

	\hspace{1cm} $C_{k}$: $\textsf{SWMR}$ register; initially $0$
\end{multicols}
	
	\hrule \vspace{1mm}

	\begin{algorithmic}[1]
	  \begin{multicols}{2}
        \Statex \hspace{-5mm} \textit{$\triangleright$ Executed by the writer $p_1$}
        \vspace{1mm}
	\Statex 
	\textsc{\Set($v$):}
	
	\Indent
	\State \label{stick-sar}\textbf{if} $E_1 \ne \bot$  \textbf{then} \Return $\Done$

	\State \label{setter1-sar}$E_1 \gets v$ 
	\State \label{repeat-sar}\textbf{repeat}
	\Indent
	\State \label{writerreadsr-sar}\textbf{for} every $p_i \in\{p_1,\dots,p_n\}$ \textbf{do} $ r_i \gets R_i$

	\EndIndent
	\State \label{vconfirm-sar}\textbf{until} $|\{ r_i ~|~ r_i = v\}| \ge n-f$

	\State \label{wr-sar}\Return $\Done$
	\EndIndent

	\columnbreak
        \vspace{1mm}

        \Statex \hspace{-5mm} \textit{$\triangleright$ Executed by a reader $p_k \in \{p_2, \dots, p_n\}$}
        \vspace{1mm}
	\Statex
	\textsc{$\Test$():}
	\Indent
	\State $\setb, \setv \gets \emptyset$
	
	\State \label{whileloop-sar} \textbf{while} true \textbf{do} 
	\Indent
        \State \label{ckplus-sar} $C_k \gets C_k+1$
        \State \label{BS} $S \gets \{ p_j ~|~ p_j \not\in \setb \mbox{~and~} \langle-,p_j\rangle\not\in\setv \}$
        \State \label{repeat-sar} \textbf{repeat}
	\Indent 
 	\State \label{findone-sar} \textbf{for every} process $p_j \in S$ \textbf{do} 
	\Indent
	\State \label{readri-sar}$\langle u_j,c_j\rangle \gets R_{jk}$  
        \EndIndent
	\EndIndent
	\State \label{until-sar} \textbf{until} $\exists~ p_j \in S$ such that $c_j \ge C_k$

	\State \label{check1-sar} \textbf{if} $u_j \ne \bot$ \textbf{then} 
	\Indent
        \State \label{setv-sar} $\setv \gets \setv \cup \lbrace \langle u_j,p_j\rangle \rbrace$
	\State \label{empty0-sar} $\setb \gets \emptyset$
	\EndIndent
	\State \label{checkb-sar} \textbf{if} $u_j = \bot$ \textbf{then} 
	\Indent
	\State \label{set0-sar} $\setb = \setb \cup \lbrace p_j \rbrace$
        \EndIndent
        
	\State \label{return1-sar}  \mbox{\textbf{if} $\exists v$ such that} 
    \mbox{$|\{p_j|\langle v,p_j\rangle\in \setv\}|\ge n-f$}   
    \Indent
    \State  \label{return1-sar-2} \label{return1-sar-1} \textbf{then} \textbf{return} $v$
    \EndIndent
	\State \label{return0-sar} \textbf{if} $|\setb| > f$ \textbf{then} \textbf{return} $\bot$

	\EndIndent

	\EndIndent
	
        \vspace{1mm}

	\end{multicols}
	\vspace{-1mm}
	\hrule
	\vspace{2mm}

	\newcommand{\help}{help}
	\Statex \hspace{-5mm} \textit{$\triangleright$ Executed by every process $p_j\in\{p_1,\dots,p_n\}$}
        \vspace{1mm}
        \Statex
	\textsc{\fresh():} 
	
        \Indent
        \State \label{collectck-init-sar}\textbf{for every} reader $p_k\in \{p_2, \dots, p_n\}$ \textbf{do} $prev\_c_k \gets 0$ 
	\State \textbf{while} true \textbf{do}
	\Indent
        
    	\State\label{helpecho-sar}\textbf{if} $E_j = \bot$ then
    	\Indent
    	\State \label{reade1-sar}$e_j \gets E_1$
            \State \label{echo-sar}$E_j \gets e_j$
    	\EndIndent
        
    	\State \textbf{if} $R_j = \bot$ then
    	\Indent
    	\State \label{readecho-sar}\textbf{for} every process $p_i \in\{p_1,\dots,p_n\}$ \textbf{do} $ e_i \gets E_i$
    	\State \label{echosupport-sar}$\exists v \neq \bot$ such that $|\{ e_i ~|~ e_i =  v \}| \ge n-f$ \textbf{do} $ R_j \gets v$
    	\EndIndent

    	\State \label{collectck-sar}\textbf{for every} reader $p_k\in \{p_2, \dots, p_n\}$ \textbf{do} $c_k \gets C_k$  
    	\State \label{askers-sar}$askers \gets \{ p_k\in \{p_2, \dots, p_n\} ~|~ c_k > prev\_c_k \}$
    	\State\label{replyasker-sar}\textbf{if} $askers \neq \emptyset$ \textbf{then} 
	    \Indent 
        	\State \label{ifnotone-sar} \textbf{if} $R_j = \bot$ \textbf{then} 
        	\Indent
        	
        	\State \label{sets-sar} \textbf{for} every process $p_i \in\{p_1,\dots,p_n\}$ \textbf{do} $ r_i \gets R_i$ 
        	\State \label{followcondition-sar} \textbf{if} $\exists v \neq \bot$ such that $|\{ r_i~|~ r_i =v \}| \ge f+1$ 
            \textbf{do} $R_j \gets v$
            \EndIndent
            \State \label{rj-sar}$r_j\gets R_j$

        	\State \label{tellasker-sar}\textbf{for every} $p_k \in askers$ \textbf{do} 
    	\Indent 
    	\State \label{fresh1-sar} $R_{jk} \gets \langle r_j, c_k \rangle$
    	\State \label{setprev-sar}$prev\_c_k \gets c_k$
    	\EndIndent
        
	\EndIndent
	\EndIndent
        \EndIndent
	\end{algorithmic}
	}
	\vspace{2mm}
	\end{algorithm}

\subsection{The Algorithm}

Algorithm~\ref{code-sar} gives a linearizable implementation of a SWMR {\sr} that is writable by process $p_1$ (the writer) and readable by the set $\{p_2,...,p_n\}$ of processes (the readers).
It works for systems 
    with $n >3f$ processes, $f$ of which can be Byzantine, as follows.
    \begin{compactitem}

\item To \emph{write} a value $v$, 
    the writer $p_1$ calls the $\Set(v)$ procedure. In this procedure,
    \begin{compactitem}
        \item  $p_1$ first checks whether its SWMR register $E_1 =\bot$.
        \item  If $E_1 \neq \bot$
                (this indicates that $p_1$ previously wrote some value),
                $p_1$ returns done.
        \item  Otherwise, $p_1$ writes $v$ into its SWMR register $E_1$.

        \item Then $p_1$ repeatedly checks whether
                at least $n-f$ processes have $v$ in their SWMR registers~$R_i$. 
            When the condition holds, $p_1$ returns done.
            Intuitively, 
                $p_1$ returns done only after it sees that
                at least $n-f$ processes are {\wsof} $v$.
    \end{compactitem}

\item To \emph{read} a value, 
    a reader $p_k\in \{p_2,...,p_n\}$ calls the $\Test()$ procedure. 
    In this procedure, $p_k$ first initializes $\setv$ and $\setb$ to empty.
        Intuitively, $\setv$ contains tuples $\langle v,p_j\rangle$ such that
        $p_j$ informed $p_k$ that it is a {\wof}
        the value $v$ (during this execution of $\Test$),
        while $\setb$ contains processes that
        (a) informed $p_k$ that they are not {\wsof} any value
        and (b) they did so after the last round in which some process told $p_k$ that it is a {\wof} some value.
    In the following, when we say that a process $p$ is in $\setv$,
        we mean $\langle -,p\rangle \in \setv$.

      After this initialization, $p_k$ enters a while loop. 
        Each iteration of this loop is a ``round''.
        
    In each round:
    \begin{compactitem}
     \item  First $p_k$ increments a shared SWMR register $C_k$.
     \item  Then $p_k$ constructs the set $S$ of \emph{of all processes $p_j$ that are not in $\setb$ and $\setv$}.
      \item Then $p_k$ enters a repeat-until loop.
        In this loop, $p_k$ reads the tuple $\langle u_j,c_j\rangle$
        from the SWSR register $R_{jk}$
            of every process $p_j$ in $S$.
    
        As we will see below, $u_j$ is a value, and $c_j$ is a timestamp.
        Intuitively, $p_j$ writes a tuple $\langle u_j,c_j\rangle$ into $R_{jk}$ to tell $p_k$ that it is a {\wof} the value $u_j$ at ``time $c_j$''.
        
          If $p_k$ reads any $\langle u_j,c_j\rangle$ with $c_j \ge C_k$ from any process $p_j \in S$,
    $p_k$ exits the repeat-until loop.
        
        \item Then $p_k$ checks whether $u_j=\bot$:
        \begin{compactitem}
            \item If $u_j\ne\bot$,        
                $p_k$ inserts $\langle u_j,p_j\rangle$ into $\setv$, and resets $\setb$ to empty.
            \item If $u_j=\bot$, $p_k$ just inserts $p_j$ into $\setb$.
            \item After the above changes of $\setb$ or $\setv$,
                     if there is a value $v$ that is witnessed by at least $n-f$ processes in $\setv$, $p_k$ returns $v$.
                 Otherwise, if $|\setb|>f$, $p_k$ returns $\bot$.
                If none of these conditions hold,
                    $p_k$ goes to the next round.
        \end{compactitem}

     \end{compactitem}
\end{compactitem} 

In addition to the $\Set(-)$ and $\Test()$ procedures,
    the algorithm also includes a $\fresh()$ procedure
    that each process executes in the background (even when it is not currently performing any operation on the implemented register):
    intuitively,
    a process executes $\fresh()$
    to assist ongoing executions of the \valid$(-)$ procedure by other processes.

In the $\fresh()$ procedure,
    each process $p_j$ stores in the register $R_j$ the (unique) value $v$ that $p_j$ is a {\wof};
    initially, $R_j = \bot$ and it remains $\bot$ as long as $p_j$ is not a {\wof} any value.
    
In the procedure,
    a process $p_j\in \{p_1,...,p_n\}$ first initializes its local register $prev\_c_k$ to~$0$ for every reader $p_k\in\{p_2,...,p_n\}$.
The local variable
    $prev\_c_k$ stores the last value that $p_j$ observed in $C_k$ (the SWMR register written by $p_k$);
    intuitively, $prev\_c_k$ tracks the last round $p_j$ helped $p_k$.
After this initialization, $p_j$ enters an infinite loop, 
    to continuously monitor and help ongoing operations.
In each iteration:
\begin{compactitem}

    \item Process $p_j$ first check if $E_j=\bot$.
    If so, $p_j$ reads the latest value of $E_1$ 
        and $p_j$ writes this value into its register $E_j$.
     Intuitively, 
            $p_j$ ``echos'' a value $v$ if it reads $v$ directly from the writer.
    \item Then $p_j$ check if $R_j=\bot$.
    If so,
        $p_j$ reads the SWMR register $E_i$ of every process $p_i\in\{p_1,...,p_n\}$. 
        If for some value $v$, at least $n-f$ processes $p_i$ have $v$ in their~$E_i$,
            then $p_j$ writes the value $v$ to its own register $R_j$.
        Intuitively, 
            $p_j$ becomes a {\wof} a value $v$ 
                if it sees that at least $n-f$ processes ``echoed'' $v$.
   
           \item Then $p_j$ determines the set of \emph{askers} --- readers whose $C_k$ increased compared to $prev\_c_k$.
    Intuitively, \emph{askers} are readers who have ongoing $\Test$ operations and need additional witnesses to verify some value.

\item If no \emph{askers} are found, 
    $p_j$ does nothing and repeats the loop.
\item Otherwise:
       \begin{compactitem}
       	\item 
    If $R_j$ still contains $\bot$, $p_j$ updates $R_j$ as follows:
	       \begin{compactitem}

            \item $p_j$ reads the SWMR register $R_i$ of every process $p_i\in\{p_1,...,p_n\}$.
            \item If for some value $v$, at least $f+1$ processes $p_i$ have $v$ in their~$R_i$,
                     then $p_j$ writes $v$ to its own register $R_j$.   
                Intuitively, 
                    $p_j$ becomes a {\wof} a value $v$ if it sees that there are at least $f+1$ {\wsof} $v$.
             \end{compactitem}
             \item    $p_j$ reads its updated $R_j$ into its local variable $r_j$.
                 
        \item Finally, $p_j$ helps each asker $p_k$ by updating $R_{jk}$ to $\langle r_j, c_k \rangle$, and then it sets $prev\_c_k$ to~$c_k$.

       \end{compactitem} 
\end{compactitem}

\smallskip
In Appendix~\ref{a-sar}, we prove that in Algorithm~\ref{code-sar}
    all the correct processes complete their operations,
    and this implementation is Byzantine linearizable.
More precisely:

\begin{restatable}{theorem}{stickycorrect}
In a system with $n > 3f$ processes,
    where $f$ processes can be Byzantine,
  Algorithm~\ref{code-sar} is a correct implementation of a SWMR {\sr}.
\end{restatable}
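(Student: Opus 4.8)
The plan is to establish the two halves of correctness — \textsc{Termination} and Byzantine linearizability — separately, after first proving a few structural facts about Algorithm~\ref{code-sar}. I would begin with the easy \emph{monotonicity} invariants: for a correct process $p_j$, the register $E_j$ goes from $\bot$ to (the first value $p_j$ reads in $E_1$) exactly once and never changes again, and similarly $R_j$ goes from $\bot$ to a value exactly once, either via the ``$n-f$ matching echoes'' rule or the ``$f{+}1$ matching witnesses'' rule in \fresh{}, and then never changes. The central lemma is \emph{uniqueness of the witnessed value}: there is at most one value $v^*$ that any correct process ever writes into its $R_j$. This follows from a standard quorum-intersection argument — two sets of $\ge n-f$ echoers of distinct values would intersect in $\ge n-2f \ge f+1$ processes, hence in a correct process, which echoes only one value; so at most one value can ever get $n-f$ echoes, and since the \emph{first} correct process to become a witness must do so via the echo rule (the $f{+}1$-witness rule presupposes an earlier correct witness), an induction on the order in which correct processes become witnesses shows they all witness this same $v^*$. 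I would also record a companion fact, \emph{persistence forces witnessing}: once $f+1$ correct processes have $v^*$ in their $R_i$ — which, by monotonicity, is a permanent condition once it holds — every correct process that afterwards enters the ``help an asker'' branch of \fresh{} with $R_j=\bot$ reads those registers, sees $\ge f+1$ copies of $v^*$, sets $R_j\gets v^*$, and hence never again writes $\bot$ into any $R_{jk}$.

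For \textsc{Termination}, the writer side is easy: when $p_1$ executes $\Set(v)$ with $E_1=\bot$, all $\ge n-f$ correct processes eventually echo $v$ into their $E_i$ and then become witnesses of $v$ via the echo rule, so $p_1$ eventually sees $n-f$ processes with $R_i=v$ and returns; a subsequent $\Set$ returns immediately since $E_1\neq\bot$. For a reader's $\Test$ I must show (i) each round's repeat-until loop exits, and (ii) only finitely many rounds occur. Part (ii) is a counting argument: each round inserts exactly one process into $\setv$ (resetting $\setb$) or into $\setb$; $\setv$ holds distinct processes and never shrinks, so at most $n$ insertions into it occur, and between two such insertions at most $f$ insertions into $\setb$ can occur before $|\setb|>f$ triggers a return — so the while loop halts within $O(nf)$ rounds. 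Part (i) reduces, via the usual ``\fresh{} answers an asker with a timestamp $\ge C_k$'' argument and the fact that the set $S$ is fixed for the round, to showing that $S$ contains a correct process. Suppose not: then all correct processes lie in $\setv\cup\setb$ (these sets are disjoint). Writing $a$ for the number of correct processes in $\setv$, all of which carry $v^*$, and using $|\setb|\le f$, we get $a \ge (n-f)-f = n-2f \ge f+1$; and since we have not returned $v^*$, these $a$ processes were all added to $\setv$ strictly before the last correct process $p_\ell$ was inserted into $\setb$ (any $\setv$-insertion after that would have reset $\setb$ and evicted $p_\ell$). Hence when $p_\ell$ executed the \fresh{} iteration in which it wrote its $\bot$-report, at least $f+1$ correct processes already had $v^*$ in their $R_i$ — so by \emph{persistence forces witnessing}, $p_\ell$ would have set $R_j\gets v^*$ and reported $v^*$, not $\bot$; contradiction. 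So $S$ always contains a correct process, and $\Test$ terminates.

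For Byzantine linearizability I would fix a history $H$ and build a witness history $H'$ with $H'|{\hct}=H|{\hct}$ together with a linearization $L$ of (a completion of) $H'$. If no correct process ever becomes a witness, then by the uniqueness lemma no $\Test$ returns a non-$\bot$ value and no correct $\Set$ completes: take $H'=H|{\hct}$ with all incomplete and Byzantine operations dropped, and let $L$ be all $\Test$s ordered by real time; each returns $\bot$, matching the sequential specification of a {\sr} (Definition~\ref{def-sar}) with initial value $\bot$. Otherwise let $v^*$ be the unique witnessed value and let $\tau^*$ be the first instant at which $f+1$ correct processes simultaneously hold $v^*$ in their $R_i$. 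Form $H'$ from $H|{\hct}$ by adding a single instantaneous $\Set(v^*)$ by $p_1$ at time $\tau^*$ if $p_1$ is Byzantine (if $p_1$ is correct one checks $\tau^*$ lies inside its already-completed $\Set(v^*)$ interval, since $\tau^*$ is after $p_1$ wrote $v^*$ into $E_1$ and before that $\Set$ returned), dropping all other Byzantine and incomplete operations. Put this $\Set(v^*)$ first in $L$, then all completed $\Test$s returning $v^*$, with all $\Test$s ordered by real time and any no-op $\Set$ calls by a correct writer slotted in right after $\Set(v^*)$. Correctness of $L$ reduces to three claims, each a short consequence of the two key facts: (a) every $\Test$ returning $v^*$ responds no earlier than $\tau^*$ (it read $\ge n-f$, hence $\ge f+1$ correct, witnesses of $v^*$); (b) every $\Test$ returning $\bot$ is invoked no later than $\tau^*$ (after $\tau^*$ all correct helpers report $v^*$, so $\setb$ never exceeds $f$); and (c) no $\Test$ returns a value outside $\{v^*,\bot\}$ (that would need $f+1$ correct witnesses of another value). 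Claims (a)–(b) show $L$ respects precedence; together with (c) and the placement of $\Set(v^*)$ they show $L$ conforms to the sequential specification, and along the way re-derive Observations~\ref{validity-sar}–\ref{uniqueness-sar}.

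The step I expect to be the main obstacle is part (i) of \textsc{Termination} — ruling out the superficially plausible configuration in which $\setv$ and $\setb$ jointly cover every correct process, leaving only Byzantine processes in $S$ for $p_k$ to wait on forever. In contrast to the implementations of a {\mar} (Algorithm~\ref{code-mar}) and an {\tar} (Algorithm~\ref{code-tar}), here liveness is rescued precisely by the stricter promotion rules: the argument must combine the bound $|\setb|\le f$, the reset discipline for $\setb$, register monotonicity, and \emph{persistence forces witnessing}, and care is needed with the round/timestamp bookkeeping to place $p_\ell$'s $\bot$-report strictly after the moment $f+1$ correct processes acquire $v^*$. Everything else — the quorum-intersection uniqueness lemma, the writer-side termination, and the linearization construction — is routine once this invariant and the uniqueness lemma are in hand.
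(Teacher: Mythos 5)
Your proposal is correct and follows essentially the same route as the paper's proof in Appendix~\ref{a-sar}: the quorum-intersection uniqueness lemma, the invariant that $\setv\cup\setb$ never covers all correct processes (resolved, as you anticipate, by the $C_k$-timestamp bookkeeping showing that a correct $\bot$-reporter would already have seen $f+1$ witnesses of $v^*$), and the finite-round counting argument are the paper's Lemmas~\ref{unqiuev-sar}, \ref{onecorrectset0-sar}--\ref{correctoutside-sar}, and~\ref{termination-sar-2}. The only cosmetic difference is that you anchor the added $\Set(v^*)$ at the concrete instant $\tau^*$ when $f+1$ correct witnesses first exist, while the paper places it anywhere in the interval $(t_0,t_1)$ delimited by the $\Test$ invocations and responses; the two choices are interchangeable since your claims (a) and (b) are exactly the paper's Lemmas~\ref{f+1line-sar} and~\ref{testreturn1-sar}, which give $t_0\le\tau^*\le t_1$.
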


\section{Optimality Result}\label{Impossibility-Result}

We now show that our implementations of a {\mar}, {\tar}, and {\sr}, are optimal in the number of Byzantine processes that they can tolerate. 
More precisely, we show that there are no correct implementations of these objects from SWMR registers
    that tolerate $f$ Byzantine processes if $3 \le n \le 3f$. 
To do so we define a much simpler object, called \emph{{\br}}, and show that
    (1)~it cannot be implemented from SWMR registers when $3 \le n \le 3f$, 
    and
    (2) it can be implemented from a {\mar}, from an {\tar}, or from a {\sr}.

Intuitively, \emph{{\br}} is a register initialized to $0$ that can be set to 1 by a single process (the \emph{setter}), and tested by other processes (the \emph{testers}).
More precisely:

\begin{definition}\label{def-ts}
A \br~object has two operations:
\begin{itemize}
    \item $\rSet$ by the setter.
    \item $\rTest$ by any tester returns 0 or 1. 
    $\rTest$ returns 1 if and only if a $\rSet$ occurs before the $\rTest$. 
\end{itemize}
\end{definition}

Note that this atomic object has the following properties:

\begin{observation}\label{ob-ts} 
~
\begin{enumerate}
\item\label{zero}
       If a $\rSet$ occurs before a $\rTest$, then the $\rTest$ returns~1.
\item\label{uno}
          If a $\rTest$ returns $1$, then a $\rSet$ occurs before the $\rTest$.

\item\label{due}
       If a $\rTest$ returns $1$, and it precedes a $\rTest'$, then $\rTest'$ also returns $1$.
\end{enumerate}
\end{observation}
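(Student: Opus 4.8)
The plan is to obtain all three parts as immediate consequences of the sequential specification of the {\br} object in Definition~\ref{def-ts}, which says that a $\rTest$ returns $1$ if and only if a $\rSet$ occurs before it. Since Observation~\ref{ob-ts} is a statement about the \emph{atomic} (sequential) object, every execution of it is already sequential: the operations are totally ordered, and for two operations $o,o'$ the relation ``$o$ occurs before $o'$'' used in Definition~\ref{def-ts} coincides with ``$o$ precedes $o'$'' and is in particular transitive.

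Part~\ref{zero} is exactly the forward direction of the biconditional in Definition~\ref{def-ts}, and Part~\ref{uno} is its reverse direction; both are one line. For Part~\ref{due}, assume a $\rTest$ returns $1$ and precedes a $\rTest'$. By Part~\ref{uno} some $\rSet$ occurs before that $\rTest$, and since precedence is transitive (the response of the $\rSet$ is before the invocation of the $\rTest$, which is before the invocation of $\rTest'$) the same $\rSet$ occurs before $\rTest'$. Part~\ref{zero} then yields that $\rTest'$ returns $1$.

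There is no real obstacle here: the whole observation is just an unpacking of the ``if and only if'' in Definition~\ref{def-ts} together with transitivity of precedence. The only point worth stating explicitly is that, because the object is atomic, its executions are sequential, so ``occurs before'' is a transitive relation; once that is noted, each of the three parts is a single sentence.
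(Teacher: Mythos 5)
Your proof is correct and matches the paper's intent: the paper states Observation~\ref{ob-ts} without proof precisely because, as you note, parts~\ref{zero} and~\ref{uno} are the two directions of the biconditional in Definition~\ref{def-ts}, and part~\ref{due} follows from them by transitivity of the (total) order on operations of the atomic object. Nothing is missing.
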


Henceforth we consider the \emph{one-shot} version of the {\br} object: the setter can invoke at most one $\rSet$ operation,
    and each tester can invoke at most one $\rTest$ operation.

To prove our result we start with the following lemma:

\begin{lemma}\label{TheSaver}
Consider any correct implementation of a {\br} object.
Every history of this implementation satisfies the following three properties:
\begin{enumerate}
\item \label{Zero} 
    If the setter is \emph{correct} and a $\rSet$ operation precedes a $\rTest$ operation by a \emph{correct} tester,
        then the $\rTest$ returns $1$.
\item \label{Uno} 
    If a $\rTest$ operation by a \emph{correct} tester returns $1$, and the setter is \emph{correct}, then the setter invokes a $\rSet$ operation
    before the tester returns $1$.
\item \label{Due} 
    If a $\rTest$ operation by a \emph{correct} tester returns $1$, and it precedes a $\rTest'$ operation by a \emph{correct} tester, then $\rTest'$ also returns~$1$.
\end{enumerate}
\end{lemma}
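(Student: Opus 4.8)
The plan is to obtain all three properties as the ``correct‑process restrictions'' of the sequential specification of the {\br} object (Definition~\ref{def-ts}, equivalently Observation~\ref{ob-ts}), using the two guarantees packaged into the notion of a \emph{correct} implementation: Byzantine linearizability and Termination. Fix an arbitrary history $H$ of the implementation. Since the implementation is correct, $H$ is Byzantine linearizable, so by Definition~\ref{def-hbl} there is a history $H'$ with $H'|{\hct}=H|{\hct}$ such that $H'$ is linearizable with respect to the {\br} object; let $H'_c$ be the completion of $H'$ and $L$ the linearization of $H'_c$ furnished by Definition~\ref{LinearizableImplementationCrash}, so that $L$ respects the precedence relation of $H'_c$ and conforms to the sequential specification of the object.

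The first thing I would do is establish, once and for all, a bookkeeping fact and reuse it three times: every operation $o$ executed by a process that is correct in $H$ ``survives faithfully'' into $L$. Indeed, by Termination $o$ is complete in $H$, hence it appears with the same invocation and response in $H|{\hct}=H'|{\hct}$, hence (being complete) it is untouched by the completion and appears unchanged in $H'_c$, and therefore appears in $L$ with the very same response; moreover, if $o$ precedes $o'$ in $H$ and both are by correct processes, this precedence is preserved along $H\to H'\to H'_c\to L$. I would also record the auxiliary observation that when the setter is correct, the \emph{only} $\rSet$ operation occurring in $H'$ (and hence in $L$) is the single $\rSet$ of that correct setter, since no other process is permitted to invoke $\rSet$; the extra operations that Byzantine processes may perform in $H'$ are irrelevant to the responses and ordering of the correct‑process operations we reason about.

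Given this, Property~\ref{Zero} is immediate: the hypothesis places a $\rSet$ before a correct tester's $\rTest$ in $H$, hence before it in $L$, and Observation~\ref{ob-ts}(\ref{zero}) forces that $\rTest$ to return $1$ in $L$, hence in $H$. Property~\ref{Due} is handled the same way: in $L$ we have a correct $\rTest$ that returns $1$ and precedes a correct $\rTest'$, so Observation~\ref{ob-ts}(\ref{due}) yields that $\rTest'$ returns $1$ in $L$, and therefore in $H$. Property~\ref{Uno} needs slightly more care: from the correct $\rTest$ returning $1$ in $L$, Observation~\ref{ob-ts}(\ref{uno}) gives a $\rSet$ occurring before $\rTest$ in $L$; since the setter is correct this $\rSet$ is its unique one; and since $\rSet$ precedes $\rTest$ in the \emph{sequence} $L$, the operation $\rTest$ cannot precede $\rSet$ in $H'_c$ (otherwise precedence‑preservation would put $\rTest$ before $\rSet$ in $L$ as well), so the invocation of the setter's $\rSet$ is before the response of $\rTest$ in $H'$, i.e.\ in $H$, which is exactly the claim.

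The step I expect to be the main obstacle is this last translation in Property~\ref{Uno}: passing from ``earlier in the linearization sequence $L$'' back to ``earlier in real time in $H$''. This cannot be read off directly (linearizability only guarantees the forward direction, from real‑time precedence to $L$‑order) and must instead be argued by the contrapositive of precedence‑preservation, as above; the rest of the proof is routine once the bookkeeping lemma about correct‑process operations — and the fact that completions and Byzantine processes' extra operations never perturb them — is stated cleanly.
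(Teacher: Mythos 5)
Your proposal is correct and follows essentially the same route as the paper's proof: pass to the linearization $L$ of the witness history $H'$, apply Observation~\ref{ob-ts} there, and for Property~\ref{Uno} use the contrapositive of precedence-preservation to translate the $L$-order back into a real-time bound in $H'$ and hence in $H$. The only differences are presentational (your explicit bookkeeping lemma and the remark that a correct setter's $\rSet$ is the unique one in $H'$ are implicit in the paper's argument).
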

 
\begin{proof}[Proof Sketch]
Let $\AW$ be a correct
    implementation of {\br},
    and $H$ be any history of $\AW$.
Since $\AW$ is correct:

(A) $H$ is Byzantine linearizable with respect to {\br}.

(B) All processes that are correct in $H$ complete all their operations.

Let $\hct$ be the set of correct processes in $H$.
By (A) and Definition~\ref{def-hbl},
        there is a history $H'$ such that 
        $H'|\hct$ $= H|\hct$ and $H'$ is linearizable with respect to a {\br} (Definition~\ref{def-ts}).
Let $H'$ be such a history and let $L$ be a linearization of $H'$ such that:
    (a) $L$ respects the precedence relation between the operations of $H'$, and
    (b) $L$ conforms to the sequential specification of {\br}.
We now prove that the history $H$ of implementation $\AW$ satisfies each property stated in the lemma. 

\begin{enumerate}

     \item Suppose the setter is \emph{correct} and a $\rSet$ operation precedes a $\rTest$ operation by a \emph{correct} tester in $H$.
     Then by (B),
        this $\rTest$ operation completes, and so it returns 0 or 1 in $H$.
    Since $H'|\hct = H|\hct$,
        $\rSet$ and $\rTest$ are also in $H'$ and $\rSet$ precedes $\rTest$ in $H'$.
    Since~$L$ is a linearization of $H'$,
        $\rSet$ and $\rTest$ are also in $L$.
    Since $L$ respects the precedence relation between the operations of $H'$,
        $\rSet$ precedes $\rTest$ in $L$.
    Since $L$ conforms to the sequential specification of \mbox{{\br}},
        by Observation~\ref{ob-ts}(\ref{zero}),
        $\rTest$ returns 1 in $L$.
    So this $\rTest$ also returns 1 in $H'$ and in $H$.

    \item Suppose a $\rTest$ operation by a \emph{correct} tester returns $1$, and the setter is \emph{correct} in $H$.
   Since $H'|\hct = H|\hct$,
        $\rTest$ is also in $H'$.
    Since $L$ is a linearization of $H'$ and $\rTest$ is in $H'$,
        $\rTest$ is in $L$.
    Since $\rTest$ returns 1,
        and $L$ conforms to the sequential specification of {\br},
        by Observation~\ref{ob-ts}(\ref{uno}),
        a $\rSet$ precedes this $\rTest$ in $L$.
    Since $L$ is a linearization of $H'$,
        this $\rSet$ is in $H'$,
        and so both $\rSet$ and $\rTest$ are in $H'$.
    Since
        (i) this $\rSet$ precedes this $\rTest$ in $L$, and
        (ii)~$L$ respects the precedence relation between the operations of $H'$,
        the invocation of $\rSet$ is before the $\rTest$ returns~$1$ in $H'$.
Since both the setter and the tester are correct,
    and $H'|\hct = H|\hct$,
    the invocation of $\rSet$ is before the $\rTest$ returns $1$ in $H$.

    \item Suppose a $\rTest$ operation by a \emph{correct} tester returns $1$, and it precedes a $\rTest'$ operation by a \emph{correct} tester in $H$.
    Then by (B),
        this $\rTest'$ operation completes, and so it returns 0 or 1 in $H$.
    Since $H'|\hct = H|\hct$,
        $\rTest$ and $\rTest'$ are also in $H'$, and $\rTest$ precedes $\rTest'$ in $H'$.
    Since~$L$ is a linearization of $H'$,
        $\rTest$ and $\rTest'$ are also in $L$.
    Since $L$ respects the precedence relation between the operations of~$H'$,
        $\rTest$ precedes $\rTest'$ in $L$.
    Since $\rTest$ returns 1,
        and $L$ conforms to the sequential specification of {\br},
        by Observation~\ref{ob-ts}(\ref{due}), $\rTest'$ returns 1 in $L$.
    So this $\rTest'$ also returns 1 in $H'$ and in $H$.
\end{enumerate}
\end{proof}

\begin{theorem}\label{Theo-Impossibility-Result}
In a system with $3 \le n\le 3f$ processes, where $f$ processes can be Byzantine,
    there is no correct implementation
    of a {\br} object from SWMR registers.
\end{theorem}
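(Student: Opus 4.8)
The plan is to assume, toward a contradiction, a correct implementation $\AW$ of {\br} from SWMR registers that tolerates $f$ Byzantine processes, and to derive a contradiction from the three properties of Lemma~\ref{TheSaver}. Since $3 \le n \le 3f$, I first partition the processes into three groups $P_0, P_1, P_2$, each nonempty and each of size at most $f$, with the setter in $P_0$.

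I would begin with two ``calibration'' facts, each obtained from Lemma~\ref{TheSaver} plus an elementary indistinguishability observation. \emph{Fact (i):} in any history in which $P_0$ and $P_1$ are correct, $P_2$ takes no steps (permissible for a Byzantine $P_2$), the setter performs $\rSet$, and a correct tester afterwards performs $\rTest$, that $\rTest$ completes (Termination) and returns $1$ (Lemma~\ref{TheSaver}(\ref{Zero})). \emph{Fact (ii):} in any history in which $P_1$ and $P_2$ are correct, the setter is correct and invokes no $\rSet$ before a correct tester's $\rTest$ completes, that $\rTest$ returns $0$ --- otherwise Lemma~\ref{TheSaver}(\ref{Uno}) would force a $\rSet$ invocation before the tester returns $1$. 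Moreover, a correct tester in $P_1 \cup P_2$ cannot tell whether $P_0$ is ``correct but arbitrarily slow'' or ``Byzantine and silent'', so Fact (ii) also applies when $P_0$ is Byzantine and takes no steps.

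The core of the proof is a ring (``hexagon'') construction. I would consider a conceptual global history $H$ on $2n$ process-copies placed on a $6$-cycle of arcs $\alpha_0,\dots,\alpha_5$, where arc $\alpha_j$ is a copy of the processes of $P_{j \bmod 3}$; the two setter-copies lie in $\alpha_0$ and $\alpha_3$, and in $H$ the one in $\alpha_0$ invokes $\rSet$ while the one in $\alpha_3$ does not. The construction is designed so that $H$ can be ``folded'' around any pair of adjacent arcs $\alpha_i\alpha_{i+1}$ into a genuine $n$-process history of $\AW$ in which exactly one group is Byzantine --- the group of the remaining colour, which simulates both arcs of its colour, one toward each of $\alpha_i$ and $\alpha_{i+1}$ --- and in which every correct $\alpha$-process has exactly the same local view as in $H$. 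I would schedule $H$ so that the $\rSet$ of the $\alpha_0$-setter precedes the $\rTest$ of the testers in $\alpha_1$ and in $\alpha_5$, the $\alpha_1$-test precedes the $\alpha_2$-test, and the $\alpha_5$-test precedes the $\alpha_4$-test. Then, folding around $\alpha_0\alpha_1$ (Byzantine group $P_2$, setter group $P_0$ correct) shows by Fact (i) that the $\alpha_1$-tester returns $1$; folding around $\alpha_1\alpha_2$ (Byzantine group $P_0$) shows by Lemma~\ref{TheSaver}(\ref{Due}) that the $\alpha_2$-tester returns $1$; folding around $\alpha_3\alpha_4$ (Byzantine group $P_2$, setter group $P_0$ correct and never setting) shows by Fact (ii) that the $\alpha_4$-tester returns $0$. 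Finally, the $\alpha_5$-tester lies in two overlapping folds: folding around $\alpha_0\alpha_5$ (Byzantine group $P_1$, setter group $P_0$ correct with its $\rSet$ before the $\alpha_5$-test) forces its $\rTest$ to return $1$ by Fact (i); folding around $\alpha_4\alpha_5$ (Byzantine group $P_0$) forces the same $\rTest$ to return $0$ by Lemma~\ref{TheSaver}(\ref{Due}), since there the $\alpha_4$-tester is correct, returns $0$, and is preceded by the $\alpha_5$-test. But in both of these folds the $\alpha_5$-tester reads exactly the ``$\alpha_0$-behaviour'' from one register-group and the ``$\alpha_4$-behaviour'' from the other, so its local view --- hence its return value --- is identical; this is the contradiction.

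The main obstacle is justifying the ring construction. A naive three-group indistinguishability argument does not suffice here, precisely because the relay property (Lemma~\ref{TheSaver}(\ref{Due})) forces a correct tester that returns $1$ to durably change the shared registers, so a single Byzantine group cannot ``hide'' a completed $\rSet$ from a fully-correct third group that reads those registers; the ring sidesteps this by never requiring a correct group to be simultaneously consistent with two conflicting real histories --- the contradiction is global and is pinned to the $\alpha_5$-tester. Making this precise requires (a) defining the Byzantine simulation used in each fold so that the six arc-behaviours stitched together into $H$ are mutually consistent (each arc responding to its two ring-neighbours exactly as the corresponding real fold prescribes), and (b) exploiting asynchrony to schedule the reads so that a Byzantine process's overwriting of its SWMR register is never observed by a process that, in the relevant fold, must not observe it. With those details in place the contradiction above goes through; together with the forthcoming fact that {\br} is implementable from a {\mar}, an {\tar}, or a {\sr}, it establishes the claimed optimality.
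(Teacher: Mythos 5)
Your plan works, but once the scaffolding is removed it is essentially the paper's own proof. The only three folds that enter your final contradiction --- $(\alpha_0,\alpha_5)$, $(\alpha_4,\alpha_5)$ and $(\alpha_3,\alpha_4)$ --- are exactly the paper's histories $H_1$, $H_2$ and $H_3$ under the correspondence $P_0\leftrightarrow\{s\}\cup Q_1$, $P_2\leftrightarrow\{p_a\}\cup Q_2$ (your $\alpha_5$-tester is $p_a$) and $P_1\leftrightarrow\{p_b\}\cup Q_3$ (your $\alpha_4$-tester is $p_b$); the only cosmetic difference is where the contradiction is read off: the paper applies Lemma~\ref{TheSaver}(\ref{Uno}) inside $H_3$ directly, whereas you use it in fold $(\alpha_3,\alpha_4)$ to force a $0$ at $\alpha_4$ and then collide that with the Relay-forced $1$ in fold $(\alpha_4,\alpha_5)$ --- logically the same modus tollens. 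Two points to fix. First, your justification for the hexagon --- that ``a naive three-group indistinguishability argument does not suffice'' --- is mistaken: the paper's proof is precisely such a three-history chain, and your arcs $\alpha_1,\alpha_2$, the folds $(\alpha_0,\alpha_1)$, $(\alpha_1,\alpha_2)$, $(\alpha_2,\alpha_3)$, and the global $2n$-copy history $H$ play no role in your contradiction and should be deleted; building and verifying the full ring would only add unchecked consistency obligations. Second, the ``main obstacle'' you flag dissolves for the three folds that matter, exactly as you sketch: in fold $(\alpha_4,\alpha_5)$ the Byzantine $P_0$ mimics a correct setter verbatim, lets $\alpha_5$'s $\rTest$ complete, then overwrites all of its SWMR registers back to their initial values before $P_1$ takes its first step; in fold $(\alpha_3,\alpha_4)$ the same group is \emph{correct} but simply takes no steps until that same instant, so its registers are likewise in their initial state, and $P_2$'s writes from fold $(\alpha_4,\alpha_5)$ are replayed verbatim by a Byzantine $P_2$. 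That single timing trick, plus delaying each correct tester group's first step until after the other group's $\rTest$, is the entire indistinguishability content; no circular ``arc-behaviour'' definitions need to be untangled. With the argument pruned to those three executions, your proof is correct and coincides with the paper's.
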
 	

\begin{proof}
Consider a system with $3 \le n\le 3f$ processes, where $f$ processes can be Byzantine.
Suppose, for contradiction, 
	there is a correct implementation $\AW$
	of a {\br} object $R$ from SWMR registers in this system.
This implies that for every history $H$ of $\AW$ in this system:

\begin{compactenum}[(1)]
    \item\label{BL} $H$ is Byzantine linearizable with respect to {\br}.
    \item\label{TER} All processes that are correct in $H$ complete all their operations.
\end{compactenum}

Let $s$ be the setter of $R$, and $p_a$ and $p_b$ be two testers of $R$.
We partition the set of processes as follows:
    \{$s$\},
    \{$p_a$\},
    \{$p_b$\},
    and three disjoint subsets $Q_1$, $Q_2$, $Q_3$ of size at most $f-1$ each.

	\begin{figure}[t]
		\vspace{-6mm} %
		\minipage{0.48\textwidth}
			\centering 
			\includegraphics[width=0.9\textwidth]{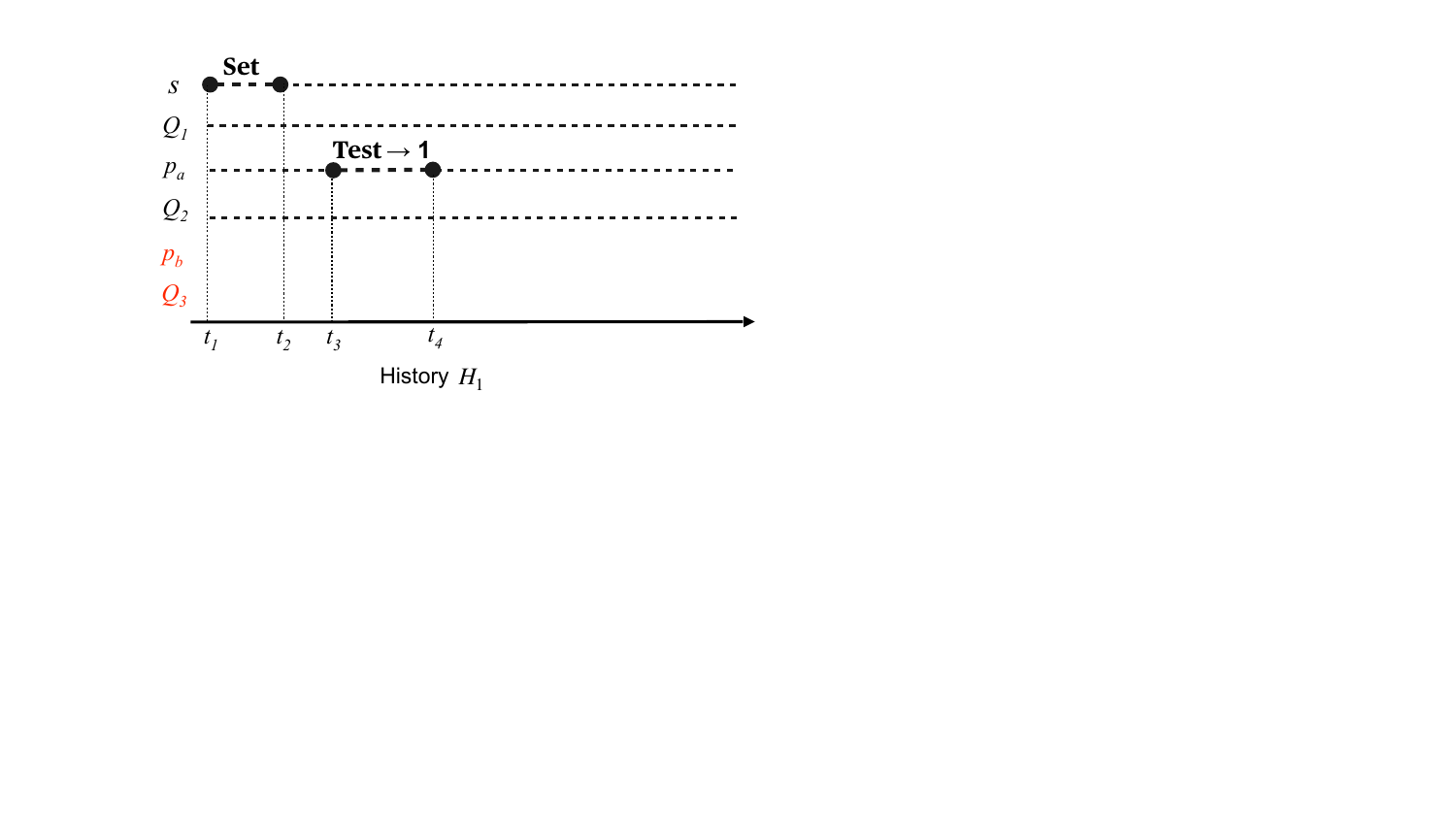}
			\label{E1}
		\endminipage
		\minipage{0.48\textwidth}
			\centering 
			\includegraphics[width=0.9\textwidth]{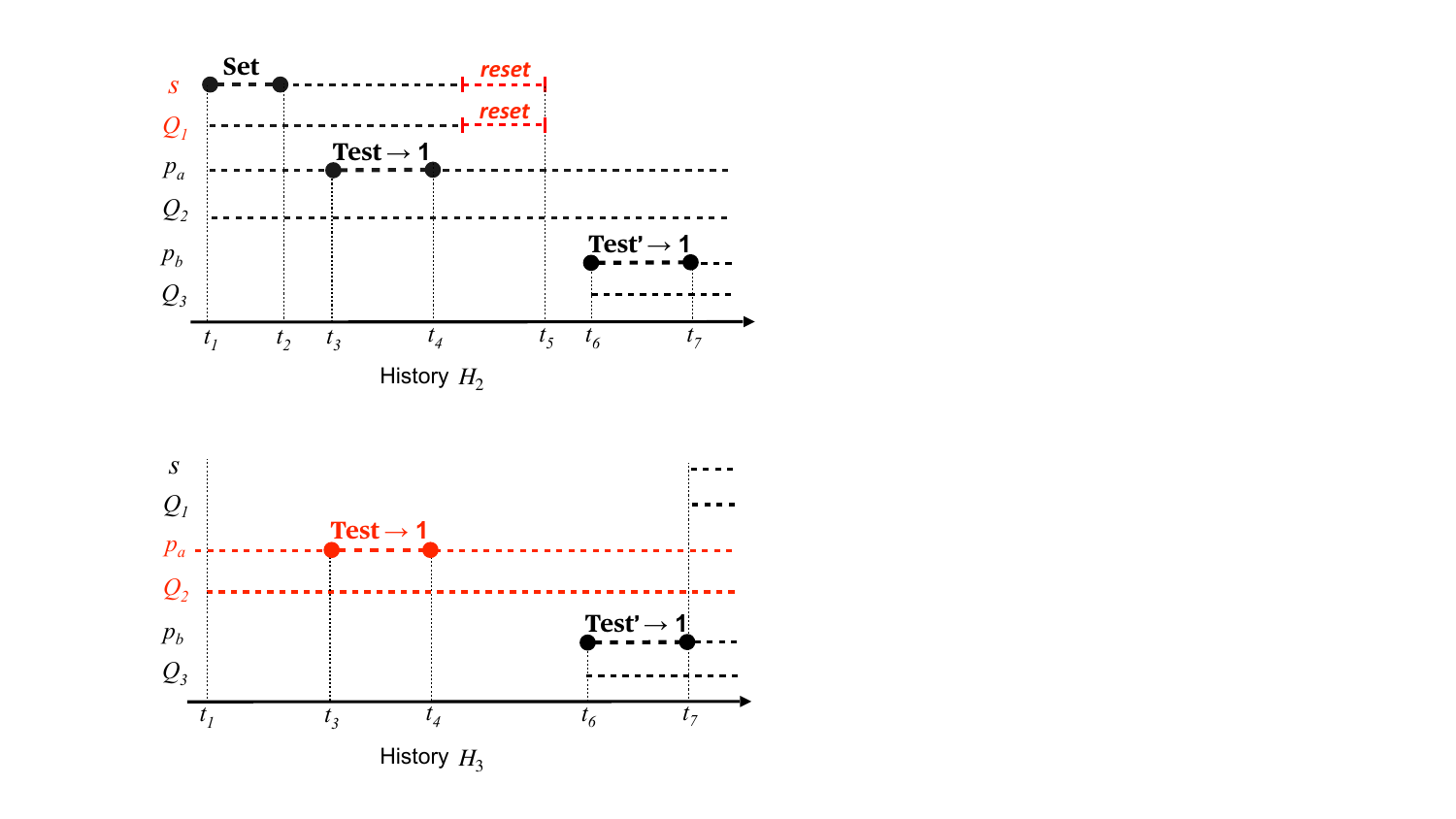}

			\label{E2}
		\endminipage 
        
            \minipage{0.48\textwidth}
			\centering 
			\includegraphics[width=0.9\textwidth]{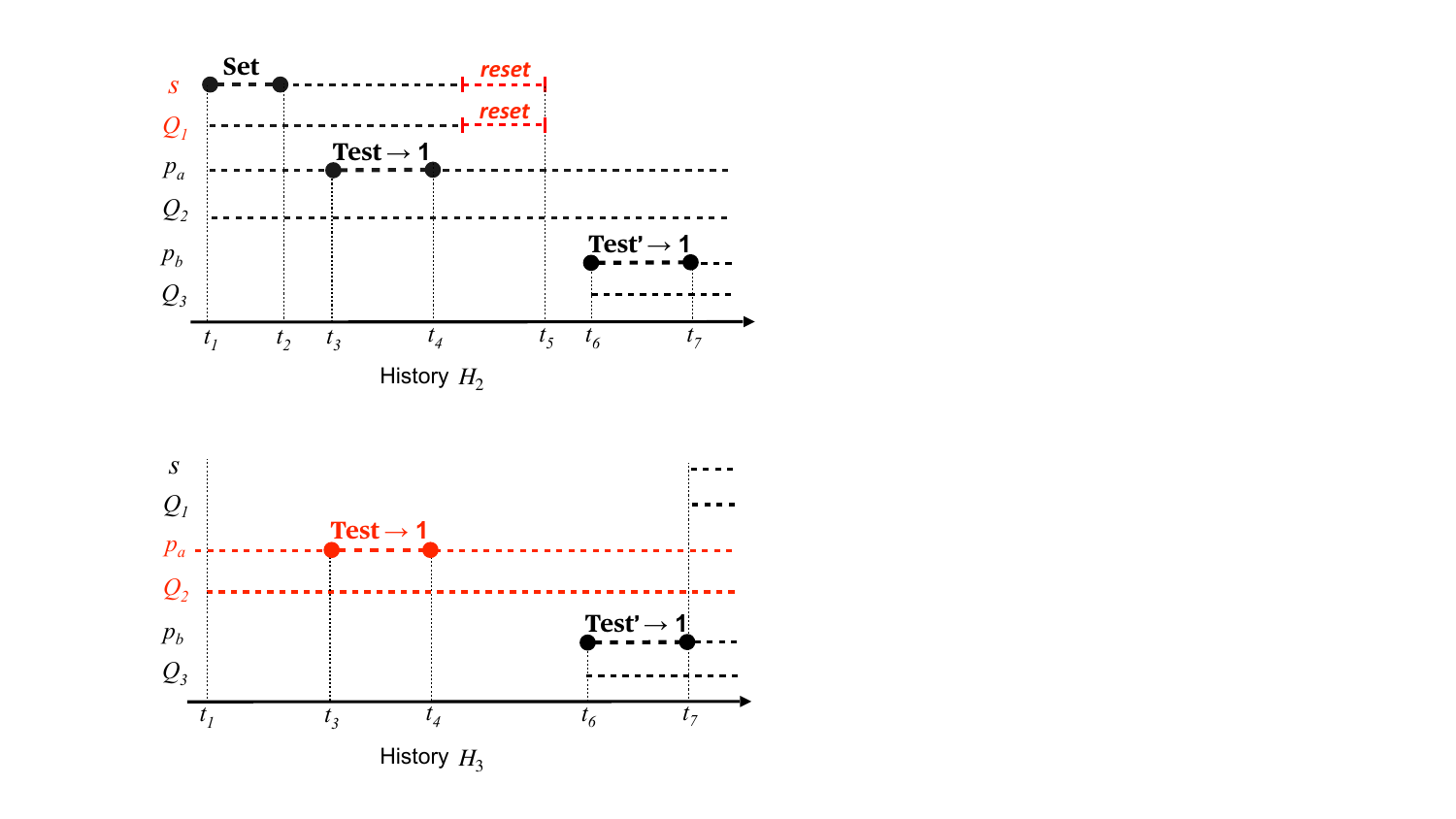}
            
			\label{E3}
		\endminipage
		\captionsetup{justification=centering}
            \captionof{figure}{Histories for the proof of Theorem~\ref{Theo-Impossibility-Result}}
		\end{figure}

We now construct a sequence of histories of the implementation $\AW$ of $R$, and prove that the last one is \emph{not} a history of a correct implementation of {\br} (see Figure~\ref{E3}).

In these figures, correct processes are in black font, 
    while Byzantine processes are colored in \textcolor{red}{red font}. 
The steps that a process takes 
    are indicated by dash lines.
An interval for a process is ``blank'' 
    if that process does not take any steps during that interval.

Let $H_1$ be the following history of the implementation $\AW$ of $R$:

\begin{itemize}
\item All the processes in $\{s,p_a\} \cup Q_1 \cup Q_2$ are correct.
\item The processes in $\{p_b\} \cup Q_3$ are Byzantine and do not take any steps (note that $|\{p_b\} \cup Q_3| \le f$).
\item The setter $s$ invokes a $\rSet$ operation at time $t_1$.

Since there are at most $f$ Byzantine processes,
    by~(\ref{TER}),
	the $\rSet$ operation by $s$ completes, say at time $t_2$.

\item The tester $p_a$ invokes a $\rTest$ operation at some time $t_3 > t_2$.

Since there are at most $f$ Byzantine processes,
    by~(\ref{TER}),
	the $\rTest$ operation completes, say at time $t_4$.
Moreover, since the setter $s$ is correct and its $\rSet$ operation precedes the $\rTest$
    operation by the correct
    tester $p_a$,
        by (\ref{BL}) and Lemma~\ref{TheSaver}(\ref{Zero}),
	this $\rTest$ must~return~$1$.

\end{itemize}

Let $H_2$ be the following history of the implementation $\AW$ of $R$:

\begin{itemize}

\item All the processes in $\{p_a,p_b\} \cup Q_2 \cup Q_3$ are correct.
\item All the processes in $\{s\} \cup Q_1$ are Byzantine (note that $|\{s\} \cup Q_1| \le f$).

\item Up to and including time $t_4$, all the processes behave exactly the same as in $H_1$.
This is possible because
    up to time $t_4$, the processes in $\{p_a\} \cup Q_2$ cannot distinguish between $H_1$ and~$H_2$.
    So $p_a$ performs a $\rTest$ that returns 1 at time $t_4$,
    while the processes in $\{p_b\} \cup Q_3$ are still ``asleep'' and take no steps.

\item After time $t_4$,
    while the processes in $\{p_b\} \cup Q_3$ continue to take no steps,
	the Byzantine processes in $\{s\} \cup Q_1$ reset all their SWMR registers to their initial states.
	Let $t_5$ denote the time when these processes complete this reset.
	The processes in $\{s\} \cup Q_1$ take no steps after~$t_5$.
So after time $t_5$ all the SWMR registers of the processes in $\{s\} \cup Q_1$ remain
    in their initial state ``as if'' these processes never took any step.
    
\item At time $t_6 > t_5$, 
	the processes in $\{p_b\} \cup Q_3$ ``wake up'' and
	process $p_b$ invokes a $\rTest'$ operation.
Since there are at most $f$ Byzantine processes,
    by~(\ref{TER}),
	the $\rTest'$ operation completes, say at time $t_7$.
Moreover, since the $\rTest$ by correct tester $p_a$ returns~$1$,
    and it precedes the $\rTest'$ by correct tester $p_b$,
    then
	by (\ref{BL}) and Lemma~\ref{TheSaver}(\ref{Due}),
	this $\rTest'$ must also return~$1$.
\end{itemize}

Finally we construct the history $H_3$ of the implementation \mbox{$\AW$ of~$R$:}

\begin{itemize}
\item All the processes in $\{p_b\} \cup Q_3$ are correct.

\item All the processes in $\{s\} \cup Q_1$ are also correct, but they are ``asleep'' and do not take any steps until time $t_7$.

\item All the processes in $\{p_a\} \cup Q_2$ are Byzantine (note that $|\{p_a\} \cup Q_2| \le f$).

\item The Byzantine processes in $\{p_a\} \cup Q_2$ behave exactly as in $H_2$.
        More precisely, they write exactly the same values into their SWMR registers at the same time as they did in $H_2$.
		
\item The processes in $\{p_b\} \cup Q_3$ behave exactly as in $H_2$ up to and including time $t_7$. 
That is: (a) they take no steps before $t_6$, and (b) at time~$t_6$, they ``wake up'' and
	process $p_b$ invokes a $\rTest'$ operation, and this operation returns 1 at time $t_7$.
This is possible because, up to time $t_7$ the processes in $\{p_b\} \cup Q_3$ cannot distinguish between $H_2$ and~$H_3$. To see why, note that in both $H_2$ and $H_3$:
    \begin{itemize}
        \item All the registers of all processes in $\{s\} \cup Q_1$ are in their initial state from the time $t_6$ when the processes in $\{p_b\} \cup Q_3$ ``wake up'' to the time $t_7$ when $p_b$'s $\rTest'$ returns 1.
        \item All the SWMR registers of the processes in $\{p_a\} \cup Q_2$ contain the same values, at the same times.
    \end{itemize}

\item At time $t_7$ the processes in $\{s\} \cup Q_1$, who are correct but were ``asleep'',
    wake up and start taking steps.

	\end{itemize}

Since $\AW$ is a correct implementation of a {\br},
    by Lemma \ref{TheSaver}(\ref{Uno}), the history $H_3$ of $\AW$ must satisfy the following property:
    if a $\rTest$ operation by a \emph{correct} tester returns $1$, and the setter is \emph{correct},
    then the setter invokes a $\rSet$ operation before the $\rTest$ returns 1.
But $H_3$ violates this property because
    the $\rTest'$ operation by the \emph{correct} tester $p_b$ returns $1$ at time $t_7$,
    and the setter $s$ is \emph{correct}, but $s$ does \emph{not} invoke a $\rSet$ operation before time $t_7$ --- a contradiction.
\end{proof}

We now show that there is a correct implementation of the {\br} object
    from a {\mar}, from an {\tar}, and also from a {\sr}.
These simple implementations are given below.

\smallskip
\textbf{Implementation of {\br} from a {\mar}.}
Let $R$ be a SWMR {\mar} initialized to $v_0=0$.
We use $R$ to implement {\br} as follows:

\begin{compactitem}
    \item To $\rSet$, the writer executes $\Set(1)$ followed by $\sign(1)$ on $R$.
    \item To $\rTest$, a reader executes $\valid(1)$ on $R$.
    If $\valid(1)$ returns $\true$,
         then the $\rTest$ returns~1;
    otherwise, the $\rTest$ returns 0.
\end{compactitem}

It is easy to see that this implementation is Byzantine linearizable.
Intuitively, the linearization point of a $\rSet$ operation is when its $\sign(1)$ occurs,
    and the linearization point of a $\rTest$ operation is when its $\valid(1)$ occurs.

\smallskip
\textbf{Implementation of {\br} from an {\tar}.}
Let $R$ be a SWMR {\tar} initialized to $v_0=0$.
We use $R$ to implement {\br} as follows:

\begin{compactitem}
    \item To $\rSet$, the writer executes $\Set(1)$ on $R$.
    \item To $\rTest$, a reader executes $\valid(1)$ on $R$.
    If $\valid(1)$ returns $\true$,
         then the $\rTest$ returns~1;
    otherwise, the $\rTest$ returns 0.
\end{compactitem}

It is easy to see that this implementation is Byzantine linearizable.
Intuively, the linearization point of a $\rSet$ operation is when its $\Set(1)$ occurs,
    and the linearization point of a $\rTest$ operation is when its $\valid(1)$ occurs.

\smallskip
\textbf{Implementation of {\br} from a {\sr}.}
Let $R$ be a SWMR {\sr} initialized to $\bot$.
 We use $R$ to implement {\br} as follows:

\begin{compactitem}
    \item To $\rSet$, the writer executes a $\Set(1)$ on $R$.
    \item To $\rTest$, a reader executes a $\Test$ on $R$.
        If $\Test$ returns 1,
         then the $\rTest$ returns 1;
    otherwise, the $\rTest$ returns 0.
\end{compactitem}

It is easy to see that this implementation is Byzantine linearizable.
Intuitively, the linearization point of a $\rSet$ operation is when its $\Set(1)$ occurs,
    and the linearization point of a $\rTest$ operation is when its $\Test$ occurs.

Note that the above implementations are \emph{wait-free}~\cite{Herlihy88}, and so:

\begin{observation}\label{isthisalemma}
In a system with $n>f$ processes, where $f$ processes can be Byzantine,
    there is a correct
    implementation of {\br} from a SWMR {\mar},
    from a SWMR {\tar},
    and from~a~SWMR~{\sr}.
\end{observation}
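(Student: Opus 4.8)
The plan is to verify, for each of the three implementations described just above, that it is wait-free --- which yields Termination --- and that it is Byzantine linearizable with respect to {\br}. Termination I would dispatch first, and it is immediate: a {\rSet} performs at most one {\Set} and one {\sign} on the underlying object, a {\rTest} at most one {\valid} or one {\Test}, and each such call returns because the underlying {\mar} (resp.\ {\tar}, {\sr}) is atomic; hence every correct process completes all of its operations, whatever the Byzantine processes do. The substance is Byzantine linearizability, which I would treat with a single template covering all three cases.

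Fix a history $H$ of the implementation and let ${\hct}$ be its set of correct processes. Since the underlying object is atomic, $H$ restricted to that object has a linearization $L_R$: a total order on its operations that respects precedence and conforms to the object's sequential specification (Definitions~\ref{def-mar}, \ref{def-tar}, \ref{def-sar}). Call the \emph{committing event} of a {\rSet} its {\sign(1)} in the {\mar} implementation and its {\Set(1)} in the {\tar} and {\sr} implementations; this event lies inside the {\rSet}'s own interval. Build the witness history $H'$ from $H|{\hct}$ by keeping the operations of the correct processes unchanged, discarding all operations of Byzantine processes, and --- when the setter is Byzantine \emph{and} $H$ contains a {\sign(1)} returning {\success} (the {\mar} case), a {\Set(1)} (the {\tar} case), or a {\Set(1)} that is the first {\Set} on $R$ in $L_R$ (the {\sr} case) --- adding one complete {\rSet} by the setter whose interval equals that of the \emph{earliest} such event. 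Finally let $L$ order the operations of $H'$ by the $L_R$-position of, for a {\rSet}, its committing event, and, for a {\rTest}, its {\valid(1)} (or, in the {\sr} implementation, its {\Test}).

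It then remains to check that $L$ is a linearization of $H'$: that it respects precedence --- routine, since each operation's linearization point lies in its own interval and $L_R$ respects precedence --- and that it conforms to the sequential specification of {\br} (Definition~\ref{def-ts}). Conformance reduces to the claim \emph{a {\rTest} returns~$1$ iff some {\rSet} precedes it in $L$}. Both directions fall out of $L_R$ conforming to the underlying object's sequential specification: a {\valid(1)} returns {\true} exactly when it is preceded in $L_R$ by a successful {\sign(1)} (the {\mar} case), or by a {\Set(1)} (the {\tar} case, since $1 \ne v_0 = 0$), and a {\Test} on $R$ returns~$1$ exactly when it is preceded in $L_R$ by a {\Set(1)} that is the first {\Set} on $R$ (the {\sr} case) --- and these are precisely the events at which the committing events of the {\rSet} operations of $L$ sit. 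Properties~(1) and~(3) of Observation~\ref{ob-ts} then follow from this characterization together with transitivity of $L$. I expect the one delicate point to be the Byzantine setter. For the {\mar} and {\tar} cases it is harmless, because Unforgeability (Observations~\ref{unforgeability} and~\ref{unforgeability-tar}) guarantees that whenever a correct {\rTest} returns~$1$ the history $H$ really does contain the successful {\sign(1)} (resp.\ the {\Set(1)}) needed to anchor the added {\rSet}. For the {\sr} case one must additionally note that if the Byzantine setter's first {\Set} on $R$ carries a value other than~$1$, then by stickiness (Observation~\ref{uniqueness-sar}) every {\Test} on $R$ returns that value or $\bot$, so every correct {\rTest} returns~$0$; then no {\rSet} is added, and $L$ --- all of whose {\rTest} operations return~$0$ --- still conforms.
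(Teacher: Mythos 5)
Your proposal is correct and follows essentially the same route as the paper, which disposes of this observation with a two-line sketch: wait-freedom is immediate, and Byzantine linearizability is obtained by linearizing each $\rSet$ at its $\sign(1)$ (resp.\ $\Set(1)$) and each $\rTest$ at its $\valid(1)$ (resp.\ $\Test$). Your write-up merely fills in the details the paper leaves implicit --- in particular the construction of the witness history when the setter is Byzantine, anchored at the earliest committing event, and the degenerate sticky-register case where the first $\Set$ carries a value other than $1$ --- all of which is sound.
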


Theorem~\ref{Theo-Impossibility-Result} and Observation~\ref{isthisalemma} imply:

\begin{theorem}\label{BIGTheo-Impossibility-Result}
In a system with $3\le n\le 3f$ processes, where $f$ processes can be Byzantine,
    there are no correct implementations
    of SWMR {\mar}s,
    of SWMR {\tar}s, or
    of SWMR {\sr}s,
    from SWMR registers.
\end{theorem}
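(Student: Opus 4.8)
\smallskip\noindent\textbf{Proof plan.} The plan is to obtain this statement as a corollary of Theorem~\ref{Theo-Impossibility-Result} and Observation~\ref{isthisalemma} by a composition (reduction) argument. Fix a system with $3 \le n \le 3f$ processes, $f$ of which can be Byzantine, and suppose for contradiction that there is a correct implementation $\mathcal I$ of one of the three registers from SWMR registers in this system; say $\mathcal I$ implements a SWMR {\mar} $R$ (the cases of {\tar}s and {\sr}s are handled identically, using the respective {\br}-implementation from Observation~\ref{isthisalemma}). By Observation~\ref{isthisalemma} there is a wait-free correct implementation $\mathcal J$ of a {\br} object from $R$ — recall that to $\rSet$ the writer of $\mathcal J$ runs $\Set(1)$ followed by $\sign(1)$ on $R$, and to $\rTest$ a reader runs $\valid(1)$ on $R$. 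Substituting $\mathcal I$ for $R$ inside $\mathcal J$ yields an implementation $\mathcal K$ of a {\br} object from SWMR registers. If $\mathcal K$ is a correct implementation of {\br}, this contradicts Theorem~\ref{Theo-Impossibility-Result} and we are done. So the entire argument reduces to showing that $\mathcal K$ is correct.

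Termination of $\mathcal K$ for correct processes is immediate: each {\br}-operation of $\mathcal K$ consists of at most two $R$-operations executed in sequence, and by correctness of $\mathcal I$ every such $R$-operation completes for each correct process, so each correct process completes its {\br}-operations. For Byzantine linearizability I would run the standard ``locality/composability'' argument, adapted to the Byzantine setting: given a history $H$ of $\mathcal K$ with correct-process set $\hct$, project $H$ onto the $R$-level to get a history $H_{\mathcal I}$ of $\mathcal I$ — note that a process that runs $\mathcal K$ correctly also runs the $\mathcal I$-procedures correctly, so the two levels share the same $\hct$ — apply correctness of $\mathcal I$ to get a history $H_{\mathcal I}'$ with $H_{\mathcal I}'|\hct = H_{\mathcal I}|\hct$ that is linearizable with respect to $R$, say via a linearization $L_{\mathcal I}$; then build a {\br}-level history $H''$ that keeps every {\br}-operation of the correct processes exactly as in $H$ (so $H''|\hct = H|\hct$) and assigns to each Byzantine process the {\br}-operations dictated by how it appears in $L_{\mathcal I}$, and linearize $H''$ by placing each $\rSet$ where its $\sign(1)$ is linearized in $L_{\mathcal I}$ and each $\rTest$ where its $\valid(1)$ is linearized in $L_{\mathcal I}$; finally verify that this order respects precedence and conforms to the sequential specification of {\br} — the last check being exactly the translation of the {\mar} Validity/Unforgeability/Relay clauses of Definition~\ref{def-mar} into Observation~\ref{ob-ts}.

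The main obstacle is this composability step in the presence of Byzantine processes: one must fix the {\br}-operations attributed to Byzantine processes consistently with their behaviour at the $R$-level, and check that stitching together the {\br}-level and $R$-level orderings does not break precedence. A lighter route that sidesteps a general composability theorem — and the one I would actually carry out — is to observe that a \emph{correct} implementation of a {\mar} already satisfies, on its histories restricted to correct processes, the Validity, Unforgeability, and Relay properties of Observations~\ref{validity}--\ref{relay} (together with the easy fact that a correct writer's $\sign(v)$ returns $\success$ whenever it is preceded by that writer's $\Set(v)$); each of these is proved exactly as in Lemma~\ref{TheSaver}, by passing from a history of $\mathcal I$ to the linearizable history $H_{\mathcal I}'$ and its linearization $L_{\mathcal I}$ and using that a linearization is a total order that respects precedence. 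These implementation-level properties of $\mathcal I$ immediately give that every history of $\mathcal K$ satisfies Termination for correct processes together with the three properties in the statement of Lemma~\ref{TheSaver}; since the proof of Theorem~\ref{Theo-Impossibility-Result} derives its contradiction using only Termination and Lemma~\ref{TheSaver}'s three properties, applying that proof with $\mathcal K$ in place of the hypothetical implementation yields the contradiction. The identical argument, with the {\br}-from-{\tar} and {\br}-from-{\sr} implementations of Observation~\ref{isthisalemma}, rules out correct implementations of {\tar}s and {\sr}s from SWMR registers when $3 \le n \le 3f$.
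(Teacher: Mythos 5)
Your proposal is correct and takes essentially the same route as the paper: the paper's entire proof of this theorem is the one-line reduction ``Theorem~\ref{Theo-Impossibility-Result} and Observation~\ref{isthisalemma} imply,'' i.e., composing a hypothetical register implementation with the {\br}-from-register constructions to contradict the impossibility of {\br}. The composability details you supply --- in particular the observation that the proof of Theorem~\ref{Theo-Impossibility-Result} uses only Termination and the three properties of Lemma~\ref{TheSaver}, so it suffices to verify those directly for the composed implementation rather than proving full Byzantine-linearizability locality --- are left implicit in the paper but are a sound way to discharge the reduction.
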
 	

For $n=2$ there is only one writer and one reader.
It is trivial to implement {\mar}s, {\tar}s, and {\sr}s from SWMR registers in this case.

\section{Conclusion}\label{conclusion}

In this paper, we propose three types of registers that provide
    some properties of unforgeable digital signatures,
    and show how to implement them from plain SWMR registers without actually using signatures.

We believe that these registers can be used to simplify the design of algorithms for
    systems for Byzantine processes by breaking the design process into two steps:
    first develop an algorithm that assumes that processes can write \emph{and sign} values with signatures into SWMR registers;
    if the signature properties that the algorithm relies on are provided
    by the properties of {\mar}s, {\tar}s, or {\sr}s,
    the algorithm can be transformed into a signature-free algorithm by using our corresponding 
    register implementations when $n>3f$.

The following are open problems: what \emph{other} properties of digital signatures can be used to solve problems in systems with registers?
Can we achieve these properties without signatures, and if so, how?
The same questions apply when digital signatures are used in systems where processes share objects \emph{other than registers.}

\section{Acknowledgement}
    We thank Myles Thiessen and the anonymous reviewers for their helpful comments on this paper.
    This work was partially supported by the NSERC Discovery award RGPIN-2022-03304.

\bibliographystyle{plain}
\bibliography{biblio}

\newpage
\appendix

\section{Correctness of the {\marc} Implementation}\label{a-mar}

In the following, we prove that Algorithm~\ref{code-mar} is a
    correct implementation of a SWMR {\mar} for systems 
    with $n >3f$ processes, $f$ of which can be Byzantine.
To do so, in the following, we consider an arbitrary (infinite) history $H$ of Algorithm~\ref{code-mar} (where $n>3f$), and prove:
    \begin{compactitem}
        \item \textsc{[Byzantine linearizability]} The history $H$ is Byzantine linearizable with respect to a SWMR {\mar} (Section~\ref{linear-mar}).
        \item \textsc{[Termination]} All processes that are correct in $H$ complete all their operations (Section \ref{wait-free-mar}).
    \end{compactitem}

\medskip\noindent
\textbf{Notation.}
Since each register $R_i$ stores a set of values, we write $v \in R_i$,
    and say that ``$v$ is in $R_i$'', if the set stored in $R_i$ contains the value $v$.

\subsection{Termination}\label{wait-free-mar}
\begin{observation}\label{monock-mar}
	For every {\ct} reader $p_k\in \{p_2,\ldots,p_n \}$, 
		the value of $C_k$ is non-decreasing.
\end{observation}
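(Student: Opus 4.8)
The plan is to observe that the claim is essentially immediate from the single-writer property of $C_k$. Since $C_k$ is a SWMR register owned by $p_k$, the only process that can modify $C_k$ is $p_k$ itself; no other process, correct or Byzantine, has write access to it. Because $p_k$ is correct, it executes its code exactly as prescribed, and the only statement in Algorithm~\ref{code-mar} in which $p_k$ writes to $C_k$ is line~\ref{ckplus-m}, namely $C_k \gets C_k+1$, which $p_k$ performs once at the beginning of each round of a $\valid(-)$ execution.

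Each execution of this statement reads the current value of $C_k$ and writes back that value plus one, so every write to $C_k$ by $p_k$ produces a value strictly larger than the value $C_k$ held immediately before. Since $C_k$ is initialized to $0$, these increments are the only writes to $C_k$, and no other process can write to $C_k$, the sequence of values that $C_k$ takes over time is non-decreasing (indeed strictly increasing across successive writes). There is no real obstacle here: the observation follows directly from the single-writer nature of $C_k$ together with the fact that a correct $p_k$ only ever increments it in line~\ref{ckplus-m}.
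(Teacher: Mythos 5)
Your proof is correct and matches the paper's reasoning: the paper states this as an unproved observation precisely because, as you note, $C_k$ is a SWMR register writable only by the correct $p_k$, whose sole write to it is the increment at line~\ref{ckplus-m}.
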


\begin{observation}\label{correctonlywrites1-mar}

For every {\ct} process $p_i\in \{p_1,\ldots,p_n\}$,
	if $v \in R_i$ at time $t$, then $v \in R_i$ at all times $t' \ge t$.
\end{observation}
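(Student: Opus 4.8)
The plan is to prove the claim by a direct inspection of Algorithm~\ref{code-mar}, using the fact that $R_i$ is a \emph{single-writer} register owned by $p_i$: no process other than $p_i$ can write into $R_i$, so it suffices to examine the steps taken by $p_i$ itself, and since $p_i$ is correct it executes its code exactly as prescribed.

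First I would enumerate every line of Algorithm~\ref{code-mar} that assigns to a register of the form $R_j$ with $j \in \{1,\ldots,n\}$. There are only two such lines: line~\ref{setter1-m} in the $\sign(-)$ procedure (executed only by the writer $p_1$), which performs $R_1 \gets R_1 \cup \{v\}$, and line~\ref{follow-m} in the $\fresh()$ procedure (executed by every process $p_j$), which performs $R_j \gets R_j \cup \{v\}$. (Note that line~\ref{r-m} writes $R^*$, not $R_i$, and lines~\ref{addtoset-m}, \ref{readrj-m}, etc.\ touch only local variables or are reads.) In both relevant cases the update has the form $R_j \gets R_j \cup \{\cdot\}$, so it only \emph{adds} elements to the set currently stored in $R_j$ and never removes any.

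Therefore, between consecutive steps of $p_i$ that write $R_i$ the stored set can only grow, and at all other times it is unchanged, because $R_i$ is single-writer and $p_i$ takes no other writing steps on it. Consequently, if $v$ is in the set stored in $R_i$ at time $t$, it is still in $R_i$ at every time $t' \ge t$. There is essentially no obstacle here beyond a careful walk through the code; the only point that needs to be stated explicitly is the single-writer property of $R_i$, which is what lets us conclude that the monotonicity of $p_i$'s own updates yields monotonicity of $R_i$ globally.
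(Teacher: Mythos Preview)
Your proposal is correct and matches the paper's approach: the paper states this as an Observation without proof, treating it as immediate from inspection of Algorithm~\ref{code-mar}, and your argument---enumerating the two assignments to $R_j$ (lines~\ref{setter1-m} and~\ref{follow-m}), noting both have the form $R_j \gets R_j \cup \{\cdot\}$, and invoking the single-writer property---is exactly the justification the reader is expected to supply.
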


\begin{observation}\label{monoset1-mar}
	Let $\valid(-)$ be an operation by a {\ct} reader $p_k$.
	In $\valid(-)$,
	 $|\set_1|$ of $p_k$ is non-decreasing.
\end{observation}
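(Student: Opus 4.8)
\textbf{Proof plan for Observation~\ref{monoset1-mar}.}
The plan is to prove this by direct inspection of the $\valid(-)$ procedure in Algorithm~\ref{code-mar}. Since $\set_1$ is a local variable of that procedure and $p_k$ is a correct reader, $p_k$ executes the procedure exactly as written, and no other process (and in particular no Byzantine process) can alter $p_k$'s local state; moreover the $\fresh()$ procedure does not touch $\set_1$ at all. So the entire argument reduces to locating every assignment to $\set_1$ inside $\valid(-)$.

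Scanning the pseudocode, $\set_1$ is written in exactly two places: the initialization $\set_0,\set_1 \gets \emptyset$ that occurs once before the while loop of line~\ref{whileloop-m}, and line~\ref{set1-m}, which performs $\set_1 \gets \set_1 \cup \{p_j\}$ inside the loop body. The first assignment only fixes the starting value; every later assignment is a union with a singleton and therefore can only add an element, never remove one. Hence, between any two times $t \le t'$ during the execution of $\valid(-)$, the set $\set_1$ at time $t$ is contained in the set $\set_1$ at time $t'$, and consequently $|\set_1|$ is non-decreasing. Formally one can phrase this as a trivial induction on the steps that $p_k$ takes during $\valid(-)$: the inductive step is immediate because each step either leaves $\set_1$ unchanged or replaces it by $\set_1 \cup \{p_j\} \supseteq \set_1$.

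There is essentially no obstacle here; the only thing to be careful about is completeness of the case check — in particular noting that line~\ref{empty0-m} resets $\set_0$ (not $\set_1$) and so does not affect $|\set_1|$, and that the loop's other branches and the repeat--until block read $R_{jk}$ but do not modify $\set_1$. This observation will then be used downstream together with the analogous (but subtler) monotonicity facts and the ``at least one correct process outside $\set_0 \cup \set_1$'' invariant to establish termination of $\valid(-)$ for correct readers.
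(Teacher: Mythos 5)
Your proof is correct and matches the paper's (implicit) justification: the paper states this as an unproved Observation precisely because, as you note, $\set_1$ is only ever modified by the initialization and by the union with $\{p_j\}$ at line~\ref{set1-m}, so $|\set_1|$ can only grow. Nothing further is needed.
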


	\begin{lemma}\label{asker1-mar}
		Suppose there is a time $t$ such that
		$v\in R_1$ at all times $t' \ge t$, or at least $f+1$ processes $p_i$ have $v \in R_i$ at all times $t' \ge t$.
		Consider any iteration of the while loop of the $\fresh()$ procedure executed by a {\ct} process $p_j$.
		If $p_j$ inserts $p_k$ into $askers$ in line~\ref{askers-m} at some time $t_a \ge t$, 
		then $p_j$ writes $\langle r_j , - \rangle$ with $v\in r_j$ into $R_{jk}$ at line~\ref{fresh1-m}.
		\end{lemma}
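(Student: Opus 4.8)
The plan is to trace the single iteration of the \textbf{while} loop of the \fresh() procedure in which the \ct{} process $p_j$ inserts $p_k$ into $askers$ (at line~\ref{askers-m}), and to show that by the time $p_j$ reaches line~\ref{fresh1-m} with this $p_k$, its local variable $r_j$ satisfies $v \in r_j$. First I would observe that, since $p_j$ inserts $p_k$ into $askers$, we have $askers \ne \emptyset$ at line~\ref{replyasker-m}, so $p_j$ executes the entire body of the \textbf{if}, in particular lines~\ref{sets-m}, \ref{followcondition-m}--\ref{readrj-m}, and \ref{fresh1-m}. Because $p_j$ is \ct{} and the reads of the registers $R_i$ in line~\ref{sets-m} all occur after line~\ref{askers-m}, they occur at times $\ge t_a \ge t$.

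The key step is to use the hypothesis to argue that, immediately after line~\ref{sets-m}, the value $v$ satisfies the guard ``$v \in r_1$ or $|\{ r_i \mid v \in r_i \}| \ge f+1$'' of the \textbf{for} loop of line~\ref{followcondition-m}. I would split into the two cases of the hypothesis. If $v \in R_1$ at all times $t' \ge t$, then in particular $v \in R_1$ when $p_j$ reads $R_1$ into $r_1$ (a time $\ge t$), so $v \in r_1$. Otherwise, there is a fixed set $S$ of at least $f+1$ processes $p_i$ such that $v \in R_i$ at all times $t' \ge t$; since each of $p_j$'s reads in line~\ref{sets-m} occurs at a time $\ge t$, $p_j$ sets $r_i$ with $v \in r_i$ for every $p_i \in S$, hence $|\{ r_i \mid v \in r_i \}| \ge f+1$. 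In either case $v$ satisfies the guard, so the \textbf{for} loop of line~\ref{followcondition-m} is entered at least once for the value $v$.

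Finally, I would conclude: in the iteration of this \textbf{for} loop that processes $v$, process $p_j$ executes line~\ref{follow-m}, performing $R_j \gets R_j \cup \{v\}$, after which $v \in R_j$; by Observation~\ref{correctonlywrites1-mar} applied to the \ct{} process $p_j$, $v \in R_j$ at all subsequent times. Hence every later execution of line~\ref{readrj-m} in this iteration (in particular the one in the iteration that processes $v$, and any subsequent one) sets $r_j \gets R_j$ with $v \in R_j$, so after the \textbf{for} loop terminates we have $v \in r_j$; and since $r_j$ is not modified between the last execution of line~\ref{readrj-m} and line~\ref{fresh1-m}, $p_j$ writes $R_{jk} \gets \langle r_j, c_k \rangle$ with $v \in r_j$ at line~\ref{fresh1-m} for this $p_k \in askers$, which is exactly the claim. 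I expect the only delicate point to be the bookkeeping in this last paragraph --- ensuring that the \textbf{for} loop over values is entered at all (which is why the previous paragraph is needed), that $v$ remains in $r_j$ regardless of the order in which the values satisfying the guard of line~\ref{followcondition-m} are processed (which follows from $R_j$ being non-decreasing for a \ct{} process), and that $r_j$ is untouched between line~\ref{readrj-m} and line~\ref{fresh1-m}.
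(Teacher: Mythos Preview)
Your proposal is correct and follows essentially the same approach as the paper: both trace the iteration, observe that the reads in line~\ref{sets-m} occur after time $t$, split on the two cases of the hypothesis to show the guard of line~\ref{followcondition-m} holds for $v$, and conclude that $v$ ends up in $r_j$ before line~\ref{fresh1-m}. Your version is slightly more careful about the bookkeeping (invoking Observation~\ref{correctonlywrites1-mar} and tracking $r_j$ through the \textbf{for} loop), but the argument is the same.
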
 
	
	\begin{proof}
		Suppose there is a time $t$ such that
		$v\in R_1$ at all times $t' \ge t$, or at least $f+1$ processes $p_i$ have $v \in R_i$ at all times $t' \ge t$.
		Consider any iteration of the while loop of the $\fresh()$ procedure executed by a {\ct} process $p_j$.
		Suppose $p_j$ inserts $p_k$ into $askers$ in line~\ref{askers-m} at some time $t_a \ge t$.
        In line~\ref{sets-m}, $p_j$ reads every $R_i, 1\le i \le n$, after time~$t$.
        There are two cases:
        \begin{enumerate}
             \item $v \in R_1$ at all times $t' \ge t$.
                Then $p_j$ reads $R_1$ with $v\in R_1$ at line~\ref{sets-m} after time $t$.
                So $p_j$ finds that the condition $v \in r_1$ holds in line~\ref{followcondition-m}.
            \item At least $f+1$ processes $p_i$ have $v \in R_i$ at all times $t' \ge t$.
                Then there are at least $f+1$ processes $p_i$ such that $p_j$ reads $R_i$ with $v\in R_i$ at line~\ref{sets-m} after time $t$.
                So $p_j$ finds that the condition $|\{ r_i|~ v \in r_i\}| \ge f+1$ holds in line~\ref{followcondition-m}.
        \end{enumerate}
        In both cases, $p_j$ finds that the condition $v \in r_1$ or $|\{ r_i|~ v \in r_i\}| \ge f+1$ holds in line~\ref{followcondition-m} and so inserts $v$ into $R_j$ at line~\ref{follow-m}.
        Therefore, $p_j$ writes $\langle r_j , - \rangle$ with $v\in r_j$ into $R_{jk}$ at line~\ref{fresh1-m}.
    \end{proof}

	\begin{lemma}\label{ri1beforeset1-mar}
		Let $\valid(v)$ be an operation by a {\ct} reader $p_k$.
		If a {\ct} process $p_j$ is in $\set_1$ of $p_k$ in $\valid(v)$ at time $t$,
			then $v \in R_j$ at all times $t' \ge t$.
		\end{lemma}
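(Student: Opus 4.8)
The plan is to trace backwards the chain of shared-register accesses that caused $p_k$ to insert $p_j$ into $\set_1$, and then to invoke the monotonicity of $R_j$ for correct processes (Observation~\ref{correctonlywrites1-mar}). First I would note that in $\valid(v)$ process $p_k$ only ever adds elements to $\set_1$ (at line~\ref{set1-m}) and never removes them, so (consistent with Observation~\ref{monoset1-mar}) if $p_j \in \set_1$ of $p_k$ at time $t$, then $p_k$ executed line~\ref{set1-m} adding $p_j$ at some time $t_4 \le t$.

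Next I would unwind that step. Since $p_k$ is correct, reaching line~\ref{set1-m} means it took the branch at line~\ref{check1-m}; i.e., it had just read a tuple $\langle r_j, c_j \rangle$ from $R_{jk}$ at line~\ref{readri-m}, at some time $t_3 \le t_4$, with $v \in r_j$. Because $R_{jk}$ is an SWSR register written only by $p_j$, and $p_j$ is correct, this tuple was written by $p_j$ at line~\ref{fresh1-m} at some time $t_2 \le t_3$ (it is not the initial value $\langle \emptyset, 0 \rangle$, since $v \in r_j$). Moreover, in the iteration of the $\fresh()$ while loop that reaches line~\ref{fresh1-m}, the set $r_j$ written there was obtained by reading $R_j$ directly (at line~\ref{sets-m}, and possibly refreshed at line~\ref{readrj-m}) at some time $t_1 \le t_2$; hence $v \in R_j$ at time $t_1$. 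Since $p_j$ is correct, Observation~\ref{correctonlywrites1-mar} gives $v \in R_j$ at every time $t' \ge t_1$, and since $t \ge t_4 \ge t_3 \ge t_2 \ge t_1$, we conclude $v \in R_j$ at all times $t' \ge t$.

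The argument is essentially bookkeeping, so I do not expect a genuine obstacle; the one point that deserves care is justifying that the local set $r_j$ that $p_j$ stores into $R_{jk}$ is a faithful snapshot of $R_j$ taken no later than that write --- so that $v \in r_j$ really forces $v \in R_j$ at the read time --- after which lining up the times $t_1 \le t_2 \le t_3 \le t_4 \le t$ and applying Observation~\ref{correctonlywrites1-mar} completes the proof.
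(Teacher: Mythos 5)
Your proof is correct and follows essentially the same route as the paper's: trace the insertion of $p_j$ into $\set_1$ back to $p_k$'s read of $R_{jk}$ at line~\ref{readri-m}, then (since $R_{jk}$ is SWSR, correct $p_j$ is its only writer, and the initial value is $\langle\emptyset,0\rangle$) back to $p_j$'s write at line~\ref{fresh1-m}, whose payload $r_j$ is a copy of $R_j$ read at line~\ref{sets-m}/\ref{readrj-m}, and finally apply Observation~\ref{correctonlywrites1-mar}. The one point you flag as needing care --- that the local $r_j$ written into $R_{jk}$ is a snapshot of $R_j$ taken in the same iteration, no later than the write --- is exactly the step the paper justifies by citing lines~\ref{readrj-m} and~\ref{fresh1-m}, so no gap remains.
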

		\begin{proof}		
			Let $\valid(v)$ be an operation by a {\ct} reader $p_k$.
			Consider any time $t$ and any {\ct} process $p_j \in\set_1$ at time $t$.  
			Consider the iteration of the while loop at line~\ref{whileloop-m}
                of $\valid(v)$ in which 
				$p_k$ inserts $p_j$ into $\set_1$ at line~\ref{set1-m},
				say at time $t_k^1 \le t$.
			By lines~\ref{readri-m} and \ref{check1-m}, 
				$p_k$ reads $\langle r, - \rangle$ with $v\in r$ from $R_{jk}$ at line~\ref{readri-m},
				say at time $t_k^0 < t_k^1$.
			Since $p_j$ is {\ct} and $R_{jk}$ is initialized to $\langle \emptyset, - \rangle$, 
				$p_j$ wrote $\langle r, - \rangle$ with $v\in r$ into $R_{jk}$ at line~\ref{fresh1-m} by time $t_k^0$.
			By line~\ref{readrj-m} and line~\ref{fresh1-m},
				$v \in R_j$ by time $t_k^0$.
			Since $t_k^0 < t_k^1$ and $t_k^1 \le t$,
				$t_k^0 < t$.
			Since $p_j$ is {\ct},
				by Observation~\ref{correctonlywrites1-mar},
				$v \in R_j$ at all times $t' \ge t$.
		\end{proof}

\begin{lemma}\label{onecorrectset0-mar}
	Let $\valid(v)$ be an operation by a {\ct} reader $p_k$.
	For each iteration of the while loop at line~\ref{whileloop-m} of $\valid(v)$, 
		the following loop invariants hold at line~\ref{whileloop-m}:
	\begin{enumerate}[(1)]
             \item\label{inv0-m} $\set_1$ and $\set_0$ are disjoint.
		\item\label{inv1-m} $|\set_1| < n-f$ and $|\set_0| \le f$.
		\item\label{inv2-m} If there are at least $f+1$ {\ct} processes in $\set_1$,
				there is no {\ct} process in $\set_0$.
	\end{enumerate}
\end{lemma}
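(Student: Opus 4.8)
The plan is to establish the three claims simultaneously as loop invariants, by induction on the number of completed iterations of the \textbf{while} loop at line~\ref{whileloop-m} in this $\valid(v)$ operation. For the base case, note that just before the first iteration $\set_0 = \set_1 = \emptyset$, so (1) is immediate, (2) holds because $0 \leq f$ and $0 < n-f$ (since $n > 3f$ gives $n - f \geq 2f+1 \geq 1$), and (3) holds vacuously since $\set_1$ contains no correct process.

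For the inductive step, I would assume the three invariants hold at line~\ref{whileloop-m} at the start of an iteration and show they hold again \emph{if} the iteration completes without returning (so control is back at line~\ref{whileloop-m}). In such an iteration, $p_k$ increments $C_k$ at line~\ref{ckplus-m} and, upon exiting the repeat-until loop, holds a tuple $\langle r_j, c_j \rangle$ read at line~\ref{readri-m} from $R_{jk}$ for some $p_j \notin \set_1 \cup \set_0$ with $c_j \geq C_k$; the argument then splits on whether $v \in r_j$. If $v \in r_j$, then line~\ref{set1-m} adds $p_j$ to $\set_1$ and line~\ref{empty0-m} resets $\set_0$ to $\emptyset$: now (1) holds since $\set_0 = \emptyset$; (2) holds since control passed line~\ref{return1-m} without returning, so $|\set_1| < n-f$, while $|\set_0| = 0 \leq f$; and (3) holds since $\set_0 = \emptyset$ has no correct process. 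If $v \notin r_j$, then line~\ref{set0-m} adds $p_j$ to $\set_0$ and leaves $\set_1$ unchanged: (1) is preserved because $p_j \notin \set_1 \cup \set_0$ before the insertion and the two sets were disjoint by the inductive hypothesis; and (2) is preserved because $\set_1$ is unchanged and control passed line~\ref{return0-m} without returning, so $|\set_0| \leq f$.

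The hard part will be invariant (3) in the case $v \notin r_j$. Since $\set_1$ is unchanged, I would suppose it contains at least $f+1$ correct processes and show that the newly inserted $p_j$ must be Byzantine; combined with the inductive hypothesis (the old $\set_0$ had no correct process) this gives that the new $\set_0$ still has none. Assume, for contradiction, that $p_j$ is correct, and let $t^*$ be the time at which the current iteration started. By Observation~\ref{monoset1-mar}, $\set_1$ is non-decreasing, so those $\geq f+1$ correct processes already belonged to $\set_1$ at time $t^*$; hence by Lemma~\ref{ri1beforeset1-mar}, at every time $\geq t^*$ at least $f+1$ processes $p_i$ have $v \in R_i$. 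On the other hand, $p_k$ read $\langle r_j, c_j \rangle$ with $c_j \geq C_k$ from $R_{jk}$; since $p_j$ is correct and $C_k$ is non-decreasing (Observation~\ref{monock-mar}), $p_j$ must have produced this tuple in an iteration of its $\fresh()$ procedure whose read of $C_k$ at line~\ref{collectck-m} occurred no earlier than $p_k$'s increment of $C_k$ at line~\ref{ckplus-m} of the current iteration --- so in that $\fresh()$ iteration $p_j$ inserted $p_k$ into $askers$ at line~\ref{askers-m} at some time $t_a \geq t^*$ and then wrote $\langle r_j, c_j \rangle$ into $R_{jk}$ at line~\ref{fresh1-m}. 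Applying Lemma~\ref{asker1-mar} with the time $t^*$ (its hypothesis holds because $\geq f+1$ processes have $v \in R_i$ at all times $\geq t^*$, and $t_a \geq t^*$) then yields that $p_j$ writes $\langle r_j, - \rangle$ with $v \in r_j$ into $R_{jk}$ --- contradicting $v \notin r_j$. Hence $p_j$ is Byzantine and invariant (3) is preserved. I expect the only delicate point to be exactly this timestamp bookkeeping --- ensuring that the $\fresh()$ iteration that produced the tuple $p_k$ read happened after $t^*$, so that Lemma~\ref{asker1-mar} applies --- while the rest is routine case analysis.
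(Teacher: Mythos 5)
Your proposal is correct and follows essentially the same route as the paper's proof: induction over iterations, with (1) and (2) handled by the disjointness of the newly inserted process and the non-return at lines~\ref{return1-m}--\ref{return0-m}, and the crux of (3) resolved by combining Lemma~\ref{ri1beforeset1-mar} with the timestamp argument ($c_j \ge C_k \ge 1$ forces the helper's read of $C_k$, and hence its insertion of $p_k$ into $askers$, to occur after the iteration began) so that Lemma~\ref{asker1-mar} contradicts $v \notin r_j$. The only cosmetic difference is that you anchor the argument at the start time $t^*$ of the iteration rather than at the insertion time of the last member of $\set_1$ as the paper does; both choices are valid.
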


\begin{proof}
	Let $\valid(v)$ be an operation by a {\ct} reader $p_k$.
    We now prove the invariants by induction on the number of iterations of the while loop at line~\ref{whileloop-m} of $\valid(v)$.
    \begin{itemize}
    \item Base Case: 
        Consider the first iteration of the while loop at line~\ref{whileloop-m} of $\valid(v)$.
        Since $p_k$ initializes $\set_1$ and $\set_0$ to $\emptyset$,
            (\ref{inv0-m}) holds trivially.
        Furthermore,
            $p_k$ has $|\set_1| = 0$ and $|\set_0| = 0 $.
        Since $n \ge 3f+1$ and $f \ge 1$,
            $|\set_1| = 0 < n-f$ and $|\set_0|= 0 \le f$ and so (\ref{inv1-m}) holds.
        Since $\set_0$ is empty,
        (\ref{inv2-m}) holds trivially.
	
    \item Inductive Case:
        Consider any iteration $I$ of the while loop at line~\ref{whileloop-m} of $\valid(v)$.
        Assume that (\ref{inv0-m}), (\ref{inv1-m}), and (\ref{inv2-m}) hold at the beginning of the iteration.
			
        If $p_k$ does not find that the condition of line~\ref{until} holds in $I$,
            $p_k$ does not move to the next iteration of the while loop,
            and so (\ref{inv0-m}), (\ref{inv1-m}), and (\ref{inv2-m}) trivially hold at the start of the ``next iteration'' (since it does not exist).
        Furthermore, if $\valid(v)$ returns at line~\ref{return1-m} or line~\ref{return0-m} in $I$,
            (\ref{inv0-m}), (\ref{inv1-m}), and (\ref{inv2-m}) trivially hold
            at the start of the ``next iteration'' (since it does not exist).

        We now consider any iteration $I$ of the while loop at line~\ref{whileloop-m} of $\valid(v)$
            in which $p_k$ finds the condition of line~\ref{until} holds 
            and $\valid(v)$ does not return at line~\ref{return1-m} or line~\ref{return0-m}.
        We first show that (\ref{inv0-m}) 
            remains true at the end of $I$.
        Since $p_k$ finds the condition of line~\ref{until} holds,
            $p_k$ inserts a process, say $p_j$, into $\set_1$ or $\set_0$ in $I$.
        By line~\ref{until},
            $p_j\not\in \set_1\cup\set_0$.
        Note that $p_j$ is the only process that $p_k$ inserts into $\set_1$ or $\set_0$ in $I$.
        So since $\set_1$ and $\set_0$ are disjoint at the beginning of $I$ and $p_j\not\in \set_1\cup\set_0$,
             $\set_1$ and $\set_0$ are disjoint at the end of $I$.
        Therefore (\ref{inv0-m}) holds at the end of $I$.

        We now show that (\ref{inv1-m}) remains true at the end of $I$.
        Since $\valid(v)$ does not return at line~\ref{return1-m} or line~\ref{return0-m},
            $p_k$ finds that $|\set_1| < n-f$ holds at line~\ref{return1-m} and $|\set_0| \le f$ holds at line~\ref{return0-m}.
        Since $p_k$ does not change $\set_1$ after line~\ref{return1-m} and does not change $\set_0$ after line~\ref{return0-m} in $I$,
                $|\set_1| < n-f$ and $|\set_0| \le f$ remain true at the end of $I$ and so (\ref{inv1-m}) holds.

        We now show that (\ref{inv2-m}) remains true at the end of $I$.
        There are two cases:
        \begin{enumerate}
            \item Case 1: $p_k$ executes line~\ref{set1-m} in $I$.
                Then $p_k$ changes $\set_0$ to $\emptyset$ at line~\ref{empty0-m}.
                Since $p_k$ does not change $\set_0$ after line~\ref{empty0-m} in $I$,
                    $\set_0$ remains $\emptyset$ at the end of $I$.
                So (\ref{inv2-m}) holds trivially at the end of $I$.

        \item Case 2: $p_k$ does not execute line~\ref{set1-m} in $I$.
                So $p_k$ does not change $\set_1$ in $I$.
                By assumption,
                    $p_k$ finds the condition of line~\ref{until} holds in $I$.
                    So $p_k$ executes line~\ref{set0-m} in $I$.
                Let $p_a$ be the process that $p_k$ inserts into $\set_0$ in line~\ref{set0-m}.
                There are two cases:
                \begin{itemize}
                \item Case 2.1: $p_a$ is faulty. 
                    Then the number of {\ct} processes in $\set_0$ does not change in $I$.
                    So, since $\set_1$ also does not change in $I$,
                    (\ref{inv2-m}) remains true at the end of $I$.

                \item Case 2.2: $p_a$ is {\ct}. 
                There are two cases:

                \begin{itemize}
                \item Case 2.2.1: There are fewer than $f+1$ {\ct} processes in $\set_1$ at the beginning of $I$.
                    Since $p_k$ does not change $\set_1$ in $I$, 
                    (\ref{inv2-m}) remains true at the end of $I$.

                \item Case 2.2.2: There are at least $f+1$ {\ct} processes in $\set_1$ at the beginning of $I$.
                    We now show that this case is impossible.

                    Let $p_b$ be the last process that $p_k$ inserted into $\set_1$ before the iteration $I$;
                    say $p_b$ was inserted into $\set_1$ at time $t_k^0$.
                    Since there are at least $f+1$ {\ct} processes in $\set_1$ at the beginning of $I$
                        and $p_b$ is the last process that $p_k$ inserted into $\set_1$ before $I$,
                        there are at least $f+1$ {\ct} processes in $\set_1$ of $p_k$ in $\valid(v)$ at time $t_k^0$.
                    By Lemma~\ref{ri1beforeset1-mar},
                        at least $f+1$ {\ct} processes $p_i$ have $v \in R_i$ at all times $t'\ge t_k^0$~($\star$).

                    Recall that $p_k$ inserts the correct process $p_a$ into $\set_0$ in the iteration $I$.
                    So in $I$, $p_k$ increments $C_k$ at line~\ref{ckplus-m}, say at time  $t_k^1$.
                    Note that $t_k^1 > t_k^0$. 
                    Let $c^*$ be the value of $C_k$ after $p_k$ increments $C_k$ at time~$t_k^1$.
                    Since $C_k$ is initialized to 0,
                        by Observation~\ref{monock-mar}, $c^* \ge 1$.
                    By line~\ref{readri-m}, line~\ref{until}, and line~\ref{notv-mar},
                        $p_k$ reads $\langle r, c \rangle$ with $v\not\in r$ and $c \ge c^*$ from $R_{ak}$ at line~\ref{readri-m} in $I$.
                    Since $c \ge c^* \ge 1$ and $R_{ak}$ is initialized to $\langle \emptyset,0\rangle$,
                        it must be that $p_a$ wrote $\langle r, c \rangle$ with $v\not\in r$ into $R_{ak}$ at line~\ref{fresh1-m} (**).

                    Consider the iteration of the while loop of the $\fresh()$ procedure in which $p_a$ writes $\langle r, c \rangle$ with $v\not\in r$ into $R_{ak}$ at line~\ref{fresh1-m}.
                    Note that $c$ is the value that $p_a$ read from $C_k$ in line~\ref{collectck-m} of this iteration;
                         say this read occurred at time $t_a^1$.
                    Since $c \ge c^*$,
                        by Observation~\ref{monock-mar},
                        $t_a^1 \ge t_k^1$.
                    Then $p_a$ inserts $p_k$ into $askers$ at line~\ref{askers-m}, say at time $t_a^2 > t_a^1$.
                    Since $t_a^1 \ge t_k^1$ and $t_k^1 > t_k^0$,
                        $t_a^2 > t_k^0$.
                    So $p_a$ inserts $p_k$ into $askers$ at line~\ref{askers-m} after $t_k^0$.
                    Thus, by ($\star$) and Lemma~\ref{asker1-mar},
                        $p_a$ writes $\langle r , c \rangle$ with $v\in r$ into $R_{ak}$ at line~\ref{fresh1-m} in this iteration,
                        a contradiction to (**). 
                    So this case is impossible.

                \end{itemize}
                \end{itemize}
				\end{enumerate}
				So in all the possible cases, we showed that  (\ref{inv0-m}), (\ref{inv1-m}), and (\ref{inv2-m}) remain true at the end of the iteration.
			\end{itemize}
	\end{proof}

    \begin{lemma}\label{correctoutside-mar}
    Let $\valid(v)$ be an operation by a {\ct} reader $p_k$.
            Every time when $p_k$ executes line~\ref{whileloop-m} of $\valid(v)$, 
    	there is a {\ct} process $p_i \not \in \set_1\cup\set_0$.
    \end{lemma}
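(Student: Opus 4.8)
The plan is to derive this directly from the loop invariants established in Lemma~\ref{onecorrectset0-mar}, together with a simple counting argument on the correct processes. Fix a $\valid(v)$ operation by a correct reader $p_k$ and consider any moment at which $p_k$ is about to execute line~\ref{whileloop-m} (i.e., the start of some iteration of the while loop). By Lemma~\ref{onecorrectset0-mar}, at this point: (i) $\set_1$ and $\set_0$ are disjoint, (ii) $|\set_1| < n-f$ and $|\set_0| \le f$, and (iii) if $\set_1$ contains at least $f+1$ correct processes then $\set_0$ contains no correct process. Since at most $f$ processes are faulty, at least $n-f$ processes are correct; and since $n > 3f$, we have $n-f \ge 2f+1$. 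I would then split into two cases according to how many correct processes lie in $\set_1$.

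First, suppose $\set_1$ contains at most $f$ correct processes. Then the number of correct processes in $\set_1 \cup \set_0$ is at most $f + |\set_0| \le f + f = 2f$ (using disjointness of $\set_1$ and $\set_0$, and $|\set_0|\le f$ from invariant (ii)). Since there are at least $n-f \ge 2f+1$ correct processes, at least one correct process lies outside $\set_1 \cup \set_0$, as required.

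Second, suppose $\set_1$ contains at least $f+1$ correct processes. Then by invariant (iii), $\set_0$ contains no correct process, so the correct processes in $\set_1 \cup \set_0$ are exactly those in $\set_1$, and their number is at most $|\set_1| < n-f$ by invariant (ii). Since there are at least $n-f$ correct processes, again at least one correct process lies outside $\set_1 \cup \set_0$. In both cases we obtain a correct $p_i \notin \set_1 \cup \set_0$, completing the proof.

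\textbf{Main obstacle.} There is no serious technical difficulty here once Lemma~\ref{onecorrectset0-mar} is available; the only point requiring care is recognizing that invariant (iii) is exactly what rescues the second case — in that case the naive bound $f + |\set_0|$ on the correct processes in $\set_1\cup\set_0$ could be as large as $2f$ even while $|\set_1|$ is close to $n-f$, so one must instead use that $\set_0$ is free of correct processes and bound by $|\set_1| < n-f$. Getting the case split aligned with the hypothesis of invariant (iii) (``at least $f+1$ correct processes in $\set_1$'') is the one place to be precise.
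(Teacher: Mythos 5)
Your proof is correct and follows essentially the same route as the paper's: both invoke the three loop invariants of Lemma~\ref{onecorrectset0-mar} and split on whether $\set_1$ contains at most $f$ or at least $f+1$ correct processes, bounding the number of correct processes in $\set_1\cup\set_0$ by $2f$ in the first case and by $|\set_1|<n-f$ in the second. No gaps.
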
 	
    \begin{proof}
        Let $\valid(v)$ be an operation by a {\ct} reader $p_k$.
        Suppose $p_k$ executes line~\ref{whileloop-m} of $\valid(v)$ at time $t$.
        Consider (the values of) $\set_0$ and $\set_1$ at time $t$.
        By Lemma~\ref{onecorrectset0-mar}(\ref{inv0-m}),
            $\set_0$ and $\set_1$ are disjoint sets.

        We now prove that $\set_1\cup\set_0$ contains fewer than $n-f$ {\ct} processes; this immediately implies that there is a {\ct} process $p \not \in \set_1 \cup \set_0$.
            There are two possible cases:
			\begin{enumerate}
				\item Case 1: $\set_1$ contains at most $f$ {\ct} processes.
				By Lemma~\ref{onecorrectset0-mar}(\ref{inv1-m}), $|\set_0| \le f$.
				So $\set_1 \cup \set_0$ contains at most $2f$ {\ct} processes.
				Since $3f < n$, we have $2f < n-f$.
				
				\item Case 2: $\set_1$ contains at least $f+1$ {\ct} processes.
				By Lemma~\ref{onecorrectset0-mar}(\ref{inv2-m}), $\set_0$ does \emph{not} contain any {\ct} process.
				So $\set_1 \cup \set_0$ contains at most $|\set_1|$ {\ct} processes.
				By Lemma~\ref{onecorrectset0-mar}(\ref{inv1-m}),  $|\set_1| < n-f$.
				\end{enumerate} 
				In both cases, $\set_1\cup\set_0$ contains fewer than $n-f$ {\ct} processes.
				\end{proof}

\begin{lemma}\label{line7inf-mar}
	Let $\valid(v)$ be an operation by a {\ct} reader $p_k$.
	Every instance of the Repeat-Until loop at lines~\ref{repeat}-\ref{until} of $\valid(v)$ terminates.
\end{lemma}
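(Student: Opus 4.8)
The plan is to argue by contradiction. Suppose some instance of the Repeat-Until loop at lines~\ref{repeat}--\ref{until}, executed during a $\valid(v)$ by a correct reader $p_k$, never terminates; that is, the condition at line~\ref{until} is never satisfied. First I would note that each single pass through the loop body is a bounded loop (line~\ref{findone-m} ranges over at most $n$ processes and performs one register read per process), so a non-terminating instance must consist of infinitely many passes. Throughout such an instance $p_k$ never leaves the loop, so it never re-executes line~\ref{ckplus-m}; since $C_k$ is a register owned by $p_k$ and written only at line~\ref{ckplus-m}, the value of $C_k$ is fixed at some $c^*\ge 1$ for the entire instance. Likewise $\set_1$ and $\set_0$ are changed only at lines~\ref{set1-m}, \ref{empty0-m}, and \ref{set0-m}, all of which follow line~\ref{until}, so $\set_1\cup\set_0$ is frozen during the instance and equals its value at the execution of line~\ref{whileloop-m} that started this iteration of the while loop. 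By Lemma~\ref{correctoutside-mar} applied at that execution of line~\ref{whileloop-m}, there is a correct process $p_i\notin\set_1\cup\set_0$, and this $p_i$ stays outside $\set_1\cup\set_0$ for the whole instance.

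The next step is to show that the background $\fresh()$ procedure of the correct process $p_i$ eventually writes a tuple $\langle r_i,c^*\rangle$ into $R_{ik}$ (with second component $c^*$) and never overwrites it afterwards. Let $t_0$ be the time of the execution of line~\ref{ckplus-m} immediately preceding this instance; by Observation~\ref{monock-mar} and the non-termination assumption, $C_k<c^*$ before $t_0$ and $C_k=c^*$ at all times from $t_0$ on. Since $p_i$ is correct it executes the $\fresh()$ while loop infinitely often, and the values it reads from $C_k$ at line~\ref{collectck-m} are non-decreasing, bounded above by $c^*$, and equal to $c^*$ infinitely often; let $M$ be the first such iteration whose read yields $c^*$, so every earlier read is strictly below $c^*$. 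The variable $prev\_c_k$ of $p_i$ starts at $0$ and is only ever set (at line~\ref{setprev-m}) to a value it previously read from $C_k$, hence at the start of iteration $M$ we have $prev\_c_k<c^*$. Therefore in iteration $M$ the test at line~\ref{askers-m} puts $p_k$ into $askers$, so $p_i$ reaches line~\ref{fresh1-m} for $p_k$ and writes $R_{ik}\gets\langle r_i,c^*\rangle$, then sets $prev\_c_k\gets c^*$. In every subsequent $\fresh()$ iteration the read of $C_k$ again yields $c^*$ while $prev\_c_k=c^*$, so $p_k\notin askers$ and $p_i$ does not touch $R_{ik}$ again; as $R_{ik}$ is a single-writer register owned by $p_i$, it retains the value $\langle r_i,c^*\rangle$ from the iteration-$M$ write onward.

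Finally, because the instance never terminates, $p_k$ re-executes line~\ref{findone-m} infinitely often, and since $p_i\notin\set_1\cup\set_0$ it reads $R_{ik}$ at line~\ref{readri-m} in every pass. Some such read occurs after the iteration-$M$ write and therefore returns $c_i=c^*=C_k$, so the condition at line~\ref{until} is satisfied with witness $p_i$ --- contradicting the assumption that the loop never exits. I expect the delicate point to be the bookkeeping with $prev\_c_k$: one must show that when $p_i$ first observes $C_k=c^*$ it has not already recorded this round as answered (i.e.\ that $prev\_c_k<c^*$ there), so that $p_i$ is guaranteed to produce a fresh tuple $\langle r_i,c^*\rangle$ in $R_{ik}$ rather than treating $p_k$ as a non-asker and skipping it.
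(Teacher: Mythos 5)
Your proof is correct and follows essentially the same route as the paper's: argue by contradiction, use Lemma~\ref{correctoutside-mar} to find a correct process outside $\set_0 \cup \set_1$, show its $\fresh()$ procedure eventually writes a tuple with timestamp $c^*$ into its register for $p_k$ and never overwrites it, and conclude that $p_k$ must eventually satisfy the exit condition at line~\ref{until}. Your handling of the ``delicate point'' ($prev\_c_k < c^*$ at the first $\fresh()$ iteration that reads $c^*$) is a slightly cleaner packaging of the same argument the paper makes via a case split on $c^*=1$ versus $c^*>1$.
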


\begin{proof}
	Let $\valid(v)$ be an operation by a {\ct} reader $p_k$.
	Assume for contradiction that there is an instance of the Repeat-Until loop at lines~\ref{repeat}-\ref{until} of $\valid(v)$ that does not terminate.
	Let $t$ be the time when $p_k$ executes line~\ref{repeat} for the first time in this instance of the Repeat-Until loop.
        Since this instance does not terminate, 
		$p_k$ never finds the condition of line~\ref{until} holds after $t$ ($\star$).
	
        Since lines~\ref{whileloop-m}-\ref{repeat} do not change $\set_1$ and $\set_0$,
        Lemma~\ref{correctoutside-mar} implies that 
            there is a {\ct} process $p_a \not\in \set_1\cup\set_0$ in line~\ref{repeat} at time $t$.
	Let $c^*$ be the value of $C_k$ at time $t$;
    	by line~\ref{ckplus-m},
    		$c^* \ge 1$.
        Since lines~\ref{repeat}-\ref{until} of the Repeat-Until loop do not change $\set_1$, $\set_0$, or $C_k$,
            and $p_k$ never exits this loop,
           $p_a \not\in \set_1\cup\set_0$ and $C_k=c^* $ at all times $t' \ge t$.

	\begin{claim}\label{rakck-mar}
		There is a time $t'$ such that $R_{ak}$ contains $\langle -, c^* \rangle$ for all times after $t'$.
	\end{claim}
	\begin{proof}
	Let $\fresh()$ be the help procedure of $p_a$.
	Since $C_k = c^*$ for all times after $t$ and $p_a$ is {\ct},
		there is an iteration of the while loop of $\fresh()$
		in which $p_a$ reads $c^*$ from $C_k$ at line~\ref{collectck-m}.
	Consider the \emph{first} iteration of the while loop of $\fresh()$ in which $p_a$ reads $c^*$ from $C_k$ at line~\ref{collectck-m}.
	Since $p_k$ is {\ct}, 
				by Observation~\ref{monock-mar},
				$p_a$ reads non-decreasing values from $C_k$,
				and so $p_a$ has $ c^* \ge prev\_c_k$ at line~\ref{askers-m}.
	Since $c^* \ge 1$,
		there are two cases.
		\begin{itemize}
			\item Case 1: $c^*=1$.
			Then $p_a$ has $prev\_c_k \le 1$ at line~\ref{askers-m}.
			Since $prev\_c_k$ is initialized to 0 (line~\ref{collectck-init}) and $p_a$ reads $c^*=1$ from $C_k$ at line~\ref{collectck-m} for the first time,
				$prev\_c_k = 0$ at line~\ref{askers-m}. 
			So $p_a$ finds $c^*=1 > prev\_c_k=0$ holds and inserts $p_k$ into $askers$ at line~\ref{askers-m}. 
			\item Case 2: $c^*>1$.
			Since $p_a$ reads $c^*$ from $C_k$ at line~\ref{collectck-m} for the first time and $ c^* \ge prev\_c_k$,
				$c^* > prev\_c_k$ at line~\ref{askers-m}. 
			So $p_a$ inserts $p_k$ into $askers$ at line~\ref{askers-m}. 
		\end{itemize}
	So in both cases, $p_a$ inserts $p_k$ into $askers$ at line~\ref{askers-m}. 
	Since $p_a$ is {\ct},
		in the same iteration of the while loop of $\fresh()$,
		$p_a$ writes $\langle - , c^* \rangle$ into $R_{ak}$ at line~\ref{fresh1-m},
		say at time $t'$ 
		and then it sets $prev\_c_k$ to $c^*$ in line~\ref{setprev-m} (i).
	Since $C_k = c^*$ for all times after $t$,
		by Observation~\ref{monock-mar},
		$C_k= c^*$ for all times after $t'$.
	Furthermore, by line~\ref{collectck-m},
		$p_a$ has $c_k=c^*$ for all times after $t'$ (ii).
	From (i) and (ii),
		 by line~\ref{askers-m},
		$p_a$ does not insert $p_k$ into $askers$ in any future iteration of the while loop of $\fresh()$.
	So by line~\ref{tellasker-m},
		$p_a$ never writes to $R_{ak}$ after $t'$,
		i.e., $R_{ak}$ contains $\langle -, c^* \rangle$ for all times after $t'$.
	\end{proof}

	Since $p_k$ executes lines~\ref{findone-m}-\ref{until} infinitely often after $t$,
		$p_k$ reads from $R_{ak}$ at line~\ref{readri-m} infinitely often after $t$.
	By Claim~\ref{rakck-mar},
		eventually $p_k$ reads $\langle -, c^* \rangle$ from $R_{ak}$ after $t$.
	Thus, since $p_a \not\in \set_1 \cup\set_0$ and $p_k$ has $C_k =c^*$ for all times after $t$, 
		$p_k$ finds the condition of line~\ref{until} holds after $t$
		--- a contradiction to ($\star$). 
\end{proof}

\begin{observation}\label{set1increment-mar}
	Let $\valid(v)$ be an operation by a {\ct} reader $p_k$.
	When $p_k$ executes line~\ref{set1-m} of $\valid(v)$, $p_k$ increments the size of $\set_1$.
\end{observation}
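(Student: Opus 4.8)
The plan is to argue directly from the control flow of the $\valid(v)$ procedure, with no appeal to the behaviour of other processes. First I would fix an arbitrary execution of line~\ref{set1-m} inside a $\valid(v)$ operation by a {\ct} reader $p_k$, and identify the iteration $I$ of the while loop at line~\ref{whileloop-m} in which this execution occurs. Since line~\ref{set1-m} sits inside the \textbf{then} branch of line~\ref{check1-m}, it is reached in $I$ only after $p_k$ has exited the repeat-until loop of lines~\ref{repeat}--\ref{until}. By the exit condition at line~\ref{until}, there is a process $p_j$ with $p_j \notin \set_1 \cup \set_0$ and $c_j \ge C_k$, and it is exactly this $p_j$ that is referenced in lines~\ref{check1-m} and~\ref{set1-m}. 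In particular, at the moment line~\ref{until} is evaluated in $I$ we have $p_j \notin \set_1$.

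Next I would observe that, within iteration $I$, the variable $\set_1$ is modified only at line~\ref{set1-m}: the only step $p_k$ takes between the evaluation of line~\ref{until} and the execution of line~\ref{set1-m} is the test $v \in r_j$ at line~\ref{check1-m}, which reads but does not update $\set_1$. Hence $p_j \notin \set_1$ still holds when $p_k$ executes line~\ref{set1-m}, so $|\set_1 \cup \{p_j\}| = |\set_1| + 1$; that is, executing line~\ref{set1-m} strictly increases $|\set_1|$, which is exactly what the observation claims.

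I do not expect a genuine obstacle here. The only point that needs a moment's care is confirming that the process $p_j$ named in lines~\ref{check1-m}--\ref{set1-m} is indeed one witnessing the exit condition of line~\ref{until} (and not some arbitrary process), so that the membership fact $p_j \notin \set_1 \cup \set_0$ is actually available at that point; once this is made explicit, the observation follows immediately.
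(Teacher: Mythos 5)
Your argument is correct and is exactly the justification the paper leaves implicit: the statement is presented as an Observation with no proof, being considered immediate from the code. Your identification of the key point — that the $p_j$ of line~\ref{set1-m} is the witness of the exit condition at line~\ref{until}, hence $p_j \notin \set_1$, and that $\set_1$ is untouched between those two lines — is precisely the reasoning needed.
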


\begin{observation}\label{set0increment-mar}
	Let $\valid(v)$ be an operation by a {\ct} reader $p_k$.
	When $p_k$ executes line~\ref{set0-m} of $\valid(v)$, $p_k$ increments the size of $\set_0$.
\end{observation}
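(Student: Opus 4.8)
The plan is to trace the control flow of the $\valid(v)$ procedure backwards from line~\ref{set0-m} to the exit of the enclosing repeat-until loop, and to argue that the process inserted into $\set_0$ at line~\ref{set0-m} cannot already belong to $\set_0$ at that instant.

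First I would pin down the process being inserted. Fix any execution of line~\ref{set0-m} by $p_k$; it occurs inside some iteration $I$ of the while loop at line~\ref{whileloop-m}, and in $I$ control reaches line~\ref{set0-m} only after the repeat-until loop has terminated, i.e., after the condition at line~\ref{until} became true. That condition exhibits a process $p_j$ with $p_j \notin \set_1 \cup \set_0$ (and $c_j \ge C_k$); this $p_j$ is precisely the one referred to by lines~\ref{check1-m}--\ref{set0-m}, and line~\ref{set0-m} performs $\set_0 \gets \set_0 \cup \{p_j\}$. In particular, at the instant the repeat-until loop exits we have $p_j \notin \set_0$.

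Next I would check that $\set_0$ is not modified between the exit of the repeat-until loop and the execution of line~\ref{set0-m}. The only statements in between that could change $\set_0$ are lines~\ref{set1-m}--\ref{empty0-m}, which are guarded by the test $v \in r_j$ at line~\ref{check1-m}; but line~\ref{set0-m} is reached only through the complementary test $v \notin r_j$ at line~\ref{notv-mar}. Since at most one of the two branches executes, lines~\ref{set1-m}--\ref{empty0-m} --- in particular the reset $\set_0 \gets \emptyset$ at line~\ref{empty0-m} --- are skipped on every run that reaches line~\ref{set0-m}. Hence the value of $\set_0$ at line~\ref{set0-m} equals its value at the repeat-until exit, so $p_j \notin \set_0$ still holds there, and therefore $|\set_0 \cup \{p_j\}| = |\set_0| + 1$; that is, executing line~\ref{set0-m} strictly increases $|\set_0|$.

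I do not expect a genuine obstacle here: the only point that needs a moment's care is the easy observation that the $v \in r_j$ and $v \notin r_j$ branches are mutually exclusive, so the reset of $\set_0$ at line~\ref{empty0-m} can never interfere. The companion statement, Observation~\ref{set1increment-mar}, would follow by the symmetric argument, now using that $p_j \notin \set_1$ at the repeat-until exit and that nothing between line~\ref{until} and line~\ref{set1-m} modifies $\set_1$.
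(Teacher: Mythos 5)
Your argument is correct: the paper states this as an unproved observation, and your reasoning --- that the witness $p_j$ of the condition at line~\ref{until} satisfies $p_j \notin \set_1 \cup \set_0$, and that $\set_0$ is untouched between the loop exit and line~\ref{set0-m} because the $v \in r_j$ and $v \notin r_j$ branches are mutually exclusive --- is exactly the code-level justification the paper implicitly relies on. Nothing further is needed.
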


\begin{restatable}{theorem}{verifyt} \label{termination-mar}
    [\textsc{Termination}] Every $\Test$, $\Set$, $\sign$, and $\valid$ operation by a {\ct} process completes.
\end{restatable}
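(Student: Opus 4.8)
The plan is to handle the four kinds of operations separately, disposing of the three easy ones first and concentrating the argument on $\valid(-)$. For $\Set(v)$, $\Test()$, and $\sign(v)$ there is essentially nothing to do: each of these procedures is straight-line code --- a constant number of reads and writes of shared registers, together with a conditional and a return, with no loops. Since a {\ct} process takes infinitely many steps and every individual register access completes, a {\ct} process that invokes any one of these procedures completes it after finitely many steps.

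The real work is to show that $\valid(v)$ by a {\ct} reader $p_k$ completes. First I would argue that every \emph{round} --- every iteration of the while loop at line~\ref{whileloop-m} that $p_k$ begins --- terminates. Inside a round the only construct that is not straight-line code is the repeat-until loop at lines~\ref{repeat}--\ref{until}, and it terminates by Lemma~\ref{line7inf-mar} (its body, the \textbf{for} loop at line~\ref{findone-m}, ranges over at most $n$ processes); the remaining lines of the round (line~\ref{ckplus-m} and the conditionals at lines~\ref{check1-m}--\ref{return0-m}) are straight-line. Moreover, when $p_k$ leaves the repeat-until loop the condition of line~\ref{until} holds, so there is some $p_j \notin \set_1 \cup \set_0$ with $c_j \ge C_k$; since exactly one of the mutually exclusive tests $v \in r_j$ (line~\ref{check1-m}) and $v \notin r_j$ (line~\ref{notv-mar}) holds, in the remainder of the round $p_k$ inserts $p_j$ into exactly one of $\set_1$ or $\set_0$. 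Hence in every round that $p_k$ begins and that does not return, $p_k$ adds one process to $\set_1$ (call it a \emph{Yes round}, the case where line~\ref{set1-m} runs) or to $\set_0$ (a \emph{No round}, the case where line~\ref{set0-m} runs).

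Then I would derive a contradiction from the assumption that $\valid(v)$ never completes. In that case $p_k$ begins infinitely many rounds, all completing without returning. By Observation~\ref{set1increment-mar} each Yes round increases $|\set_1|$ by one, and $\set_1$ is never reset (Observation~\ref{monoset1-mar}); so if there were infinitely many Yes rounds then eventually $|\set_1| \ge n-f$ and $p_k$ would return $\true$ at line~\ref{return1-m} --- impossible. Hence there is a last Yes round (or none at all). Immediately after that last Yes round (or, if there is none, from the initialization inside $\valid(v)$) we have $\set_0 = \emptyset$, since a Yes round resets $\set_0$ at line~\ref{empty0-m}; and from that point on every round is a No round, so by Observation~\ref{set0increment-mar} $|\set_0|$ increases by one in each of them and is never reset again. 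Therefore after at most $f+1$ further rounds $p_k$ has $|\set_0| > f$ and returns $\false$ at line~\ref{return0-m} --- contradicting the existence of infinitely many rounds. So $\valid(v)$ completes, which finishes the theorem.

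The genuinely hard part of the whole termination argument is Lemma~\ref{line7inf-mar} (the repeat-until loop always terminates), which in turn rests on Lemma~\ref{correctoutside-mar} (there is always a {\ct} process outside $\set_1 \cup \set_0$) and on the behaviour of the background $\fresh()$ procedure; since all of these are already established, the only residual point needing care in the present proof is the ``exactly one process added per completed round'' claim, as this is precisely what rules out $p_k$ spinning forever without making progress on $\set_1 \cup \set_0$. (Note that $\fresh()$ is not one of the operations in the statement, so its own non-termination as a background loop is not an issue.)
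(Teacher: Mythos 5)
Your proof is correct and follows essentially the same route as the paper's: the three loop-free operations are dispatched immediately, and for $\valid(v)$ the paper likewise invokes Lemma~\ref{line7inf-mar} to reduce non-termination to infinitely many while-loop iterations, then splits on whether line~\ref{set1-m} runs infinitely often (forcing $|\set_1|\ge n-f$ via Observations~\ref{monoset1-mar} and~\ref{set1increment-mar}) or only finitely often (after which $\set_0$ is never reset and $|\set_0|$ exceeds $f$ via Observation~\ref{set0increment-mar}). Your explicit justification that each completed, non-returning round inserts exactly one process into $\set_1$ or $\set_0$ is a point the paper passes over more quickly, but it is the same argument.
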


\begin{proof}
	From the code of the $\Set(-)$, $\sign(-)$, and $\Test()$ procedures,
		it is obvious that every $\Set$, $\sign$, and $\Test$ operation by a {\ct} process completes.
	We now prove that every $\valid$ operation by a {\ct} reader also completes.

	Assume for contradiction, 
		there is a $\valid(v)$ operation by a {\ct} reader $p_k$ that does not complete,
		i.e.,
		$p_k$ takes an infinite number of steps in the $\valid(v)$ procedure.
	By Lemma~\ref{line7inf-mar},
        $p_k$ must execute infinitely many iterations of the while loop at line~\ref{whileloop-m} of $\valid(v)$.
	So $p_k$~executes line~\ref{set1-m} or line~\ref{set0-m} of $\valid(v)$ infinitely often.	
	\begin{itemize}
		\item Case 1: $p_k$ executes line~\ref{set1-m} infinitely often.	
				By Observations~\ref{monoset1-mar} and \ref{set1increment-mar},
					there is an iteration of the while loop at line~\ref{whileloop-m} of this $\valid(v)$ 
					in which $p_k$ has $|\set_1| \ge n-f$ at line~\ref{set1-m}. 
				Since $p_k$ does not change $|\set_1|$ between line~\ref{set1-m} and line~\ref{return1-m},
					in that iteration,
					$p_k$ finds that the condition $|\set_1| \ge n-f$ holds in line~\ref{return1-m} of $\valid(v)$.
				So $\valid(v)$ returns at line~\ref{return1-m}.
		\item Case 2: $p_k$ executes line~\ref{set1-m} only a finite number of times.
			Then $p_k$ executes line~\ref{set0-m} infinitely often.
			So there is a time $t$ such that 
				$p_k$ executes line~\ref{set0-m} infinitely often after $t$
				while $p_k$ never executes line~\ref{set1-m} after $t$.
			This implies $p_k$ never executes line~\ref{empty0-m} after $t$.
			Since $|\set_0|$ decreases only at line~\ref{empty0-m},
				$|\set_0|$ never decreases after $t$.
			Since $p_k$ executes line~\ref{set0-m} infinitely often after $t$
				and $|\set_0|$ never decreases after $t$,
				by Observations~\ref{set0increment-mar},
					there is an iteration of the while loop at line~\ref{whileloop-m} of $\valid(v)$
					in which $p_k$ has $|\set_0| > f$ at line~\ref{set0-m} of $\valid(v)$. 
			Since $p_k$ does not change $|\set_0|$ between line~\ref{set0-m} and line~\ref{return0-m},
				in that iteration,
				$p_k$ finds that the condition $|\set_0|>f$ holds in line~\ref{return0-m} of $\valid(v)$.
			So $\valid(v)$ returns at line~\ref{return0-m}.
	\end{itemize}
	In both cases,
		this $\valid(v)$ returns, 
			a contradiction to the assumption that $\valid(v)$ does not complete.
\end{proof}

\subsection{Byzantine Linearizability}\label{linear-mar}
Recall that $H$ is an arbitrary history of the implementation given by Algorithm~\ref{code-mar}.
We now prove that $H$ is Byzantine linearizable with respect to a SWMR {\mar} (Definition~\ref{def-mar}).
We start by proving the following observations and lemmas.

\begin{lemma}~\label{notreturn0-mar}
	Suppose there is a time $t$ such that
		$v \in R_1$ at all times $t' \ge t$ or at least $f+1$ processes $p_i$ have $v \in R_i$ at all times $t' \ge t$.
	If a {\ct} reader invokes a $\valid(v)$ operation after time $t$, 
		then it does not insert any {\ct} process into $\set_0$ in this $\valid(v)$ operation.
\end{lemma}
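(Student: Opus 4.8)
The plan is to argue by contradiction. Suppose some \emph{correct} reader $p_k$ invokes a $\valid(v)$ operation after time $t$ and, during this operation, inserts a \emph{correct} process $p_a$ into $\set_0$. By inspection of the code, the only place this can happen is line~\ref{set0-m}, inside some iteration $I$ of the while loop at line~\ref{whileloop-m}; moreover $p_a$ must be the process $p_j$ that satisfied the exit condition of the repeat-until loop (line~\ref{until}) in iteration $I$, and $p_k$ must have found $v \notin r_a$ at line~\ref{notv-mar}.

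First I would pin down the relevant timestamp. In iteration $I$, before entering the repeat-until loop, $p_k$ increments $C_k$ at line~\ref{ckplus-m}; let $c^*$ be the value of $C_k$ immediately afterwards. Since the whole $\valid(v)$ operation begins after $t$, this increment occurs after $t$; and since $p_k$ is correct, $C_k$ starts at $0$ and strictly increases at each increment, so $c^* \ge 1$ and $C_k$ first attains the value $c^*$ at some time $\tau > t$. By line~\ref{readri-m}, the exit condition at line~\ref{until}, and line~\ref{notv-mar}, in iteration $I$ the reader $p_k$ reads some tuple $\langle r_a, c_a\rangle$ from $R_{ak}$ with $v \notin r_a$ and $c_a \ge C_k = c^* \ge 1$. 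Because $R_{ak}$ is initialized to $\langle\emptyset,0\rangle$, $c_a \ge 1 > 0$, and $p_a$ is correct, $p_a$ must itself have written $\langle r_a, c_a\rangle$ into $R_{ak}$ at line~\ref{fresh1-m}, in some iteration $J$ of its $\fresh()$ loop; and $\langle r_a, c_a\rangle$ is in fact the \emph{only} tuple $p_a$ writes into $R_{ak}$ during iteration $J$.

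Next I would trace iteration $J$ back to a sufficiently late $askers$-insertion so that Lemma~\ref{asker1-mar} becomes applicable. In iteration $J$, the value $c_a$ is exactly what $p_a$ read from $C_k$ at line~\ref{collectck-m}, say at time $t_a$; and since $p_a$ actually writes to $R_{ak}$ at line~\ref{fresh1-m} of $J$ (line~\ref{tellasker-m} ranges over $askers$), $p_a$ must have placed $p_k$ into $askers$ at line~\ref{askers-m} of $J$, at some time $t_a' > t_a$. Since $C_k$ is non-decreasing (Observation~\ref{monock-mar}) and only reaches $c^*$ at time $\tau > t$, reading $c_a \ge c^*$ from $C_k$ forces $t_a \ge \tau > t$, hence $t_a' \ge t$. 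Now Lemma~\ref{asker1-mar}, applied to the hypothesis on $v$ and to iteration $J$ (in which the correct process $p_a$ inserts $p_k$ into $askers$ at time $t_a' \ge t$), gives that $p_a$ writes a tuple $\langle r, -\rangle$ with $v \in r$ into $R_{ak}$ at line~\ref{fresh1-m} of $J$. But the unique such tuple is $\langle r_a, c_a\rangle$, and $v \notin r_a$ --- a contradiction. Therefore no correct process is ever inserted into $\set_0$ during a $\valid(v)$ operation that begins after $t$.

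The argument is essentially routine once these pieces are assembled; the only step that requires care is the timestamp bookkeeping that establishes $t_a' \ge t$, since that inequality is precisely the hypothesis needed to invoke Lemma~\ref{asker1-mar}. (This is the same mechanism used in Case~2.2.2 of the proof of Lemma~\ref{onecorrectset0-mar}, but cleaner here because the stability of $v$ in the writer's register, or across $f+1$ registers, is given outright rather than having to be derived.)
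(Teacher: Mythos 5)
Your proof is correct and follows essentially the same route as the paper's: the same contradiction setup, the same timestamp argument showing that the read of $C_k$ returning $c_a \ge c^*$ must occur after $t$, the same tracing back to the $askers$-insertion in the helper's $\fresh()$ iteration, and the same final appeal to Lemma~\ref{asker1-mar}. The only cosmetic difference is that you make explicit that $\langle r_a, c_a\rangle$ is the unique tuple written to $R_{ak}$ in that iteration, which the paper leaves implicit.
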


\begin{proof}
	Assume for contradiction,
		there is a time $t$ such that
		(1) $v \in R_1$ at all times $t' \ge t$ or at least $f+1$ processes $p_i$ have $v \in R_i$ at all times $t' \ge t$, and
		(2) a {\ct} reader $p_k$ invokes a $\valid(v)$ operation after time $t$,  
		but $p_k$ inserts a {\ct} process $p_j$ into $\set_0$ in this $\valid(v)$ operation.
 
	Consider the iteration of the while loop at line~\ref{whileloop-m} of the $\valid(v)$ in which $p_k$ inserts $p_j$ into $\set_0$ at line~\ref{set0-m}. 
	Let $c_k$ be the value of $C_k$ after $p_k$ increments $C_k$ at line~\ref{ckplus-m} in this iteration.
	Since $C_k$ is initialized to $0$, 
		$c_k \ge 1$ and $p_k$ writes $c_k$ into $C_k$ after time $t$.
	By Observation~\ref{monock-mar},
		for all times $t' \le t$, $C_k$ contains values that are less than $c_k$ ($\star$).
	Since $p_k$ inserts $p_j$ into $\set_0$ at line~\ref{set0-m},
		$p_k$ reads $\langle r, c \rangle$ with $v\not\in r$ and $c \ge c_k$ from $R_{jk}$ at line~\ref{readri-m}.
	Since $c_k \ge 1$, $c\ge 1$.
	Since $R_{jk}$ is initialized to $\langle \emptyset, 0 \rangle$,
		it must be that $p_j$ writes $\langle r, c \rangle$ with $v\not\in r$ into $R_{jk}$ 
		at line~\ref{fresh1-m} in some iteration of the while loop of $\fresh()$ ($\star\star$).

	Consider the iteration of the while loop of $\fresh()$ procedure
		in which $p_j$ writes $\langle r, c \rangle$ into $R_{jk}$ at line~\ref{fresh1-m}.
	Since $p_j$ is {\ct}, in that iteration:	
		(a) $p_j$ reads $c$ from $C_k$ at line~\ref{collectck-m}, say at time $t_j^1$ and then 
		(b) $p_j$ inserts $p_k$ into $askers$ at line~\ref{askers-m}, say at time $t_j^2 > t_j^1$.
	Since $c \ge c_k$,
		by ($\star$),
		$p_j$ reads $c$ from $C_k$ at line~\ref{collectck-m} after time $t$, i.e., $t_j^1 > t$.
	Since $t_j^2 > t_j^1$, $t_j^2 > t$, 
		i.e., $p_j$ inserts $p_k$ into $askers$ at line~\ref{askers-m} at time $t_j^2 > t$.
	Thus, by (1) and Lemma~\ref{asker1-mar}, 
		$p_j$ writes $\langle r, c \rangle$ with $v\in r$ into $R_{jk}$ at line~\ref{fresh1-m} --- a contradiction to ($\star\star$).
\end{proof}

\begin{lemma}~\label{testreturn1-mar}
	Suppose there is a time $t$ such that $v \in R_1$ at all times $t' \ge t$ or at least $f+1$ processes $p_i$ have $v \in R_i$ at all times $t' \ge t$.
	If a {\ct} reader invokes a $\valid(v)$ operation after time $t$,  
		then $\valid(v)$ returns $\true$.
\end{lemma}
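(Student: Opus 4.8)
The plan is to argue by contradiction from the assumption that the $\valid(v)$ operation returns $\false$, using the two results already established in this section. First I would observe that the $\valid(-)$ procedure has exactly two exit points: line~\ref{return1-m}, which returns $\true$, and line~\ref{return0-m}, which returns $\false$. By Theorem~\ref{termination-mar}, the $\valid(v)$ operation invoked by the {\ct} reader $p_k$ completes, hence it returns $\true$ or $\false$. Assume for contradiction that it returns $\false$; then $p_k$ exits at line~\ref{return0-m}, so at that moment $|\set_0| > f$.

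Next I would invoke Lemma~\ref{notreturn0-mar}. Its hypothesis coincides exactly with the hypothesis of the present lemma (there is a time $t$ such that $v \in R_1$ at all times $t' \ge t$, or at least $f+1$ processes $p_i$ have $v \in R_i$ at all times $t' \ge t$), and by assumption $p_k$ invokes $\valid(v)$ after time $t$. Therefore $p_k$ never inserts a {\ct} process into $\set_0$ during this $\valid(v)$, so every process in $\set_0$ is faulty. Since at most $f$ processes are faulty, $|\set_0| \le f$, contradicting $|\set_0| > f$. Hence $\valid(v)$ cannot return $\false$, and so it returns $\true$.

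I do not expect a genuine obstacle here: the two nontrivial facts — that the operation terminates at all, and that no correct process ends up in $\set_0$ under this hypothesis — are exactly Theorem~\ref{termination-mar} and Lemma~\ref{notreturn0-mar}. The only point requiring care is to confirm, by inspection of the code of $\valid(-)$, that lines~\ref{return1-m} and~\ref{return0-m} are its only return statements, so that ``terminates but does not return $\false$'' indeed forces the answer $\true$.
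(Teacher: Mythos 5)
Your proof is correct and follows essentially the same route as the paper's: both rest on Lemma~\ref{notreturn0-mar} to conclude that $\set_0$ contains only faulty processes (hence $|\set_0| \le f$ throughout, so line~\ref{return0-m} never fires) and on Theorem~\ref{termination-mar} to conclude the operation must therefore return $\true$. The only difference is that you phrase it as a contradiction while the paper argues directly; this is immaterial.
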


\begin{proof}
	Suppose that there is a time $t$ such that
        (1) $v \in R_1$ at all times $t' \ge t$ or at least $f+1$ processes $p_i$ have $v \in R_i$ at all times $t' \ge t$, and 
		(2) a {\ct} reader $p_k$ invokes a $\valid(v)$ operation after time $t$.
        Note that $\set_0$ is initialized to $\emptyset$ at the start of this operation.
	By Lemma~\ref{notreturn0-mar},
            $p_k$ does not insert any {\ct} process into $\set_0$ in this $\valid(v)$ operation.
            Thus, $\set_0$ can contain only faulty processes, and so 
		$|\set_0| \le f$ in $\valid(v)$.
	So $p_k$ never finds that the condition $|\set_0| > f$ holds in line~\ref{return0-m} of $\valid(v)$ and $\valid(v)$            never returns $\false$ at line~\ref{return0-m}.
	Thus,
		by Theorem~\ref{termination-mar} and the code for the $\valid(v)$ procedure,
		$\valid(v)$ returns $\true$.
\end{proof}

The following lemma captures the ``relay'' property of signed values: intuitively,
    if a process validates (the signature of) a value $v$,
    thereafter every process will also be able to validate it.
    
\begin{lemma}\label{testatestb-mar}
	Let $\valid(v)$ and $\valid(v)'$ be operations by {\ct} readers.
	If $\valid(v)$ returns $\true$ and it precedes $\valid(v)'$, then $\valid(v)'$ also returns $\true$.
\end{lemma}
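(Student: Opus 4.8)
The plan is to reduce this ``relay'' property to two facts already established in this appendix: Lemma~\ref{ri1beforeset1-mar} (a {\ct} process in $\set_1$ has $v$ in its register from the moment it is inserted onward) and Lemma~\ref{testreturn1-mar} (once $f+1$ {\ct} processes stably have $v$ in their registers, every subsequent $\valid(v)$ by a {\ct} reader returns $\true$). Concretely, I would first argue that at the instant $\valid(v)$ returns $\true$ the system is already in exactly the ``stable'' configuration assumed by Lemma~\ref{testreturn1-mar}, and then invoke that lemma for $\valid(v)'$.

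First I would inspect the code of the $\valid(-)$ procedure in Algorithm~\ref{code-mar}: a $\valid(v)$ operation can return $\true$ only at line~\ref{return1-m}, and only when $|\set_1| \ge n-f$ holds at that moment. Let $p_k$ be the {\ct} reader executing $\valid(v)$ and let $t$ be the time at which this $\valid(v)$ returns; so $|\set_1| \ge n-f$ at time $t$. Since $n > 3f$ we have $n-f \ge 2f+1$, and since at most $f$ processes are Byzantine, at least $f+1$ of the processes that are in $\set_1$ at time $t$ are {\ct}. For each such {\ct} process $p_j$, Lemma~\ref{ri1beforeset1-mar}, applied at time $t$, gives $v \in R_j$ at all times $t' \ge t$. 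Hence there is a time $t$ such that at least $f+1$ {\ct} processes $p_i$ have $v \in R_i$ at all times $t' \ge t$ --- precisely the hypothesis of Lemma~\ref{testreturn1-mar}.

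Finally, since $\valid(v)$ precedes $\valid(v)'$, the operation $\valid(v)'$ is invoked after $\valid(v)$ returns, i.e., after time $t$; so Lemma~\ref{testreturn1-mar} yields that $\valid(v)'$ returns $\true$, as required. I do not expect a serious obstacle, since the heavy lifting is done by Lemmas~\ref{ri1beforeset1-mar} and~\ref{testreturn1-mar}; the only points needing care are the counting step (that at least $f+1$ of the $\ge n-f$ members of $\set_1$ are {\ct}, which uses $n > 3f$) and verifying that the hypothesis of Lemma~\ref{ri1beforeset1-mar} is indeed met, at the instant $\valid(v)$ returns, by every {\ct} process that is in $\set_1$ at that instant.
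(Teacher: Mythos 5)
Your proposal is correct and follows essentially the same route as the paper's proof: both extract at least $f+1$ correct processes from the $\ge n-f$ members of $\set_1$ at the moment $\valid(v)$ returns $\true$, apply Lemma~\ref{ri1beforeset1-mar} to conclude these processes stably hold $v$ in their registers from that time on, and then invoke Lemma~\ref{testreturn1-mar} on $\valid(v)'$, which is invoked afterwards. The only cosmetic difference is that the paper anchors the argument at the slightly earlier time $t_a \le t$ when the condition of line~\ref{return1-m} is checked, whereas you anchor it directly at the response time $t$; both are valid since $\set_1$ is non-decreasing.
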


\begin{proof}
	Let $\valid(v)$ and $\valid(v)'$ be operations by {\ct} readers $p_a$ and $p_b$ respectively.
	Suppose that $\valid(v)$ precedes $\valid(v)'$ and $\valid(v)$ returns $\true$.
	Let $t$ be the time when $\valid(v)$ returns $\true$.
	Since $\valid(v)$ precedes $\valid(v)'$,
		$p_b$ invokes $\valid(v)'$ after time~$t$~($\star$).

	Since $\valid(v)$ returns $\true$ at time $t$,
		$p_a$ finds that the condition $|\set_1| \ge n-f$ holds in line~\ref{return1-m} of $\valid(v)$, say at time $t_a \le t$.
	Since $n \ge 3f+1$, $|\set_1| \ge 2f+1$ at time $t_a$.
	Since there are at most $f$ faulty processes,
		there are at least $f+1$ {\ct} processes in $\set_1$ of $p_a$ at time $t_a$.
	By Lemma~\ref{ri1beforeset1-mar},
		at least $f+1$ {\ct} processes $p_i$ have $v\in R_i$ at all times $t' \ge t_a$.
	Since $t_a < t$,
		at least $f+1$ {\ct} processes $p_i$ have $v\in R_i$ at all times $t' \ge t$.
	Thus, by ($\star$) and Lemma~\ref{testreturn1-mar}, $\valid(v)'$ returns~$\true$.	
\end{proof}

\begin{definition}~\label{t0t1-mar}
        For any value $v$,
	\begin{itemize}
		\item Let $t^v_0$ be the max invocation time of any $\valid(v)$ operation by a {\ct} reader that returns $\false$ in $H$;
		if no such $\valid(v)$ operation exists, $t^v_0 = 0$.
		\item Let $t^v_1$ be the min response time of any $\valid(v)$ operation by a {\ct} reader that returns $\true$ in $H$;
		if no such $\valid(v)$ operation exists, $t^v_1 = \infty$.
	\end{itemize}
\end{definition}

\begin{lemma}\label{iexists-mar}
      For any value $v$, $t^v_1 > t^v_0$ and so the interval $(t^v_0,t^v_1)$ exists.
\end{lemma}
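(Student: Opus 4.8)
The plan is to derive this as a short corollary of the ``relay'' property already proved in Lemma~\ref{testatestb-mar}, arguing by contradiction. First I would record that, by Definition~\ref{t0t1-mar}, $t^v_0$ is always finite (it is either $0$ or the invocation time of an actual $\valid(v)$ operation), so if there is no $\valid(v)$ operation by a correct reader returning $\true$ the statement is immediate since $t^v_1 = \infty > t^v_0$. This disposes of one degenerate case up front.

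For the substantive argument, I would assume for contradiction that $t^v_1 \le t^v_0$. Then $t^v_1$ must be finite, so there is a $\valid(v)$ operation $op$ by a correct reader with response time $t^v_1$ that returns $\true$; and since any operation's response occurs at a positive time, $t^v_0 \ge t^v_1 > 0$, so by Definition~\ref{t0t1-mar} there is a $\valid(v)$ operation $op'$ by a correct reader with invocation time $t^v_0$ that returns $\false$. Since $op$ and $op'$ have different return values they are distinct operations, so the response event of $op$ and the invocation event of $op'$ are distinct events of $H$ and therefore occur at distinct times; combined with $t^v_1 \le t^v_0$ this yields $t^v_1 < t^v_0$. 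Hence the response of $op$ precedes the invocation of $op'$, i.e., $op$ precedes $op'$. Applying Lemma~\ref{testatestb-mar} to $op$ (which returns $\true$) and $op'$ forces $op'$ to return $\true$, contradicting that $op'$ returns $\false$. Therefore $t^v_1 > t^v_0$, so the interval $(t^v_0, t^v_1)$ is non-empty.

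I do not expect a real obstacle here; the closest thing to a subtlety is justifying the strict inequality $t^v_1 < t^v_0$ from $t^v_1 \le t^v_0$, which rests on the fact that in the interleaving history model distinct events occur at distinct times, so the response of $op$ cannot coincide with the invocation of $op'$. Everything else is a direct appeal to Lemma~\ref{testatestb-mar}, and I would make the handling of $t^v_1 = \infty$ and the positivity of response times explicit so that the case analysis is airtight.
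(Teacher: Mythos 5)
Your proof is correct and rests on the same key fact as the paper's: Lemma~\ref{testatestb-mar} forces the $\true$-returning $\valid(v)$ with response time $t^v_1$ not to precede the $\false$-returning one with invocation time $t^v_0$. The paper organizes this as a direct four-case analysis while you argue by contradiction and fold the degenerate cases together, but the substance is identical (and your explicit remark about distinct events occurring at distinct times makes the strict inequality cleaner than the paper's).
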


\begin{proof}
    Let $v$ be any value.
    There are four cases:
    \begin{itemize}
         \item Case 1: no $\valid(v)$ operation by a {\ct} process returns $\false$,
            and no $\valid(v)$ operation by a {\ct} process returns $\true$.
            Then $t^v_0 = 0$ and $t^v_1 = \infty$.
            So $t^v_1 > t^v_0$.
        \item Case 2: no $\valid(v)$ operation by a {\ct} process returns $\false$ and some $\valid(v)$ operation by a {\ct} process returns $\true$.
            Then $t^v_0 = 0$ and  $t^v_1 > 0$.
            So $t^v_1 > t^v_0$.
        \item Case 3: some $\valid(v)$ operation by a {\ct} process returns $\false$
            and
            no $\valid(v)$ operation by a {\ct} process returns $\true$.
            Then  $t^v_0 < \infty$ and $t^v_1 = \infty$.
            So $t^v_1 > t^v_0$.
            
        \item Case 4: some $\valid(v)$ operation by a {\ct} process returns $\false$
        and
        some $\valid(v)$ operation by a {\ct} process returns $\true$.
            Let $\valid(v)'$ be the operation with the max invocation time of any $\valid(v)$ operation by a {\ct} process that returns $\false$. 
            By Definition~\ref{t0t1-mar}, the invocation time of $\valid(v)'$ is $t^v_0$~($\star$).
            Let $\valid(v)''$ be the operation with the min response time of any $\valid(v)$ operation by a {\ct} process that returns $\true$.
            By Definition~\ref{t0t1-mar}, the response time of $\valid(v)''$ is $t^v_1$ ($\star\star$).
                         
            Since $\valid(v)''$ returns $\true$ and $\valid(v)'$ returns $\false$, 
                by Lemma~\ref{testatestb-mar},
                $\valid(v)''$ does not precede $\valid(v)'$.
            So the response time of $\valid(v)''$ is greater than the invocation time of $\valid(v)'$.
            Thus, by ($\star$) and ($\star\star$),
                 $t^v_1 > t^v_0$.
    \end{itemize}
    Therefore in all cases,  $t^v_1 > t^v_0$ and so the interval $(t^v_0,t^v_1)$ exists.
\end{proof}

\begin{definition}
Let ${\hct}$ be the set of processes that are correct in the history $H$.
\end{definition}

\begin{definition}
Let $H|{\hct}$ be the history consisting of all the steps of all the \emph{{\ct}} processes in $H$ (at the same times they occur in $H$).
\end{definition}

By Theorem~\ref{termination-mar}, all the processes that are correct in $H$ (i.e., all the processes in {\hct}) complete their operations, so:

\begin{observation}\label{allcomplete-mar}
    Every operation in $H|{\hct}$ is complete (i.e., it has both an invocation and a response).
\end{observation}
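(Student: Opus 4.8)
The plan is to derive this immediately from Theorem~\ref{termination-mar} together with the definition of $H|\hct$. First I would observe that, by the very construction of $H|\hct$, every operation appearing in $H|\hct$ is an operation performed by a process in $\hct$ — that is, by a process that is correct in $H$ — since $H|\hct$ retains only the steps of such processes. In particular every operation in $H|\hct$ is a $\Set$, $\Test$, $\sign$, or $\valid$ operation (these are the only operations of the implemented {\mar}) invoked by a correct process.

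Next I would invoke Theorem~\ref{termination-mar}, which states precisely that every $\Set$, $\Test$, $\sign$, and $\valid$ operation by a correct process completes in $H$; that is, it has both an invocation and a matching response in $H$. Both the invocation and the response of an operation by a correct process are steps taken by that correct process, and $H|\hct$ contains every step taken by a correct process (at the same time it occurs in $H$). Hence both the invocation and the response of the operation appear in $H|\hct$, so the operation is complete in $H|\hct$. Since this argument applies to an arbitrary operation of $H|\hct$, every operation in $H|\hct$ is complete, which is the claim.

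There is essentially no obstacle here: the statement is a bookkeeping corollary of the Termination theorem and the definition of $H|\hct$. The only point requiring a moment's care is to note that the response of a correct process's operation is itself a step of that correct process, so that restricting $H$ to the steps of correct processes cannot turn a complete operation into an incomplete one.
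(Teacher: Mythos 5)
Your proposal is correct and matches the paper's reasoning exactly: the paper also derives this observation directly from Theorem~\ref{termination-mar} (Termination) together with the definition of $H|\hct$, noting that all correct processes complete their operations and that $H|\hct$ retains all their steps. Your additional remark that the response step of a correct process's operation is itself a step of that process, and hence survives the restriction to $H|\hct$, is the right (if implicit in the paper) justification.
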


\begin{observation}\label{correctop-mar}
    An operation $o$ by a process $p$ is in $H|{\hct}$ if and only if $o$ is also in $H$ and $p \in {\hct}$.
\end{observation}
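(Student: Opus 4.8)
The plan is to simply unfold the two relevant definitions and verify the biconditional in each direction. Recall from the model section that an operation $o$ ``spans an interval'' consisting of an invocation step and (when $o$ is complete) a matching response step, and that $o$ is said to be \emph{in} a history $X$ exactly when the invocation of $o$ occurs in $X$. Recall also the definition of $H|{\hct}$: it is the history consisting of the steps of the processes in $\hct$, occurring at the same times as in $H$. The point I would stress at the outset is that the invocation of $o$ (and, when present, its response) is a single step \emph{of the process $p$ that invoked $o$}, so whether this step survives the projection $H \mapsto H|{\hct}$ depends only on whether $p \in \hct$.

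For the ($\Rightarrow$) direction I would argue: if $o$ is in $H|{\hct}$, then by definition the invocation of $o$ occurs in $H|{\hct}$; since $H|{\hct}$ contains only steps of processes in $\hct$, and the invocation of $o$ is a step of $p$, it follows that $p \in \hct$; moreover every step in $H|{\hct}$ occurs in $H$ at the same time, so the invocation of $o$ occurs in $H$, i.e.\ $o$ is in $H$. For the ($\Leftarrow$) direction: if $o$ is in $H$ and $p \in \hct$, then the invocation of $o$ is a step of $p$ that occurs in $H$; by the definition of $H|{\hct}$, every step taken by a process in $\hct$ that occurs in $H$ also occurs (at the same time) in $H|{\hct}$, so the invocation of $o$ occurs in $H|{\hct}$, i.e.\ $o$ is in $H|{\hct}$.

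There is essentially no obstacle here; the only thing that needs a little care is being explicit that ``an operation is in a history'' is witnessed by its invocation event, and that this event --- together with any response event of the same operation --- is attributed to the single process $p$, so that the projection $H|{\hct}$ neither drops nor introduces operations of correct processes. I would treat this observation purely as a bookkeeping fact, to be invoked later when moving operations of correct processes (and their responses) back and forth between $H$ and $H|{\hct}$ in the Byzantine-linearizability argument.
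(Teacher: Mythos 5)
Your proposal is correct and matches the paper's (implicit) reasoning: the paper states this as an unproved observation following directly from the definition of $H|{\hct}$ as the restriction of $H$ to the steps of correct processes, and your definitional unfolding --- identifying membership of an operation with its invocation step and noting that this step belongs to the single process $p$ --- is exactly the bookkeeping that justifies it. Nothing further is needed.
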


\begin{observation}\label{sameop-mar}
For all processes $p \in {\hct}$,
    $p$ has the same operations (i.e., the same invocation and response operation steps) in both $H$ and $H|{\hct}$.
    Furthermore, the $[invocation, response]$ intervals of these operations are the same in both $H$ and $H|{\hct}$.
\end{observation}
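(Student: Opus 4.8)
The plan is to prove this observation directly from the definition of the restricted history $H|{\hct}$. Recall that $H|{\hct}$ is defined to consist of exactly the steps taken in $H$ by the processes in $\hct$, each occurring at the same time as it does in $H$; in particular, $H|{\hct}$ neither deletes any step of a correct process nor inserts any new step, and it leaves the ordering and timing of those steps untouched.

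First I would fix an arbitrary process $p \in \hct$ and argue that $p$ takes precisely the same steps, at the same times, in $H$ and in $H|{\hct}$: every step of $p$ in $H$ is a step of a correct process and is therefore present in $H|{\hct}$ at the same time, and conversely, since $H|{\hct}$ is obtained from $H$ by keeping only steps of correct processes, every step of $p$ in $H|{\hct}$ is a step of $p$ in $H$. Next, since invocation steps and response steps of $p$ are among the steps of $p$, the previous point implies that $p$ has exactly the same invocation steps and exactly the same response steps (at the same times) in both histories. Because an operation of $p$ is determined by its invocation step together with its matching response step (or by its invocation step alone, if the operation is incomplete), it follows that $p$ performs the same operations in $H$ and in $H|{\hct}$, with the same invocation and response operation steps; this gives the first claim. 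Finally, the $[invocation, response]$ interval of such an operation is by definition delimited by the times of its invocation and response steps, and those times are unchanged, so the intervals coincide in $H$ and $H|{\hct}$; this gives the second claim.

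I do not expect any genuine obstacle here: the statement is essentially a reformulation of the definition of $H|{\hct}$. The only point that warrants a line of care is the correspondence between operations --- one must note that $H|{\hct}$ does not merge, split, add, or drop operations of a correct process, which holds precisely because it preserves all and only the steps of that process, in the same order and at the same times (so an incomplete operation stays incomplete and a complete one stays complete, which is also what makes Observation~\ref{allcomplete-mar} and Observation~\ref{correctop-mar} go through).
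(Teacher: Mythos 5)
Your argument is correct and is exactly the definitional unwinding the paper intends: the paper states this as an Observation with no written proof precisely because $H|{\hct}$ keeps all and only the steps of correct processes at their original times, so the invocation/response steps and hence the operations and their intervals are unchanged. Nothing further is needed.
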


By the Definition~\ref{t0t1-mar} of $t^v_0$ and $t^v_1$, and Observation~\ref{sameop-mar}, we have the following.

\begin{observation}\label{sameip-mar}
        In $H|{\hct}$, for any value $v$,

	\begin{itemize}
		\item The max invocation time of any $\valid(v)$ operation that returns $\false$ is $t^v_0$.
		\item The min response time of any $\valid(v)$ operation that returns~$\true$ is $t^v_1$.
	\end{itemize}
\end{observation}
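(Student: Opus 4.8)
The plan is to show that restricting $H$ to ${\hct}$ neither adds nor removes any $\valid(v)$ operation by a correct reader, and leaves its invocation time, response time, and return value unchanged; the two bullets then fall out of Definition~\ref{t0t1-mar} by inspection.

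First I would fix a value $v$. By Observation~\ref{correctop-mar}, a $\valid(v)$ operation appears in $H|{\hct}$ if and only if it is a $\valid(v)$ operation by some $p \in {\hct}$ in $H$; since ${\hct}$ is exactly the set of processes that are correct in $H$, the $\valid(v)$ operations in $H|{\hct}$ are precisely the $\valid(v)$ operations by correct readers in $H$. By Observation~\ref{sameop-mar}, for each such $p$ the invocation and response steps of its operations --- hence the invocation time, the response time (which exists by Observation~\ref{allcomplete-mar}), and the returned value (recorded in the response step) --- are the same in $H$ and in $H|{\hct}$. Consequently the $\valid(v)$ operations that return $\false$ in $H|{\hct}$, paired with their invocation times, are exactly the $\valid(v)$ operations by correct readers that return $\false$ in $H$, paired with their invocation times; by Definition~\ref{t0t1-mar} (using the convention that this maximum is $0$ when no such operation exists) their maximum invocation time is $t^v_0$. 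This gives the first bullet. Symmetrically, the $\valid(v)$ operations that return $\true$ in $H|{\hct}$ coincide, with matching response times, with the $\valid(v)$ operations by correct readers that return $\true$ in $H$, so by Definition~\ref{t0t1-mar} (taking the minimum to be $\infty$ when no such operation exists) their minimum response time is $t^v_1$, giving the second bullet.

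There is no real obstacle here beyond careful bookkeeping. The only two points to keep straight are that $H|{\hct}$ contains no steps of Byzantine processes --- so the qualifier ``by a correct reader'' in Definition~\ref{t0t1-mar} is automatically matched on the $H|{\hct}$ side --- and that the empty-set conventions ($0$ for the maximum, $\infty$ for the minimum) are applied consistently on both sides, so that the equalities hold even for those values $v$ on which no correct reader's $\valid(v)$ returns $\false$ (resp.\ $\true$).
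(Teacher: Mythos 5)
Your proof is correct and follows the same route as the paper, which justifies this observation in one line by combining Definition~\ref{t0t1-mar} with Observation~\ref{sameop-mar}; your additional appeals to Observations~\ref{correctop-mar} and~\ref{allcomplete-mar} just make explicit the bookkeeping the paper leaves implicit.
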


To show that $H$ is Byzantine linearizable with respect to a SWMR \mar, 
    we must show that there is a history $H'$ such that:
    (a)~$H'|{\hct} = H|{\hct}$, and
    (b)~$H'$ is linearizable with respect to a SWMR {\mar}.

There are two cases, depending on whether the writer is {\ct} in $H$.

\subsubsection*{Case 1: the writer $p_1$ is {\ct} in $H$.}\label{case-writer-correct}
Let $H' = H|{\hct}$.
We now show that the history $H'$ is linearizable with respect to a SWMR \mar.
To do so, 
    we first define the linearization points of the operations in $H'$; 
and then we use these linearization points to define a linearization $L$ of $H'$ such that:
(a) $L$ respects the precedence relation between the operations of $H'$, and 
(b) $L$ conforms to the sequential specification of a SWMR {\mar} (which is given in Definition~\ref{def-mar}).

Note that to execute a $\Set(v)$, the {\ct} writer writes $v$ into $R^*$ at line~\ref{r-m} of $\Set(v)$.
Moreover, to execute a $\Test$, a {\ct} reader reads $R^*$ at line~\ref{read-m} of $\Test$,
    and returns the value that it reads.
Since $R^*$ is an atomic SWMR register (initialized to $v_0$), the following holds in $H$.
\begin{lemma}\label{register-mar-0}
        Let $\Test$ be an operation by a {\ct} reader that returns a value $v$ in $H$.
        Then:
        \begin{compactitem}
            \item either $v$ is the last value that the writer writes into $R^*$ at line~\ref{r-m} of a $\Set(v)$
            before the reader reads $v$ from $R^*$ at line~\ref{read-m} of the $\Test$ in $H$,
            \item or $v = v_0$ and the writer never writes into $R^*$ at line~\ref{r-m} of a $\Set(-)$ before the reader reads $v$ from $R^*$ at line~\ref{read-m} of the $\Test$ in $H$.
        \end{compactitem}

\end{lemma}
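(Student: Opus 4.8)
The plan is to argue directly from the atomicity of the underlying register $R^*$ together with the fact that a correct writer modifies $R^*$ only at line~\ref{r-m}. First I would observe that since the reader is correct, its $\Test$ operation executes line~\ref{read-m}, reads some value from $R^*$, and returns exactly that value; hence $v$ is precisely the value the reader obtains from $R^*$ at line~\ref{read-m} of this $\Test$. Let $t_r$ denote the time of this read step.

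Next, since the writer $p_1$ is correct in $H$, it executes its code as prescribed, so the only steps in which $p_1$ writes $R^*$ are executions of line~\ref{r-m}, each occurring inside a distinct $\Set(-)$ operation; moreover $p_1$'s operations are sequential, so the writes to $R^*$ occur in a well-defined temporal order. Because $R^*$ is an atomic SWMR register initialized to $v_0$ and $p_1$ is its only writer, the value returned by the read at time $t_r$ is determined by register atomicity: it is $v_0$ if no write to $R^*$ precedes $t_r$, and otherwise it is the value written by the last write to $R^*$ that precedes $t_r$.

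The proof then splits into the two cases of the statement. If some write to $R^*$ precedes $t_r$, let line~\ref{r-m} of some $\Set(v')$ be the last such write; by atomicity the read at $t_r$ returns $v'$, and since the reader returns what it read, $v = v'$. Thus $v$ is the last value the writer writes into $R^*$ at line~\ref{r-m} of a $\Set(v)$ before the reader reads $v$ at line~\ref{read-m}, which is the first alternative. If no write to $R^*$ precedes $t_r$, then the read returns the initial value, so $v = v_0$, and $p_1$ executed no line~\ref{r-m} step before $t_r$, i.e., no $\Set(-)$ reached line~\ref{r-m} before the read; this is the second alternative.

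The argument is essentially a direct appeal to register atomicity, so I do not expect a real obstacle. The only points requiring care are that we are invoking the sequential specification of the \emph{underlying} atomic register $R^*$ (not of the implemented {\mar}), and that it is the correctness of $p_1$ that guarantees every modification of $R^*$ is an execution of line~\ref{r-m} inside some $\Set(-)$ operation — an assumption that will fail in the complementary case (writer Byzantine) treated separately.
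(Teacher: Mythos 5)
Your proposal is correct and matches the paper's approach: the paper states this lemma (in the case where the writer is correct) as an immediate consequence of the atomicity of the SWMR register $R^*$, the fact that a correct writer modifies $R^*$ only at line~\ref{r-m} of $\Set(-)$, and the fact that a correct reader returns exactly the value it reads at line~\ref{read-m}. Your write-up simply spells out that same argument in more detail, including the correct observation that the writer's correctness is what makes the case split valid.
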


Since the writer is correct in $H$ and $H'=H|{\hct}$, by Observations~\ref{correctop-mar} and~\ref{sameop-mar},
    we have the following corollary.
\begin{corollary}\label{register-mar-00}
        Let $\Test$ be an operation that returns a value $v$ in $H'$.
	  Then:
        \begin{compactitem}
            \item either $v$ is the last value that the writer writes into $R^*$ at line~\ref{r-m} of a $\Set(v)$
            before the reader reads $v$ from $R^*$ at line~\ref{read-m} of the $\Test$ in $H'$,
            \item or $v = v_0$ and the writer never writes into $R^*$ at line~\ref{r-m} of a $\Set(-)$ before the reader reads $v$ from $R^*$ at line~\ref{read-m} of the $\Test$ in $H'$.
        \end{compactitem}
\end{corollary}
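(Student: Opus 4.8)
The plan is to obtain Corollary~\ref{register-mar-00} by transferring Lemma~\ref{register-mar-0}, which is stated about the full history $H$, to the sub-history $H' = H|{\hct}$. The essential point is that everything the statement refers to --- the writer's writes into $R^*$ at line~\ref{r-m}, the reader's read of $R^*$ at line~\ref{read-m}, the value returned by the $\Test$, and the relative order of these events --- is unchanged when we pass from $H$ to $H'$, because we are in the case where the writer $p_1$ is {\ct} in $H$.

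First I would observe that $R^*$ is a SWMR register whose only writer is $p_1$. Since $p_1$ is {\ct} in $H$, Observations~\ref{correctop-mar} and~\ref{sameop-mar} give that $p_1$ performs exactly the same steps at exactly the same times in $H$ and in $H'$; in particular $p_1$ writes the same values into $R^*$ at line~\ref{r-m} of the same $\Set(-)$ operations, at the same times, in both histories. Because $R^*$ is atomic and has no other writer, the value stored in $R^*$ at any given time is therefore identical in $H$ and in $H'$.

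Next I would take any $\Test$ operation in $H'$. By Observation~\ref{correctop-mar} this $\Test$ is an operation by a {\ct} reader in $H$, so by Observation~\ref{sameop-mar} its read of $R^*$ at line~\ref{read-m} occurs at the same time in $H$ and in $H'$, and by the previous paragraph that read returns the same value $v$ in both; hence the $\Test$ returns $v$ in $H'$ if and only if it returns $v$ in $H$. Likewise, a $\Set(-)$ by $p_1$ precedes the line~\ref{read-m} read of this $\Test$, and is the last such $\Set(-)$, in $H'$ exactly when the same holds in $H$. Applying Lemma~\ref{register-mar-0} to this $\Test$ in $H$ then yields precisely the two stated alternatives for $H'$.

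The only delicate point --- and it is a minor one --- is the claim that $R^*$ carries the same value at every time in $H$ and in $H'$; this is exactly where correctness of the single writer $p_1$ is used, and it follows at once from the fact that deleting the steps of Byzantine processes cannot affect a register that those processes never write. No genuine obstacle arises: the corollary is essentially a bookkeeping consequence of Lemma~\ref{register-mar-0} together with the agreement of $H$ and $H'$ on the steps of all {\ct} processes.
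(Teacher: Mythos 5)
Your proposal is correct and matches the paper's own argument: the paper derives Corollary~\ref{register-mar-00} directly from Lemma~\ref{register-mar-0} by noting that, since the writer is correct, Observations~\ref{correctop-mar} and~\ref{sameop-mar} guarantee that all the relevant steps (the writes to $R^*$ at line~\ref{r-m}, the reads at line~\ref{read-m}, and their order) are identical in $H$ and $H' = H|{\hct}$. Your extra remark about why $R^*$ holds the same value at every time in both histories is a harmless elaboration of the same point.
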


\begin{observation}\label{rmono-mar}
    In $H$, if the writer has $v \in r^*$ at time $t$,
        then $v \in r^*$ at all times $t' > t$.
\end{observation}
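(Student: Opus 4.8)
The plan is to exploit the fact that $r^*$ is a \emph{local} variable of the writer $p_1$, and that we are in Case~1, where $p_1$ is correct and therefore takes exactly the steps prescribed by its code. First I would scan Algorithm~\ref{code-mar} for every occurrence of $r^*$ on the left-hand side of an assignment executed by $p_1$: the only such occurrence is line~\ref{addtoset-m} of the $\Set(v)$ procedure, namely $r^* \gets r^* \cup \{v\}$. In the $\sign(v)$ procedure, $r^*$ appears only in the test $v \in r^*$ of line~\ref{checksign-m}, which reads but does not modify $r^*$; and the $\fresh()$ procedure does not mention $r^*$ at all.

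Next I would observe that an update of the form $r^* \gets r^* \cup \{v\}$ is monotone with respect to set inclusion: it can only insert elements into the set stored in $r^*$ and never removes any. Since $p_1$ is correct in $H$, the value of $r^*$ changes only through such updates. Hence, walking along the (infinite) sequence of steps that $p_1$ takes in $H$, the set stored in $r^*$ is non-decreasing over time, so $v \in r^*$ at time $t$ implies $v \in r^*$ at every time $t' > t$.

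I do not anticipate any real obstacle: this is essentially a syntactic inspection of the code together with the remark that set-union assignments are monotone. The one point worth stating carefully is that the argument crucially uses that the writer is correct (the hypothesis of Case~1): a Byzantine writer could of course arbitrarily reset its ``local'' state, which is exactly why the writer-Byzantine case is handled separately. One should also phrase ``the value of $r^*$ at time $t$'' as the value it holds between two consecutive steps of $p_1$, so that the step-by-step monotonicity argument is well-posed.
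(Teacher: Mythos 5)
Your proposal is correct and matches the paper's reasoning: the paper states this as an unproved Observation precisely because, as you note, the only assignment to the local variable $r^*$ is the monotone union $r^* \gets r^* \cup \{v\}$ at line~\ref{addtoset-m}, and in this case the writer is correct and so follows its code. Your added remarks about why correctness of the writer is essential and how to interpret ``the value of $r^*$ at time $t$'' are sound but not needed beyond what the paper implicitly assumes.
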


\begin{lemma}\label{writesignlinearization-mar}
    In $H$, a $\sign(v)$ returns $\success$ if and only if there is a $\Set(v)$ that precedes the $\sign(v)$.
\end{lemma}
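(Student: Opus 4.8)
In $H$, a $\sign(v)$ returns $\success$ if and only if there is a $\Set(v)$ that precedes the $\sign(v)$.

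The plan is to argue directly from the code of the $\Set(-)$ and $\sign(-)$ procedures, using that in Case~1 the writer $p_1$ is {\ct} in $H$ and therefore applies its operations one at a time, following the algorithm exactly. I would first record two facts about the code: (i)~$r^*$ is a local variable of $p_1$, initialized to $\emptyset$, and the only step in the whole algorithm that inserts a value into $r^*$ is line~\ref{addtoset-m}, which lies inside the $\Set(-)$ procedure right after line~\ref{r-m} and right before $\Set$ returns $\Done$; and (ii)~a $\sign(v)$ returns $\success$ exactly when $p_1$ finds the guard $v \in r^*$ true at line~\ref{checksign-m}, which is the first step $p_1$ takes inside that $\sign(v)$.

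For the ``if'' direction I would assume that some $\Set(v)$ precedes the $\sign(v)$. Since $p_1$ is {\ct}, during that $\Set(v)$ it executes line~\ref{addtoset-m}, inserting $v$ into $r^*$ at some time $t$; because $\Set(v)$ returns $\Done$ immediately afterwards and $\Set(v)$ precedes $\sign(v)$, the time $t$ is before the invocation of $\sign(v)$, hence before $p_1$ evaluates the guard at line~\ref{checksign-m}. By Observation~\ref{rmono-mar}, $v \in r^*$ still holds at that later time, so the guard holds and $\sign(v)$ returns $\success$ at line~\ref{success-m}. For the ``only if'' direction I would assume that the $\sign(v)$ returns $\success$; then $p_1$ finds $v \in r^*$ at line~\ref{checksign-m} of this $\sign(v)$, say at time $t'$. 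By fact~(i), since $r^*$ starts empty, there must be a $\Set(v)$ operation in which $p_1$ executes line~\ref{addtoset-m} at some time before $t'$. Because $p_1$ is {\ct} and does not interleave its own operations, this $\Set(v)$ either entirely precedes or entirely follows the $\sign(v)$; the latter is impossible, since line~\ref{addtoset-m} of the $\Set(v)$ occurs before line~\ref{checksign-m} of the $\sign(v)$. Hence $\Set(v)$ precedes $\sign(v)$, as required.

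I do not expect a genuine obstacle: the implementation makes both implications essentially immediate. The only point needing a little care is the last step of the ``only if'' direction --- passing from ``line~\ref{addtoset-m} of some $\Set(v)$ happens before line~\ref{checksign-m} of the $\sign(v)$'' to ``$\Set(v)$ precedes $\sign(v)$'' --- which uses the sequentiality of a correct process's operations to conclude that the response of the $\Set(v)$ comes before the invocation of the $\sign(v)$.
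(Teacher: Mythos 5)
Your proof is correct and takes essentially the same route as the paper's: the forward direction uses that $r^*$ starts empty and is only populated at line~\ref{addtoset-m} plus the sequentiality of the correct writer's operations, and the reverse direction uses the monotonicity of $r^*$ (Observation~\ref{rmono-mar}). No substantive differences.
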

\begin{proof}
We prove each direction separately:
\begin{compactitem}[]

    \item $[\Rightarrow]$ Assume a $\sign(v)$ returns $\success$.
        Since the writer $p_1$ is {\ct},
            $p_1$ returns $\success$ at line~\ref{success-m} of this $\sign(v)$.
        This implies $p_1$ finds that the condition $v \in r^*$ holds in line~\ref{checksign-m} of this $\sign(v)$, say at time $t_1$.
        Since $r^*$ is initially $\emptyset$,
            $p_1$ inserts $v$ into $r^*$ at line~\ref{addtoset-m} of a $\Set(v)$ at time $t_2 < t_1$.
        Since the writer $p_1$ is {\ct},
            all its operations are sequential and so this $\Set(v)$ precedes the $\sign(v)$.

    \item $[\Leftarrow]$ Assume there is a $\Set(v)$ that precedes a $\sign(v)$.
    Since the writer $p_1$ is {\ct},
        $p_1$ inserts $v$ into $r^*$ at line~\ref{addtoset-m} of the $\Set(v)$, say at time $t_1$.
    Since the $\Set(v)$ precedes the $\sign(v)$,
        $p_1$ reads $r^*$ at line~\ref{checksign-m} of the $\sign(v)$ at time $t_2 > t_1$.
    By Observation~\ref{rmono-mar},
       $p_1$ finds that the condition $v \in r^*$ holds in line~\ref{checksign-m} of the $\sign(v)$.
    Then this $\sign(v)$ returns $\success$ at line~\ref{success-m}.
        
\end{compactitem}    
\end{proof}

\begin{lemma}\label{settest-mar}
In $H$,
    \begin{enumerate}[(1)]
        \item
        If a $\valid(v)$ by a {\ct} reader returns $\false$, and
        the writer $p_1$ inserts $v$ into $R_1$ at line~\ref{setter1-m} of a {\sign(v)} at some time $t$,
        then the reader invokes this $\valid(v)$ before time $t$.
        
        \item If a $\valid(v)$ by a {\ct} reader returns $\true$ at time $t$,
	then the writer $p_1$ inserts $v$ into $R_1$ at line~\ref{setter1-m} of a {\sign(v)} before time $t$.
   \end{enumerate}
\end{lemma}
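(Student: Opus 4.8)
Throughout, recall that this is the subcase where the writer $p_1$ is correct in $H$, so by Observation~\ref{correctonlywrites1-mar} the register $R_1$ is monotone (values are only added, never removed), and a correct $p_1$ writes a value into $R_1$ only at line~\ref{setter1-m} of a $\sign(-)$ or at line~\ref{follow-m} of $\fresh()$. For part~(1) the plan is to argue by contradiction: suppose the correct reader's $\valid(v)$ returns $\false$ but the reader does not invoke it before time $t$, where $t$ is the time $p_1$ inserts $v$ into $R_1$ at line~\ref{setter1-m} of a $\sign(v)$. Since the invocation of $\valid(v)$ is a step distinct from $p_1$'s write at time $t$, the reader invokes this $\valid(v)$ strictly after time $t$; and since $p_1$ is correct, Observation~\ref{correctonlywrites1-mar} gives $v \in R_1$ at all times $\ge t$. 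Then Lemma~\ref{testreturn1-mar} (taking its ``$t$'' to be our $t$) forces this $\valid(v)$ to return $\true$, contradicting that it returns $\false$. Hence the reader invokes this $\valid(v)$ before time $t$.

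For part~(2), suppose a correct reader $p_k$'s $\valid(v)$ returns $\true$ at time $t$. From the code this return occurs at line~\ref{return1-m} with $|\set_1| \ge n-f$; since $n \ge 3f+1$ we get $|\set_1| \ge 2f+1$, so (at most $f$ processes being Byzantine) $\set_1$ contains a correct process $p_j$ at that moment. This $p_j$ was inserted into $\set_1$ at line~\ref{set1-m} at some time $t_j < t$, so Lemma~\ref{ri1beforeset1-mar} gives $v \in R_j$ at time $t_j$. It then remains to trace the presence of $v$ in the witness register of a correct process back to a $\sign(v)$ by $p_1$.

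The crux --- and the step I expect to need the most care --- is the following claim: if some correct process has $v$ in its register at a time $\tau$, then $p_1$ inserted $v$ into $R_1$ at line~\ref{setter1-m} of a $\sign(v)$ at a time $\le \tau$. To prove it, let $t_0$ be the earliest time, over all correct processes, at which a correct process first gets $v$ into its register, achieved by process $p_i$; by minimality no correct process has $v$ in its register strictly before $t_0$, and $t_0 \le \tau$. If $i \ne 1$, then $p_i$ wrote $v$ into $R_i$ at line~\ref{follow-m} of $\fresh()$, so at line~\ref{followcondition-m} it had read, at a time $< t_0$ (line~\ref{sets-m}), either $v \in r_1$ --- forcing $v \in R_1$, the register of the correct process $p_1$, at a time $< t_0$ --- or $v$ in at least $f+1$ of the $r_i$'s --- forcing, as at most $f$ are Byzantine, some correct process to have $v$ in its register at a time $< t_0$; both contradict minimality of $t_0$. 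Hence $i = 1$; and if the correct $p_1$ had obtained $v$ in $R_1$ at line~\ref{follow-m} of $\fresh()$ rather than at line~\ref{setter1-m} of a $\sign(v)$, the very same two sub-cases again contradict minimality of $t_0$. So $p_1$ inserted $v$ into $R_1$ at line~\ref{setter1-m} of a $\sign(v)$ at time $t_0 \le \tau$, proving the claim.

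Applying the claim with $\tau = t_j$ yields a $\sign(v)$ by $p_1$ that inserts $v$ into $R_1$ at line~\ref{setter1-m} at a time $\le t_j < t$, which is exactly part~(2); part~(1) was completed above. Everything outside the claim is immediate from monotonicity of correct processes' registers together with the already-established Lemmas~\ref{testreturn1-mar} and~\ref{ri1beforeset1-mar}; the only delicate bookkeeping is the ``earliest correct process to witness $v$'' argument and the dichotomy in the $\fresh()$ witnessing condition at line~\ref{followcondition-m}, so I would spell those out carefully, perhaps factoring the claim out as its own lemma since it also yields the Unforgeability property.
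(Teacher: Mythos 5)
Your proof is correct and follows essentially the same route as the paper's: part~(1) is the identical contradiction via Observation~\ref{correctonlywrites1-mar} and Lemma~\ref{testreturn1-mar}, and part~(2) uses the same ``earliest correct process to have $v$ in its register'' argument, with the same case split between line~\ref{setter1-m} and line~\ref{follow-m} and the same two subcases of the condition at line~\ref{followcondition-m}. The only difference is cosmetic: you factor the tracing-back step out as a standalone claim (which, as you note, also yields Unforgeability), whereas the paper inlines it.
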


\begin{proof}~

        \textsc{Part (1)}: Suppose, for contradiction, that:
        (1) a $\valid(v)$ by a {\ct} reader returns $\false$, and
        (2) the writer $p_1$ inserts $v$ into $R_1$ at line~\ref{setter1-m} of a {\sign(v)} at some time $t$, but
        (3) the reader invokes this $\valid(v)$ after
        time $t$.
        Since $p_1$ is {\ct},
        by (2) and Observation~\ref{correctonlywrites1-mar},
		$v \in R_1$ at all times $t' \ge t$.
        Thus,
        by (3) and Lemma~\ref{testreturn1-mar},
        this $\valid(v)$ returns $\true$
        --- a contradiction to (1).
 
        \textsc{Part (2)}: Let $\valid(v)$ be an operation by a {\ct} reader $p_k$ that returns $\true$ at time~$t$.
	This implies $p_k$ finds that the condition $|\set_1| \ge n-f$ holds in line~\ref{return1-m} of $\valid(v)$ by time $t$.
	Since $n \ge 3f+1$, $|\set_1| \ge 2f+1$ at time $t$.
	Since there are at most $f$ faulty processes,
		there are at least $f+1$ {\ct} processes in $\set_1$ of $p_k$ at time $t$.
	By Lemma~\ref{ri1beforeset1-mar},
		there are at least $f+1$ {\ct} processes $p_j$ that have $v\in R_j$ at all times $t' \ge t$ ($\star$).

	Let $t^*$ be the earliest time when a {\ct} process $p_i$ has $v \in R_i$, say this is process $p_a$.
	Then by~($\star$), $t^* \le t$.
	Since $R_a$ is initially $\emptyset$, $p_a$ inserts $v$ into $R_a$ at time $t^*$.
	There are two cases:
	\begin{itemize}
		\item Case 1: $p_a$ inserts $v$ into $R_a$ at time $t^*$ at line~\ref{setter1-m} of $\sign(v)$.
			This implies $p_a = p_1$, i.e., $p_a$ is the writer.
			Since the writer $p_1$ is {\ct} and $t^* \le t$,
					$p_1$ inserts $v$ into $R_1$ at line~\ref{setter1-m} of a {\sign(v)} before time $t$.

		\item Case 2: $p_a$ inserts $v$ into $R_a$ at time $t^*$ at line~\ref{follow-m} of $\fresh()$.
			This implies $p_a$ finds that the condition $v \in r_1$ or $|\{ r_i|~ v \in r_i\}| \ge f+1$ holds  in line~\ref{followcondition-m},
				say at time $t_a < t^*$.
			There are two cases:
			\begin{itemize}
			
				\item Case 2.1: $p_a$ finds that the condition $v \in r_1$ of line~\ref{followcondition-m} holds at time $t_a$.
					This implies that $v\in R_1 $ when $p_a$ reads $R_1$ at line~\ref{sets-m} before $t_a$.
					Since $t_a < t^*$, $v \in R_1$ before time $t^*$.
					Since the writer $p_1$ is {\ct},
						this contradicts that $t^*$ is the earliest time when a {\ct} process $p_i$ has $v\in R_i$.
					So this case is impossible.

				\item Case 2.2: $p_a$ finds that the condition $|\{ r_i|~ v \in r_i\}| \ge f+1$ holds in line~\ref{followcondition-m} at time $t_a$.
					Since there are at most $f$ faulty processes,
						there is at least one {\ct} process $p_k$ such that $v\in r_k$ at time~$t_a$. 
					This implies $p_a$ reads $v\in R_k$ at line~\ref{sets-m} before time $t_a$.
					Since $t_a < t^*$, $v \in R_k$ before time~$t^*$.
					Since $p_k$ is {\ct}, 
						this contradicts that $t^*$ is the earliest time when a {\ct} process $p_i$ has $v\in R_i$.
					So this case is impossible.	
			\end{itemize}
	\end{itemize}
\end{proof}

        By Lemma~\ref{settest-mar}(1):

\begin{corollary}\label{VerifyFalse}
    In $H$, if a $\valid(v)$ by a {\ct} reader returns $\false$,
		then there is no $\sign(v)$ that precedes this $\valid(v)$ and returns $\success$.
\end{corollary}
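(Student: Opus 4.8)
The plan is to obtain this corollary as an almost immediate consequence of Lemma~\ref{settest-mar}(1), using crucially that we are in Case~1, where the writer $p_1$ is correct in $H$.

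First I would argue by contradiction: suppose there is a $\valid(v)$ operation by a correct reader that returns $\false$, and yet some $\sign(v)$ operation precedes this $\valid(v)$ and returns $\success$. Since the writer $p_1$ is correct, it executes the $\sign(v)$ procedure exactly as coded; to reach the return of $\success$ at line~\ref{success-m}, it must first execute line~\ref{setter1-m}, i.e.\ it performs the write $R_1 \gets R_1 \cup \{v\}$ at some time $t$. Because this write occurs inside the execution interval of the $\sign(v)$ operation, and this $\sign(v)$ precedes the $\valid(v)$ (so the response of $\sign(v)$ is before the invocation of $\valid(v)$), the correct reader invokes its $\valid(v)$ after time $t$.

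Finally I would apply Lemma~\ref{settest-mar}(1): it states that if a $\valid(v)$ by a correct reader returns $\false$ and the writer inserts $v$ into $R_1$ at line~\ref{setter1-m} at time $t$, then the reader invokes this $\valid(v)$ \emph{before} time $t$. This directly contradicts the conclusion of the previous paragraph that the invocation of $\valid(v)$ is after $t$, which completes the proof.

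There is no real obstacle here; the content of the corollary is entirely carried by Lemma~\ref{settest-mar}(1) (which in turn rests on the validity/relay machinery of Lemma~\ref{testreturn1-mar} and Observation~\ref{correctonlywrites1-mar}). The only point that requires a moment of care is the use of the Case~1 hypothesis: it is precisely the correctness of $p_1$ that lets us conclude from ``$\sign(v)$ returns $\success$'' that the write to $R_1$ at line~\ref{setter1-m} actually took place; a Byzantine writer could return $\success$ without ever touching $R_1$, and the corollary as stated would not be the right tool in that case.
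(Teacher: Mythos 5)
Your proof is correct and is exactly the intended derivation: the paper states this corollary with the one-line justification ``By Lemma~\ref{settest-mar}(1)'', and your argument (the correct writer must execute line~\ref{setter1-m} before returning $\success$ at line~\ref{success-m}, so precedence of the $\sign(v)$ forces the $\valid(v)$ invocation after that write, contradicting Lemma~\ref{settest-mar}(1)) is the natural unpacking of that citation. Your closing remark about why the writer's correctness is essential here is also accurate.
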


        By Lemma~\ref{settest-mar}(2):

\begin{corollary}\label{VerifyTrue}
    In $H$, if a $\valid(v)$ by a {\ct} reader returns $\true$, 
      then there is a $\sign(v)$ that precedes or is concurrent with this $\valid(v)$ and returns $\success$.
\end{corollary}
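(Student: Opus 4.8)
The plan is to obtain this corollary as a short consequence of Lemma~\ref{settest-mar}(2) together with a timing argument; since we are in the case where the writer $p_1$ is correct, the rest is bookkeeping. I would start by fixing an arbitrary $\valid(v)$ operation by a correct reader that returns $\true$, and letting $t$ be the time at which it returns. By Lemma~\ref{settest-mar}(2), the writer $p_1$ inserts $v$ into $R_1$ at line~\ref{setter1-m} of some $\sign(v)$ execution, and it does so at some time $t' < t$.

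Next I would argue that this $\sign(v)$ returns $\success$. Because $p_1$ is correct in this case, it executes line~\ref{setter1-m} only inside the body of a $\sign(v)$ procedure, and the very next step it takes is line~\ref{success-m}, which returns $\success$; so this $\sign(v)$ does return $\success$. (Alternatively one can route through Lemma~\ref{writesignlinearization-mar}: reaching line~\ref{setter1-m} of the $\sign(v)$ means the test $v \in r^*$ succeeded at line~\ref{checksign-m}, which by Observation~\ref{rmono-mar} implies $p_1$ had previously executed a $\Set(v)$; since $p_1$ is correct its operations are sequential, so that $\Set(v)$ precedes the $\sign(v)$, and then Lemma~\ref{writesignlinearization-mar} yields that the $\sign(v)$ returns $\success$.)

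Finally I would check the ordering: line~\ref{setter1-m} occurs after the invocation of the enclosing $\sign(v)$, so the invocation of this $\sign(v)$ is at or before $t' < t$. Since the given $\valid(v)$ responds at time $t$, the invocation of the $\sign(v)$ precedes the response of the $\valid(v)$, which means the $\valid(v)$ does not precede the $\sign(v)$; by the definition of ``precedes'', this is exactly the statement that the $\sign(v)$ precedes or is concurrent with the $\valid(v)$. Combining this with the previous paragraph, there is a $\sign(v)$ that precedes or is concurrent with the given $\valid(v)$ and returns $\success$, as required.

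I do not anticipate a real obstacle, since all the substance is already packaged in Lemma~\ref{settest-mar}(2). The only point requiring a little care is to keep the timeline straight --- translating ``$p_1$ executes line~\ref{setter1-m} of a $\sign(v)$ before time $t$'' into ``the invocation of that $\sign(v)$ is before time $t$'', and hence into ``the $\valid(v)$ does not precede that $\sign(v)$''.
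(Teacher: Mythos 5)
Your proposal is correct and follows the same route as the paper, which derives this corollary directly from Lemma~\ref{settest-mar}(2); you have merely spelled out the routine details (that the correct writer's execution of line~\ref{setter1-m} is followed by the $\success$ return at line~\ref{success-m}, and that the invocation of that $\sign(v)$ being before the $\valid(v)$'s response rules out the $\valid(v)$ preceding it) that the paper leaves implicit.
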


\begin{lemma}
    \label{intersectsintervalp-mar-xing}
     In $H$, if there is a $\sign(v)$ that returns $\success$,
        then in the first $\sign(v)$\footnote{The \emph{first} $\sign(v)$ is well-defined because we make the standard assumption that each process applies operations sequentially.
            So when the writer is {\ct} in $H$, $\sign(-)$ and $\Set(-)$ operations by the writer are sequential in $H$.}
            that returns $\success$, 
        the writer $p_1$ inserts $v$ into $R_1$ at line~\ref{setter1-m} at time $t$ such that $t\in (t^v_0, t^v_1)$.   
\end{lemma}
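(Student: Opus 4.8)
The plan is to fix the first $\sign(v)$ operation that returns $\success$ --- this is well-defined because the writer $p_1$ is {\ct} in $H$ and so applies its operations sequentially --- and to let $t$ be the time at which $p_1$ executes line~\ref{setter1-m} during this operation. Such a step exists since, by the code of $\sign(-)$, an execution that reaches line~\ref{success-m} (returning $\success$) must first execute line~\ref{setter1-m} inside the ``$v \in r^*$'' branch. I would then establish $t^v_0 < t$ and $t < t^v_1$ separately and combine them, using Lemma~\ref{iexists-mar} to know that $(t^v_0,t^v_1)$ is a genuine interval.

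For the lower bound $t > t^v_0$: if $t^v_0 = 0$ it suffices to note $t > 0$, which holds because by Lemma~\ref{writesignlinearization-mar} the writer performed a $\Set(v)$ before this $\sign(v)$, so line~\ref{setter1-m} of the $\sign(v)$ occurs strictly after time $0$. If $t^v_0 > 0$, then by Definition~\ref{t0t1-mar} there is a $\valid(v)$ operation by a {\ct} reader that returns $\false$ and is invoked at time $t^v_0$; applying Lemma~\ref{settest-mar}(1) to this $\valid(v)$ and to the $\sign(v)$ at hand (whose line~\ref{setter1-m} is at time $t$) gives that the reader invokes this $\valid(v)$ before time $t$, i.e.\ $t^v_0 < t$. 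For the upper bound $t < t^v_1$: if $t^v_1 = \infty$ there is nothing to prove; otherwise, by Definition~\ref{t0t1-mar} there is a $\valid(v)$ operation by a {\ct} reader that returns $\true$ with response time $t^v_1$, and Lemma~\ref{settest-mar}(2) supplies a $\sign(v)$ operation in which $p_1$ executes line~\ref{setter1-m} at some time $t' < t^v_1$.

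The one point that needs care --- and the step I expect to be the main obstacle, such as it is --- is the passage from ``\emph{some} successful $\sign(v)$ writes $v$ into $R_1$ before $t^v_1$'' (which is all that Lemma~\ref{settest-mar}(2) gives) to ``\emph{the first} successful $\sign(v)$ writes $v$ into $R_1$ before $t^v_1$''. This $\sign(v)$ does return $\success$ (it reaches line~\ref{setter1-m} and hence line~\ref{success-m}), and since $p_1$ is {\ct} its successful $\sign(v)$ operations occur in disjoint, increasing time intervals, each executing line~\ref{setter1-m} exactly once; hence the line~\ref{setter1-m} step of the first successful $\sign(v)$ is no later than $t'$, so $t \le t' < t^v_1$. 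Beyond this, the lemma is essentially a repackaging of Lemma~\ref{settest-mar} together with the definitions of $t^v_0$ and $t^v_1$; I would just be careful to dispatch the boundary cases $t^v_0 = 0$ and $t^v_1 = \infty$ explicitly.
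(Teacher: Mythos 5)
Your proposal is correct and follows essentially the same route as the paper: fix the first successful $\sign(v)$, let $t$ be its line~\ref{setter1-m} step, and bound $t$ on each side via Lemma~\ref{settest-mar}(1) and (2) together with Definition~\ref{t0t1-mar}, handling the degenerate cases $t^v_0=0$ and $t^v_1=\infty$ separately. The ``some vs.\ first'' subtlety you flag for the upper bound is exactly the point the paper handles by observing that $t$ is the \emph{first} time $p_1$ inserts $v$ into $R_1$ at line~\ref{setter1-m}, so your resolution matches the intended argument.
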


\begin{proof}
    Suppose there is a $\sign(v)$ that returns $\success$ in $H$.
    Let $\sign(v)'$ be the first $\sign(v)$ that returns $\success$ in $H$.
    Let $t$ be the time when the writer $p_1$ inserts $v$ into $R_1$ at line~\ref{setter1-m} of $\sign(v)'$.
    Note that $t$ is the first time when the writer $p_1$ inserts $v$ into $R_1$ at line~\ref{setter1-m} in $H$.
We need to show that $t > t^v_0$ and $t < t^v_1$.

We first prove that $t > t^v_0$.
    There are two cases:
    \begin{compactitem}
        \item Case 1: no $\valid(v)$ by a {\ct} reader returns $\false$ in $H$.
            Then by Definition~\ref{t0t1-mar}, $t^v_0 = 0$, and so $t > t^v_0$.
            
         \item Case 2: some $\valid(v)$ by a {\ct} reader returns $\false$ in $H$.
            By Lemma~\ref{settest-mar}(1),
            $t$ is greater than the invocation time of this $\valid(v)$.
            So by Definition~\ref{t0t1-mar}, $t>t^v_0$.
     \end{compactitem}
    So in both cases, $t>t^v_0$.
    
We now prove that $t < t^v_1$.
    There are two cases:
        \begin{compactitem}

        \item Case 1: no $\valid(v)$ by a {\ct} reader returns $\true$ in $H$.
            Then by Definition~\ref{t0t1-mar}, $t^v_1 = \infty$, and so $t < t^v_1$.
            
         \item Case 2: some $\valid(v)$ by a {\ct} reader returns $\true$ in $H$.
            By Lemma~\ref{settest-mar}(2),
            $t$ is smaller than the response time of this $\valid(v)$.
            So by Definition~\ref{t0t1-mar}, $t < t^v_1$.
     \end{compactitem}
    So in both cases, $t < t^v_1$.
    
\end{proof}

The above Lemma~\ref{writesignlinearization-mar}, Corollaries~\ref{VerifyFalse} and~\ref{VerifyTrue},
    and Lemma~\ref{intersectsintervalp-mar-xing}, are about history $H$.
Since the writer is correct in $H$ and $H'=H|{\hct}$,
    by Observations~\ref{correctop-mar} and~\ref{sameop-mar},
    they also hold for the history $H'$, as~stated~below:

\begin{corollary}\label{register-mar-1}
    In $H'$, a $\sign(v)$ returns $\success$ if and only if there is a $\Set(v)$ that precedes the $\sign(v)$.
\end{corollary}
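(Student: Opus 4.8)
The plan is to obtain Corollary~\ref{register-mar-1} as an immediate transport of Lemma~\ref{writesignlinearization-mar} from $H$ to $H'$, exploiting the standing hypothesis of Case~1 that the writer $p_1$ is correct and the fact that $H' = H|{\hct}$. The key point is that $\sign(-)$ and $\Set(-)$ are, by the code of Algorithm~\ref{code-mar}, executed only by $p_1$, so in Case~1 every such operation is performed by a process in ${\hct}$ and is therefore ``preserved'' when we pass from $H$ to $H'$.

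First I would invoke Observation~\ref{correctop-mar}: since $p_1 \in {\hct}$, an operation by $p_1$ is in $H' = H|{\hct}$ if and only if it is in $H$. Then I would invoke Observation~\ref{sameop-mar}: each such operation has the same invocation and response steps, and the same $[\text{invocation},\text{response}]$ interval, in $H$ and in $H'$. From these two facts I derive the two equivalences that drive the proof: (i)~a $\sign(v)$ returns $\success$ in $H'$ if and only if it returns $\success$ in $H$, because the returned value is recorded in the (unchanged) response step; and (ii)~for any $\Set(v)$ and any $\sign(v)$, the former precedes the latter in $H'$ if and only if it does so in $H$, because ``precedes'' depends only on the order of the response of $\Set(v)$ and the invocation of $\sign(v)$, and both of these endpoints are unchanged.

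Finally I would chain these equivalences with Lemma~\ref{writesignlinearization-mar} applied to $H$: a $\sign(v)$ returns $\success$ in $H'$ iff it returns $\success$ in $H$ (by~(i)) iff there is a $\Set(v)$ that precedes it in $H$ (by Lemma~\ref{writesignlinearization-mar}) iff there is a $\Set(v)$ that precedes it in $H'$ (by~(ii)). I do not anticipate any genuine obstacle; the only thing needing care is to state explicitly that every step rests on $p_1$ being correct, which is precisely the assumption of Case~1, and to cite Observations~\ref{correctop-mar} and~\ref{sameop-mar} rather than leaving the bookkeeping implicit. The identical argument pattern then also transports Corollaries~\ref{VerifyFalse} and~\ref{VerifyTrue} and Lemma~\ref{intersectsintervalp-mar-xing} to $H'$.
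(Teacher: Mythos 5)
Your proposal is correct and matches the paper's own justification: the paper derives Corollary~\ref{register-mar-1} from Lemma~\ref{writesignlinearization-mar} precisely by citing Observations~\ref{correctop-mar} and~\ref{sameop-mar} together with the Case~1 assumption that the writer is correct and $H'=H|{\hct}$. Your write-up just makes explicit the two transport equivalences (preservation of responses and of the precedence relation) that the paper leaves implicit.
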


\begin{corollary}\label{VerifyFalse-xing}
    In $H'$, if a $\valid(v)$ by a {\ct} reader returns $\false$,
		then there is no $\sign(v)$ that precedes this $\valid(v)$ and returns $\success$.
\end{corollary}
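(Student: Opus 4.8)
The plan is to obtain Corollary~\ref{VerifyFalse-xing} as an immediate consequence of Corollary~\ref{VerifyFalse} --- the analogous statement for the history $H$ --- by transferring it to $H'$. Recall that we are in Case~1, where the writer $p_1$ is correct in $H$, so that $H' = H|{\hct}$. The key point is that every operation mentioned in the statement is performed by a process in ${\hct}$: the $\valid(v)$ is by a correct reader (by hypothesis), and any $\sign(v)$ is by the writer $p_1$, which is correct throughout this case. Hence Observations~\ref{correctop-mar} and~\ref{sameop-mar} apply to all of these operations.

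First I would fix a $\valid(v)$ operation by a correct reader that returns $\false$ in $H'$. By Observation~\ref{correctop-mar} this $\valid(v)$ is also an operation of $H$, and by Observation~\ref{sameop-mar} its $[invocation, response]$ interval is the same in $H$ and in $H'$; in particular it returns $\false$ in $H$. Next, suppose for contradiction that in $H'$ there is a $\sign(v)$ that returns $\success$ and precedes this $\valid(v)$. Since this $\sign(v)$ is an operation of the correct writer $p_1$, Observations~\ref{correctop-mar} and~\ref{sameop-mar} give that it is also an operation of $H$ with the same interval, so it returns $\success$ in $H$ as well, and --- because the two intervals are unchanged --- it still precedes the $\valid(v)$ in $H$. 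But then $H$ contains a $\sign(v)$ that returns $\success$ and precedes a $\valid(v)$ by a correct reader that returns $\false$, contradicting Corollary~\ref{VerifyFalse}. This contradiction proves the corollary.

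I do not anticipate any real obstacle here: the whole argument is a routine ``restrict to the correct processes'' transfer, already foreshadowed by the remark preceding the statement. The only point that needs care is to verify that both operations named in the corollary are operations of correct processes --- which is exactly why the case distinction on whether the writer is correct is made, and in Case~1 this is automatic. The same recipe (fix the correct-process operations involved, move to $H$ via Observations~\ref{correctop-mar} and~\ref{sameop-mar}, apply the $H$-version, move back) also yields Corollary~\ref{register-mar-1} and the $H'$-counterparts of Corollary~\ref{VerifyTrue} and Lemma~\ref{intersectsintervalp-mar-xing}.
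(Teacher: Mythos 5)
Your proposal is correct and is exactly the paper's argument: the paper justifies Corollary~\ref{VerifyFalse-xing} (together with Corollaries~\ref{register-mar-1}, \ref{VerifyTrue-xing}, and \ref{intersectsintervalp-xing}) by the same one-line remark that, since the writer is correct in $H$ and $H'=H|{\hct}$, Observations~\ref{correctop-mar} and~\ref{sameop-mar} transfer the $H$-statements to $H'$. You have merely spelled out the transfer in full, correctly identifying that both operations involved belong to correct processes so their presence, return values, and intervals are preserved.
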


\begin{corollary}\label{VerifyTrue-xing}
    In $H'$, if a $\valid(v)$ by a {\ct} reader returns $\true$, 
      then there is a $\sign(v)$ that precedes or is concurrent with this $\valid(v)$ and returns $\success$.
\end{corollary}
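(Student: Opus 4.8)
The plan is to obtain Corollary~\ref{VerifyTrue-xing} as a direct translation of Corollary~\ref{VerifyTrue} from the history $H$ to the history $H'$. Recall we are in Case~1, so the writer $p_1$ is {\ct} in $H$ and $H' = H|{\hct}$; hence both the $\valid(v)$ operation in question (by a {\ct} reader) and any $\sign(v)$ operation (by the {\ct} writer) appear in $H'$ exactly as they do in $H$, with the same invocation and response steps. I would use this to push the conclusion of Corollary~\ref{VerifyTrue} through to $H'$.

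Concretely: let $\valid(v)$ be an operation by a {\ct} reader $p_k$ that returns $\true$ in $H'$. By Observation~\ref{correctop-mar}, since this operation is in $H'$ and $p_k$ is {\ct}, the same operation is in $H$; and by Observation~\ref{sameop-mar} its invocation and response steps --- hence its return value $\true$ and its $[invocation, response]$ interval --- are the same in $H$ and in $H'$. Applying Corollary~\ref{VerifyTrue} to $H$ yields a $\sign(v)$ operation that returns $\success$ in $H$ and that precedes or is concurrent with this $\valid(v)$ in $H$. Since $p_1$ is {\ct} in $H$, Observation~\ref{correctop-mar} gives that this $\sign(v)$ is also in $H'$, and Observation~\ref{sameop-mar} gives that its response ($\success$) and its $[invocation, response]$ interval are unchanged. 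Because the precedence/concurrency relation between two operations is determined solely by their intervals, and both intervals are preserved in passing from $H$ to $H'$, the $\sign(v)$ still precedes or is concurrent with the $\valid(v)$ in $H'$ --- which is exactly the claim.

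The argument involves no real obstacle --- it is the same bookkeeping used to derive Corollary~\ref{VerifyFalse-xing} --- but the one point to be careful about is that \emph{both} operations involved are by {\ct} processes, so that Observation~\ref{sameop-mar} applies to each: the reader is {\ct} by hypothesis, and the writer is {\ct} because we are in Case~1. (If the writer were Byzantine, a $\sign(v)$ would not transfer faithfully between $H$ and $H'$, which is why the case where $p_1$ is Byzantine must be handled separately.) Together with Corollary~\ref{register-mar-1} and Corollary~\ref{VerifyFalse-xing}, this corollary is what will later let us verify that the linearization $L$ built for $H'$ conforms to the $\sign(-)$/$\valid(-)$ clauses of the sequential specification in Definition~\ref{def-mar}.
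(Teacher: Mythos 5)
Your proposal is correct and matches the paper's approach: the paper derives Corollary~\ref{VerifyTrue-xing} from Corollary~\ref{VerifyTrue} in exactly this way, simply noting that since the writer is correct and $H'=H|{\hct}$, Observations~\ref{correctop-mar} and~\ref{sameop-mar} transfer the statement from $H$ to $H'$. You have merely spelled out the bookkeeping (pulling the $\valid(v)$ back to $H$, applying the corollary there, and pushing the resulting $\sign(v)$ forward to $H'$ with its interval intact) that the paper leaves implicit.
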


\begin{corollary}
    \label{intersectsintervalp-xing}
     In $H'$, if there is a $\sign(v)$ that returns $\success$,
        then in the first $\sign(v)$ operation that returns $\success$, 
        the writer $p_1$ inserts $v$ into $R_1$ at line~\ref{setter1-m} at time $t$ such that $t\in (t^v_0,t^v_1)$.   
\end{corollary}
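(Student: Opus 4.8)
The plan is to obtain this corollary as a direct translation of Lemma~\ref{intersectsintervalp-mar-xing}, which states exactly the same thing for the history $H$ in place of $H'$. Recall that we are in the case where the writer $p_1$ is {\ct} in $H$ and $H' = H|{\hct}$, so $p_1 \in {\hct}$. Accordingly, the whole argument is a matter of transporting the statement of Lemma~\ref{intersectsintervalp-mar-xing} from $H$ to $H'$ using the observations that were already set up for precisely this purpose.

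First I would check that the hypothesis carries over between the two histories. By Observation~\ref{correctop-mar}, a $\sign(v)$ operation by the {\ct} writer $p_1$ occurs in $H'$ if and only if it occurs in $H$, and by Observation~\ref{sameop-mar} the invocation and response steps of $p_1$'s operations, together with their $[\textit{invocation},\textit{response}]$ intervals, are identical in $H$ and $H'$. In particular $p_1$'s $\sign(-)$ operations are sequential in $H'$ just as in $H$, so ``the first $\sign(v)$ that returns $\success$'' names the same operation in both histories, and such an operation exists in $H'$ exactly when it exists in $H$. Next I would observe that the endpoints of the target interval agree: by Observation~\ref{sameip-mar}, the maximum invocation time of any $\valid(v)$ returning $\false$ and the minimum response time of any $\valid(v)$ returning $\true$ are, in $H|{\hct}$, precisely the values $t^v_0$ and $t^v_1$ defined from $H$ in Definition~\ref{t0t1-mar}; hence the interval $(t^v_0, t^v_1)$ is the same whether it is read off from $H$ or from $H'$.

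With these two facts in hand, the argument is immediate: the write to $R_1$ at line~\ref{setter1-m} inside $p_1$'s first $\sign(v)$ that returns $\success$ is a step of the {\ct} process $p_1$, so it occurs at the same time $t$ in $H' = H|{\hct}$ as in $H$; applying Lemma~\ref{intersectsintervalp-mar-xing} to $H$ gives $t \in (t^v_0, t^v_1)$, and transferring back via the observations above yields the claim for $H'$. I do not expect a genuine obstacle here; the only point that requires care is that every ``time'' mentioned in the statement --- both the internal write at line~\ref{setter1-m} and the endpoints $t^v_0,t^v_1$ --- is determined entirely by steps of {\ct} processes, which passing to $H|{\hct}$ preserves verbatim, and this is exactly what Observations~\ref{correctop-mar}, \ref{sameop-mar}, and~\ref{sameip-mar} record.
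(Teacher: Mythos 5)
Your proposal is correct and matches the paper's approach: the paper derives this corollary (together with Corollaries~\ref{register-mar-1}, \ref{VerifyFalse-xing}, and~\ref{VerifyTrue-xing}) in one line by noting that Lemma~\ref{intersectsintervalp-mar-xing} is about $H$ and transferring it to $H' = H|{\hct}$ via Observations~\ref{correctop-mar} and~\ref{sameop-mar}, exactly as you do. Your additional care about the endpoints $t^v_0, t^v_1$ via Observation~\ref{sameip-mar} only makes explicit what the paper leaves implicit.
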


We now prove that the history $H'$ is linearizable with respect to a \mar.
First, we define the \emph{linearization point} of each operation in $H'$ as follows.
\begin{definition}~\label{testsetlinearization-mar}
Let $o$ be an operation in  $H'$.
    \begin{itemize}
        \item  If $o$ is a $\Set(-)$ operation,
        then the linearization point of $o$ is the time when the writer executes line~\ref{r-m} of $o$. 

           \item  If $o$ is a $\Test$ operation,
        then the linearization point of $o$ is the time when the reader executes line~\ref{read-m} of $o$. 

	\item  If $o$ is a $\valid(-)$ operation,
        then: \begin{compactitem}
          \item If $o$ returns $\true$,
	   the linearization point of $o$ is the time of the response step of $o$.
            \item If $o$ returns $\false$,
	   the linearization point of $o$ is the time of the invocation step of~$o$.
        \end{compactitem}

         \item If $o$ is a $\sign(v)$ operation for some value $v$, then:
            \begin{compactitem}
                \item If $o$ returns $\success$,
                    the linearization point of $o$ is the time when the writer $p_1$ inserts $v$ into $R_1$ at line~\ref{setter1-m} in $o$.

                \item If $o$ returns $\fail$, the linearization point of $o$ is the time of the response step of $o$.
            \end{compactitem}
	
\end{itemize}
\end{definition}

Note that in Definition~\ref{testsetlinearization-mar} the linearization point of every operation $o$ is between the invocation and response time of $o$. 
Thus, the following holds.
\begin{observation}\label{precedes-mar}
    If an operation $o$ precedes an operation $o'$ in $H'$,
        then the linearization point of $o$ is before the linearization point of $o'$.
\end{observation}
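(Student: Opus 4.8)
The plan is to derive Observation~\ref{precedes-mar} from a single auxiliary fact: for every operation $o$ of $H'$, the linearization point assigned to $o$ by Definition~\ref{testsetlinearization-mar} lies between the invocation time and the response time of $o$ (inclusive). Granting this fact, the observation is immediate: if $o$ precedes $o'$ in $H'$ then, by the definition of \precedes, the response of $o$ occurs before the invocation of $o'$, and therefore
\[
(\text{linearization point of }o)\ \le\ (\text{response of }o)\ <\ (\text{invocation of }o')\ \le\ (\text{linearization point of }o'),
\]
so the linearization point of $o$ is before that of $o'$.

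To establish the auxiliary fact I would argue by cases on the type of $o$ (and, where it matters, its return value), following the cases of Definition~\ref{testsetlinearization-mar}. Here it is important that we are in Case~1, where $H' = H|{\hct}$, so every operation of $H'$ is performed by a \emph{correct} process and, by Observation~\ref{allcomplete-mar}, is complete; hence both its invocation and its response steps exist, and the process executes its procedure faithfully. When $o$ is a $\Set(-)$ operation, its linearization point is the execution of line~\ref{r-m}, which a correct writer performs inside the $\Set(-)$ procedure, hence strictly between the invocation and the response of $o$. The same holds for a $\Test$ operation with line~\ref{read-m}, and for a $\sign(v)$ operation that returns $\success$ with line~\ref{setter1-m} (a correct writer that returns $\success$ executes line~\ref{setter1-m} just before line~\ref{success-m}, so this line falls inside the procedure). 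When $o$ is a $\valid(-)$ operation that returns $\true$, or a $\sign(-)$ operation that returns $\fail$, its linearization point is defined to be the response step of $o$, which trivially lies in $[\text{invocation}(o),\text{response}(o)]$; and when $o$ is a $\valid(-)$ operation that returns $\false$, its linearization point is the invocation step of $o$, again trivially in the interval.

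I do not expect any genuine obstacle: the statement is bookkeeping on top of Definition~\ref{testsetlinearization-mar}. The only point requiring care is checking, in each code-based case, that the operation actually executes the designated line (so that its linearization point is well defined and sits inside the procedure's execution interval); this is immediate because every process appearing in $H' = H|{\hct}$ is correct and completes all of its operations.
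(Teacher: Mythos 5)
Your proposal is correct and matches the paper's own justification: the paper disposes of this observation with exactly the remark that, by Definition~\ref{testsetlinearization-mar}, every operation's linearization point lies within its $[invocation,response]$ interval, from which the claim follows immediately. Your additional case-by-case check that each designated line (line~\ref{r-m}, line~\ref{read-m}, line~\ref{setter1-m}, or the invocation/response step) is actually executed inside the operation's interval is sound and merely makes explicit what the paper leaves implicit.
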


By Definition~\ref{testsetlinearization-mar} and Corollary~\ref{intersectsintervalp-xing}:
\begin{observation}\label{signl-mar}
	The linearization point $t$ of the \emph{first} $\sign(v)$ operation that returns $\success$~in~$H'$ is such that $t\in (t^v_0,t^v_1)$.
\end{observation}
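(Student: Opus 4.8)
The plan is to read the claim off directly from the linearization-point assignment in Definition~\ref{testsetlinearization-mar} together with Corollary~\ref{intersectsintervalp-xing}, which has already been established for $H'$. First I would note that the \emph{first} $\sign(v)$ operation of $H'$ that returns $\success$ is well defined whenever such an operation exists: we are in the case where the writer $p_1$ is {\ct} in $H$, so $p_1$ applies its $\Set(-)$ and $\sign(-)$ operations sequentially, and hence its $\sign(v)$ operations that return $\success$ are totally ordered in time. Call this first operation $o$, and let $t$ be the time at which $p_1$ executes line~\ref{setter1-m} inside $o$, i.e.\ the time at which $p_1$ inserts $v$ into $R_1$ during $o$.

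Next I would invoke the $\sign$-clause of Definition~\ref{testsetlinearization-mar}: since $o$ returns $\success$, the linearization point of $o$ is exactly this time $t$. So it suffices to show $t \in (t^v_0, t^v_1)$. But that is precisely the conclusion of Corollary~\ref{intersectsintervalp-xing} applied to $o$. Combining the two facts, the linearization point of $o$ lies in $(t^v_0, t^v_1)$, which is the statement of the observation; and by Lemma~\ref{iexists-mar} the interval $(t^v_0, t^v_1)$ is nonempty, so the statement is not vacuous.

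I do not expect a real obstacle here: the only step is to match the quantity named by Definition~\ref{testsetlinearization-mar} --- the time $p_1$ writes $v$ into $R_1$ in the first successful $\sign(v)$ --- with the quantity bounded by Corollary~\ref{intersectsintervalp-xing}. All of the genuine content lives upstream, in Corollary~\ref{intersectsintervalp-xing}, which is just the $H'$-restatement (via Observations~\ref{correctop-mar} and~\ref{sameop-mar}) of Lemma~\ref{intersectsintervalp-mar-xing}; that lemma in turn rests on Lemma~\ref{settest-mar}(1)--(2) --- no {\ct} $\valid(v)$ that returns $\false$ is invoked after $p_1$'s first write of $v$ into $R_1$, and no {\ct} $\valid(v)$ that returns $\true$ responds before it --- together with Definition~\ref{t0t1-mar} of $t^v_0$ and $t^v_1$. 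If a fully self-contained proof were preferred, I would inline that reasoning: if some {\ct} $\valid(v)$ returns $\false$ then Lemma~\ref{settest-mar}(1) puts its invocation before $t$, so $t > t^v_0$, and otherwise $t^v_0 = 0 < t$; symmetrically, Lemma~\ref{settest-mar}(2) forces any {\ct} $\valid(v)$ that returns $\true$ to respond after $t$, so $t < t^v_1$, and otherwise $t^v_1 = \infty$; either way $t \in (t^v_0, t^v_1)$.
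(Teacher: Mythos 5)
Your proposal matches the paper exactly: the paper derives this observation in one line ``By Definition~\ref{testsetlinearization-mar} and Corollary~\ref{intersectsintervalp-xing}'', which is precisely the identification you make between the linearization point of the first successful $\sign(v)$ (the time $p_1$ executes line~\ref{setter1-m}) and the time bounded by Corollary~\ref{intersectsintervalp-xing}. Your optional inlining of the upstream argument via Lemma~\ref{settest-mar}(1)--(2) correctly reproduces the paper's proof of Lemma~\ref{intersectsintervalp-mar-xing}, so there is nothing to add.
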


We now use the linearization points of all the operations in $H'$ to define the following linearization $L$ of $H'$:

\begin{definition}\label{hlp-mar}
	Let $L$ be a \emph{sequence} of operations such that:
	\begin{enumerate}
		\item\label{uno} An operation $o$ is in $L$ if and only if it is in $H'$.
		\item\label{due} An operation $o$ precedes an operation $o'$ in $L$ if and only if
     the linearization point of~$o$ is before the linearization point of $o'$ in $H'$.
	\end{enumerate}
\end{definition}

By Definition~\ref{hlp-mar}(\ref{uno}), $L$ is a linearization of $H'$.
By Observation~\ref{precedes-mar} and Definition~\ref{hlp-mar}(\ref{due}), $L$ respects
    the precedence relation between the operations of $H'$. More precisely:
\begin{observation}\label{precedes-l-mar}
    If an operation $o$ precedes an operation $o'$ in $H'$,
        then $o$ precedes $o'$ in $L$.
\end{observation}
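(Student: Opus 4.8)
The plan is to obtain Observation~\ref{precedes-l-mar} directly from the two facts established immediately before it: Observation~\ref{precedes-mar} and clause~(\ref{due}) of Definition~\ref{hlp-mar}. Concretely, I would begin by assuming that an operation $o$ precedes an operation $o'$ in $H'$. The first step is to apply Observation~\ref{precedes-mar}, which says that under this assumption the linearization point of $o$ (as fixed by Definition~\ref{testsetlinearization-mar}) occurs strictly before the linearization point of $o'$ in $H'$. The second step is to invoke Definition~\ref{hlp-mar}(\ref{due}), which stipulates that in the sequence $L$ an operation precedes another exactly when its linearization point is earlier; hence $o$ precedes $o'$ in $L$, which is precisely the claim.

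The only point that deserves an explicit sentence is why Observation~\ref{precedes-mar} yields a \emph{strict} inequality of linearization points, so that Definition~\ref{hlp-mar}(\ref{due}) applies without ambiguity. This holds because, as noted right after Definition~\ref{testsetlinearization-mar}, the linearization point of every operation lies between its invocation and its response. Thus, if $o$ precedes $o'$ in $H'$ — i.e., the response of $o$ is before the invocation of $o'$ — then the linearization point of $o$ is at most the response time of $o$, which is strictly before the invocation time of $o'$, which is at most the linearization point of $o'$; so the two linearization points are distinct and correctly ordered.

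I do not anticipate any genuine obstacle: the statement is an immediate corollary of the preceding two results, and the sole subtlety is the distinctness of the linearization points, which follows from the containment-in-interval property of Definition~\ref{testsetlinearization-mar}. (If one additionally wanted $L$ to be a total order even when several operations happen to share a linearization-point time, a tie-breaking convention would be needed; but such a tie can never occur between two operations related by the precedence relation, which is all that Observation~\ref{precedes-l-mar} asserts.)
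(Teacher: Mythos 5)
Your proposal is correct and matches the paper exactly: the paper derives this observation directly from Observation~\ref{precedes-mar} together with Definition~\ref{hlp-mar}(\ref{due}), which is precisely your two-step argument. Your extra remark about strictness of the ordering of linearization points is a harmless elaboration of the same point the paper makes right after Definition~\ref{testsetlinearization-mar}.
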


In the following theorems, we prove that the linearization $L$ of $H'$ conforms to the sequential specification of a SWMR \mar.

\begin{theorem}\label{rw-mar}
    In $L$,
        if a $\Test$ returns a value $v$, then:
        \begin{compactitem}
            \item either some $\Set(v)$ precedes it and this $\Set(v)$ is the last $\Set(-)$ that precedes it,
            
            \item or $v = v_0$ (the initial value of the register) and no $\Set(-)$ precedes it.
        \end{compactitem}
       
\end{theorem}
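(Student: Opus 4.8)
The plan is to read the claim directly off the linearization points fixed in Definition~\ref{testsetlinearization-mar}, using Corollary~\ref{register-mar-00} (the $H'$ version of Lemma~\ref{register-mar-0}), which already determines the value returned by a $\Test$ in $H'$ in terms of the writes to the atomic register $R^*$. Recall that here $H' = H|{\hct}$ and the writer $p_1$ is {\ct} in $H$.

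First I would note that, since $L$ contains exactly the operations of $H'$ together with their responses (Definition~\ref{hlp-mar}(\ref{uno})), a $\Test$ that returns $v$ in $L$ is an operation of $H'$ that returns $v$ there as well. By Definition~\ref{testsetlinearization-mar}, the linearization point of this $\Test$ is the instant $t_r$ at which its reader executes line~\ref{read-m} (the read of $R^*$), and the linearization point of any $\Set(u)$ appearing in $H'$ is the instant at which $p_1$ executes line~\ref{r-m} of that operation (the write of $u$ into $R^*$). Since $p_1$ is {\ct}, its $\Set(-)$ operations are sequential, so these line~\ref{r-m} instants are pairwise distinct and ordered consistently with the order of the corresponding $\Set(-)$ operations; moreover, every write to $R^*$ in $H'$ is exactly one such line~\ref{r-m} step. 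Consequently, a $\Set(u)$ precedes the $\Test$ in $L$ (i.e.\ its linearization point is before $t_r$, by Definition~\ref{hlp-mar}(\ref{due})) if and only if the write of $u$ into $R^*$ at line~\ref{r-m} of that $\Set(u)$ happens before $t_r$.

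Next I would apply Corollary~\ref{register-mar-00} to this $\Test$. In the first case, $v$ is the last value written into $R^*$ before $t_r$, written at line~\ref{r-m} of some $\Set(v)$; by the equivalence above this $\Set(v)$ precedes the $\Test$ in $L$, and if some $\Set(u)$ also preceded the $\Test$ in $L$ with its linearization point strictly after that of $\Set(v)$, then the line~\ref{r-m} step of $\Set(u)$ would be a write to $R^*$ strictly between the two and still before $t_r$, contradicting the choice of $\Set(v)$ as the source of the last write before $t_r$; hence $\Set(v)$ is the last $\Set(-)$ that precedes the $\Test$ in $L$. In the second case, $v = v_0$ and $R^*$ is not written before $t_r$; if some $\Set(-)$ preceded the $\Test$ in $L$, its line~\ref{r-m} step would be a write to $R^*$ before $t_r$, a contradiction, so no $\Set(-)$ precedes the $\Test$ in $L$.

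The whole argument is essentially bookkeeping on linearization points; the only place needing a little care is the ``last $\Set(-)$'' clause, where an intervening $\Set(u)$ must be ruled out by appealing to the atomicity of $R^*$ (through Corollary~\ref{register-mar-00}) rather than to any internal state of the implementation. I do not anticipate a real obstacle here.
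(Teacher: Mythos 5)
Your proposal is correct and follows essentially the same route as the paper's proof: both reduce the claim to Corollary~\ref{register-mar-00}, identify the linearization points of $\Set(-)$ and $\Test$ with the line~\ref{r-m} write and line~\ref{read-m} read of the atomic register $R^*$, and then transfer the resulting ordering to $L$ via Definition~\ref{hlp-mar}(2). Your explicit contradiction argument for the ``last $\Set(-)$'' clause just spells out a step the paper asserts directly.
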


\begin{proof}
    Suppose a $\Test$ returns a value $v$ in $L$.
    By the Definition~\ref{hlp-mar}(1) of $L$,
        this $\Test$ is also in~$H'$.
    By Corollary~\ref{register-mar-00}, there are two cases:
    \begin{compactitem}
        \item Case 1: $v$ is the last value that the writer writes into $R^*$ at line~\ref{r-m} of a $\Set(v)$
            before the reader reads $v$ from $R^*$ at line~\ref{read-m} of the $\Test$ in~$H'$.
            So by Definition~\ref{testsetlinearization-mar},
                the linearization point $t$ of this $\Set(v)$ is before the linearization point $t'$ of the $\Test$, and there is no other $\Set(-)$ with a linearization point between $t$ and $t'$.     
             Thus, by the Definition~\ref{hlp-mar}(2) of $L$,
                 this $\Set(v)$ precedes the $\Test$ in $L$, 
                 and it is the last $\Set(-)$ that precedes this $\Test$ in $L$.
        \item Case 2: $v = v_0$ and the writer never writes into $R^*$ at line~\ref{r-m} of a $\Set(-)$ before the reader reads $v$ from $R^*$ at line~\ref{read-m} of the $\Test$ in $H'$. 
        By Definition~\ref{testsetlinearization-mar},
            there is no $\Set(-)$ with a linearization point before the linearization point of this $\Test$.
        So by the Definition~\ref{hlp-mar}(2) of $L$,
            no $\Set(-)$ precedes this $\Test$ in $L$.        
    \end{compactitem}
    Therefore, in $L$,
        either a $\Set(v)$ precedes the $\Test$ and it is the last $\Set(-)$ that precedes this $\Test$,
        or $v = v_0$ and no $\Set(-)$ precedes the $\Test$.
\end{proof}

\begin{theorem}\label{ws-mar}
    In $L$, 
	 a $\sign(v)$ returns $\success$ if and only if there is a $\Set(v)$ that precedes this $\sign(v)$.
\end{theorem}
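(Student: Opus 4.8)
The plan is to reduce everything to Corollary~\ref{register-mar-1}, which already establishes the desired equivalence for the history $H'$, and then to transport it to the linearization $L$ using the two facts we have on hand: $L$ contains exactly the operations of $H'$ with the same responses (Definition~\ref{hlp-mar}), and the order of $L$ is consistent with $H'$-precedence (Observation~\ref{precedes-l-mar}).

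For the forward direction, suppose a $\sign(v)$ returns $\success$ in $L$. Since $L$ is a linearization of $H'$, this operation is in $H'$ and returns $\success$ there, so by Corollary~\ref{register-mar-1} there is a $\Set(v)$ that precedes this $\sign(v)$ in $H'$. By Observation~\ref{precedes-l-mar}, that $\Set(v)$ precedes the $\sign(v)$ in $L$ as well, which is what we need.

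For the backward direction, suppose some $\Set(v)$ precedes a $\sign(v)$ in $L$. The key point is that both operations are performed by the writer $p_1$, and we are in Case~1, where $p_1$ is correct in $H$; hence $p_1$'s operations are sequential in $H'$, so either $\Set(v)$ precedes $\sign(v)$ in $H'$ or $\sign(v)$ precedes $\Set(v)$ in $H'$. In the second case, Observation~\ref{precedes-l-mar} would force $\sign(v)$ to precede $\Set(v)$ in the sequence $L$, contradicting the hypothesis that $\Set(v)$ precedes $\sign(v)$ in $L$. Therefore $\Set(v)$ precedes $\sign(v)$ in $H'$, and Corollary~\ref{register-mar-1} gives that this $\sign(v)$ returns $\success$.

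The only subtle point — and hence the step I would be most careful about — is the backward direction: one cannot directly conclude ``$\Set(v)$ precedes $\sign(v)$ in $H'$'' from ``$\Set(v)$ precedes $\sign(v)$ in $L$'', because a linearization may reorder concurrent operations. What makes it work is that $\Set(-)$ and $\sign(-)$ can only be invoked by the writer, who is correct in Case~1, so these two operations are already totally ordered in $H'$, and the consistency of $L$ with that order rules out the only bad possibility. (In the Case~2 proof, where the writer may be Byzantine, a different argument will be needed, but that is handled separately.)
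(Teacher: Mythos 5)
Your proposal is correct and follows essentially the same route as the paper: both directions reduce to Corollary~\ref{register-mar-1}, and the backward direction hinges on the same key fact that the writer's $\Set(v)$ and $\sign(v)$ are sequential in $H'$ (the paper phrases this step directly via linearization points, while you argue by contradiction via Observation~\ref{precedes-l-mar}, but the content is identical). Your closing remark correctly identifies the subtle point, which is exactly the one the paper's proof also addresses.
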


\begin{proof}
We prove each direction separately:
    \begin{compactitem}[]

    \item $[\Rightarrow]$ Suppose a $\sign(v)$ returns $\success$ in $L$.
    By Definition~\ref{hlp-mar}(1),
        this $\sign(v)$ is also in $H'$.
    By Corollary~\ref{register-mar-1}, there is a $\Set(v)$ that precedes this $\sign(v)$ in $H'$.
    Thus, by Observation \ref{precedes-l-mar},  
        there is a $\Set(v)$ that precedes this $\sign(v)$ in $L$. 
        
    \item $[\Leftarrow]$
    Suppose a $\Set(v)$ precedes a $\sign(v)$ in $L$.
    By Definition~\ref{hlp-mar},
        (a) these $\Set(v)$ and $\sign(v)$ operations are also in $H'$, and
        (b) the linearization point of the $\Set(v)$ is before the linearization point of the $\sign(v)$ in $H'$.
    Since the writer $p_1$ is {\ct},
        the $\Set(v)$ and the $\sign(v)$ are sequential in $H'$.
    Thus, since the linearization point of every operation is between the invocation and response time of the operation, and the linearization point of the $\Set(v)$ is before the linearization point of the $\sign(v)$ in $H'$,
        $\Set(v)$ precedes $\sign(v)$ in $H'$.       
    By Corollary~\ref{register-mar-1}, the $\sign(v)$ returns $\success$ in $H'$.
                Therefore it also returns $\success$ in $L$.
    \end{compactitem}
\end{proof}

\begin{lemma}\label{lsv1-mar}
        In $L$,
		if a $\valid(v)$ returns~$\false$,
            then there is no $\sign(v)$ that returns $\success$ and precedes this $\valid(v)$.
\end{lemma}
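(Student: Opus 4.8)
The plan is to argue by contradiction, using the linearization points fixed in Definition~\ref{testsetlinearization-mar} together with the bound on the linearization point of the \emph{first} $\sign(v)$ that returns $\success$ given by Observation~\ref{signl-mar}. So I would suppose that in $L$ there is a $\valid(v)$ operation that returns $\false$ and some $\sign(v)$ operation $\sigma$ that returns $\success$ and precedes this $\valid(v)$ in $L$, and derive a contradiction. Note that here precedence in $L$ is governed by linearization points (Definition~\ref{hlp-mar}), \emph{not} by precedence in $H'$, so Corollary~\ref{VerifyFalse-xing} cannot be applied directly; the argument must go through the times $t^v_0,t^v_1$ of Definition~\ref{t0t1-mar}.

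First I would locate the linearization point of the $\valid(v)$. Since it returns $\false$, by Definition~\ref{testsetlinearization-mar} its linearization point is its invocation time; call it $t$. Because every operation in $H' = H|\hct$ is performed by a correct process, this $\valid(v)$ is a $\valid(v)$ by a correct reader in $H'$, so by the definition of $t^v_0$ (as restated for $H'$ in Observation~\ref{sameip-mar}) we have $t \le t^v_0$.

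Next I would push this bound down to the first $\sign(v)$ that returns $\success$ in $H'$, call it $\sigma^*$ (it exists since $\sigma$ witnesses that some $\sign(v)$ returns $\success$, and the writer is correct in this case). Because the writer $p_1$ is correct, all of its $\sign(v)$ operations are sequential in $H'$, so $\sigma^*$ is either $\sigma$ itself or precedes $\sigma$ in $H'$; and since the linearization point of every operation lies inside its own $[\text{invocation},\text{response}]$ interval, the linearization point of $\sigma^*$ is at most that of $\sigma$. Since $\sigma$ precedes the $\valid(v)$ in $L$, by Definition~\ref{hlp-mar} the linearization point of $\sigma$ is strictly before $t$. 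Chaining these inequalities, the linearization point of $\sigma^*$ is strictly less than $t \le t^v_0$.

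But by Observation~\ref{signl-mar}, the linearization point of $\sigma^*$ lies in the open interval $(t^v_0,t^v_1)$, hence is strictly greater than $t^v_0$ --- a contradiction. Therefore no $\sign(v)$ that returns $\success$ precedes the $\valid(v)$ in $L$, which is what the lemma asserts. I do not expect a genuine obstacle here; the only points requiring care are (i) keeping the distinction between precedence in $L$ and precedence in $H'$ throughout, and (ii) remembering to invoke Observation~\ref{signl-mar} for the \emph{first} $\sign(v)$ that returns $\success$ rather than for $\sigma$ directly, which is why the sequentiality of the correct writer's operations is needed to relate the two.
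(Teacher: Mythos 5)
Your proof is correct and follows essentially the same route as the paper's: both reduce to the \emph{first} $\sign(v)$ returning $\success$, bound its linearization point below by $t^v_0$ via Observation~\ref{signl-mar}, and contradict the bound $t \le t^v_0$ on the invocation/linearization point of the failing $\valid(v)$ from Observation~\ref{sameip-mar}. The only cosmetic difference is that the paper passes from $\sigma$ to $\sigma^*$ by citing Observation~\ref{precedes-mar} directly, whereas you spell out the writer's sequentiality to justify the same inequality.
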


\begin{proof}
	Assume for contradiction that,
		in $L$,
		a $\valid(v)$ returns $\false$ and there is a $\sign(v)$ that returns $\success$ and precedes this $\valid(v)$.
        By Definition~\ref{hlp-mar},
        these $\valid(v)$ and $\sign(v)$ operations are also in $H'$,
        and the linearization point $t$ of the $\sign(v)$ is before the linearization point $t'$ of the $\valid(v)$ in $H'$.
        So $t < t'$.
        Let $\sign(v)'$ denote the \emph{first} $\sign(v)$ that returns $\success$ in $H'$ 
            and let $t^*$ be the linearization point of $\sign(v)'$.
        By Observation~\ref{precedes-mar},
            $t^* \le t$.
        So $t^* < t'$.
        By Observation~\ref{signl-mar}, $t^*\in(t^v_0,t^v_1)$.
        Thus, $t^v_0 < t^* < t' $ ($\star$).
        Since the $\valid(v)$ returns $\false$:
            (a) by Definition~\ref{testsetlinearization-mar},
            $t'$ is the time of the invocation step of $\valid(v)$ in $H'$, and
            (b) since $H'=H|{\hct}$, by Observation~\ref{sameip-mar},
            $t' \le  t^v_0$ --- a contradiction to ($\star$).
   \end{proof}

\begin{lemma}\label{lsv2-mar}
	In $L$, 
	if a $\valid(v)$ returns $\true$,
	then there is a $\sign(v)$ that returns $\success$ and precedes this $\valid(v)$.
\end{lemma}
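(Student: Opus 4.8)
The plan is to follow the same template as the proof of Lemma~\ref{lsv1-mar}, but now exploiting the \emph{upper} endpoint $t^v_1$ of the interval $(t^v_0,t^v_1)$ rather than the lower endpoint. Suppose a $\valid(v)$ returns $\true$ in $L$. By Definition~\ref{hlp-mar}(1) this same $\valid(v)$ is an operation of $H'$, and it returns $\true$ there as well. First I would invoke Corollary~\ref{VerifyTrue-xing} to conclude that there \emph{exists} a $\sign(v)$ that returns $\success$ in $H'$; this is what legitimizes talking about ``the first $\sign(v)$ that returns $\success$'' in the argument that follows.

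Next, let $\sign(v)'$ be the first $\sign(v)$ that returns $\success$ in $H'$, and let $t^*$ be its linearization point. By Observation~\ref{signl-mar}, $t^* \in (t^v_0,t^v_1)$, so in particular $t^* < t^v_1$. On the other side, since the given $\valid(v)$ returns $\true$, Definition~\ref{testsetlinearization-mar} puts its linearization point $t'$ at the time of its response step in $H'$; and because $H' = H|\hct$, Observation~\ref{sameip-mar} gives $t' \ge t^v_1$, as $t^v_1$ is the \emph{minimum} response time over all $\valid(v)$ operations that return $\true$. Chaining these, $t^* < t^v_1 \le t'$.

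Finally, since the linearization point of $\sign(v)'$ is strictly before the linearization point of this $\valid(v)$, Definition~\ref{hlp-mar}(2) yields that $\sign(v)'$ precedes the $\valid(v)$ in $L$; and $\sign(v)'$ returns $\success$, so it is the required $\sign(v)$. I expect the only point requiring care — not really an obstacle — is to discharge the existence claim (via Corollary~\ref{VerifyTrue-xing}) \emph{before} referring to ``the first $\sign(v)$ that returns $\success$'', and to keep straight which endpoint of $(t^v_0,t^v_1)$ is in play: here it is $t^v_1$, the response side, because the linearization point of a $\true$-returning $\valid$ is its response; this is exactly the mirror image of the use of $t^v_0$ in the proof of Lemma~\ref{lsv1-mar}.
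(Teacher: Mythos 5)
Your proposal is correct and matches the paper's own proof essentially step for step: establish existence of a successful $\sign(v)$ via Corollary~\ref{VerifyTrue-xing}, place the linearization point of the first such $\sign(v)$ in $(t^v_0,t^v_1)$ via Observation~\ref{signl-mar}, bound the $\valid(v)$'s linearization point below by $t^v_1$ via Definition~\ref{testsetlinearization-mar} and Observation~\ref{sameip-mar}, and conclude with Definition~\ref{hlp-mar}(2). No gaps.
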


\begin{proof}
	Suppose in $L$, a $\valid(v)$ returns $\true$.
        By the Definition~\ref{hlp-mar}(1) of $L$,
            this $\valid(v)$ is also in $H'$.
        By Definition~\ref{testsetlinearization-mar},
            the linearization point $t$ of $\valid(v)$ is the time of the response step of $\valid(v)$ in $H'$.
        Since $H'=H|{\hct}$,
            by Observation~\ref{sameip-mar},
            $t \ge t^v_1$ in $H'$~($\star$).  

        Since the $\valid(v)$ returns $\true$ in $H'$,
            by Corollary~\ref{VerifyTrue-xing},
            there is a $\sign(v)$ operation that returns $\success$ in $H'$.
        Let $\sign(v)'$ be the first $\sign(v)$ operation that returns $\success$ in $H'$.
        By Observation~\ref{signl-mar},
            the linearization point $t'$ of $\sign(v)'$ is in $(t^v_0,t^v_1)$.
        By ($\star$),
            $t' < t$. 
        Thus, by the Definition~\ref{hlp-mar}(2) of $L$, $\sign(v)'$ precedes this $\valid(v)$ in~$L$.
\end{proof}

By Lemmas~\ref{lsv1-mar} and~\ref{lsv2-mar},
    we have the following theorem.

\begin{theorem}\label{sv-mar}
    In $L$, 
	a $\valid(v)$ returns $\true$ if and only if
	there is a $\sign(v)$ that returns $\success$ and precedes this $\valid(v)$.
\end{theorem}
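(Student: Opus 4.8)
The statement is a biconditional about the operations appearing in the linearization $L$, and its two directions coincide exactly with the two lemmas just proved, so the plan is simply to assemble them. The forward direction — if a $\valid(v)$ returns $\true$ in $L$, then some $\sign(v)$ returning $\success$ precedes it in $L$ — is literally the content of Lemma~\ref{lsv2-mar}, so nothing further is needed there.

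For the reverse direction — if a $\sign(v)$ that returns $\success$ precedes a $\valid(v)$ in $L$, then that $\valid(v)$ returns $\true$ — I would argue by contraposition from Lemma~\ref{lsv1-mar}. First observe that every operation that occurs in $L$ is an operation of $H'$ (by the definition of $L$ in Definition~\ref{hlp-mar}), that every operation of $H' = H|{\hct}$ is complete (Observation~\ref{allcomplete-mar}, which itself rests on the Termination theorem, Theorem~\ref{termination-mar}), and that by inspection of the $\valid(-)$ procedure a completed $\valid(v)$ execution returns exactly one of $\true$ or $\false$. Hence a $\valid(v)$ appearing in $L$ that does \emph{not} return $\true$ must return $\false$. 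Now suppose, toward a contradiction, that in $L$ a $\sign(v)$ returning $\success$ precedes a $\valid(v)$ but this $\valid(v)$ does not return $\true$; then it returns $\false$, and Lemma~\ref{lsv1-mar} tells us that no $\sign(v)$ returning $\success$ precedes it in $L$ — contradicting the hypothesis. Combining the two directions gives the theorem; together with Theorems~\ref{rw-mar} and~\ref{ws-mar} this establishes that $L$ conforms to the sequential specification of a SWMR {\mar} (Definition~\ref{def-mar}).

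There is essentially no hard step here: the only point requiring a moment's care is the implicit appeal to the totality of the return value of $\valid(-)$ — i.e., that every completed $\valid$ execution returns precisely one Boolean — which is exactly why one must route through the completeness of the operations of $H'$ rather than reasoning about arbitrary histories (a Byzantine reader could in principle ``return'' nothing, but in Case~1, where $H' = H|{\hct}$, no such operation appears in $L$). I therefore expect the write-up of this theorem to be a two-line corollary of the preceding two lemmas.
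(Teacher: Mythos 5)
Your proposal is correct and matches the paper exactly: the paper derives Theorem~\ref{sv-mar} directly from Lemmas~\ref{lsv1-mar} and~\ref{lsv2-mar}, using Lemma~\ref{lsv2-mar} for the forward direction and (implicitly) the contrapositive of Lemma~\ref{lsv1-mar} for the reverse. Your explicit justification of the totality of the return value via completeness of operations in $H'=H|{\hct}$ is a detail the paper leaves implicit, but it is the same argument.
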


\begin{theorem}\label{linearthe-mar-case1}
    If the writer is correct in $H$, the history $H$ is Byzantine linearizable with respect to a SWMR {\mar}.
\end{theorem}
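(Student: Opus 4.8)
The plan is to produce the history $H'$ demanded by Definition~\ref{def-hbl} and check its two requirements: that $H'|\hct = H|\hct$, and that $H'$ is linearizable with respect to a SWMR {\mar} in the sense of Definition~\ref{LinearizableImplementationCrash}. Since we are in the case where the writer $p_1$ is {\ct} in $H$, I take $H' = H|\hct$, the same history used throughout this subsection. The first requirement is then immediate: every step in $H' = H|\hct$ belongs to a {\ct} process, so $H'|\hct = H' = H|\hct$.

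For the second requirement I would use the total order $L$ built in Definition~\ref{hlp-mar} from the linearization points of Definition~\ref{testsetlinearization-mar}. A completion of $H'$ is needed, but by Observation~\ref{allcomplete-mar} every operation in $H'$ is already complete, so $H'$ is its own completion. By construction $L$ contains exactly the operations of $H'$, so it is a linearization of $H'$; by Observation~\ref{precedes-l-mar} it respects the precedence relation of $H'$; and by Theorems~\ref{rw-mar}, \ref{ws-mar}, and \ref{sv-mar} it conforms to the sequential specification of a SWMR {\mar} (Definition~\ref{def-mar}) --- Theorem~\ref{rw-mar} for the value returned by a $\Test$ relative to the preceding $\Set(-)$ operations, Theorem~\ref{ws-mar} for a $\sign(-)$ relative to the preceding $\Set(-)$ operations, and Theorem~\ref{sv-mar} for a $\valid(-)$ relative to the preceding $\sign(-)$ operations. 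The remaining clauses of Definition~\ref{def-mar} are trivial or are the contrapositives of these three theorems: $\Set(-)$ always returns $\done$, a $\sign(v)$ returns $\fail$ exactly when no $\Set(v)$ precedes it, and a $\valid(v)$ returns $\false$ exactly when no successful $\sign(v)$ precedes it. Hence $H'$ is linearizable with respect to a SWMR {\mar}, and since also $H'|\hct = H|\hct$, Definition~\ref{def-hbl} gives that $H$ is Byzantine linearizable with respect to a SWMR {\mar}.

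At this point all the substantive work is already done in the preceding lemmas and theorems, so the remaining step is pure assembly; the only thing to be careful about is that Theorems~\ref{rw-mar}, \ref{ws-mar}, and \ref{sv-mar} between them cover \emph{every} clause of Definition~\ref{def-mar} (including the ``negative'' responses) and that the linearization points assigned in Definition~\ref{testsetlinearization-mar} each lie within their operation's interval, which is exactly what makes $L$ precedence-preserving. Both points have been handled above (the latter is noted immediately after Definition~\ref{testsetlinearization-mar}), so for the case of a correct writer the theorem follows by combining these ingredients.
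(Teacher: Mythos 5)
Your proof is correct and follows essentially the same route as the paper's: take $H'=H|\hct$, use the linearization $L$ of Definition~\ref{hlp-mar}, invoke Observation~\ref{precedes-l-mar} for precedence and Theorems~\ref{rw-mar}, \ref{ws-mar}, \ref{sv-mar} for conformance to the sequential specification. Your added remarks about completion (via Observation~\ref{allcomplete-mar}) and about the ``if and only if'' form of the three theorems covering the negative responses are accurate but not needed beyond what the paper already records.
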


\begin{proof}
Assume that the writer $p_1$ is correct in $H$.
    To prove that $H$ is
        Byzantine linearizable with respect to a SWMR {\mar} (Definition~\ref{def-mar}),
    we must prove that there is a history $H'$ such that:
     \begin{compactenum}[(1)]
        \item $H'|{\hct} = H|{\hct}$.
        \item $H'$ is linearizable with respect to a SWMR {\mar}. 
    \end{compactenum}
    
     Let $H'= H|{\hct}$.
        Clearly, $H'$ satisfies~(1).   
        
    Let $L$ be the linearization of $H'$ given in Definition~\ref{hlp-mar}.
    \begin{compactenum}[(a)]
        \item By Observation~\ref{precedes-l-mar}, $L$ respects the precedence relation between the operations of $H'$.
        \item By Theorems~\ref{rw-mar},~\ref{ws-mar}, and~\ref{sv-mar}, $L$ conforms to the sequential specification of a SWMR {\mar}.
        \end{compactenum}
    Thus, $H'$ also satisfies~(2).
\end{proof}

\subsubsection*{Case 2: the writer $p_1$ is not {\ct} in $H$.}\label{case-writer-not-correct}

Recall that
$H$ is an arbitrary history of the implementation Algorithm~\ref{code-mar}, and
    {\hct} is the set of processes that are correct in $H$.
To show that $H$ is Byzantine linearizable with respect to a SWMR {\mar},
    we must prove that there is a history $H'$ such that (1) $H'|{\hct} =H|{\hct}$ and 
(2)~$H'$ is linearizable with respect to a SWMR \mar.

We construct $H'$ from $H$ as follows.
We start from $H|{\hct}$.
    Since the writer is faulty in $H$,
    $H|{\hct}$ \emph{does not contain any operation by the writer};
    in contrast, $H|{\hct}$ contains all the $\Test$ and $\valid(-)$ operations by the \emph{correct} processes in $H$.
So to ``justify'' the responses of these $\Test$ and $\valid(-)$ operations,
    we must add some $\Set(-)$ and $\sign(-)$ operations by the writer to $H|{\hct}$.
And we must add them such that
    resulting history conforms to the sequential specification of a {\mar}. 
To do so, for every $\valid(v)$ that returns $\true$ in $H|{\hct}$,
    we add a $\sign(v)$ that returns $\success$;
    this $\sign(v)$ is added after the \emph{invocation} of the \emph{last} $\valid(v)$ that returns $\false$ and
        before the \emph{response} of the \emph{first} $\valid(v)$ that returns $\true$.
Then, for every $\Test$ operation that returns $v$ in $H|{\hct}$, and also
      for every $\sign(v)$ operation that we added,
      we also add a $\Set(v)$ operation ``just before'' this operation in $H|{\hct}$.
    We describe this construction more precisely below.

\begin{definition}\label{steps-mar-2}
Let $H'$ be the history constructed from $H$ as follows.
\begin{enumerate}[Step 1:]
	\item\label{s1} $H' = H|{\hct}$.

    \item\label{addsign} 
           For every value $v$ such that a $\valid(v)$ returns $\true$ in the history $H'$ constructed in Step 1,
		we add to $H'$ a $\sign(v)$ that returns $\success$ such that
            the $[invocation,response]$ interval of this $\sign(v)$ is within the interval $(t^v_0, t^v_1)$.

 \item\label{addset} 
            For each operation $o$ such that $o$ is a $\Test$ that returns some value $v$, or a $\sign(v)$ operation that returns $\success$ in the history $H'$ constructed in Step 2,
		  we add  to $H'$ a $\Set(v)$ such that
            the $[invocation,response]$ interval of the $\Set(v)$
            \emph{immediately precedes} $o$.\footnote{An operation $o'$\emph{immediately precedes} an operation $o$
        if and only if $o'$ precedes $o$, and there is no other operation $o^*$
	such that $o'$ precedes $o^*$ and $o^*$ precedes~$o$.}

\end{enumerate}

\end{definition}

    Note that in the above construction we must ensure that all the $\Set(-)$ and $\sign(-)$ operations that we add to $H'$ are sequential (because they are ``executed'' by the same process, namely the writer). We can do so by ``shrinking'' the $[invocation,response]$ interval of every added $\Set(-)$ and $\sign(-)$ as much as necessary to achieve this.

\begin{observation}\label{cc-mar-2}
    $H'|{\hct} = H|{\hct}$.
\end{observation}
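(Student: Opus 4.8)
The plan is to show that $H'$ is obtained from $H|{\hct}$ purely by \emph{adding operations of the writer $p_1$}, and then to exploit the fact that in Case~2 the writer is not correct, so that restricting $H'$ back to ${\hct}$ deletes exactly those added operations and recovers $H|{\hct}$.

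First I would walk through the construction in Definition~\ref{steps-mar-2} step by step. In Step~\ref{s1} we set $H' = H|{\hct}$, so at that point the claim is immediate. In Step~\ref{addsign} every operation added is a $\sign(-)$ operation, which by definition is an operation of the writer $p_1$; in Step~\ref{addset} every operation added is a $\Set(-)$ operation, again an operation of $p_1$. Hence every step of $H'$ that is not already a step of $H|{\hct}$ belongs to the writer $p_1$. I would also note that the interval-shrinking used (in the remark following Definition~\ref{steps-mar-2}) to serialize the added $\Set(-)$ and $\sign(-)$ operations only repositions the invocation and response steps \emph{of $p_1$}, and never changes when any process in ${\hct}$ takes a step.

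Next I would apply the definition of the restriction operator: $G|{\hct}$ consists of the steps of exactly the processes in ${\hct}$, at the times they occur in $G$. Since we are in Case~2, the writer $p_1$ is not correct in $H$, i.e., $p_1 \notin {\hct}$. Therefore $H'|{\hct}$ contains none of the steps of the $\sign(-)$ and $\Set(-)$ operations added in Steps~\ref{addsign} and~\ref{addset}, and it retains precisely the steps of ${\hct}$-processes already present in $H|{\hct}$ (at the same times). Since $H|{\hct}$ itself contains only steps of processes in ${\hct}$, we have $(H|{\hct})|{\hct} = H|{\hct}$, and hence $H'|{\hct} = H|{\hct}$.

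I do not expect a genuine obstacle here; the only point needing a word of care is confirming that the interval-shrinking does not perturb correct processes, but this is immediate because shrinking an interval of a $\Set(-)$ or $\sign(-)$ operation moves only $p_1$'s own steps, and $p_1 \notin {\hct}$.
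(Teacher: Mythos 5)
Your argument is correct and is exactly the (implicit) justification the paper relies on for this observation: $H'$ is built from $H|{\hct}$ by adding only $\Set(-)$ and $\sign(-)$ operations of the writer $p_1$, and since $p_1\notin{\hct}$ in this case, restricting to ${\hct}$ strips precisely those additions. The paper states this as an unproved observation, and your write-up fills in the same routine reasoning, including the correct side remark that the interval-shrinking only moves steps of $p_1$.
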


To prove that $H$ is Byzantine linearizable, it now suffices to show that $H'$ is linearizable with respect to a SWMR \mar.
\begin{observation}\label{validfromcorrect-mar-2}
   A $\Test$ or $\valid(-)$ operation is in $H'$ if and only if it is in $H|{\hct}$.
\end{observation}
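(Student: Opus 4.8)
The plan is to read the claim directly off the construction of $H'$ given in Definition~\ref{steps-mar-2}. That construction has three steps: Step~1 sets $H' = H|{\hct}$, so immediately after Step~1 the operations appearing in $H'$ are \emph{exactly} those appearing in $H|{\hct}$. Step~2 only \emph{inserts} $\sign(-)$ operations (one $\sign(v)$ for each value $v$ for which some $\valid(v)$ returns $\true$), and Step~3 only \emph{inserts} $\Set(-)$ operations. Neither of these two steps deletes any operation, and neither of them inserts a $\Test$ or a $\valid(-)$ operation.

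From this, both directions of the ``if and only if'' are immediate. For the ``only if'' direction: if an operation $o$ that is a $\Test$ or a $\valid(-)$ operation belongs to $H'$, then $o$ cannot have been added in Step~2 or Step~3 (those add only $\sign(-)$ and $\Set(-)$ operations), so $o$ was already present in $H'$ after Step~1; hence $o$ is in $H|{\hct}$. For the ``if'' direction: if $o$ is a $\Test$ or $\valid(-)$ operation in $H|{\hct}$, then $o$ is in $H'$ after Step~1, and since Steps~2 and~3 never remove operations, $o$ remains in $H'$.

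There is essentially no obstacle here: the only point to verify is that Definition~\ref{steps-mar-2} genuinely never removes operations and never introduces a $\Test$ or $\valid(-)$ operation, which is immediate from how Steps~2 and~3 are phrased. Accordingly I would present the proof as two one-sentence implications, each citing Definition~\ref{steps-mar-2}, rather than as a longer argument.
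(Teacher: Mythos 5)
Your proposal is correct and matches the paper's reasoning: the paper states this as an Observation with no written proof precisely because it is immediate from Definition~\ref{steps-mar-2}, whose Step~1 sets $H'=H|{\hct}$ and whose Steps~2 and~3 only insert $\sign(-)$ and $\Set(-)$ operations without removing anything. Your two one-sentence implications are exactly the argument the paper leaves implicit.
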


\begin{observation}\label{sameip-mar-2}
        In $H'$, for any value $v$,
	\begin{itemize}
		\item The max invocation time of any $\valid(v)$ operation that returns $\false$ is $t^v_0$.
		\item The min response time of any $\valid(v)$ operation that returns~$\true$ is $t^v_1$.
	\end{itemize}
\end{observation}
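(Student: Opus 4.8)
The plan is to reduce this claim to the corresponding statement about $H|{\hct}$, namely Observation~\ref{sameip-mar}, by observing that the construction of $H'$ in Definition~\ref{steps-mar-2} only \emph{adds} operations by the writer and never disturbs the $\valid(-)$ operations already present. First I would note that Step~\ref{s1} sets $H' = H|{\hct}$, and that Steps~\ref{addsign} and~\ref{addset} add only $\sign(-)$ and $\Set(-)$ operations --- which return $\success$, $\fail$, or $\done$, never $\true$ or $\false$ --- and, moreover, do not alter the invocation or response steps of any operation already present in $H|{\hct}$; the ``shrinking'' remark that follows Definition~\ref{steps-mar-2} concerns only the newly added $\Set(-)$ and $\sign(-)$ operations. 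Consequently every $\valid(v)$ operation of $H'$ is a $\valid(v)$ operation of $H|{\hct}$ with exactly the same invocation and response times, and conversely; this correspondence at the level of operation identities is precisely Observation~\ref{validfromcorrect-mar-2}.

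Given this correspondence, I would then invoke Observation~\ref{sameip-mar}: in $H|{\hct}$ the maximum invocation time of a $\valid(v)$ operation that returns $\false$ is $t^v_0$, and the minimum response time of a $\valid(v)$ operation that returns $\true$ is $t^v_1$. Since the collection of $\valid(v)$ operations returning $\false$ (resp.\ $\true$), together with their invocation (resp.\ response) times, is identical in $H'$ and in $H|{\hct}$, both extremal values are unchanged, which is exactly the content of Observation~\ref{sameip-mar-2}. One could equally route the argument through Observation~\ref{cc-mar-2}, which states $H'|{\hct} = H|{\hct}$, but the direct appeal to Observation~\ref{validfromcorrect-mar-2} is cleaner.

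I do not expect a genuine obstacle here: the proof is essentially bookkeeping. The one point that needs explicit care is verifying that the operations added in Steps~\ref{addsign} and~\ref{addset} cannot move $t^v_0$ or $t^v_1$ --- and this is immediate, since $t^v_0$ and $t^v_1$ are defined (Definition~\ref{t0t1-mar}) purely in terms of $\valid(v)$ operations by correct readers that return $\false$ or $\true$, while the added operations are never of that kind. A secondary sanity check is the $v=v_0$ edge case: when no $\valid(v)$ returns $\true$ (resp.\ $\false$), the conventions $t^v_1 = \infty$ (resp.\ $t^v_0 = 0$) from Definition~\ref{t0t1-mar} carry over verbatim, because adding writer operations creates no new $\valid$ responses.
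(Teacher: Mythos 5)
Your proposal is correct and matches the paper's (implicit) reasoning exactly: the paper states this as an Observation justified by the fact that Steps 2 and 3 of Definition~\ref{steps-mar-2} add only $\sign(-)$ and $\Set(-)$ operations, so by Observation~\ref{validfromcorrect-mar-2} the $\valid(v)$ operations of $H'$ coincide, with their invocation and response times, with those of $H|{\hct}$, and the claim then follows from Observation~\ref{sameip-mar} and Definition~\ref{t0t1-mar}. The bookkeeping you spell out, including the degenerate conventions $t^v_0=0$ and $t^v_1=\infty$, is precisely the intended argument.
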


\begin{observation}
   A $\sign(-)$ operation is in $H'$ if and only if it is added in Step 2 of the construction of $H'$ (Definition~\ref{steps-mar-2}).
\end{observation}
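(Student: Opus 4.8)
The plan is to prove this Observation by simply tracing through the three-step construction of $H'$ in Definition~\ref{steps-mar-2} and checking at which step a $\sign(-)$ operation can possibly enter $H'$. The key preliminary fact I would record is that a $\sign(-)$ operation can only be invoked by the writer $p_1$: this is part of the sequential specification of a {\mar} (Definition~\ref{def-mar}), and in Algorithm~\ref{code-mar} the $\sign(-)$ procedure is executed only by $p_1$.

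First I would handle Step~\ref{s1}, where $H' = H|{\hct}$. Since we are in Case~2, the writer $p_1$ is \emph{not} correct in $H$, i.e.\ $p_1 \notin {\hct}$. By Observation~\ref{correctop-mar}, no operation by $p_1$ appears in $H|{\hct}$, so in particular $H|{\hct}$ contains no $\sign(-)$ operation; hence Step~\ref{s1} contributes no $\sign(-)$ operation to $H'$. Next, Step~\ref{addsign} adds to $H'$ exactly one $\sign(v)$ operation (returning $\success$) for each value $v$ such that some $\valid(v)$ returns $\true$ in the history produced by Step~\ref{s1}, and these are the only operations added in that step. Finally, Step~\ref{addset} adds to $H'$ only $\Set(-)$ operations, never a $\sign(-)$ operation. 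Since no later step removes operations, combining these three observations gives the two directions of the ``if and only if'': every $\sign(-)$ in $H'$ was introduced in Step~\ref{addsign}, and every $\sign(-)$ introduced in Step~\ref{addsign} is in $H'$.

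I do not expect any genuine obstacle here, since the statement is purely bookkeeping about the construction of $H'$. The only point that deserves to be stated explicitly is why $H|{\hct}$ is free of $\sign(-)$ operations, namely that the writer --- the sole process capable of executing $\sign(-)$ --- is Byzantine in Case~2 and therefore not among the correct processes whose steps survive in $H|{\hct}$.
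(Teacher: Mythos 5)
Your argument is correct and matches the paper's (implicit) reasoning exactly: the paper states this as an unproved observation, relying precisely on the facts you spell out — the writer is faulty in Case~2 so $H|{\hct}$ contains none of its operations (hence no $\sign(-)$), Step~2 is the only step that adds $\sign(-)$ operations, and Step~3 adds only $\Set(-)$ operations. Nothing is missing.
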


\begin{observation}\label{validfromcorrect-mar-3}
   Every $\sign(-)$ operation in $H'$ returns $\success$.
\end{observation}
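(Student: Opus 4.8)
The plan is to read the claim off the construction of $H'$ given in Definition~\ref{steps-mar-2}. First I would note that we are in Case~2, where the writer $p_1$ is \emph{not} correct in $H$; since $p_1$ is the only process that can invoke a $\sign(-)$ operation and $p_1 \notin {\hct}$, the history $H|{\hct}$ contains no $\sign(-)$ operation. Hence Step~\ref{s1} (which sets $H' = H|{\hct}$) introduces no $\sign(-)$ operation into $H'$, and Step~\ref{addset} introduces only $\Set(-)$ operations. Combining this with the preceding observation (every $\sign(-)$ operation of $H'$ is added in Step~\ref{addsign}), every $\sign(-)$ operation of $H'$ must be one of the operations added during Step~\ref{addsign}.

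Second, I would invoke the description of Step~\ref{addsign} itself: for each value $v$ such that some $\valid(v)$ returns $\true$ in the history obtained after Step~\ref{s1}, this step adds to $H'$ one $\sign(v)$ operation whose $[invocation,response]$ interval lies inside $(t^v_0,t^v_1)$ --- an interval that is nonempty by Lemma~\ref{iexists-mar} --- and this added $\sign(v)$ operation returns $\success$ \emph{by stipulation} of the construction. Since, as argued above, \emph{every} $\sign(-)$ operation in $H'$ arises in exactly this way, every $\sign(-)$ operation in $H'$ returns $\success$, which is the claim.

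There is essentially no technical obstacle: the statement is a pure bookkeeping consequence of how $H'$ was built, and its role is only to let us later verify (in the Case~2 linearization) that $\sign(-)$ operations of $H'$ behave as required by the sequential specification of a {\mar}. The one point I would state explicitly --- and not skip --- is the first step above, namely that in Case~2 no $\sign(-)$ operations are inherited from $H|{\hct}$, so that Step~\ref{addsign} (together with the fact that Step~\ref{addset} adds only $\Set(-)$ operations) genuinely accounts for \emph{all} of the $\sign(-)$ operations appearing in $H'$.
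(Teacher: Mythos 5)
Your proposal is correct and matches the paper's reasoning: the paper treats this as an immediate consequence of the construction, relying on the preceding observation that every $\sign(-)$ operation in $H'$ is one added in Step~\ref{addsign} of Definition~\ref{steps-mar-2}, and those are stipulated to return $\success$. Your explicit justification that no $\sign(-)$ operations are inherited from $H|{\hct}$ (since the writer is Byzantine in Case~2) is exactly the right supporting detail.
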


\begin{observation}\label{allsignadd-mar-2}
    The $[invocation,response]$ interval of every $\sign(v)$ operation in $H'$
         is in the interval $(t^v_0,t^v_1)$.
\end{observation}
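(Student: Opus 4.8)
The plan is to read this observation off directly from the construction of $H'$ given in Definition~\ref{steps-mar-2}. Recall first that the only $\sign(-)$ operations in $H'$ are those inserted in Step~\ref{addsign} of that construction: since the writer is faulty in $H$, the history $H|{\hct}$ from which the construction starts (Step~\ref{s1}) contains no $\sign(-)$ operation, so none survives from it, and Step~\ref{addset} adds only $\Set(-)$ operations. Hence let $\sign(v)$ be an arbitrary $\sign(-)$ operation in $H'$; it suffices to show that its $[invocation,response]$ interval lies in $(t^v_0,t^v_1)$.

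By Step~\ref{addsign} of Definition~\ref{steps-mar-2}, this $\sign(v)$ was added \emph{because} some $\valid(v)$ returns $\true$ in $H|{\hct}$, and it was added with an $[invocation,response]$ interval contained in $(t^v_0,t^v_1)$. This placement is well defined: by Lemma~\ref{iexists-mar}, $t^v_1 > t^v_0$, so the open interval $(t^v_0,t^v_1)$ is nonempty. The only remaining point is the ``shrinking'' remark that follows Definition~\ref{steps-mar-2}: to make all the added $\Set(-)$ and $\sign(-)$ operations (all attributed to the single writer) mutually sequential, one may need to shrink their $[invocation,response]$ intervals. Since shrinking an interval yields a subinterval of the original, the interval of $\sign(v)$ remains inside $(t^v_0,t^v_1)$, which establishes the claim.

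I do not expect any genuine obstacle here — the statement is essentially a restatement of what Step~\ref{addsign} arranges, relying only on the fact that the $\sign(-)$ operations of $H'$ are exactly the inserted ones and on the nonemptiness of each target interval (Lemma~\ref{iexists-mar}). If one wanted the sequentialisation spelled out, the extra bookkeeping is to note that the target intervals can be replaced by pairwise-disjoint nonempty closed subintervals (possible since each $(t^v_0,t^v_1)$ has positive length); this is the only slightly delicate point, and it does not affect the containment asserted in the observation.
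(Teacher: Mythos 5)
Your proposal is correct and matches the paper's reasoning: the paper states this as an unproved observation precisely because it is immediate from Step~2 of Definition~\ref{steps-mar-2} (every $\sign(v)$ in $H'$ is one added there, with its interval placed inside $(t^v_0,t^v_1)$, which is nonempty by Lemma~\ref{iexists-mar}), and your note that the sequentialising ``shrinking'' only replaces intervals by subintervals, hence preserves containment, is exactly the right observation.
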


By Step 3 of the construction of $H'$, we have:

\begin{observation}
    \label{register-mar-c2}
      For every $\sign(v)$ operation in $H'$, there is a $\Set(v)$ that precedes this $\sign(v)$.
\end{observation}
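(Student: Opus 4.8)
The plan is to read off Observation~\ref{register-mar-c2} directly from the construction of $H'$ in Definition~\ref{steps-mar-2}, and in particular from Step~\ref{addset}. First I would fix an arbitrary $\sign(v)$ operation $o$ occurring in $H'$. By the preceding observation, $o$ can only have entered $H'$ during Step~\ref{addsign} (the unique step that introduces $\sign(-)$ operations), so $o$ is already present in the history obtained at the end of Step~\ref{addsign}; moreover, by Observation~\ref{validfromcorrect-mar-3}, $o$ returns $\success$. Hence $o$ is one of the operations to which Step~\ref{addset} is applied.

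Next I would invoke Step~\ref{addset} for $o$: that step prescribes adding to $H'$ a $\Set(v)$ operation whose $[invocation,response]$ interval \emph{immediately precedes} $o$. By the definition of ``immediately precedes'' (given in the footnote to Definition~\ref{steps-mar-2}), this in particular yields a $\Set(v)$ operation that precedes $o$ in $H'$, which is exactly the claimed property.

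The only point requiring a moment of care is the subsequent ``shrinking'' of intervals described after Definition~\ref{steps-mar-2}, performed to make all $\Set(-)$ and $\sign(-)$ operations (all ``executed'' by the writer $p_1$) sequential. Here I would observe that shrinking an operation's interval can only move its invocation later and its response earlier; therefore, if the response of the added $\Set(v)$ lies before the invocation of $o$ before the shrinking, this stays true afterwards, so the precedence of $\Set(v)$ over $o$ is preserved. Since $o$ was an arbitrary $\sign(v)$ operation in $H'$, the observation follows. I do not expect a genuine obstacle here --- the statement is essentially a reformulation of Step~\ref{addset} of the construction --- and the only subtlety worth flagging is precisely this compatibility of the final serialization with the precedences set up in Step~\ref{addset}, which is immediate once one notes that shrinking never enlarges an interval.
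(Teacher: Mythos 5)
Your proof is correct and takes essentially the same route as the paper, which simply records this observation as an immediate consequence of Step~3 of Definition~\ref{steps-mar-2} (every $\sign(v)$ in $H'$ has a $\Set(v)$ added so as to immediately precede it, hence to precede it). Your extra remark that the subsequent interval-shrinking preserves precedence is a reasonable bit of care but does not change the argument.
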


We now prove that the constructed history $H'$ is linearizable with respect to a \mar.
First, we define the \emph{linearization point} of each operation in $H'$ as follows.
\begin{definition}~\label{testsetlinearization-mar-2}
Let $o$ be an operation in $H'$.
    \begin{itemize}

	\item If $o$ is a $\Set(-)$ operation,
	   the linearization point of $o$ is the time of the response step of~$o$.
       \item  If $o$ is a $\Test$ operation,
	   the linearization point of $o$ is the time of the invocation step of $o$.
       
	\item  If $o$ is a $\valid(-)$ operation,
        then: \begin{compactitem}
          \item If $o$ returns $\true$,
	   the linearization point of $o$ is the time of the response step of $o$.
            \item If $o$ returns $\false$,
	   the linearization point of $o$ is the time of the invocation step of~$o$.
        \end{compactitem}

          \item If $o$ is a $\sign(-)$ operation, 
		the linearization point of $o$ is any time $t$ in the $[invocation,response]$ interval of $o$.
\end{itemize}
\end{definition}
\begin{observation}\label{precedes-mar-2}
    If an operation $o$ precedes an operation $o'$ in $H'$,
        then the linearization point of $o$ is before the linearization point of $o'$.
\end{observation}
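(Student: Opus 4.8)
The plan is to reduce the claim to a single elementary fact: for every operation $o$ in $H'$, the linearization point of $o$ (as defined in Definition~\ref{testsetlinearization-mar-2}) lies within the $[invocation,response]$ interval of $o$. Once this is established, the observation follows immediately: if $o$ precedes $o'$ in $H'$, then by definition the response of $o$ occurs before the invocation of $o'$; hence the linearization point of $o$ (which is at or before the response of $o$) occurs before the invocation of $o'$, which in turn is at or before the linearization point of $o'$. So the linearization point of $o$ is before that of $o'$.

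To establish that each linearization point lies within its operation's interval, I would go through the cases of Definition~\ref{testsetlinearization-mar-2} one by one. For a $\Set(-)$ operation the linearization point is its response step, trivially in the interval; for a $\Test$ operation it is the invocation step, also trivially in the interval; for a $\valid(-)$ operation it is either the response step (when it returns $\true$) or the invocation step (when it returns $\false$), both in the interval; and for a $\sign(-)$ operation it is, by definition, any time in the $[invocation,response]$ interval of $o$, hence in the interval by construction. I would also recall from Definition~\ref{steps-mar-2} and the remark immediately following it that every added $\Set(-)$ and $\sign(-)$ operation is given a well-defined (possibly ``shrunk'') $[invocation,response]$ interval in $H'$, so ``the interval of $o$'' is meaningful for every operation in $H'$, not just the ones inherited from $H|{\hct}$.

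There is essentially no hard step here; the only point requiring mild care is the strict-versus-nonstrict inequality at the boundary between the response of $o$ and the invocation of $o'$. Since ``$o$ precedes $o'$'' means the response of $o$ is \emph{strictly} before the invocation of $o'$, the chain (linearization point of $o$) $\le$ (response of $o$) $<$ (invocation of $o'$) $\le$ (linearization point of $o'$) yields a strict inequality between the two linearization points, which is exactly what is claimed.
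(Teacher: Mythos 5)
Your proposal is correct and matches the paper's reasoning: the paper treats this observation as immediate from the fact that, in Definition~\ref{testsetlinearization-mar-2}, every linearization point lies within its operation's $[invocation,response]$ interval (exactly the remark it makes explicitly for the analogous Observation~\ref{precedes-mar} in Case~1). Your case analysis and the chain (linearization point of $o$) $\le$ (response of $o$) $<$ (invocation of $o'$) $\le$ (linearization point of $o'$) is precisely the intended argument.
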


By Observation~\ref{allsignadd-mar-2} and Definition~\ref{testsetlinearization-mar-2}, we have the following observation.
\begin{observation}\label{signl-mar-2}
     The linearization point of every $\sign(v)$ operation in $H'$
        is in the interval $(t^v_0,t^v_1)$.
\end{observation}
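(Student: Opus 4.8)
The plan is simply to chain together the two facts that are already in hand. First I would unfold Definition~\ref{testsetlinearization-mar-2}: for a $\sign(-)$ operation $o$ in $H'$, its linearization point is \emph{by definition} chosen to be some time $t$ lying in the $[invocation,response]$ interval of $o$ — we do not care which time, only that it sits inside that interval. So fix an arbitrary $\sign(v)$ operation $o$ in $H'$ and let $t$ denote its linearization point; then $t$ belongs to the $[invocation,response]$ interval of $o$.

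Next I would invoke Observation~\ref{allsignadd-mar-2}, which asserts that the $[invocation,response]$ interval of \emph{every} $\sign(v)$ operation in $H'$ is contained in the interval $(t^v_0,t^v_1)$ (this interval being nonempty by Lemma~\ref{iexists-mar}). Applying this to the operation $o$ above, the $[invocation,response]$ interval of $o$ is a subset of $(t^v_0,t^v_1)$, and therefore $t$, which lies in that interval, also lies in $(t^v_0,t^v_1)$. Since $o$ was an arbitrary $\sign(v)$ operation in $H'$, this establishes the observation.

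There is essentially no obstacle here: all the substantive work was done earlier. The added $\sign(v)$ operations were inserted into $H'$ in Step~2 of the construction in Definition~\ref{steps-mar-2} precisely so that their whole execution interval is squeezed inside $(t^v_0,t^v_1)$, and Lemma~\ref{iexists-mar} guarantees there is room to do so. The present observation is merely the bookkeeping step that transfers this containment from an operation's execution interval down to its chosen linearization point, so that it can be fed into the subsequent construction of the linearization $L$ of $H'$ — playing the same role in the writer-not-correct case that Observation~\ref{signl-mar} plays in the writer-correct case, and feeding into the analogues of Lemmas~\ref{lsv1-mar} and~\ref{lsv2-mar} that relate $\valid(v)$ responses to the placement of $\sign(v)$ in $L$.
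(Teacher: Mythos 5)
Your proof is correct and matches the paper's own justification exactly: the paper derives Observation~\ref{signl-mar-2} directly from Observation~\ref{allsignadd-mar-2} (the $[invocation,response]$ interval of each added $\sign(v)$ lies in $(t^v_0,t^v_1)$) together with Definition~\ref{testsetlinearization-mar-2} (the linearization point of a $\sign(-)$ operation is chosen inside its $[invocation,response]$ interval). Nothing further is needed.
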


We now use the linearization points of all the operations in $H'$ to define the following linearization $L$ of $H'$:
\begin{definition}\label{hlp-mar-2}
	Let $L$ be a \emph{sequence} of operations such that:
	\begin{enumerate}
		\item\label{uno} An operation $o$ is in $L$ if and only if it is in $H'$.
		\item\label{due} An operation $o$ precedes an operation $o'$ in $L$ if and only if
     the linearization point of~$o$ is before the linearization point of $o'$ in $H'$.
	\end{enumerate}
\end{definition}

By Observation~\ref{precedes-mar-2} and Definition~\ref{hlp-mar-2}, 
    $L$ respects the precedence relation between the operations of $H'$.
More precisely:
\begin{observation}\label{precedes-l-mar-2}
    If an operation $o$ precedes an operation $o'$ in $H'$,
        then $o$ precedes $o'$ in $L$.
\end{observation}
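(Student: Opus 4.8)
The plan is to derive Observation~\ref{precedes-l-mar-2} directly from Observation~\ref{precedes-mar-2} together with the definition of the linearization $L$, so essentially no new work is needed. First I would recall from Definition~\ref{testsetlinearization-mar-2} that the linearization point assigned to every operation of $H'$ --- whether it is a $\Set(-)$, a $\Test$, a $\valid(-)$, or a $\sign(-)$ operation --- lies somewhere within that operation's $[invocation,response]$ interval: for a $\Set(-)$ it is the response step, for a $\Test$ and for a $\valid(-)$ that returns $\false$ it is the invocation step, for a $\valid(-)$ that returns $\true$ it is the response step, and for a $\sign(-)$ it is picked explicitly inside the interval. Consequently, if $o$ precedes $o'$ in $H'$ --- meaning the response step of $o$ occurs strictly before the invocation step of $o'$ --- then the linearization point of $o$, being no later than the response of $o$, occurs strictly before the linearization point of $o'$, which is no earlier than the invocation of $o'$. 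This is exactly what Observation~\ref{precedes-mar-2} records.

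I would then finish by invoking the second clause of Definition~\ref{hlp-mar-2}, which states that $o$ precedes $o'$ in $L$ if and only if the linearization point of $o$ precedes that of $o'$ in $H'$. Chaining this with the previous step yields that $o$ precedes $o'$ in $L$, which is the claim. I do not expect any real obstacle here: the only point that needs checking is the elementary fact that each of the four kinds of operation has its linearization point inside its own execution interval, and this is already subsumed by Observation~\ref{precedes-mar-2} (and in any case immediate from Definition~\ref{testsetlinearization-mar-2}). The statement is therefore best read as a bookkeeping remark that prepares the ground for showing, in the next step, that $L$ also conforms to the sequential specification of a SWMR {\mar}.
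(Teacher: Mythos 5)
Your proposal is correct and matches the paper's treatment exactly: the paper derives this observation by noting that every linearization point in Definition~\ref{testsetlinearization-mar-2} lies within its operation's $[invocation,response]$ interval, so precedence in $H'$ orders the linearization points (Observation~\ref{precedes-mar-2}), and then Definition~\ref{hlp-mar-2}(2) transfers that order to $L$. Nothing is missing.
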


In the following lemmas and theorems we prove that the linearization $L$ of $H'$ conforms to the sequential specification of a SWMR \mar.

\begin{theorem}\label{rw-mar-2}
    In $L$,
        if a $\Test$ operation returns a value $v$, then:
        \begin{compactitem}
          \item either some $\Set(v)$ operation precedes it and this $\Set(v)$ operation is the last $\Set(-)$ operation that precedes it,
            
            \item or $v = v_0$ (the initial value of the register) and no $\Set(-)$ operation {\precedes} it.
        \end{compactitem}
       
\end{theorem}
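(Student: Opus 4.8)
The plan is to follow the pattern of the proof of Theorem~\ref{rw-mar} (its Case~1 analogue), but to read off the needed $\Set(-)$ operation from the explicit construction of $H'$ in Definition~\ref{steps-mar-2} rather than from Corollary~\ref{register-mar-00}. Suppose a $\Test$ operation $o$ returns a value $v$ in $L$. By Definition~\ref{hlp-mar-2}(1), $o$ is also an operation of $H'$, and since the $\Test$ operations of $H'$ are exactly those of $H|\hct$ (Observation~\ref{validfromcorrect-mar-2}), Step~3 of the construction of $H'$ adds, for this very $o$, a $\Set(v)$ operation whose $[invocation,response]$ interval \emph{immediately precedes} $o$ in $H'$.

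First I would show that this $\Set(v)$ precedes $o$ in $L$: since it precedes $o$ in $H'$, Observation~\ref{precedes-l-mar-2} gives that it precedes $o$ in $L$ (equivalently, by Definition~\ref{testsetlinearization-mar-2} its linearization point, the time of its response, is before the linearization point of $o$, the time of $o$'s invocation). Then I would show that it is the \emph{last} $\Set(-)$ that precedes $o$ in $L$. Take any other $\Set(v')$ in $H'$. As noted right after Definition~\ref{steps-mar-2}, all the $\Set(-)$ and $\sign(-)$ operations of $H'$ are sequential (they are ``performed'' by the writer), so either $\Set(v')$ precedes $\Set(v)$ in $H'$ or $\Set(v)$ precedes $\Set(v')$ in $H'$. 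In the first case, Observation~\ref{precedes-mar-2} puts the linearization point of $\Set(v')$ before that of $\Set(v)$, hence before that of $o$, so $\Set(v')$ precedes $\Set(v)$ in $L$ and is not the last $\Set(-)$ before $o$. In the second case, since $\Set(v)$ \emph{immediately} precedes $o$ in $H'$, $\Set(v')$ cannot precede $o$ in $H'$; therefore its response time, which is its linearization point, is not before $o$'s invocation time, the linearization point of $o$, and so $\Set(v')$ does not precede $o$ in $L$. Combining the two cases shows that $\Set(v)$ is the last $\Set(-)$ preceding $o$ in $L$, which establishes the first alternative of the theorem; in this case the second alternative never occurs, since the construction always supplies a preceding $\Set(v)$.

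The substantive point is the interplay between the word ``immediately'' in Step~3 and the sequentiality of the writer's operations in $H'$; the rest is bookkeeping with linearization points. The only mild technicality is tie-breaking in the total order $L$ if the response of some $\Set(v')$ should fall exactly at the invocation of $o$; this is absorbed into the freedom, already used in constructing $H'$, to shrink the intervals of the added operations so that all linearization points of $H'$ are distinct.
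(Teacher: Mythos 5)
Your proposal is correct and takes essentially the same route as the paper: both read the required $\Set(v)$ off Step~3 of Definition~\ref{steps-mar-2} and use the fact that its linearization point (its response) immediately precedes the $\Test$'s linearization point (its invocation), so no other $\Set(-)$ can be linearized in between. Your two-case analysis just spells out in more detail what the paper states directly, namely that no other $\Set(-)$ has a linearization point between those of the added $\Set(v)$ and the $\Test$.
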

\begin{proof}
    Suppose a $\Test$ operation returns a value $v$ in $L$.
    By Definition~\ref{hlp-mar-2}(1),
        this $\Test$ operation is also in $H'$.
    By Step 3 of Definition~\ref{steps-mar-2},
    there is a $\Set(v)$ operation that immediately precedes
    the $\Test$ operation in $H'$.
    So by Definition~\ref{testsetlinearization-mar-2},
            the linearization point $t$ of this $\Set(v)$ operation is before the linearization point $t'$ of the $\Test$ operation, and 
            there is no other $\Set(-)$ operation with a linearization point between $t$ and $t'$.     
    Thus, by Definition~\ref{hlp-mar-2}(2),
        this $\Set(v)$ operation precedes the $\Test$ operation and it is the last $\Set(-)$ operation that precedes the $\Test$ operation in $L$.
\end{proof}

\begin{theorem}\label{ws-mar-2}
    In $L$, 
	 a $\sign(v)$ operation returns $\success$ if only if there is a $\Set(v)$ operation that {\precedes} this $\sign(v)$ operation.
\end{theorem}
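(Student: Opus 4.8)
The plan is to establish the biconditional in $L$ by pulling each direction back to the constructed history $H'$, in the same spirit as the proof of Theorem~\ref{ws-mar} for the ``writer correct'' case. The argument is actually lighter here, since in Case~2 every $\sign(-)$ operation of $H'$ was inserted in Step~2 of the construction in Definition~\ref{steps-mar-2}: all such operations return $\success$ (Observation~\ref{validfromcorrect-mar-3}), and each $\sign(v)$ was paired in Step~3 with a $\Set(v)$ that immediately precedes it (recorded in Observation~\ref{register-mar-c2}).

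For the forward direction, I would take a $\sign(v)$ operation that returns $\success$ in $L$. By Definition~\ref{hlp-mar-2}(1) this is an operation of $H'$; by Observation~\ref{register-mar-c2} there is a $\Set(v)$ operation preceding it in $H'$; and by Observation~\ref{precedes-l-mar-2} this precedence carries over to $L$, so the $\Set(v)$ precedes the $\sign(v)$ in $L$.

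For the backward direction, suppose a $\Set(v)$ precedes a $\sign(v)$ in $L$. Again by Definition~\ref{hlp-mar-2}(1) the $\sign(v)$ is an operation of $H'$, and by Observation~\ref{validfromcorrect-mar-3} it returns $\success$ in $H'$; since $L$ contains exactly the operations of $H'$ with their invocation and response steps unchanged (Definition~\ref{LinearizableCrash} and Definition~\ref{hlp-mar-2}(1)), the operation also returns $\success$ in $L$.

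The only step requiring a little care --- and the closest thing to an obstacle in an otherwise routine argument --- is the bookkeeping that the response, hence the return value, of an operation is the same in $H'$ and in $L$, so that ``returns $\success$'' transfers freely between the two settings; this is immediate from the definition of a linearization. It is worth noting that the hypothesis ``$\Set(v)$ precedes the $\sign(v)$'' is not even needed for the backward direction in this case, because \emph{every} $\sign(-)$ operation of $H'$ returns $\success$ by construction; the biconditional then holds because, in $L$, both sides are simultaneously true of every $\sign(v)$ operation.
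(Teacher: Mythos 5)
Your proof is correct and follows essentially the same route as the paper's: both directions are pulled back to $H'$ via Definition~\ref{hlp-mar-2}(1), with the forward direction using Observation~\ref{register-mar-c2} and Observation~\ref{precedes-l-mar-2}, and the backward direction using Observation~\ref{validfromcorrect-mar-3}. Your closing remark that the backward direction does not actually need the precedence hypothesis (since every $\sign(-)$ in $H'$ returns $\success$ by construction) is accurate and consistent with the paper's argument.
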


\begin{proof}
We prove each direction separately:
    \begin{compactitem}[]
    
    \item $[\Rightarrow]$ Suppose a $\sign(v)$ operation returns $\success$ in $L$.
    By Definition~\ref{hlp-mar-2}(1),
        this $\sign(v)$ operation is also in $H'$.
    By Observation~\ref{register-mar-c2}, there is a $\Set(v)$ operation that precedes this $\sign(v)$ operation in $H'$.
    Thus, by Observation \ref{precedes-l-mar-2},  
       this $\Set(v)$ operation precedes the $\sign(v)$ operation in $L$. 

     \item $[\Leftarrow]$
    Suppose a $\Set(v)$ operation precedes a $\sign(v)$ operation in $L$.
    By Definition~\ref{hlp-mar-2},
        these $\Set(v)$ and $\sign(v)$ operations are also in $H'$.
    So by Observation~\ref{validfromcorrect-mar-3},
        the $\sign(v)$ operation returns $\success$ in $H'$.
                Therefore it also returns $\success$ in $L$.
    \end{compactitem}   
\end{proof}

\begin{lemma}\label{vs1}
        In $L$,
            if a $\valid(v)$ operation returns~$\false$,
		then there is no $\sign(v)$ operation that returns $\success$ and {\precedes} the $\valid(v)$ operation.
\end{lemma}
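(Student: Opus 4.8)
The plan is to argue by contradiction, reusing the structure of the proof of Lemma~\ref{lsv1-mar} but now relying on the Case~2 construction of $H'$ (Definition~\ref{steps-mar-2}) and on the linearization points assigned to operations of $H'$ in Definition~\ref{testsetlinearization-mar-2}. So suppose, towards a contradiction, that in $L$ some $\valid(v)$ operation returns $\false$ while some $\sign(v)$ operation that returns $\success$ precedes it.

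First I would pull both operations back into $H'$. By Definition~\ref{hlp-mar-2}(1), the $\valid(v)$ and the $\sign(v)$ are operations of $H'$, and by Definition~\ref{hlp-mar-2}(2) the fact that the $\sign(v)$ precedes the $\valid(v)$ in $L$ means that the linearization point $t$ of the $\sign(v)$ is strictly before the linearization point $t'$ of the $\valid(v)$ in $H'$; thus $t < t'$.

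Next I would squeeze $t$ and $t'$ relative to the bounds $t^v_0$ and $t^v_1$. By Observation~\ref{signl-mar-2}, the linearization point of \emph{every} $\sign(v)$ operation of $H'$ lies in the open interval $(t^v_0,t^v_1)$, so $t^v_0 < t$ and hence $t^v_0 < t'$. On the other hand, since this $\valid(v)$ returns $\false$, Definition~\ref{testsetlinearization-mar-2} makes $t'$ equal to the invocation time of the $\valid(v)$ in $H'$, and Observation~\ref{sameip-mar-2} then gives $t' \le t^v_0$ --- contradicting $t^v_0 < t'$, which yields the desired contradiction.

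I expect the only point to be careful about is the chain of identifications: that the two operations survive into $H'$ unchanged and that the linearization point of the $\valid(v)$ is exactly the invocation time recorded by Observation~\ref{sameip-mar-2}. Unlike the writer-correct case (Lemma~\ref{lsv1-mar}), there is no need here to single out the \emph{first} $\sign(v)$ that returns $\success$, since the Case~2 construction places \emph{all} $\sign(v)$ operations of $H'$ inside $(t^v_0,t^v_1)$; so this argument is in fact slightly simpler than its Case~1 counterpart.
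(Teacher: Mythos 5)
Your proposal is correct and follows essentially the same argument as the paper's proof: both derive $t < t'$ from Definition~\ref{hlp-mar-2}, place $t$ in $(t^v_0,t^v_1)$ via Observation~\ref{signl-mar-2}, and contradict this with $t' \le t^v_0$ from Definition~\ref{testsetlinearization-mar-2} and Observation~\ref{sameip-mar-2}. Your closing remark is also accurate — the paper's Case~2 proof indeed does not need to single out the first $\sign(v)$, precisely because Observation~\ref{signl-mar-2} covers all $\sign(v)$ operations.
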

\begin{proof}
	Assume for contradiction that, 
		in $L$,
            a $\valid(v)$ operation returns $\false$ and
		there is a $\sign(v)$ operation that returns $\success$ and {\precedes} the $\valid(v)$ operation.
        By Definition~\ref{hlp-mar-2}, 
            the $\sign(v)$ and $\valid(v)$ operations are also in $H'$, and 
            the linearization point $t$ of the $\sign(v)$ operation is before the linearization point $t'$ of the $\valid(v)$ operation in $H'$.
        So $t < t'$.
        By Observation~\ref{signl-mar-2}, $t\in (t^v_0,t^v_1)$ in $H'$ and so $t^v_0 <t <t' $ ($\star$).
        Since this $\valid(v)$ operation returns $\false$:
            (a) by Definition~\ref{testsetlinearization-mar-2}, 
            $t'$ is the time of the invocation step of $\valid(v)$~in~$H'$,
            and
            (b)~by Observation~\ref{sameip-mar-2},
            $t' \le t^v_0$ --- a contradiction to ($\star$).
\end{proof}

\begin{lemma}\label{vs2}
	In $L$, 
	if a $\valid(v)$ operation returns $\true$,
	then there is a $\sign(v)$ operation that returns $\success$ and {\precedes} the $\valid(v)$ operation.
\end{lemma}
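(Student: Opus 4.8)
The plan is to mirror the structure of the proof of Lemma~\ref{lsv2-mar} (the analogue for the case where the writer is correct), but now leaning on the explicit construction of $H'$ in Definition~\ref{steps-mar-2} instead of on a property of the algorithm. First I would note that if a $\valid(v)$ operation returns $\true$ in $L$, then by the definition of $L$ (Definition~\ref{hlp-mar-2}) it is also an operation of $H'$, and by Observation~\ref{validfromcorrect-mar-2} it is therefore an operation of $H|{\hct}$ that returns $\true$ there. Hence Step~\ref{addsign} of the construction of $H'$ applies to the value $v$: we added to $H'$ a $\sign(v)$ operation that returns $\success$ whose $[invocation,response]$ interval is contained in $(t^v_0,t^v_1)$.

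Next I would compare linearization points. By Observation~\ref{signl-mar-2}, the linearization point $t'$ of this added $\sign(v)$ operation lies in $(t^v_0,t^v_1)$, so in particular $t' < t^v_1$. On the other hand, since the $\valid(v)$ operation returns $\true$, its linearization point $t$ is, by Definition~\ref{testsetlinearization-mar-2}, the time of its response step in $H'$; and by Observation~\ref{sameip-mar-2} the minimum response time of a $\valid(v)$ operation that returns $\true$ in $H'$ is $t^v_1$, so $t \ge t^v_1$. Combining the two bounds gives $t' < t^v_1 \le t$, hence $t' < t$. By the definition of $L$ (Definition~\ref{hlp-mar-2}), an operation whose linearization point is strictly earlier precedes the other in $L$, so this $\sign(v)$ operation precedes the $\valid(v)$ operation in $L$; and it returns $\success$ by construction. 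Together with Lemma~\ref{vs1} this yields the ``if and only if'' form of the statement about $\valid$ in $L$ and completes the sequential-specification check for $\valid$.

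I do not expect this particular argument to be the hard part — it is essentially bookkeeping with the intervals $(t^v_0,t^v_1)$. The real load-bearing fact sits a step earlier: the construction in Definition~\ref{steps-mar-2} must be self-consistent, i.e.\ all the added $\sign(-)$ (and $\Set(-)$) operations by the Byzantine writer must be placeable \emph{sequentially} while still fitting inside their prescribed intervals. This is exactly what Lemma~\ref{iexists-mar} (nonemptiness of each $(t^v_0,t^v_1)$) together with the freedom to shrink each added interval (noted right after Definition~\ref{steps-mar-2}) buys us, and it is what makes Observation~\ref{signl-mar-2} legitimate. So if anything needs care in writing the full proof, it is verifying that these intervals can simultaneously accommodate all the writer's fabricated operations, not the final comparison of linearization points.
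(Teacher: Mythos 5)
Your proof is correct and follows essentially the same route as the paper's: identify the $\sign(v)$ added in Step~\ref{addsign} of Definition~\ref{steps-mar-2}, use Observation~\ref{signl-mar-2} to place its linearization point in $(t^v_0,t^v_1)$, use Observation~\ref{sameip-mar-2} to lower-bound the linearization point of the $\valid(v)$ by $t^v_1$, and conclude via Definition~\ref{hlp-mar-2}(2). Your closing remark about the sequentialization of the added writer operations is a fair observation, but it is handled by the note following Definition~\ref{steps-mar-2} and is not needed inside this lemma's proof.
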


\begin{proof}
	Suppose in $L$, a $\valid(v)$ operation returns $\true$.
        By the Definition~\ref{hlp-mar-2}(1) of $L$,
            this $\valid(v)$ operation is also in $H'$.
         Since $\valid(v)$ operation returns $\true$,
            by Definition~\ref{testsetlinearization-mar-2},
            its linearization point $t$ is the time of its response step in $H'$.
         By Observation~\ref{sameip-mar-2},
            $t \ge t^v_1$ ($\star$).

        Since the $\valid(v)$ operation returns $\true$ in $H'$,
            by Step 3 of Definition~\ref{steps-mar-2},
             there is a $\sign(v)$ operation that returns $\success$ in $H'$.
        By Observation~\ref{signl-mar-2}, 
            the linearization point $t'$ of this $\sign(v)$ operation is in the interval $(t^v_0,t^v_1)$.
        So $t' < t^v_1$.
        By ($\star$),
            $t' < t$. 
        Thus, by the Definition~\ref{hlp-mar-2}(2) of $L$, this $\sign(v)$ operation that returns $\success$ {\precedes} the $\valid(v)$ operation in $L$.
\end{proof}

By Lemma~\ref{vs1} and Lemma~\ref{vs2}, we have the following theorem.
\begin{theorem}\label{sv-mar-2}
    In $L$, 
	a $\valid(v)$ operation returns $\true$ if and only if
	there is a $\sign(v)$ operation that returns $\success$ and {\precedes} this $\valid(v)$ operation.
\end{theorem}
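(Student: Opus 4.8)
The plan is to obtain Theorem~\ref{sv-mar-2} as an immediate consequence of the two lemmas that precede it, namely Lemma~\ref{vs1} and Lemma~\ref{vs2}, together with the elementary fact that every $\valid(-)$ operation in $L$ returns a Boolean value (it returns either $\true$ or $\false$, since $\valid(-)$ operations in $L$ come from $H|{\hct}$ by Observation~\ref{validfromcorrect-mar-2} and every correct process completes its operations). So this final step is essentially bookkeeping: it packages the two one-directional implications about $L$ into a single ``if and only if''.

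First I would handle the forward direction. Suppose a $\valid(v)$ operation returns $\true$ in $L$. Then Lemma~\ref{vs2} directly yields a $\sign(v)$ operation that returns $\success$ and precedes this $\valid(v)$ operation in $L$, which is exactly what is required.

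For the converse, suppose there is a $\sign(v)$ operation that returns $\success$ and precedes a given $\valid(v)$ operation in $L$, and I want to conclude that this $\valid(v)$ returns $\true$. I would argue by contradiction: if the $\valid(v)$ does not return $\true$, then it returns $\false$, and Lemma~\ref{vs1} applied to this $\valid(v)$ says that no $\sign(v)$ operation returning $\success$ precedes it --- contradicting the hypothesis. Hence the $\valid(v)$ operation returns $\true$. Combining the two directions gives the stated equivalence.

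I expect no real obstacle in the theorem itself; all the difficulty lies in Lemma~\ref{vs1} and Lemma~\ref{vs2}. Those in turn rest on the careful placement of each added $\sign(v)$ operation inside the open interval $(t^v_0, t^v_1)$ (Observation~\ref{signl-mar-2}) and on Observation~\ref{sameip-mar-2}, which identifies $t^v_0$ with the latest invocation of a $\false$-returning $\valid(v)$ and $t^v_1$ with the earliest response of a $\true$-returning $\valid(v)$. Together with the definition of the linearization points (Definition~\ref{testsetlinearization-mar-2}) and the ordering of $L$ by linearization point (Definition~\ref{hlp-mar-2}), these facts force every added $\sign(v)$ to land strictly between every $\false$-returning $\valid(v)$ and every $\true$-returning $\valid(v)$ in $L$, which is precisely what Theorem~\ref{sv-mar-2} records.
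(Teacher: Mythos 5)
Your proposal is correct and matches the paper exactly: the paper derives Theorem~\ref{sv-mar-2} immediately from Lemma~\ref{vs1} and Lemma~\ref{vs2}, with the forward direction being Lemma~\ref{vs2} and the converse following by contraposition from Lemma~\ref{vs1} since every $\valid(-)$ operation in $L$ is complete and returns either $\true$ or $\false$. Your added remark about where the real work lives (the placement of the added $\sign(v)$ operations in $(t^v_0,t^v_1)$) is accurate but not part of this theorem's proof.
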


\begin{theorem}\label{linearthe-mar-case2}
    If the writer is not correct in $H$, the history $H$ is Byzantine linearizable with respect to a SWMR {\mar}.
\end{theorem}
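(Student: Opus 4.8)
The plan is to mirror, almost verbatim, the proof of Theorem~\ref{linearthe-mar-case1}, but now using the history $H'$ that was already built in Definition~\ref{steps-mar-2} for the sub-case where the writer $p_1$ is faulty. By Definition~\ref{def-hbl}, to show that $H$ is Byzantine linearizable with respect to a SWMR {\mar} (Definition~\ref{def-mar}) it suffices to exhibit a history $H'$ such that (1)~$H'|{\hct} = H|{\hct}$ and (2)~$H'$ is linearizable with respect to a SWMR {\mar}.

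For (1) I would simply invoke Observation~\ref{cc-mar-2}: Step~1 of the construction sets $H' = H|{\hct}$, and Steps~2 and~3 only insert $\sign(-)$ and $\Set(-)$ operations attributed to the faulty writer $p_1$, none of which occur in $H|{\hct}$, so the restriction to the correct processes is unchanged. For (2) I would take $L$ to be the linearization of $H'$ given in Definition~\ref{hlp-mar-2} and check the two clauses of Definition~\ref{LinearizableImplementationCrash}: clause~(a), that $L$ respects the precedence relation of $H'$, is exactly Observation~\ref{precedes-l-mar-2}; clause~(b), that $L$ conforms to the sequential specification of a SWMR {\mar}, is exactly the conjunction of Theorem~\ref{rw-mar-2} (the $\Test$/$\Set$ clause), Theorem~\ref{ws-mar-2} (the $\sign$/$\Set$ clause), and Theorem~\ref{sv-mar-2} (the $\valid$/$\sign$ clause). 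Combining (1) and (2) yields that $H$ is Byzantine linearizable, which is the claim.

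Since every ingredient is already in place, I expect no real obstacle: the theorem is a short bookkeeping wrap-up of Case~2. The points that needed care were discharged earlier, namely that the inserted $\sign(-)$ and $\Set(-)$ operations can be made pairwise sequential (they are all ``executed'' by the single writer $p_1$, and one shrinks their $[invocation,response]$ intervals as noted after Definition~\ref{steps-mar-2}); that for each value $v$ with a $\valid(v)$ returning $\true$ the window $(t^v_0, t^v_1)$ into which the matching $\sign(v)$ is placed is nonempty (Lemma~\ref{iexists-mar}); and that the response-time bookkeeping invoked in the conformance proofs is valid (Observation~\ref{sameip-mar-2}). Given all this, the proof reduces to citing Observation~\ref{cc-mar-2}, Observation~\ref{precedes-l-mar-2}, and Theorems~\ref{rw-mar-2}, \ref{ws-mar-2}, \ref{sv-mar-2}.
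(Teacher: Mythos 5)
Your proof is correct and matches the paper's own proof of Theorem~\ref{linearthe-mar-case2} essentially verbatim: it takes $H'$ from Definition~\ref{steps-mar-2}, uses Observation~\ref{cc-mar-2} for condition (1), and uses Observation~\ref{precedes-l-mar-2} together with Theorems~\ref{rw-mar-2}, \ref{ws-mar-2}, and \ref{sv-mar-2} for condition (2). The additional remarks about the nonemptiness of $(t^v_0,t^v_1)$ and the sequentiality of the inserted writer operations are accurate and consistent with the supporting material already established before the theorem.
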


\begin{proof}
Assume that the writer $p_1$ is not correct in $H$.
    To prove that $H$ is
        Byzantine linearizable with respect to a SWMR {\mar} (Definition~\ref{def-mar}),
    we must prove that there is a history $H'$ such that:
     \begin{compactenum}[(1)]
        \item $H'|{\hct} = H|{\hct}$.
        \item $H'$ is linearizable with respect to a SWMR {\mar}. 
    \end{compactenum}
    
    Let $H'$ be the history given by Definition~\ref{steps-mar-2}.
    By Observation~\ref{cc-mar-2}, $H'$ satisfies (1).
    
    Let $L$ be the linearization of $H'$ given in Definition~\ref{hlp-mar-2}.
    \begin{compactenum}[(a)]
        \item By Observation~\ref{precedes-l-mar-2}, $L$ respects the precedence relation between the operations of $H'$.
        \item By Theorems~\ref{rw-mar-2},~\ref{ws-mar-2}, and~\ref{sv-mar-2}, $L$ conforms to the sequential specification of a SWMR {\mar}.
        \end{compactenum}
    Thus, $H'$ also satisfies~(2).
\end{proof}

By Theorems~\ref{linearthe-mar-case1} and~\ref{linearthe-mar-case2}, the arbitrary history $H$ of Algorithm~\ref{code-mar} is Byzantine linearizable with respect to a {\mar} (independent of whether the writer $p_1$ is correct or not in $H$):

\begin{restatable}{theorem}{verifybl}\label{linearthe-mar}
    [\textsc{Byzantine linerizability}] The history $H$ is Byzantine linearizable with respect to a SWMR {\mar}.
\end{restatable}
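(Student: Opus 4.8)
The plan is to prove Theorem~\ref{linearthe-mar} by a trivial case split on whether the writer $p_1$ is correct in the fixed history $H$ of Algorithm~\ref{code-mar}; these two cases are exhaustive and each has already been discharged. Concretely, if $p_1$ is correct in $H$ I would invoke Theorem~\ref{linearthe-mar-case1}, and if $p_1$ is not correct in $H$ I would invoke Theorem~\ref{linearthe-mar-case2}; either way $H$ is Byzantine linearizable with respect to a SWMR {\mar}, which is exactly the claim.

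It is worth recording what each of those two case theorems supplies, since that is where the actual work lives. By Definition~\ref{def-hbl}, Byzantine linearizability of $H$ means exhibiting a history $H'$ with $H'|{\hct} = H|{\hct}$ that is linearizable (in the crash sense, Definition~\ref{LinearizableImplementationCrash}) with respect to a {\mar}. When $p_1$ is correct, the natural choice $H' = H|{\hct}$ works: one assigns each operation a linearization point inside its interval --- line~\ref{r-m} for $\Set$, line~\ref{read-m} for $\Test$, the $R_1$-write at line~\ref{setter1-m} for a successful $\sign$, and invocation resp.\ response for a $\valid$ returning $\false$ resp.\ $\true$ --- and then Theorems~\ref{rw-mar},~\ref{ws-mar}, and~\ref{sv-mar} verify that the induced order $L$ conforms to the sequential specification while Observation~\ref{precedes-l-mar} gives precedence-preservation. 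When $p_1$ is Byzantine, $H|{\hct}$ contains no writer operations at all, so one must synthesize them: for each value $v$ with some $\valid(v)$ returning $\true$, insert a $\sign(v)$ returning $\success$ somewhere in the interval $(t^v_0,t^v_1)$ --- which is nonempty by Lemma~\ref{iexists-mar} --- then prepend a $\Set(v)$ immediately before every $\Test$ returning $v$ and before every such inserted $\sign(v)$, shrinking intervals as needed to keep the writer's operations sequential; Theorems~\ref{rw-mar-2},~\ref{ws-mar-2}, and~\ref{sv-mar-2} then close the argument.

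Given all of the above, the proof of Theorem~\ref{linearthe-mar} itself is essentially a one-liner: apply whichever of Theorem~\ref{linearthe-mar-case1} or Theorem~\ref{linearthe-mar-case2} matches $H$. The only genuine obstacle was already overcome inside Case~2 --- one has to be sure the fabricated $\sign(v)$ operations can be slotted into $(t^v_0,t^v_1)$ without contradicting any correct reader's observed $\valid(v)$ response, and this is precisely what the relay lemma (Lemma~\ref{testatestb-mar}), together with Lemmas~\ref{testreturn1-mar} and~\ref{notreturn0-mar} and the bookkeeping of Definition~\ref{t0t1-mar}, guarantees. With those in hand there is nothing left to do but state the case analysis.
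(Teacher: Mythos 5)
Your proposal matches the paper's own proof exactly: Theorem~\ref{linearthe-mar} is obtained by the case split on whether the writer $p_1$ is correct in $H$, invoking Theorem~\ref{linearthe-mar-case1} in the first case and Theorem~\ref{linearthe-mar-case2} in the second, and your recap of what those two theorems establish (the choice of $H'$, the linearization points, and the insertion of $\sign(v)$ and $\Set(v)$ operations within $(t^v_0,t^v_1)$ when the writer is Byzantine) is faithful to the paper's development. Nothing is missing.
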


By Theorem~\ref{termination-mar} and Theorem~\ref{linearthe-mar},
    we have the following:

\verifycorrect*

\section{Correctness of the Authenticated Register Implementation}\label{a-tar}
In the following, we prove that Algorithm~\ref{code-tar} is a
    correct implementation of a SWMR {\tar} for systems 
    with $n >3f$ processes, $f$ of which can be Byzantine.
To do so, in the following, we consider an arbitrary (infinite) history $H$ of Algorithm~\ref{code-tar} (where $n>3f$), and prove:
    \begin{compactitem}
        \item \textsc{[Byzantine linearizability]} The history $H$ is Byzantine linearizable with respect to a SWMR {\tar} (Section~\ref{linear-tar}).
        \item \textsc{[Termination]} All processes that are correct in $H$ complete all their operations (Section \ref{wait-free-tar}).
    \end{compactitem}

\medskip
\noindent
In Algorithm~\ref{code-tar}, a process executes the $\valid(-)$ procedure in two cases:
(a) when it applies a $\valid$ operation, and 
(b) when it applies a $\Test$ operation, and it calls the $\valid(-)$ procedure \emph{inside} the $\Test()$ procedure (see line~\ref{verifyp-tar}).
Each one of these two types of $\valid(-)$ executions has an \emph{invocation} and a \emph{response}.

\medskip\noindent
\textbf{Notation.} 
\begin{itemize}
    \item A $\valid(-)$ execution performed to apply a $\valid$ operation is called a
    $\valid(-)$ operation.
    
   \item Note that the register $R_1$ stores a set of $\langle \ell,v \rangle$ tuples.
   For convenience, if the set stored in $R_1$ contains $\langle -,v \rangle$, we write $v \in R_1$,
    and say that ``$v$ is in $R_1$''.

\end{itemize}

\subsection{Termination}\label{wait-free-tar}
\begin{observation}\label{monock-tar}
	For every {\ct} reader $p_k\in \{p_2,\ldots,p_n \}$, 
		the value of $C_k$ is non-decreasing.
\end{observation}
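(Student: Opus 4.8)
The plan is to argue, exactly as for the analogous Observation~\ref{monock-mar} in the verifiable‑register section, by inspecting every place in Algorithm~\ref{code-tar} where the register $C_k$ can be written. First I would note that $C_k$ is a SWMR register whose sole writer is $p_k$: by the definition of a SWMR register, no other process (correct or Byzantine) has write access to it, so the value of $C_k$ over time is determined entirely by the write steps that $p_k$ itself takes.

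Next I would locate those write steps in $p_k$'s code. A correct reader $p_k$ only ever modifies $C_k$ inside the $\valid(-)$ procedure, at line~\ref{ckplus-tar}, where it executes $C_k \gets C_k+1$; neither the $\Set(-)$, $\Test()$, nor $\fresh()$ procedures write to $C_k$ (in $\fresh()$ the register $C_k$ is only read, at line~\ref{collectck-tar}). This remains true regardless of whether the $\valid(-)$ procedure is entered as a standalone $\valid$ operation or is invoked from within a $\Test$ operation at line~\ref{verifyp-tar}, since in both cases it is the same procedure body that is executed. Hence every write that $p_k$ performs to $C_k$ is an increment by $1$, and since $C_k$ is initialized to $0$, the successive values stored in $C_k$ are $0,1,2,\dots$, which is a (strictly) increasing sequence.

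Finally, I would combine these two points: the only agent that ever changes $C_k$ is $p_k$, and every change it makes increases the stored value. Therefore, at all times $t \le t'$ in the history $H$, the value of $C_k$ at time $t$ is at most the value of $C_k$ at time $t'$; i.e., $C_k$ is non‑decreasing. There is essentially no obstacle here — the statement follows immediately from a syntactic inspection of the algorithm together with the single‑writer property of SWMR registers — so the only care needed is to make sure the enumeration of write sites in $p_k$'s code (and in the background $\fresh()$ procedure) is exhaustive, in particular that the dual use of the $\valid(-)$ procedure does not introduce an additional write to $C_k$.
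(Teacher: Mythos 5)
Your proposal is correct and matches the paper's (implicit) justification: the paper states this as an observation that follows directly from code inspection, and your argument — $C_k$ is single-writer, the only write site is the increment at line~\ref{ckplus-tar} (whether $\valid(-)$ is called as an operation or from within $\Test$), and $\fresh()$ only reads $C_k$ — is precisely that inspection made explicit. No gaps.
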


\begin{observation}\label{correctonlywrites1-tar}
For every {\ct} process $p_i\in \{p_1,\ldots,p_n\}$,
	if $v \in R_i$ at time $t$, then $v \in R_i$ at all times $t' \ge t$.
\end{observation}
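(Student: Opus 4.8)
The plan is to prove this observation by a straightforward inspection of the code of Algorithm~\ref{code-tar}, using the key fact that $R_i$ is a SWMR register whose only writer is $p_i$ itself. Since $p_i$ is {\ct}, no process other than $p_i$ can ever modify $R_i$ (a Byzantine process has no write access to a register it does not own), and $p_i$ modifies $R_i$ only by executing the write statements that appear in its own algorithm. So it suffices to check that every such write is a monotone union that never removes an element.

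First I would split into two cases according to whether $p_i$ is the writer $p_1$ or a reader. If $i = 1$, the only statement of Algorithm~\ref{code-tar} in which the writer writes $R_1$ is line~\ref{r1-tar}, which has the form $R_1 \gets R_1 \cup \{\langle \ell, v\rangle\}$; this operation only adds a tuple to the set stored in $R_1$ and never removes any. Recalling the notational convention that ``$v \in R_1$'' means $\langle -, v\rangle$ is in the set stored in $R_1$, it follows that once $\langle -, v\rangle$ is present it stays present, i.e.\ once $v \in R_1$ it remains so. If $i \neq 1$, the only statement in which a reader $p_i$ writes $R_i$ is line~\ref{follow-tar} of the $\fresh()$ procedure, which has the form $R_i \gets R_i \cup \{v\}$; again this only adds a value and never removes one.

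Combining the two cases: at every step taken by the {\ct} process $p_i$, the set stored in $R_i$ is non-decreasing under inclusion (modulo the timestamp projection in the case $i=1$), and no other process ever writes $R_i$. Hence if $v \in R_i$ at time $t$, then $v \in R_i$ at all times $t' \ge t$, which is exactly the statement of the observation. (This argument is the exact analogue of the one establishing Observation~\ref{correctonlywrites1-mar} for Algorithm~\ref{code-mar}.)

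Since this is only an observation, there is no genuine obstacle; the only point requiring care is to enumerate \emph{all} the lines at which a {\ct} process writes $R_i$ and to confirm that each is a union with the current contents, and to handle the writer's register $R_1$ — whose elements are timestamped tuples rather than bare values — via the stated notational convention.
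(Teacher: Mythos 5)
Your proposal is correct and matches the paper's intent: the paper states this as an unproved observation, and the implicit justification is exactly your code inspection — the only writes to $R_1$ (line~\ref{r1-tar}) and to $R_j$ for $j \neq 1$ (line~\ref{follow-tar}) are unions with the current contents, and the SWMR discipline prevents any other process from touching $R_i$. Your enumeration of the write sites is complete and your handling of the timestamped-tuple convention for $R_1$ is the right care point.
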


\begin{observation}\label{correctonlyv0-tar}
For every {\ct} process $p_i\in \{p_1,\ldots,p_n\}$,
	$v_0 \in R_i$ at all times.
\end{observation}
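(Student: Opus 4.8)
The plan is to derive this observation directly from the initialization of the registers $R_i$ together with the fact that a {\ct} process never deletes anything from its own register $R_i$. In fact, given Observation~\ref{correctonlywrites1-tar} (membership $v \in R_i$ is preserved over time for a {\ct} process $p_i$), it suffices to verify that $v_0 \in R_i$ holds at time $0$.

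First I would split on whether $p_i$ is the writer $p_1$ or a reader $p_k$, since these two kinds of processes own registers of different formats. If $p_i = p_1$, then $R_1$ is initialized to $\{\langle 0, v_0\rangle\}$; since (by the notation convention for $R_1$) ``$v_0 \in R_1$'' means that $\langle -, v_0\rangle$ belongs to the set stored in $R_1$, we have $v_0 \in R_1$ at time $0$. If instead $p_i = p_k$ for some $k \in \{2,\dots,n\}$, then $R_k$ is initialized to $\{v_0\}$, so again $v_0 \in R_k$ at time $0$. Having established $v_0 \in R_i$ at time $0$ in both cases, I would then apply Observation~\ref{correctonlywrites1-tar} with $v = v_0$ and $t = 0$ to conclude that $v_0 \in R_i$ at all times, as claimed.

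The main (mild) obstacle here is purely bookkeeping: keeping straight the two different formats of the register $R_i$ — for the writer it stores timestamped tuples $\langle \ell, v\rangle$ while for each reader it stores bare values — and reading the ``$v \in R_1$'' notation correctly in the writer's case. I would also note, for the sake of Observation~\ref{correctonlywrites1-tar}'s applicability, that in Algorithm~\ref{code-tar} a {\ct} process only ever updates $R_i$ via a $\cup$-assignment (line~\ref{r1-tar} for the writer, line~\ref{follow-tar} for a reader inside $\fresh()$), so the register contents are monotonically non-decreasing under inclusion; beyond this remark the argument is immediate and requires no new lemma.
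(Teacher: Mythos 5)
Your proposal is correct and matches the paper's (implicit) justification: the paper states this as an observation with no written proof, precisely because it follows immediately from the initializations $R_1 = \{\langle 0, v_0\rangle\}$ and $R_k = \{v_0\}$ together with the fact that correct processes only ever update these registers by set union, which is exactly the argument you give. Your care with the two register formats and the ``$v \in R_1$'' notation convention is appropriate but does not change the substance.
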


\begin{observation}\label{monoset1-tar}
	In every {\validpg{(-)}} by a {\ct} reader,
	 $|\set_1|$ is non-decreasing.
\end{observation}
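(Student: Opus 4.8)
The plan is to prove the claim by a direct inspection of the $\valid(-)$ procedure of Algorithm~\ref{code-tar} (which, as the text notes, coincides with the one in Algorithm~\ref{code-mar}), tracking every statement that writes to the local variable $\set_1$. Since the reader $p_k$ is {\ct}, it executes the procedure exactly as prescribed, so I may reason directly about how $p_k$'s own $\set_1$ evolves over time during the execution.

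First I would observe that $p_k$ initializes $\set_1$ to $\emptyset$ and that, throughout the procedure, the only place $\set_1$ is ever modified is line~\ref{set1-tar}, where $p_k$ performs $\set_1 \gets \set_1 \cup \{p_j\}$ for the process $p_j$ it just read from in that round. In particular --- and this is the one point worth checking carefully --- $\set_1$ is never reset or shrunk anywhere inside the while loop of lines~\ref{whileloop-tar}--\ref{return0-tar}; this is in contrast to $\set_0$, which $p_k$ does reset to $\emptyset$ at line~\ref{empty0-tar} whenever a new witness is added. A quick enumeration of the few statements that touch $p_k$'s local state in a round --- the increment of $C_k$, the repeat-until reads, the two branches guarded by $v \in r_j$ and $v \notin r_j$, and the two return tests --- confirms that none of them other than line~\ref{set1-tar} assigns to $\set_1$.

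Then the conclusion is immediate: an assignment of the form $\set_1 \gets \set_1 \cup \{p_j\}$ can only leave $|\set_1|$ unchanged (when $p_j \in \set_1$ already) or increase it by exactly one (when $p_j \notin \set_1$), and these are the only updates to $\set_1$ during the execution; hence $|\set_1|$ is non-decreasing, as claimed. I do not anticipate any real obstacle here: the statement follows directly from the code, and the only subtlety is the ``no reset of $\set_1$'' check just described.
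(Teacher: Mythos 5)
Your proposal is correct and matches the paper's intent: the paper states this as an unproved observation precisely because it follows immediately from the code, and your inspection — $\set_1$ is initialized to $\emptyset$ and thereafter modified only by $\set_1 \gets \set_1 \cup \{p_j\}$ at line~\ref{set1-tar}, never reset (unlike $\set_0$) — is exactly the intended justification.
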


\begin{lemma}\label{correctp1v-tar}
    	Suppose there is a time $t$ such that
		at least $f+1$ processes $p_i$ have $v \in R_i$ at time $t$.
		If the writer $p_1$ is {\ct}, 
		then $p_1$ has $v \in R_1$ at time $t$.
\end{lemma}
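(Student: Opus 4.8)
The plan is to use an ``earliest witness'' argument, in the spirit of the proof of Lemma~\ref{settest-mar}. First I would dispose of the trivial case $v = v_0$: by Observation~\ref{correctonlyv0-tar}, $v_0 \in R_1$ at all times, so in particular at time $t$. So assume $v \neq v_0$. Since at most $f$ processes are faulty, among the $f+1$ processes that have $v$ in their register at time $t$ at least one is correct; hence I can let $t^*$ be the earliest time at which some correct process has $v$ in its register, and let $p_a$ be a correct process with $v \in R_a$ at time $t^*$. Because some correct process has $v$ in its register at time $t$, we have $t^* \le t$.

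The key claim is that $a = 1$. To prove it, I would suppose $a \neq 1$ and derive a contradiction. Since $R_a$ is initialized to $\{v_0\}$ and $v \neq v_0$, the process $p_a$ adds $v$ to $R_a$ exactly at time $t^*$, and (inspecting Algorithm~\ref{code-tar}) the only step at which a correct reader writes its own register is line~\ref{follow-tar} of the $\fresh()$ procedure. So in the iteration where this write happens, $p_a$ passed the test of line~\ref{followcondition-tar}, meaning one of the following held: (i) $v \in r_1$, where $r_1$ was extracted from $R_1$ (lines~\ref{readr1-tar}--\ref{extract-r1-tar}); or (ii) $v$ lies in at least $f+1$ of the sets $r_2,\dots,r_n$ read at line~\ref{sets-tar} --- these are the only two cases, since in the count of line~\ref{followcondition-tar} the set $r_1$ can contribute only in case (i). In case (i), $p_a$ read $R_1$ at a step strictly before $t^*$ and found $\langle -, v\rangle$ in it, so $v \in R_1$ strictly before $t^*$; since $p_1$ is correct, this is a correct process with $v$ in its register before $t^*$, contradicting minimality of $t^*$. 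In case (ii), $p_a$ read $v \in R_i$ at line~\ref{sets-tar}, at a step strictly before $t^*$, for $f+1$ processes $p_i$ with $2 \le i \le n$; at least one of them is correct, again contradicting minimality of $t^*$. Hence $a = 1$.

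Finally, from $a = 1$ we get $v \in R_1$ at time $t^* \le t$; since a correct writer only ever adds tuples to $R_1$ (it never deletes), Observation~\ref{correctonlywrites1-tar} yields $v \in R_1$ at time $t$, which is the claim.

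The main thing to be careful about is the case analysis at line~\ref{followcondition-tar}: I need to use that $r_1$ is read directly from the writer's register $R_1$, that a correct writer never removes tuples from $R_1$ (so Observation~\ref{correctonlywrites1-tar} applies), and that in case (ii) the $f+1$ witnesses genuinely lie among $p_2,\dots,p_n$ so that minimality of $t^*$ is contradicted. Beyond this bookkeeping I do not anticipate any real difficulty.
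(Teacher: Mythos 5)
Your proof is correct and follows essentially the same route as the paper's: an earliest-witness argument on the first correct process to hold $v$ in its register, with the same two-case analysis of the condition at line~\ref{followcondition-tar} (either $v$ was seen in $R_1$, or in $f+1$ reader registers, each contradicting minimality of $t^*$). The only cosmetic difference is that you prove $a=1$ directly and treat $v=v_0$ as a separate trivial case, whereas the paper assumes $v\notin R_1$ at time $t$ from the outset and derives $v\neq v_0$ and $p_a\neq p_1$ from that assumption; the substance is identical.
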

\begin{proof}
    Assume for contradiction that, (1) there is a time $t$ such that
		at least $f+1$ processes $p_i$ have $v \in R_i$ at time $t$;
     (2) the writer $p_1$ is {\ct} and $p_1$ has $v \not\in R_1$ at time $t$.
    Since $p_1$ is {\ct}, by Observation~\ref{correctonlyv0-tar},
        $v\ne v_0$ ($\star$).
    Since there are at most $f$ faulty processes,
        by~(1),
        there is at least one {\ct} process $p_j$ that has $v \in R_j$ at time $t$.
    Let $t^*$ be the earliest time when a {\ct} process $p_i$ has $v \in R_i$, say this is process $p_a$.
    Then $t^* \le t$.
    By (2) and Observation~\ref{correctonlywrites1-tar},
        $p_a\ne p_1$.
    So since $R_a$ is initially $\{v_0\}$, by $(\star)$, $p_a$ inserts $v$ into $R_a$ at time $t^*$.
    Thus, $p_a$ inserts $v$ into $R_a$ at time $t^*$ at line~\ref{follow-tar} of $\fresh()$.
	This implies $p_a$ finds that the condition $v \in r_1$ or $|\{ r_i~|~1 \le i \le n~\text{and}~ v \in r_i\}| \ge f+1$ holds at line~\ref{followcondition-tar} before $t^*$.
			There are two cases:
			\begin{itemize}
			
				\item Case 1: $p_a$ finds that the condition $v \in r_1$ of line~\ref{followcondition-tar} holds before $t^*$.
					This implies that $v\in R_1 $ when $p_a$ reads $R_1$ at line~\ref{readr1-tar} before  $t^*$.
				Thus, $v \in R_1$ before time $t^*$.
					Since the writer $p_1$ is {\ct},
						this contradicts that $t^*$ is the earliest time when a {\ct} process $p_i$ has $v\in R_i$.

				\item Case 2: $p_a$ finds that the condition $|\{ r_i~|~1 \le i \le n~\text{and}~ v \in r_i\}| \ge f+1$ holds at line~\ref{followcondition-tar} before $t^*$.
					Since there are at most $f$ faulty processes,
						there is at least one {\ct} process $p_k$ such that $v\in r_k$ before $t^*$. 
					This implies $p_a$ reads $v\in R_k$ at line~\ref{readr1-tar} or line~\ref{sets-tar} before $t^*$.
					Since $p_k$ is {\ct}, 
						this contradicts that $t^*$ is the earliest time when a {\ct} process $p_i$ has $v\in R_i$.
			\end{itemize}
	
\end{proof}

	\begin{lemma}\label{asker1-tar}
		Suppose there is a time $t$ such that
		$v\in R_1$ at all times $t' \ge t$, or at least $f+1$ processes $p_i$ have $v \in R_i$ at all times $t' \ge t$.
		Consider any iteration of the while loop of the $\fresh()$ procedure executed by a {\ct} process $p_j$.
		If $p_j$ inserts $p_k$ into $askers$ at line~\ref{askers-tar} at some time $t_a \ge t$, 
		then $p_j$ writes $\langle r_j , - \rangle$ with $v\in r_j$ into $R_{jk}$ at line~\ref{fresh1-tar}.
		\end{lemma}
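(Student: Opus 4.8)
The plan is to follow the proof of Lemma~\ref{asker1-mar} closely, but to split the argument according to whether $p_j$ is the writer ($j=1$) or a reader ($j \neq 1$), since in Algorithm~\ref{code-tar} the writer's branch of $\fresh()$ skips the witness-aggregation lines~\ref{sets-tar}--\ref{readrj-tar}. Fix a time $t$ for which either (a) $v \in R_1$ at all times $t' \ge t$, or (b) at least $f+1$ processes have $v$ in their $R_i$ at all times $t' \ge t$; fix a correct $p_j$ and an iteration of its $\fresh()$ while loop in which it inserts $p_k$ into $askers$ at line~\ref{askers-tar} at some time $t_a \ge t$. The first step is a control-flow observation: the entire ``if $askers \neq \emptyset$'' block (lines~\ref{readr1-tar}--\ref{setprev-tar}) is executed after line~\ref{askers-tar} in this iteration, so every register read it performs occurs at a time $> t_a \ge t$.

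For $j \neq 1$ the argument is essentially that of Lemma~\ref{asker1-mar}. In this iteration $p_j$ reads $R_1$ at line~\ref{readr1-tar}, forms $r_1$ at line~\ref{extract-r1-tar}, and reads every $R_i$ with $i \ge 2$ at line~\ref{sets-tar}, all after time $t$. If (a) holds then $\langle -,v\rangle \in R_1$ at that read, so $v \in r_1$; if (b) holds then each of the $\ge f+1$ processes $p_i$ with $v \in R_i$ at all times $\ge t$ contributes $v$ to the corresponding $r_i$ (read via line~\ref{sets-tar} when $i \ge 2$, via line~\ref{extract-r1-tar} when $i = 1$), so $|\{r_i \mid 1 \le i \le n,\ v \in r_i\}| \ge f+1$. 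In either case the condition in line~\ref{followcondition-tar} is satisfied, $p_j$ inserts $v$ into $R_j$ at line~\ref{follow-tar}, and then (using that $p_j$ is correct and hence does not remove $v$, cf.\ Observation~\ref{correctonlywrites1-tar}) it reads $v \in r_j$ at line~\ref{readrj-tar} and writes $\langle r_j, c_k\rangle$ with $v \in r_j$ into $R_{jk}$ at line~\ref{fresh1-tar}, as required.

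For $j = 1$ the writer is correct, and the set it will write into $R_{1k}$ at line~\ref{fresh1-tar} is precisely $r_1$, formed at line~\ref{extract-r1-tar} from its read of $R_1$ at line~\ref{readr1-tar}; this read happens at some time $t_r > t_a \ge t$. If (a) holds, $\langle -,v\rangle \in R_1$ at time $t_r$, so $v \in r_1$. If only (b) holds, I would invoke Lemma~\ref{correctp1v-tar} at time $t_r$: since at time $t_r$ at least $f+1$ processes have $v \in R_i$ and $p_1$ is correct, $v \in R_1$ at time $t_r$, hence again $v \in r_1$. Either way $p_1$ writes $\langle r_1, c_k\rangle$ with $v \in r_1$ into $R_{1k}$, which proves the lemma in this case.

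The main obstacle I anticipate is exactly the writer case $j=1$: the proof of the {\mar} analogue relies on the uniform witness-aggregation code, which the writer does not execute here, so one has to notice this asymmetry and route through Lemma~\ref{correctp1v-tar} to show that a correct writer already holds $v$ in $R_1$ whenever $f+1$ processes do. The remaining steps are routine control-flow and monotonicity reasoning of the same kind used in the {\mar} case.
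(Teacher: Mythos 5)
Your proof is correct and follows essentially the same route as the paper's: the same case split on the two hypotheses and on whether $p_j$ is the writer, with the writer case under the ``$f+1$ witnesses'' hypothesis handled exactly as in the paper by appealing to Lemma~\ref{correctp1v-tar} to conclude that a correct $p_1$ already has $v \in R_1$. The asymmetry you flag as the main obstacle is precisely the point the paper's proof addresses the same way.
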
 
	
    \begin{proof}
		Suppose there is a time $t$ such that
		$v\in R_1$ at all times $t' \ge t$, or at least $f+1$ processes $p_i$ have $v \in R_i$ at all times $t' \ge t$ ($\star$).
        
		Consider any iteration of the while loop of the $\fresh()$ procedure executed by a {\ct} process $p_j$.
		Suppose $p_j$ inserts $p_k$ into $askers$ at line~\ref{askers-tar} at some time $t_a \ge t$ and so $askers \ne \emptyset$ at line~\ref{replyasker-tar}. 
        There are two cases:
        \begin{enumerate}
             \item $v \in R_1$ at all times $t' \ge t$.
             Thus, when $p_1$ reads $R_1$ after time $t_a \ge t$ at line~\ref{readr1-tar},
                    $v \in R_1$.
                    So $p_1$ inserts $v$ into $r_1$
                    at line~\ref{extract-r1-tar}.    
                There are two cases:
                \begin{itemize}
                    \item Case 1: $j\ne 1$.
                      Then $p_j$ finds that the condition $v \in r_1$ holds at line~\ref{followcondition-tar}.
                      So $p_j$ inserts $v$ into $R_j$ at line~\ref{follow-tar}.
                 Therefore, $p_j$ writes $\langle r_j , - \rangle$ with $v\in r_j$ into $R_{jk}$ at line~\ref{fresh1-tar}.
    
                 \item Case 2: $j = 1$.
                 So $p_1$ writes $\langle r_1 , - \rangle$ with $v\in r_1$ into $R_{1k}$ at line~\ref{fresh1-tar}.
                \end{itemize}
                
            \item At least $f+1$ processes $p_i$ have $v \in R_i$ at all times $t' \ge t$.
                There are two cases:
                \begin{itemize}
                    \item Case 1: $j\ne 1$. 
                    Then there are at least $f+1$ processes $p_i$ such that $p_j$ reads $R_i$ with $v\in R_i$ at line~\ref{readr1-tar} or line~\ref{sets-tar} after time $t_a \ge t$.
                    So, by ($\star$), $p_j$ finds that the condition $|\{ r_i~|~1 \le i \le n~\text{and}~ v \in r_i\}| \ge f+1$ holds at line~\ref{followcondition-tar}.
                      Then $p_j$ inserts $v$ into $R_j$ at line~\ref{follow-tar}.
                 Therefore, $p_j$ writes $\langle r_j , - \rangle$ with $v\in r_j$ into $R_{jk}$ at line~\ref{fresh1-tar}.
    
                 \item Case 2: $j = 1$.
                 Since at least $f+1$ processes $p_i$ have $v \in R_i$ at all times $t' \ge t$ and the writer $p_1$ is {\ct},
                    by Lemma~\ref{correctp1v-tar},
                    $v\in R_1$ at all times $t' \ge t$.
                    Thus, when $p_1$ reads $R_1$ after time $t_a \ge t$ at line~\ref{readr1-tar},
                    $v \in R_1$.
                    So $p_1$ inserts $v$ into $r_1$
                    at line~\ref{extract-r1-tar}.
                    Then $p_1$ writes $\langle r_1 , - \rangle$ with $v\in r_1$ into $R_{1k}$ at line~\ref{fresh1-tar}.
                \end{itemize}
        \end{enumerate}
       
        Therefore, in all possible cases,
            $p_j$ writes $\langle r_j , - \rangle$ with $v\in r_j$ into $R_{jk}$ at line~\ref{fresh1-tar}.
    \end{proof}

	\begin{lemma}\label{ri1beforeset1-tar}
        Consider any \validpg{(v)} by a {\ct} reader $p_k$.
		If a {\ct} process $p_j$ is in $\set_1$ of $p_k$ in this $\valid(v)$ at time~$t$,
			then $v \in R_j$ at all times $t' \ge t$.
		\end{lemma}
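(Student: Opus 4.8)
The plan is to mirror the proof of Lemma~\ref{ri1beforeset1-mar} for the verifiable register, transcribing it to the line labels of Algorithm~\ref{code-tar} and accounting for the asymmetry between the writer's register $R_1$ (which stores timestamped tuples $\langle \ell, v \rangle$) and a reader's register $R_j$ for $j \neq 1$ (which stores plain values). Throughout I use the convention stated for Algorithm~\ref{code-tar} that ``$v \in R_1$'' means $\langle -, v \rangle$ is in $R_1$, so that Observation~\ref{correctonlywrites1-tar} applies uniformly to $R_1$ as well.

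First I would fix the $\valid(v)$ execution by the correct reader $p_k$, a time $t$, and the correct process $p_j$ that is in $\set_1$ of $p_k$ in this $\valid(v)$ at time $t$. Since $\set_1$ grows only through the insertions at line~\ref{set1-tar}, there is a well-defined iteration of the while loop at line~\ref{whileloop-tar} in which $p_k$ inserts $p_j$ into $\set_1$; let $t_k^1 \le t$ be the time of that insertion. To reach line~\ref{set1-tar} in that iteration, $p_k$ must have found the test of line~\ref{check1-tar} to hold, so it read some tuple $\langle r, c \rangle$ with $v \in r$ from $R_{jk}$ at line~\ref{readri-tar}, at some time $t_k^0 < t_k^1$. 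Since $p_j$ is correct and $R_{jk}$ is initialized to $\langle \emptyset, 0 \rangle$ with $v \notin \emptyset$, $p_j$ must itself have written $\langle r_j, - \rangle$ with $v \in r_j$ into $R_{jk}$ at line~\ref{fresh1-tar} at some time no later than $t_k^0$, during some iteration of its $\fresh()$ loop.

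Next I would trace the local variable $r_j$ written at line~\ref{fresh1-tar} back within that $\fresh()$ iteration. If $j \neq 1$, then $r_j$ was last set at line~\ref{readrj-tar} from $R_j$, and $v \in r_j$ gives $v \in R_j$ at that read time; if $j = 1$, the writer skips the block at lines~\ref{sets-tar}--\ref{readrj-tar}, so $r_1$ was set at line~\ref{extract-r1-tar} as $\{\, u \mid \langle -, u \rangle \in r \,\}$ where $r \gets R_1$ at line~\ref{readr1-tar}, and $v \in r_1$ gives $\langle -, v \rangle \in R_1$, i.e.\ $v \in R_1 = R_j$ at that read time. In both cases $v \in R_j$ holds at some time that is at most $t_k^0 < t_k^1 \le t$. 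Since $p_j$ is correct, Observation~\ref{correctonlywrites1-tar} then yields $v \in R_j$ at all times $t' \ge t$, which is the claim.

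The only genuinely new point relative to the verifiable-register proof is the case $j = 1$: because $R_1$ has a different format and because in $\fresh()$ the writer never executes line~\ref{readrj-tar}, one cannot simply quote ``$r_j$ was read from $R_j$''; instead one must follow line~\ref{extract-r1-tar} back to the read of $R_1$ at line~\ref{readr1-tar} and invoke the $v \in R_1$ convention, checking that Observation~\ref{correctonlywrites1-tar} still applies under that convention (it does, since a correct writer only adds tuples to $R_1$). Everything else is a routine transcription of the verifiable-register argument, so I expect no real obstacle beyond keeping the two register formats straight.
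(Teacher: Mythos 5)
Your proposal is correct and follows essentially the same argument as the paper: locate the iteration in which $p_j$ enters $\set_1$, trace the tuple with $v\in r_j$ back to $p_j$'s own write at line~\ref{fresh1-tar}, conclude $v\in R_j$ by time $t_k^0<t$, and apply Observation~\ref{correctonlywrites1-tar}. Your explicit handling of the $j=1$ case (where $r_1$ comes from line~\ref{extract-r1-tar} rather than line~\ref{readrj-tar}) is in fact slightly more careful than the paper's proof, which cites only lines~\ref{readrj-tar} and~\ref{fresh1-tar} and leaves that case implicit under the ``$v\in R_1$'' convention.
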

		\begin{proof}
            Consider any \validpg{(v)} by a {\ct} reader $p_k$.
			Suppose a {\ct} process $p_j$ is in $\set_1$ of $p_k$ in this $\valid(v)$ at time~$t$
			Consider the iteration of the while loop at line~\ref{whileloop-tar}
                of the $\valid(v)$ in which 
				$p_k$ inserts $p_j$ into $\set_1$ at line~\ref{set1-tar},
				say at time $t_k^1 \le t$.
			By lines~\ref{readri-tar} and \ref{check1-tar}, 
				$p_k$ reads $\langle r_j, - \rangle$ with $v\in r_j$ from $R_{jk}$ at line~\ref{readri-tar},
				say at time $t_k^0 < t_k^1$.
			Since $p_j$ is {\ct} and $R_{jk}$ is initialized to $\langle \emptyset, - \rangle$, 
				$p_j$ wrote $\langle r_j, - \rangle$ with $v\in r_j$ into $R_{jk}$ at line~\ref{fresh1-tar} by time $t_k^0$.
			By line~\ref{readrj-tar} and line~\ref{fresh1-tar},
				$v \in R_j$ by time $t_k^0$.
			Since $t_k^0 < t_k^1$ and $t_k^1 \le t$,
				$t_k^0 < t$.
			Since $p_j$ is {\ct},
				by Observation~\ref{correctonlywrites1-tar},
				$v \in R_j$ at all times $t' \ge t$.
		\end{proof}

\begin{lemma}\label{onecorrectset0-tar}
	Consider any \validpg{(v)} by a {\ct} reader $p_k$.
	For each iteration of the while loop at line~\ref{whileloop-tar} of this $\valid(v)$, 
		the following loop invariants hold at line~\ref{whileloop-tar}:
	\begin{enumerate}
             \item\label{inv0-tar} $\set_1$ and $\set_0$ are disjoint.
		\item\label{inv1-tar} $|\set_1| < n-f$ and $|\set_0| \le f$.
		\item\label{inv2-tar} If there are at least $f+1$ {\ct} processes in $\set_1$,
				there is no {\ct} process in $\set_0$.
	\end{enumerate}
\end{lemma}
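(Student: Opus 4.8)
The plan is to prove all three invariants \emph{simultaneously} by induction on the number of iterations of the while loop at line~\ref{whileloop-tar} of the given $\valid(v)$ execution. This closely mirrors the proof of Lemma~\ref{onecorrectset0-mar}, since the $\valid(-)$ procedure is identical to the one in Algorithm~\ref{code-mar} and the $\fresh()$ procedures differ only in how the writer $p_1$ and the timestamped register $R_1$ are handled --- and those differences are already encapsulated in Lemmas~\ref{correctp1v-tar} and~\ref{asker1-tar}. For the base case (the first iteration), $\set_1$ and $\set_0$ are both initialized to $\emptyset$, so (\ref{inv0-tar}) holds trivially; since $n \ge 3f+1$ and $f \ge 1$ we get $|\set_1| = 0 < n-f$ and $|\set_0| = 0 \le f$, giving (\ref{inv1-tar}); and (\ref{inv2-tar}) holds vacuously because $\set_0$ is empty.

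For the inductive step, fix an iteration $I$ and assume (\ref{inv0-tar})--(\ref{inv2-tar}) hold at its start. If $p_k$ never finds the exit condition of the repeat-until loop at line~\ref{until-tar}, or if $\valid(v)$ returns at line~\ref{return1-tar} or line~\ref{return0-tar} during $I$, then there is no ``next iteration'' and the invariants hold vacuously. Otherwise $p_k$ inserts exactly one process $p_j \notin \set_1 \cup \set_0$ into one of the two sets: (\ref{inv0-tar}) is preserved because a single process not already in either set is added to one of them; (\ref{inv1-tar}) is preserved because, $\valid(v)$ not having returned, $p_k$ observes $|\set_1| < n-f$ at line~\ref{return1-tar} and $|\set_0| \le f$ at line~\ref{return0-tar}, and neither set changes afterwards in $I$. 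For (\ref{inv2-tar}): if $p_j$ enters $\set_1$ (i.e.\ $v \in r_j$), then $\set_0$ is reset to $\emptyset$ at line~\ref{empty0-tar} and (\ref{inv2-tar}) holds trivially; if $p_j$ enters $\set_0$, then $\set_1$ is unchanged in $I$, and (\ref{inv2-tar}) carries over unchanged both when the newly added process is faulty and when $\set_1$ contained fewer than $f+1$ correct processes at the start of $I$.

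The main obstacle --- exactly as in the Algorithm~\ref{code-mar} case --- is ruling out the remaining subcase: $p_k$ inserts a \emph{correct} process $p_a$ into $\set_0$ in $I$ while $\set_1$ already contains at least $f+1$ correct processes at the start of $I$. The plan here is to let $p_b$ be the last process inserted into $\set_1$ before $I$, say at time $t_k^0$; then $\set_1$ holds $\ge f+1$ correct processes at time $t_k^0$, so by Lemma~\ref{ri1beforeset1-tar} at least $f+1$ correct processes $p_i$ have $v \in R_i$ at all times $t' \ge t_k^0$. In $I$, $p_k$ increments $C_k$ at some time $t_k^1 > t_k^0$ to a value $c^* \ge 1$ (using Observation~\ref{monock-tar}) and reads $\langle r, c \rangle$ from $R_{ak}$ with $v \notin r$ and $c \ge c^*$; since $c \ge 1$ and $R_{ak}$ is initially $\langle \emptyset, 0 \rangle$, the correct process $p_a$ must have written this tuple at line~\ref{fresh1-tar}. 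In that iteration of the $\fresh()$ loop, $p_a$ read $c$ from $C_k$ at some time $t_a^1 \ge t_k^1$ (again by Observation~\ref{monock-tar}) and then inserted $p_k$ into $askers$ at line~\ref{askers-tar} at some time $t_a^2 > t_a^1 > t_k^0$. Applying Lemma~\ref{asker1-tar} with the ``$\ge f+1$ witnesses of $v$ from time $t_k^0$ onward'' fact, $p_a$ must have written $\langle r_j, - \rangle$ with $v \in r_j$ into $R_{ak}$ in that very iteration --- contradicting $v \notin r$. Hence this subcase is impossible, and all three invariants are re-established at the end of every iteration, completing the induction.

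I expect the only genuinely new bookkeeping relative to Lemma~\ref{onecorrectset0-mar} to be making sure that the case $j=1$ of the $\fresh()$ branch (where $p_1$ does not touch $R_j$ and works with timestamped tuples) does not break the final contradiction; but since Lemma~\ref{asker1-tar} already states its conclusion uniformly for every correct process $p_j \in \{p_1,\dots,p_n\}$, the argument goes through verbatim, so I do not anticipate any additional technical difficulty.
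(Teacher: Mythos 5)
Your proposal is correct and follows essentially the same route as the paper's proof: the same simultaneous induction, the same case split in the inductive step, and the same contradiction in the critical subcase via Lemma~\ref{ri1beforeset1-tar}, Observation~\ref{monock-tar}, and Lemma~\ref{asker1-tar}. Your closing remark about the $j=1$ branch of $\fresh()$ is also accurate --- the paper likewise relies on Lemma~\ref{asker1-tar} (whose proof handles $j=1$ via Lemma~\ref{correctp1v-tar}) to make the argument uniform over all correct processes.
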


\begin{proof}
	Consider any \validpg{(v)} by a {\ct} reader $p_k$.
    We now prove the invariants by induction on the number of iterations of the while loop at line~\ref{whileloop-tar} of this $\valid(v)$.
    \begin{itemize}
    \item Base Case: 
        Consider the first iteration of the while loop at line~\ref{whileloop-tar} of this $\valid(v)$.
        Since $p_k$ initializes $\set_1$ and $\set_0$ to $\emptyset$,
            (\ref{inv0-tar}) holds trivially.
        Furthermore,
            $p_k$ has $|\set_1| = 0$ and $|\set_0| = 0 $.
        Since $n \ge 3f+1$ and $f \ge 1$,
            $|\set_1| = 0 < n-f$ and $|\set_0|= 0 \le f$ and so (\ref{inv1-tar}) holds.
        Since $\set_0$ is empty,
        (\ref{inv2-tar}) holds trivially.
	
    \item Inductive Case:
        Consider any iteration $I$ of the while loop at line~\ref{whileloop-tar} of this $\valid(v)$.
        Assume that (\ref{inv0-tar}), (\ref{inv1-tar}), and (\ref{inv2-tar}) hold at the beginning of the iteration.
			
        If $p_k$ does not find that the condition of line~\ref{until-tar} holds in $I$,
            $p_k$ does not move to the next iteration of the while loop,
            and so (\ref{inv0-tar}), (\ref{inv1-tar}), and (\ref{inv2-tar}) trivially hold at the start of the ``next iteration'' (since it does not exist).
        Furthermore, if $\valid(v)$ returns at line~\ref{return1-tar} or line~\ref{return0-tar} in $I$,
            (\ref{inv0-tar}), (\ref{inv1-tar}), and (\ref{inv2-tar}) trivially hold
            at the start of the ``next iteration'' (since it does not exist).

        We now consider any iteration $I$ of the while loop at line~\ref{whileloop-tar} of $\valid(v)$
            in which $p_k$ finds the condition of line~\ref{until-tar} holds 
            and $\valid(v)$ does not return at line~\ref{return1-tar} or line~\ref{return0-tar}.
        We first show that (\ref{inv0-tar}) 
            remains true at the end of $I$.
        Since $p_k$ finds the condition of line~\ref{until-tar} holds,
            $p_k$ inserts a process, say $p_j$, into $\set_1$ or $\set_0$ in $I$.
        By line~\ref{until-tar},
            $p_j\not\in \set_1\cup\set_0$.
        Note that $p_j$ is the only process that $p_k$ inserts into $\set_1$ or $\set_0$ in $I$.
        So since $\set_1$ and $\set_0$ are disjoint at the beginning of $I$ and $p_j\not\in \set_1\cup\set_0$,
             $\set_1$ and $\set_0$ are disjoint at the end of $I$.
        Therefore (\ref{inv0-tar}) holds at the end of $I$.

        We now show that (\ref{inv1-tar}) remains true at the end of $I$.
        Since $\valid(v)$ does not return at line~\ref{return1-tar} or line~\ref{return0-tar},
            $p_k$ finds that $|\set_1| < n-f$ holds at line~\ref{return1-tar} and $|\set_0| \le f$ holds at line~\ref{return0-tar}.
        Since $p_k$ does not change $\set_1$ after line~\ref{return1-tar} and does not change $\set_0$ after line~\ref{return0-tar} in $I$,
                $|\set_1| < n-f$ and $|\set_0| \le f$ remain true at the end of $I$ and so (\ref{inv1-tar}) holds.

        We now show that (\ref{inv2-tar}) remains true at the end of $I$.
        There are two cases:
        \begin{enumerate}
            \item Case 1: $p_k$ executes line~\ref{set1-tar} in $I$.
                Then $p_k$ changes $\set_0$ to $\emptyset$ at line~\ref{empty0-tar}.
                Since $p_k$ does not change $\set_0$ after line~\ref{empty0-tar} in $I$,
                    $\set_0$ remains $\emptyset$ at the end of $I$.
                So (\ref{inv2-tar}) holds trivially at the end of $I$.

        \item Case 2: $p_k$ does not execute line~\ref{set1-tar} in $I$.
                So $p_k$ does not change $\set_1$ in $I$.
                By assumption,
                    $p_k$ finds the condition of line~\ref{until-tar} holds in $I$.
                    So $p_k$ executes line~\ref{set0-tar} in $I$.
                Let $p_a$ be the process that $p_k$ inserts into $\set_0$ at line~\ref{set0-tar}.
                There are two cases:
                \begin{itemize}
                \item Case 2.1: $p_a$ is faulty. 
                    Then the number of {\ct} processes in $\set_0$ does not change in $I$.
                    So, since $\set_1$ also does not change in $I$,
                    (\ref{inv2-tar}) remains true at the end of $I$.

                \item Case 2.2: $p_a$ is {\ct}. 
                There are two cases:

                \begin{itemize}
                \item Case 2.2.1: There are fewer than $f+1$ {\ct} processes in $\set_1$ at the beginning of $I$.
                    Since $p_k$ does not change $\set_1$ in $I$, 
                    (\ref{inv2-tar}) remains true at the end of $I$.

                \item Case 2.2.2: There are at least $f+1$ {\ct} processes in $\set_1$ at the beginning of $I$.
                    We now show that this case is impossible.

                    Let $p_b$ be the last process that $p_k$ inserted into $\set_1$ before the iteration $I$;
                    say $p_b$ was inserted into $\set_1$ at time $t_k^0$.
                    Since there are at least $f+1$ {\ct} processes in $\set_1$ at the beginning of $I$
                        and $p_b$ is the last process that $p_k$ inserted into $\set_1$ before $I$,
                        there are at least $f+1$ {\ct} processes in $\set_1$ of $p_k$ in $\valid(v)$ at time $t_k^0$.
                    By Lemma~\ref{ri1beforeset1-tar},
                        at least $f+1$ {\ct} processes $p_i$ have $v \in R_i$ at all times $t'\ge t_k^0$~($\star$).

                    Recall that $p_k$ inserts the correct process $p_a$ into $\set_0$ in the iteration $I$.
                    So in $I$, $p_k$ increments $C_k$ at line~\ref{ckplus-tar}, say at time  $t_k^1$.
                    Note that $t_k^1 > t_k^0$. 
                    Let $c^*$ be the value of $C_k$ after $p_k$ increments $C_k$ at time~$t_k^1$.
                    Since $C_k$ is initialized to 0,
                        by Observation~\ref{monock-tar}, $c^* \ge 1$.
                    By line~\ref{readri-tar}, line~\ref{until-tar}, and line~\ref{notv-tar},
                        $p_k$ reads $\langle r, c \rangle$ with $v\not\in r$ and $c \ge c^*$ from $R_{ak}$ at line~\ref{readri-tar} in $I$.
                    Since $c \ge c^* \ge 1$ and $R_{ak}$ is initialized to $\langle \emptyset,0\rangle$,
                        it must be that $p_a$ wrote $\langle r, c \rangle$ with $v\not\in r$ into $R_{ak}$ at line~\ref{fresh1-tar} ($\star\star$).

                    Consider the iteration of the while loop of the $\fresh()$ procedure in which $p_a$ writes $\langle r, c \rangle$ with $v\not\in r$ into $R_{ak}$ at line~\ref{fresh1-tar}.
                    Note that $c$ is the value that $p_a$ read from $C_k$ at line~\ref{collectck-tar} of this iteration;
                         say this read occurred at time $t_a^1$.
                    Since $c \ge c^*$,
                        by Observation~\ref{monock-tar},
                        $t_a^1 \ge t_k^1$.
                    Then $p_a$ inserts $p_k$ into $askers$ at line~\ref{askers-tar}, say at time $t_a^2 > t_a^1$.
                    Since $t_a^1 \ge t_k^1$ and $t_k^1 > t_k^0$,
                        $t_a^2 > t_k^0$.
                    So $p_a$ inserts $p_k$ into $askers$ at line~\ref{askers-tar} after $t_k^0$.
                    Thus, by ($\star$) and Lemma~\ref{asker1-tar},
                        $p_a$ writes $\langle r , c \rangle$ with $v\in r$ into $R_{ak}$ at line~\ref{fresh1-tar} in this iteration,
                        a contradiction to ($\star\star$). 
                    So this case is impossible.
                \end{itemize}
                \end{itemize}
				\end{enumerate}
				So in all the possible cases, we showed that  (\ref{inv0-tar}), (\ref{inv1-tar}), and (\ref{inv2-tar}) remain true at the end of the iteration.
			\end{itemize}
	\end{proof}

    \begin{lemma}\label{correctoutside-tar}
    Consider any \validpg{(v)} by a {\ct} reader $p_k$.
            Every time when $p_k$ executes line~\ref{whileloop-tar} of this $\valid(v)$, 
    	there is a {\ct} process $p_i \not \in \set_1\cup\set_0$.
    \end{lemma}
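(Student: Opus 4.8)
The plan is to follow exactly the argument used for the {\mar} analogue (Lemma~\ref{correctoutside-mar}), with Lemma~\ref{onecorrectset0-tar} playing the role that Lemma~\ref{onecorrectset0-mar} played there. Fix a \validpg{(v)} by a {\ct} reader $p_k$ and an arbitrary time $t$ at which $p_k$ executes line~\ref{whileloop-tar}; we reason about the values of $\set_0$ and $\set_1$ at that moment. Since the three loop invariants of Lemma~\ref{onecorrectset0-tar} are stated to hold at line~\ref{whileloop-tar}, they apply at time $t$: in particular, by Lemma~\ref{onecorrectset0-tar}(\ref{inv0-tar}) the sets $\set_0$ and $\set_1$ are disjoint, so the number of {\ct} processes in $\set_1 \cup \set_0$ is the sum of the numbers of {\ct} processes in each.

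Next I would bound the number of {\ct} processes in $\set_1 \cup \set_0$ and show it is strictly less than $n-f$, splitting into two cases on how many {\ct} processes lie in $\set_1$. If $\set_1$ contains at most $f$ {\ct} processes, then since $|\set_0| \le f$ by Lemma~\ref{onecorrectset0-tar}(\ref{inv1-tar}), the union contains at most $2f$ {\ct} processes, and $2f < n-f$ because $n > 3f$. If instead $\set_1$ contains at least $f+1$ {\ct} processes, then by Lemma~\ref{onecorrectset0-tar}(\ref{inv2-tar}) $\set_0$ contains no {\ct} process, so the union contains at most $|\set_1|$ {\ct} processes, and $|\set_1| < n-f$ again by Lemma~\ref{onecorrectset0-tar}(\ref{inv1-tar}). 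In both cases $\set_1 \cup \set_0$ contains fewer than $n-f$ {\ct} processes.

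Finally, since there are at least $n-f$ {\ct} processes in the system, some {\ct} process $p_i$ is not in $\set_1 \cup \set_0$ at time $t$, which is exactly the claim. The proof is entirely routine; the only point requiring a little care is simply to invoke Lemma~\ref{onecorrectset0-tar} at the correct program point (line~\ref{whileloop-tar}, i.e. the head of the while loop, where its invariants are asserted), so there is no genuine obstacle here beyond transcribing the {\mar} argument with the {\tar} labels.
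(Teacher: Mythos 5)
Your proof is correct and matches the paper's argument essentially verbatim: it invokes the three loop invariants of Lemma~\ref{onecorrectset0-tar} at line~\ref{whileloop-tar} and performs the identical two-case split on whether $\set_1$ contains at most $f$ or at least $f+1$ {\ct} processes to conclude that $\set_1 \cup \set_0$ contains fewer than $n-f$ {\ct} processes. No issues.
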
 	
    \begin{proof}
        Consider any \validpg{(v)} by a {\ct} reader $p_k$.
        Suppose $p_k$ executes line~\ref{whileloop-tar} of this $\valid(v)$ at time $t$.
        Consider (the values of) $\set_0$ and $\set_1$ at time $t$.
        By Lemma~\ref{onecorrectset0-tar}(\ref{inv0-tar}),
            $\set_0$ and $\set_1$ are disjoint sets.

        We now prove that $\set_1\cup\set_0$ contains fewer than $n-f$ {\ct} processes; this immediately implies that there is a {\ct} process $p \not \in \set_1 \cup \set_0$.
            There are two possible cases:
			\begin{enumerate}
				\item Case 1: $\set_1$ contains at most $f$ {\ct} processes.
				By Lemma~\ref{onecorrectset0-tar}(\ref{inv1-tar}), $|\set_0| \le f$.
				So $\set_1 \cup \set_0$ contains at most $2f$ {\ct} processes.
				Since $3f < n$, we have $2f < n-f$.
				
				\item Case 2: $\set_1$ contains at least $f+1$ {\ct} processes.
				By Lemma~\ref{onecorrectset0-tar}(\ref{inv2-tar}), $\set_0$ does \emph{not} contain any {\ct} process.
				So $\set_1 \cup \set_0$ contains at most $|\set_1|$ {\ct} processes.
				By Lemma~\ref{onecorrectset0-tar}(\ref{inv1-tar}),  $|\set_1| < n-f$.
				\end{enumerate} 
				In both cases, $\set_1\cup\set_0$ contains fewer than $n-f$ {\ct} processes.
				\end{proof}

\begin{lemma}\label{line7inf-tar}
	Consider any \validpg{(v)} by a {\ct} reader $p_k$.
	Every instance of the Repeat-Until loop at lines~\ref{repeat-tar}-\ref{until-tar} of this $\valid(v)$ terminates.
\end{lemma}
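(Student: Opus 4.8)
The plan is to mirror the proof of Lemma~\ref{line7inf-mar} for the {\mar} implementation, since the Repeat-Until loop and the relevant part of $\fresh()$ are structurally identical. Suppose, for contradiction, that some instance of the Repeat-Until loop at lines~\ref{repeat-tar}--\ref{until-tar} of this $\valid(v)$ never terminates. Let $t$ be the time $p_k$ first executes line~\ref{repeat-tar} in this instance; then $p_k$ never finds the condition of line~\ref{until-tar} to hold after $t$. Since lines~\ref{whileloop-tar}--\ref{repeat-tar} do not change $\set_1$ or $\set_0$, Lemma~\ref{correctoutside-tar} gives a {\ct} process $p_a \notin \set_1 \cup \set_0$ at time $t$; and since the body of the Repeat-Until loop modifies neither $\set_1$, $\set_0$, nor $C_k$, and $p_k$ never exits it, we have $p_a \notin \set_1 \cup \set_0$ and $C_k = c^*$ at all times $t' \ge t$, where $c^*$ is the value of $C_k$ at time $t$; note $c^* \ge 1$ by line~\ref{ckplus-tar}.

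The crux is a claim analogous to Claim~\ref{rakck-mar}: there is a time after which $R_{ak}$ permanently holds a tuple of the form $\langle -, c^* \rangle$. To prove it, note that since $p_a$ is {\ct} and $C_k = c^*$ forever after $t$, there is an iteration of the while loop of $p_a$'s $\fresh()$ in which $p_a$ reads $c^*$ from $C_k$ at line~\ref{collectck-tar}; take the \emph{first} such iteration. By Observation~\ref{monock-tar}, $p_a$ reads non-decreasing values from $C_k$, so $c^* \ge prev\_c_k$ at line~\ref{askers-tar}; distinguishing $c^* = 1$ (where $prev\_c_k = 0$, since it is initialized to $0$ and this is the first iteration reading $c^*$) from $c^* > 1$, in both cases $c^* > prev\_c_k$, so $p_a$ inserts $p_k$ into $askers$. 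Since $p_a$ is {\ct} and $askers \ne \emptyset$, $p_a$ reaches line~\ref{fresh1-tar} and writes $\langle r_j, c^* \rangle$ into $R_{ak}$ at some time $t'$, then sets $prev\_c_k \gets c^*$ at line~\ref{setprev-tar}. Here I use that the $j \ne 1$ branch of this algorithm's $\fresh()$ is irrelevant to this reasoning: whether or not $p_a$ is the writer, it still executes lines~\ref{collectck-tar}, \ref{askers-tar}, \ref{tellasker-tar}, and \ref{fresh1-tar}. Because $C_k = c^*$ at all times $\ge t'$, by line~\ref{collectck-tar} $p_a$ has $c_k = c^*$ thereafter, and by line~\ref{askers-tar} $p_a$ never again places $p_k$ into $askers$, hence never writes to $R_{ak}$ after $t'$; so $R_{ak}$ holds $\langle -, c^* \rangle$ at all times after $t'$.

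Finally, since $p_k$ executes line~\ref{readri-tar} (which reads $R_{ak}$, as $p_a \notin \set_1 \cup \set_0$) infinitely often after $t$, it eventually reads $\langle -, c^* \rangle$ from $R_{ak}$. As $C_k = c^*$ and $p_a \notin \set_1 \cup \set_0$, $p_k$ then finds that the condition of line~\ref{until-tar} holds --- contradicting the assumption. I expect the main obstacle to be, as in the {\mar} case, carefully handling the $prev\_c_k$ bookkeeping inside the claim (ensuring $p_a$ both answers $p_k$'s request once $C_k$ has stabilized and then does not overwrite $R_{ak}$ with a stale timestamp afterward), together with verifying that the additional $j \ne 1$ branch in this algorithm's $\fresh()$ does not affect the argument.
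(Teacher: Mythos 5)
Your proposal is correct and follows essentially the same route as the paper's proof: the same contradiction setup, the same appeal to Lemma~\ref{correctoutside-tar} to obtain a correct $p_a \notin \set_1 \cup \set_0$, the same claim (mirroring Claim~\ref{rakck-mar}) that $R_{ak}$ eventually stabilizes at $\langle -, c^* \rangle$ via the first iteration of $\fresh()$ in which $p_a$ reads $c^*$, and the same concluding step. Your explicit remark that the $j \ne 1$ branch of $\fresh()$ is immaterial here is a correct observation that the paper leaves implicit.
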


\begin{proof}
	Consider any \validpg{(v)} by a {\ct} reader $p_k$.
	Assume for contradiction that there is an instance of the Repeat-Until loop at lines~\ref{repeat-tar}-\ref{until-tar} of this $\valid(v)$ that does not terminate.
	Let $t$ be the time when $p_k$ executes line~\ref{repeat-tar} for the first time in this instance of the Repeat-Until loop.
        Since this instance does not terminate, 
		$p_k$ never finds the condition of line~\ref{until-tar} holds after $t$ ($\star$).
	
        Since lines~\ref{whileloop-tar}-\ref{repeat-tar} do not change $\set_1$ and $\set_0$,
        Lemma~\ref{correctoutside-tar} implies that 
            there is a {\ct} process $p_a \not\in \set_1\cup\set_0$ at line~\ref{repeat-tar} at time $t$.
	Let $c^*$ be the value of $C_k$ at time $t$;
    	by line~\ref{ckplus-tar},
    		$c^* \ge 1$.
        Since lines~\ref{repeat-tar}-\ref{until-tar} of the Repeat-Until loop do not change $\set_1$, $\set_0$, or $C_k$,
            and $p_k$ never exits this loop,
           $p_a \not\in \set_1\cup\set_0$ and $C_k=c^* $ at all times $t' \ge t$.

	\begin{claim}\label{rakck-tar}
		There is a time $t'$ such that $R_{ak}$ contains $\langle -, c^* \rangle$ for all times after $t'$.
	\end{claim}
	\begin{proof}
	Let $\fresh()$ be the help procedure of $p_a$.
	Since $C_k = c^*$ for all times after $t$ and $p_a$ is {\ct},
		there is an iteration of the while loop of $\fresh()$
		in which $p_a$ reads $c^*$ from $C_k$ at line~\ref{collectck-tar}.
	Consider the \emph{first} iteration of the while loop of $\fresh()$ in which $p_a$ reads $c^*$ from $C_k$ at line~\ref{collectck-tar}.
	Since $p_k$ is {\ct}, 
				by Observation~\ref{monock-tar},
				$p_a$ reads non-decreasing values from $C_k$,
				and so $p_a$ has $ c^* \ge prev\_c_k$ at line~\ref{askers-tar}.
	Since $c^* \ge 1$,
		there are two cases.
		\begin{itemize}
			\item Case 1: $c^*=1$.
			Then $p_a$ has $prev\_c_k \le 1$ at line~\ref{askers-tar}.
			Since $prev\_c_k$ is initialized to 0 (line~\ref{collectck-init}) and $p_a$ reads $c^*=1$ from $C_k$ at line~\ref{collectck-tar} for the first time,
				$prev\_c_k = 0$ at line~\ref{askers-tar}. 
			So $p_a$ finds $c^*=1 > prev\_c_k=0$ holds and inserts $p_k$ into $askers$ at line~\ref{askers-tar}. 
			\item Case 2: $c^*>1$.
			Since $p_a$ reads $c^*$ from $C_k$ at line~\ref{collectck-tar} for the first time and $ c^* \ge prev\_c_k$,
				$c^* > prev\_c_k$ at line~\ref{askers-tar}. 
			So $p_a$ inserts $p_k$ into $askers$ at line~\ref{askers-tar}. 
		\end{itemize}
	So in both cases, $p_a$ inserts $p_k$ into $askers$ at line~\ref{askers-tar}. 
	Since $p_a$ is {\ct},
		in the same iteration of the while loop of $\fresh()$,
		$p_a$ writes $\langle - , c^* \rangle$ into $R_{ak}$ at line~\ref{fresh1-tar},
		say at time $t'$ 
		and then it sets $prev\_c_k$ to $c^*$ at line~\ref{setprev-tar} (i).
	Since $C_k = c^*$ for all times after $t$,
		by Observation~\ref{monock-tar},
		$C_k= c^*$ for all times after $t'$.
	Furthermore, by line~\ref{collectck-tar},
		$p_a$ has $c_k=c^*$ for all times after $t'$ (ii).
	From (i) and (ii),
		 by line~\ref{askers-tar},
		$p_a$ does not insert $p_k$ into $askers$ in any future iteration of the while loop of $\fresh()$.
	So by line~\ref{tellasker-tar},
		$p_a$ never writes to $R_{ak}$ after $t'$,
		i.e., $R_{ak}$ contains $\langle -, c^* \rangle$ for all times after $t'$.
	\end{proof}

	Since $p_k$ executes lines~\ref{findone-tar}-\ref{until-tar} infinitely often after $t$,
		$p_k$ reads from $R_{ak}$ at line~\ref{readri-tar} infinitely often after $t$.
	By Claim~\ref{rakck-tar},
		eventually $p_k$ reads $\langle -, c^* \rangle$ from $R_{ak}$ after $t$.
	Thus, since $p_a \not\in \set_1 \cup\set_0$ and $p_k$ has $C_k =c^*$ for all times after $t$, 
		$p_k$ finds the condition of line~\ref{until-tar} holds after $t$
		--- a contradiction to ($\star$). 
\end{proof}

\begin{observation}\label{set1increment-tar}
	Consider any \validpg{(v)} by a {\ct} reader $p_k$.
	When $p_k$ executes line~\ref{set1-tar} of this $\valid(v)$, $p_k$ increments the size of $\set_1$.
\end{observation}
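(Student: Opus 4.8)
The plan is to prove the statement by a short inspection of the code of the $\valid(-)$ procedure in Algorithm~\ref{code-tar}, showing that the process $p_j$ that is inserted into $\set_1$ at line~\ref{set1-tar} is not already a member of $\set_1$ at the instant that line is executed.

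First I would observe that line~\ref{set1-tar} is executed only inside the body of the \textbf{if} statement at line~\ref{check1-tar}, which in turn is reached only after the Repeat-Until loop at lines~\ref{repeat-tar}--\ref{until-tar} has terminated. By the exit condition at line~\ref{until-tar}, the loop terminates in a given iteration precisely because there exists a process $p_j \notin \set_1 \cup \set_0$ with $c_j \ge C_k$, and it is this $p_j$ that is subsequently referenced at lines~\ref{check1-tar}, \ref{set1-tar}, and~\ref{notv-tar}. Hence, at the moment the Repeat-Until loop exits, $p_j \notin \set_1$.

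Next I would argue that $\set_1$ is not modified between the exit of the Repeat-Until loop and the execution of line~\ref{set1-tar}: the body of the Repeat-Until loop (lines~\ref{repeat-tar}--\ref{until-tar}) only reads the registers $R_{jk}$ into local variables and leaves $\set_1$ untouched (as already noted in the proof of Lemma~\ref{line7inf-tar}), and the only line executed after the loop exits and before line~\ref{set1-tar} is the test at line~\ref{check1-tar}, which does not change $\set_1$ either. Therefore $p_j \notin \set_1$ still holds immediately before line~\ref{set1-tar} executes, so the assignment $\set_1 \gets \set_1 \cup \{p_j\}$ increases $|\set_1|$ by exactly one.

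I do not expect any genuine obstacle here: this is purely a code-inspection fact. The only point that needs care is tracking the (very short) control-flow path from the exit of the Repeat-Until loop to line~\ref{set1-tar} and confirming that no step on that path --- nor the Repeat-Until loop body itself --- inserts $p_j$ into $\set_1$ beforehand, which the argument above does.
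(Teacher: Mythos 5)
Your proposal is correct and matches the paper's (implicit) reasoning: the paper states this as an unproved observation, relying on exactly the code-inspection fact you spell out, namely that the exit condition of the Repeat-Until loop at line~\ref{until-tar} guarantees $p_j \notin \set_1 \cup \set_0$ and nothing on the short path to line~\ref{set1-tar} modifies $\set_1$. No gap.
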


\begin{observation}\label{set0increment-tar}
	Consider any \validpg{(v)} by a {\ct} reader $p_k$.
	When $p_k$ executes line~\ref{set0-tar} of this $\valid(v)$, $p_k$ increments the size of $\set_0$.
\end{observation}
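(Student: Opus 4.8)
The plan is to show that the element $p_j$ inserted into $\set_0$ at line~\ref{set0-tar} is not already a member of $\set_0$ at the moment of insertion, so that the union $\set_0 \cup \{p_j\}$ strictly enlarges the set. This is a purely structural argument about the control flow of the $\valid(v)$ procedure, and no properties of the $\fresh()$ procedure or of correct processes are needed.

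First I would pin down exactly which process $p_j$ is being referred to when line~\ref{set0-tar} executes. Line~\ref{set0-tar} lies in the branch guarded by the test $v \notin r_j$ at line~\ref{notv-tar}, and the index $j$ here is the one fixed by the repeat-until loop: it is the process whose tuple $\langle r_j, c_j\rangle$ satisfied the exit condition of line~\ref{until-tar}, namely $p_j \notin \set_1 \cup \set_0$ and $c_j \ge C_k$. In particular, the exit condition guarantees $p_j \notin \set_0$ at the time line~\ref{until-tar} is evaluated.

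Next I would argue that $\set_0$ is not modified between the evaluation of line~\ref{until-tar} and the execution of line~\ref{set0-tar}, so that $p_j \notin \set_0$ still holds when $p_j$ is inserted. The only statement that alters $\set_0$ in this stretch of code (other than line~\ref{set0-tar} itself) is line~\ref{empty0-tar}, which resets $\set_0$ to $\emptyset$. But line~\ref{empty0-tar} resides in the branch guarded by $v \in r_j$ at line~\ref{check1-tar}, whereas reaching line~\ref{set0-tar} requires the complementary test $v \notin r_j$ at line~\ref{notv-tar} to hold. Since these two guards are mutually exclusive, the branch containing line~\ref{empty0-tar} is not taken on this pass, and hence $\set_0$ is unchanged from line~\ref{until-tar} up to line~\ref{set0-tar}. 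Combining this with the previous step, $p_j \notin \set_0$ immediately before line~\ref{set0-tar}, so assigning $\set_0 \gets \set_0 \cup \{p_j\}$ increases $|\set_0|$ by exactly one, as claimed.

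I do not expect any real obstacle here: the statement follows directly from the exit condition of the repeat-until loop together with the mutual exclusivity of the guards at lines~\ref{check1-tar} and~\ref{notv-tar}. The only point requiring care is being explicit that the index $j$ appearing in line~\ref{set0-tar} is the very same $j$ that witnessed the loop exit at line~\ref{until-tar}, rather than some other process; once that identification is made, the non-membership of $p_j$ in $\set_0$ is immediate.
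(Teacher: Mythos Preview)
Your proposal is correct. The paper states this as an observation without proof, treating it as immediate from the code structure; your argument simply makes explicit the reasoning one would carry out to verify it, namely that the exit condition at line~\ref{until-tar} guarantees $p_j \notin \set_0$, and that $\set_0$ is not modified between there and line~\ref{set0-tar} because the branch containing line~\ref{empty0-tar} is mutually exclusive with the one containing line~\ref{set0-tar}.
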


\begin{lemma}\label{validpterminate}
    Every \validpg{(-)} by a {\ct} process terminates.
\end{lemma}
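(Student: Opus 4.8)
The plan is to mirror the $\valid$ part of the proof of Theorem~\ref{termination-mar}, now invoking the ``$-tar$'' analogues of the relevant facts. First I would suppose, for contradiction, that some {\validpg{(-)}} by a {\ct} reader $p_k$ fails to terminate, so $p_k$ takes infinitely many steps inside the $\valid(-)$ procedure. By Lemma~\ref{line7inf-tar}, every instance of the Repeat-Until loop at lines~\ref{repeat-tar}--\ref{until-tar} terminates, so the only way $p_k$ can run forever is by executing infinitely many iterations of the outer while loop at line~\ref{whileloop-tar}. Each such iteration executes exactly one of line~\ref{set1-tar} or line~\ref{set0-tar}, so at least one of these two lines is executed infinitely often.

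Then I would split into two cases. If line~\ref{set1-tar} is executed infinitely often, then by Observations~\ref{monoset1-tar} and~\ref{set1increment-tar} the size of $\set_1$ is non-decreasing and strictly increases each time line~\ref{set1-tar} runs, so there is an iteration in which $|\set_1| \ge n-f$ right after line~\ref{set1-tar}; since $p_k$ does not modify $\set_1$ between line~\ref{set1-tar} and line~\ref{return1-tar}, $p_k$ finds $|\set_1| \ge n-f$ at line~\ref{return1-tar} and returns --- a contradiction. Otherwise line~\ref{set1-tar} (hence line~\ref{empty0-tar}) is executed only finitely often, so there is a time $t$ after which line~\ref{set0-tar} is executed infinitely often while $\set_0$ is never reset; hence $|\set_0|$ is non-decreasing after $t$, and by Observation~\ref{set0increment-tar} it grows without bound, so in some iteration $|\set_0| > f$ right after line~\ref{set0-tar}, and since $\set_0$ is unchanged between line~\ref{set0-tar} and line~\ref{return0-tar}, $p_k$ returns at line~\ref{return0-tar} --- again a contradiction. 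Since both cases are impossible, every {\validpg{(-)}} by a {\ct} process terminates.

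The only subtlety --- and this is where I would be careful rather than where there is any real difficulty --- is that a {\validpg{(-)}} can arise in two ways (as a $\valid$ operation, or called inside a $\Test$ via line~\ref{verifyp-tar}), so I should phrase the whole argument in terms of an arbitrary execution of the $\valid(-)$ procedure, exactly as the ``$-tar$'' lemmas (Lemmas~\ref{line7inf-tar}, \ref{onecorrectset0-tar}, \ref{correctoutside-tar}, and Observations~\ref{monoset1-tar}, \ref{set1increment-tar}, \ref{set0increment-tar}) are already stated. No new machinery is needed: the heavy lifting --- in particular the fact, underlying Lemma~\ref{line7inf-tar}, that there is always a {\ct} process outside $\set_1 \cup \set_0$ --- has already been done. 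So the ``main obstacle'' is really just bookkeeping: making sure each step cites the $-tar$ version of the corresponding fact, and that the case split on which of line~\ref{set1-tar} / line~\ref{set0-tar} fires infinitely often is exhaustive (which it is, since each while-iteration that completes must execute one of them).
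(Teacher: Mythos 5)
Your proof is correct and follows essentially the same route as the paper's: contradiction, Lemma~\ref{line7inf-tar} to force infinitely many while-loop iterations, then the case split on whether line~\ref{set1-tar} or line~\ref{set0-tar} fires infinitely often, using Observations~\ref{monoset1-tar}, \ref{set1increment-tar}, and~\ref{set0increment-tar} exactly as the paper does. Your remark about phrasing the argument for arbitrary $\valid(-)$ executions (covering the call from inside $\Test$) matches how the paper states its supporting lemmas.
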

\begin{proof}
Assume for contradiction,
        that for some value $v$
		there is a \validpg{(v)}
        by a {\ct} reader $p_k$ that does not terminate,
		i.e.,
		$p_k$ takes an infinite number of steps in this \validpg{(v)}.
	By Lemma~\ref{line7inf-tar},
        $p_k$ must execute infinitely many iterations of the while loop at line~\ref{whileloop-tar} of this $\valid(v)$.
	So $p_k$~executes line~\ref{set1-tar} or line~\ref{set0-tar} of this $\valid(v)$ infinitely often.	
	\begin{itemize}
		\item Case 1: $p_k$ executes line~\ref{set1-tar} infinitely often.	
				By Observations~\ref{monoset1-tar} and \ref{set1increment-tar},
					there is an iteration of the while loop at line~\ref{whileloop-tar} of this $\valid(v)$ 
					in which $p_k$ has $|\set_1| \ge n-f$ at line~\ref{set1-tar}. 
				Since $p_k$ does not change $|\set_1|$ between line~\ref{set1-tar} and line~\ref{return1-tar},
					in that iteration,
					$p_k$ finds that the condition $|\set_1| \ge n-f$ holds at line~\ref{return1-tar} of $\valid(v)$.
				So $\valid(v)$ returns at line~\ref{return1-tar}.
		\item Case 2: $p_k$ executes line~\ref{set1-tar} only a finite number of times.
			Then $p_k$ executes line~\ref{set0-tar} infinitely often.
			So there is a time $t$ such that 
				$p_k$ executes line~\ref{set0-tar} infinitely often after $t$
				while $p_k$ never executes line~\ref{set1-tar} after $t$.
			This implies $p_k$ never executes line~\ref{empty0-tar} after $t$.
			Since $|\set_0|$ decreases only at line~\ref{empty0-tar},
				$|\set_0|$ never decreases after $t$.
			Since $p_k$ executes line~\ref{set0-tar} infinitely often after $t$
				and $|\set_0|$ never decreases after $t$,
				by Observations~\ref{set0increment-tar},
					there is an iteration of the while loop at line~\ref{whileloop-tar} of $\valid(v)$
					in which $p_k$ has $|\set_0| > f$ at line~\ref{set0-tar} of $\valid(v)$. 
			Since $p_k$ does not change $|\set_0|$ between line~\ref{set0-tar} and line~\ref{return0-tar},
				in that iteration,
				$p_k$ finds that the condition $|\set_0|>f$ holds at line~\ref{return0-tar} of $\valid(v)$.
			So $\valid(v)$ returns at line~\ref{return0-tar}.
	\end{itemize}
	In both cases,
		this $\valid(v)$ returns, 
			a contradiction to the assumption that $\valid(v)$ does not complete.
\end{proof}

\begin{restatable}{theorem}{verifyt} \label{termination-tar}
    [\textsc{Termination}] Every $\Set$, $\Test$, and $\valid$ operation by a {\ct} process completes.
\end{restatable}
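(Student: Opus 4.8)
The plan is to dispatch the three operation types in increasing order of difficulty, using that the only nontrivial one, $\Test$, invokes the $\valid(-)$ procedure internally whose termination is already established.

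First I would handle $\Set$ and $\valid$. From the code of the $\Set(-)$ procedure (lines~\ref{r-tar}--\ref{r1-tar}) it is immediate that a correct writer takes only finitely many steps — it increments its local counter $\ell$, writes a single tuple into $R_1$, and returns $\done$ — so every $\Set$ operation by a correct process completes. For the $\valid$ operation, a $\valid$ operation by a correct reader is by our convention a \validpg{(v)} execution, so Lemma~\ref{validpterminate} (which itself rests on Lemma~\ref{line7inf-tar} for the termination of each repeat-until instance) immediately gives that it terminates.

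The main case is $\Test$, where I would simply trace the $\Test()$ procedure. The reader $p_k$ first reads $R_1$ into $r$ (line~\ref{readerr1-tar}); since any register holds a finite value, this read returns. Then $p_k$ checks whether $r$ is a set of tuples of the form $\langle \ell, v\rangle$ (line~\ref{checkform-tar}): if not, $p_k$ falls through to line~\ref{returnv0-tar} and returns $v_0$ after finitely many steps. If $r$ is such a set, it is a \emph{finite} set (a register value is finite even when the writer is Byzantine), so selecting a maximum tuple $\langle \ell, v\rangle$ (line~\ref{maxl-tar}) takes finitely many steps; $p_k$ then executes the $\valid(v)$ procedure (line~\ref{verifyp-tar}), which is a \validpg{(v)} execution and hence terminates by Lemma~\ref{validpterminate}; finally $p_k$ either returns $v$ (line~\ref{checksign-tar}) or falls through to return $v_0$ (line~\ref{returnv0-tar}). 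In every case $p_k$ performs finitely many steps, so every $\Test$ operation by a correct process completes.

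I expect no real obstacle here: the only substantive content — termination of the inner $\valid(v)$ execution used inside $\Test$ — is exactly Lemma~\ref{validpterminate}, so the argument is essentially a bookkeeping walk through the code. The one point I would be careful to state explicitly is that the set $p_k$ reads from $R_1$ in line~\ref{readerr1-tar} is finite, so that the maximum computed in line~\ref{maxl-tar} is well-defined and located in finitely many steps, and this holds regardless of whether the writer $p_1$ is correct.
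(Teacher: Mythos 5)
Your proposal is correct and matches the paper's proof, which likewise dismisses $\Set$ as immediate from the code and derives termination of both $\Test$ and $\valid$ operations from Lemma~\ref{validpterminate} (the paper just states this in two sentences, whereas you spell out the walk through the $\Test()$ procedure). The extra care you take about the finiteness of the set read from $R_1$ is a reasonable bookkeeping detail the paper leaves implicit.
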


\begin{proof}
	From the code of the $\Set(-)$ procedure,
		it is obvious that every $\Set$ operation by a {\ct} process completes.
        Moreover,
        from the code of the $\Test$ and $\valid(-)$ procedures, by Lemma~\ref{validpterminate},
            it is clear that every $\Test$ and $\valid$ operation by a {\ct} process completes.
	
\end{proof}

\subsection{Byzantine Linearizability}\label{linear-tar}
Recall that $H$ is an arbitrary history of the implementation given by Algorithm~\ref{code-tar}.
We now prove that $H$ is Byzantine linearizable with respect to a SWMR {\tar} (Definition~\ref{def-tar}).
We start by proving the following observations and lemmas about $H$.

\begin{lemma}~\label{v0alwaystrue}
	Every {\validpg{(v_0)}} by a {\ct} process returns $\true$.
\end{lemma}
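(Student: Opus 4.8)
\textbf{Proof plan for Lemma~\ref{v0alwaystrue}.}

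The plan is to show that the initial value $v_0$ is always ``witnessed'' by every correct process, so that any {\validpg{(v_0)}} execution by a correct reader can never accumulate enough ``No'' votes to return $\false$, and hence (by Termination) must return $\true$. First I would establish the base fact: for every correct process $p_i \in \{p_1,\dots,p_n\}$, $v_0 \in R_i$ at all times. This is exactly Observation~\ref{correctonlyv0-tar} for readers (whose register $R_i$ is initialized to $\{v_0\}$), and for the writer $p_1$ it holds because $R_1$ is initialized to $\{\langle 0, v_0\rangle\}$ and is non-decreasing, so $\langle -, v_0\rangle \in R_1$ always, i.e.\ $v_0 \in R_1$ always. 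Thus the hypothesis of Lemma~\ref{asker1-tar} is satisfied with $v := v_0$ and $t := 0$: at all times $t' \ge 0$, $v_0 \in R_1$ (and also at least $f+1$ correct processes have $v_0$ in their $R_i$).

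Next I would invoke Lemma~\ref{notreturn0-tar} — wait, the excerpt for the {\tar} does not yet state its analogue, but it states Lemma~\ref{asker1-tar}, which is the key ingredient. The cleanest route: take any {\validpg{(v_0)}} by a correct reader $p_k$; it is invoked at some time $\ge 0$, so by the $v_0$-version of the (forthcoming) analogue of Lemma~\ref{notreturn0-mar} — established directly here — $p_k$ never inserts a correct process into $\set_0$. Concretely, suppose $p_k$ inserts a correct process $p_j$ into $\set_0$ in some round; then (as in the proof of Lemma~\ref{notreturn0-mar}) $p_j$ must have written $\langle r, c\rangle$ with $v_0 \notin r$ and $c \ge C_k \ge 1$ into $R_{jk}$ at line~\ref{fresh1-tar} in some iteration of $\fresh()$, and in that iteration $p_j$ inserted $p_k$ into $askers$ at a time $\ge 0$; by Lemma~\ref{asker1-tar} (applied with $v = v_0$, $t = 0$), $p_j$ actually writes $\langle r_j, -\rangle$ with $v_0 \in r_j$ into $R_{jk}$ — a contradiction. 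Hence $\set_0$ of $p_k$ in this {\validpg{(v_0)}} contains only faulty processes, so $|\set_0| \le f$ always, and $p_k$ never returns $\false$ at line~\ref{return0-tar}. By Lemma~\ref{validpterminate} (termination of {\validpg{(-)}} executions by correct processes), this {\validpg{(v_0)}} must return, and since it cannot return $\false$, it returns $\true$.

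The main obstacle is bookkeeping about which version of the ``a witnessed value never gets a correct No-vote'' argument to cite: in the {\mar} section this is Lemma~\ref{notreturn0-mar}, but in the {\tar} appendix the analogous lemma may not have been stated before this point, so I would reproduce its short argument inline (it is essentially a transcription of the {\mar} proof with line references swapped to the {\tar} algorithm, using Lemma~\ref{asker1-tar} and Observation~\ref{monock-tar}). The only subtlety specific to $v_0$ — as opposed to a generic written value — is the justification that the writer $p_1$, when correct, always has $v_0 \in R_1$; this follows from the initialization $R_1 = \{\langle 0, v_0\rangle\}$ together with the fact that in the $\fresh()$ procedure $p_1$ only ever adds to $R_1$ (it never removes tuples), and the writer's $\Set(-)$ procedure only unions in new tuples at line~\ref{r1-tar}. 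With that in hand the argument for correct readers is immediate from their register being initialized to $\{v_0\}$, and the lemma follows.

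\begin{proof}
Let $p_k$ be a {\ct} reader and consider any {\validpg{(v_0)}} by $p_k$.
First, we claim that for every {\ct} process $p_i\in\{p_1,\ldots,p_n\}$, $v_0\in R_i$ at all times.
For $i\neq 1$, this is Observation~\ref{correctonlyv0-tar}.
For $i=1$, note that $R_1$ is initialized to $\{\langle 0,v_0\rangle\}$, and in Algorithm~\ref{code-tar} the writer only ever unions tuples into $R_1$ (at line~\ref{r1-tar}), so $\langle 0,v_0\rangle\in R_1$ at all times, i.e., $v_0\in R_1$ at all times.
In particular, $v_0\in R_1$ at all times $t'\ge 0$.

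We now show that $p_k$ does not insert any {\ct} process into $\set_0$ in this {\validpg{(v_0)}}.
Assume for contradiction that $p_k$ inserts a {\ct} process $p_j$ into $\set_0$ at line~\ref{set0-tar} in some iteration of the while loop at line~\ref{whileloop-tar}.
Let $c_k$ be the value of $C_k$ after $p_k$ increments $C_k$ at line~\ref{ckplus-tar} in this iteration; since $C_k$ is initialized to $0$, $c_k\ge 1$.
Since $p_k$ inserts $p_j$ into $\set_0$ at line~\ref{set0-tar},
	$p_k$ reads $\langle r,c\rangle$ with $v_0\notin r$ and $c\ge c_k\ge 1$ from $R_{jk}$ at line~\ref{readri-tar}.
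Since $R_{jk}$ is initialized to $\langle\emptyset,0\rangle$ and $p_j$ is {\ct},
	$p_j$ writes $\langle r,c\rangle$ with $v_0\notin r$ into $R_{jk}$ at line~\ref{fresh1-tar} in some iteration of the while loop of $\fresh()$.
In that iteration, $p_j$ inserts $p_k$ into $askers$ at line~\ref{askers-tar} at some time $t_a\ge 0$.
Since $v_0\in R_1$ at all times $t'\ge 0$,
	by Lemma~\ref{asker1-tar} (applied with $v=v_0$ and $t=0$),
	$p_j$ writes $\langle r_j,-\rangle$ with $v_0\in r_j$ into $R_{jk}$ at line~\ref{fresh1-tar} in that iteration --- a contradiction.

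Therefore $\set_0$ of $p_k$ in this {\validpg{(v_0)}} contains only faulty processes, and so $|\set_0|\le f$ at all times.
Hence $p_k$ never finds that the condition $|\set_0|>f$ holds at line~\ref{return0-tar}, and so this {\validpg{(v_0)}} never returns $\false$ at line~\ref{return0-tar}.
By Lemma~\ref{validpterminate}, this {\validpg{(v_0)}} terminates, and since it cannot return $\false$, it returns $\true$.
\end{proof}
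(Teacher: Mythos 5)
Your overall strategy is sound and reaches the right conclusion, but one step does not survive the case this lemma is chiefly needed for: a \emph{Byzantine} writer. You discharge the hypothesis of Lemma~\ref{asker1-tar} by asserting ``$v_0 \in R_1$ at all times $t' \ge 0$'', justified from the initialization of $R_1$ and the fact that the writer only unions tuples into it. That justification is valid only when $p_1$ is correct; a Byzantine $p_1$ can overwrite $R_1$ arbitrarily, so the first disjunct of the hypothesis of Lemma~\ref{asker1-tar} may simply be false. Since Lemma~\ref{v0alwaystrue} is later used precisely when the writer is faulty (e.g., to obtain $t^{v_0}_0 = 0$ in Observation~\ref{v00}), this is a real gap as written. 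The repair is the one you already gestured at in your plan: invoke the \emph{second} disjunct instead --- by Observation~\ref{correctonlyv0-tar} every correct process $p_i$ has $v_0 \in R_i$ at all times, and there are at least $n-f \ge 2f+1 \ge f+1$ correct processes, so ``at least $f+1$ processes $p_i$ have $v_0 \in R_i$ at all times $t' \ge 0$'' holds unconditionally. With that one substitution your application of Lemma~\ref{asker1-tar} goes through for any writer.

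Beyond that, your route differs from the paper's. You inline the machinery of Lemma~\ref{notreturn0-tar} specialized to $v=v_0$ and $t=0$: no correct process is ever inserted into $\set_0$, hence $|\set_0| \le f$ throughout, hence no $\false$ at line~\ref{return0-tar}, hence $\true$ by termination. The paper argues more directly: if the execution returned $\false$ then $|\set_0| > f$, so some \emph{correct} $p_a$ supplied a tuple with $v_0 \notin r_a$; but a correct $p_a$ only writes into $R_{ak}$ a set it obtained from its own $R_a$ (lines~\ref{readrj-tar} and~\ref{fresh1-tar}), which always contains $v_0$ by Observation~\ref{correctonlyv0-tar} --- an immediate contradiction, with no need for Lemma~\ref{asker1-tar} or the ``askers'' timing argument. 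Your version is heavier but, once patched, is a correct alternative and would generalize verbatim to any value permanently witnessed by $f+1$ correct processes from time~$0$.
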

\begin{proof}
Suppose, for contradiction, there is a {\validpg{(v_0)}} by a {\ct} process $p_k$ that does not return $\true$.
By Lemma~\ref{validpterminate},
    this $\valid(v_0)$ completes.
According to the code of the \valid$(-)$ procedure,
    this $\valid(v_0)$ returns $\false$ at line~\ref{return0-tar}.
This implies that
    $p_k$ finds the condition $|\set_0| > f$ holds at line~\ref{return0-tar}.
Thus, since $p_k$ is {\ct},
    by lines~\ref{readri-tar} and lines~\ref{notv-tar}-\ref{set0-tar},
    there are at least $f+1$ processes $p_j$ such that $p_k$ reads $\langle r_j ,c_j \rangle$ with $v_0\not\in r_j$ from $R_{jk}$ at line~\ref{readri-tar} of $\valid(v_0)$.
Since there are at most $f$ faulty processes,
    there is at least one {\ct} process $p_a$ such that $p_k$ reads $\langle r_a ,c_a \rangle$ with $v_0\not\in r_a$ from $R_{ak}$ at line~\ref{readri-tar} of $\valid(v_0)$.
Since $p_a$ is {\ct},
    by line~\ref{readrj-tar} and line~\ref{fresh1-tar},
    there is a time $t$ when $v_0\not\in R_a$ --- a contradiction to Observation~\ref{correctonlyv0-tar}.    
\end{proof}

\begin{lemma}~\label{notreturn0-tar}
	Suppose there is a time $t$ such that
		$v \in R_1$ at all times $t' \ge t$ or at least $f+1$ processes $p_i$ have $v \in R_i$ at all times $t' \ge t$.
	If a {\ct} reader executes $\valid(v)$ after time $t$, 
		then it does not insert any {\ct} process into $\set_0$ in this $\valid(v)$.
\end{lemma}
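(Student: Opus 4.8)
The plan is to argue by contradiction, transcribing the structure of the proof of Lemma~\ref{notreturn0-mar} (the verifiable-register analogue) into the setting of Algorithm~\ref{code-tar}. So I would assume there is a time $t$ witnessing hypothesis (1) of the lemma --- namely $v \in R_1$ at all times $t' \ge t$, or at least $f+1$ processes $p_i$ have $v \in R_i$ at all times $t' \ge t$ --- and a correct reader $p_k$ that executes $\valid(v)$ starting after time $t$, yet nevertheless inserts some correct process $p_j$ into $\set_0$ during that execution. The goal is to derive a contradiction from the $\fresh()$ bookkeeping, exactly as in the $\set_0$-insertion sub-argument of Case~2.2.2 of Lemma~\ref{onecorrectset0-tar}.

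First I would localize the argument to the single iteration of the while loop at line~\ref{whileloop-tar} in which $p_k$ performs line~\ref{set0-tar} on $p_j$. In that iteration $p_k$ has just incremented $C_k$ at line~\ref{ckplus-tar}; call the resulting value $c_k$, so $c_k \ge 1$ (since $C_k$ starts at $0$), and this write into $C_k$ occurs after $t$. By Observation~\ref{monock-tar}, $C_k$ holds only values strictly below $c_k$ at every time $\le t$. From lines~\ref{readri-tar}, \ref{until-tar}, and \ref{notv-tar}, reaching line~\ref{set0-tar} on $p_j$ means $p_k$ read from $R_{jk}$ a tuple $\langle r, c\rangle$ with $v \notin r$ and $c \ge c_k \ge 1$; since $R_{jk}$ is initialized to $\langle \emptyset, 0\rangle$ and $p_j$ is correct, $p_j$ must itself have written $\langle r, c\rangle$ into $R_{jk}$ at line~\ref{fresh1-tar} in some iteration of its $\fresh()$ loop ($\star\star$).

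Then I would examine that very $\fresh()$ iteration of $p_j$. Being correct, in it $p_j$ reads $c$ from $C_k$ at line~\ref{collectck-tar} at some time $t_j^1$ and later adds $p_k$ to $askers$ at line~\ref{askers-tar} at some time $t_j^2 > t_j^1$. Since $c \ge c_k$, Observation~\ref{monock-tar} forces $t_j^1 > t$, hence $t_j^2 > t$. Now Lemma~\ref{asker1-tar} applies directly: its hypothesis is precisely hypothesis (1) of the present lemma, and it states that if a correct process inserts $p_k$ into $askers$ at line~\ref{askers-tar} at a time $\ge t$, then in the same iteration it writes $\langle r_j, -\rangle$ with $v \in r_j$ into $R_{jk}$ at line~\ref{fresh1-tar} --- contradicting ($\star\star$).

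The only respect in which this is not a verbatim copy of the verifiable-register proof is the appeal to Lemma~\ref{asker1-tar}, which for Algorithm~\ref{code-tar} must itself account for the writer's register $R_1$ holding timestamped values and for the $j = 1$ case (via Lemma~\ref{correctp1v-tar}); but all of that is already handled inside Lemma~\ref{asker1-tar}, so nothing new is required here. Consequently I expect no genuine obstacle --- the ``hard part'' has been pushed into Lemma~\ref{asker1-tar}, and what remains is a short, direct transcription.
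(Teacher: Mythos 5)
Your proposal matches the paper's proof essentially line for line: the same contradiction setup, the same use of Observation~\ref{monock-tar} to place $p_j$'s read of $C_k$ after time $t$, the same deduction that $p_j$ must have written the offending tuple at line~\ref{fresh1-tar}, and the same final appeal to Lemma~\ref{asker1-tar}. Your closing remark is also accurate — the authenticated-register-specific complications (timestamped $R_1$, the $j=1$ case via Lemma~\ref{correctp1v-tar}) are indeed absorbed entirely into Lemma~\ref{asker1-tar}, so nothing beyond the transcription is needed here.
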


\begin{proof}
	Assume for contradiction,
		there is a time $t$ such that
		(1) $v \in R_1$ at all times $t' \ge t$ or at least $f+1$ processes $p_i$ have $v \in R_i$ at all times $t' \ge t$, and
		(2) a {\ct} reader $p_k$ executes $\valid(v)$ after time $t$,  
		but $p_k$ inserts a {\ct} process $p_j$ into $\set_0$ in this $\valid(v)$.
 
	Consider the iteration of the while loop at line~\ref{whileloop-tar} of the $\valid(v)$ in which $p_k$ inserts $p_j$ into $\set_0$ at line~\ref{set0-tar}. 
	Let $c_k$ be the value of $C_k$ after $p_k$ increments $C_k$ at line~\ref{ckplus-tar} in this iteration.
	Since $C_k$ is initialized to $0$, 
		$c_k \ge 1$ and $p_k$ writes $c_k$ into $C_k$ after time $t$.
	By Observation~\ref{monock-tar},
		for all times $t' \le t$, $C_k$ contains values that are less than $c_k$ ($\star$).
	Since $p_k$ inserts $p_j$ into $\set_0$ at line~\ref{set0-tar},
		$p_k$ reads $\langle r_j, c_j \rangle$ with $v\not\in r_j$ and $c_j \ge c_k$ from $R_{jk}$ at line~\ref{readri-tar}.
	Since $c_k \ge 1$, $c_j\ge 1$.
	Since $R_{jk}$ is initialized to $\langle \emptyset, 0 \rangle$,
		it must be that $p_j$ writes $\langle r_j, c_j \rangle$ with $v\not\in r_j$ into $R_{jk}$ 
		at line~\ref{fresh1-tar} in some iteration of the while loop of $\fresh()$ ($\star\star$).

	Consider the iteration of the while loop of $\fresh()$ procedure
		in which $p_j$ writes $\langle r_j, c_j \rangle$ into $R_{jk}$ at line~\ref{fresh1-tar}.
	Since $p_j$ is {\ct}, in that iteration:	
		(a) $p_j$ reads $c_j$ from $C_k$ at line~\ref{collectck-tar}, say at time $t_j^1$ and then 
		(b) $p_j$ inserts $p_k$ into $askers$ at line~\ref{askers-tar}, say at time $t_j^2 > t_j^1$.
	Since $c_j \ge c_k$,
		by ($\star$),
		$p_j$ reads $c_j$ from $C_k$ at line~\ref{collectck-tar} after time $t$, i.e., $t_j^1 > t$.
	Since $t_j^2 > t_j^1$, $t_j^2 > t$, 
		i.e., $p_j$ inserts $p_k$ into $askers$ at line~\ref{askers-tar} at time $t_j^2 > t$.
	Thus, by (1) and Lemma~\ref{asker1-tar}, 
		$p_j$ writes $\langle r_j, c_j \rangle$ with $v\in r_j$ into $R_{jk}$ at line~\ref{fresh1-tar} --- a contradiction to ($\star\star$).
\end{proof}

\begin{lemma}~\label{testreturn1-tar}
	Suppose there is a time $t$ such that $v \in R_1$ at all times $t' \ge t$ or at least $f+1$ processes $p_i$ have $v \in R_i$ at all times $t' \ge t$.
	If a {\ct} reader executes $\valid(v)$ after time $t$,  
		then this $\valid(v)$ returns $\true$.
\end{lemma}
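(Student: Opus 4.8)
The plan is to mirror the proof of Lemma~\ref{testreturn1-mar}, adapting it to the \validpg{(-)} executions of Algorithm~\ref{code-tar}. First I would fix a time $t$ witnessing the hypothesis --- so either $v \in R_1$ at all times $t' \ge t$, or at least $f+1$ processes $p_i$ have $v \in R_i$ at all times $t' \ge t$ --- and let $p_k$ be a correct reader that starts a \validpg{(v)} after time $t$. The crucial point is that $\set_0$ is (re)initialized to $\emptyset$ at the beginning of this execution, so Lemma~\ref{notreturn0-tar} applies directly: $p_k$ never inserts a correct process into $\set_0$ during this \validpg{(v)}. Hence $\set_0$ contains only faulty processes throughout, and since there are at most $f$ of them, $|\set_0| \le f$ at every point of the execution.

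Next I would argue about the return points. Because $|\set_0| \le f$ always holds, the guard $|\set_0| > f$ at line~\ref{return0-tar} is never satisfied, so this \validpg{(v)} can never return $\false$. By Lemma~\ref{validpterminate} the execution terminates, and the only other way for the $\valid(-)$ procedure to return is via line~\ref{return1-tar}, which returns $\true$. Therefore the \validpg{(v)} returns $\true$, as claimed. Note that the \validpg{(-)} formulation (rather than one phrased only about \valid\ operations, where one could invoke Theorem~\ref{termination-tar}) is what is needed here, because in Algorithm~\ref{code-tar} a $\valid(-)$ execution may also occur inside a $\Test$, at line~\ref{verifyp-tar}.

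The real work is not in this lemma but in the supporting lemma it rests on. Lemma~\ref{notreturn0-tar} is where one must show that once $v$ is stably in $R_1$, or stably witnessed by $f+1$ processes, no correct helper $p_j$ will ever report ``$v \notin r_j$'' to $p_k$ with a timestamp $c_j \ge C_k$ after $p_k$ has bumped $C_k$ past time $t$; this in turn relies on Lemma~\ref{asker1-tar}, whose only genuinely new ingredient compared to the \marc\ case is the subtlety that the helper $p_j$ might be the writer $p_1$ itself (handled through Lemma~\ref{correctp1v-tar}, which converts ``$f+1$ processes have $v$'' into ``the correct writer has $v \in R_1$''). So I expect the proof of Lemma~\ref{testreturn1-tar} proper to be short --- essentially the three-sentence combination above --- with all the delicate round-counter and timestamp reasoning already discharged upstream.
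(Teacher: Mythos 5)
Your proof is correct and follows exactly the paper's argument: initialize $\set_0$ to $\emptyset$, apply Lemma~\ref{notreturn0-tar} to conclude $\set_0$ holds only faulty processes so $|\set_0| \le f$ throughout, hence the guard at line~\ref{return0-tar} never fires, and Lemma~\ref{validpterminate} forces a return at line~\ref{return1-tar} with $\true$. Your observation that the statement must be about \validpg{(-)}s (so that Lemma~\ref{validpterminate} rather than Theorem~\ref{termination-tar} is the right termination tool, covering the call at line~\ref{verifyp-tar} inside $\Test$) matches the paper's treatment as well.
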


\begin{proof}
	Suppose that there is a time $t$ such that
        (1) $v \in R_1$ at all times $t' \ge t$ or at least $f+1$ processes $p_i$ have $v \in R_i$ at all times $t' \ge t$, and 
		(2) a {\ct} reader $p_k$ executes $\valid(v)$ after time $t$.
        Note that $\set_0$ is initialized to $\emptyset$ at the start of this operation.
	By Lemma~\ref{notreturn0-tar},
            $p_k$ does not insert any {\ct} process into $\set_0$ in this $\valid(v)$.
            Thus, $\set_0$ can contain only faulty processes, and so 
		$|\set_0| \le f$ in the $\valid(v)$.
	So $p_k$ never finds that the condition $|\set_0| > f$ holds in line~\ref{return0-tar} of the $\valid(v)$ and so this $\valid(v)$ never returns $\false$ at line~\ref{return0-tar}.
	Thus,
		by Lemma~\ref{validpterminate} and the code of the $\valid$ procedure,
		this $\valid(v)$ returns $\true$.
\end{proof}

By Observation~\ref{correctonlywrites1-tar} and Lemma~\ref{testreturn1-tar}, we have the following:

\begin{corollary}~\label{cor-testreturn1-tar}
	Suppose the writer $p_1$ is correct and has $v \in R_1$ at some time $t$.
	If a {\ct} reader executes $\valid(v)$ after time $t$,  
		then this $\valid(v)$ returns $\true$.
\end{corollary}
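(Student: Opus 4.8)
The plan is to derive the corollary immediately from the two results cited just before its statement, with essentially no new argument needed. First I would use the hypothesis that the writer $p_1$ is correct in $H$: by Observation~\ref{correctonlywrites1-tar}, the contents of the register $R_i$ of a correct process are monotone, i.e., once a value is in $R_i$ it remains there at all later times. Applying this observation to $p_1$ together with the assumption that $v \in R_1$ at time $t$, I get that $v \in R_1$ at all times $t' \ge t$.

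This is exactly the condition required to invoke Lemma~\ref{testreturn1-tar}: that lemma asks for the existence of a time $t$ such that $v \in R_1$ at all times $t' \ge t$ (the first disjunct of its hypothesis), and the previous paragraph establishes precisely this for the given $t$. Hence, taking the same witness $t$, Lemma~\ref{testreturn1-tar} tells us that if a correct reader executes $\valid(v)$ after time $t$, then this $\valid(v)$ returns $\true$ — which is the conclusion of the corollary. So the corollary follows.

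The only thing to be careful about is matching the quantifier phrasings across the two statements: ``the writer has $v \in R_1$ at some time $t$'' must feed the monotonicity observation (it does, since $p_1$ is correct), and ``a correct reader executes $\valid(v)$ after time $t$'' must line up with the corresponding quantifier in Lemma~\ref{testreturn1-tar} (it does, with the same time $t$ used as the witness in both places). There is no genuine obstacle here: the corollary is a direct specialization of Lemma~\ref{testreturn1-tar} to the situation where the stable pool of witnesses for $v$ is guaranteed by the writer's own register rather than by $f+1$ generic processes, and the ``stability'' of that pool is what Observation~\ref{correctonlywrites1-tar} supplies for free.
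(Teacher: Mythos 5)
Your proposal is correct and matches the paper exactly: the paper derives this corollary directly from Observation~\ref{correctonlywrites1-tar} (monotonicity of $R_1$ for the correct writer, giving $v \in R_1$ at all times $t' \ge t$) combined with the first disjunct of the hypothesis of Lemma~\ref{testreturn1-tar}. No further comment is needed.
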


\rmv{
The following lemma captures the ``relay'' property of signed values: intuitively,
    if a process validates (the signature of) a value~$v$,
    thereafter every process will also be able to validate it.
    
\begin{lemma}\label{testatestb-tar}\XMP{$\star$}
	Let $\valid(v)$ and $\valid(v)'$ be \validp{}s by {\ct} readers.
	If $\valid(v)$ returns $\true$ and it precedes $\valid(v)'$, then $\valid(v)'$ also returns $\true$.
\end{lemma}

\begin{proof}
	Let $\valid(v)$ and $\valid(v)'$ be \validp{}s by {\ct} readers $p_a$ and $p_b$ respectively.
	Suppose that $\valid(v)$ precedes $\valid(v)'$ and $\valid(v)$ returns $\true$.
	Let $t$ be the time when $\valid(v)$ returns $\true$.
	Since $\valid(v)$ precedes $\valid(v)'$,
		$p_b$ invokes $\valid(v)'$ after time~$t$~($\star$).

	Since $\valid(v)$ returns $\true$ at time $t$,
		$p_a$ finds that the condition $|\set_1| \ge n-f$ holds in line~\ref{return1-tar} of $\valid(v)$, say at time $t_a \le t$.
	Since $n \ge 3f+1$, $|\set_1| \ge 2f+1$ at time $t_a$.
	Since there are at most $f$ faulty processes,
		there are at least $f+1$ {\ct} processes in $\set_1$ of $p_a$ at time $t_a$.
	By Lemma~\ref{ri1beforeset1-tar},
		at least $f+1$ {\ct} processes $p_i$ have $v\in R_i$ at all times $t' \ge t_a$.
	Since $t_a < t$,
		at least $f+1$ {\ct} processes $p_i$ have $v\in R_i$ at all times $t' \ge t$.
	Thus, by ($\star$) and Lemma~\ref{testreturn1-tar}, $\valid(v)'$ returns~$\true$.	
\end{proof}

\begin{definition}~\label{t0t1-tar}\RMP{Can we move this, and anything that is NOT used in the case that the writer is correct, to case where writer is faulty?}\XMP{$\star$}
        For any value $v$,
	\begin{itemize}
		\item Let $t^v_0$ be the max invocation time of any \validp{(v)} by a {\ct} reader that returns $\false$ in $H$;
		if no such {\validp{(v)}} exists, $t^v_0 = 0$.
		\item Let $t^v_1$ be the min response time of any {\validp{(v)}} by a {\ct} reader that returns $\true$ in $H$;
		if no such {\validp{(v)}} exists, $t^v_1 = \infty$.
	\end{itemize}
\end{definition}

\begin{lemma}\label{iexists-tar}\XMP{$\star$}
      For any value $v$, $t^v_1 > t^v_0$ and so the interval $(t^v_0,t^v_1)$ is not empty.
\end{lemma}

\begin{proof}
    Let $v$ be any value.
    There are four cases:
    \begin{itemize}
         \item Case 1: no {\validp{(v)}} by a {\ct} process returns $\false$,
            and no {\validp{(v)}} by a {\ct} process returns $\true$.
            Then $t^v_0 = 0$ and $t^v_1 = \infty$.
            So $t^v_1 > t^v_0$.
        \item Case 2: no {\validp{(v)}} by a {\ct} process returns $\false$ and some {\validp{(v)}} by a {\ct} process returns $\true$.
            Then $t^v_0 = 0$ and  $t^v_1 > 0$.
            So $t^v_1 > t^v_0$.
        \item Case 3: some {\validp{(v)}} by a {\ct} process returns $\false$
            and
            no {\validp{(v)}} by a {\ct} process returns $\true$.
            Then  $t^v_0 < \infty$ and $t^v_1 = \infty$.
            So $t^v_1 > t^v_0$.
            
        \item Case 4: some {\validp{(v)}} by a {\ct} process returns $\false$
        and
        some {\validp{(v)}} by a {\ct} process returns $\true$.
            Let $\valid(v)'$ be the {\validp{(v)}} with the max invocation time of any {\validp{(v)}} by a {\ct} process that returns $\false$. 
            By Definition~\ref{t0t1-tar}, the invocation time of $\valid(v)'$ is $t^v_0$~($\star$).
            Let $\valid(v)''$ be the {\validp{(v)}} with the min response time of any {\validp{(v)}} by a {\ct} process that returns $\true$.
            By Definition~\ref{t0t1-tar}, the response time of $\valid(v)''$ is $t^v_1$ ($\star\star$).
                         
            Since $\valid(v)''$ returns $\true$ and $\valid(v)'$ returns $\false$, 
                by Lemma~\ref{testatestb-tar},
                $\valid(v)''$ does not precede\XMP{need to define for {\validp{}}} $\valid(v)'$.
            So the response time of $\valid(v)''$ is greater than the invocation time of $\valid(v)'$.
            Thus, by ($\star$) and ($\star\star$),
                 $t^v_1 > t^v_0$.
    \end{itemize}
    In all cases,  $t^v_1 > t^v_0$ and so the interval $(t^v_0,t^v_1)$ is not empty.
\end{proof}

By Lemma~\ref{v0alwaystrue} and Definition~\ref{t0t1-tar}, it is clear that:
\begin{observation}\label{v00}\XMP{$\star$}
$t^{v_{0}}_0 = 0$.
\end{observation}
}

\begin{definition}
Let ${\hct}$ be the set of processes that are correct in the history $H$.
\end{definition}

\begin{definition}
Let $H|{\hct}$ be the history consisting of all the steps of all the \emph{{\ct}} processes in $H$ (at the same times they occur in $H$).
\end{definition}

By Theorem~\ref{termination-tar}, all the processes that are correct in $H$ (i.e., all the processes in {\hct}) complete their operations, so:

\begin{observation}\label{allcomplete}
    Every operation in $H|{\hct}$ is complete (i.e., it has both an invocation and a response).
\end{observation}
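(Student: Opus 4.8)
The plan is to read off Observation~\ref{allcomplete} directly from the Termination theorem (Theorem~\ref{termination-tar}) together with the definition of the restricted history $H|{\hct}$. Recall that ${\hct}$ is the set of processes that are correct in $H$, and that $H|{\hct}$ consists of exactly the steps taken by processes in ${\hct}$, occurring at the same times as in $H$. First I would make precise the following consequence of this definition: a step of $H$ survives in $H|{\hct}$ if and only if it is a step of some process $p\in{\hct}$; in particular, no step of a correct process is ever dropped. Hence an operation $o$ occurs in $H|{\hct}$ exactly when its invocation step is a step of some $p\in{\hct}$, and whenever such an $o$ also has a response step in $H$, that response step --- being another step of the correct process $p$ --- is likewise retained in $H|{\hct}$.

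With this in hand, the argument is immediate. Let $o$ be an arbitrary operation of $H|{\hct}$, performed by some process $p$; by the previous paragraph $p\in{\hct}$, so $p$ is correct in $H$. By Theorem~\ref{termination-tar}, every $\Set$, $\Test$, and $\valid$ operation by a correct process completes, so $o$ has a matching response step in $H$; and by the previous paragraph this response step is retained in $H|{\hct}$. Thus $o$ has both an invocation and a response in $H|{\hct}$, i.e., $o$ is complete, and since $o$ was arbitrary, every operation in $H|{\hct}$ is complete. There is no real obstacle here; the only point worth stating carefully is the ``if and only if'' characterization of which operations and steps survive the restriction to ${\hct}$, which guarantees that passing from $H$ to $H|{\hct}$ cannot turn a completed operation into an incomplete one (it only ever deletes, in their entirety, the operations of faulty processes).
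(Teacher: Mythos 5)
Your proposal is correct and matches the paper's own justification, which simply invokes Theorem~\ref{termination-tar} together with the definition of $H|{\hct}$ to conclude that every operation of a correct process is complete. The extra care you take in noting that the response step of a correct process's operation is retained when restricting $H$ to $H|{\hct}$ is a harmless elaboration of the same one-line argument.
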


\begin{observation}\label{correctop-tar}
    An operation $o$ by a process $p$ is in $H|{\hct}$ if and only if $o$ is also in $H$ and $p \in {\hct}$.
\end{observation}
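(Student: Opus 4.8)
The plan is to prove Observation~\ref{correctop-tar} by directly unwinding the definition of $H|{\hct}$, exactly as in the analogous statement for Algorithm~\ref{code-mar} (Observation~\ref{correctop-mar}). Recall that $H|{\hct}$ is the subhistory consisting of \emph{all} the steps of \emph{all} the processes in ${\hct}$, occurring at the same times as in $H$; equivalently, a step appears in $H|{\hct}$ if and only if it appears in $H$ \emph{and} it is taken by a process that is correct in $H$. Recall also the convention that an operation $o$ is ``in'' a history precisely when its invocation step is in that history.

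For the forward direction I would assume that an operation $o$ by a process $p$ is in $H|{\hct}$. Then the invocation step of $o$ is a step of $H|{\hct}$, so by the characterization above this step (i)~also occurs in $H$ and (ii)~is taken by a correct process; since it is $p$'s step, $p$ is correct in $H$, i.e., $p \in {\hct}$, and since it occurs in $H$, the operation $o$ is in $H$. For the converse I would assume $o$ by $p$ is in $H$ and $p \in {\hct}$. Then the invocation step of $o$ occurs in $H$ and is taken by the correct process $p$, so by the definition of $H|{\hct}$ this step occurs in $H|{\hct}$; hence $o$ is in $H|{\hct}$. Combining the two directions gives the claimed equivalence.

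I do not expect any genuine obstacle: the statement is an immediate consequence of the definition of the restriction $H|{\hct}$, and the only point requiring a moment's care is fixing the convention that identifies the presence of an operation in a history with the presence of its invocation step, so that incomplete operations are handled uniformly---although, by Theorem~\ref{termination-tar} and Observation~\ref{allcomplete}, every operation of a correct process is in fact complete in both $H$ and $H|{\hct}$.
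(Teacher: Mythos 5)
Your proposal is correct and matches the paper's intent: the paper states this as an \emph{observation} with no written proof precisely because it follows immediately from the definition of $H|{\hct}$ as the restriction of $H$ to the steps of correct processes, which is exactly the unwinding you carry out. Nothing further is needed.
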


\begin{observation}\label{sameop-tar}
For all processes $p \in {\hct}$,
    $p$ has the same operations (i.e., the same steps) in both $H$ and $H|{\hct}$.
    Furthermore, the $[invocation, response]$ intervals of these operations are the same in both $H$ and $H|{\hct}$.
\end{observation}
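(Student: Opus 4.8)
The plan is to derive Observation~\ref{sameop-tar} directly from the definition of $H|\hct$. Recall that $H|\hct$ is \emph{defined} to be the history consisting of all the steps of all the processes in $\hct$ (the processes correct in $H$), occurring at exactly the same times as in $H$. So $H|\hct$ is not a freshly generated run: it is literally the projection of $H$ onto the steps of the correct processes.

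First I would fix an arbitrary correct process $p \in \hct$ and argue that $p$'s timestamped step sequence is identical in $H$ and in $H|\hct$. Every step taken by $p$ in $H$ is a step of a correct process, and is therefore retained (at the same time) in $H|\hct$ by construction; conversely, $H|\hct$ contains no step attributed to $p$ that is not already a step of $p$ in $H$, since $H|\hct$ only deletes steps (those of Byzantine processes) and never adds any. Hence $p$ performs exactly the same sequence of steps, at exactly the same times, in both histories; in particular its invocation and response steps coincide.

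Next I would observe that the operations of a correct process, together with their $[invocation,response]$ intervals, are determined solely by that process's own invocation and response steps: an operation spans the interval from one of $p$'s invocation steps to the matching response step of $p$. Since by the previous paragraph these steps — and their times — coincide in $H$ and $H|\hct$, the operations of $p$ and their execution intervals are the same in the two histories. (By Theorem~\ref{termination-tar}, or equivalently Observation~\ref{allcomplete}, every operation of a correct process completes, so every such operation has a matching response; but even an incomplete operation would have the same invocation step in both histories, so this point is not actually needed.)

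I do not expect any genuine obstacle here: the observation is essentially an unfolding of the definition of history projection. The only thing to be slightly careful about is noting that ``the operations of $p$'' refer to something determined entirely by $p$'s own steps, which are preserved verbatim when the steps of Byzantine processes are removed — so deleting Byzantine steps cannot create, destroy, or reshape any operation of a correct process.
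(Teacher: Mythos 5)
Your proposal is correct and matches the paper's treatment: the paper states this as an unproved observation that follows immediately from the definition of $H|\hct$ as the projection of $H$ onto the steps of the correct processes, which is exactly the unfolding you carry out.
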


To show that $H$ is Byzantine linearizable with respect to a SWMR \tar, 
    we must show that there is a history $H'$ such that:
    (a)~$H'|{\hct} = H|{\hct}$, and
    (b)~$H'$ is linearizable with respect to a SWMR {\tar}.
There are two cases, depending on whether the writer is {\ct} in $H$.

\subsubsection*{Case 1: the writer $p_1$ is {\ct} in $H$.}\label{case-writer-correct}

Let $H' = H|{\hct}$.
We now show that the history $H'$ is linearizable with respect to a SWMR \tar.
To do so, 
    we first define the linearization points of the operations in $H'$; 
and then we use these linearization points to define a linearization $L$ of $H'$ such that:
(a) $L$ respects the precedence relation between the operations of $H'$, and 
(b) $L$ conforms to the sequential specification of a SWMR {\tar} (which is given in Definition~\ref{def-tar}).

Since the writer $p_1$ is {\ct},
    we have the following:

\begin{observation}\label{reg-ob-2}
   In $H$,
    if $\langle \ell, v\rangle$ is in $R_1$ at time $t$ such that $\forall \langle \ell', v' \rangle \in R_1: \langle \ell, v \rangle  \ge  \langle \ell', v' \rangle $ at time $t$,
    then:
        \begin{itemize}
            \item either $\langle \ell,v\rangle$ is the last tuple that the writer inserts into $R_1$ at line~\ref{r1-tar} of a $\Set(v)$ operation before time $t$,
            \item or $v = v_0$ and the writer does not insert any tuple
            into $R_1$ at line~\ref{r1-tar} of a $\Set(-)$ operation before time $t$.
        \end{itemize}
      
\end{observation}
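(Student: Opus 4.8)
The plan is to exploit the fact that, since the writer $p_1$ is correct, it follows the $\Set(-)$ procedure exactly, and this procedure is \emph{monotone}: it only ever \emph{adds} a tuple to $R_1$ (line~\ref{r1-tar}), never removes one. Moreover, since a correct process applies its operations sequentially, the $\Set(-)$ operations of $p_1$ occur one after another, and line~\ref{r-tar} increments the local counter $\ell$ by exactly one immediately before each insertion. Since $\ell$ is initialized to $0$, the $k$-th $\Set(-)$ operation of $p_1$ (in the order $p_1$ performs them) inserts a tuple of the form $\langle k, v_k\rangle$ into $R_1$, where $v_k$ is the argument of that operation.

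Next I would argue, by an easy induction on the number of completed executions of line~\ref{r1-tar}, that at any time $t$ the content of $R_1$ is exactly $\{\langle 0, v_0\rangle, \langle 1, v_1\rangle, \dots, \langle m, v_m\rangle\}$, where $m \ge 0$ is the number of times $p_1$ has executed line~\ref{r1-tar} before time $t$; equivalently, when $m \ge 1$, $\langle m, v_m\rangle$ is the last tuple that $p_1$ inserts into $R_1$ strictly before $t$, and when $m = 0$ the set is just $\{\langle 0, v_0\rangle\}$ (the initial value of $R_1$). The only point needing care here is that if $p_1$ has a $\Set(-)$ in progress at time $t$ that has executed line~\ref{r-tar} but not yet line~\ref{r1-tar}, that operation has not (yet) modified $R_1$, so it does not affect the count $m$.

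Finally I would use the fact that the timestamps $0, 1, \dots, m$ appearing in $R_1$ are pairwise distinct. Hence the relation $\langle \ell', v'\rangle \ge \langle \ell'', v''\rangle$ (which compares timestamps first) linearly orders the tuples of $R_1$, so $R_1$ has a \emph{unique} maximum, namely $\langle m, v_m\rangle$. Therefore, if $\langle \ell, v\rangle \in R_1$ at time $t$ satisfies $\langle \ell, v\rangle \ge \langle \ell', v'\rangle$ for all $\langle \ell', v'\rangle \in R_1$ at time $t$, then $\langle \ell, v\rangle = \langle m, v_m\rangle$. If $m \ge 1$, this is precisely the last tuple that $p_1$ inserts into $R_1$ at line~\ref{r1-tar} of a $\Set(v)$ operation before time $t$, which gives the first disjunct. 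If $m = 0$, then $\langle \ell, v\rangle = \langle 0, v_0\rangle$, so $v = v_0$, and $p_1$ has not executed line~\ref{r1-tar} of any $\Set(-)$ operation before time $t$, which gives the second disjunct.

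The argument is essentially bookkeeping, so there is no genuine obstacle; the one step that warrants a sentence of justification is the claim that all timestamps in $R_1$ are distinct, which relies on $p_1$ being correct (so that it really does increment $\ell$ by one each time and performs its $\Set(-)$ operations sequentially) together with the fact that the initial tuple carries timestamp $0$, strictly below every timestamp a $\Set(-)$ can produce.
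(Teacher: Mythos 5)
Your proposal is correct; the paper states this as an unproved Observation (justified only by the remark that the writer $p_1$ is correct), and your argument is exactly the bookkeeping it leaves implicit: $R_1$ is written only at line~\ref{r1-tar}, each $\Set(-)$ first increments $\ell$, so the tuples in $R_1$ carry distinct timestamps $0,1,\dots,m$ and the unique maximum under the lexicographic order is the last tuple inserted (or $\langle 0,v_0\rangle$ if none was). No gaps.
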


\begin{lemma}\label{register-tar-1}
        In $H$,
        every $\Test$ operation by a {\ct} reader returns at line~\ref{checksign-tar}.
\end{lemma}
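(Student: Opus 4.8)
The plan is to walk through the body of the $\Test()$ procedure and argue that, when the writer $p_1$ is correct, both the format test at line~\ref{checkform-tar} succeeds and the internal $\valid(v)$ call at line~\ref{verifyp-tar} returns $\true$; consequently the reader exits at line~\ref{checksign-tar} and never reaches line~\ref{returnv0-tar}.

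First I would fix an arbitrary $\Test$ operation by a correct reader $p_k$ and look at its read $r \gets R_1$ at line~\ref{readerr1-tar}, occurring at some time $t$. Since $p_1$ is correct in $H$, the register $R_1$ is initialized to $\{\langle 0, v_0\rangle\}$ and $p_1$ only ever inserts tuples of the form $\langle \ell, v\rangle$ into it (at line~\ref{r1-tar}); hence $r$ is a non-empty set of tuples of that form, so $p_k$ finds the condition at line~\ref{checkform-tar} to hold and then selects a maximal tuple $\langle \ell, v\rangle \in r$ at line~\ref{maxl-tar}. Because $R_1$ is an atomic SWMR register, $r$ is exactly the contents of $R_1$ at the instant $t$, so $\langle \ell, v\rangle$ is a maximal tuple of $R_1$ at time $t$, which is precisely the hypothesis of Observation~\ref{reg-ob-2}.

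Next I would apply Observation~\ref{reg-ob-2} to this tuple, splitting into its two cases. If $v = v_0$, then the subsequent call $\valid(v_0)$ at line~\ref{verifyp-tar} returns $\true$ directly by Lemma~\ref{v0alwaystrue}. Otherwise $\langle \ell, v\rangle$ is the last tuple the writer inserted into $R_1$ (via some $\Set(v)$) before time $t$, so $v \in R_1$ at some time $\le t$; since $p_1$ is correct, Observation~\ref{correctonlywrites1-tar} gives $v \in R_1$ at time $t$, and since the $\valid(v)$ execution at line~\ref{verifyp-tar} follows the read at line~\ref{readerr1-tar} (hence occurs after $t$), Corollary~\ref{cor-testreturn1-tar} yields that this $\valid(v)$ returns $\true$. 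In both cases $\mathit{verified} = \true$ at line~\ref{checksign-tar}, so $p_k$ returns $v$ there.

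I do not anticipate a real obstacle: the only delicate bookkeeping is noting that the value $r$ the reader observes coincides with $R_1$'s contents at a single instant $t$ (so that Observation~\ref{reg-ob-2} is applicable) and that the internal $\valid(v)$ call truly occurs after $t$ — both immediate from atomicity of $R_1$ and the sequential order of $p_k$'s own steps. Everything substantive is already encapsulated in the previously established facts about the $\valid(-)$ procedure, namely Lemma~\ref{v0alwaystrue} and Corollary~\ref{cor-testreturn1-tar}.
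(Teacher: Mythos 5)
Your proof is correct and follows essentially the same route as the paper's: both reduce the claim to the fact that the selected value $v$ is in $R_1$ at the time of the read at line~\ref{readerr1-tar}, persists there forever by Observation~\ref{correctonlywrites1-tar}, and hence the internal $\valid(v)$ returns $\true$ by Lemma~\ref{testreturn1-tar} (of which your Corollary~\ref{cor-testreturn1-tar} is a restatement). The only difference is that your case split via Observation~\ref{reg-ob-2} and Lemma~\ref{v0alwaystrue} is unnecessary — since the tuple $\langle \ell,v\rangle$ was just read from $R_1$, the value $v$ (whether or not it equals $v_0$) is in $R_1$ at time $t$, and the single argument covers both cases uniformly, as the paper does.
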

\begin{proof}
Since the writer $p_1$ is correct in $H$, it is clear that $R_1$
    always stores a set of tuples
    of the form $\langle \ell, v \rangle$.
When a correct reader invokes a $\Test$ operation in $H$, it
    reads $R_1$ at line~\ref{readerr1-tar} at some time $t$.
Thus, in line~\ref{verifyp-tar}, the reader calls the procedure
    $\valid(v)$ for a value $v$ such that
    $\langle \ell, v \rangle $ is in $R_1$ at time $t$.
Since the writer is correct,
    by Corollary~\ref{correctonlywrites1-tar}, $\langle \ell, v \rangle$ is in $R_1$
    at all times $t' \ge t$.
So, by Lemma~\ref{testreturn1-tar}, the $\valid(v)$ procedure returns $\true$ in line~\ref{verifyp-tar}.
Thus, this $\Test$ operation returns in line~\ref{checksign-tar}.
\end{proof}

\begin{lemma}\label{register-tar-0}
        Let $\Test$ be an operation by a {\ct} reader that returns a value $v$ in $H$.
        Then:
        \begin{itemize}
            \item either $\langle -,v\rangle$ is the last tuple that the writer inserts into $R_1$ at line~\ref{r1-tar} of a $\Set(v)$ operation
            before the reader reads $R_1$ at line~\ref{readerr1-tar} of the $\Test$ operation in $H$,
            \item or $v = v_0$ and the writer does not insert any tuple into $R_1$ at line~\ref{r1-tar} of a $\Set(-)$ operation before the reader reads $R_1$ at line~\ref{readerr1-tar} of the $\Test$ operation in $H$.
        \end{itemize}

\end{lemma}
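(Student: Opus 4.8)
The plan is to trace what a correct reader does inside a $\Test$ operation and connect it back to the writer's code, using Lemma~\ref{register-tar-1} to know where the $\Test$ returns. First I would recall that since the writer $p_1$ is correct in $H$, the register $R_1$ always holds a set of tuples of the form $\langle \ell, v\rangle$, so the condition checked in line~\ref{checkform-tar} always holds. By Lemma~\ref{register-tar-1}, the $\Test$ operation returns at line~\ref{checksign-tar}, which means the reader returns the value $v$ that it selected in line~\ref{maxl-tar}: namely, the $v$ such that $\langle \ell, v\rangle \in r$ is maximal among all tuples in $r$, where $r$ is the snapshot of $R_1$ read at line~\ref{readerr1-tar} at some time $t$.

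The key step is then to apply Observation~\ref{reg-ob-2} to this snapshot. Since $r$ is exactly the value of $R_1$ at time $t$, and $\langle \ell, v\rangle$ is the maximal tuple in $r$, Observation~\ref{reg-ob-2} tells us that either $\langle \ell, v\rangle$ is the last tuple the writer inserted into $R_1$ at line~\ref{r1-tar} of a $\Set(v)$ operation before time $t$, or $v = v_0$ and the writer inserted no tuple into $R_1$ via line~\ref{r1-tar} of any $\Set(-)$ before time $t$. Since $t$ is precisely the time the reader reads $R_1$ at line~\ref{readerr1-tar} of the $\Test$ operation, this is exactly the dichotomy stated in the lemma (with $\langle -, v\rangle$ in place of $\langle \ell, v\rangle$ in the first bullet).

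I don't expect a serious obstacle here — the lemma is essentially a restatement of Observation~\ref{reg-ob-2} applied at the right point, once Lemma~\ref{register-tar-1} guarantees the $\Test$ returns the selected value rather than falling through to line~\ref{returnv0-tar}. The one point requiring a little care is making sure the value returned really is the value passed to the maximizing selection in line~\ref{maxl-tar}: we need that the $\valid(v)$ call in line~\ref{verifyp-tar} returns $\true$ (so control reaches line~\ref{checksign-tar} and returns $v$, not $v_0$), which is precisely what the proof of Lemma~\ref{register-tar-1} established via Corollary~\ref{correctonlywrites1-tar} and Lemma~\ref{testreturn1-tar}. With that in hand the argument is a direct chain: correct reader $\Rightarrow$ returns at line~\ref{checksign-tar} $\Rightarrow$ returns the maximal-tuple value of the $R_1$-snapshot $\Rightarrow$ Observation~\ref{reg-ob-2} gives the claim.
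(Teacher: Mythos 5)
Your proposal matches the paper's proof essentially step for step: invoke Lemma~\ref{register-tar-1} to conclude the $\Test$ returns at line~\ref{checksign-tar} with the maximal tuple's value from the $R_1$-snapshot taken at line~\ref{readerr1-tar}, then apply Observation~\ref{reg-ob-2} at that read time. No gaps; this is the same argument.
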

\begin{proof}
    Let $\Test$ be an operation by a {\ct} reader $p_k$ that returns a value $v$ in $H$.
    By Lemma~\ref{register-tar-1},
        $\Test$ returns $v$ at line~\ref{checksign-tar}.
    So by lines~\ref{readerr1-tar}-\ref{checksign-tar},
        $p_k$ reads $R_1$ into $r$ with $\langle \ell, v \rangle \in r$ for some $\ell$ at line~\ref{readerr1-tar}, say at time $t$.
    Furthermore, by line~\ref{maxl-tar},
            $\langle \ell, v \rangle$ is the tuple in $r$ such that $\forall \langle \ell', v' \rangle \in r: \langle \ell, v \rangle  \ge  \langle \ell', v' \rangle $.
    So $\langle \ell, v \rangle$ is the tuple in $R_1$ such that $\forall \langle \ell', v' \rangle \in R_1: \langle \ell, v \rangle  \ge  \langle \ell', v' \rangle $ at time $t$.
    By Observation~\ref{reg-ob-2}:
        \begin{itemize}
            \item either $\langle \ell, v \rangle$ is the last tuple that the writer inserts into $R_1$ at line~\ref{r1-tar} of a $\Set(v)$ operation before time $t$,
            i.e., before the reader reads $R_1$ at line~\ref{readerr1-tar} of the $\Test$ operation in $H$.
            \item or $v = v_0$ and the writer does not insert any tuple into $R_1$ at line~\ref{r1-tar} of a $\Set(-)$ operation before time $t$,
            i.e.,
            before the reader reads $R_1$ at line~\ref{readerr1-tar} of the $\Test$ operation in $H$.
        \end{itemize}
\end{proof}

\begin{lemma}\label{settest-tar-H}
In $H$,
    \begin{enumerate}
        \item
        If a {\validpg{(v)}} by a {\ct} reader returns $\false$, and
        the writer $p_1$ inserts $\langle -,v \rangle$ into $R_1$ at line~\ref{r1-tar} of a $\Set(v)$ operation at some time $t$,
        then the reader invokes this $\valid(v)$ 
        before time $t$ and $v\ne v_0$.
        
        \item If a {\validpg{(v)}} by a {\ct} reader returns $\true$ at time $t$,
	then the writer $p_1$ inserts $\langle -,v \rangle$ into $R_1$ at line~\ref{r1-tar} of a $\Set(v)$ operation before time $t$ or $v = v_0$.
   \end{enumerate}
\end{lemma}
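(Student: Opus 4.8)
\medskip
\noindent\textbf{Proof plan.}
Lemma~\ref{settest-tar-H} is the {\tar} counterpart of Lemma~\ref{settest-mar}, so the plan is to reuse that argument while inserting the two adjustments forced by Algorithm~\ref{code-tar}: (i)~$R_1$ now stores \emph{timestamped} tuples $\langle -,v\rangle$ and is modified by the writer \emph{only} at line~\ref{r1-tar} of a $\Set(-)$ operation (the $\fresh()$ procedure never writes $R_1$, because of the ``if $j\neq 1$'' guard), and (ii)~the initial value $v_0$ is ``always valid'', so the special cases ``$v=v_0$'' must be handled separately. Throughout I will use that we are in the case where the writer $p_1$ is {\ct} in $H$, so Observations~\ref{correctonlywrites1-tar} and~\ref{correctonlyv0-tar}, Lemma~\ref{v0alwaystrue}, Corollary~\ref{cor-testreturn1-tar}, and Lemma~\ref{correctp1v-tar} are all available.

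For Part~(1), I would argue by contradiction. Assume a $\valid(v)$ execution by a {\ct} reader returns $\false$ and the writer inserts $\langle -,v\rangle$ into $R_1$ at line~\ref{r1-tar} of a $\Set(v)$ at time $t$. That $v\neq v_0$ is immediate: by Lemma~\ref{v0alwaystrue} every $\valid(v_0)$ execution by a {\ct} process returns $\true$, so a $\valid(-)$ execution returning $\false$ cannot have argument $v_0$. For the other conjunct, suppose the reader does \emph{not} invoke this $\valid(v)$ before $t$; then, since the writer is {\ct}, Observation~\ref{correctonlywrites1-tar} gives $v\in R_1$ at all times $t'\ge t$, and Corollary~\ref{cor-testreturn1-tar} forces this $\valid(v)$ execution to return $\true$ --- contradicting the assumption. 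Hence the reader invokes $\valid(v)$ before $t$.

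For Part~(2), let $\valid(v)$ be an execution by a {\ct} reader $p_k$ that returns $\true$ at time $t$; we may assume $v\neq v_0$, as otherwise the second disjunct holds. Since the execution returns $\true$, $p_k$ reached $|\set_1|\ge n-f\ge 2f+1$ at line~\ref{return1-tar}; this size is already attained at line~\ref{set1-tar}, at some time $t_1<t$, strictly before the return step. As at most $f$ processes are faulty, $\set_1$ contains at least $f+1$ {\ct} processes at time $t_1$, and by Lemma~\ref{ri1beforeset1-tar} each such {\ct} process $p_j$ has $v\in R_j$ from (at the latest) $t_1$ onward, so at least $f+1$ processes have $v$ in their witness register at time $t_1$. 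Since the writer is {\ct}, Lemma~\ref{correctp1v-tar} then gives $v\in R_1$ at time $t_1$; and because $R_1$ is initially $\{\langle 0,v_0\rangle\}$, $v\neq v_0$, and a correct writer alters $R_1$ only at line~\ref{r1-tar} of a $\Set(v)$ operation, that alteration must have occurred at some time $\le t_1<t$. Hence the writer inserted $\langle -,v\rangle$ into $R_1$ at line~\ref{r1-tar} of a $\Set(v)$ operation before time $t$.

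The argument is essentially bookkeeping; I expect the main obstacle to be making the $\fresh()$-procedure reasoning airtight. Specifically, one must exploit the ``if $j\neq 1$'' guard to conclude that a correct writer places a value into $R_1$ \emph{only} via a $\Set(-)$ operation --- this is exactly what converts ``some correct process witnesses $v$'' into ``there is a genuine $\Set(v)$ operation'' --- and one must keep $v_0$ out of the way via Observation~\ref{correctonlyv0-tar} and Lemma~\ref{v0alwaystrue}. (If one prefers not to invoke Lemma~\ref{correctp1v-tar}, Part~(2) can instead be done exactly as in Lemma~\ref{settest-mar}(2): take $t^\ast$ to be the earliest time a {\ct} process has $v$ in its witness register, say $p_a$, note $t^\ast\le t_1$, and split on whether $p_a=p_1$; the case $p_a\neq p_1$ is impossible, since then $p_a$ added $v$ via line~\ref{follow-tar} of $\fresh()$, so at line~\ref{followcondition-tar} either $\langle -,v\rangle$ was in $R_1$ before $t^\ast$ or some {\ct} process had $v$ in its witness register before $t^\ast$, both contradicting the minimality of $t^\ast$.) The ``before $t$'' versus ``at $t$'' endpoint issue is resolved, as in the {\mar} proof, by noting that $|\set_1|$ attains $n-f$ at line~\ref{set1-tar}, strictly before the return at line~\ref{return1-tar}.
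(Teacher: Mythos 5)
Your proof is correct and follows essentially the same route as the paper's: Part~(1) by contradiction via Lemma~\ref{v0alwaystrue} for the $v=v_0$ case and Observation~\ref{correctonlywrites1-tar} together with Lemma~\ref{testreturn1-tar} for the timing, and Part~(2) via the $|\set_1|\ge n-f$ count, Lemma~\ref{ri1beforeset1-tar}, and Lemma~\ref{correctp1v-tar}. Your version even makes explicit the intermediate step (that the $f{+}1$ correct processes in $\set_1$ actually have $v$ in their registers $R_i$) that the paper's proof of Part~(2) leaves implicit.
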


\begin{proof}~

        \textsc{Part (1)}: Suppose, for contradiction, that:
        (1) a \validpg{(v)} by a {\ct} reader returns $\false$, and
        (2) the writer $p_1$ inserts $\langle -,v \rangle$ into $R_1$ at line~\ref{r1-tar} of a $\Set(v)$ operation at some time $t$, but
        (3) the reader invokes this $\valid(v)$ 
        after time $t$ or $v=v_0$.
        \begin{itemize}
            \item Case 1: $v=v_0$. Then by (1) the {\validpg{(v_0)}} by a {\ct} reader returns $\false$ --- a contradiction to Lemma~\ref{v0alwaystrue}.
            \item Case 2: $v\ne v_0$.
            Then by (3), the reader invokes this $\valid(v)$ 
            after time $t$.
            Since $p_1$ is {\ct},
            by (2) and Observation~\ref{correctonlywrites1-tar},
		$v \in R_1$ at all times $t' \ge t$.
        Thus,
        by (3) and Lemma~\ref{testreturn1-tar},
        this $\valid(v)$ 
        returns $\true$
        --- a contradiction to (1).
        \end{itemize}

        \textsc{Part (2)}: Suppose a {\validpg{(v)}} by a {\ct} reader $p_k$ returns $\true$ at time $t$.
	This implies $p_k$ finds that the condition $|\set_1| \ge n-f$ holds at line~\ref{return1-tar} of this $\valid(v)$ 
    by time $t$.
	Since $n \ge 3f+1$, $|\set_1| \ge 2f+1$ at time $t$.
	Since there are at most $f$ faulty processes,
		there are at least $f+1$ {\ct} processes in $\set_1$ of $p_k$ at time $t$.
    Since the writer $p_1$ is {\ct},
        by Lemma~\ref{correctp1v-tar},
        $p_1$ has $v \in R_1$ at time $t$.
    Since $R_1$ is initialized to $\{ \langle 0, v_0 \rangle \}$,
        $p_1$ inserts $\langle -,v \rangle$ into $R_1$ at line~\ref{r1-tar} of a $\Set(v)$ operation before time $t$ or $v = v_0$.
\end{proof}

Since the writer is correct in $H$ and $H'=H|{\hct}$,
    by Observations~\ref{correctop-tar} and~\ref{sameop-tar},
     Lemmas~\ref{v0alwaystrue}, \ref{register-tar-0}, and \ref{settest-tar-H} also hold for the history $H'$, as~stated~below:

\begin{corollary}\label{v0alwaystrue1}
   In $H'$, every {\validpg{(v_0)}} returns $\true$. 
\end{corollary}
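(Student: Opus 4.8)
The plan is to obtain this corollary by simply transporting the conclusion of Lemma~\ref{v0alwaystrue} from $H$ to $H'$, using the fact that in the case under consideration (the writer $p_1$ is correct in $H$) we have set $H' = H|{\hct}$. First I would note that, since $H'$ consists only of steps of processes that are correct in $H$, Observation~\ref{correctop-tar} tells us that any {\validpg{(v_0)}} occurring in $H'$ is performed by some process $p$ with $p \in {\hct}$, and that this same execution also occurs in $H$.

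Next I would deal with the one point that needs care: by the Notation of Section~\ref{a-tar}, a {\validpg{(v_0)}} need not be a stand-alone $\valid(v_0)$ operation — it may instead be the internal call to the $\valid(-)$ procedure made from inside a $\Test$ operation (line~\ref{verifyp-tar}). In either case the enclosing object (a $\valid(v_0)$ operation, or a $\Test$ operation) is an operation by the correct process $p$ in $H$, and by Observation~\ref{sameop-tar} process $p$ takes exactly the same steps in $H$ and in $H'$; hence this {\validpg{(v_0)}} returns the same value in $H'$ as it does in $H$.

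Finally, since $p$ is correct in $H$, Lemma~\ref{v0alwaystrue} gives that this {\validpg{(v_0)}} returns $\true$ in $H$, and therefore it returns $\true$ in $H'$ as well; as every {\validpg{(v_0)}} in $H'$ is handled in this way, the corollary follows. There is essentially no obstacle here — the only thing to be careful about is to make the argument cover the $\valid(-)$ executions nested inside $\Test$ operations, and not merely the $\valid$ operations, but this is immediate from the same two observations.
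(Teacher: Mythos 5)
Your proposal is correct and follows the paper's own route: the paper derives this corollary from Lemma~\ref{v0alwaystrue} precisely by invoking $H'=H|{\hct}$ together with Observations~\ref{correctop-tar} and~\ref{sameop-tar}. Your explicit handling of the $\valid(-)$ executions nested inside $\Test$ operations is a worthwhile clarification of a step the paper leaves implicit, but it is the same argument.
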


\begin{corollary}\label{register-tar-00}
        Let $\Test$ be an operation that returns a value $v$ in $H'$.
	  Then:
        \begin{itemize}
            \item either $\langle -, v \rangle$ is the last tuple that the writer inserts into $R_1$ at line~\ref{r1-tar} of a $\Set(v)$ operation
            before the reader reads $R_1$ at line~\ref{readerr1-tar} of the $\Test$ operation in $H'$,
            \item or $v = v_0$ and the writer does not insert any tuple into $R_1$ at line~\ref{r1-tar} of a $\Set(-)$ operation before the reader reads $R_1$ at line~\ref{readerr1-tar} of the $\Test$ operation in $H'$.
        \end{itemize}
\end{corollary}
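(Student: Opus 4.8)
The plan is to derive this corollary directly from Lemma~\ref{register-tar-0}, which establishes exactly the same dichotomy but for the history $H$ rather than $H'$; the only work is to transfer the statement across the equality $H' = H|{\hct}$ under the standing case hypothesis that the writer $p_1$ is correct in $H$.

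First I would observe that, since the given $\Test$ is an operation in $H' = H|{\hct}$, Observation~\ref{correctop-tar} implies that this $\Test$ is also an operation in $H$ and that it is performed by some reader $p_k \in {\hct}$, i.e., by a correct reader. Hence Lemma~\ref{register-tar-0} applies to it: in $H$, either $\langle -, v\rangle$ is the last tuple that the writer inserts into $R_1$ at line~\ref{r1-tar} of a $\Set(v)$ before $p_k$ reads $R_1$ at line~\ref{readerr1-tar} of this $\Test$, or $v = v_0$ and the writer inserts no tuple into $R_1$ at line~\ref{r1-tar} of any $\Set(-)$ before that read.

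Next I would invoke the case hypothesis that $p_1$ is correct in $H$. By Observations~\ref{correctop-tar} and~\ref{sameop-tar}, every operation of $p_1$ --- in particular every $\Set(-)$ --- occurs identically in $H$ and in $H' = H|{\hct}$, with the same $[\textit{invocation},\textit{response}]$ interval, so in particular its line~\ref{r1-tar} step occurs at the same time; likewise the $\Test$ operation of the correct reader $p_k$, together with its line~\ref{readerr1-tar} step, occurs at the same time in $H$ and in $H'$. Consequently the temporal assertion ``the writer inserts $\langle -, v\rangle$ into $R_1$ at line~\ref{r1-tar} of a $\Set(v)$ before $p_k$ reads $R_1$ at line~\ref{readerr1-tar} of the $\Test$'' holds in $H'$ precisely when it holds in $H$, and similarly for the $v = v_0$ clause. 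Substituting $H'$ for $H$ in the conclusion of Lemma~\ref{register-tar-0} yields exactly the statement of the corollary.

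There is essentially no obstacle here: this is a routine ``restriction to correct processes preserves the relevant operations and their timings'' transfer, identical in spirit to the way Corollary~\ref{v0alwaystrue1} is obtained. The only point requiring a moment's care is that the preservation of the writer's $\Set(-)$ steps genuinely uses $p_1 \in {\hct}$; without that hypothesis $H|{\hct}$ would contain no writer operations at all, which is precisely why the ``writer not correct'' case of the Byzantine-linearizability proof requires a separate construction rather than this transfer.
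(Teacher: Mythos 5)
Your proposal is correct and is essentially the paper's own argument: the paper obtains Corollary~\ref{register-tar-00} by exactly this transfer, stating that since the writer is correct in $H$ and $H'=H|{\hct}$, Lemma~\ref{register-tar-0} carries over to $H'$ by Observations~\ref{correctop-tar} and~\ref{sameop-tar}. Your added care about the writer's steps and timings being preserved (which is where correctness of $p_1$ is used) is exactly the right justification for that one-line transfer.
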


\begin{corollary}\label{settest-tar}
In $H'$,
    \begin{enumerate}
        \item
        If a {\validpg{(v)}} returns $\false$, and
        the writer $p_1$ inserts $\langle -,v \rangle$ into $R_1$ at line~\ref{r1-tar} of a $\Set(v)$ operation at some time $t$,
        then the reader invokes this $\valid(v)$ 
        before time $t$ and $v\ne v_0$.
        
        \item If a {\validpg{(v)}} returns $\true$ at time $t$,
	then the writer $p_1$ inserts $\langle -,v \rangle$ into $R_1$ at line~\ref{r1-tar} of a $\Set(v)$ operation before time $t$ or $v = v_0$.
   \end{enumerate}
\end{corollary}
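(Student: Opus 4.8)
The plan is to obtain Corollary~\ref{settest-tar} almost for free from Lemma~\ref{settest-tar-H}, which is the identical statement but stated about the full history $H$ rather than about $H' = H|\hct$. The transfer from $H$ to $H'$ rests only on the standing assumption of this case (the writer $p_1$ is correct in $H$), together with the ``indistinguishability'' facts already recorded: Observation~\ref{correctop-tar}, that an operation lies in $H|\hct$ exactly when it lies in $H$ and is owned by a correct process, and Observation~\ref{sameop-tar}, that every correct process takes the same steps at the same times, with the same invocation/response intervals, in $H$ and in $H|\hct$.

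The first step is to set up two bridging observations. A $\valid(v)$ execution --- be it a stand-alone $\valid$ operation or one invoked inside a $\Test$ operation at line~\ref{verifyp-tar} --- is performed by a reader, and every step in $H'$ is a step of a correct process; hence any $\valid(v)$ execution occurring in $H'$ is by a correct reader and, by Observation~\ref{sameop-tar}, is literally the same execution, at the same times, in $H$. Symmetrically, because $p_1$ is correct, the event ``$p_1$ inserts $\langle -,v\rangle$ into $R_1$ at line~\ref{r1-tar} of a $\Set(v)$ operation at time $t$'' happens in $H'$ if and only if it happens in $H$, and at the same $t$.

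The second step is to discharge the two parts by direct appeal to Lemma~\ref{settest-tar-H}. For part~(1): given in $H'$ a $\valid(v)$ execution by a (necessarily correct) reader that returns $\false$ and an insertion of $\langle -,v\rangle$ into $R_1$ at line~\ref{r1-tar} of a $\Set(v)$ at time $t$, I push both events back to $H$ via the bridging observations, apply Lemma~\ref{settest-tar-H}(1) to get that in $H$ the reader invokes this $\valid(v)$ before $t$ and $v \neq v_0$, and then pull the invocation time back to $H'$ by Observation~\ref{sameop-tar} (the reader is correct). Part~(2) is the mirror image using Lemma~\ref{settest-tar-H}(2): its conclusion ``$p_1$ inserts $\langle -,v\rangle$ into $R_1$ at line~\ref{r1-tar} of a $\Set(v)$ before $t$, or $v=v_0$'' transfers back to $H'$ because $p_1$'s steps coincide in the two histories.

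There is no real obstacle: the corollary is a routine ``restrict to the correct processes'' restatement of a lemma already proved for $H$. The only point that deserves a careful sentence is that a $\valid(v)$ execution nested inside a $\Test$ of a correct reader is part of that reader's step sequence, hence covered by Observation~\ref{sameop-tar} on exactly the same footing as a top-level $\valid$ operation; this is what lets us invoke Lemma~\ref{settest-tar-H}, phrased for ``a {\validpg{(v)}} by a {\ct} reader'', uniformly for both usages.
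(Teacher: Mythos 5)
Your proposal is correct and matches the paper exactly: the paper derives Corollary~\ref{settest-tar} from Lemma~\ref{settest-tar-H} in a single sentence, citing the correctness of the writer together with Observations~\ref{correctop-tar} and~\ref{sameop-tar} to transfer the statement from $H$ to $H'=H|{\hct}$. Your additional remark that nested $\valid(v)$ executions inside a $\Test$ by a correct reader are covered on the same footing is a sound (and slightly more careful) elaboration of the same argument.
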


We now prove that the history $H'$ is linearizable with respect to an \tar.
First, we define the \emph{linearization point} of each operation in $H'$ as follows.
\begin{definition}~\label{testsetlinearization-tar}
Let $o$ be an operation in  $H'$.
    \begin{itemize}
    \item If $o$ is a $\Set(v)$ operation for some value $v$, then
                the linearization point of $o$ is the time when the writer $p_1$ inserts $\langle -,v \rangle$ into $R_1$ at line~\ref{r1-tar} in $o$.

           \item  If $o$ is a $\Test$ operation,
        then the linearization point of $o$ is the time when the reader reads $R_1$ at line~\ref{readerr1-tar} in $o$. 

	\item  If $o$ is a $\valid(-)$ operation,
        then: \begin{itemize}
          \item If $o$ returns $\true$,
	   the linearization point of $o$ is the time of the response step of $o$.
            \item If $o$ returns $\false$,
	   the linearization point of $o$ is the time of the invocation step of~$o$.
        \end{itemize}

\end{itemize}
\end{definition}

Note that in Definition~\ref{testsetlinearization-tar} the linearization point of every operation $o$ is between the invocation and response time of $o$. 
Thus, the following holds.
\begin{observation}\label{precedes-tar}
    If an operation $o$ precedes an operation $o'$ in $H'$,
        then the linearization point of $o$ is before the linearization point of $o'$.
\end{observation}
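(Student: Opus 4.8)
The plan is to derive the observation directly from the remark made just before it: under Definition~\ref{testsetlinearization-tar}, the linearization point of every operation $o$ in $H'$ lies within the $[\text{invocation},\text{response}]$ interval of $o$. First I would make this auxiliary fact precise by a short case analysis over the four kinds of operations that occur in $H'$. If $o$ is a $\Set(v)$ operation, its linearization point is the step at line~\ref{r1-tar}, which the writer executes inside the $\Set(v)$ procedure, hence between the invocation and the response of $o$. If $o$ is a $\Test$ operation, its linearization point is the read of $R_1$ at line~\ref{readerr1-tar}, again a step taken inside the $\Test()$ procedure. If $o$ is a $\valid(-)$ operation, its linearization point is the response step of $o$ when $o$ returns $\true$, and the invocation step of $o$ when $o$ returns $\false$; both are trivially within $[\text{invocation}(o),\text{response}(o)]$. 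Thus in every case $\text{invocation}(o)\le \mathrm{lp}(o)\le \text{response}(o)$, where $\mathrm{lp}(o)$ denotes the linearization point of $o$.

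With this in hand the observation is immediate. Suppose $o$ precedes $o'$ in $H'$. By the definition of the precedence relation, the response of $o$ occurs before the invocation of $o'$. Combining this with the bounds just established yields $\mathrm{lp}(o)\le \text{response}(o) < \text{invocation}(o') \le \mathrm{lp}(o')$, so the linearization point of $o$ is before that of $o'$, which is exactly the claim.

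The only point that needs a bit of care is the containment claim for $\Set$ and $\Test$ operations, namely that line~\ref{r1-tar} (for $\Set$) and line~\ref{readerr1-tar} (for $\Test$) are actually executed by the invoking process within the operation's interval; but this is immediate from inspecting the pseudocode of Algorithm~\ref{code-tar}, since these lines sit inside the corresponding procedures and a correct process executes them exactly once per operation. So there is no real obstacle here: the entire content of the observation is the placement of the linearization points fixed in Definition~\ref{testsetlinearization-tar}, and the argument above is the familiar fact that linearization points chosen inside operation intervals necessarily respect the real-time precedence order.
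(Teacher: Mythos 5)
Your proof is correct and follows exactly the paper's reasoning: the remark preceding the observation notes that every linearization point in Definition~\ref{testsetlinearization-tar} lies within its operation's $[\text{invocation},\text{response}]$ interval, and the observation follows immediately from the definition of precedence. Your explicit case analysis just spells out what the paper leaves implicit.
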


We now use the linearization points of all the operations in $H'$ to define the following linearization $L$ of $H'$:

\begin{definition}\label{hlp-tar}
	Let $L$ be a \emph{sequence} of operations such that:
	\begin{enumerate}
		\item\label{uno-1} An operation $o$ is in $L$ if and only if it is in $H'$.
		\item\label{due-1} An operation $o$ precedes an operation $o'$ in $L$ if and only if
     the linearization point of~$o$ is before the linearization point of $o'$ in $H'$.
	\end{enumerate}
\end{definition}

By Definition~\ref{hlp-tar}(\ref{uno-1}), $L$ is a linearization of $H'$.
By Observation~\ref{precedes-tar} and Definition~\ref{hlp-tar}(\ref{due-1}), $L$ respects
    the precedence relation between the operations of $H'$. More precisely:
\begin{observation}\label{precedes-l-tar}
    If an operation $o$ precedes an operation $o'$ in $H'$,
        then $o$ precedes $o'$ in~$L$.
\end{observation}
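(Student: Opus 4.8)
The plan is to obtain Observation~\ref{precedes-l-tar} as an immediate consequence of the two statements that precede it. First I would invoke Observation~\ref{precedes-tar}: if an operation $o$ precedes an operation $o'$ in $H'$, then the linearization point of $o$ is strictly before the linearization point of $o'$ in $H'$. The only fact that statement relies on is that, by Definition~\ref{testsetlinearization-tar}, the linearization point of every operation lies between its invocation and its response; hence if the response of $o$ occurs before the invocation of $o'$, the linearization point of $o$ must occur before that of $o'$.

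Next I would appeal to the definition of the sequence $L$. By Definition~\ref{hlp-tar}(\ref{due-1}), $o$ precedes $o'$ in $L$ if and only if the linearization point of $o$ is before the linearization point of $o'$ in $H'$. Chaining this with the conclusion of Observation~\ref{precedes-tar} yields that $o$ precedes $o'$ in $L$, which is exactly the claim. So the whole argument is a two-line deduction with no case analysis.

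I do not anticipate any genuine obstacle here. The only subtlety worth a sentence is that $L$ must be a well-defined sequence, i.e., that the linearization points of distinct operations are pairwise distinct so that ``precedes in $L$'' is unambiguous; this holds because in any history the individual steps of the processes occur at distinct times, so no two operations receive the same linearization point, and Definition~\ref{hlp-tar} is therefore consistent. This is the same observation that makes the analogous claim, Observation~\ref{precedes-l-mar}, go through for the {\mar} implementation, and the reasoning transfers verbatim.
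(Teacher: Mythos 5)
Your proposal is correct and matches the paper's argument exactly: the paper also derives this observation directly from Observation~\ref{precedes-tar} together with Definition~\ref{hlp-tar}(\ref{due-1}). The extra remark about distinctness of linearization points is a reasonable (if unstated in the paper) tidiness point, but nothing more is needed.
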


In the following theorems, we prove that the linearization $L$ of $H'$ conforms to the sequential specification of a SWMR \tar.

\begin{theorem}\label{rw-tar}
    In $L$,
        if a $\Test$ operation returns a value $v$, then:
        \begin{itemize}
            \item either some $\Set(v)$ operation precedes it and this $\Set(v)$ operation is the last $\Set(-)$ operation that precedes it,
            \item or $v = v_0$ (the initial value of the register) and no $\Set(-)$ operation precedes it.
        \end{itemize}
       
\end{theorem}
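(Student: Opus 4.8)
The plan is to mirror the structure already established for the verifiable register (Theorem~\ref{rw-mar}), adapting it to the authenticated setting where there is no separate $R^*$ register and the $\Test$ operation uses the $\valid(-)$ procedure internally. First I would observe that by Definition~\ref{hlp-tar}(\ref{uno-1}), the $\Test$ operation in question is also in $H'$, so I can invoke Corollary~\ref{register-tar-00}, which gives two cases about the tuple structure of $R_1$ at the moment the reader executes line~\ref{readerr1-tar}. The whole proof is then a translation of "what the reader saw in $R_1$ at its read point" into "which $\Set(-)$ operations the linearization $L$ places before this $\Test$", using the linearization points from Definition~\ref{testsetlinearization-tar} (namely: the linearization point of a $\Set(v)$ is when $p_1$ executes line~\ref{r1-tar}, and the linearization point of a $\Test$ is when the reader executes line~\ref{readerr1-tar}).

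In Case~1 of Corollary~\ref{register-tar-00}, $\langle -, v\rangle$ is the last tuple the writer inserted into $R_1$ at line~\ref{r1-tar} of a $\Set(v)$ before the reader's read at line~\ref{readerr1-tar}. By Definition~\ref{testsetlinearization-tar}, the linearization point $t$ of this $\Set(v)$ is before the linearization point $t'$ of the $\Test$. I then need to argue that no other $\Set(-)$ has a linearization point strictly between $t$ and $t'$: any such $\Set(v'')$ would, by Definition~\ref{testsetlinearization-tar}, have its line~\ref{r1-tar} step (inserting $\langle \ell'', v''\rangle$ with a larger timestamp $\ell''$, since the writer's local counter $\ell$ is strictly increasing and the writer is correct in this case) occur before the reader's read — contradicting that $\langle -, v\rangle$ was the tuple of maximal timestamp that the reader saw, or contradicting maximality/"last" in Corollary~\ref{register-tar-00}. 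Hence by Definition~\ref{hlp-tar}(\ref{due-1}) this $\Set(v)$ precedes the $\Test$ in $L$ and is the last $\Set(-)$ that does so. In Case~2, $v = v_0$ and the writer never inserts any tuple into $R_1$ at line~\ref{r1-tar} of a $\Set(-)$ before the reader's read; by Definition~\ref{testsetlinearization-tar} this means no $\Set(-)$ has a linearization point before the linearization point of the $\Test$, so by Definition~\ref{hlp-tar}(\ref{due-1}) no $\Set(-)$ precedes the $\Test$ in $L$.

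The main obstacle I anticipate is the "last $\Set(-)$" part in Case~1: I must carefully use the fact that the writer is correct (Case~1 of the Byzantine-linearizability argument) so that its local counter $\ell$ is incremented by exactly one per $\Set$ (line~\ref{r-tar}) and its $\Set$ operations are sequential; this is what lets me conclude that a higher-timestamped tuple in $R_1$ corresponds to a later line~\ref{r1-tar} step, and hence that any $\Set(-)$ linearized between $t$ and $t'$ would either have been visible to the reader with a strictly larger timestamp (contradicting that the reader picked the maximal tuple $\langle \ell, v\rangle$ at line~\ref{maxl-tar}) or would contradict "$\langle -,v\rangle$ is the last tuple inserted before the read" in Corollary~\ref{register-tar-00}. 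Everything else is a routine unwinding of the definitions, essentially identical to the proof of Theorem~\ref{rw-mar} with $R_1$ playing the role that $R^*$ plays there and the internal $\valid(v)$ call (whose $\true$ return is guaranteed by Lemma~\ref{register-tar-1}) ensuring that the $\Test$ indeed returns the value $v$ it selected.
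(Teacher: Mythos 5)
Your proposal is correct and follows essentially the same route as the paper: reduce to Corollary~\ref{register-tar-00}, translate its two cases into linearization points via Definition~\ref{testsetlinearization-tar}, and conclude via Definition~\ref{hlp-tar}(2). The extra work you do for the ``last $\Set(-)$'' claim (re-deriving the timestamp-monotonicity argument) is already packaged inside Corollary~\ref{register-tar-00}, whose ``last tuple inserted before the read'' phrasing directly yields that no other $\Set(-)$ has a linearization point between $t$ and $t'$.
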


\begin{proof}
    Suppose a $\Test$ operation returns a value $v$ in $L$.
    By the Definition~\ref{hlp-tar}(1) of $L$,
        this $\Test$ operation is also in~$H'$.
    By Corollary~\ref{register-tar-00}, there are two cases:
    \begin{itemize}
        \item Case 1: $\langle -, v \rangle$ is the last tuple that the writer inserts into $R_1$ at line~\ref{r1-tar} of a $\Set(v)$ operation
            before the reader reads 
            $R_1$ at line~\ref{readerr1-tar} of the $\Test$ operation in~$H'$.
            So by Definition~\ref{testsetlinearization-tar},
                the linearization point $t$ of this $\Set(v)$ operation is before the linearization point $t'$ of the $\Test$ operation, and there is no other $\Set(-)$ operation with a linearization point between $t$ and $t'$.     
             Thus, by the Definition~\ref{hlp-tar}(2) of $L$,
                 this $\Set(v)$ operation precedes the $\Test$ operation in $L$, 
                 and it is the last $\Set(-)$ operation that precedes this $\Test$ operation in $L$.
        \item Case 2: $v = v_0$ and the writer does not insert any tuple into $R_1$ at line~\ref{r1-tar} of a $\Set(-)$ operation before the reader reads $R_1$ at line~\ref{readerr1-tar} of the $\Test$ operation in $H'$. 
        By Definition~\ref{testsetlinearization-tar},
            there is no $\Set(-)$ operation with a linearization point before the linearization point of this $\Test$ operation.
        So by the Definition~\ref{hlp-tar}(2) of $L$,
            no $\Set(-)$ operation precedes this $\Test$ operation in $L$.        
    \end{itemize}
    Therefore, in $L$,
        either a $\Set(v)$ operation precedes the $\Test$ operation and it is the last $\Set(-)$ operation that precedes this $\Test$ operation,
        or $v = v_0$ and no $\Set(-)$ operation precedes the $\Test$ operation.
\end{proof}

\begin{lemma}\label{lsv1-tar}
        In $L$,
		if a $\valid(v)$ operation returns~$\false$,
            then there is no $\Set(v)$ operation that precedes this $\valid(v)$ operation and $v\ne v_0$.
\end{lemma}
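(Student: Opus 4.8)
We are in Case~1 of the {\tar} correctness proof, so the writer $p_1$ is {\ct} in $H$ and $H' = H|{\hct}$; in particular every operation of $H'$ (hence of $L$) is performed by a {\ct} process. The statement to be proved is a conjunction, and the plan is to establish its two parts --- namely $v \ne v_0$, and ``no $\Set(v)$ operation precedes this $\valid(v)$ in $L$'' --- separately, each by a short contradiction argument.

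For the first part, I would suppose $v = v_0$. Since the $\valid(v_0)$ operation lies in $L$, it also lies in $H'$ by Definition~\ref{hlp-tar}(1). By Corollary~\ref{v0alwaystrue1}, every $\valid(v_0)$ execution in $H'$ returns $\true$; in particular this $\valid(v_0)$ operation returns $\true$ in $H'$, and hence returns $\true$ in $L$ --- contradicting the hypothesis that it returns $\false$. So $v \ne v_0$.

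For the second part, I would suppose that some $\Set(v)$ operation precedes the $\valid(v)$ operation in $L$. By Definition~\ref{hlp-tar}, both operations are in $H'$ and the linearization point $t$ of the $\Set(v)$ is before the linearization point $t'$ of the $\valid(v)$ in $H'$, so $t < t'$. By Definition~\ref{testsetlinearization-tar}, $t$ is the time at which the {\ct} writer $p_1$ inserts $\langle -, v \rangle$ into $R_1$ at line~\ref{r1-tar} of this $\Set(v)$; hence by Observation~\ref{correctonlywrites1-tar}, $v \in R_1$ at all times $t'' \ge t$. Also by Definition~\ref{testsetlinearization-tar}, since this $\valid(v)$ returns $\false$, its linearization point $t'$ is the time of its invocation step, so the {\ct} reader invokes --- and therefore executes --- this $\valid(v)$ after time $t$. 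By Corollary~\ref{cor-testreturn1-tar}, this $\valid(v)$ returns $\true$ in $H'$, and hence in $L$ --- again contradicting the hypothesis. This establishes the second part, and together with $v \ne v_0$ the lemma follows.

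The argument is essentially the {\tar} counterpart of the proof of Lemma~\ref{lsv1-mar}, but shorter: because here Validity is ``atomic'' with the write, I can invoke Corollary~\ref{cor-testreturn1-tar} directly instead of threading through the $(t^v_0, t^v_1)$ interval bookkeeping used in the {\mar} case, and there is no separate $\sign(-)$ operation to relate to $\Set(-)$. I do not expect a real obstacle; the only points needing care are that (i)~the $\valid(v)$ of the statement is performed by a {\ct} reader, which holds because $H' = H|{\hct}$ in Case~1, and (ii)~one must use the $\false$-clause of Definition~\ref{testsetlinearization-tar} (linearization point at the invocation) so that the insertion of $v$ into $R_1$ genuinely precedes the whole execution of $\valid(v)$, which is exactly what makes Corollary~\ref{cor-testreturn1-tar} applicable.
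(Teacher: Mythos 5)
Your proof is correct and follows essentially the same route as the paper's: the paper handles the negation as a disjunction (either $v=v_0$, dispatched by Corollary~\ref{v0alwaystrue1}, or some $\Set(v)$ precedes the $\valid(v)$, dispatched via the linearization points of Definition~\ref{testsetlinearization-tar} and Definition~\ref{hlp-tar}), which matches your two parts exactly. The only cosmetic difference is that where the paper cites Corollary~\ref{settest-tar}(1) for the final contradiction, you inline the equivalent argument via Observation~\ref{correctonlywrites1-tar} and Corollary~\ref{cor-testreturn1-tar} --- the same underlying fact that a value written into $R_1$ by the correct writer forces every later $\valid(v)$ to return $\true$.
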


\begin{proof}
	Assume for contradiction that,
		in $L$,
		(1) a $\valid(v)$ operation returns $\false$ and (2) there is a $\Set(v)$ operation that precedes this $\valid(v)$ operation or $v=v_0$. There are two cases:
        \begin{itemize}
            \item Case 1: $v=v_0$.
             By Definition~\ref{hlp-tar}(1), 
                the $\valid(v)$ operation is also in $H'$.
            Since $v=v_0$, 
                by Corollary~\ref{v0alwaystrue1},
                the $\valid(v)$ operation returns $\true$ in $H'$.
                Therefore it also returns $\true$ in $L$ --- a contradiction to (1).

            \item Case 2: $v\ne v_0$.
            Then by (2), 
                there is a $\Set(v)$ operation that {\precedes} this $\valid(v)$ operation in $L$. 
            By Definition~\ref{hlp-tar}, 
            the $\Set(v)$ operation and the $\valid(v)$ operations are also in $H'$, 
            and the linearization point $t$ of the $\Set(v)$ operation is before the linearization point $t'$ of the $\valid(v)$ operation in $H'$.
        So $t < t'$ ($\star$).
        Thus, by Definition~\ref{testsetlinearization-tar},
            the writer $p_1$ inserts $\langle -,v \rangle$ into $R_1$ at line~\ref{r1-tar} in the $\Set(v)$ operation
            before the reader invokes the $\valid(v)$ operation in $H'$
            --- a contradiction to Corollary~\ref{settest-tar}(1).
        \end{itemize}
   \end{proof}

\begin{lemma}\label{lsv2-tar}
	In $L$, 
	if a $\valid(v)$ operation returns $\true$,
	then there is a $\Set(v)$ operation that precedes this $\valid(v)$ operation or $v=v_0$.
\end{lemma}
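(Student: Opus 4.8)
The plan is to mirror the structure of Lemma~\ref{lsv1-tar}, handling the two ways a $\valid(v)$ operation can return $\true$: either because $v = v_0$ (which is automatic by Observation~\ref{read-val}'s analogue, but here we want the converse) or because an honest $\Set(v)$ operation precedes it in $L$. So I would start by assuming that in $L$ a $\valid(v)$ operation returns $\true$, and split into the case $v = v_0$ (where the conclusion ``$v = v_0$'' holds trivially) and the case $v \neq v_0$.

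In the case $v \neq v_0$, first I would invoke Definition~\ref{hlp-tar}(1) to note that this $\valid(v)$ operation is also in $H'$. Then, since it returns $\true$, by Definition~\ref{testsetlinearization-tar} its linearization point $t$ is the time of its response step in $H'$. I would then apply Corollary~\ref{settest-tar}(2) to this $\valid(v)$ that returns $\true$ at time $t$: this gives that the writer $p_1$ inserts $\langle -, v \rangle$ into $R_1$ at line~\ref{r1-tar} of a $\Set(v)$ operation before time $t$ (the alternative $v = v_0$ of that corollary is excluded by our case assumption). Call this $\Set(v)$ operation $o$. By Definition~\ref{testsetlinearization-tar}, the linearization point of $o$ is exactly the time when $p_1$ inserts $\langle -, v\rangle$ into $R_1$ at line~\ref{r1-tar} in $o$, which is before $t$. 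Hence the linearization point of $o$ is before the linearization point of the $\valid(v)$ operation in $H'$, so by Definition~\ref{hlp-tar}(2), $o$ precedes this $\valid(v)$ operation in $L$. Since $o$ is a $\Set(v)$ operation, we are done: there is a $\Set(v)$ operation that precedes this $\valid(v)$ operation in $L$.

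I do not anticipate a serious obstacle here: the lemma is the ``$\Leftarrow$-style'' companion of Lemma~\ref{lsv1-tar} and all the substantive work (the termination of $\valid$, the witness-counting argument showing a $\true$ return forces the writer to have $v$ in $R_1$) is already packaged in Corollary~\ref{settest-tar}(2) and Lemma~\ref{correctp1v-tar}. The one point requiring minor care is making sure the $v = v_0$ case is cleanly separated so that the appeal to Corollary~\ref{settest-tar}(2) legitimately yields the ``$\Set(v)$ before $t$'' disjunct rather than the trivial one; this is handled by the case split at the very start. The proof can then be closed, and together with Lemma~\ref{lsv1-tar} it yields the ``iff'' characterization of $\valid$ returning $\true$ in $L$, which is what the sequential specification of a {\tar} (Definition~\ref{def-tar}) requires.

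\begin{proof}
	Suppose in $L$, a $\valid(v)$ operation returns $\true$. There are two cases.
	\begin{itemize}
		\item Case 1: $v = v_0$. Then the conclusion ``$v = v_0$'' holds, and we are done.
		\item Case 2: $v \neq v_0$.
		By the Definition~\ref{hlp-tar}(1) of $L$,
			this $\valid(v)$ operation is also in $H'$.
		Since this $\valid(v)$ operation returns $\true$,
			by Definition~\ref{testsetlinearization-tar},
			its linearization point $t$ is the time of its response step in $H'$.
		Since this $\valid(v)$ operation returns $\true$ at time $t$ and $v \neq v_0$,
			by Corollary~\ref{settest-tar}(2),
			the writer $p_1$ inserts $\langle -,v \rangle$ into $R_1$ at line~\ref{r1-tar} of a $\Set(v)$ operation before time $t$;
			call this $\Set(v)$ operation $o$.
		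By Definition~\ref{testsetlinearization-tar},
			the linearization point $t'$ of $o$ is the time when $p_1$ inserts $\langle -,v \rangle$ into $R_1$ at line~\ref{r1-tar} in $o$, so $t' < t$.
		Thus, by the Definition~\ref{hlp-tar}(2) of $L$, $o$ precedes this $\valid(v)$ operation in $L$.
		Since $o$ is a $\Set(v)$ operation, there is a $\Set(v)$ operation that precedes this $\valid(v)$ operation in $L$.
	\end{itemize}
\end{proof}

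By Lemma~\ref{lsv1-tar} and Lemma~\ref{lsv2-tar}, we have the following theorem.
\begin{theorem}\label{sv-tar}
    In $L$,
	a $\valid(v)$ operation returns $\true$ if and only if
	there is a $\Set(v)$ operation that precedes this $\valid(v)$ operation or $v = v_0$.
\end{theorem}
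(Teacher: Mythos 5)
Your proof is correct and follows essentially the same route as the paper's: both rest on Corollary~\ref{settest-tar}(2) to locate a line~\ref{r1-tar} insertion before the response time of the $\valid(v)$, then identify the linearization points via Definition~\ref{testsetlinearization-tar} and conclude precedence in $L$ via Definition~\ref{hlp-tar}(2). The only cosmetic difference is that you split on $v = v_0$ versus $v \neq v_0$ up front, whereas the paper splits on the two disjuncts delivered by the corollary; these are logically interchangeable.
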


\begin{proof}
	Suppose in $L$, a $\valid(v)$ operation returns $\true$.
        By the Definition~\ref{hlp-tar}(1) of $L$,
            this $\valid(v)$ operation is also in $H'$.
        Let $t$ be the time when the $\valid(v)$ operation returns $\true$ in $H'$.
        By Corollary~\ref{settest-tar}(2),
            there are two cases:
            \begin{itemize}
                \item Case 1: $v= v_0$. Then this lemma holds.

                \item Case 2:  the writer $p_1$ inserts $\langle -,v \rangle$ into $R_1$ at line~\ref{r1-tar} of a $\Set(v)$ operation before time $t$ in $H'$.
                Let $t'$ be the time when the writer $p_1$ inserts $\langle -,v \rangle$ into $R_1$ at line~\ref{r1-tar} of this $\Set(v)$ operation.
                Then $t' < t$.
                By Definition~\ref{testsetlinearization-tar},
                $t'$ is the linearization point of the $\Set(v)$ operation and
                $t$ is the linearization point of the $\valid(v)$ operation.
        Thus, since $t' < t$,
            by Definition~\ref{hlp-tar}(2),
            this $\Set(v)$ operation precedes this $\valid(v)$ operation in $L$.
            \end{itemize}
\end{proof}

By Lemmas~\ref{lsv1-tar} and~\ref{lsv2-tar},
    we have the following theorem.

\begin{theorem}\label{sv-tar}
    In $L$, 
	a $\valid(v)$ operation returns $\true$ if and only if
	there is a $\Set(v)$ operation that precedes this $\valid(v)$ operation or $v=v_0$.
\end{theorem}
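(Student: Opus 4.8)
The plan is to derive Theorem~\ref{sv-tar} from the two lemmas that immediately precede it, treating the biconditional one direction at a time. For the ``only if'' direction there is essentially nothing to do: if a $\valid(v)$ operation returns $\true$ in $L$, then Lemma~\ref{lsv2-tar} already gives that some $\Set(v)$ operation precedes it in $L$ or that $v = v_0$.

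For the ``if'' direction I would argue by contradiction and lean on Lemma~\ref{lsv1-tar}. Assume that in $L$ there is a $\Set(v)$ operation preceding a given $\valid(v)$ operation, or $v = v_0$, but that this $\valid(v)$ does not return $\true$. By Definition~\ref{hlp-tar}(1) this $\valid(v)$ operation is also an operation of $H'$, and since we are in the case where the writer $p_1$ is correct, $H' = H|\hct$, so this operation is performed by a correct reader. Hence, by Lemma~\ref{validpterminate} (equivalently Theorem~\ref{termination-tar}), the operation terminates, so ``does not return $\true$'' forces it to return $\false$. Then Lemma~\ref{lsv1-tar} yields that no $\Set(v)$ operation precedes this $\valid(v)$ in $L$ and that $v \ne v_0$, directly contradicting the hypothesis. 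Therefore the $\valid(v)$ returns $\true$, and the theorem follows.

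The step I expect to need the most care --- and the only point that is not pure bookkeeping --- is the appeal to termination in the ``if'' direction: Lemma~\ref{lsv1-tar} speaks about operations that \emph{return} $\false$, not about operations that merely \emph{fail to return} $\true$, so one must first know the operation halts before converting ``not $\true$'' into ``$\false$''. Everything else is immediate, because the operations of $L$ coincide with those of $H'$ by Definition~\ref{hlp-tar}(1) and the two lemmas are already phrased in terms of $L$, so no passage between different histories is required.
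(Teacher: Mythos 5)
Your proof is correct and follows the paper's own route: the paper obtains Theorem~\ref{sv-tar} directly from Lemmas~\ref{lsv1-tar} and~\ref{lsv2-tar}, exactly as you do, with your appeal to termination (so that ``does not return $\true$'' becomes ``returns $\false$'') being the small implicit step the paper leaves to the reader. No further changes are needed.
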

\color{black}

\begin{theorem}\label{linearthe-tar-case1}
    If the writer is correct in $H$, the history $H$ is Byzantine linearizable with respect to a SWMR {\tar}.
\end{theorem}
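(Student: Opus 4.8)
The plan is to reuse, almost verbatim, the argument structure of the {\mar} case (Theorem~\ref{linearthe-mar-case1}), since all the difficult work has already been carried out in the preceding lemmas for Algorithm~\ref{code-tar}. Assuming the writer $p_1$ is correct in $H$, it suffices — by Definition~\ref{def-tar} and the definition of Byzantine linearizability — to produce a history $H'$ with $H'|{\hct} = H|{\hct}$ that is linearizable with respect to a SWMR {\tar}. First I would take $H' = H|{\hct}$; then $H'|{\hct} = H|{\hct}$ holds trivially, discharging the first requirement. Note this is the same simple choice that is available precisely because the writer is correct: no fictitious $\Set(-)$ operations need to be inserted, in contrast to the faulty-writer case.

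For the second requirement I would invoke the linearization $L$ of $H'$ given in Definition~\ref{hlp-tar}, which orders the operations of $H'$ by the linearization points assigned in Definition~\ref{testsetlinearization-tar} (a $\Set(v)$ at the time it executes line~\ref{r1-tar}, a $\Test$ at the time it reads $R_1$ at line~\ref{readerr1-tar}, and a $\valid(-)$ at its response if it returns $\true$ or at its invocation if it returns $\false$). By Definition~\ref{hlp-tar}(\ref{uno-1}), $L$ is a linearization of $H'$. It then remains to verify the two clauses of Definition~\ref{LinearizableImplementationCrash}: (a)~$L$ respects the precedence relation of $H'$, which is exactly Observation~\ref{precedes-l-tar} (itself a consequence of the fact that every linearization point lies inside its operation's $[invocation,response]$ interval); and (b)~$L$ conforms to the sequential specification of a SWMR {\tar}, which I would split into the read/write clause — exactly Theorem~\ref{rw-tar} — and the verify clause ``$\valid(v)$ returns $\true$ iff a $\Set(v)$ precedes it or $v = v_0$'' — exactly Theorem~\ref{sv-tar}. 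Combining (a) and (b) shows $H'$ is linearizable with respect to a SWMR {\tar}, hence $H$ is Byzantine linearizable, as claimed.

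There is no genuine obstacle at this level: the theorem is essentially an assembly step, and everything nontrivial already sits in the chain of supporting results Lemma~\ref{correctp1v-tar}, Lemma~\ref{testreturn1-tar}, Lemma~\ref{register-tar-0}, Lemma~\ref{settest-tar-H}, and then Corollaries~\ref{register-tar-00} and~\ref{settest-tar}, which feed Theorems~\ref{rw-tar} and~\ref{sv-tar}. The one point worth double-checking is that the $\valid(-)$ procedure is used both to implement the $\valid$ operation and \emph{inside} the $\Test$ procedure (line~\ref{verifyp-tar}); but $L$ and Theorems~\ref{rw-tar},~\ref{sv-tar} are stated over the $\Set$, $\Test$, and $\valid$ \emph{operations} of $L$, and Lemma~\ref{register-tar-1} already guarantees that, with a correct writer, every $\Test$ returns at line~\ref{checksign-tar} rather than falling through to line~\ref{returnv0-tar} — so the $\Test$ return values that $L$ must justify are precisely the ones handled by Corollary~\ref{register-tar-00}. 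Hence the proof goes through.
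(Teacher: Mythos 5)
Your proof is correct and follows exactly the paper's own argument: take $H' = H|{\hct}$, use the linearization $L$ of Definition~\ref{hlp-tar}, and cite Observation~\ref{precedes-l-tar} together with Theorems~\ref{rw-tar} and~\ref{sv-tar} to verify the two clauses of linearizability. Your added remark about Lemma~\ref{register-tar-1} handling the dual use of the $\valid(-)$ procedure inside $\Test$ is a sensible sanity check but does not change the argument.
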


\begin{proof}
Assume that the writer $p_1$ is correct in $H$.
    To prove that $H$ is
        Byzantine linearizable with respect to a SWMR {\tar} (Definition~\ref{def-tar}),
    we must prove that there is a history $H'$ such that:
    
    \begin{compactenum}[(1)]
        \item $H'|{\hct} = H|{\hct}$.
        \item $H'$ is linearizable with respect to a SWMR {\tar}. 
    \end{compactenum}

    Let $H'= H|{\hct}$.
    Clearly, $H'$ satisfies~(1).
        
    Let $L$ be the linearization of $H'$ given in Definition~\ref{hlp-tar}.
    \begin{compactenum}[(a)]
        \item By Observation~\ref{precedes-l-tar}, $L$ respects the precedence relation between the operations of $H'$.
        \item By Theorems~\ref{rw-tar} and~\ref{sv-tar}, $L$ conforms to the sequential specification of a SWMR {\tar}.
        \end{compactenum}
    Thus, $H'$ also satisfies~(2).
\end{proof}

\subsubsection*{Case 2: the writer $p_1$ is not {\ct} in $H$.}\label{case-writer-not-correct}

We must show that
    the arbitrary history $H$ of the implementation given by Algorithm~\ref{code-tar} is Byzantine linearizable.
We first prove that
    if a process validates
    (the signature of) a value~$v$,
    thereafter every process will also be able to validate it.
    
\begin{lemma}\label{testatestb-tar}
	Consider any \valid$(v)$ and \validpg{(v)'}s by {\ct} readers~in~$H$.
	If $\valid(v)$ returns $\true$ and it precedes $\valid(v)'$, then $\valid(v)'$ returns $\true$.
\end{lemma}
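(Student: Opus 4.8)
\textbf{Proof plan for Lemma~\ref{testatestb-tar}.}

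The plan is to mimic almost verbatim the argument used for the \marc{} in Lemma~\ref{testatestb-mar}, now leveraging the lemmas already established for Algorithm~\ref{code-tar}. Let $\valid(v)$ and $\valid(v)'$ be $\valid(-)$ executions by {\ct} readers $p_a$ and $p_b$ respectively, with $\valid(v)$ returning $\true$ and preceding $\valid(v)'$. Let $t$ be the time $\valid(v)$ returns $\true$; since it precedes $\valid(v)'$, the reader $p_b$ invokes $\valid(v)'$ after time $t$. The key is to establish a stable pool of witnesses for $v$ that exists from time $t$ onward.

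First I would extract the witness pool. Since $\valid(v)$ returns $\true$ at time $t$, reader $p_a$ found $|\set_1| \ge n - f$ at line~\ref{return1-tar}, say at some time $t_a \le t$. Because $n \ge 3f+1$, we have $|\set_1| \ge 2f+1$ at time $t_a$, and since at most $f$ processes are faulty, at least $f+1$ of those are {\ct} processes in $\set_1$ of $p_a$ at time $t_a$. By Lemma~\ref{ri1beforeset1-tar}, each such {\ct} process $p_i$ has $v \in R_i$ at all times $t' \ge t_a$; in particular, since $t_a \le t$, at least $f+1$ {\ct} processes $p_i$ have $v \in R_i$ at all times $t' \ge t$.

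Then I would close the argument by invoking Lemma~\ref{testreturn1-tar}: there is a time (namely $t$) such that at least $f+1$ processes have $v$ in their $R_i$ at all subsequent times, and $p_b$ executes $\valid(v)'$ after time $t$, so $\valid(v)'$ returns $\true$. That completes the proof. I do not anticipate a genuine obstacle here; the only care needed is to make sure Lemma~\ref{ri1beforeset1-tar} and Lemma~\ref{testreturn1-tar} are applicable to $\valid(-)$ executions regardless of whether they arise as standalone $\valid$ operations or as the nested call inside a $\Test$ — but both lemmas are already stated for arbitrary ``$\validpg{(v)}$ by a {\ct} reader,'' which by the Notation at the start of Section~\ref{a-tar} covers both cases, so nothing extra is required. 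One minor wrinkle worth a sentence: if $v = v_0$, the conclusion is immediate from Lemma~\ref{v0alwaystrue} without needing the witness-pool argument at all, so I might dispatch that case first for cleanliness, though it is not strictly necessary since the general argument handles it too.
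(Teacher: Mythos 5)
Your proposal is correct and follows essentially the same route as the paper's proof: extract at least $f+1$ correct witnesses from $\set_1$ via Lemma~\ref{ri1beforeset1-tar}, then conclude with Lemma~\ref{testreturn1-tar}. The observation that both supporting lemmas are stated for arbitrary $\valid(-)$ executions (covering the nested call inside $\Test$) matches how the paper handles this as well.
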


\begin{proof}
	Consider any \valid$(v)$ and \validpg{(v)'}s by {\ct} readers $p_a$ and $p_b$, respectively, in $H$.
	Suppose that $\valid(v)$ precedes $\valid(v)'$ and $\valid(v)$ returns $\true$.
	Let $t$ be the time when $\valid(v)$ returns $\true$.
	Since $\valid(v)$ precedes $\valid(v)'$,
		$p_b$ invokes $\valid(v)'$ after time~$t$~($\star$).

	Since $\valid(v)$ returns $\true$ at time $t$,
		$p_a$ finds that the condition $|\set_1| \ge n-f$ holds in line~\ref{return1-tar} of $\valid(v)$, say at time $t_a \le t$.
	Since $n \ge 3f+1$, $|\set_1| \ge 2f+1$ at time $t_a$.
	Since there are at most $f$ faulty processes,
		there are at least $f+1$ {\ct} processes in $\set_1$ of $p_a$ at time $t_a$.
	By Lemma~\ref{ri1beforeset1-tar},
		at least $f+1$ {\ct} processes $p_i$ have $v\in R_i$ at all times $t' \ge t_a$.
	Since $t_a < t$,
		at least $f+1$ {\ct} processes $p_i$ have $v\in R_i$ at all times $t' \ge t$.
	Thus, by ($\star$) and Lemma~\ref{testreturn1-tar}, $\valid(v)'$ returns~$\true$.	
\end{proof}

\begin{definition}~\label{t0t1-tar}
        For any value $v$,
	\begin{itemize}
		\item Let $t^v_0$ be the max invocation time of any \validpg{(v)} by a {\ct} reader that returns $\false$ in $H$;
		if no such {\validpg{(v)}} exists, $t^v_0 = 0$.
		\item Let $t^v_1$ be the min response time of any {\validpg{(v)}} by a {\ct} reader that returns $\true$ in $H$;
		if no such {\validpg{(v)}} exists, $t^v_1 = \infty$.
	\end{itemize}
\end{definition}

\begin{lemma}\label{iexists-tar}
      For any value $v$, $t^v_1 > t^v_0$ and so the interval $(t^v_0,t^v_1)$ is not empty.
\end{lemma}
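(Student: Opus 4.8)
The plan is to prove that for any value $v$, the interval $(t^v_0, t^v_1)$ is nonempty by showing $t^v_1 > t^v_0$, following exactly the case analysis used for the verifiable register in Lemma~\ref{iexists-mar}. The key tools are Definition~\ref{t0t1-tar} (which sets $t^v_0 = 0$ when no correct reader's $\valid(v)$ returns $\false$, and $t^v_1 = \infty$ when none returns $\true$) and Lemma~\ref{testatestb-tar} (the relay property: a $\true$-returning $\valid(v)$ cannot precede a $\false$-returning one). First I would split into four cases according to whether some correct reader's $\valid(v)$ execution returns $\false$ and whether some returns $\true$.

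In the three cases where at least one of the two kinds of executions is absent, the inequality is immediate from the definition: if no $\valid(v)$ returns $\false$ then $t^v_0 = 0 < t^v_1$ (since $t^v_1 \geq 0$, and in fact $t^v_1 > 0$ as response times are positive, or $t^v_1 = \infty$); symmetrically if no $\valid(v)$ returns $\true$ then $t^v_1 = \infty > t^v_0$ (as $t^v_0 < \infty$); and if both are absent then $t^v_0 = 0 < \infty = t^v_1$. The only substantive case is the fourth, where some correct reader's $\valid(v)$ returns $\false$ and some correct reader's $\valid(v)$ returns $\true$. Here I let $\valid(v)'$ be the $\false$-returning execution with maximum invocation time (so its invocation time is $t^v_0$) and $\valid(v)''$ be the $\true$-returning execution with minimum response time (so its response time is $t^v_1$). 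By Lemma~\ref{testatestb-tar}, $\valid(v)''$ does not precede $\valid(v)'$ — for if it did, then since $\valid(v)''$ returns $\true$, the relay property would force $\valid(v)'$ to return $\true$ as well, contradicting that it returns $\false$. Hence the response time of $\valid(v)''$ exceeds the invocation time of $\valid(v)'$, i.e., $t^v_1 > t^v_0$.

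The one subtlety I would need to be careful about is that Lemma~\ref{testatestb-tar} is stated for $\valid(-)$ executions by correct readers in $H$, and the $\valid(-)$ executions in question here include both $\valid$ operations and the internal $\valid(-)$ calls made inside $\Test$ operations (per the Notation paragraph at the start of Section~\ref{a-tar}). Since Definition~\ref{t0t1-tar} quantifies over all $\validpg{(v)}$ executions by correct readers — which by the notational convention covers both types — and Lemma~\ref{testatestb-tar} is likewise stated for all such executions, there is no gap. I do not anticipate any real obstacle here; this lemma is essentially a transcription of Lemma~\ref{iexists-mar} into the authenticated-register setting, and the only thing to watch is quoting the correct lemma numbers and keeping the "precedes" relation between $\valid(-)$ executions consistent with its (implicit) definition via invocation and response steps.

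\begin{proof}
    Let $v$ be any value.
    There are four cases:
    \begin{itemize}
         \item Case 1: no {\validpg{(v)}} by a {\ct} process returns $\false$,
            and no {\validpg{(v)}} by a {\ct} process returns $\true$.
            Then $t^v_0 = 0$ and $t^v_1 = \infty$.
            So $t^v_1 > t^v_0$.
        \item Case 2: no {\validpg{(v)}} by a {\ct} process returns $\false$ and some {\validpg{(v)}} by a {\ct} process returns $\true$.
            Then $t^v_0 = 0$ and  $t^v_1 > 0$.
            So $t^v_1 > t^v_0$.
        \item Case 3: some {\validpg{(v)}} by a {\ct} process returns $\false$
            and
            no {\validpg{(v)}} by a {\ct} process returns $\true$.
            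Then  $t^v_0 < \infty$ and $t^v_1 = \infty$.
            So $t^v_1 > t^v_0$.
        \item Case 4: some {\validpg{(v)}} by a {\ct} process returns $\false$
        and
        some {\validpg{(v)}} by a {\ct} process returns $\true$.
            Let $\valid(v)'$ be the {\validpg{(v)}} with the max invocation time of any {\validpg{(v)}} by a {\ct} process that returns $\false$.
            By Definition~\ref{t0t1-tar}, the invocation time of $\valid(v)'$ is $t^v_0$~($\star$).
            Let $\valid(v)''$ be the {\validpg{(v)}} with the min response time of any {\validpg{(v)}} by a {\ct} process that returns $\true$.
            By Definition~\ref{t0t1-tar}, the response time of $\valid(v)''$ is $t^v_1$ ($\star\star$).

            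Since $\valid(v)''$ returns $\true$ and $\valid(v)'$ returns $\false$,
                by Lemma~\ref{testatestb-tar},
                $\valid(v)''$ does not precede $\valid(v)'$.
            So the response time of $\valid(v)''$ is greater than the invocation time of $\valid(v)'$.
            Thus, by ($\star$) and ($\star\star$),
                 $t^v_1 > t^v_0$.
    \end{itemize}
    Therefore in all cases, $t^v_1 > t^v_0$ and so the interval $(t^v_0,t^v_1)$ is not empty.
\end{proof}
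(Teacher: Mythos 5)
Your proof is correct and follows essentially the same four-case analysis as the paper's own proof, with the substantive Case~4 resolved exactly as the paper does via the relay property of Lemma~\ref{testatestb-tar}. No gaps.
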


\begin{proof}
    Let $v$ be any value.
    There are four cases:
    \begin{itemize}
         \item Case 1: no {\validpg{(v)}} by a {\ct} process returns $\false$,
            and no \linebreak {\validpg{(v)}} by a {\ct} process returns $\true$.
            Then $t^v_0 = 0$ and $t^v_1 = \infty$.
            So $t^v_1 > t^v_0$.
        \item Case 2: no {\validpg{(v)}} by a {\ct} process returns $\false$ and some \linebreak {\validpg{(v)}} by a {\ct} process returns $\true$.
            Then $t^v_0 = 0$ and  $t^v_1 > 0$.
            So $t^v_1 > t^v_0$.
        \item Case 3: some {\validpg{(v)}} by a {\ct} process returns $\false$
            and
            no \linebreak {\validpg{(v)}} by a {\ct} process returns $\true$.
            Then  $t^v_0 < \infty$ and $t^v_1 = \infty$.
            So $t^v_1 > t^v_0$.
            
        \item Case 4: some {\validpg{(v)}} by a {\ct} process returns $\false$
        and
        some \linebreak {\validpg{(v)}} by a {\ct} process returns $\true$.
            Let $\valid(v)'$ be the {\validp{(v)}} with the max invocation time of any {\validp{(v)}} by a {\ct} process that returns $\false$. 
            By Definition~\ref{t0t1-tar}, the invocation time of $\valid(v)'$ is $t^v_0$~($\star$).
            Let $\valid(v)''$ be the {\validp{(v)}} with the min response time of any {\validp{(v)}} by a {\ct} process that returns $\true$.
            By Definition~\ref{t0t1-tar}, the response time of $\valid(v)''$ is $t^v_1$ ($\star\star$).
                         
            Since $\valid(v)''$ returns $\true$ and $\valid(v)'$ returns $\false$, 
                by Lemma~\ref{testatestb-tar},
                $\valid(v)''$ does not precede $\valid(v)'$.
            So the response time of $\valid(v)''$ is greater than the invocation time of $\valid(v)'$.
            Thus, by ($\star$) and ($\star\star$),
                 $t^v_1 > t^v_0$.
    \end{itemize}
    In all cases,  $t^v_1 > t^v_0$ and so the interval $(t^v_0,t^v_1)$ is not empty.
\end{proof}

By Lemma~\ref{v0alwaystrue},
    no {\validpg{(v_0)}} executed by a correct process returns \false. So
by Definition~\ref{t0t1-tar}, we have:

\begin{observation}\label{v00}
$t^{v_{0}}_0 = 0$.
\end{observation}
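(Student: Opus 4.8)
The plan is immediate. By Definition~\ref{t0t1-tar}, $t^{v_0}_0$ is the maximum invocation time of any $\valid(v_0)$ execution by a correct reader that returns $\false$ in $H$, with the stated convention that $t^{v_0}_0 = 0$ when no such execution exists. Hence it suffices to show that $H$ contains no $\valid(v_0)$ execution by a correct reader that returns $\false$, and then invoke this ``no such execution'' clause of the definition.

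That claim is precisely Lemma~\ref{v0alwaystrue}: every $\valid(v_0)$ execution by a correct process returns $\true$, so none of them returns $\false$. Note that Lemma~\ref{v0alwaystrue} is stated for all correct processes and all $\valid(v_0)$ executions, so it covers both standalone $\valid(v_0)$ operations and $\valid(v_0)$ executions performed inside a $\Test$ operation (line~\ref{verifyp-tar}); in particular it covers all the executions over which Definition~\ref{t0t1-tar} ranges. Therefore the defining clause ``$t^{v_0}_0 = 0$ if no such $\valid(v_0)$ execution exists'' applies, and $t^{v_0}_0 = 0$.

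There is no genuine obstacle in this step itself — the substance lives in Lemma~\ref{v0alwaystrue} (whose proof relies on Observation~\ref{correctonlyv0-tar}, i.e., that $v_0$ is never removed from any correct process's $R_i$, so a correct reader can never accumulate more than $f$ ``No'' witnesses for $v_0$ and hence never returns $\false$). The only thing to be careful about is selecting the correct branch of Definition~\ref{t0t1-tar}. I expect this observation to be used afterwards in the construction of the witnessing history $H'$ for the Byzantine-writer case: there one must place an added $\Set(v_0)$ (together with any padding needed to ``justify'' a $\valid(v_0)$ that returns $\true$) inside the interval $(t^{v_0}_0, t^{v_0}_1) = (0, t^{v_0}_1)$, and $t^{v_0}_0 = 0$ guarantees this interval reaches back to the very start of the history, so such padding can always be inserted.
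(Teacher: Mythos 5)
Your proof is correct and matches the paper's own argument exactly: the paper derives Observation~\ref{v00} directly from Lemma~\ref{v0alwaystrue} (no $\valid(v_0)$ execution by a correct process returns $\false$) together with the ``no such execution exists'' clause of Definition~\ref{t0t1-tar}. Your added remark that the lemma covers $\valid(v_0)$ executions both as standalone operations and inside $\Test$ is a correct and worthwhile clarification, since Definition~\ref{t0t1-tar} ranges over all such executions.
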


\begin{lemma}\label{t0Readintervalexists}
If a $\Test$ operation by a correct reader returns $v$ at time $t'$ in $H$,
    then $t' > t_0^v$.
\end{lemma}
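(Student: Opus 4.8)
\textbf{Proof plan for Lemma~\ref{t0Readintervalexists}.}

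The plan is to unwind the definition of the $\Test$ procedure in Algorithm~\ref{code-tar} and reduce the claim to the already-established relationship between a $\Test$ and the $\valid(-)$ call it makes internally. Suppose a $\Test$ operation by a correct reader returns $v$ at time $t'$. First I would split on the return line: either the $\Test$ returns $v$ at line~\ref{checksign-tar}, or it returns $v_0$ at line~\ref{returnv0-tar}.

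Case 1: the $\Test$ returns at line~\ref{checksign-tar} with value $v$. Then the reader executed $\mathit{verified} \gets \valid(v)$ at line~\ref{verifyp-tar} and this internal $\valid(v)$ execution returned $\true$. Since this $\valid(v)$ execution is performed by a correct process and its response is no later than $t'$ (in fact the whole $\valid(v)$ call completes before the reader returns at line~\ref{checksign-tar}, so $t' \ge$ the response time of the $\valid(v)$), and since it returns $\true$, by Definition~\ref{t0t1-tar} the minimal response time $t_1^v$ of a $\true$-returning $\valid(v)$ by a correct reader satisfies $t_1^v \le t'$. By Lemma~\ref{iexists-tar}, $t_1^v > t_0^v$, hence $t' \ge t_1^v > t_0^v$, which gives $t' > t_0^v$ as required. (One small subtlety to note carefully: I should make sure that the internal $\valid(v)$ execution counts as "a $\valid(v)$ operation by a correct reader" in the sense of Definition~\ref{t0t1-tar} — per the Notation paragraph in this appendix, $t_0^v$ and $t_1^v$ range over all $\valid(-)$ \emph{executions}, not just $\valid$ operations; if the definition is literally about operations, then I would instead invoke Lemma~\ref{testatestb-tar}-style reasoning or note that the internal execution being $\true$ at time $\le t'$ forces, via Lemma~\ref{ri1beforeset1-tar} and Lemma~\ref{testreturn1-tar}, that any earlier $\valid(v)$ operation cannot have returned $\false$ after time $t'$.)

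Case 2: the $\Test$ returns at line~\ref{returnv0-tar}, so $v = v_0$. Then by Observation~\ref{v00}, $t_0^{v_0} = 0$, and since $t'$ is a positive time, $t' > 0 = t_0^{v_0} = t_0^v$.

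The main obstacle I anticipate is purely bookkeeping: in Case 1 I must carefully justify that the internal $\valid(v)$ execution's response time is well-defined, is at most $t'$, and legitimately contributes to (lower-bounds) $t_1^v$ as used in Lemma~\ref{iexists-tar}. This hinges on the convention — stated in the Notation at the start of Appendix~\ref{a-tar} — that $\valid(-)$ executions inside $\Test$ have their own invocations and responses, and that Definitions~\ref{t0t1-tar} and Lemma~\ref{iexists-tar} are stated in terms of $\valid(v)$ executions (not only standalone $\valid$ operations). Once that convention is invoked, the argument is a two-line application of Lemma~\ref{iexists-tar} in Case 1 and Observation~\ref{v00} in Case 2.
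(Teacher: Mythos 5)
Your proof is correct and essentially identical to the paper's: the paper also splits into the $v=v_0$ case (handled by Observation~\ref{v00}) and the case where the $\Test$ returns at line~\ref{checksign-tar}, where the internal $\valid(v)$ execution returning $\true$ at some time $t''\le t'$ gives $t'\ge t''\ge t_1^v>t_0^v$ via Definition~\ref{t0t1-tar} and Lemma~\ref{iexists-tar}. The subtlety you flag resolves the way you hoped — Definition~\ref{t0t1-tar} in this appendix is stated over $\valid(v)$ \emph{executions}, so the internal call inside $\Test$ legitimately lower-bounds $t_1^v$.
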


\begin{proof}
Suppose a $\Test$ operation by a correct reader $p_k$ returns $v$ at time $t'$ in $H$.
There are two cases:
\begin{itemize}
    \item Case 1: $v=v_0$. Then by Observation~\ref{v00}, $t^{v_0}_0 = 0$.
    So it is clear that $t' >t^{v_0}_0$.
        
    \item Case 2: $v \ne v_0$. 
    Then according to the code of the $\Test$ procedure,
        $p_k$ returns $v$ at line~\ref{checksign-tar} of this $\Test$ operation at time $t'$.
    So $p_k$ finds the condition of line~\ref{checksign-tar} holds in this $\Test$ operation before $t'$.
    By lines~\ref{verifyp-tar}-\ref{checksign-tar},
        $p_k$ executes the $\valid(v)$ procedure at line~\ref{verifyp-tar} of this $\Test$ operation
        and this execution returns $\true$ at some time $t'' < t'$.
    By Definition~\ref{t0t1-tar},
        $t'' > t_1^v$.
    Then by Lemma~\ref{iexists-tar},
        $t'' > t_0^v$.
    Thus, since $t'' < t'$,
        $t' > t_0^v$.
\end{itemize}
\end{proof}

Recall that $H$ is an arbitrary history of the implementation given by Algorithm~\ref{code-tar}, and
    {\hct} is the set of processes that are correct in $H$.
To show that $H$ is Byzantine linearizable with respect to a SWMR {\tar},
    we must prove that there is a history $H'$ such that (1) $H'|{\hct} =H|{\hct}$ and 
(2)~$H'$ is linearizable with respect to a SWMR \tar.

We construct $H'$ from $H$ as follows.
We start from $H|{\hct}$.
    Since the writer is faulty in $H$,
    $H|{\hct}$ \emph{does not contain any operation by the writer};
    in contrast, $H|{\hct}$ contains all the $\Test$ and $\valid(-)$ operations by the \emph{correct} processes in $H$.
So to ``justify'' the responses of these $\Test$ and $\valid(-)$ operations,
    we must add some $\Set(-)$ operations by the writer to $H|{\hct}$.
And we must add them such that
    the resulting history conforms to the sequential specification of an {\tar}. 
To do so,
    for every value $v$ such that a \validpg{(v)} returns $\true$ in $H|{\hct}$,
    we add a $\Set(v)$ operation;
    this $\Set(v)$ operation is added such that its response time is \emph{after} the \emph{invocations} of \emph{all} the \validpg{(v)}s that return $\false$, and
        \emph{before} the \emph{responses} of \emph{all} the \validpg{(v)}s that return $\true$.
Then, to justify the $\Test$ operations,
    for every $\Test$ operation that returns some value $v$ in $H|{\hct}$,
      we also add a $\Set(v)$ operation such that its response step is \emph{after} the \emph{invocations} of \emph{all} the \validpg{(v)}s that return $\false$ and
      ``just before'' the response step of this $\Test$ operation in $H|{\hct}$.
    We describe this construction more precisely below.

\begin{definition}\label{steps-tar-2}
Let $H'$ be the history constructed from $H$ as follows.
\begin{itemize}

	\item\label{s1} Step 1: $H' = H|{\hct}$.

    \item\label{addsign} 
           Step 2: For every value $v$ such that a {\validpg{(v)}} returns $\true$ in the history $H'$ constructed in Step 1,
		we add to $H'$ a $\Set(v)$ operation such that
            the time $t$ of the response step of this $\Set(v)$ operation
            is within the interval $(t^v_0,t^v_1)$.

            Note that, by Lemma~\ref{iexists-tar}, the interval $(t^v_0,t^v_1)$ is not empty, so it is possible to insert
    the response step of the $\Set(v)$ operation
    inside this interval.

 \item\label{addset} 
            Step 3: For each $\Test$ operation that returns some value $v$ in the history $H'$ constructed in Step 2,
		  we add to $H'$ a $\Set(v)$ operation such that
            (a) the time $t$ of the response step of this $\Set(v)$ operation is
            after $t_0^v$ and
            before the time $t'$ of the response step of the $\Test$ operation,
            and
            (b) there are no steps between $t$ and $t'$.

Note that, by Lemma~\ref{t0Readintervalexists}, the interval $(t^v_0,t')$ is not empty, so it is indeed possible to insert
    the response step of the $\Set(v)$ operation
    inside this interval.

\end{itemize}

\end{definition}
    Note that in the above construction we must ensure that all the $\Set(-)$ operations that we add to $H'$ are sequential (because they are ``executed'' by the same process, namely the writer). We can do so by ``shrinking'' the $[invocation,response]$ interval of every added $\Set(-)$ operation as much as necessary to achieve this.
    
\color{black}
\begin{observation}\label{cc-tar-2}
    $H'|{\hct} = H|{\hct}$.
\end{observation}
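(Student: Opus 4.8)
The plan is to read the claim off directly from the construction in Definition~\ref{steps-tar-2}, using only that in this case the writer $p_1$ is \emph{not} in the set ${\hct}$ of processes that are {\ct} in $H$. First I would fix notation: ${\hct}$ is the set of {\ct} processes in $H$, and $H|{\hct}$ is by definition the subhistory of $H$ consisting of exactly the steps of processes in ${\hct}$; in particular $H|{\hct}$ contains no step of $p_1$, since $p_1 \notin {\hct}$. Next I would walk through the three steps that build $H'$: Step~1 sets $H' := H|{\hct}$; Step~2 inserts, for each value $v$ such that some {\validpg{(v)}} returns $\true$ in $H|{\hct}$, one $\Set(v)$ operation of the writer $p_1$, with its response step placed inside $(t^v_0,t^v_1)$; and Step~3 inserts, for each $\Test$ returning some $v$, a further $\Set(v)$ operation of $p_1$, placed just before that $\Test$'s response. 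The key observation is that every step of $H'$ beyond those of $H|{\hct}$ is a step of $p_1$ (an invocation, a response, or an internal step of one of the added $\Set(-)$ operations), and that these insertions neither delete, move, nor alter any step of a process in ${\hct}$.

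Given this, the argument is immediate: $H'|{\hct}$ is obtained from $H'$ by keeping exactly the steps of processes in ${\hct}$. Every step of $H'$ not already present in $H|{\hct}$ belongs to $p_1 \notin {\hct}$ and is therefore discarded, while every step of $H|{\hct}$ belongs to some process in ${\hct}$ and occurs at the same time in $H'$ as in $H$, hence is kept. Therefore $H'|{\hct} = H|{\hct}$.

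The only point needing a little care — really a bookkeeping check rather than a genuine obstacle — is to confirm that the construction of Definition~\ref{steps-tar-2} truly only \emph{inserts} steps attributed to $p_1$ and never relocates steps of correct processes. In particular: the requirement in Step~3 that there be no step strictly between the response time $t$ of the added $\Set(v)$ and the response time $t'$ of the $\Test$ is satisfied by choosing $t$ sufficiently close to $t'$ (this is possible since $t' > t^v_0$ by Lemma~\ref{t0Readintervalexists}, and $(t^v_0,t^v_1)$ is non-empty by Lemma~\ref{iexists-tar}), not by shifting any existing step; and the added $\Set(-)$ operations — all belonging to the single process $p_1$ — can be laid out sequentially by shrinking their $[invocation,response]$ intervals, exactly as noted right after Definition~\ref{steps-tar-2}. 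Once these placement facts are in hand, Observation~\ref{cc-tar-2} follows.
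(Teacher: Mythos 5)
Your proof is correct and matches the paper's (implicit) justification: the paper states this as an unproved Observation precisely because, as you argue, the construction in Definition~\ref{steps-tar-2} only inserts operations of the faulty writer $p_1 \notin {\hct}$ and never moves or deletes steps of correct processes, so projecting onto ${\hct}$ recovers $H|{\hct}$. Your extra bookkeeping check about the placement of the added $\Set(-)$ operations is sound but not needed beyond what the paper already notes after the definition.
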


\begin{observation}\label{validfromcorrect-tar-2}
   A $\Test$ or $\valid(-)$ operation is in $H'$ if and only if it is in $H|{\hct}$.
\end{observation}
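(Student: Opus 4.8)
The plan is to read the conclusion directly off the construction of $H'$ given in Definition~\ref{steps-tar-2}. That construction proceeds in three steps: Step~1 sets $H' = H|{\hct}$; Step~2 adds to $H'$ one $\Set(v)$ operation for each value $v$ such that some $\valid(v)$ execution returns $\true$ in the history built in Step~1; and Step~3 adds to $H'$ one further $\Set(v)$ operation for each $\Test$ operation that returns $v$. The key point to extract is that the \emph{only} operations ever added to $H|{\hct}$ during this process are $\Set(-)$ operations by the writer; no $\Test$ or $\valid(-)$ operation is added, and no operation is ever removed.

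First I would make this explicit by inspection of Steps~2 and~3: every operation introduced after Step~1 is a $\Set(-)$ operation. Consequently the collection of $\Test$ operations appearing in $H'$ coincides with the collection of $\Test$ operations appearing in $H|{\hct}$, and likewise the collection of $\valid(-)$ operations appearing in $H'$ coincides with the collection appearing in $H|{\hct}$. This gives both directions of the equivalence at once. If a $\Test$ or $\valid(-)$ operation is in $H|{\hct}$, then it lies in $H'$ because Step~1 places it there and no later step removes it. Conversely, if a $\Test$ or $\valid(-)$ operation lies in $H'$, then, since everything contributed after Step~1 is a $\Set(-)$ operation, it must already have been present after Step~1, i.e.\ in $H|{\hct}$.

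There is no real obstacle here: the statement is a purely bookkeeping observation about the construction of $H'$. The only point that warrants a moment's care is confirming that the wording of Steps~2 and~3 of Definition~\ref{steps-tar-2} introduces nothing but $\Set(-)$ operations, which is immediate from how those steps are phrased.
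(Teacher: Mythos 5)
Your argument is correct and matches the paper's intent: the statement is labeled an observation precisely because it follows immediately from Definition~\ref{steps-tar-2}, where Step~1 sets $H' = H|{\hct}$ and Steps~2 and~3 add only $\Set(-)$ operations and remove nothing. Your bookkeeping justification is exactly the reasoning the paper leaves implicit.
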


By Lemma~\ref{v0alwaystrue} and Observation~\ref{validfromcorrect-tar-2},
    we have the following.
\begin{observation}\label{v0alwaystrue2}
   In $H'$, every $\valid(v_0)$ operation returns $\true$.
\end{observation}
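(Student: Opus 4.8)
The plan is to derive this observation directly from two facts already in hand: Lemma~\ref{v0alwaystrue}, which states that every \valid$(v_0)$ execution by a {\ct} process returns $\true$ in $H$, and Observation~\ref{validfromcorrect-tar-2}, which states that a $\valid(-)$ operation is in $H'$ if and only if it is in $H|{\hct}$. The one thing to keep straight is the terminology fixed earlier: a $\valid(v_0)$ \emph{operation} is a particular kind of $\valid(v_0)$ \emph{execution} (namely, one performed to apply a $\valid$ operation), so Lemma~\ref{v0alwaystrue} does apply to it.

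First I would fix an arbitrary $\valid(v_0)$ operation $o$ in $H'$. By Observation~\ref{validfromcorrect-tar-2}, $o$ is an operation in $H|{\hct}$, and hence, by the definition of $H|{\hct}$ as the restriction of $H$ to the steps of the {\ct} processes, $o$ is an operation performed by some process $p \in {\hct}$ in the history $H$. Applying Lemma~\ref{v0alwaystrue} to the {\ct} process $p$ then gives that $o$ returns $\true$ in $H$ (and it completes, since by Observation~\ref{allcomplete} every operation in $H|{\hct}$ has both an invocation and a response).

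It remains to transfer this response back to $H'$. Since $p$ is {\ct} in $H$, it has the same operations with the same invocation and response steps in $H$, in $H|{\hct}$, and, by Observation~\ref{cc-tar-2} ($H'|{\hct} = H|{\hct}$), in $H'$; in particular the response step of $o$ is identical in all three histories. Therefore $o$ returns $\true$ in $H'$ as well.

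There is no real obstacle here: the statement is an immediate corollary, and the only points that merit a moment's care are the bookkeeping that distinguishes a $\valid(-)$ \emph{execution} from a $\valid(-)$ \emph{operation}, and the standard fact that a {\ct} process's operation responses are preserved when passing between $H$, $H|{\hct}$, and $H'$ — both of which follow routinely from the definitions and the observations already established.
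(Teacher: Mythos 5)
Your proposal is correct and matches the paper's own derivation, which obtains this observation directly from Lemma~\ref{v0alwaystrue} together with Observation~\ref{validfromcorrect-tar-2}. The extra care you take in distinguishing a $\valid(v_0)$ \emph{execution} from a $\valid(v_0)$ \emph{operation} and in transferring the response step from $H$ through $H|{\hct}$ to $H'$ is exactly the (routine) bookkeeping the paper leaves implicit.
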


By Definition~\ref{t0t1-tar} and Observation~\ref{validfromcorrect-tar-2}, 
\begin{observation}\label{sameip-tar-2}
        In $H'$, for any value $v$,
	\begin{itemize}
		\item The max invocation time of any {\validpg{(v)}} that returns $\false$ is $t^v_0$.
		\item The min response time of any {\validpg{(v)}} that returns~$\true$ is $t^v_1$.
           
	\end{itemize}
\end{observation}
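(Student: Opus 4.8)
The plan is to observe that $t^v_0$ and $t^v_1$ were defined in Definition~\ref{t0t1-tar} purely in terms of the $\valid(v)$ executions by correct readers in the original history $H$, so the statement amounts to checking that the collection of $\valid(v)$ executions appearing in $H'$ — together with their invocation times, response times, and return values — coincides with that collection in $H$. In other words, this is a bookkeeping observation, the $H'$-analogue of Observation~\ref{sameip-mar} (and of Observation~\ref{sameip-mar-2} in the verifiable-register proof), and the hint already preceding it (``By Definition~\ref{t0t1-tar} and Observation~\ref{validfromcorrect-tar-2}'') is essentially the whole argument.

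First I would record three facts. (1) By Observation~\ref{validfromcorrect-tar-2}, a $\valid(-)$ execution is in $H'$ if and only if it is in $H|\hct$. (2) Since we are in the case where the writer is \emph{not} correct in $H$, the history $H|\hct$ contains no operation of the writer; hence every $\valid(-)$ execution in $H|\hct$ — whether a standalone $\valid(-)$ operation or one invoked inside a $\Test$ — is performed by a correct reader, and, by the definition of $H|\hct$, its invocation step, its response step, and the times of these steps are exactly the ones it has in $H$. (3) The only operations added to $H|\hct$ in the construction of $H'$ (Definition~\ref{steps-tar-2}) are $\Set(-)$ operations by the writer, and the $\Set(-)$ procedure of Algorithm~\ref{code-tar} contains no call to the $\valid(-)$ procedure; so the construction introduces no new $\valid(-)$ executions.

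Combining (1)--(3), the set of $\valid(v)$ executions in $H'$ equals the set of $\valid(v)$ executions by correct readers in $H$, and each such execution has the same invocation time, response time, and return value in $H'$ as in $H$. In particular a $\valid(v)$ execution returns $\false$ in $H'$ exactly when it is a $\valid(v)$ execution by a correct reader returning $\false$ in $H$; taking the maximum of the invocation times over this common set and applying Definition~\ref{t0t1-tar} yields $t^v_0$ (with the convention, fixed in that definition, that the value is $0$ when no such execution exists). Symmetrically, the minimum response time over the $\valid(v)$ executions returning $\true$ in $H'$ equals the corresponding minimum in $H$, namely $t^v_1$. I do not anticipate any real obstacle here; the only points needing a line of care are fact (3) — that adding $\Set(-)$ operations spawns no $\valid(-)$ execution, which is immediate from the code of $\Set(-)$ — and matching the empty-maximum/minimum convention used implicitly in the statement with the one made explicit in Definition~\ref{t0t1-tar}.
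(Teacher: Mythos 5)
Your proposal is correct and matches the paper's own (implicit) justification: the paper derives this observation directly from Definition~\ref{t0t1-tar} and Observation~\ref{validfromcorrect-tar-2}, exactly the bookkeeping you spell out — the \valid$(v)$ executions in $H'$ (including those inside $\Test$ operations) are precisely those of the correct readers in $H$, with the same steps, times, and return values, and the added $\Set(-)$ operations introduce no new \valid$(-)$ executions. No gaps.
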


\begin{observation}\label{addwrite-tar-2}
   A $\Set(-)$ operation is in $H'$ if and only if it is added in Step 2 or Step 3 of the construction of $H'$ (Definition~\ref{steps-tar-2}).
\end{observation}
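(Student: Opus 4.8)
The plan is to prove the biconditional by its two directions, using only the Case~2 hypothesis that the writer $p_1$ is faulty in $H$ together with the fact that the $\Set(-)$ operation is exclusive to the writer.

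For the ($\Leftarrow$) direction, I would simply unwind Definition~\ref{steps-tar-2}: Steps~2 and~3 each explicitly add to $H'$ the described $\Set(v)$ operations, so any $\Set(-)$ operation inserted in one of those steps is, by construction, an operation of $H'$.

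For the ($\Rightarrow$) direction, the key point is that in Case~2 the writer $p_1$ is not correct in $H$, so $p_1 \notin {\hct}$. Since $H|{\hct}$ consists, by definition, only of the steps of the processes in ${\hct}$, it contains no step of $p_1$; in particular it contains no invocation or response of any $\Set(-)$ operation, because only the writer ever executes the $\Set(-)$ procedure. Hence the history produced by Step~1 of the construction, which is exactly $H|{\hct}$, contains no $\Set(-)$ operation. The only other steps of the construction, Steps~2 and~3, merely add operations; they neither delete nor relabel operations that are already present. Therefore every $\Set(-)$ operation present in the final $H'$ must have been introduced by Step~2 or Step~3, which gives the claim.

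I do not expect any real obstacle here: the statement is purely bookkeeping about the construction of $H'$. The only subtlety worth flagging is that the argument genuinely relies on the Case~2 assumption — were the writer correct, $H|{\hct}$ could already contain $\Set(-)$ operations and the ($\Leftarrow$) direction (and indeed the whole equivalence) would fail — so I would state explicitly both that $p_1 \notin {\hct}$ and that $\Set(-)$ is an operation performed only by the writer.
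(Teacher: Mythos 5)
Your argument is correct and matches the paper's intent: the paper states this as an unproved observation precisely because, as its prose preceding Definition~\ref{steps-tar-2} already notes, the faulty writer contributes no operations to $H|{\hct}$, so the only $\Set(-)$ operations in $H'$ are those inserted in Steps~2 and~3. Your explicit flagging of the reliance on the Case~2 assumption is accurate but adds nothing beyond what the paper already records.
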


We now prove that the constructed history $H'$ is linearizable with respect to an \tar.
First, we define the \emph{linearization point} of each operation in $H'$ as follows.

\begin{definition}~\label{testsetlinearization-tar-2}
Let $o$ be an operation in $H'$.
    \begin{itemize}

	\item If $o$ is a $\Set(-)$ operation,
	   the linearization point of $o$ is the time of the response step of~$o$.
       \item  If $o$ is a $\Test$ operation,
	   the linearization point of $o$ is the time of the response step of $o$.
       
	\item  If $o$ is a $\valid(-)$ operation,
        then: \begin{itemize}
          \item If $o$ returns $\true$,
	   the linearization point of $o$ is the time of the response step of $o$.
            \item If $o$ returns $\false$,
	   the linearization point of $o$ is the time of the invocation step of~$o$.
        \end{itemize}

\end{itemize}
\end{definition}

\begin{observation}\label{precedes-tar-2}
    If an operation $o$ precedes an operation $o'$ in $H'$,
        then the linearization point of $o$ is before the linearization point of $o'$.
\end{observation}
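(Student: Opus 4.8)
The plan is to prove Observation~\ref{precedes-tar-2} exactly as its counterpart for Case~1 (the remark preceding Observation~\ref{precedes-tar}) and the analogous observations in the {\mar} proof: by showing that the linearization point assigned to every operation lies inside that operation's $[invocation,response]$ interval, and then chaining inequalities across the precedence relation.

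First I would establish the key fact: for every operation $o$ in $H'$, the linearization point of $o$ (as defined in Definition~\ref{testsetlinearization-tar-2}) occurs at or after the invocation step of $o$ and at or before the response step of $o$. This is an immediate case analysis over the possibilities in Definition~\ref{testsetlinearization-tar-2}: if $o$ is a $\Set(-)$ operation, or a $\Test$ operation, or a $\valid(-)$ operation that returns $\true$, its linearization point is the time of its response step; if $o$ is a $\valid(-)$ operation that returns $\false$, its linearization point is the time of its invocation step. In every case the linearization point is either the invocation time or the response time of $o$, and hence trivially lies within $o$'s $[invocation,response]$ interval. The only point that deserves a sentence of care is that the $\Set(-)$ operations inserted in Steps~2 and~3 of Definition~\ref{steps-tar-2} are genuine operations with well-defined invocation and response steps (with intervals possibly ``shrunk'' to make the writer's operations sequential, as noted there), so the case analysis applies to them as well.

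Then, suppose $o$ precedes $o'$ in $H'$. By the definition of the precedence relation, the response step of $o$ is before the invocation step of $o'$. Combining this with the bounds just established, the linearization point of $o$ is at or before the response step of $o$, which is before the invocation step of $o'$, which is at or before the linearization point of $o'$; hence the linearization point of $o$ is strictly before that of $o'$, which is exactly the claim. I expect no real obstacle here — the argument is pure bookkeeping identical to the Case~1 version — the only thing to state explicitly is the ``linearization point lies within the operation interval'' lemma, since the rest of the Byzantine-linearizability argument for Case~2 (defining $L$ by ordering operations on their linearization points via a Definition~\ref{hlp-tar-2}-style construction and checking $L$ conforms to the {\tar} sequential specification, using Observations~\ref{sameip-tar-2}, \ref{v0alwaystrue2}, \ref{addwrite-tar-2} and the placement properties of Definition~\ref{steps-tar-2}) builds directly on this observation and lies beyond the stated claim.
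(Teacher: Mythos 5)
Your proposal is correct and matches the paper's (essentially unstated) justification: the paper treats this as an immediate observation, relying on exactly the fact you isolate — that under Definition~\ref{testsetlinearization-tar-2} every linearization point is either the invocation or the response time of its operation, hence lies in its $[invocation,response]$ interval — followed by the same chaining of inequalities. No gaps; nothing further needed.
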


We now use the linearization points of all the operations in $H'$ to define the following linearization $L$ of $H'$:
\begin{definition}\label{hlp-tar-2}
	Let $L$ be a \emph{sequence} of operations such that:
	\begin{enumerate}
		\item\label{uno} An operation $o$ is in $L$ if and only if it is in $H'$.
		\item\label{due} An operation $o$ precedes an operation $o'$ in $L$ if and only if
     the linearization point of~$o$ is before the linearization point of $o'$ in $H'$.
	\end{enumerate}
\end{definition}

By Observation~\ref{precedes-tar-2} and Definition~\ref{hlp-tar-2}, 
    $L$ respects the precedence relation between the operations of $H'$.
More precisely:
\begin{observation}\label{precedes-l-tar-2}
    If an operation $o$ precedes an operation $o'$ in $H'$,
        then $o$ precedes $o'$ in~$L$.
\end{observation}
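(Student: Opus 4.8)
The plan is to read off the claim directly from the construction of $L$ (Definition~\ref{hlp-tar-2}) together with the already-established Observation~\ref{precedes-tar-2}; no new reasoning about Algorithm~\ref{code-tar} or about the history $H$ is required. The first step is to note that $o$ and $o'$, being operations of $H'$, are both present in $L$: this is exactly Definition~\ref{hlp-tar-2}(1), which puts an operation into $L$ if and only if it occurs in $H'$. So the only thing left to establish is that $o$ comes before $o'$ in the sequence $L$.

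For that I would chain two facts. Since $o$ precedes $o'$ in $H'$, Observation~\ref{precedes-tar-2} gives that the linearization point of $o$ is strictly before the linearization point of $o'$ in $H'$; and, by the definition of $L$ (Definition~\ref{hlp-tar-2}(2)), one operation precedes another in $L$ precisely when its linearization point precedes the other's in $H'$. Combining the two, $o$ precedes $o'$ in $L$, which is the claim. It is worth recalling \emph{why} Observation~\ref{precedes-tar-2} holds, since that is where whatever substance there is resides: inspecting Definition~\ref{testsetlinearization-tar-2} case by case (a $\Set(-)$, a $\Test$, and a $\valid(-)$ that returns $\true$ or that returns $\false$) shows that the linearization point of every operation lies inside its $[invocation,response]$ interval, so the linearization point of $o$ is at most the response time of $o$, which --- because $o$ precedes $o'$ --- is strictly before the invocation time of $o'$, which is at most the linearization point of $o'$.

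I do not expect any real obstacle: the statement is essentially a restatement of the sentence that precedes it (``$L$ respects the precedence relation between the operations of $H'$''), and it follows mechanically from the two ingredients above. The one point I would take care to verify --- more for the well-definedness of $L$ as a sequence than for this observation itself --- is that distinct operations receive distinct linearization points; this follows from the standard assumption that steps of the system occur at distinct times, together with the explicit placement in Step~3 of Definition~\ref{steps-tar-2} of each added $\Set(v)$ operation so that its response is strictly before, with no intervening step, the response of the corresponding $\Test$.
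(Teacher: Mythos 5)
Your proof is correct and matches the paper's reasoning exactly: the observation follows immediately by combining Observation~\ref{precedes-tar-2} (each operation's linearization point lies in its $[invocation,response]$ interval, so precedence in $H'$ implies ordering of linearization points) with Definition~\ref{hlp-tar-2}(2). The paper treats this as an immediate consequence and gives no separate proof; your added remarks on why Observation~\ref{precedes-tar-2} holds and on distinctness of linearization points are harmless elaborations of the same argument.
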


In the following lemmas and theorems we prove that the linearization $L$ of $H'$ conforms to the sequential specification of a SWMR \tar.

\begin{theorem}\label{rw-tar-2}
    In $L$,
        if a $\Test$ operation returns a value $v$, then:
        \begin{itemize}
          \item either some $\Set(v)$ operation precedes it and this $\Set(v)$ operation is the last $\Set(-)$ operation that precedes it,
          
            \item or $v = v_0$ (the initial value of the register) and no $\Set(-)$ operation {\precedes} it.
        \end{itemize}
       
\end{theorem}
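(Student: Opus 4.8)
The plan is to mirror, essentially verbatim, the argument used for Theorem~\ref{rw-mar-2} in the verifiable‑register case, establishing the \emph{first} disjunct of the conclusion; since the statement is a disjunction this suffices, and in fact the construction of $H'$ in Definition~\ref{steps-tar-2} was set up precisely so that the first disjunct always holds (the second disjunct, $v=v_0$ with no preceding $\Set(-)$, will never be needed, because Step~3 supplies a $\Set(v)$ operation for \emph{every} $\Test$). So first I would fix a $\Test$ operation that returns a value $v$ in $L$; by Definition~\ref{hlp-tar-2}(1) this same $\Test$ operation is in $H'$, and I would let $t'$ denote the time of its response step.

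Next I would invoke Step~3 of Definition~\ref{steps-tar-2}: since this $\Test$ returns $v$ in $H'$, the construction inserts into $H'$ a $\Set(v)$ operation whose response step occurs at some time $t$ with $t_0^v<t<t'$ and such that no step of any process occurs strictly between $t$ and $t'$. (That such a $t$ exists is exactly what Lemma~\ref{t0Readintervalexists} guarantees, namely $t'>t_0^v$; for $v=v_0$ one may instead appeal to Observation~\ref{v00}, which gives $t_0^{v_0}=0$.) By Definition~\ref{testsetlinearization-tar-2}, the linearization point of this added $\Set(v)$ is its response step, namely $t$, and the linearization point of the $\Test$ is its response step, namely $t'$. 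Since $t<t'$, Definition~\ref{hlp-tar-2}(2) yields that this $\Set(v)$ precedes the $\Test$ in $L$.

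It remains to show this $\Set(v)$ is the \emph{last} $\Set(-)$ operation preceding the $\Test$ in $L$. I would argue by contradiction: if some $\Set(-)$ operation $o'$ had the added $\Set(v)$ preceding it and it preceding the $\Test$ in $L$, then by Definition~\ref{hlp-tar-2}(2) and Definition~\ref{testsetlinearization-tar-2} the linearization point of $o'$ would be its response step and would lie strictly in the interval $(t,t')$ (using that distinct operations have distinct step times, so no $\Set(-)$ has its response step exactly at the $\Test$'s response step $t'$ nor exactly at $t$). But property~(b) of Step~3 of Definition~\ref{steps-tar-2} says there is no step at all in $(t,t')$ — a contradiction. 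Hence no such $o'$ exists, and the first disjunct is established. The main obstacle — and the only non‑routine point — is justifying the reliance on property~(b): one must be confident that, after all of the later insertions in Steps~2 and~3 and the final ``shrinking'' of intervals that sequentializes the writer's operations, every added $\Set(-)$ still has its response step outside $(t,t')$ for each $\Test$; this is where one checks that the construction is well defined, in exact parallel with the treatment of Theorem~\ref{rw-mar-2}.
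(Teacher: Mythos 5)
Your proposal is correct and follows essentially the same route as the paper's proof: invoke Step~3 of Definition~\ref{steps-tar-2} to obtain a $\Set(v)$ whose response step (hence, by Definition~\ref{testsetlinearization-tar-2}, its linearization point) lies just before the $\Test$'s response step with no intervening steps, then conclude via Definition~\ref{hlp-tar-2}(2) that this $\Set(v)$ precedes the $\Test$ in $L$ and is the last $\Set(-)$ to do so. Your closing remark about verifying that property~(b) survives the later insertions and interval shrinking is a fair point about the well-definedness of the construction, but the paper relies on exactly the same property without further elaboration.
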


\begin{proof}
    Suppose a $\Test$ operation returns a value $v$ in $L$.
    By Definition~\ref{hlp-tar-2}(1),
        this $\Test$ operation is also in $H'$.
    By Step 3 of Definition~\ref{steps-tar-2},
    $H'$ has a $\Set(v)$ operation such that the time $t$ of the response step of this $\Set(v)$ operation is before the time $t'$ of the response step of the $\Test$ operation and there are no steps between $t$ and $t'$.
    So by Definition~\ref{testsetlinearization-tar-2},
            the linearization point $t$ of this $\Set(v)$ operation is before the linearization point $t'$ of the $\Test$ operation, and 
            there is no other $\Set(-)$ operation with a linearization point between $t$ and $t'$.     
    Thus, by Definition~\ref{hlp-tar-2}(2),
        this $\Set(v)$ operation precedes the $\Test$ operation in $L$ and it is the last $\Set(-)$ operation that precedes the $\Test$ operation~in~$L$.
\end{proof}

\begin{lemma}\label{vs1}
        In $L$,
            if a $\valid(v)$ operation returns~$\false$,
		then there is no $\Set(v)$ operation that {\precedes} this $\valid(v)$ operation and~\mbox{$v\ne v_0$.}
\end{lemma}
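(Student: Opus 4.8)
The plan is to prove the statement by contradiction, following the same skeleton as the proof of the analogous Lemma~\ref{lsv1-tar} (the case where the writer is correct), but replacing every appeal to the ``correct writer'' facts with the corresponding facts about the constructed history $H'$ of Definition~\ref{steps-tar-2} and the linearization points of Definition~\ref{testsetlinearization-tar-2}. So I would assume that, in $L$, some $\valid(v)$ operation returns $\false$ while the conclusion fails, i.e.\ either $v = v_0$, or there is a $\Set(v)$ operation that \precedes\ this $\valid(v)$ in $L$; and I would derive a contradiction in each case.

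First I would dispose of the case $v = v_0$. Since $L$ and $H'$ contain exactly the same operations (Definition~\ref{hlp-tar-2}(\ref{uno})), this $\valid(v_0)$ operation is also in $H'$, and by Observation~\ref{v0alwaystrue2} every $\valid(v_0)$ operation returns $\true$ in $H'$; hence it returns $\true$ in $L$, contradicting the assumption that it returns $\false$.

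Next, for $v \ne v_0$, the assumption gives a $\Set(v)$ operation preceding the $\valid(v)$ in $L$. By Definition~\ref{hlp-tar-2}, both operations are in $H'$ and the linearization point $t$ of the $\Set(v)$ operation is before the linearization point $t'$ of the $\valid(v)$ operation in $H'$. The key step is to pin down where $t$ sits relative to $t^v_0$: by Observation~\ref{addwrite-tar-2} every $\Set(v)$ operation in $H'$ was inserted in Step~2 or Step~3 of the construction in Definition~\ref{steps-tar-2}, and in both steps its response step lies strictly after $t^v_0$ (Step~2 places it inside $(t^v_0,t^v_1)$, Step~3 inside $(t^v_0,t')$ where $t'$ is the response time of a $\Test$); by Definition~\ref{testsetlinearization-tar-2} this response step \emph{is} its linearization point, so $t > t^v_0$, and therefore $t^v_0 < t < t'$. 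On the other hand, since this $\valid(v)$ operation returns $\false$, its linearization point $t'$ is the time of its invocation step (Definition~\ref{testsetlinearization-tar-2}), and by Observation~\ref{sameip-tar-2} the invocation time of any $\valid(v)$ operation returning $\false$ in $H'$ is at most $t^v_0$; hence $t' \le t^v_0$, contradicting $t' > t^v_0$.

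The only nonroutine ingredient — and hence the main obstacle — is the claim that every inserted $\Set(v)$ operation has its response step after $t^v_0$; this has to be extracted from \emph{both} clauses of Definition~\ref{steps-tar-2}, using Lemma~\ref{iexists-tar} and Lemma~\ref{t0Readintervalexists} to know that the relevant insertion intervals are nonempty. Everything else is bookkeeping about the fact that $H'$ and $L$ have the same operations with the same responses, plus unwinding the definitions of the linearization points.
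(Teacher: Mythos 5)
Your proposal is correct and follows essentially the same route as the paper's own proof: the same case split on $v=v_0$ versus $v\ne v_0$, the same use of Observation~\ref{v0alwaystrue2} for the first case, and for the second the same comparison of the $\Set(v)$ linearization point (its response step, placed after $t^v_0$ by Steps~2 and~3 of Definition~\ref{steps-tar-2}) against the $\valid(v)$ invocation time bounded by $t^v_0$ via Observation~\ref{sameip-tar-2}. The only difference is cosmetic — you phrase the final contradiction as $t'\le t^v_0$ versus $t'>t^v_0$ rather than $t'<t$ versus $t<t'$.
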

\begin{proof}
	Assume for contradiction that, 
		in $L$,
            (1) a $\valid(v)$ operation returns $\false$ and
		(2) there is a $\Set(v)$ operation that {\precedes} this $\valid(v)$ operation or $v=v_0$.
        There are two cases:
        \begin{itemize}
            \item Case 1: $v=v_0$.
             By Definition~\ref{hlp-tar-2}(1), 
                the $\valid(v)$ operation is also in $H'$.
            Since $v=v_0$, 
                by Observation~\ref{v0alwaystrue2},
                this $\valid(v)$ operation returns $\true$ in $H'$.
                Therefore it also returns $\true$ in $L$ --- a contradiction to (1).

            \item Case 2: $v\ne v_0$.
            Then by (2), 
                there is a $\Set(v)$ operation that {\precedes} this $\valid(v)$ operation in $L$. 
            By Definition~\ref{hlp-tar-2}(1), 
            the $\Set(v)$ operation and the $\valid(v)$ operations are also in $H'$, 
            and the linearization point $t$ of the $\Set(v)$ operation is before the linearization point $t'$ of the $\valid(v)$ operation  in $H'$.
        So $t < t'$ ($\star$).
        
        Since this $\valid(v)$ operation returns $\false$:
            (a) by Definition~\ref{testsetlinearization-tar-2}, 
            $t'$ is the time of the invocation step of the $\valid(v)$ operation in $H'$,
            and
            (b)~by Observation~\ref{sameip-tar-2},
            $t' \le t^v_0$.
        By Definition~\ref{steps-tar-2},
            $t$ is the time of the response step of the $\Set(v)$ operation.
        By Observation~\ref{addwrite-tar-2} and Steps 2 and 3 of Definition~\ref{steps-tar-2},
            $t > t_0^v$.
        So $t' <t$ --- a contradiction to~($\star$).
        \end{itemize}
\end{proof}

\begin{lemma}\label{vs2}
	In $L$, 
	if a $\valid(v)$ operation returns $\true$,
	then there is a $\Set(v)$ operation that {\precedes} this $\valid(v)$ operation or $v=v_0$.
\end{lemma}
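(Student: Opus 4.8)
The plan is to mirror the structure of Lemma~\ref{lsv2-tar} (the analogous statement in Case~1, where the writer is correct): assume in $L$ a $\valid(v)$ operation returns $\true$, and show that either $v=v_0$ or some $\Set(v)$ precedes it. First I would invoke Definition~\ref{hlp-tar-2}(1) to conclude that this $\valid(v)$ operation is also in $H'$. Let $t$ denote the time of its response step in $H'$. Since the operation returns $\true$, Definition~\ref{testsetlinearization-tar-2} tells us that $t$ is precisely the linearization point of the $\valid(v)$ operation, and Observation~\ref{sameip-tar-2} gives $t \ge t^v_1$.

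Next I would split on whether $v=v_0$. If $v=v_0$ the lemma holds trivially. Otherwise, since this $\valid(v)$ operation is in $H'$ and returns $\true$, and by Observation~\ref{validfromcorrect-tar-2} it is a $\validpg{(v)}$ by a correct reader in $H|{\hct}$, Step~2 of the construction in Definition~\ref{steps-tar-2} guarantees that we added a $\Set(v)$ operation to $H'$ whose response step $t'$ lies in the interval $(t^v_0, t^v_1)$. In particular $t' < t^v_1 \le t$. By Definition~\ref{testsetlinearization-tar-2}, $t'$ is the linearization point of this $\Set(v)$ operation, and $t$ is the linearization point of the $\valid(v)$ operation. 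Hence, since $t' < t$, Definition~\ref{hlp-tar-2}(2) yields that this $\Set(v)$ operation precedes the $\valid(v)$ operation in $L$, completing the proof.

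\begin{proof}
	Suppose in $L$, a $\valid(v)$ operation returns $\true$.
        By the Definition~\ref{hlp-tar-2}(1) of $L$,
            this $\valid(v)$ operation is also in $H'$.
        Let $t$ be the time when this $\valid(v)$ operation returns $\true$ in $H'$.
        Since this $\valid(v)$ operation returns $\true$,
            by Definition~\ref{testsetlinearization-tar-2},
            its linearization point is $t$, the time of its response step in $H'$.
        By Observation~\ref{validfromcorrect-tar-2}, this $\valid(v)$ operation is in $H|{\hct}$, and so it is a {\validpg{(v)}} by a correct reader.
        By Observation~\ref{sameip-tar-2}, $t \ge t^v_1$ $(\star)$.
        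There are two cases:
        \begin{itemize}
            \item Case 1: $v= v_0$. Then this lemma holds.

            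\item Case 2: $v \ne v_0$.
                Since a {\validpg{(v)}} returns $\true$ in $H'$,
                    by Step 2 of Definition~\ref{steps-tar-2},
                    there is a $\Set(v)$ operation in $H'$ whose response step occurs at some time $t'$ within the interval $(t^v_0, t^v_1)$.
                So $t' < t^v_1$, and thus by $(\star)$, $t' < t$.
                By Definition~\ref{testsetlinearization-tar-2},
                    $t'$ is the linearization point of this $\Set(v)$ operation.
                Thus, since $t' < t$,
                    by Definition~\ref{hlp-tar-2}(2),
                    this $\Set(v)$ operation precedes this $\valid(v)$ operation in $L$.
        \end{itemize}
\end{proof}

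The main obstacle I anticipate is not a deep argument but a bookkeeping subtlety: I must make sure the $\Set(v)$ operation whose existence Step~2 provides is genuinely one of the operations included in $L$ (which follows from Definition~\ref{hlp-tar-2}(1) because it is in $H'$ by construction), and that its linearization point is consistent with the precedence claim — in particular that no issue arises from the fact that Step~3 may add a \emph{different} $\Set(v)$ operation for the same $v$. Since I only need \emph{some} $\Set(v)$ operation preceding the $\valid(v)$, picking the one from Step~2 suffices and sidesteps any ambiguity. The interval $(t^v_0,t^v_1)$ is nonempty by Lemma~\ref{iexists-tar}, so the construction in Step~2 is well-defined, and there is nothing further to check.
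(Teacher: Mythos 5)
Your proof is correct and follows essentially the same route as the paper's: identify the $\Set(v)$ added in Step~2 of Definition~\ref{steps-tar-2}, note its linearization point (its response step) lies in $(t^v_0,t^v_1)$, and compare with the linearization point $t \ge t^v_1$ of the $\valid(v)$ operation. The only difference is your case split on $v=v_0$, which is harmless but unnecessary, since Step~2 adds a $\Set(v)$ for \emph{every} value (including $v_0$) for which some $\valid(v)$ execution returns $\true$, so the first disjunct of the lemma can be established uniformly as the paper does.
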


\begin{proof}
	Suppose in $L$, a $\valid(v)$ operation returns $\true$.
        By the Definition~\ref{hlp-tar-2}(1) of $L$,
            this $\valid(v)$ operation is also in $H'$.
         Since the $\valid(v)$ operation returns $\true$,
            by Definition~\ref{testsetlinearization-tar-2},
            its linearization point $t$ is the time of its response step in $H'$.
         By Observation~\ref{sameip-tar-2},
            $t \ge t^v_1$ ($\star$).

        Since the $\valid(v)$ operation returns $\true$ in $H'$,
            by Step 2 of Definition~\ref{steps-tar-2},
             $H'$ has a $\Set(v)$ operation such that the time of the response step of this $\Set(v)$ operation is within the interval $(t^v_0,t^v_1)$ in $H'$.
        By Definition~\ref{testsetlinearization-tar-2}, 
            the linearization point $t'$ of this $\Set(v)$ operation is in the interval $(t^v_0,t^v_1)$.
        So $t' < t^v_1$.
        By ($\star$),
            $t' < t$. 
        Thus, by the Definition~\ref{hlp-tar-2}(2) of $L$, this $\Set(v)$ operation {\precedes} the $\valid(v)$ operation in $L$.
\end{proof}

By Lemma~\ref{vs1} and Lemma~\ref{vs2}, we have the following theorem.
\begin{theorem}\label{sv-tar-2}
    In $L$, 
	a $\valid(v)$ operation returns $\true$ if and only if
	there is a $\Set(v)$ operation that {\precedes} this $\valid(v)$ operation or $v=v_0$.
\end{theorem}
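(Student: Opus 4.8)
The plan is to obtain Theorem~\ref{sv-tar-2} as an immediate consequence of the two preceding lemmas, since the statement is precisely the conjunction of their two directions applied to the linearization $L$. For the forward implication I would simply invoke Lemma~\ref{vs2}: if a $\valid(v)$ operation returns $\true$ in $L$, then that lemma already yields that either some $\Set(v)$ operation precedes it in $L$ or $v = v_0$, so nothing further is needed.

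For the reverse implication I would argue by contrapositive using Lemma~\ref{vs1}. Assume that in $L$ there is a $\Set(v)$ operation preceding the $\valid(v)$ operation, or that $v = v_0$. A $\valid(v)$ operation in $L$ returns one of the two Boolean values allowed by the sequential specification of a {\tar} (Definition~\ref{def-tar}): it returns $\true$ (line~\ref{return1-tar} of the $\valid(-)$ procedure) or $\false$ (line~\ref{return0-tar}), and it does terminate by Lemma~\ref{validpterminate}. Hence it suffices to rule out the case that the $\valid(v)$ operation returns $\false$. But if it did, Lemma~\ref{vs1} would say that no $\Set(v)$ operation precedes it in $L$ and $v \ne v_0$, contradicting our assumption; therefore the $\valid(v)$ operation returns $\true$.

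I do not expect any genuine obstacle here, because all the content has already been pushed into Lemmas~\ref{vs1} and~\ref{vs2}, whose proofs in turn rest on the linearization-point assignments of Definition~\ref{testsetlinearization-tar-2}, on Observation~\ref{sameip-tar-2} relating those assignments to the thresholds $t^v_0$ and $t^v_1$, and on the fact that each added $\Set(v)$ operation has its response step placed inside $(t^v_0,t^v_1)$ in the construction of $H'$ (Definition~\ref{steps-tar-2}). The only point to be careful about is ensuring that the case split in the reverse direction is genuinely exhaustive, i.e.\ that a $\valid(v)$ operation in $L$ has exactly the two outcomes $\true$ and $\false$; this is immediate from the code of the $\valid(-)$ procedure together with the termination guarantee of Lemma~\ref{validpterminate}.
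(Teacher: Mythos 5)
Your proposal is correct and matches the paper's approach: the paper obtains Theorem~\ref{sv-tar-2} simply by combining Lemma~\ref{vs1} and Lemma~\ref{vs2}, exactly as you do, with the forward direction being Lemma~\ref{vs2} verbatim and the reverse direction being the contrapositive of Lemma~\ref{vs1} together with the fact that every $\valid(v)$ operation in $L$ is complete and returns either $\true$ or $\false$. Your extra care about the exhaustiveness of the $\true$/$\false$ case split is a point the paper leaves implicit, but it is the right justification.
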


\begin{theorem}\label{linearthe-tar-case2}
    If the writer is not correct in $H$,
    the history $H$ is Byzantine linearizable with respect to a SWMR {\tar}.
\end{theorem}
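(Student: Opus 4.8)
The plan is to follow exactly the template already used for Theorem~\ref{linearthe-tar-case1} (the correct-writer case) and for the analogous {\mar} result Theorem~\ref{linearthe-mar-case2}: exhibit a single witness history $H'$ together with a single linearization $L$ of $H'$, and then verify the two clauses of Byzantine linearizability (Definition~\ref{def-hbl}) in turn. Concretely, I would need $H'$ with (1)~$H'|{\hct} = H|{\hct}$ and (2)~$H'$ linearizable with respect to a SWMR {\tar}.

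First I would take $H'$ to be precisely the history constructed in Definition~\ref{steps-tar-2}: start from $H|{\hct}$ (which contains no writer operations, since $p_1$ is faulty in $H$), then in Step~2 add, for each value $v$ such that some {\valid}$(v)$ returns $\true$, a $\Set(v)$ whose response lies inside the non-empty interval $(t^v_0,t^v_1)$ (non-empty by Lemma~\ref{iexists-tar}), and in Step~3 add, for each $\Test$ that returns $v$, a $\Set(v)$ whose response sits just before that $\Test$'s response and after $t^v_0$ (the interval $(t^v_0,t')$ is non-empty by Lemma~\ref{t0Readintervalexists}), finally shrinking all these synthetic intervals so that the writer's operations are totally ordered. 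Clause~(1) is then immediate from Observation~\ref{cc-tar-2}.

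For clause~(2) I would take $L$ to be the sequence defined in Definition~\ref{hlp-tar-2}, ordered by the linearization points assigned in Definition~\ref{testsetlinearization-tar-2}. Then two things remain, both already established upstream: $L$ respects the precedence relation of $H'$ by Observation~\ref{precedes-l-tar-2}; and $L$ conforms to the sequential specification of a SWMR {\tar} (Definition~\ref{def-tar}) because its read-write clause is exactly Theorem~\ref{rw-tar-2} and its verify clause is exactly Theorem~\ref{sv-tar-2}. Hence $H'$ is linearizable with respect to a SWMR {\tar}, and $H$ is Byzantine linearizable with respect to a SWMR {\tar}.

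I expect the present theorem itself to be essentially a two-line composition of already-proved facts, so the real obstacle lives in the lemmas it rests on rather than here. The delicate point — handled in Lemmas~\ref{iexists-tar} and~\ref{t0Readintervalexists} together with the placement rules of Definition~\ref{steps-tar-2} — is that the interval into which each synthetic $\Set(v)$ must be inserted is non-empty, that the $\Set(v)$ placed for a $\Test$ returning $v$ can be squeezed immediately before that $\Test$'s response, and that all of these intervals can be simultaneously honoured while keeping the writer's operations sequential; the handling of $v=v_0$ (via Observation~\ref{v00} and Observation~\ref{v0alwaystrue2}) is what makes the boundary cases of Theorems~\ref{rw-tar-2} and~\ref{sv-tar-2} go through. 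Once those facts are in hand, the bookkeeping of this theorem is identical to that of Theorem~\ref{linearthe-tar-case1}.
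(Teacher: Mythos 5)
Your proposal matches the paper's proof essentially verbatim: it selects the same witness history $H'$ from Definition~\ref{steps-tar-2}, establishes clause~(1) via Observation~\ref{cc-tar-2}, and establishes clause~(2) by taking the linearization $L$ of Definition~\ref{hlp-tar-2} and invoking Observation~\ref{precedes-l-tar-2} together with Theorems~\ref{rw-tar-2} and~\ref{sv-tar-2}. Your closing remarks correctly locate the real work in the upstream lemmas (non-emptiness of the insertion intervals and the $v_0$ boundary cases), which is exactly how the paper structures the argument.
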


\begin{proof}
Assume that the writer $p_1$ is not correct in $H$.
    To prove that $H$ is
        Byzantine linearizable with respect to a SWMR {\tar} (Definition~\ref{def-tar}),
    we must prove that there is a history $H'$ such that:
    \begin{compactenum}[(1)]
        \item $H'|{\hct} = H|{\hct}$.
        \item $H'$ is linearizable with respect to a SWMR {\tar}. 
    \end{compactenum}
    
     Let $H'$ be the history given by Definition~\ref{steps-tar-2}.
    By Observation~\ref{cc-tar-2}, $H'$ satisfies (1).
    
    Let $L$ be the linearization of $H'$ given in Definition~\ref{hlp-tar-2}.
    \begin{compactenum}[(a)]
        \item By Observation~\ref{precedes-l-tar-2}, $L$ respects the precedence relation between the operations of $H'$.
        \item By Theorems~\ref{rw-tar-2} and~\ref{sv-tar-2}, $L$ conforms to the sequential specification of a SWMR {\tar}.
        \end{compactenum}
    Thus, $H'$ also satisfies~(2).
\end{proof}

By Theorems~\ref{linearthe-tar-case1} and~\ref{linearthe-tar-case2}, the arbitrary history \( H \) of Algorithm~\ref{code-tar} is Byzantine linearizable with respect to an {\tar} (independent of whether the writer $p_1$ is correct or not in $H$):

\begin{restatable}{theorem}{verifybl}\label{linearthe-tar}
    [\textsc{Byzantine linerizability}] The history $H$ is Byzantine linearizable with respect to a SWMR {\tar}.
\end{restatable}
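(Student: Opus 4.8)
The plan is to dispose of the theorem by a case split on whether the writer $p_1$ is correct in $H$, since the two resulting sub-claims are exactly Theorems~\ref{linearthe-tar-case1} and~\ref{linearthe-tar-case2}, and these two cases are exhaustive. So the real content lies in the two case analyses, and I would organize each the same way: produce a history $H'$ with $H'|{\hct}=H|{\hct}$, assign a linearization point inside the $[invocation,response]$ interval of every operation of $H'$, order the operations of $H'$ by these points to obtain a sequence $L$, and then verify (i) $L$ respects precedence (automatic, because each linearization point lies inside its operation's interval) and (ii) $L$ conforms to the sequential specification of an authenticated register (Definition~\ref{def-tar}).

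For the case where $p_1$ is correct I would simply take $H'=H|{\hct}$. The linearization points are: the write into $R_1$ at line~\ref{r1-tar} for a $\Set$; the read of $R_1$ at line~\ref{readerr1-tar} for a $\Test$; and the response (resp.\ invocation) step for a $\valid(-)$ returning $\true$ (resp.\ $\false$). The two facts to check are read/write consistency of $\Test$ against $\Set$, and consistency of $\valid(-)$ against $\Set$; the first relies on $R_1$ always holding a well-formed set of timestamped tuples (because $p_1$ is correct) and on the inner $\valid(v)$ call made inside $\Test$ always returning $\true$ — which uses Lemma~\ref{testreturn1-tar} — so that every $\Test$ returns the largest-timestamp value it read. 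This is exactly what Theorems~\ref{rw-tar} and~\ref{sv-tar} establish.

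The case where $p_1$ is Byzantine is where the work concentrates, and I expect it to be the main obstacle. Now $H|{\hct}$ contains no writer operation at all, so I must \emph{synthesize} a sequence of $\Set(-)$ operations and splice them into $H'$ to ``justify'' the correct readers' $\Test$ and $\valid(-)$ responses, while keeping all synthesized writes sequential. The key parameters are, for each value $v$, the thresholds $t^v_0$ (the latest invocation of a correct $\valid(v)$ returning $\false$) and $t^v_1$ (the earliest response of a correct $\valid(v)$ returning $\true$); the delicate point is that the open interval $(t^v_0,t^v_1)$ must be nonempty so a $\Set(v)$ can be slipped inside it, and this in turn rests on the relay property (Lemma~\ref{testatestb-tar}, through Lemma~\ref{iexists-tar}) that a $\true$-returning $\valid(v)$ never precedes a $\false$-returning one. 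I would therefore add one $\Set(v)$ with response inside $(t^v_0,t^v_1)$ for every $v$ validated by some correct reader, and one $\Set(v)$ placed just before each $\Test$ returning $v$ (after $t^v_0$, which is possible by Lemma~\ref{t0Readintervalexists}), shrinking intervals as needed so all these writes are sequential; Theorems~\ref{rw-tar-2} and~\ref{sv-tar-2} then verify that the resulting linearization conforms to the specification. Combining the two exhaustive cases yields the theorem.
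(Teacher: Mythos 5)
Your proposal is correct and follows the paper's own route exactly: the same case split on whether the writer is correct, the same choice of $H'$ and linearization points in each case (Definitions~\ref{testsetlinearization-tar} and~\ref{steps-tar-2}), and the same supporting lemmas (\ref{testreturn1-tar}, \ref{testatestb-tar}, \ref{iexists-tar}, \ref{t0Readintervalexists}) to place the synthesized $\Set(-)$ operations inside $(t^v_0,t^v_1)$. Nothing is missing.
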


By Theorem~\ref{termination-tar} and Theorem~\ref{linearthe-tar},
    we have the following:
    
\begin{restatable}{theorem}{tarcorrect}
In a system with $n > 3f$ processes,
    where $f$ processes can be Byzantine,
    Algorithm~\ref{code-tar} is a correct implementation of a SWMR {\tar}.
\end{restatable}

\section{Correctness of the {\src} Implementation}\label{a-sar}
In the following, we prove that Algorithm~\ref{code-sar} is a
    correct implementation of a SWMR {\sr} for systems 
    with $n >3f$ processes, $f$ of which can be Byzantine.
To do so, in the following, we consider an arbitrary (infinite) history $H$ of Algorithm~\ref{code-sar} (where $n>3f$), and prove:
    \begin{compactitem}
        \item \textsc{[Byzantine linearizability]} The history $H$ is Byzantine linearizable with respect to a SWMR {\sr} (Section~\ref{linear-sar}).
        \item \textsc{[Termination]} All processes that are correct in $H$ complete all their operations (Section \ref{wait-free-sar}).
    \end{compactitem}
    
\smallskip
\noindent
Recall that a sticky register is initialized to $\bot$,
    and the writer only writes non-$\bot$ values.

\noindent
\textbf{Notation.} For convenience, henceforth:
    \begin{compactitem}

    \item $v$ denotes a \emph{non-$\bot$ value}.
When we refer to the initial~value~$\bot$ of the sticky register, we do so explicitly. 

    \item When we say that a process $p$ is in $\setv$, $\setv$ \emph{contains} a process $p$,
        or $p \in \setv$, we mean $\langle -,p\rangle \in \setv$.

    \end{compactitem}

\subsection{Termination}\label{wait-free-sar}

\begin{observation}\label{correctonlyE-sar}
For every correct process $p_i\in \{p_1,\ldots,p_n\}$,
	if $E_i = v$ at time $t$, then $E_i = v$ at all times $t' \ge t$.
\end{observation}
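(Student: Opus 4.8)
The plan is to show that a correct process $p_i$ writes to its single-writer echo register $E_i$ only through line~\ref{echo-sar} of the $\fresh()$ procedure, that this write is guarded by the test $E_i=\bot$ at line~\ref{helpecho-sar}, and that once this test fails it can never succeed again; hence $E_i$ freezes the instant it holds a non-$\bot$ value.

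First I would observe that, since $E_i$ is a single-writer register and $p_i$ is correct, the value of $E_i$ changes only at the steps at which $p_i$ executes line~\ref{echo-sar} (the unique write to $E_i$ in $p_i$'s code), and $E_i=\bot$ before any such step (its initial value). Let $s_1<s_2<\cdots$ be the times at which $p_i$ executes line~\ref{echo-sar}. Because line~\ref{helpecho-sar} precedes line~\ref{echo-sar} within a single iteration of the while loop of $\fresh()$, and line~\ref{echo-sar} of one iteration precedes line~\ref{helpecho-sar} of the next, for each $m\ge 1$ there is a time $r_m<s_m$ at which $p_i$ read $E_i=\bot$ at line~\ref{helpecho-sar}, and moreover $r_{m+1}>s_m$. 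Next I would use this to show that if $p_i$ ever writes a non-$\bot$ value to $E_i$ at some $s_m$, then $s_m$ is $p_i$'s last write to $E_i$: suppose not, so $s_{m+1}$ exists; on the interval $(s_m,r_{m+1}]$ process $p_i$ takes no step writing $E_i$ (its writes are exactly $s_1,s_2,\dots$) and no other process can write $E_i$, so $E_i$ is unchanged on $(s_m,r_{m+1}]$; but $E_i$ holds a non-$\bot$ value just after $s_m$ and holds $\bot$ at $r_{m+1}$ — a contradiction. Therefore, after the first $s_m$ at which $p_i$ writes a non-$\bot$ value, $E_i$ is never written again, so its value is permanently frozen. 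Consequently, if $E_i=v$ with $v\ne\bot$ at time $t$, then $t$ is at or after that freezing write and $E_i=v$ at all times $t'\ge t$.

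There is essentially no real obstacle here: the only point that needs care is correctly pairing the read of $E_i$ at line~\ref{helpecho-sar} in one iteration with the immediately preceding write at line~\ref{echo-sar} of the previous iteration — i.e., checking that any two consecutive writes to $E_i$ are separated by a read of $E_i$ that returned $\bot$. Everything else is immediate from $E_i$ being single-writer and from $p_i$ being correct, so this observation can also be stated without proof if desired.
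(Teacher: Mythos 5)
Your argument is sound for the readers $p_2,\dots,p_n$: for those processes line~\ref{echo-sar} really is the only write to $E_i$, every such write is guarded by the test $E_i=\bot$ at line~\ref{helpecho-sar} in the same iteration, and your pairing of any two consecutive writes with an intervening read of $E_i$ that returned $\bot$ correctly shows that a non-$\bot$ write is the last write. The paper states this observation without proof, so there is no argument of theirs to compare against; for $i\ge 2$ your elaboration is fine.

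The gap is the case $i=1$. Your premise that line~\ref{echo-sar} is ``the unique write to $E_i$ in $p_i$'s code'' is false for the writer: $p_1$ also writes $E_1$ at line~\ref{setter1-sar} of $\Set(v)$, and the observation is stated for \emph{every} correct process and is later applied exactly to that write (see the proof of Lemma~\ref{termination-sar-1}). Moreover, this case is not dispatched by the same pairing argument. For $j=1$ the value written at line~\ref{echo-sar} is whatever was read from $E_1$ at line~\ref{reade1-sar}, and both the guard at line~\ref{helpecho-sar} and that read may occur \emph{before} the write at line~\ref{setter1-sar}; so the consecutive pair (write at line~\ref{setter1-sar}, next write at line~\ref{echo-sar}) need not be separated by a read of $E_1$ returning $\bot$, which is the hinge of your contradiction. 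Closing the gap requires an explicit statement about how a correct $p_1$ interleaves its steps of $\Set(-)$ with those of $\fresh()$ --- e.g., that the block at lines~\ref{helpecho-sar}--\ref{echo-sar} is not interrupted by line~\ref{setter1-sar}, so the value written back at line~\ref{echo-sar} equals the current content of $E_1$ --- combined with the fact that line~\ref{setter1-sar} is itself guarded by the test at line~\ref{stick-sar} and writes a non-$\bot$ value. As written, your proof establishes the observation only for $i\in\{2,\dots,n\}$.
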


\begin{observation}\label{correctonlywrites1-sar}

For every correct process $p_i\in \{p_1,\ldots,p_n\}$,
	if $R_i = v$ at time $t$, then $R_i = v$ at all times $t' \ge t$.
\end{observation}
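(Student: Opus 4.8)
The plan is to prove this observation by a direct inspection of Algorithm~\ref{code-sar}, establishing that a correct process writes to its witness register $R_i$ at most once over the whole execution and only with a non-$\bot$ value. First I would note that neither the $\Set(-)$ procedure nor the $\Test()$ procedure contains any assignment to a register $R_i$ (in the $\Set(-)$ procedure the writer only writes $E_1$ in line~\ref{setter1-sar} and merely \emph{reads} the registers $R_i$ in line~\ref{writerreadsr-sar}). Hence the only code through which a correct process $p_i$ can modify $R_i$ is its own $\fresh()$ procedure, and within $\fresh()$ the only two assignments to $R_i$ are the one in line~\ref{echosupport-sar} and the one in line~\ref{followcondition-sar}.

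Next I would record two facts about these two assignments. First, each of them writes a \emph{non-$\bot$} value: in both lines the value $v$ that is stored is the one quantified as ``$\exists v \ne \bot$ such that $\dots$''. Second, each of them is executed only when $R_i = \bot$: the assignment in line~\ref{echosupport-sar} lies inside the block guarded by the test ``\textbf{if} $R_j = \bot$ \textbf{then}'' that immediately precedes line~\ref{readecho-sar}, and the assignment in line~\ref{followcondition-sar} lies inside the block guarded by the test ``\textbf{if} $R_j = \bot$ \textbf{then}'' in line~\ref{ifnotone-sar}. Moreover these two assignments cannot both fire in the same iteration of the while loop: line~\ref{echosupport-sar} is executed before line~\ref{ifnotone-sar}, so if $p_i$ stores a non-$\bot$ value $v$ into $R_i$ in line~\ref{echosupport-sar}, then the test in line~\ref{ifnotone-sar} fails in that very iteration and line~\ref{followcondition-sar} is never reached. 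Consequently $p_i$ performs at most one write to $R_i$: $R_i$ holds its initial value $\bot$ up to this unique write, the write sets $R_i$ to some non-$\bot$ value $v$, and from then on every guard ``$R_j = \bot$'' fails forever, so $R_i$ is never written again and stays equal to $v$.

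Putting these pieces together yields the claim: if $R_i = v$ at time $t$ (where, by the notation convention, $v$ is a non-$\bot$ value), then $p_i$ has already performed its unique write to $R_i$ by time $t$, no further write to $R_i$ ever occurs, and hence $R_i = v$ at all times $t' \ge t$. I do not expect any genuine obstacle here; the argument is a routine code walk. The only point requiring a little care is the single-iteration case — ruling out a second, conflicting write within the same loop iteration — which is dispatched by the ordering of line~\ref{echosupport-sar} before line~\ref{ifnotone-sar} noted above. (The same template, applied to the assignment in line~\ref{echo-sar} guarded by the test in line~\ref{helpecho-sar}, also proves the companion Observation~\ref{correctonlyE-sar} for the echo registers $E_i$.)
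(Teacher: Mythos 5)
Your proof is correct and matches the paper's treatment: the paper states this as an Observation with no written proof precisely because it follows from the code inspection you carry out (the only assignments to $R_i$ by a correct $p_i$ are at lines~\ref{echosupport-sar} and~\ref{followcondition-sar}, both guarded by $R_j=\bot$ and both storing a non-$\bot$ value, so $R_i$ is written at most once and never changes thereafter). Your extra care about the two guards not both firing in one iteration is sound but not strictly needed, since either write already falsifies every subsequent $R_j=\bot$ guard.
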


\begin{observation}\label{monock-sar}
	For every correct reader $p_k\in \{p_2,\ldots,p_n \}$, 
		the value of $C_k$ is non-decreasing.
\end{observation}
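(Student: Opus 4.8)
The plan is to observe that $C_k$ is a SWMR register whose single writer is $p_k$, so by the single-writer property no other process — correct or Byzantine — can ever modify it. Hence the evolution of the value of $C_k$ over the course of the history $H$ is determined entirely by the steps that $p_k$ itself takes on $C_k$. Since $p_k$ is assumed correct, it executes its code in Algorithm~\ref{code-sar} exactly as prescribed, and the only place where $p_k$ writes to $C_k$ is line~\ref{ckplus-sar} of the $\Test$ procedure, namely the assignment $C_k \gets C_k+1$.

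First I would note that $C_k$ is initialized to $0$. Then I would argue that each execution of line~\ref{ckplus-sar} by the correct reader $p_k$ strictly increases the value of $C_k$ (it adds $1$), while between any two consecutive such executions the value of $C_k$ is unchanged, because there are no other steps in $H$ that write to $C_k$. Consequently, listing the values held by $C_k$ in chronological order yields a sequence that is non-decreasing (in fact weakly increasing, and strictly increasing precisely at the write steps of $p_k$). This is exactly the claimed statement.

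I do not expect any real obstacle here: the only point requiring care is to confirm, by inspection of all three procedures ($\Set$, $\Test$, and $\fresh()$), that $p_k$ performs no write to $C_k$ other than the increment in line~\ref{ckplus-sar} — in particular that the $\fresh()$ procedure only ever \emph{reads} $C_k$ (line~\ref{collectck-sar}) and never writes it — together with the single-writer property of the SWMR register $C_k$, which rules out writes by every other process. Once these routine checks are made, the observation follows immediately.
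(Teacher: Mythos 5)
Your proof is correct and is exactly the routine code-inspection argument the paper intends: the paper states this as an Observation without proof, since $C_k$ is a SWMR register written only by the correct reader $p_k$, and the sole write is the increment at line~\ref{ckplus-sar}. No gap here.
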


\begin{observation}\label{monoset1-sar}
	Let $\Test$ be an operation by a correct reader $p_k$.
	In $\Test$,
	 $|\setv|$ of $p_k$ is non-decreasing.
\end{observation}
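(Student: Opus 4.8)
The plan is to argue directly from the code of the $\Test()$ procedure in Algorithm~\ref{code-sar}, since the statement is a purely syntactic invariant of how the local variable $\setv$ is manipulated by a correct reader. First I would take inventory of every statement that writes to $\setv$ during a single $\Test$ operation. The initialization $\setb, \setv \gets \emptyset$ is executed exactly once, \emph{before} the reader $p_k$ enters the while loop at line~\ref{whileloop-sar}. After that point, the only statement in the entire loop body that modifies $\setv$ is the set union at line~\ref{setv-sar}, namely $\setv \gets \setv \cup \{\langle u_j, p_j \rangle\}$.

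Given this inventory, the observation follows immediately. A union operation can only add elements to $\setv$ and never removes any, so the value of $|\setv|$ after executing line~\ref{setv-sar} is always at least its value before (it is unchanged if $\langle u_j, p_j \rangle$ was already present, and increases by one otherwise). Since $p_k$ is correct, it executes its code exactly as prescribed, so no other modification of $\setv$ can occur during the operation. Hence, throughout the $\Test$ operation, $|\setv|$ never decreases.

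There is essentially no obstacle here; the only point worth stating explicitly in the write-up is the contrast with $\setb$, which \emph{is} reset to $\emptyset$ at line~\ref{empty0-sar} inside the loop. Flagging this contrast makes clear why $|\setv|$ is monotone while $|\setb|$ is not, and confirms that the reset at line~\ref{empty0-sar} touches only $\setb$ and leaves $\setv$ untouched. This mirrors the analogous monotonicity facts (Observations~\ref{monoset1-mar} and~\ref{monoset1-tar}) established for the $\set_1$ set in the {\marc} and {\tar} algorithms, which are likewise immediate from the fact that the corresponding set is only ever grown by a union.
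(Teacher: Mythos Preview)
Your proposal is correct and matches the paper's approach: the paper states this as an Observation with no proof, treating it as self-evident from the code, and your argument spells out exactly the reasoning that justifies it (the only assignment to $\setv$ after initialization is the union at line~\ref{setv-sar}).
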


\begin{lemma}\label{firstv-sar}
	For any value $v$,
		if $p_i$ is the first correct process with $R_i = v$,
            then $p_i$ writes $v$ in $R_i$ at line~\ref{echosupport-sar}.
\end{lemma}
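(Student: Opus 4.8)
The plan is to show that a correct process can set its witness register $R_i$ to a non-$\bot$ value only via one of two lines of the $\fresh()$ procedure, and then to rule out the second of these for the \emph{first} correct process whose witness register equals $v$.

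First I would note that the writer's $\Set(-)$ procedure writes only to $E_1$ and to (via the repeat loop, it only reads) the $R_i$'s, the readers' $\Test$ procedure writes only to $C_k$, and each $R_{jk}$ is written only inside $\fresh()$; hence a correct process $p_j$ assigns a value to $R_j$ only at line~\ref{echosupport-sar} or at line~\ref{followcondition-sar} of $\fresh()$. Since $R_i$ is initialized to $\bot$ and $v \neq \bot$, if $p_i$ is the first correct process with $R_i = v$, then at the earliest time $t$ such that $R_i = v$, process $p_i$ performs one of those two assignments of $v$ to $R_i$.

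Then I would argue by contradiction: suppose $p_i$ performs this assignment at line~\ref{followcondition-sar}. The guard of line~\ref{followcondition-sar} holds, so there is a value (namely $v$) with $|\{ r_\ell \mid r_\ell = v \}| \ge f+1$, where each such $r_\ell$ was read from $R_\ell$ at line~\ref{sets-sar} strictly before $t$. Since at most $f$ processes are faulty ($n > 3f$), at least one of these $f+1$ processes, say $p_m$, is correct, and $R_m = v$ at the time $p_i$ read it, which is strictly before $t$. This contradicts the choice of $p_i$ as the first correct process whose witness register equals $v$ (and of $t$ as the earliest such time). Hence $p_i$ must perform the assignment at line~\ref{echosupport-sar}, as claimed.

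The only point requiring care — hardly a real obstacle — is verifying that the case split over the two writing lines is exhaustive (that no other code path writes $R_j$, in particular that $p_1$'s $\Set$ never touches $R_1$) and that the read at line~\ref{sets-sar} is strictly earlier than the write at line~\ref{followcondition-sar}, so that the contradiction with the minimality of $t$ is genuine. Everything else is the standard ``$n > 3f$ implies any $f+1$ processes include a correct one'' argument.
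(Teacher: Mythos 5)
Your proof is correct and follows essentially the same route as the paper's: assume for contradiction that the write happens at line~\ref{followcondition-sar}, use $n>3f$ to extract a correct process among the $f+1$ processes whose witness registers were read as $v$ at line~\ref{sets-sar} strictly earlier, and contradict the minimality of $p_i$ (and of the time $t$). Your explicit check that lines~\ref{echosupport-sar} and~\ref{followcondition-sar} are the only places a correct process writes its witness register is a detail the paper leaves implicit, but it adds nothing substantively different.
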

    \begin{proof}
    Assume for contradiction, $p_i$ is the first correct process with $R_i = v$ for some value $v$, and $p_i$ does not write $v$ in $R_i$ at line~\ref{echosupport-sar}.
    Then according to the code of the $\fresh()$ procedure, 
        $p_i$ writes $v$ in $R_i$ at line~\ref{followcondition-sar}, say at time $t$.
    This implies that $p_i$ finds that the condition $|\{ r_i~|~ r_i =v \}| \ge f+1$ holds in line~\ref{followcondition-sar} before time $t$.
    Since there are at most $f$ faulty processes,
        there is at least one correct process $p_j\ne p_i$ such that $p_i$ finds $r_j=v$.
    So $p_i$ reads $v$ from $R_j$ at line~\ref{sets-sar} before time $t$.
    Since $p_j$ is correct, 
        this contradicts the fact that $p_i$ is the first correct process with $R_i = v$.
    Therefore, $p_i$ writes $v$ in $R_i$ at line~\ref{echosupport-sar}.
    \end{proof}

\begin{lemma}\label{wbeforer-sar}
    Suppose the writer $p_1$ is correct.
    For any value $v$,
        if a correct process $p_i$ has $R_i=v$ at time $t$,
        then $p_1$ wrote $v$ in $E_1$ at line~\ref{setter1-sar} in a $\Set(v)$ operation before time $t$.
\end{lemma}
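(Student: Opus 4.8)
\textbf{Proof plan for Lemma~\ref{wbeforer-sar}.}

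The plan is to trace backwards from the fact that a correct process $p_i$ has $R_i = v$ at time $t$ to establish that the (correct) writer $p_1$ must have written $v$ into $E_1$ before $t$. The key is to look at the \emph{first} correct process to hold $v$ in its witness register. Formally, let $t^*\le t$ be the earliest time at which some correct process has $v$ in its $R$-register, and let $p_a$ be that process; since the initial value of every $R_j$ is $\bot$ and correct processes never overwrite a non-$\bot$ value (Observation~\ref{correctonlywrites1-sar}), $p_a$ actually \emph{writes} $v$ into $R_a$ at time $t^*$. By Lemma~\ref{firstv-sar}, $p_a$ performs this write at line~\ref{echosupport-sar}, which means $p_a$ found that at least $n-f$ processes $p_j$ had $e_j = v$ when it read their echo registers $E_j$ at line~\ref{readecho-sar} just before $t^*$.

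Next I would argue that one of those $n-f$ echoing processes must be a correct process that echoed $v$ because it read $v$ directly from $E_1$. Since there are at most $f$ faulty processes and $n - f \ge f+1$, at least one of the $n-f$ processes whose echo register $p_a$ saw equal to $v$ is correct; call it $p_b$. Then $p_b$ has $E_b = v$ at some time before $t^*$. A correct process only writes to its echo register $E_b$ at line~\ref{echo-sar}, and the value it writes there is whatever it just read from $E_1$ at line~\ref{reade1-sar} (this happens only inside the $E_j=\bot$ branch, so it writes a single value). Hence $p_b$ read $v$ from $E_1$ at line~\ref{reade1-sar} at some time $t_b < t^* \le t$, so $v \in E_1$ at time $t_b$.

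Finally I would use the assumption that the writer $p_1$ is correct: a correct writer only ever writes a non-$\bot$ value into $E_1$ inside the $\Set$ procedure, specifically at line~\ref{setter1-sar} of some $\Set(v)$ operation (line~\ref{stick-sar} returns early without writing, and there is no other write to $E_1$ in the code). Therefore the fact that $E_1 = v$ at time $t_b$ forces $p_1$ to have executed line~\ref{setter1-sar} in a $\Set(v)$ operation at some time $\le t_b < t$, which is exactly the claim. The only mild subtlety — and the step I would be most careful about — is the case analysis on \emph{how} the first correct witness $p_a$ came to hold $v$: Lemma~\ref{firstv-sar} already rules out the line~\ref{followcondition-sar} route for the \emph{first} correct holder, so that lemma is doing the real work and I just need to invoke it cleanly; the rest is bookkeeping about which lines of which procedures can write to $E_1$, $E_b$, and $R_a$.
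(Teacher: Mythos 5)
Your proposal is correct and follows essentially the same route as the paper's proof: identify the first correct process to hold $v$ in its witness register, use Lemma~\ref{firstv-sar} to force the write to happen at line~\ref{echosupport-sar}, extract a correct echoer from the $n-f$ processes with $E_j=v$, and trace its echo back to a read of $E_1$ written by the correct writer at line~\ref{setter1-sar}. The only cosmetic difference is the counting ($n-f\ge f+1$ versus the paper's $n-2f>f\ge 1$), which is immaterial.
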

 \begin{proof}
        Suppose the writer $p_1$ is correct.
        Assume a correct process $p_i$ has $R_i=v$ at time $t$.
        Let $p_a$ be the first correct process with $R_a = v$, say at time $t_a\le t$.
        By Lemma~\ref{firstv-sar},
            $p_a$ writes $v$ in $R_a$ at line~\ref{echosupport-sar} at time $t_a\le t$.
         So there are at least $n-f$ processes $p_i$ such that $p_a$ reads $E_i=v$ at line~\ref{readecho-sar} before time $t_a$.
        Since there are at most $f$ faulty processes,
            there are at least $n-2f$ correct processes $p_i$ that have $E_i=v$ before time $t_a$.
        Since $n>3f$,
            $n-2f > f \ge 1$.
        So at least one correct process $p_i$ has $ E_i=v$ before time $t_a$.
        According to the code, 
        this implies that the correct process $p_i$ reads $E_1=v$ at line~\ref{reade1-sar} and then writes $v$ in $E_i$ at line~\ref{echo-sar} before time $t_a$.
        Since the writer $p_1$ is correct,
            $p_1$ wrote $v$ in $E_1$ at line~\ref{setter1-sar} in a $\Set(v)$ operation before time $t_a$.
        Since $t_a \le t$,
            $p_1$ wrote $v$ in $E_1$ at line~\ref{setter1-sar} in a $\Set(v)$ operation before time $t$.
    \end{proof}
	
\begin{lemma}\label{unqiuev-sar}
	For any correct processes $p_i$ and $p_j$,
		if $R_i = v$ at any time and $R_j = v'$ at any time,
            then $v=v'$.
\end{lemma}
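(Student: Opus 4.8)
The plan is to use the classic quorum–intersection argument, reducing the whole claim down to the echo layer. First I would record two structural facts about correct processes. (i) A correct process $p$ writes to its witness register $R_p$ at most once over the entire history: every write to $R_p$ (at line~\ref{echosupport-sar} or line~\ref{followcondition-sar}) is guarded by a test that $R_p = \bot$, and by Observation~\ref{correctonlywrites1-sar} the register stays non-$\bot$ once written. (ii) A correct process $p$'s echo register $E_p$ holds at most one non-$\bot$ value over the entire history: the only write to $E_p$ is at line~\ref{echo-sar}, guarded by the test $E_p = \bot$ at line~\ref{helpecho-sar}, and by Observation~\ref{correctonlyE-sar} it stays non-$\bot$ once written. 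Fact (ii) gives the key consequence: if a correct process ever has $E_p = v$ and also ever has $E_p = v'$ with $v,v' \neq \bot$, then $v = v'$.

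Next I would reduce to the ``first correct witness''. Suppose for contradiction that $p_i$, $p_j$ are correct, $R_i = v$ at some time, $R_j = v'$ at some time, and $v \neq v'$. Let $p_a$ be the first correct process whose $R_a$ becomes equal to $v$, and let $p_b$ be the first correct process whose $R_b$ becomes equal to $v'$; these exist since $p_i$ and $p_j$ are such processes. By Lemma~\ref{firstv-sar}, $p_a$ writes $v$ into $R_a$ at line~\ref{echosupport-sar} and $p_b$ writes $v'$ into $R_b$ at line~\ref{echosupport-sar}. Hence, by the test at line~\ref{echosupport-sar} (with the reads at line~\ref{readecho-sar}), $p_a$ read $E_\ell = v$ from at least $n-f$ processes $p_\ell$ (at times before its write), and $p_b$ read $E_m = v'$ from at least $n-f$ processes $p_m$ (at times before its write).

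Finally, the counting step: two subsets of $\{p_1,\dots,p_n\}$ of size at least $n-f$ each intersect in at least $2(n-f) - n = n - 2f$ processes, and since $n > 3f$ we have $n-2f > f$, so the intersection contains at least one correct process $p_c$. Then $p_c$ had $E_{p_c} = v$ at the time $p_a$ read $E_{p_c}$, and $E_{p_c} = v'$ at the time $p_b$ read $E_{p_c}$; both values are non-$\bot$, so by fact (ii) above $v = v'$, contradicting $v \neq v'$. This completes the argument.

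The main obstacle I anticipate is the bookkeeping needed to legitimately invoke Lemma~\ref{firstv-sar}: one must argue that ``the first correct process to witness $v$'' is well-defined and genuinely wrote its witness register via the echo rule (line~\ref{echosupport-sar}) rather than the $f+1$-witness rule (line~\ref{followcondition-sar}) — this is exactly what Lemma~\ref{firstv-sar} provides, but it has to be applied to $p_a$ and $p_b$ separately, and one must be careful that the $n-f$ echo reads happen strictly before the respective witness write, so that the intersection argument really refers to values held by the common correct process $p_c$ and fact (ii) can be applied. The quorum-intersection inequality and the ``more than $f$ implies at least one correct'' step are routine.
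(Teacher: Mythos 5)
Your proposal is correct and follows essentially the same route as the paper's proof: identify the first correct witnesses $p_a$ of $v$ and $p_b$ of $v'$, invoke Lemma~\ref{firstv-sar} to conclude that both wrote their witness registers via the $n-f$ echo rule at line~\ref{echosupport-sar}, and then use quorum intersection ($2(n-f)-n = n-2f > f$ since $n>3f$) to find a common correct process whose echo register would have to hold two distinct non-$\bot$ values, contradicting Observation~\ref{correctonlyE-sar}. Your ``fact (ii)'' is just a restatement of that observation combined with the $E_j=\bot$ guard, so no gap remains.
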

\begin{proof}
    Assume for contradiction, a correct process $p_i$,
		has $R_i = v$ at some time and 
            a correct process $p_j$ has $R_j = v'$ at some time,
            and $v\ne v'$.
    Let $p_a$ be the \emph{first} correct process with $R_a= v$, say at time~$t_a$.
    Let $p_b$ be the \emph{first} correct process with $R_b = v'$, say at time $t_b$.
    Since $v\ne v'$, by Observation~\ref{correctonlywrites1-sar}, $p_a\ne p_b$.
    By Lemma~\ref{firstv-sar},
        $p_a$ writes $v$ in $R_a$ at line~\ref{echosupport-sar} at time $t_a$ 
            and $p_b$ writes $v'$ in $R_b$ also at line~\ref{echosupport-sar} at time $t_b$.
    So there is a set $S_v$ of at least $n-f$ processes $p_i$ such that $p_a$ reads $E_i = v$ at line~\ref{readecho-sar},
    and there is a set $S_{v'}$ of at least $n-f$ processes $p_i$ such that $p_b$ reads $E_i=v'$ at line~\ref{readecho-sar}.
    Since $n>3f$,
        this implies that $S_v \cap S_{v'}$ contains at least one correct process $p_j$,
        and $p_j$ has $E_j=v$ and $E_j = v'$ --- a contradiction to Observation~\ref{correctonlyE-sar}.
\end{proof}

\begin{lemma}\label{termination-sar-1}
	Every $\Set(-)$ operation by a {\ct} writer completes.
\end{lemma}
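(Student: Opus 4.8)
The plan is to show that in Algorithm~\ref{code-sar} a $\Set(v)$ operation executed by a correct writer $p_1$ cannot block forever. By inspection of the code, the only place where a $\Set(v)$ can fail to return is the repeat-until loop at lines~\ref{repeat-sar}--\ref{vconfirm-sar}: the writer loops reading all registers $R_i$ until it sees that at least $n-f$ of them equal $v$. So it suffices to prove that eventually at least $n-f$ correct processes have $R_i=v$ and keep it there; then, since each such $R_i$ is a register of a correct process and (by Observation~\ref{correctonlywrites1-sar}) its value never changes after it equals $v$, the writer's loop condition will eventually be satisfied and the operation returns done.

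First I would dispose of the trivial case: if $p_1$ finds $E_1\ne\bot$ at line~\ref{stick-sar}, it returns done immediately, so assume $p_1$ reaches line~\ref{setter1-sar} and writes $v$ into $E_1$. Since $p_1$ is correct, by Observation~\ref{correctonlyE-sar} we have $E_1=v$ from that point on. Now consider any correct process $p_j$ executing its $\fresh()$ procedure. In each iteration $p_j$ checks whether $E_j=\bot$ at line~\ref{helpecho-sar}; the key sub-claim is that every correct process $p_j$ eventually has $E_j=v$. Indeed, once $E_1=v$, the first time after that moment that $p_j$ reaches line~\ref{helpecho-sar} with $E_j$ still $\bot$, it reads $e_j\gets E_1=v$ and writes $E_j\gets v$; and if $E_j$ was already non-$\bot$, then since $p_j$ is correct it must have copied a value it earlier read from $E_1$, which can only be $v$ (because $E_1$ starts at $\bot$ and $p_1$ only ever writes $v$). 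Either way, by Observation~\ref{correctonlyE-sar}, $E_j=v$ permanently for every correct $p_j$. Since there are at least $n-f$ correct processes, eventually at least $n-f$ processes have $v$ in their $E_i$ registers, forever.

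Next, using this, I would show that every correct process $p_j$ eventually has $R_j=v$ permanently. Fix a correct $p_j$. At line~\ref{echosupport-sar}, whenever $p_j$ finds $R_j=\bot$ and reads the $E_i$ registers at line~\ref{readecho-sar} after the time when $n-f$ correct processes have pinned $E_i=v$, it will see the threshold $|\{e_i : e_i=v\}|\ge n-f$ met and write $v$ into $R_j$. But one must be careful: $p_j$ could have already written some value $v'$ into $R_j$ earlier. If $R_j=v'\ne\bot$, then by Lemma~\ref{unqiuev-sar} (uniqueness of witness values among correct processes) combined with the fact that the writer is correct and by Lemma~\ref{wbeforer-sar} only $v$ was ever written to $E_1$, we get $v'=v$. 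Either way $R_j=v$ eventually, and by Observation~\ref{correctonlywrites1-sar} it stays $v$. Hence eventually the $\ge n-f$ correct processes all have $R_i=v$ permanently, so on its next iteration of the repeat-until loop $p_1$ finds the condition of line~\ref{vconfirm-sar} satisfied and returns done at line~\ref{wr-sar}.

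The main obstacle I anticipate is the careful handling of the ``$R_j$ already non-$\bot$'' and ``$E_j$ already non-$\bot$'' cases: one must rule out that a correct process got ``stuck'' having committed $E_j$ or $R_j$ to a value other than $v$, which is exactly where the already-proved uniqueness lemmas (Lemma~\ref{unqiuev-sar}, Lemma~\ref{wbeforer-sar}) and the hypothesis that the writer is correct (so $E_1$ only ever holds $\bot$ then $v$) are essential. A secondary subtlety is that $p_j$ only updates $R_j$ via line~\ref{echosupport-sar} when it \emph{happens} to reach that line with $R_j=\bot$ \emph{after} the echo threshold is met; since a correct $p_j$ loops forever, it reaches that line infinitely often, so this is fine, but the argument should state it explicitly. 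Everything else is routine bookkeeping over the code.
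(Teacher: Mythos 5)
Your proposal is correct and follows essentially the same route as the paper's proof: after the writer publishes $v$ in $E_1$, the $\fresh()$ loop propagates $v$ to every correct $E_j$, then to every correct $R_j$ (with Lemma~\ref{unqiuev-sar} and Lemma~\ref{wbeforer-sar} ruling out any other value), so the writer's wait at line~\ref{vconfirm-sar} eventually succeeds. Your explicit treatment of the ``$E_j$ or $R_j$ already non-$\bot$'' cases is a slightly more careful spelling-out of steps the paper compresses, but it is not a different argument.
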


\begin{proof}
    Suppose the writer $p_1$ is {\ct}.
    Let $\Set(v)$ be an operation by $p_1$.
    \begin{compactitem}
        \item Case 1: $p_1$ returns at line~\ref{stick-sar}.
            Then $\Set(v)$ completes.
        \item Case 2: $p_1$ does not return at line~\ref{stick-sar}.
             According to the code of the $\Set()$ procedure,
              $p_1$ writes $v$ into $E_1$ at line~\ref{setter1-sar} in the $\Set(v)$, say at time $t$.
            By Observation~\ref{correctonlyE-sar},
                $E_1=v$ at all times $t'\ge t$.
            And by line~\ref{stick-sar},
                $E_1=\bot$ at all times $t'<t$.
            By lines~\ref{helpecho-sar}-\ref{echo-sar},
                eventually all the $n-f$ correct processes $p_j$ have $E_j = v$.
            By Observation~\ref{correctonlyE-sar} and line~\ref{echosupport-sar},
                all correct processes $p_j$ eventually have $R_j \ne \bot$.
            By Lemma~\ref{unqiuev-sar} and Observation~\ref{correctonlywrites1-sar},
                there is a time $t_v$ and a value $v'$ such that all the $n-f$ correct processes $p_j$ have $R_j = v'$ at all times $t' >t_v$.
            By Lemma~\ref{wbeforer-sar} and Observation~\ref{correctonlyE-sar},
                $v'=v$.
            So all the $n-f$ correct processes $p_j$ have $R_j = v$ at all times $t' >t_v$.
            Since $p_1$ executes line~\ref{repeat-sar}-\ref{until-sar} infinitely often,
                $p_1$ eventually finds that the condition $|\{ r_i ~|~ r_i = v\}| \ge n-f$ holds in line~\ref{vconfirm-sar} and so returns at line~\ref{wr-sar}. 
            So $\Set(v)$ completes.
    \end{compactitem}
\end{proof}

	\begin{lemma}\label{ri1beforeset1-sar}
		Let $\Test$ be an operation by a correct reader $p_k$.
		In this $\Test$, if $\langle v,p_i\rangle$ is in $\setv$ of $p_k$ at time $t$ and $p_i$ is a correct process,
			then $R_i=v$ at all times $t' \ge t$.
		\end{lemma}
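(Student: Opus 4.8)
\textbf{Proof plan for Lemma~\ref{ri1beforeset1-sar}.}
The statement is the ``witness-register stability'' lemma for the sticky register: if a correct reader $p_k$ has placed $\langle v, p_i\rangle$ into its local set $\setv$ during a $\Test$ operation, and $p_i$ is correct, then $p_i$'s witness register $R_i$ equals $v$ from that point onward. The plan is to trace backward from the moment $p_k$ inserts $\langle v, p_i\rangle$ into $\setv$ to the moment $p_i$ wrote $v$ into $R_i$, and then invoke stability of correct processes' registers.

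First I would locate the insertion step. By lines~\ref{check1-sar}--\ref{setv-sar} of the $\Test()$ procedure, $p_k$ inserts $\langle u_j, p_j\rangle$ into $\setv$ only when, in some iteration of the while loop, it read a tuple $\langle u_j, c_j\rangle$ with $u_j \neq \bot$ from $R_{jk}$ at line~\ref{readri-sar} (with $c_j \ge C_k$, by the exit condition at line~\ref{until-sar}), and here $u_j = v$ and $p_j = p_i$. Let $t_k^1 \le t$ be the time of that insertion and let $t_k^0 < t_k^1$ be the time $p_k$ read $\langle v, c_j\rangle$ from $R_{ik}$. Since $p_i$ is correct and $R_{ik}$ is initialized to $\langle \bot, 0\rangle$, the value $\langle v, c_j\rangle$ with $v\neq\bot$ must have been written into $R_{ik}$ by $p_i$ itself at line~\ref{fresh1-sar} of its $\fresh()$ procedure, at some time $\le t_k^0$. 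Examining line~\ref{fresh1-sar}, the first component $r_j$ written there is the value $p_i$ read from $R_i$ at line~\ref{rj-sar} in that same iteration; since that value is $v \neq \bot$, it follows that $R_i = v$ at that read, hence $R_i = v$ at some time $t_k^0' \le t_k^0 < t_k^1 \le t$.

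Finally, since $p_i$ is correct, Observation~\ref{correctonlywrites1-sar} gives that once $R_i = v$ at time $t_k^0'$, then $R_i = v$ at all times $t' \ge t_k^0'$, and in particular at all times $t' \ge t$. This completes the argument. I do not anticipate a serious obstacle: the only subtle point is making sure that the tuple $\langle v, c_j\rangle$ read by $p_k$ genuinely originated from $p_i$ writing at line~\ref{fresh1-sar} (rather than from the initial value), which is handled by the observation that $v \neq \bot$ while $R_{ik}$ is initialized with first component $\bot$, together with the fact that a correct $p_i$ writes $R_{ik}$ only at line~\ref{fresh1-sar}. This mirrors exactly the structure of Lemma~\ref{ri1beforeset1-mar} and Lemma~\ref{ri1beforeset1-tar}, so the proof should be a routine adaptation.
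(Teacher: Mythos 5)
Your proof is correct and follows essentially the same route as the paper's: trace the insertion into $\setv$ back to the read of $\langle v,-\rangle$ from $R_{ik}$, argue that a correct $p_i$ must have written this non-$\bot$ value at line~\ref{fresh1-sar} after reading $R_i=v$ at line~\ref{rj-sar}, and conclude with Observation~\ref{correctonlywrites1-sar}. No gaps.
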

		\begin{proof}		
			Let $\Test$ be an operation by a correct reader $p_k$.
			Consider any time $t$ such that $\langle v,p_i\rangle$ is in $\setv$ of $p_k$ at time $t$ and $p_i$ is a correct process.  
			Consider the iteration of the while loop at line~\ref{whileloop-sar} of $\Test$ in which 
				$p_k$ inserts $\langle v,p_i\rangle$ into $\setv$ at line~\ref{setv-sar},
				say at time $t_k^1 \le t$.
			By line~\ref{check1-sar}, 
				$p_k$ reads $\langle v, - \rangle$ with $v\ne\bot$ from $R_{ik}$ at line~\ref{readri-sar},
				say at time $t_k^0 < t_k^1$.
			Since $p_i$ is correct and $R_{ik}$ is initialized to $\langle \bot, - \rangle$, 
				$p_i$ wrote $\langle v, - \rangle$ into $R_{ik}$ at line~\ref{fresh1-sar} by time $t_k^0$.
			By lines~\ref{rj-sar}-\ref{fresh1-sar},
				$R_i=v$ by time $t_k^0$.
			Since $t_k^0 < t_k^1$ and $t_k^1 \le t$,
				$t_k^0 < t$.
			Since $p_i$ is correct,
				by Observation~\ref{correctonlywrites1-sar},
				$R_i=v$ at all times $t' \ge t$.
		\end{proof}
        
By Lemma~\ref{unqiuev-sar} and Lemma~\ref{ri1beforeset1-sar},
    we have the following corollary.
\begin{corollary}\label{unqiuesetv-sar}
        Let $\Test$ be an operation by a correct reader $p_k$.
	In this $\Test$, if $\langle v, p_i \rangle$ and $\langle v', p_j \rangle$ are in $\setv$ of $p_k$ in this $\Test$ and $p_i$ and $p_j$ are both correct,
            then $v=v'$.
\end{corollary}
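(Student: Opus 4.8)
The plan is to obtain this corollary as a direct composition of the two cited lemmas. First I would fix a $\Test$ operation by a correct reader $p_k$ and assume its hypothesis: that during this $\Test$ both $\langle v, p_i\rangle$ and $\langle v', p_j\rangle$ appear in $\setv$ of $p_k$, where $p_i$ and $p_j$ are correct. Since $\setv$ is only ever extended and never reset in the $\Test$ procedure, ``appearing in $\setv$ during the $\Test$'' means there is a time $t_i$ at which $\langle v, p_i\rangle \in \setv$ of $p_k$ and a time $t_j$ at which $\langle v', p_j\rangle \in \setv$ of $p_k$. Also, by the guard $u_j \neq \bot$ at line~\ref{check1-sar} that precedes every insertion into $\setv$, the values $v$ and $v'$ are non-$\bot$, so they are valid inputs to Lemma~\ref{unqiuev-sar} under the paper's convention that $v$ ranges over non-$\bot$ values.

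Next I would apply Lemma~\ref{ri1beforeset1-sar} twice. Applied to $\langle v, p_i\rangle$ at time $t_i$, using that $p_i$ is correct, it gives $R_i = v$ at all times $t' \ge t_i$, and in particular $R_i = v$ at some time. Applied symmetrically to $\langle v', p_j\rangle$ at time $t_j$ with $p_j$ correct, it gives $R_j = v'$ at some time. Finally, I would invoke Lemma~\ref{unqiuev-sar} on the correct processes $p_i$ and $p_j$: since $R_i = v$ holds at some time and $R_j = v'$ holds at some time, the lemma yields $v = v'$, which is exactly the claim.

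I do not expect a genuine obstacle here, since the statement is an immediate consequence of the two lemmas. The only points deserving a sentence of care are (i) noting that membership in $\setv$ is persistent, so the ``at time $t$'' hypothesis of Lemma~\ref{ri1beforeset1-sar} is satisfied by any $\setv$-membership occurring within the $\Test$, and (ii) confirming that the values stored in $\setv$-tuples are non-$\bot$, which is needed for Lemma~\ref{unqiuev-sar} to apply; both follow directly from the code of the $\Test$ procedure.
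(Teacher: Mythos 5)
Your proof is correct and is exactly the composition the paper intends: the paper states this corollary with no separate proof, deriving it directly "By Lemma~\ref{unqiuev-sar} and Lemma~\ref{ri1beforeset1-sar}" in just the way you do (two applications of Lemma~\ref{ri1beforeset1-sar} to get $R_i=v$ and $R_j=v'$ at some times, then Lemma~\ref{unqiuev-sar}). Your two side remarks are fine, though the persistence of $\setv$ is not actually needed since Lemma~\ref{unqiuev-sar} allows the two register values to hold at different times.
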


	\begin{lemma}\label{asker1-sar}
        Suppose for some value $v$,
		there is a time $t$ such that
		at least $f+1$ processes $p_i$ have $R_i = v$ at all times $t' \ge t$.
		Consider any iteration of the while loop of the $\fresh()$ procedure executed by a correct process $p_j$.
		If $p_j$ inserts $p_k$ into $askers$ in line~\ref{askers-sar} at some time $t_a\ge t$, 
		then $p_j$ writes $\langle v , - \rangle$ into $R_{jk}$ at line~\ref{fresh1-sar}.
		\end{lemma}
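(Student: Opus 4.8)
\textbf{Proof plan for Lemma~\ref{asker1-sar}.}

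The goal is to show that once at least $f+1$ processes permanently have $v$ in their witness registers $R_i$ (from some time $t$ onward), then any correct helper $p_j$ that detects an asker $p_k$ at a time $t_a \ge t$ will necessarily write a tuple $\langle v, -\rangle$ into $R_{jk}$ during that same iteration of its $\fresh()$ loop. The plan is to trace through the body of the $\fresh()$ procedure (lines~\ref{helpecho-sar}--\ref{setprev-sar}) for the iteration in which $p_j$ inserts $p_k$ into $askers$ at line~\ref{askers-sar} at time $t_a$, and argue that by the time $p_j$ reaches line~\ref{fresh1-sar} its local variable $r_j$ must equal $v$.

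First I would fix such an iteration of the while loop and observe that, since $p_j$ inserts $p_k$ into $askers$ at line~\ref{askers-sar}, we have $askers \ne \emptyset$ at line~\ref{replyasker-sar}, so $p_j$ executes the block of lines~\ref{ifnotone-sar}--\ref{setprev-sar}. There are two cases depending on the value of $R_j$ when $p_j$ tests the condition at line~\ref{ifnotone-sar}. In the first case $R_j \ne \bot$ already at that point; then since $p_j$ is correct, by Observation~\ref{correctonlywrites1-sar} $R_j$ holds some fixed value $v'$ permanently, and by Lemma~\ref{unqiuev-sar} applied to $p_j$ and one of the (at least $f+1$, hence at least one) correct processes that permanently has $v$ in its register, we get $v' = v$; thus $r_j = v$ after line~\ref{rj-sar} and $p_j$ writes $\langle v, -\rangle$ into $R_{jk}$ at line~\ref{fresh1-sar}. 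In the second case $R_j = \bot$ at line~\ref{ifnotone-sar}; then $p_j$ reads every $R_i$ at line~\ref{sets-sar} at some time after $t_a \ge t$, so it reads $R_i = v$ from at least $f+1$ processes (those guaranteed by the hypothesis), hence the condition of line~\ref{followcondition-sar} holds for this $v$ and $p_j$ sets $R_j \gets v$. (If several non-$\bot$ values satisfy the $f+1$-threshold, the hypothesis that $\ge f+1$ processes have exactly $v$ pins down the value written, using that correct processes' registers are immutable once set — though I should double-check the code handles the "$\exists v$" choice deterministically; if not, I would note that any value it picks has $\ge f+1$ supporters, at least one correct, forcing it to be $v$ by Lemma~\ref{unqiuev-sar}.) Either way $R_j = v$ before line~\ref{rj-sar}, so $r_j = v$ and $p_j$ writes $\langle v, -\rangle$ into $R_{jk}$ at line~\ref{fresh1-sar}.

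The main obstacle I anticipate is the bookkeeping around \emph{when} exactly $R_j$ could have been written and whether the "$\exists v \ne \bot$ such that $|\{r_i \mid r_i = v\}| \ge f+1$" test could fire on a value other than the intended $v$. The clean way around this is precisely the uniqueness machinery already proved: Lemma~\ref{unqiuev-sar} says no two correct processes can ever hold different non-$\bot$ values in their $R$-registers, and among any $f+1$ processes holding a common value at least one is correct. So regardless of which qualifying value the code selects, it must coincide with the $v$ of the hypothesis. A secondary subtlety is that $p_j$ might write $R_j$ earlier in the same iteration at line~\ref{echosupport-sar} (the $n-f$ echo rule) rather than at line~\ref{followcondition-sar}; but this only makes $R_j$ non-$\bot$ sooner, and by the uniqueness argument it still equals $v$, so the conclusion is unaffected. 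This mirrors the structure of the analogous Lemma~\ref{asker1-mar} and Lemma~\ref{asker1-tar} proofs, with the witness-selection rule ($n-f$ echoes or $f+1$ witnesses) substituted in place of the verifiable register's "$v \in R_1$ or $f+1$ witnesses" rule.
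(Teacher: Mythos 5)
Your proposal is correct and follows essentially the same argument as the paper: the same case split on whether $R_j=\bot$ at line~\ref{ifnotone-sar}, the same use of the hypothesis to trigger the $f{+}1$ threshold at line~\ref{followcondition-sar}, and the same appeal to Lemma~\ref{unqiuev-sar} to rule out any competing value (the paper isolates this last point as an explicit claim, exactly as you anticipated). The extra subtlety you flag about $R_j$ possibly being set earlier at line~\ref{echosupport-sar} is subsumed by your $R_j\ne\bot$ case and does not change anything.
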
 
	
	\begin{proof}
         Suppose for some value $v$,
		there is a time $t$ such that
		at least $f+1$ processes $p_i$ have $R_i = v$ at all times $t' \ge t$.
		Consider any iteration of the while loop of the $\fresh()$ procedure executed by a correct process $p_j$.
		Suppose $p_j$ inserts $p_k$ into $askers$ in line~\ref{askers-sar} at some time $t_a\ge t$.
            Then when $p_j$ executes line~\ref{ifnotone-sar} at time $t_1 > t_a \ge t$,
                there are two cases:
            \begin{compactitem}
                \item Case 1: $R_j=\bot$.
                    Then in line~\ref{sets-sar}, $p_j$ reads every $R_i, 1\le i \le n$ after time $t$.
		Since at least $f+1$ processes $p_i$ have $R_i=v$ at all times $t' \ge t$,
			there are at least $f+1$ processes $p_i$ such that $p_j$ reads $R_i=v$ at line~\ref{sets-sar} after time $t$.
            So $p_j$ finds that the condition $|\{r_i|r_i = v\}|\ge f+1$ of line~\ref{followcondition-sar} holds for value $v$.
            \begin{claim}
                There is no other value $v'\ne v$ such that $p_j$ finds that the condition $|\{r_i|r_i = v'\}|\ge f+1$ of line~\ref{followcondition-sar} holds for value $v'$.
            \end{claim}
            \begin{proof}
                Since at least $f+1$ processes $p_i$ have $R_i=v$ at all times $t' \ge t$,
                    and there are at most $f$ faulty processes,
                    there is at least one correct process $p_i$ that has $R_i=v$ at all times $t' \ge t$.
                Then by Lemma~\ref{unqiuev-sar},
                    no correct process has $R_i=v'$ for any value $v'\ne v$ at any time.
                Thus, since there are at most $f$ faulty processes,
                    there is no value $v'\ne v$ such that $p_j$ reads $v'$ from $R_i$ for at least $f+1$ processes $p_i$ at line~\ref{sets-sar}.
                So there is no value $v'\ne v$ such that $p_j$ finds that the condition $|\{r_i|r_i = v'\}|\ge f+1$ of line~\ref{followcondition-sar} holds for value $v'$.
            \end{proof}
		 The above claim implies that $p_j$ finds that the condition $|\{r_i|r_i = v\}|\ge f+1$ of line~\ref{followcondition-sar} holds only for the value $v$ and so $p_j$ writes $v$ into $R_j$ at line~\ref{followcondition-sar}.
                \item Case 2: $R_j\ne \bot$.
                Since at least $f+1$ processes $p_i$ have $R_i=v$ at all times $t' \ge t$ and there are at most $f$ faulty processes,
                    at least one correct process $p_i$ has $R_i=v$ at all times $t' \ge t$.
                Since $R_j\ne \bot$ at time $t_1 > t$ and $p_j$ is correct, by Lemma~\ref{unqiuev-sar}, 
                    $R_j=v$ at time $t_1$.
            \end{compactitem}
		Therefore, in both cases, $p_j$ writes $\langle v , - \rangle$ into $R_{jk}$ at line~\ref{fresh1-sar}.
	\end{proof}

\begin{lemma}\label{onecorrectset0-sar}
	Let $\Test$ be an operation by a correct reader $p_k$.
	For each iteration of the while loop at line~\ref{whileloop-sar} of $\Test$, 
		the following loop invariants hold at line~\ref{whileloop-sar}:
	\begin{enumerate}[(1)]
            \item\label{inv0} $\setb$ and $\setv$ contain different processes. 
            \item\label{inv1} $\nexists v$ such that $|\{p_j|\langle v,p_j\rangle\in \setv\}| \ge n-f$, and $|\setb| \le f$.

		\item\label{inv2}  If there are at least $f+1$ correct processes in $\setv$,
		then there is no correct process in $\setb$.
	\end{enumerate}
\end{lemma}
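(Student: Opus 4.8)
The plan is to prove the three invariants simultaneously by induction on the number of completed iterations of the while loop at line~\ref{whileloop-sar} of the $\Test$ operation of $p_k$, following the same template as the proofs of Lemma~\ref{onecorrectset0-m} and Lemma~\ref{onecorrectset0-tar}. The only real difference is that here $\setv$ records value/process pairs rather than plain processes, so the test ``$v\in r_j$'' becomes ``$u_j\neq\bot$'' and the $\true$-return guard becomes ``some value is witnessed by at least $n-f$ processes in $\setv$'' (line~\ref{return1-sar}). For the base case, in the first iteration $\setb=\setv=\emptyset$, so (1) holds trivially, (2) holds because both sets are empty and $n-f\ge 1$, and (3) holds vacuously.

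For the inductive step, assume (1)--(3) hold at the start of an iteration $I$. If $p_k$ never satisfies the exit condition of line~\ref{until-sar} in $I$, or if $\Test$ returns at line~\ref{return1-sar-1} or line~\ref{return0-sar} during $I$, there is no next iteration and nothing to prove. Otherwise $p_k$ leaves the repeat-until loop having read $\langle u_j,c_j\rangle$ from $R_{jk}$ for some $p_j\in S$; by the definition of $S$ at line~\ref{BS}, $p_j\notin\setb\cup\setv$, and $p_j$ is the only process that $I$ adds to either set, so (1) is preserved (trivially so if $\setb$ is reset to $\emptyset$). Invariant (2) holds because, after the update to $\setv$ or $\setb$, the guards of lines~\ref{return1-sar-1} and~\ref{return0-sar} must both fail (else $I$ would return), and $p_k$ changes neither set again in $I$. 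For invariant (3) I would split on whether $p_k$ executes line~\ref{setv-sar} in $I$: if it does ($u_j\neq\bot$) then $\setb$ is reset to $\emptyset$ at line~\ref{empty0-sar} and stays empty, so (3) holds; if not, $\setv$ is unchanged and $p_k$ inserts some $p_a$ into $\setb$ at line~\ref{set0-sar}, and (3) is immediate when $p_a$ is faulty, or when $p_a$ is correct but $\setv$ has fewer than $f+1$ correct processes at the start of $I$.

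The remaining subcase --- $p_a$ correct and $\setv$ containing at least $f+1$ correct processes at the start of $I$ --- is the main obstacle, and I would show it cannot occur. Let $p_b$ be the last correct process inserted into $\setv$ before $I$, say at time $t_k^0$. Since $\setv$ never shrinks, at $t_k^0$ there are already at least $f+1$ correct processes in $\setv$; by Corollary~\ref{unqiuesetv-sar} they all carry the same (non-$\bot$) value $v$, and by Lemma~\ref{ri1beforeset1-sar} at least $f+1$ correct processes $p_i$ have $R_i=v$ at all times $t'\ge t_k^0$. In iteration $I$, $p_k$ increments $C_k$ at line~\ref{ckplus-sar} at some time $t_k^1>t_k^0$ to a value $c^*\ge 1$ (Observation~\ref{monock-sar}); since $p_k$ adds $p_a$ to $\setb$, it read $\langle\bot,c\rangle$ with $c\ge c^*$ from $R_{ak}$ (lines~\ref{readri-sar}, \ref{until-sar}, \ref{checkb-sar}), and since $R_{ak}$ is initialized to $\langle\bot,0\rangle$ and $c\ge 1$, the correct process $p_a$ wrote that tuple at line~\ref{fresh1-sar}.

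To finish, I would trace the iteration of $p_a$'s \fresh() loop in which that write occurred: $p_a$ read $c\ge c^*$ from $C_k$ at line~\ref{collectck-sar} at a time $\ge t_k^1$ (Observation~\ref{monock-sar}), hence added $p_k$ to $askers$ at line~\ref{askers-sar} at some time $t_a^2>t_k^0$; by Lemma~\ref{asker1-sar}, applied with the $\ge f+1$ processes that hold $R_i=v$ from $t_k^0$ onward, $p_a$ must have written $\langle v,-\rangle$ with $v\neq\bot$ into $R_{ak}$ in that very iteration --- contradicting that it wrote $\langle\bot,c\rangle$. This rules out the bad subcase and completes the induction. I expect the timestamp bookkeeping --- lining up $t_k^0$, $t_k^1$, and $c^*$ with the correct iteration of $p_a$'s \fresh() loop so that Lemma~\ref{asker1-sar} becomes applicable --- to be the only delicate step, exactly as in the corresponding argument for the {\mar}.
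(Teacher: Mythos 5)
Your proposal is correct and follows essentially the same route as the paper's proof: the same induction, the same preservation arguments for invariants (1) and (2), the same case split for invariant (3), and the same contradiction for the critical subcase via Corollary~\ref{unqiuesetv-sar}, Lemma~\ref{ri1beforeset1-sar}, the timestamp argument on $C_k$ and $R_{ak}$, and Lemma~\ref{asker1-sar}. The only cosmetic difference is that you anchor $t_k^0$ at the last \emph{correct} process inserted into $\setv$ rather than the last process inserted, which is equally valid.
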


\begin{proof}
	Let $\Test$ be an operation by a correct reader $p_k$.
    We now prove the invariants by induction on the number of iterations of the while loop at line~\ref{whileloop-sar} of $\Test$.

		\begin{itemize}
		\item Base Case: 
			Consider the first iteration of the while loop at line~\ref{whileloop-sar} of $\Test$.
			Since $p_k$ initializes $\setb$ and $\setv$ to $\emptyset$,
                    (\ref{inv0}) holds trivially.
        Moreover, $\forall v$ $|\{p_j|\langle v,p_j\rangle\in \setv\}|= 0$ and $|\setb| = 0$.
                
			Since $n \ge 3f+1$ and $f \ge 0$, $n-f > 0$.
                Thus (\ref{inv1}) holds.
			Since $\setb$ is empty,
			(\ref{inv2}) holds trivially.
	
		\item Inductive Case:
			Consider any iteration $I$ of the while loop at line~\ref{whileloop-sar} of $\Test$.
			Assume that (\ref{inv0}), (\ref{inv1}), and (\ref{inv2}) hold at the beginning of $I$.
			
			If $p_k$ does not find that the condition of line~\ref{until-sar} holds in $I$,
                    $p_k$ does not move to the next iteration of the while loop,
				and so (\ref{inv0}), (\ref{inv1}), and (\ref{inv2}) trivially hold
				at the start of the ``next iteration'' (since it does not exist).
			Furthermore, if $\Test$ returns at line~\ref{return1-sar-1} or line~\ref{return0-sar} in $I$,
				(\ref{inv0}), (\ref{inv1}), and (\ref{inv2}) trivially hold
				at the start of the ``next iteration'' (since it does not exist).

			We now consider any iteration $I$ of the while loop at line~\ref{whileloop-sar} of $\Test$
				in which $p_k$ finds that the condition of line~\ref{until-sar} holds 
				and $\Test$ does not return at line~\ref{return1-sar-1} or line~\ref{return0-sar}.
                We first show that (\ref{inv0}) remains true at the end of $I$.
                Since $p_k$ finds that the condition of line~\ref{until-sar} holds,
                     $p_k$ inserts a process, say $p_j$, into $\setb$ or inserts $\langle -,p_j\rangle$ into $\setv$ in $I$.
                By line~\ref{until-sar} and line~\ref{BS}, $p_j \not\in \setb$ and $\langle -,p_j\rangle \not\in \setv$.
                    Note that $p_j$ or $\langle -,p_j\rangle$ is the only process (or tuple) that $p_k$ inserts into $\setb$ or $\setv$ in $I$.
                So since (\ref{inv0}) holds at the beginning of $I$, and $p_j \not\in \setb$ and $\langle -,p_j\rangle \not\in \setv$,
                    (\ref{inv0}) holds at the end of $I$.
                    
			We now show that (\ref{inv1}) remains true at the end of $I$.
			Since $\Test$ does not return at line~\ref{return1-sar-1} or line~\ref{return0-sar},
				$p_k$ finds that the condition "$\exists v$ such that $|\{p_j|\langle v,p_j\rangle\in \setv\}|\ge n-f$" does \emph{not} hold at line~\ref{return1-sar} and
                the condition $|\setb| > f$ also does \emph{not} hold
                at line~\ref{return0-sar}.
                So the condition "$\nexists v$ such that $|\{p_j|\langle v,p_j\rangle\in \setv\}|\ge n-f$" holds at line~\ref{return1-sar} and the condition "$|\setb| \le f$" holds
                at line~\ref{return0-sar}.
               
			Since $p_k$ does not change $\setv$ after line~\ref{return1-sar} and does not change $\setb$ after line~\ref{return0-sar} in $I$,
            it is now clear that (\ref{inv1}) holds at the end of $I$.

			We now show that (\ref{inv2}) remains true at the end of  $I$.
			There are two cases:
			\begin{enumerate}
				\item Case 1: $p_k$ executes line~\ref{setv-sar} in $I$.
					Then $p_k$ changes $\setb$ to $\emptyset$ at line~\ref{empty0-sar}.
					Since $p_k$ does not change $\setb$ after line~\ref{empty0-sar} in $I$,
						$\setb$ remains $\emptyset$ at the end of $I$.
					So (\ref{inv2}) holds trivially at the end of $I$.

			\item Case 2: $p_k$ does not execute line~\ref{setv-sar} in $I$.
                    So $p_k$ does not change $\setv$ in $I$.
					Since $p_k$ finds that the condition of line~\ref{until-sar} holds in $I$,
						$p_k$ executes line~\ref{set0-sar} in $I$.
					Let $p_a$ be the process that $p_k$ inserts into $\setb$ in line~\ref{set0-sar} in $I$.
					There are two cases:
					\begin{itemize}
					\item Case 2.1: $p_a$ is faulty. 
						Then the number of correct processes in $\setb$ does not change in $I$.
                        So, since $\setv$ also does not change in $I$, (\ref{inv2}) remains true at the end of $I$.

					\item Case 2.2: $p_a$ is correct. 
					There are two cases:

					\begin{itemize}
					\item Case 2.2.1:  there are fewer than $f+1$ correct processes in $\setv$ at the beginning of $I$.
						Since $p_k$ does not change $\setv$ in $I$, 
						(\ref{inv2}) remains true at the end of $I$.

					\item Case 2.2.2:  there are at least $f+1$ correct processes in $\setv$ at the beginning of $I$.
						We now show that this case is impossible.
						
					Let $p_b$ be the last process that $p_k$ inserted into $\setv$.
						Note that this insertion occurred before $I$, 
							say at time $t_k^0$. 
						Since there are at least $f+1$ correct processes in $\setv$ at the beginning of $I$
							and $p_b$ is the last process that $p_k$ inserted into $\setv$ before $I$,
							there are at least $f+1$ correct processes $p_i$ in $\setv$ at time $t_k^0$.
                            By Corollary~\ref{unqiuesetv-sar},
                                there is a value $v\ne \bot$
                                such that for at least $f+1$ correct processes $p_i$,
                                $\langle v,p_i\rangle \in \setv$ at time $t_k^0$. 
						By Lemma~\ref{ri1beforeset1-sar},
							at least $f+1$ correct processes $p_i$ have $R_i =v$ at all times $t' \ge t_k^0$ ($\star$).

						Recall that $p_k$ inserts $p_a$ into $\setb$ in the iteration $I$.
						So in $I$,
							$p_k$ increments $C_k$ at line~\ref{ckplus-sar}, say at time $t_k^1$. 
                            Note that $t_k^1 > t_k^0$. 
						Let $c^*$ be the value of $C_k$ after $p_k$ increments $C_k$ at time~$t_k^1$.
						Since $C_k$ is initialized to 0,
							by Observation~\ref{monock-sar}, $c^* \ge 1$.
						By lines~\ref{readri-sar}, \ref{until-sar}, and~\ref{checkb-sar},
							$p_k$ reads $\langle \bot, c \rangle$ from $R_{ak}$ with $c \ge c^*$ at line~\ref{readri-sar} in $I$.
						Since $c\ge c^*\ge 1$ and $R_{ak}$ is initialized to $\langle -,0\rangle$,
							it must be that $p_a$ wrote $\langle \bot, c \rangle$ into $R_{ak}$ at line~\ref{fresh1-sar} ($\star\star$).

						Consider the iteration of the while loop of the $\fresh()$ procedure in which $p_a$ writes $\langle \bot, c \rangle$ into $R_{ak}$ at line~\ref{fresh1-sar}.
						Note that $c$ is the value that $p_a$ read from $C_k$ in line~\ref{collectck-sar} of this iteration;
							 say this read occurred at time $t_a^1$.
						Since $c \ge c^*$,
							by Observation~\ref{monock-sar},
							$t_a^1 \ge t_k^1$.
						Then $p_a$ inserts $p_k$ into $askers$ at line~\ref{askers-sar}, say at time $t_a^2 > t_a^1$.
						Since $t_a^1 \ge t_k^1$ and $t_k^1 > t_k^0$,
							$t_a^2 > t_k^0$.
						So $p_a$ inserts $p_k$ into $askers$ at line~\ref{askers-sar} after $t_k^0$.
						Thus, by ($\star$) and Lemma~\ref{asker1-sar},
							$p_a$ writes $\langle v , c \rangle$ into $R_{ak}$ at line~\ref{fresh1-sar} in this iteration.
					Since $v\ne\bot$,
                        this contradicts ($\star\star$).
						So this case is impossible.
					\end{itemize}
					\end{itemize}
				\end{enumerate}
				So in all the possible cases, we showed that (\ref{inv0}), (\ref{inv1}), and (\ref{inv2}) remain true at the end of the iteration.
			\end{itemize}
	\end{proof}

\begin{lemma}\label{correctoutside-sar}
    Let $\Test$ be an operation by a correct reader $p_k$.
    Every time when $p_k$ executes line~\ref{whileloop-sar} of this $\Test$ operation, 
	there is a correct process $p_i$ such that $p_i \not\in \setb$ and $\langle-,p_i\rangle\not\in\setv$.
\end{lemma}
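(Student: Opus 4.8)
The plan is to mirror the argument used for the corresponding statements about the {\mar} and {\tar} implementations (Lemmas~\ref{correctoutside-mar} and~\ref{correctoutside-tar}), relying on the loop invariants already proved in Lemma~\ref{onecorrectset0-sar} together with the uniqueness of witnessed values. First I would fix an arbitrary time $t$ at which the correct reader $p_k$ executes line~\ref{whileloop-sar} of the $\Test$ operation and consider $\setb$ and $\setv$ at that moment; by Lemma~\ref{onecorrectset0-sar}, invariants~(\ref{inv0}), (\ref{inv1}), and~(\ref{inv2}) hold there. Since there are at most $f$ faulty processes, the system contains at least $n-f$ correct processes, so it suffices to show that $\setb\cup\setv$ contains fewer than $n-f$ correct processes --- where a correct process $p_i$ counts as ``in $\setb\cup\setv$'' if $p_i\in\setb$ or $\langle-,p_i\rangle\in\setv$ --- because then some correct process must lie outside both, which is exactly what the lemma asserts.

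Next I would case-split on the number of correct processes that appear in $\setv$. In the first case, at most $f$ correct processes are in $\setv$; invariant~(\ref{inv1}) gives $|\setb|\le f$, so $\setb\cup\setv$ has at most $2f$ correct processes, and $2f<n-f$ since $n>3f$. In the second case, at least $f+1$ correct processes are in $\setv$; then invariant~(\ref{inv2}) forces $\setb$ to contain no correct process, so the correct processes of $\setb\cup\setv$ are exactly those in $\setv$. This is the only place where the sticky-register argument genuinely differs from its analogues for {\mar}s and {\tar}s: invariant~(\ref{inv1}) bounds $\setv$ only per value, so by itself it does not cap the number of correct processes in $\setv$. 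To close this gap I would invoke Corollary~\ref{unqiuesetv-sar}: since $\setv$ contains at least one correct process, all correct processes in $\setv$ are witnesses of one common value $v$, hence they all lie in $\{p_j\mid\langle v,p_j\rangle\in\setv\}$, a set that invariant~(\ref{inv1}) shows has size strictly less than $n-f$. So in both cases $\setb\cup\setv$ contains fewer than $n-f$ correct processes.

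Concluding is then immediate: some correct process $p_i$ satisfies $p_i\notin\setb$ and $\langle-,p_i\rangle\notin\setv$. The main obstacle --- and the only step requiring care beyond the earlier analogues --- is the second case, where the ``uniqueness'' property of witnessed values (Corollary~\ref{unqiuesetv-sar}, itself derived from the more stringent witnessing policy of the sticky-register implementation) is essential; everything else is an arithmetic consequence of $n>3f$ and the already-established loop invariants.
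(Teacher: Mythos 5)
Your proposal is correct and matches the paper's own proof essentially step for step: the same reduction to showing $\setb\cup\setv$ has fewer than $n-f$ correct processes, the same case split on whether $\setv$ contains at most $f$ or at least $f+1$ correct processes, and the same use of Corollary~\ref{unqiuesetv-sar} together with the per-value bound of Lemma~\ref{onecorrectset0-sar}(\ref{inv1}) to close the second case. Your observation that this last step is the only genuine departure from the {\mar}/{\tar} analogues is exactly right.
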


			\begin{proof}
                    Let $\Test$ be an operation by a correct reader $p_k$.
				Suppose $p_k$ executes line~\ref{whileloop-sar} of this $\Test$ at time $t$.
				Consider $\setb$ and $\setv$ at time $t$.
				By Lemma~\ref{onecorrectset0-sar}(\ref{inv0}),
					$\setb$ and $\setv$ contain different processes.

				We now prove that $\setb \cup \setv$ contains fewer than $n-f$ correct processes; this immediately implies that there is a correct process $p_i$ such that $p_i \not\in \setb$ and $\langle-,p_i\rangle\not\in\setv$.
				There are two possible cases:
				
				\begin{enumerate}
				\item Case 1: $\setv$ contains at most $f$ correct processes.
				By Lemma~\ref{onecorrectset0-sar}(\ref{inv1}), $|\setb| \le f$.
				So $\setb \cup \setv$ contains at most $2f$ correct processes.
				Since $3f < n$, we have $2f < n-f$.

				\item Case 2: $\setv$ contains at least $f+1$ correct processes.
                By Lemma~\ref{onecorrectset0-sar}(\ref{inv2}), $\setb$ does \emph{not} contain any correct process.
                    Let $C$ be the set of all the correct processes in $\setb \cup \setv$.
                    Since $\setb$ does \emph{not} contain any correct process,
                    all the processes in $C$ are in $\setv$.
                    By Corollary~\ref{unqiuesetv-sar}, there must be a value $v$ such that
                    for all processes $p_i$ in $C$, $\langle v,p_i\rangle\in \setv$.
                   Thus, $\exists v$ such that $|\{p_j|\langle v,p_j\rangle\in \setv\}| \ge |C|$.
                   Since by Lemma~\ref{onecorrectset0-sar}(\ref{inv1}),  
                    $\nexists v$ such that $|\{p_j|\langle v,p_j\rangle\in \setv\}| \ge n-f$,
                    $|C| < n-f$.

    \end{enumerate}
				In both cases, $\setb \cup \setv$ contains fewer than $n-f$ correct processes.
				\end{proof}

\begin{lemma}\label{line7inf-sar}
	Let $\Test$ be an operation by a correct reader $p_k$.
	Every instance of the Repeat-Until loop at lines~\ref{repeat-sar}-\ref{until-sar} of this $\Test$ terminates.
\end{lemma}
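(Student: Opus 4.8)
The plan is to mirror the argument used for the analogous Lemma~\ref{line7inf-mar} (and Lemma~\ref{line7inf-tar}), adapting it to the $\Test$ procedure of Algorithm~\ref{code-sar}. I would argue by contradiction: suppose some instance of the Repeat-Until loop at lines~\ref{repeat-sar}-\ref{until-sar} of this $\Test$ never terminates, and let $t$ be the time at which $p_k$ first executes line~\ref{repeat-sar} in that instance. Then $p_k$ never finds the condition of line~\ref{until-sar} to hold after $t$. The key structural observation is that between the execution of line~\ref{whileloop-sar} (the top of the round) and line~\ref{BS} (where $S$ is built), only line~\ref{ckplus-sar} is executed, and it changes neither $\setb$ nor $\setv$; hence by Lemma~\ref{correctoutside-sar} the set $S$ constructed at line~\ref{BS} in this round contains some correct process $p_a$. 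Let $c^*$ be the value of $C_k$ right after the increment at line~\ref{ckplus-sar}; since $C_k$ is initialized to $0$, $c^* \ge 1$. Because the body of the Repeat-Until loop (lines~\ref{repeat-sar}-\ref{until-sar}) modifies none of $\setb$, $\setv$, $S$, or $C_k$, and $p_k$ never exits the loop, we get $p_a \in S$ and $C_k = c^*$ at all times $t' \ge t$.

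The central step is then the analogue of Claim~\ref{rakck-mar}: there is a time $t'$ after which $R_{ak}$ permanently contains a tuple of the form $\langle -, c^*\rangle$. This is proved just as in Algorithm~\ref{code-mar}, using $p_a$'s $\fresh()$ procedure: since $p_a$ is correct and $C_k = c^*$ forever after $t$, some iteration of $p_a$'s loop reads $c^*$ from $C_k$ at line~\ref{collectck-sar}; consider the first such iteration. By Observation~\ref{monock-sar} the values $p_a$ has read from $C_k$ are non-decreasing, and since $prev\_c_k$ starts at $0$ (line~\ref{collectck-init-sar}) and this is the first time $p_a$ sees $c^*$, a two-case split on $c^*=1$ versus $c^*>1$ (exactly as in Claim~\ref{rakck-mar}) shows $p_a$ inserts $p_k$ into $askers$ at line~\ref{askers-sar}. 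Then $p_a$ writes $\langle r_j, c^*\rangle$ into $R_{jk}=R_{ak}$ at line~\ref{fresh1-sar}, say at time $t'$, and sets $prev\_c_k \gets c^*$ at line~\ref{setprev-sar}; since $C_k$ stays equal to $c^*$ thereafter, by line~\ref{askers-sar} $p_a$ never again inserts $p_k$ into $askers$, hence never writes to $R_{ak}$ after $t'$. So $R_{ak}$ holds $\langle -, c^*\rangle$ at all times after $t'$.

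Finally I would close the loop: because $p_k$ never exits this Repeat-Until instance, it executes the read at line~\ref{readri-sar} for $p_a \in S$ infinitely often after $t$, so it eventually reads $\langle -, c^*\rangle$ from $R_{ak}$; since $p_a \in S$ and $C_k = c^*$ throughout, $p_k$ then finds that $\exists\, p_j \in S$ with $c_j \ge C_k$ at line~\ref{until-sar}, contradicting that the instance never terminates. I do not expect a real obstacle here — the proof is essentially identical to Lemma~\ref{line7inf-mar}. The only points needing care are bookkeeping: verifying that $S$ is fixed throughout the Repeat-Until loop (it is, since $S$ is recomputed from $\setb$ and $\setv$ only at line~\ref{BS}, which lies outside the loop), and correctly invoking Lemma~\ref{correctoutside-sar} to place a correct process in $S$; the $\fresh()$-side claim transfers verbatim modulo the sticky-register line numbers.
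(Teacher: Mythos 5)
Your proposal is correct and follows essentially the same argument as the paper's proof: the same contradiction setup, the same use of Lemma~\ref{correctoutside-sar} to place a correct process $p_a$ in $S$, the same claim that $R_{ak}$ eventually and permanently holds $\langle -, c^*\rangle$ (proved via $p_a$'s $\fresh()$ loop with the $c^*=1$ versus $c^*>1$ case split), and the same closing step. The bookkeeping points you flag (that $S$ is fixed throughout the Repeat-Until loop and that $C_k$ is unchanged inside it) are exactly the ones the paper also relies on.
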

\begin{proof}
	Let $\Test$ be an operation by a correct reader $p_k$.
	Assume for contradiction that
		there is an instance of the Repeat-Until loop at lines~\ref{repeat-sar}-\ref{until-sar} of $\Test$ that does not terminate.
        Let $t$ be the time when $p_k$ executes line~\ref{repeat-sar} for the first time in this instance of the Repeat-Until loop.
        Since this instance does not terminate, 
		$p_k$ never finds the condition of line~\ref{until-sar} holds after $t$ ($\star$). 
        
         Since lines~\ref{whileloop-sar}-\ref{repeat-sar} do not change $\setb$ and $\setv$,
        Lemma~\ref{correctoutside-sar} implies that 
		 there is a correct process $p_a$ such that $p_a \not\in \setb$ and $\langle-,p_a\rangle\not\in\setv$ at time $t$,
            i.e., $p_a \in S$ at time $t$.
         Let $c^*$ be the value of $C_k$ at time $t$;
    	by line~\ref{ckplus-sar},
    		$c^* \ge 1$.
        Since lines~\ref{repeat-sar}-\ref{until-sar} of the Repeat-Until loop do not change $S$ or $C_k$,
            and $p_k$ never exits this loop,
		 $p_a \in S$ and $C_k=c^* $ at all times $t' \ge t$.

	\begin{claim}\label{rakck-sar}
		There is a time $t'$ such that $R_{ak}$ contains $\langle -, c^* \rangle$ 
	   for all times after $t'$.
	\end{claim}
	\begin{proof}
	Let $\fresh()$ be the help procedure of $p_a$.
	Since $C_k = c^*$ for all times after $t$ and $p_a$ is correct,
		there is an iteration of the while loop of $\fresh()$
		in which $p_a$ reads $c^*$ from $C_k$ at line~\ref{collectck-sar}.
	Consider the \emph{first} iteration of the while loop of $\fresh()$ in which $p_a$ reads $c^*$ from $C_k$ at line~\ref{collectck-sar}.
	Since $p_k$ is correct, 
				by Observation~\ref{monock-sar},
				$p_a$ reads non-decreasing values from $C_k$,
				and so $p_a$ has $ c^* \ge prev\_c_k$ at line~\ref{askers-sar}.
	Since $c^* \ge 1$,
		there are two cases.
		\begin{itemize}
			\item Case 1: $c^*=1$.
			Then $p_a$ has $prev\_c_k \le 1$ at line~\ref{askers-sar}.
			Since $prev\_c_k$ is initialized to 0 (line~\ref{collectck-init-sar}) and $p_a$ reads $c^*=1$ from $C_k$ at line~\ref{collectck-sar} for the first time,
				$prev\_c_k = 0$ at line~\ref{askers-sar}. 
			So $p_a$ finds $c^*=1 > prev\_c_k=0$ holds and inserts $p_k$ into $askers$ at line~\ref{askers-sar}. 
			\item Case 2: $c^*>1$.
			Since $p_a$ reads $c^*$ from $C_k$ at line~\ref{collectck-sar} for the first time and $ c^* \ge prev\_c_k$,
				$c^* > prev\_c_k$ at line~\ref{askers-sar}. 
			So $p_a$ inserts $p_k$ into $askers$ at line~\ref{askers-sar}. 
		\end{itemize}
	So in both cases, $p_a$ inserts $p_k$ into $askers$ at line~\ref{askers-sar}. 
	Since $p_a$ is correct,
		in the same iteration of the while loop of $\fresh()$,
		$p_a$ writes $\langle -, c^* \rangle$ into $R_{ak}$ at line~\ref{fresh1-sar},
		say at time $t'$ 
		and then it sets $prev\_c_k$ to $c^*$ in line~\ref{setprev-sar} (i).
	Since $C_k = c^*$ for all times after $t$,
		by Observation~\ref{monock-sar},
		$C_k= c^*$ for all times after $t'$.
	Furthermore, by line~\ref{collectck-sar},
		$p_a$ has $c_k=c^*$ for all times after $t'$ (ii).
	From (i) and (ii),
		 by line~\ref{askers-sar},
		$p_a$ does not insert $p_k$ into $askers$ in any future iteration of the while loop of $\fresh()$.
	So by line~\ref{tellasker-sar},
		$p_a$ never writes to $R_{ak}$ after~$t'$,
		i.e., $R_{ak}$ contains $\langle -, c^* \rangle$ for all times after $t'$.
	\end{proof}

	Since $p_k$ executes lines~\ref{repeat-sar}-\ref{until-sar} infinitely often after $t$
		and  $p_a \in S$ for all times after $t' > t$,
		$p_k$ reads from $R_{ak}$ at line~\ref{readri-sar} infinitely often after $t$.
	By Claim~\ref{rakck-sar},
		eventually $p_k$ reads $\langle -, c^* \rangle$ from $R_{ak}$ after~$t$.
	Thus, since $p_k$ has $C_k =c^*$ at all times after $t$, 
		$p_k$ finds that the condition of line~\ref{until-sar} holds after~$t$
		--- a contradiction to ($\star$). 
\end{proof}

\begin{observation}\label{set1increment-sar}
	Let $\Test$ be an operation by a correct reader $p_k$.
	When $p_k$ executes line~\ref{setv-sar} of $\Test$, $p_k$ increments the size of $\setv$.
\end{observation}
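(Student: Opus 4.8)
The plan is to trace through the $\Test$ procedure and show that the process $p_j$ whose tuple gets added at line~\ref{setv-sar} cannot already have a tuple of the form $\langle -, p_j\rangle$ sitting in $\setv$, so the insertion strictly enlarges the set.

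First I would identify the relevant $p_j$. Line~\ref{setv-sar} is executed only after the Repeat-Until loop at lines~\ref{repeat-sar}--\ref{until-sar} terminates, and its exit condition (line~\ref{until-sar}) is precisely that some $p_j \in S$ has $c_j \ge C_k$; this $p_j$ is the process referenced in lines~\ref{check1-sar}--\ref{setv-sar}. Hence $p_j \in S$. Now, $S$ is (re)constructed at line~\ref{BS}, inside the same iteration of the while loop at line~\ref{whileloop-sar} in which line~\ref{setv-sar} is being executed, and by the definition of $S$ at line~\ref{BS}, every process $q \in S$ satisfies $q \notin \setb$ and $\langle -, q\rangle \notin \setv$ at the moment line~\ref{BS} runs. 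In particular $\langle -, p_j\rangle \notin \setv$ at that moment.

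The key step is then to observe that, within this same iteration of the while loop, $p_k$ does not modify $\setv$ between line~\ref{BS} and line~\ref{setv-sar}: the only code in between is the increment of $C_k$ at line~\ref{ckplus-sar} (which precedes line~\ref{BS} in any case), the reads of the $R_{jk}$ registers in the Repeat-Until loop, and the test $u_j \ne \bot$ at line~\ref{check1-sar}; none of these touch $\setv$. Therefore $\langle -, p_j\rangle \notin \setv$ still holds immediately before line~\ref{setv-sar} executes, so in particular $\langle u_j, p_j\rangle \notin \setv$, and executing $\setv \gets \setv \cup \{\langle u_j, p_j\rangle\}$ strictly increases $|\setv|$.

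There is no real obstacle here; the only point requiring a little care is the scoping argument — confirming that $S$ is freshly recomputed at the top of every iteration of the outer while loop (it is, at line~\ref{BS}), so that the $p_j$ used at line~\ref{setv-sar} is drawn from the \emph{current} iteration's $S$ rather than a stale copy. This is the same style of argument already used implicitly in Lemma~\ref{onecorrectset0-sar} when establishing invariant~(\ref{inv0}), so it can be kept brief.
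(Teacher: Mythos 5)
Your proof is correct and is exactly the justification the paper intends: the paper states this as an unproved observation, relying on precisely the fact you spell out, namely that the $p_j$ chosen at line~\ref{until-sar} comes from the set $S$ built at line~\ref{BS} (which excludes every $q$ with $\langle -,q\rangle\in\setv$) and that $\setv$ is untouched between line~\ref{BS} and line~\ref{setv-sar}. Nothing further is needed.
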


\begin{observation}\label{set0increment-sar}
	Let $\Test$ be an operation by a correct reader $p_k$.
	When $p_k$ executes line~\ref{set0-sar} of $\Test$, $p_k$ increments the size of $\setb$.
\end{observation}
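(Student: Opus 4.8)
The plan is to prove Observation~\ref{set0increment-sar} directly from the control flow of the $\Test$ procedure: I will show that whenever $p_k$ reaches line~\ref{set0-sar} in some iteration of the while loop at line~\ref{whileloop-sar}, the process $p_j$ it is about to insert is \emph{not} already in $\setb$. The conclusion is then immediate, since adjoining a genuinely new element to a finite set increases its cardinality by exactly one.

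First I would pin down which process $p_j$ is referred to in line~\ref{set0-sar}. Within the iteration under consideration, $p_k$ reads tuples $\langle u_j, c_j\rangle$ from the registers $R_{jk}$ only for processes $p_j \in S$ (line~\ref{readri-sar}), and the repeat-until loop at lines~\ref{repeat-sar}--\ref{until-sar} exits as soon as some $p_j \in S$ has $c_j \ge C_k$; it is this $p_j$ (with its $u_j$) that is then used in lines~\ref{check1-sar}--\ref{set0-sar}. Hence $p_j \in S$. By the way $S$ is constructed in line~\ref{BS}, every element of $S$ satisfies $p_j \notin \setb$, so $p_j \notin \setb$ at the moment $S$ is built in this iteration.

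Next I would check that $\setb$ is not modified between line~\ref{BS} and line~\ref{set0-sar} in this iteration. The only earlier line in the iteration that changes $\setb$ is the reset $\setb \gets \emptyset$ at line~\ref{empty0-sar}, which is guarded by the test $u_j \ne \bot$ at line~\ref{check1-sar}. But line~\ref{set0-sar} is itself executed only when the complementary test $u_j = \bot$ at line~\ref{checkb-sar} holds, so the branch containing line~\ref{empty0-sar} is not taken in any iteration in which line~\ref{set0-sar} is executed. Therefore $\setb$ has the same contents at line~\ref{set0-sar} as it had at line~\ref{BS}, so $p_j \notin \setb$ when $p_k$ executes line~\ref{set0-sar}, and $|\setb \cup \{p_j\}| = |\setb| + 1$.

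This observation is elementary; the only point requiring a moment's care — and hence the sole ``obstacle'' — is the bookkeeping of the previous paragraph, namely verifying that no intervening line along the control-flow path reaching line~\ref{set0-sar} (in particular the reset at line~\ref{empty0-sar}) alters $\setb$. Everything else is a one-line set-cardinality argument. (The companion Observation~\ref{set1increment-sar} for $\setv$ follows by the symmetric argument, using that $S$ excludes processes $p$ with $\langle -, p\rangle \in \setv$ and that $\setv$ is untouched between line~\ref{BS} and line~\ref{setv-sar}.)
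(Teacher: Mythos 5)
Your proposal is correct, and it spells out exactly the code-tracing justification that the paper leaves implicit by stating this as an unproved Observation: the inserted process $p_j$ is drawn from $S$, which by line~\ref{BS} excludes members of $\setb$, and the only intervening modification of $\setb$ (the reset at line~\ref{empty0-sar}) lies on the $u_j \ne \bot$ branch, which is mutually exclusive with the $u_j = \bot$ branch containing line~\ref{set0-sar}. No gaps; this matches the paper's (implicit) reasoning.
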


\begin{lemma}\label{termination-sar-2}
	Every $\Test$ operation by a {\ct} reader completes.
\end{lemma}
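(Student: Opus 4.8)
The plan is to argue by contradiction, reusing the structure of the termination proof for the $\valid$ procedure (Theorem~\ref{termination-mar}, Case~2) but phrased in terms of the two internal sets $\setb$ and $\setv$ of the $\Test$ procedure. So suppose some $\Test$ operation by a correct reader $p_k$ never completes, i.e.\ $p_k$ takes infinitely many steps inside the $\Test$ procedure. By Lemma~\ref{line7inf-sar} every instance of the Repeat-Until loop at lines~\ref{repeat-sar}--\ref{until-sar} terminates, so $p_k$ must execute infinitely many iterations of the while loop at line~\ref{whileloop-sar}. In each such iteration, once the Repeat-Until loop exits with some $p_j \in S$ satisfying $c_j \ge C_k$, $p_k$ either reads $u_j \ne \bot$ and executes line~\ref{setv-sar} followed by line~\ref{empty0-sar}, or reads $u_j = \bot$ and executes line~\ref{set0-sar}; exactly one of the two happens, and if neither line~\ref{return1-sar-1} nor line~\ref{return0-sar} returns, $p_k$ proceeds to the next iteration. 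Hence line~\ref{setv-sar} is executed infinitely often, or line~\ref{set0-sar} is executed infinitely often, and I would split into these two cases.

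First I would rule out the case in which $p_k$ executes line~\ref{setv-sar} infinitely often. The key point is that each process enters $\setv$ at most once: by line~\ref{BS} the set $S$ of each while iteration excludes every $p_j$ with $\langle-,p_j\rangle \in \setv$, the set $\setv$ is never reset (only $\setb$ is reset, at line~\ref{empty0-sar}), and the Repeat-Until loop only inspects processes in $S$; so once $p_k$ inserts a tuple for $p_j$ into $\setv$, $p_j$ is never reconsidered. As there are only $n$ processes, $p_k$ can execute line~\ref{setv-sar} at most $n$ times --- a contradiction. This is the one spot where the argument cannot simply copy Theorem~\ref{termination-mar}: there, $|\set_1|$ growing without bound suffices because the return test is $|\set_1| \ge n-f$, whereas here the return test at line~\ref{return1-sar} asks for $n-f$ witnesses of the \emph{same} value $v$; fortunately this case does not arise, so the complication is sidestepped rather than confronted.

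It then remains to handle the case in which line~\ref{set0-sar} is executed infinitely often. Since line~\ref{setv-sar} is then executed only finitely often, there is a time $t$ after which $p_k$ never executes line~\ref{setv-sar}, and hence never executes line~\ref{empty0-sar} after $t$ (that line is reached only immediately after line~\ref{setv-sar}). As $|\setb|$ decreases only at line~\ref{empty0-sar}, $|\setb|$ is non-decreasing after $t$; and since $p_k$ executes line~\ref{set0-sar} infinitely often after $t$, each time incrementing $|\setb|$ by Observation~\ref{set0increment-sar}, there is a while iteration in which $|\setb| > f$ holds at line~\ref{set0-sar}. Because $p_k$ does not change $|\setb|$ between line~\ref{set0-sar} and line~\ref{return0-sar} (the check at line~\ref{return1-sar} either returns, finishing the operation, or leaves $\setb$ untouched), $p_k$ finds $|\setb| > f$ at line~\ref{return0-sar} and returns $\bot$, contradicting the assumption that the $\Test$ does not complete. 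I expect the main obstacle to be exactly the discrepancy flagged in the second paragraph; once one observes that line~\ref{setv-sar} can fire at most $n$ times, the rest is a routine adaptation of the $\valid$-termination argument.
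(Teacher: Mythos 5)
Your proof is correct and follows the same skeleton as the paper's: contradiction, Lemma~\ref{line7inf-sar} to get infinitely many while-loop iterations, and a case split on which of lines~\ref{setv-sar}/\ref{set0-sar} fires infinitely often, with your Case~2 matching the paper's verbatim. The one place you diverge is Case~1. The paper keeps that case alive and shows the operation would in fact \emph{return} there: since each execution of line~\ref{setv-sar} increments $|\setv|$, eventually $|\setv|=n$, hence $\setv$ contains at least $n-f$ correct processes, and Corollary~\ref{unqiuesetv-sar} (all correct witnesses in $\setv$ carry the same value) converts this into the return condition of line~\ref{return1-sar} --- which is exactly the ``same value'' discrepancy you flagged. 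You instead kill Case~1 outright by observing that line~\ref{BS} permanently excludes any process already in $\setv$ and $\setv$ is never reset, so line~\ref{setv-sar} fires at most $n$ times and cannot fire infinitely often. Both arguments rest on the same boundedness of $\setv$; yours avoids invoking Corollary~\ref{unqiuesetv-sar} and is slightly more economical, while the paper's version has the minor virtue of exhibiting which return statement would be triggered. Either way the lemma follows.
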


\begin{proof}
        Assume for contradiction, 
		there is a $\Test$ operation by a correct reader $p_k$ that does not complete,
		i.e.,
		$p_k$ takes an infinite number of steps in the $\Test()$ procedure.
	By Lemma~\ref{line7inf-sar},
		 $p_k$ must execute infinitely many iterations of the while loop at line~\ref{whileloop-sar} of $\Test()$.
        So $p_k$ executes line~\ref{setv-sar} or line~\ref{set0-sar} of $\Test()$
        infinitely often.	
	\begin{itemize}
		\item Case 1: $p_k$ executes line~\ref{setv-sar} infinitely often.	
				By Observations~\ref{monoset1-sar} and \ref{set1increment-sar},
					there is an iteration of the while loop at line~\ref{whileloop-sar} of this $\Test$,
					in which $p_k$ has $|\setv| = n$ at line~\ref{setv-sar}. 
                   Thus,
                        $\setv$ of $p_k$ contains at least $n-f$ correct processes at line~\ref{setv-sar} in that iteration.
                    So by Corollary~\ref{unqiuesetv-sar},
                        $\exists v$ such that
    $|\{p_i|\langle v,p_i\rangle\in \setv\}|\ge n-f$ at line~\ref{setv-sar} in that iteration.
				Since $p_k$ does not change $|\setv|$ between line~\ref{setv-sar} and line~\ref{return1-sar},
					in that iteration,
					$p_k$ finds that the condition $\exists v$ such that
    $|\{p_i|\langle v,p_i\rangle\in \setv \}|\ge n-f$ holds in line~\ref{return1-sar} of $\Test$.
				So $\Test$ returns the value $v$ at line~\ref{return1-sar}.
		\item Case 2: $p_k$ executes line~\ref{setv-sar} only a finite number of times.
			Then $p_k$ executes line~\ref{set0-sar} infinitely often.
			So there is a time $t$ such that 
				$p_k$ executes line~\ref{set0-sar} infinitely often after $t$
				while $p_k$ never executes line~\ref{setv-sar} after $t$.
			This implies $p_k$ never executes line~\ref{empty0-sar} after $t$.
			Since $|\setb|$ decreases only at line~\ref{empty0-sar},
				$|\setb|$ never decreases after $t$.
			Since $p_k$ executes line~\ref{set0-sar} infinitely often after $t$
				and $|\setb|$ never decreases after $t$,
				by Observations~\ref{set0increment-sar},
					there is an iteration of the while loop at line~\ref{whileloop-sar} of $\Test$,
					in which $p_k$ has $|\setb| > f$ at line~\ref{set0-sar} of $\Test$. 
			Since $p_k$ does not change $|\setb|$ between line~\ref{set0-sar} and line~\ref{return0-sar},
				in that iteration,
				$p_k$ finds that the condition $|\setb|>f$ holds in line~\ref{return0-sar}.
			So $\Test$ returns $\bot$ at line~\ref{return0-sar}.
	\end{itemize}
	In both cases,
		this $\Test$ returns, 
			a contradiction to the assumption that $\Test$ does not complete.
\end{proof}

By Lemma~\ref{termination-sar-1} and Lemma~\ref{termination-sar-2}, we have the following:

\begin{restatable}{theorem}{stickyt} \label{termination-sar}
    [\textsc{Termination}] Every $\Test$ and $\Set$ operation by a {\ct} process completes.
\end{restatable}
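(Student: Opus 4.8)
The plan is to reduce Theorem~\ref{termination-sar} directly to the two lemmas that immediately precede it. A SWMR sticky register (Definition~\ref{def-sar}) exposes exactly two operations: $\Set(v)$, which by the type can be invoked only by the writer $p_1$, and $\Test$, which can be invoked by any reader $p_k \in \{p_2,\dots,p_n\}$. So, fixing the arbitrary history $H$ of Algorithm~\ref{code-sar} and any process $p$ that is correct in $H$, I would split on whether $p$ is the writer or a reader. If $p = p_1$, every operation it invokes is a $\Set(-)$, and Lemma~\ref{termination-sar-1} says every such operation by a correct writer completes. If $p = p_k$ for some $k \ge 2$, every operation it invokes is a $\Test$, and Lemma~\ref{termination-sar-2} says every such operation by a correct reader completes. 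Combining the two cases gives that every $\Set$ and $\Test$ operation by a correct process completes, which is exactly the claim.

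Because the two lemmas are already in hand, there is essentially no remaining obstacle at the level of the theorem itself: all the difficulty is front-loaded into Lemmas~\ref{termination-sar-1} and~\ref{termination-sar-2}, and here I would only perform the case split and cite them. For orientation, Lemma~\ref{termination-sar-1} rests on showing that once the correct writer puts $v$ into $E_1$, the echo/witness propagation in $\fresh()$ forces all $n-f$ correct processes to eventually hold $v$ in $R_i$ — using uniqueness of witnessed values (Lemma~\ref{unqiuev-sar}) and the fact that a correct witness only witnesses a value the writer actually wrote (Lemma~\ref{wbeforer-sar}) — so the $\Set$'s repeat--until guard $|\{r_i \mid r_i = v\}| \ge n-f$ is eventually met. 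Lemma~\ref{termination-sar-2} rests on: each inner repeat--until instance terminates (Lemma~\ref{line7inf-sar}), which uses that a correct process always lies outside $\setb \cup \setv$ at line~\ref{whileloop-sar} (Lemma~\ref{correctoutside-sar}), which in turn uses the loop invariants of Lemma~\ref{onecorrectset0-sar}; and then that the outer while loop cannot run forever, since $|\setv|$ or $|\setb|$ grows without bound and triggers a return at line~\ref{return1-sar} or line~\ref{return0-sar}. If I had to prove that chain from scratch, the crux I would expect to fight hardest over is invariant~(\ref{inv2}) of Lemma~\ref{onecorrectset0-sar} — that $\ge f+1$ correct processes in $\setv$ forces no correct process into $\setb$ — which needs the ``asker'' argument that a correct process which already reported a non-$\bot$ value to $p_k$ will, if asked again, still report a non-$\bot$ value, so it cannot later be placed into $\setb$.
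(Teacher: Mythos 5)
Your proposal is correct and matches the paper's own argument exactly: the paper derives Theorem~\ref{termination-sar} directly from Lemma~\ref{termination-sar-1} (every $\Set$ by a correct writer completes) and Lemma~\ref{termination-sar-2} (every $\Test$ by a correct reader completes), with no further work at the theorem level. Your sketch of the supporting lemma chain also tracks the paper's structure faithfully.
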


\subsection{Byzantine Linearizability}\label{linear-sar}

Recall that $H$ is an arbitrary history of the implementation given by Algorithm~\ref{code-sar}.
We now prove that $H$ is Byzantine linearizable with respect to a SWMR {\sr} (Definition~\ref{def-sar}).
We start by proving the following lemmas.

\begin{lemma}~\label{notreturn0-sar}
	Suppose for some value $v$, there is a time $t$ such that
		at least $f+1$ processes $p_i$ have $R_i = v$ at all times $t' \ge t$.
	If a {\ct} reader invokes a $\Test$ operation after time $t$, 
		then it does not insert any {\ct} process into $\setb$ in this $\Test$ operation.
\end{lemma}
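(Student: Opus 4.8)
The plan is to adapt the proof of Lemma~\ref{notreturn0-mar} essentially line for line, accounting for the fact that in Algorithm~\ref{code-sar} each register $R_j$ stores a single value (so ``$p_j$ is not a witness'' is written $R_j=\bot$ rather than ``$v\notin R_j$''), and that the relevant ``help'' lemma here is Lemma~\ref{asker1-sar}, whose conclusion is that a correct helper writes $\langle v,-\rangle$ with $v\neq\bot$ into $R_{jk}$. Concretely, I would suppose for contradiction that there is a time $t$ with the stated property, a correct reader $p_k$ that invokes a $\Test$ after $t$, and a correct process $p_j$ that $p_k$ inserts into $\setb$ during this $\Test$. I would then focus on the iteration of the while loop at line~\ref{whileloop-sar} in which $p_k$ inserts $p_j$ into $\setb$ at line~\ref{set0-sar}, and let $c^\ast$ be the value of $C_k$ immediately after $p_k$ increments it at line~\ref{ckplus-sar} in that iteration. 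Since $C_k$ is initialized to $0$ we get $c^\ast\ge 1$, and since $p_k$ only started this $\Test$ after $t$, this write of $c^\ast$ happens after $t$; by Observation~\ref{monock-sar}, $C_k$ held only values strictly below $c^\ast$ at every time up to $t$.

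Next I would use lines~\ref{readri-sar}, \ref{until-sar} and \ref{checkb-sar} to conclude that in that iteration $p_k$ read $\langle\bot,c\rangle$ from $R_{jk}$ for some $c$ with $c\ge c^\ast\ge 1$; since $R_{jk}$ is initialized to $\langle\bot,0\rangle$ and $c\ge 1$, the correct process $p_j$ must have written $\langle\bot,c\rangle$ into $R_{jk}$ at line~\ref{fresh1-sar} during some iteration of its $\fresh()$ loop. In that $\fresh()$ iteration $p_j$ read $c$ from $C_k$ at line~\ref{collectck-sar}, say at time $t_j^1$, and --- since it subsequently wrote to $R_{jk}$ --- it must have placed $p_k$ in $askers$ at line~\ref{askers-sar}, say at time $t_j^2>t_j^1$. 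Because $c\ge c^\ast$, Observation~\ref{monock-sar} forces $t_j^1>t$, hence $t_j^2>t$. Now Lemma~\ref{asker1-sar} applies (its hypothesis on $v$ and $t$ is exactly ours, and $p_j$ inserted $p_k$ into $askers$ at a time $\ge t$), giving that in that very iteration $p_j$ writes $\langle v,-\rangle$ into $R_{jk}$ at line~\ref{fresh1-sar}; since $v\neq\bot$ and $p_j$ writes to $R_{jk}$ at most once per iteration, this contradicts $p_j$ having written $\langle\bot,c\rangle$ there.

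I do not expect a genuine obstacle: the argument is a direct transcription of the one for the verifiable register, and Lemma~\ref{unqiuev-sar} (uniqueness of witnessed values) is needed only inside Lemma~\ref{asker1-sar}, which I would cite as a black box. The only spots deserving care are the timestamp-matching details --- confirming that the value $c$ that $p_k$ reads in the offending iteration is the same $c$ that $p_j$ stamped into $R_{jk}$, and that $p_j$'s read of $C_k$ and its later addition of $p_k$ to $askers$ both occur after $t$ --- and these are settled exactly as in the proof of Lemma~\ref{notreturn0-mar}.
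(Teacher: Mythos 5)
Your proposal is correct and follows essentially the same argument as the paper's proof: the paper likewise transcribes the proof of Lemma~\ref{notreturn0-mar}, identifies the iteration in which the correct process is inserted into $\setb$, traces the tuple $\langle\bot,c\rangle$ with $c\ge c^*\ge 1$ back to a write by that correct process in its $\fresh()$ loop, shows the corresponding insertion into $askers$ occurs after $t$, and invokes Lemma~\ref{asker1-sar} for the contradiction. The timestamp-matching details you flag are handled exactly as you describe.
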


\begin{proof}
	Assume for contradiction,
            for some value $v$,
		there is a time $t$ such that
		(1) at least $f+1$ processes $p_i$ have $R_i = v$ at all times $t' \ge t$, and
		(2) a {\ct} reader $p_k$ invokes a $\Test$ operation after time $t$,  
		but $p_k$ inserts a {\ct} process $p_a$ into $\setb$ in this $\Test$ operation.
 
	Consider the iteration of the while loop at line~\ref{whileloop-sar} of the $\Test$ in which $p_k$ inserts $p_a$ into $\setb$ at line~\ref{set0-sar} of the $\Test$. 
	Let $c^*$ be the value of $C_k$ after $p_k$ increments $C_k$ at line~\ref{ckplus-sar} in this iteration.
	Since $C_k$ is initialized to $0$, 
		$c^* \ge 1$ and $p_k$ writes $c^*$ into $C_k$ after time $t$.
	By Observation~\ref{monock-sar},
		for all times $t' \le t$, $C_k$ contains values that are less than $c^*$ ($\star$).
	Since $p_k$ inserts $p_a$ into $\setb$ at line~\ref{set0-sar},
		$p_k$ reads $\langle \bot, c \rangle$ from $R_{ak}$ with some $c \ge c^*$ at line~\ref{readri-sar}.
	Since $c^* \ge 1$, $c\ge 1$.
	Since $R_{ak}$ is initialized to $\langle \bot, 0 \rangle$ and $c\ge 1$,
		it must be that $p_a$ writes $\langle \bot, c \rangle$ into $R_{ak}$ 
		at line~\ref{fresh1-sar} in some iteration of the while loop of $\fresh()$ ($\star\star$).

	Consider the iteration of the while loop of $\fresh()$ procedure 
		in which {\ct} $p_a$ writes $\langle \bot, c \rangle$ into $R_{ak}$ at line~\ref{fresh1-sar}.
	In that iteration:	
		(a) $p_a$ reads $c$ from $C_k$ at line~\ref{collectck-sar}, say at time $t_a^1$ and then 
		(b) $p_a$ inserts $p_k$ into $askers$ at line~\ref{askers-sar}, say at time $t_a^2 > t_a^1$.
	Since $c \ge c^*$,
		by ($\star$),
		$p_a$ reads $c$ from $C_k$ at line~\ref{collectck-sar} after time $t$, i.e., $t_a^1 > t$.
	Since $t_a^2 > t_a^1$, $t_a^2 > t$, 
		i.e., $p_a$ inserts $p_k$ into $askers$ at line~\ref{askers-sar} at time $t_a^2 > t$.
	Thus, by (1) and Lemma~\ref{asker1-sar}, 
		$p_a$ writes $\langle v, c \rangle$ into $R_{ak}$ at line~\ref{fresh1-sar} --- a contradiction to~($\star\star$).
\end{proof}

\begin{lemma}\label{f+1line-sar}
    If a $\Test$ operation by a {\ct} reader returns $v$ at time $t$,
        then there are at least $f+1$ {\ct} processes $p_i$ that have $R_i = v$ at all times $t' \ge t$.
\end{lemma}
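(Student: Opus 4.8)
The plan is to trace back from the point where the $\Test$ returns the non-$\bot$ value $v$ to the contents of the local set $\setv$, and then apply Lemma~\ref{ri1beforeset1-sar} to turn membership in $\setv$ into a statement about the shared registers $R_i$. First I would observe that since $\Test$ returns $v \ne \bot$ (recall the notational convention that $v$ always denotes a non-$\bot$ value), and line~\ref{return0-sar} can only return $\bot$, the reader $p_k$ must return at line~\ref{return1-sar-2}. Hence $p_k$ finds that the condition ``$\exists v$ such that $|\{p_j \mid \langle v,p_j\rangle\in\setv\}|\ge n-f$'' holds at line~\ref{return1-sar}, say at some time $t_0 \le t$, and moreover the value it returns is exactly this witnessed $v$. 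Thus at time $t_0$ there are at least $n-f$ processes $p_j$ with $\langle v,p_j\rangle\in\setv$ of $p_k$.

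Next I would do the standard counting step: since at most $f$ processes are faulty, at least $n-2f$ of those $p_j$ are correct, and since $n > 3f$ we have $n-2f \ge f+1$. So at least $f+1$ correct processes $p_i$ satisfy $\langle v,p_i\rangle\in\setv$ of $p_k$ at time $t_0$. Now apply Lemma~\ref{ri1beforeset1-sar} to each such correct $p_i$: since $\langle v,p_i\rangle$ is in $\setv$ of $p_k$ at time $t_0$ and $p_i$ is correct, $R_i = v$ at all times $t' \ge t_0$. Because $t_0 \le t$, this gives $R_i = v$ at all times $t' \ge t$, for each of these (at least) $f+1$ correct processes, which is exactly the claim.

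The routine steps here are the arithmetic ($n-2f\ge f+1$) and the identification of the returned value with the witnessed value in the code of $\Test()$. I do not expect any real obstacle; the only point that needs a little care is the bookkeeping of times — making sure we evaluate $\setv$ at the time $t_0 \le t$ when the return condition of line~\ref{return1-sar} is checked, and then invoke the ``stability'' direction of Lemma~\ref{ri1beforeset1-sar} to push the conclusion $R_i=v$ forward from $t_0$ to all times $\ge t$. (One could also note, via Corollary~\ref{unqiuesetv-sar}, that the witnessed value $v$ is unique among correct witnesses, but this is not needed for the argument.)
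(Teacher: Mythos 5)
Your proposal is correct and follows essentially the same route as the paper's proof: identify that the return must happen via line~\ref{return1-sar}, so $|\{p_j\mid\langle v,p_j\rangle\in\setv\}|\ge n-f$ at some time $t_k\le t$, count that at least $f+1$ of these witnesses are correct (since $n>3f$), and apply Lemma~\ref{ri1beforeset1-sar} together with its persistence to push $R_i=v$ forward to all times $t'\ge t$. No gaps.
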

\begin{proof}
    Let $\Test$ be an operation by a {\ct} reader $p_k$ that returns $v$ at time $t$.
    Since $\Test$ returns $v$ at time $t$,
		$p_k$ finds the condition $|\{p_j|\langle v,p_j\rangle\in \setv\}|\ge n-f$ of line~\ref{return1-sar} holds in $\Test$, say at time $t_k < t$.
	Since $n \ge 3f+1$, $|\{p_j|\langle v,p_j\rangle\in \setv\}| \ge 2f+1$ at time $t_k$.
	Since there are at most $f$ faulty processes,
		for at least $f+1$ {\ct} processes $p_j$, $\langle v,p_j\rangle\in \setv$ of $p_k$ at time $t_k$.
	By Lemma~\ref{ri1beforeset1-sar},
		at least $f+1$ {\ct} processes $p_j$ have $R_j= v$ at all times $t' \ge t_k$.
	Since $t_k < t$,
		at least $f+1$ {\ct} processes $p_j$ have $R_j= v$ at all times $t' \ge t$.
\end{proof}

By Lemma~\ref{unqiuev-sar} and Lemma~\ref{f+1line-sar}, we have the following corollary.
\begin{corollary}\label{unique-sar}
    Let $\Test$ and $\Test'$ be operations by two {\ct} readers.
    If $\Test$ returns $v$ and $\Test'$ returns $v'$,
        then $v=v'$.
\end{corollary}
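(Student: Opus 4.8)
The plan is to derive Corollary~\ref{unique-sar} directly from the two lemmas cited just before it, namely Lemma~\ref{unqiuev-sar} (any two correct processes whose $R$-registers ever hold non-$\bot$ values must hold the \emph{same} value) and Lemma~\ref{f+1line-sar} (a $\Test$ by a correct reader that returns a non-$\bot$ value $v$ forces at least $f+1$ correct processes to have $R_i = v$ from that point on). Recall that by the notation convention of this section, the symbols $v$ and $v'$ in the statement denote non-$\bot$ values, so the corollary is really only a claim about the situation where \emph{both} $\Test$ operations return non-$\bot$ values; the cases where one or both return $\bot$ are vacuous.

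First I would assume that $\Test$ (by a correct reader) returns $v$ at some time $t$ and $\Test'$ (by another correct reader) returns $v'$ at some time $t'$. Applying Lemma~\ref{f+1line-sar} to $\Test$, there are at least $f+1$ correct processes $p_i$ with $R_i = v$ at all times $\ge t$; since $f+1 \ge 1$, this gives at least one correct process $p_a$ with $R_a = v$ at some time. Symmetrically, applying Lemma~\ref{f+1line-sar} to $\Test'$ yields at least one correct process $p_b$ with $R_b = v'$ at some time. Now Lemma~\ref{unqiuev-sar}, applied to the correct processes $p_a$ and $p_b$, immediately gives $v = v'$, which is exactly the claim.

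Since this is a one-line consequence of results already proved, there is no genuine obstacle; the only thing to be careful about is the $\bot$/non-$\bot$ bookkeeping — making explicit that the corollary statement, under the running notation, concerns only non-$\bot$ return values, so that Lemma~\ref{f+1line-sar} (which is stated for non-$\bot$ returns) applies verbatim. A short proof realizing this plan would read as follows.

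\begin{proof}
By the notation convention of this section, $v$ and $v'$ are non-$\bot$ values. Suppose $\Test$ returns $v$ at time $t$ and $\Test'$ returns $v'$ at time $t'$. By Lemma~\ref{f+1line-sar} applied to $\Test$, there are at least $f+1$ {\ct} processes $p_i$ with $R_i = v$ at all times $\ge t$; in particular, since $f+1 \ge 1$, some {\ct} process $p_a$ has $R_a = v$ at some time. Similarly, by Lemma~\ref{f+1line-sar} applied to $\Test'$, some {\ct} process $p_b$ has $R_b = v'$ at some time. By Lemma~\ref{unqiuev-sar}, $v = v'$.
\end{proof}
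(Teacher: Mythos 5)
Your proof is correct and follows exactly the route the paper intends: the paper derives this corollary directly from Lemma~\ref{unqiuev-sar} and Lemma~\ref{f+1line-sar} without spelling out the details, and your argument fills in precisely those details (including the correct observation that, under the section's notation convention, $v$ and $v'$ are non-$\bot$ values, so Lemma~\ref{f+1line-sar} applies). No issues.
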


\begin{lemma}~\label{testreturn1-sar}
	Suppose there is a time $t$ such that at least $f+1$ processes $p_i$ have $R_i = v$ at all times $t' \ge t$.
	If a {\ct} reader invokes a $\Test$ operation after time $t$,  
		then $\Test$ returns $v$.
\end{lemma}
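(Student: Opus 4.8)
The plan is to mirror the structure of the proofs of Lemma~\ref{testreturn1-mar} and Lemma~\ref{testreturn1-tar}: first argue that the reader's $\Test$ operation cannot terminate by returning $\bot$ at line~\ref{return0-sar}, and then argue that since it must terminate, it terminates by returning a value at line~\ref{return1-sar-1}, and that this value must be $v$.

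First I would fix a {\ct} reader $p_k$ that invokes a $\Test$ operation after time $t$, and observe that among the $f+1$ processes $p_i$ with $R_i = v$ at all times $t' \ge t$, at least one, say $p_c$, is {\ct} (since at most $f$ processes are faulty); in particular $R_c = v$ at some time. Next, applying Lemma~\ref{notreturn0-sar} (whose hypothesis is exactly our hypothesis on $v$), $p_k$ never inserts a {\ct} process into $\setb$ during this $\Test$, so $\setb$ always contains only faulty processes and hence $|\setb| \le f$ throughout. Therefore $p_k$ never finds the condition $|\setb| > f$ of line~\ref{return0-sar} to hold, so this $\Test$ never returns $\bot$ at line~\ref{return0-sar}. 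By Lemma~\ref{termination-sar-2} (or Theorem~\ref{termination-sar}) this $\Test$ completes, and since it does not return at line~\ref{return0-sar}, the code of the $\Test$ procedure forces it to return some value $v'$ at line~\ref{return1-sar-1}.

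It then remains to show $v' = v$. By Lemma~\ref{f+1line-sar}, at least $f+1$ {\ct} processes have $R_i = v'$ (at all times after the return); in particular some {\ct} process $p_{c'}$ has $R_{c'} = v'$ at some time. Since $p_c$ has $R_c = v$ and $p_{c'}$ has $R_{c'} = v'$, Lemma~\ref{unqiuev-sar} gives $v = v'$, and so the $\Test$ returns $v$, as required. (Alternatively the last step could be phrased via Corollary~\ref{unique-sar} once $v$ is recognized as the return value of a $\Test$ by a correct reader, but invoking Lemma~\ref{unqiuev-sar} directly is cleaner, since $v$ itself need not be the return value of any $\Test$.)

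I expect the only subtlety to be this last step: ruling out a return of $\bot$ is routine given Lemma~\ref{notreturn0-sar}, exactly as in the {\mar} and {\tar} cases, but here $\Test$ returns a \emph{value} rather than a boolean, so one must separately pin down the value returned at line~\ref{return1-sar-1} to be $v$ rather than some other value that happens to be ``witnessed'' by $n-f$ processes in $\setv$. This is precisely where the Uniqueness-flavoured lemmas (Lemma~\ref{unqiuev-sar}, and behind it Lemma~\ref{f+1line-sar}), which rely on the stricter echo-based witnessing policy of the sticky register, do the real work; everything else is a direct adaptation of the earlier termination/validity arguments.
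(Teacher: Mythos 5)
Your proposal matches the paper's proof essentially step for step: it uses Lemma~\ref{notreturn0-sar} to rule out a $\bot$ return at line~\ref{return0-sar}, termination to conclude the $\Test$ returns some value $v'$, and then Lemma~\ref{f+1line-sar} combined with Lemma~\ref{unqiuev-sar} to conclude $v'=v$. The argument is correct and requires no changes.
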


\begin{proof}
	Suppose that there is a time $t$ such that
		(1) at least $f+1$ processes $p_i$ have $R_i = v$ at all times $t' \ge t$, and 
		(2) a {\ct} reader $p_k$ invokes a $\Test$ operation after time $t$.
	By Lemma~\ref{notreturn0-sar},
            $p_k$ does not insert any {\ct} process into $\setb$ in this $\Test$ operation.
            Thus $\setb$ can contain only faulty processes, so 
		$|\setb| \le f$ in $\Test$.
	So $p_k$ never finds that $|\setb| > f$ holds in line~\ref{return0-sar} of $\Test$ and so $\Test$ never returns $\bot$ at line~\ref{return0-sar}.
	Thus,
		by Theorem~\ref{termination-sar} and the code of $\Test$ procedure,
		$\Test$ returns a value $v'$, say at time $t_r >t$.
        \begin{claim}
            $v=v'$.
        \end{claim}
        \begin{proof}
            Since $\Test$ returns a value $v'$,
                by Lemma~\ref{f+1line-sar},
                there are at least $f+1$ {\ct} processes $p_i$ that have $R_i = v'$ at all times $t' \ge t_r$.
            Since there are at most $f$ faulty processes,
                there is at least one {\ct} processes $p_i$ that has $R_i = v'$ at all times $t' \ge t_r$.
            Since there are at most $f$ faulty processes,
                by (1),
                there is at least one {\ct} processes $p_i$ that has $R_i = v$ at all times $t' \ge t$.
            Thus, by Lemma~\ref{unqiuev-sar},
                $v=v'$.
        \end{proof}
        Thus, by the above claim,
            $\Test$ returns $v$.
\end{proof}

\begin{theorem}\label{testatestb-sar}
	Let $\Test$ and $\Test'$ be operations by two {\ct} readers.
	If $\Test$ precedes $\Test'$ and $\Test$ returns $v$, then $\Test'$ returns $v$.
\end{theorem}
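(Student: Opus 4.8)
The plan is to derive the statement directly from two results already established in this section, since the real work (the ``stickiness'' of the witness registers) has already been done. Specifically, I would combine Lemma~\ref{f+1line-sar} --- which says that whenever a $\Test$ by a {\ct} reader returns $v$ at time $t$, at least $f+1$ {\ct} processes $p_i$ have $R_i = v$ at all times $t' \ge t$ --- with Lemma~\ref{testreturn1-sar} --- which says that if at least $f+1$ processes $p_i$ have $R_i = v$ at all times $t' \ge t$, then every $\Test$ invoked by a {\ct} reader after time $t$ returns $v$. The fact that these two lemmas are compatible (i.e.\ that no {\ct} process can ever hold two different values in its witness register) rests on Lemma~\ref{unqiuev-sar} and its corollaries, which are already proved, so the remaining argument is only a few lines.

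Concretely, I would let $t$ be the time at which $\Test$ returns $v$ in $H$. First, applying Lemma~\ref{f+1line-sar} to $\Test$ yields at least $f+1$ {\ct} processes $p_i$ with $R_i = v$ at all times $t' \ge t$; in particular there are at least $f+1$ processes (correct or not) with this property, which is exactly the hypothesis required by Lemma~\ref{testreturn1-sar}. Next, since $\Test$ precedes $\Test'$, by the definition of \precedes{} the {\ct} reader performing $\Test'$ invokes $\Test'$ strictly after the response of $\Test$, i.e.\ after time $t$. Finally, invoking Lemma~\ref{testreturn1-sar} with this time $t$ and this value $v$ gives that $\Test'$ returns $v$, which is the desired conclusion.

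I do not expect a genuine obstacle at this level: the one point that must be stated carefully is that $\Test'$ returns the \emph{same} value $v$ rather than some other non-$\bot$ value, but this is immediate because Lemma~\ref{testreturn1-sar} pins the return value down to be precisely the value held by the $f+1$ processes (and Corollary~\ref{unique-sar} would in any case exclude a different answer among {\ct} readers). All the delicate reasoning --- the loop invariants of Lemma~\ref{onecorrectset0-sar}, the uniqueness of witnessed values, and the interaction between the repeat--until loop and the helping procedure --- is encapsulated in the lemmas cited above, so this theorem should follow from them mechanically.
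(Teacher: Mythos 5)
Your proposal is correct and follows essentially the same route as the paper's proof: fix $t$ as the response time of $\Test$, apply Lemma~\ref{f+1line-sar} to obtain $f+1$ correct processes $p_i$ with $R_i=v$ at all times $t'\ge t$, note that $\Test'$ is invoked after $t$ because $\Test$ precedes it, and conclude via Lemma~\ref{testreturn1-sar}. The point you flag about $\Test'$ returning the same value $v$ is indeed already handled inside Lemma~\ref{testreturn1-sar} (via Lemma~\ref{unqiuev-sar}), so no further argument is needed.
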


\begin{proof}
	Let $\Test$ and $\Test'$ be operations by two {\ct} readers $p_a$ and $p_b$ respectively.
	Suppose that $\Test$ precedes $\Test'$ and $\Test$ returns $v$.
	Let $t$ be the time when $\Test$ returns $v$.
	Since $\Test$ precedes $\Test'$,
		$p_b$ invokes $\Test'$ after time $t$ ($\star$).
	Since $\Test$ returns $v$ at time $t$,
		by Lemma~\ref{f+1line-sar},
		at least $f+1$ {\ct} processes $p_j$ have $R_j= v$ at all times $t' \ge t$.
	Thus, by ($\star$) and Lemma~\ref{testreturn1-sar}, $\Test'$ returns $v$.	
\end{proof}

\begin{definition}\label{t0t1-sar}
	\begin{itemize}
		\item Let $t_0$ be the max invocation time of any $\Test$ operation by a {\ct} reader that returns $\bot$;
		if no such $\Test$ operation exists, $t_0 = 0$.
		\item Let $t_1$ be the min response time of any $\Test$ operation by a {\ct} reader that returns any non-$\bot$ value;
		if no such $\Test$ operation operation exists, $t_1 = \infty$.
	\end{itemize}
\end{definition}

\begin{lemma}\label{iexists-sar}
      $t_1 > t_0$ and so the interval $(t_0, t_1)$ exists.
\end{lemma}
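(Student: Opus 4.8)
The plan is to mirror the structure of the analogous arguments already carried out for the verifiable and authenticated registers, namely Lemma~\ref{iexists-mar} and Lemma~\ref{iexists-tar}. The proof proceeds by a four-way case analysis according to whether there exists a $\Test$ operation by a correct reader returning $\bot$, and whether there exists a $\Test$ operation by a correct reader returning a non-$\bot$ value.

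In the three ``easy'' cases the claim follows immediately from the definitions of $t_0$ and $t_1$ in Definition~\ref{t0t1-sar}: if no correct $\Test$ returns $\bot$ then $t_0 = 0 < t_1$ regardless of whether $t_1$ is finite or $\infty$; and if no correct $\Test$ returns a non-$\bot$ value then $t_1 = \infty > t_0$ regardless of whether $t_0$ is $0$ or finite. So I would dispatch these three cases in one or two lines each.

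The only case requiring an argument is the case where some correct $\Test$ returns $\bot$ \emph{and} some correct $\Test$ returns a non-$\bot$ value. Here I would let $\Test'$ be the $\bot$-returning correct operation with maximum invocation time (so its invocation time is exactly $t_0$ by Definition~\ref{t0t1-sar}), and let $\Test''$ be the non-$\bot$-returning correct operation with minimum response time (so its response time is exactly $t_1$). The key step is to invoke Theorem~\ref{testatestb-sar} in contrapositive form: if $\Test''$ (which returns a non-$\bot$ value $v$) preceded $\Test'$, then $\Test'$ would have to return $v \neq \bot$, contradicting that $\Test'$ returns $\bot$. Hence $\Test''$ does not precede $\Test'$, so the response time of $\Test''$ strictly exceeds the invocation time of $\Test'$, i.e., $t_1 > t_0$. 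Concluding, in all four cases $t_1 > t_0$, so the open interval $(t_0, t_1)$ is nonempty.

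There is no real obstacle here; the content is entirely routine given Theorem~\ref{testatestb-sar}. The one point to be careful about is that Theorem~\ref{testatestb-sar} is stated only for correct readers, so all four cases must restrict attention to $\Test$ operations by correct readers, which is exactly how $t_0$ and $t_1$ are defined. (Note also that Theorem~\ref{testatestb-sar} already uses the fact, established in Corollary~\ref{unique-sar}, that all correct readers that return a non-$\bot$ value return the \emph{same} value, so no additional uniqueness reasoning is needed inside this lemma.)
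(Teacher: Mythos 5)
Your proof is correct and follows essentially the same route as the paper's: the same four-way case split on the existence of correct $\Test$ operations returning $\bot$ and non-$\bot$ values, with the only substantive case resolved by applying Theorem~\ref{testatestb-sar} (in contrapositive) to the extremal operations $\Test'$ and $\Test''$ to conclude that $\Test''$ does not precede $\Test'$, hence $t_1 > t_0$.
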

\begin{proof}
    There are four cases:
    \begin{itemize}
         \item Case 1: no $\Test$ operation by a {\ct} reader returns $\bot$,
            and no $\Test$ operation by a {\ct} reader returns a value $v$.
            Then $t_0 = 0$ and $t_1 = \infty$.
            So $t_1 > t_0$.
        \item Case 2: no $\Test$ operation by a {\ct} reader returns $\bot$ and some $\Test$ operation by a {\ct} reader returns a value $v$.
            Then $t_0 = 0$ and  $t_1 > 0$.
            So $t_1 > t_0$.
        \item Case 3: some $\Test$ operation by a {\ct} reader returns $\bot$
            and
            no $\Test$ operation by a {\ct} reader returns a value $v$.
            Then  $t_0 < \infty$ and $t_1 = \infty$.
            So $t_1 > t_0$.
            
        \item Case 4: some $\Test$ operation by a {\ct} reader returns $\bot$
        and
        some $\Test$ operation by a {\ct} reader returns a value $v$.
            Let $\Test'$ be the $\Test$ operation
            with the max invocation time of any $\Test$ operation by a {\ct} process that returns $\bot$. 
            By Definition~\ref{t0t1-sar}, the invocation time of $\Test'$ is $t_0$~($\star$).
            Let $\Test''$ be the $\Test$ operation
            with the min response time of any $\Test$ operation by a {\ct} process that returns some $v$.
            By Definition~\ref{t0t1-sar}, the response time of $\Test''$ is $t_1$ ($\star\star$).
                         
            Since $\Test''$ returns $v$ and $\Test'$ returns $\bot$, 
                by Theorem~\ref{testatestb-sar},
                $\Test''$ does not precede $\Test'$.
            So the response time of $\Test''$ is greater than the invocation time of $\Test'$.
            Thus, by ($\star$) and ($\star\star$),
                 $t_1 > t_0$.
    \end{itemize}
    Therefore in all cases,  $t_1 > t_0$ and so the interval $(t_0,t_1)$ exists.
\end{proof}

\begin{definition}
    Let $\hct$ be the set of processes that are {\ct} in the history $H$.
\end{definition}

\begin{definition}\label{hcstep-sar}
    Let $H|{\hct}$ be the history consisting of all the steps of all the {\ct} processes in $H$ (at the same times they occur in $H$).
\end{definition}

By Theorem~\ref{termination-sar}, all the processes that are correct in $H$ (i.e., all the processes in {\hct}) complete their operations, so:

\begin{observation}\label{allcomplete-sar}
    Every operation in $H|{\hct}$ is complete (i.e., it has both an invocation and a response).
\end{observation}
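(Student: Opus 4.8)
The plan is to obtain Observation~\ref{allcomplete-sar} immediately from the Termination property proved in Theorem~\ref{termination-sar}, together with the definition of $H|{\hct}$ (Definition~\ref{hcstep-sar}). This is the sticky-register analogue of Observation~\ref{allcomplete-mar} and Observation~\ref{allcomplete}, and like those it is a routine matching-up of definitions, so I do not expect any genuine obstacle; the only care needed is to be precise about what ``an operation in $H|{\hct}$'' and ``complete'' mean.

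First I would note that, by Definition~\ref{hcstep-sar}, $H|{\hct}$ consists of exactly the steps taken in $H$ by the processes in ${\hct}$, occurring at the same times as in $H$. Hence every operation that appears in $H|{\hct}$ is an operation invoked by some process $p \in {\hct}$, i.e.\ by a process that is correct in $H$. Since a SWMR {\sr} has only the two operations $\Set$ and $\Test$ (Definition~\ref{def-sar}), every operation in $H|{\hct}$ is a $\Set$ or a $\Test$ operation by a correct process.

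Next I would apply Theorem~\ref{termination-sar}: every $\Set$ and $\Test$ operation by a correct process completes in $H$, which by definition means that in $H$ it has both an invocation step and a corresponding response step. Because $p$ is correct in $H$ and $H|{\hct}$ retains all of $p$'s steps at the same times, both the invocation step and the response step of the operation also belong to $H|{\hct}$. Therefore the operation is complete in $H|{\hct}$. Since the operation was arbitrary, every operation in $H|{\hct}$ is complete, which is the claim.
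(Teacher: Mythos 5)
Your proposal is correct and matches the paper's reasoning exactly: the paper derives this observation in one sentence from Theorem~\ref{termination-sar} together with the definition of $H|{\hct}$, which is precisely the argument you spell out. No issues.
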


\begin{observation}\label{correctop-sar}
    An operation $o$ by a process $p$ is in $H|{\hct}$ if and only if $o$ is also in $H$ and $p \in {\hct}$.
\end{observation}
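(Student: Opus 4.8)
The plan is to read the statement off directly from Definition~\ref{hcstep-sar}, which declares $H|{\hct}$ to be the history consisting of all the steps of all the {\ct} processes in $H$, taken at the same times they occur in $H$. The only thing to make explicit first is what it means for an \emph{operation} $o$ (as opposed to an individual step) to be ``in'' a history: $o$ is in a history $H'$ exactly when the invocation step of $o$ (a step of its owning process $p$) occurs in $H'$, and in that case all the remaining steps that constitute $o$ — its response, if it completes, and the associated implementation steps of $p$ — are also steps of $p$ and hence appear or not appear in $H'$ together with the invocation step.

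First I would prove the ``only if'' direction. Suppose $o$ by $p$ is in $H|{\hct}$. Then the invocation step of $o$ is in $H|{\hct}$. By Definition~\ref{hcstep-sar}, every step appearing in $H|{\hct}$ is a step of some process that is {\ct} in $H$, and it occurs in $H$ at the same time; hence $p \in {\hct}$ and that step — and therefore all of $o$, since its constituent steps are all steps of $p$ — is in $H$.

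Then I would prove the ``if'' direction. Suppose $o$ by $p$ is in $H$ and $p \in {\hct}$. Every step that constitutes $o$ is a step of the {\ct} process $p$ in $H$, so by Definition~\ref{hcstep-sar} each such step is included in $H|{\hct}$ at the time it occurs in $H$; hence $o$ is in $H|{\hct}$.

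I do not expect any real obstacle here: the equivalence is immediate from the definition of the projection once the identification of an operation with the set of steps of its owning process is spelled out. This observation is the exact analogue of Observations~\ref{correctop-mar} and~\ref{correctop-tar} proved for the {\mar} and {\tar} implementations, and its proof is the same.
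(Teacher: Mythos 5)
Your proposal is correct and matches the paper's treatment: the paper states this as an unproved observation, taking it to be immediate from Definition~\ref{hcstep-sar} of the projection $H|{\hct}$, which is exactly the reading-off-the-definition argument you give. Spelling out that an operation is ``in'' a history precisely when its constituent steps (all steps of its owning process) are, is the only content here, and you handle it properly in both directions.
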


\begin{observation}\label{sameop-sar}
For all processes $p \in {\hct}$,
    $p$ has the same operations (i.e., the same invocation and response operation steps) in both $H$ and $H|{\hct}$.
    Furthermore, the $[invocation,response]$ intervals of these operations are the same in both $H$ and $H|{\hct}$.
\end{observation}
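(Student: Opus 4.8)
The statement is essentially a direct unfolding of Definition~\ref{hcstep-sar}, so the plan is to argue straight from that definition. First I would fix an arbitrary process $p \in \hct$. By Definition~\ref{hcstep-sar}, the history $H|{\hct}$ is obtained from $H$ by retaining exactly the steps taken in $H$ by processes in $\hct$, each at the same time at which it occurs in $H$; the projection removes only steps of processes \emph{not} in $\hct$ and neither adds, deletes, nor moves any step of a process in $\hct$. Since $p \in \hct$, this means the sequence of steps of $p$ is identical in $H$ and in $H|{\hct}$, with each step occurring at the same time in both; in particular every invocation step and every response step of every operation that $p$ performs (its $\Set(-)$ and $\Test$ operations) appears in $H|{\hct}$ precisely when it appears in $H$. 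Consequently the operations of $p$ — which are fully determined by $p$'s invocation and response steps — coincide in $H$ and $H|{\hct}$, and by the same timing argument the $[invocation,response]$ interval of each such operation is identical in the two histories.

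The only point requiring a little care is that for these intervals to be well defined on \emph{both} sides we want each operation of $p$ to be complete (i.e.\ to have both an invocation and a response) in $H$, hence in $H|{\hct}$; this is exactly what Theorem~\ref{termination-sar} gives, since $p$ is correct, and it is what underlies Observation~\ref{allcomplete-sar} — so I would invoke it at that step. Beyond this bookkeeping there is no real obstacle: the claim holds simply because the map $H \mapsto H|{\hct}$ preserves the identity and timing of all steps of correct processes, including the operation-boundary steps, and the ``hard part'' amounts to nothing more than making this preservation explicit. (This observation, together with Observation~\ref{correctop-sar}, is then the routine glue used later to transfer properties proved about $H$ to the constructed history $H'$.)
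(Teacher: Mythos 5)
Your proposal is correct and matches the paper's reasoning: the paper states this as an unproved observation precisely because it follows immediately from Definition~\ref{hcstep-sar}, namely that $H|{\hct}$ keeps every step of every correct process at the same time it occurs in $H$, so the invocation/response steps and hence the operations and their intervals of any $p \in \hct$ are unchanged. Your extra appeal to Theorem~\ref{termination-sar} for completeness is harmless but not needed for this particular observation (the paper records completeness separately in Observation~\ref{allcomplete-sar}).
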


By the Definition~\ref{t0t1-sar} of $t_0$ and $t_1$, and Observation~\ref{sameop-sar}, we have the following.
\begin{observation}\label{sameip-sar-hc}
        In $H|{\hct}$,
	\begin{itemize}
		\item The max invocation time of any $\Test$ operation that returns $\bot$ is $t_0$.
		\item The min response time of any $\Test$ operation that returns any non-$\bot$ value is $t_1$.
	\end{itemize}
\end{observation}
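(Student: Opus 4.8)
The plan is to derive Observation~\ref{sameip-sar-hc} as an immediate bookkeeping consequence of the two facts cited just before it, namely Definition~\ref{t0t1-sar} and Observation~\ref{sameop-sar}, together with Observation~\ref{correctop-sar}. First I would note that, by Definition~\ref{hcstep-sar}, the history $H|{\hct}$ is built only from steps of {\ct} processes, so every operation occurring in $H|{\hct}$ is performed by a {\ct} process; combining this with Observation~\ref{correctop-sar}, the $\Test$ operations appearing in $H|{\hct}$ are exactly the $\Test$ operations performed in $H$ by readers that are {\ct} in $H$.

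Second, I would apply Observation~\ref{sameop-sar}: for each such $\Test$ operation, its invocation step and its response step --- and hence its invocation time, its response time, and the value it returns --- are identical in $H$ and in $H|{\hct}$. Putting these together, the set of $\Test$ operations that return $\bot$ in $H|{\hct}$, paired with their invocation times, coincides with the set of $\Test$ operations by {\ct} readers that return $\bot$ in $H$, paired with their invocation times; and analogously the set of $\Test$ operations returning a non-$\bot$ value in $H|{\hct}$, paired with their response times, coincides with the corresponding set in $H$.

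Third, I would invoke Definition~\ref{t0t1-sar}, which defines $t_0$ as the maximum invocation time over $\Test$ operations by {\ct} readers returning $\bot$ in $H$ (with the convention $t_0 = 0$ if there are none), and $t_1$ as the minimum response time over $\Test$ operations by {\ct} readers returning a non-$\bot$ value in $H$ (with the convention $t_1 = \infty$ if there are none). By the correspondence established in the previous step, taking these extrema inside $H|{\hct}$ ranges over the very same operations with the very same timestamps, so the maximum invocation time of a $\bot$-returning $\Test$ in $H|{\hct}$ is $t_0$ and the minimum response time of a non-$\bot$-returning $\Test$ in $H|{\hct}$ is $t_1$, which is exactly the claim. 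I do not expect any genuine obstacle here: the only point requiring a moment of care is the degenerate case in which no relevant $\Test$ operation exists, but there the conventional values $t_0 = 0$ and $t_1 = \infty$ agree on both sides, so the statement still holds.
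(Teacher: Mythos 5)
Your proposal is correct and matches the paper's own justification: the paper derives this observation directly from Definition~\ref{t0t1-sar} together with Observations~\ref{correctop-sar} and~\ref{sameop-sar}, exactly the bookkeeping correspondence you spell out (including the degenerate cases handled by the conventions $t_0=0$ and $t_1=\infty$). No gap.
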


To show that $H$ is Byzantine linearizable with respect to a SWMR \sr, 
    we must show that there is a history $H'$ such that:
    (a)~$H'|{\hct} = H|{\hct}$, and
    (b)~$H'$ is linearizable with respect to a SWMR {\sr}. 

There are two cases, depending on whether the writer is {\ct} in $H$.

\subsubsection*{Case 1: the writer $p_1$ is correct in $H$.}\label{case-writer-correct-sar}
Let $H' = H|{\hct}$.
We now show that the history $H'$ is linearizable with respect to a SWMR \sr.
To do so, 
    we first define the linearization points of operations in $H'$; 
and then we use these linearization points to define a linearization $L$ of $H'$ such that:
(a) $L$ respects the precedence relation between the operations of $H'$, and 
(b) $L$ conforms to the sequential specification of a SWMR {\sr} (which is given in Definition~\ref{def-sar}).

Since the writer $p_1$ is correct in $H$, we have the following observations and corollaries.
    \begin{observation}\label{onlywriteonce-sar}
In $H$, if the writer $p_1$ writes $v$ into $E_1$ at line~\ref{setter1-sar} in a $\Set(v)$ operation,
    then this $\Set(v)$ is the first\footnote{ When the writer is {\ct} in $H$, $\Set(-)$ operations by the writer are sequential in $H$. 
                So here the \emph{first} $\Set(-)$ operation is well-defined in $H$.} $\Set(-)$ operation in $H$.
\end{observation}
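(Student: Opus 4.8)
\textbf{Proof proposal for Observation~\ref{onlywriteonce-sar}.}

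The plan is to argue directly from the code of the $\Set(-)$ procedure, using the fact that the writer $p_1$ is correct and hence executes its operations sequentially, and the fact that $E_1$ is a SWMR register that (by Observation~\ref{correctonlyE-sar}) $p_1$ never overwrites once it holds a non-$\bot$ value. First I would suppose that $p_1$ writes $v$ into $E_1$ at line~\ref{setter1-sar} during some $\Set(v)$ operation $o$, say at time $t$. The key observation is the guard at line~\ref{stick-sar}: in the procedure $\Set(-)$, the writer first reads $E_1$ and returns $\done$ immediately (without reaching line~\ref{setter1-sar}) if $E_1 \neq \bot$. So for $p_1$ to reach line~\ref{setter1-sar} in $o$, it must have read $E_1 = \bot$ at line~\ref{stick-sar} of $o$, at some time $t_0 < t$.

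Now suppose for contradiction that $o$ is \emph{not} the first $\Set(-)$ operation in $H$; let $o'$ be a $\Set(v')$ operation that precedes $o$ (well-defined since $p_1$ is correct and its operations are sequential). Since $o'$ precedes $o$, the response step of $o'$ occurs before the invocation of $o$, hence before $t_0$. I would then argue that $o'$ itself must have written $v'$ into $E_1$ at line~\ref{setter1-sar}: if $o'$ had returned at line~\ref{stick-sar} because it read $E_1 \neq \bot$, then — since $p_1$ is the only writer of $E_1$ and (by Observation~\ref{correctonlyE-sar}) never changes $E_1$ back to $\bot$ once it is non-$\bot$ — $E_1$ would still be non-$\bot$ at time $t_0$, contradicting that $p_1$ reads $E_1 = \bot$ at line~\ref{stick-sar} of $o$. (One has to chase this back to the very first $\Set(-)$ operation that reaches line~\ref{setter1-sar}; since there are finitely many $\Set(-)$ operations before $o$ and they are totally ordered, such a first one exists, and it writes a non-$\bot$ value into $E_1$ before $t_0$.) Hence $E_1 \neq \bot$ at all times after the response of that earliest $\Set(-)$, and in particular at $t_0$ — contradicting that $o$'s guard read $E_1 = \bot$. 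Therefore no $\Set(-)$ operation precedes $o$, i.e., $o$ is the first $\Set(-)$ operation in $H$.

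The main obstacle here is essentially bookkeeping: one must be careful that ``$p_1$ writes $v$ into $E_1$ at line~\ref{setter1-sar} in $o$'' forces $o$ to have passed the guard at line~\ref{stick-sar} with $E_1 = \bot$, and then invoke the monotonicity of $E_1$ (Observation~\ref{correctonlyE-sar}) together with the single-writer property of $E_1$ to conclude that no earlier $\Set(-)$ could have written to $E_1$ without leaving it non-$\bot$ forever after. Once these two facts are combined, the conclusion that $o$ must be the first $\Set(-)$ operation is immediate. No intricate calculation is involved; the care is entirely in the ordering of times $t_0 < t$ and in correctly handling the (finitely many, totally ordered) $\Set(-)$ operations that might precede $o$.
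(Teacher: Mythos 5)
Your proposal is correct and matches the reasoning the paper relies on implicitly: the statement is given as an unproved Observation, and its justification is exactly the guard at line~\ref{stick-sar} combined with the monotonicity of $E_1$ (Observation~\ref{correctonlyE-sar}) and the fact that the correct writer's $\Set(-)$ operations are sequential. (Your parenthetical about chasing back to the earliest $\Set(-)$ is unnecessary — each of your two cases for $o'$ already yields the contradiction directly — but it does no harm.)
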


By Lemma~\ref{f+1line-sar} and Lemma~\ref{wbeforer-sar},
    we have the following corollary.
\begin{corollary}\label{nosettest-sar}
	In $H$, if a $\Test$ operation by a {\ct} reader returns $v$ at time $t$,
	   then $p_1$ writes $v$ into $E_1$ at line~\ref{setter1-sar} in a $\Set(v)$ operation before time $t$.
        Furthermore, this $\Set(v)$ is the first $\Set(-)$ in $H$.
\end{corollary}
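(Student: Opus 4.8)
\textbf{Proof plan for Corollary~\ref{nosettest-sar}.}

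The plan is to chain together two already-established facts and then append the uniqueness observation about the writer. Suppose a $\Test$ operation by a correct reader returns $v$ at time $t$. First I would invoke Lemma~\ref{f+1line-sar}: since this $\Test$ returns $v$ at time $t$, there are at least $f+1$ correct processes $p_i$ that have $R_i = v$ at all times $t' \ge t$. In particular, fixing one such correct process $p_a$, we have $R_a = v$ at time $t$ (indeed at all times from $t$ onward). Next I would apply Lemma~\ref{wbeforer-sar}, whose hypothesis is exactly that the writer $p_1$ is correct and some correct process $p_i$ has $R_i = v$ at some time: this lemma gives that $p_1$ wrote $v$ into $E_1$ at line~\ref{setter1-sar} in a $\Set(v)$ operation before the time at which $p_a$ has $R_a = v$. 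Since $p_a$ has $R_a = v$ at time $t$, this $\Set(v)$ occurs before time $t$, which is the first claim.

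For the second claim, once we know that $p_1$ writes $v$ into $E_1$ at line~\ref{setter1-sar} in some $\Set(v)$ operation, Observation~\ref{onlywriteonce-sar} immediately tells us that this $\Set(v)$ is the first $\Set(-)$ operation in $H$. So the corollary follows. The only subtlety worth being careful about is matching the hypotheses of Lemma~\ref{wbeforer-sar} precisely — it needs a \emph{correct} process with $R_i = v$, not merely $f+1$ arbitrary processes — which is why I first extract a correct witness from the $f+1$ processes guaranteed by Lemma~\ref{f+1line-sar} (this works because there are at most $f$ faulty processes, so among $f+1$ processes at least one is correct). I do not anticipate any real obstacle here; the corollary is a short composition of prior results, and all the heavy lifting (the $f+1$ witnessing argument, the echo-chain argument in Lemma~\ref{wbeforer-sar}, and the single-write observation) has already been done.

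\begin{proof}
Suppose a $\Test$ operation by a {\ct} reader returns $v$ at time $t$.
By Lemma~\ref{f+1line-sar},
    there are at least $f+1$ {\ct} processes $p_i$ that have $R_i = v$ at all times $t' \ge t$.
Since there are at most $f$ faulty processes,
    there is at least one {\ct} process $p_a$ that has $R_a = v$ at time $t$.
Since the writer $p_1$ is correct and the {\ct} process $p_a$ has $R_a = v$ at time $t$,
    by Lemma~\ref{wbeforer-sar},
    $p_1$ wrote $v$ into $E_1$ at line~\ref{setter1-sar} in a $\Set(v)$ operation before time $t$.
Moreover, since $p_1$ writes $v$ into $E_1$ at line~\ref{setter1-sar} in this $\Set(v)$ operation,
    by Observation~\ref{onlywriteonce-sar},
    this $\Set(v)$ is the first $\Set(-)$ operation in $H$.
\end{proof}
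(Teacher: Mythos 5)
Your proof is correct and follows exactly the paper's route: the paper derives this corollary directly from Lemma~\ref{f+1line-sar} and Lemma~\ref{wbeforer-sar} (with Observation~\ref{onlywriteonce-sar} implicitly supplying the ``first $\Set(-)$'' claim), which is precisely the chain you spell out. The only (harmless) redundancy is your step extracting a correct witness from the $f+1$ processes, since Lemma~\ref{f+1line-sar} already guarantees that those $f+1$ processes are themselves correct.
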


Corollary~\ref{nosettest-sar} implies the following:
\begin{corollary}\label{nosettest-sar-1}
	In $H$, if a $\Test$ operation by a {\ct} reader precedes the first $\Set(-)$ operation by the writer (if any),
        then this $\Test$ returns $\bot$.
\end{corollary}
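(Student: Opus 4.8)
The plan is to obtain this as an immediate consequence of Corollary~\ref{nosettest-sar}, arguing by contraposition. First I would suppose, for contradiction, that some $\Test$ operation by a correct reader precedes the first $\Set(-)$ operation by the writer, yet this $\Test$ returns a non-$\bot$ value $v$; let $t$ be the time at which it returns $v$. Applying Corollary~\ref{nosettest-sar} to this $\Test$ yields a $\Set(v)$ operation in which the writer $p_1$ executes line~\ref{setter1-sar} (writing $v$ into $E_1$) at some time strictly before $t$, and moreover this $\Set(v)$ is the first $\Set(-)$ operation in $H$.

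Next I would derive the contradiction. Since the $\Test$ precedes the first $\Set(-)$ operation of the writer --- which, by the previous step, is exactly this $\Set(v)$ --- the response of the $\Test$ (at time $t$) is before the invocation of $\Set(v)$; hence the invocation of $\Set(v)$ is after time $t$. But line~\ref{setter1-sar} is a step taken inside the interval from the invocation to the response of the $\Set(v)$ operation, so it cannot occur before that operation's invocation, and therefore cannot occur before time $t$ --- contradicting the conclusion of Corollary~\ref{nosettest-sar} that $p_1$ executes line~\ref{setter1-sar} of this $\Set(v)$ before time $t$. Hence the $\Test$ must return $\bot$, as claimed.

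I do not expect any real obstacle here: the argument is purely a matching of timestamps. The one point that deserves a word of care is ensuring that the $\Set(v)$ supplied by Corollary~\ref{nosettest-sar} is indeed ``the first $\Set(-)$ operation'' referred to in the statement --- but this is precisely the ``furthermore'' clause of Corollary~\ref{nosettest-sar}, and it is well-defined because the writer, being correct in $H$ in this case, applies its operations sequentially.
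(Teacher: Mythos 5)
Your argument is correct and is exactly the implication the paper intends: the paper gives no explicit proof, stating only that Corollary~\ref{nosettest-sar} implies the claim, and your contrapositive argument (the writer's line~\ref{setter1-sar} step must lie inside the $[invocation,response]$ interval of the first $\Set(-)$, which by hypothesis begins only after the $\Test$ returns, contradicting the "before time $t$" conclusion of Corollary~\ref{nosettest-sar}) is precisely the missing two lines. Your side remark about the "furthermore" clause pinning down the first $\Set(-)$ is also the right point of care.
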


\begin{lemma}\label{valid-sar} 
    In $H$, if $\Set(v)$ is the first $\Set(-)$ operation and it precedes a $\Test$ operation by a {\ct} reader,
		then this $\Test$ returns $v$.
\end{lemma}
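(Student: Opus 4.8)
The plan is to prove Lemma~\ref{valid-sar} by tracing what the writer $p_1$ does in its (first and only) $\Set(v)$ operation and showing that this forces every later $\Test$ by a correct reader to return $v$. First I would observe that since $\Set(v)$ is the first $\Set(-)$ operation and precedes the $\Test$, and since $p_1$ is correct, $p_1$ actually executes the body of $\Set(v)$: by line~\ref{stick-sar}, $E_1 = \bot$ when $\Set(v)$ begins (nothing was written before), so $p_1$ does not return at line~\ref{stick-sar}; it writes $v$ into $E_1$ at line~\ref{setter1-sar}, and then it completes the repeat-until loop, which by line~\ref{vconfirm-sar} means that at the moment $\Set(v)$ returns there is a set of at least $n-f$ processes $p_i$ with $R_i = v$. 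Since at most $f$ of these are faulty, at least $n-2f \ge f+1$ of them are correct, and by Observation~\ref{correctonlywrites1-sar} these correct processes keep $R_i = v$ at all later times.

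Next I would fix $t$ to be the response time of $\Set(v)$, so that at all times $t' \ge t$ there are at least $f+1$ correct (hence at least $f+1$) processes $p_i$ with $R_i = v$. Since the $\Set(v)$ precedes the $\Test$ by a correct reader, this $\Test$ is invoked after time $t$. Then I would invoke Lemma~\ref{testreturn1-sar} directly: its hypothesis — ``there is a time $t$ such that at least $f+1$ processes $p_i$ have $R_i = v$ at all times $t' \ge t$, and a correct reader invokes a $\Test$ operation after time $t$'' — is exactly what we have established, and its conclusion is that this $\Test$ returns $v$. That finishes the lemma.

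The only subtlety — and the step I'd expect to need the most care — is checking that $E_1 = \bot$ at the start of $\Set(v)$, i.e., that when $p_1$ executes line~\ref{stick-sar} in its first $\Set(-)$, no value has been written into $E_1$ yet. This follows because $p_1$ is correct and writes to $E_1$ only at line~\ref{setter1-sar} of a $\Set(-)$ operation (the help procedure writes only to $E_j$ for $j$ possibly equal to $1$ at line~\ref{echo-sar}, but that happens only when $E_1$ is read into $E_1$, a no-op; one should note this carefully). Since $\Set(v)$ is the \emph{first} $\Set(-)$, $p_1$ has not executed line~\ref{setter1-sar} of any earlier $\Set(-)$, so $E_1$ still holds its initial value $\bot$ when line~\ref{stick-sar} is evaluated. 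Hence $p_1$ proceeds past line~\ref{stick-sar}, and the rest of the argument goes through as above. A brief remark that Lemma~\ref{termination-sar-1} guarantees the $\Set(v)$ operation indeed completes (so that ``the response time of $\Set(v)$'' is well-defined) rounds out the proof.
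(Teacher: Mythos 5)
Your proof is correct and follows essentially the same route as the paper's: both arguments use the fact that the correct writer exits the repeat-until loop of $\Set(v)$ only after seeing $n-f$ registers equal to $v$, deduce that at least $f+1$ \emph{correct} processes have $R_i=v$ permanently (via Observation~\ref{correctonlywrites1-sar}), and then conclude by applying Lemma~\ref{testreturn1-sar} to the later $\Test$. The extra care you take with line~\ref{stick-sar} and with termination of $\Set(v)$ is sound but only makes explicit what the paper's proof leaves implicit.
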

\begin{proof}
	Suppose $\Set(v)$ is the first $\Set(-)$ operation and it precedes a $\Test$ operation by a {\ct} reader in $H$.
    Since the writer $p_1$ is {\ct} and $\Set(v)$ is the first $\Set(-)$,
        $p_1$ finds that $|\{ r_i ~|~ r_i = v\}| \ge n-f$ holds in line~\ref{vconfirm-sar} of the $\Set(v)$,  say at time $t$,
        before it completes this $\Set(v)$.
    Then there are at least $n-f$ processes $p_i$ such that $p_1$ reads $R_i=v$ at line~\ref{writerreadsr-sar} by time $t$.
    Since there are at most $f$ faulty processes,
        there are at least $n-2f$ {\ct} processes $p_i$ that have $R_i=v$ before time $t$.
    Since $n>3f$,
        there are at least $f+1$ {\ct} processes $p_i$ that have $R_i=v$ before time $t$.
    By Observation~\ref{correctonlywrites1-sar},
        there are at least $f+1$ {\ct} processes $p_i$ that have $R_i=v$ at all times $t' \ge t$ ($\star$).
    Since the $\Set(v)$ precedes the $\Test$ operation by the {\ct} reader,
            the reader invokes this $\Test$ after time $t$.
    By $(\star)$ and Lemma~\ref{testreturn1-sar}, 
		this $\Test$ returns $v$.
\end{proof}

\begin{lemma}\label{tinip-sar}
	In $H$, if $\Set(v)$ is the first $\Set(-)$ operation,
        then there is a time $t$ in the $[invocation,response]$ interval of $\Set(v)$ that is also in the interval $(t_0,t_1)$.
\end{lemma}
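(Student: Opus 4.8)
\textbf{Proof plan for Lemma~\ref{tinip-sar}.}
The goal is to show that the open interval $(t_0, t_1)$ intersects the $[invocation, response]$ interval of the first $\Set(v)$ operation, where $t_0$ is the latest invocation of a correct $\Test$ returning $\bot$ and $t_1$ is the earliest response of a correct $\Test$ returning a non-$\bot$ value. First I would fix notation: let $t_i$ denote the invocation time and $t_r$ denote the response time of the first $\Set(v)$ operation, so the claim is that $(t_0, t_1) \cap [t_i, t_r] \neq \emptyset$. The natural strategy is to prove two inequalities separately: (1) $t_r > t_0$, i.e., the first $\Set(v)$ does not complete before the last ``losing'' $\Test$ is invoked; and (2) $t_i < t_1$, i.e., the first $\Set(v)$ is invoked before the first ``winning'' $\Test$ returns. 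Together with Lemma~\ref{iexists-sar} (which guarantees $t_0 < t_1$), these two facts force the intervals to overlap: if $t_0 \ge t_i$ then $t_0 \in [t_i, t_r]$ (using $t_r > t_0$) and $t_0 \in (t_0, t_1)$ fails, so one instead picks a point strictly between $\max(t_0, t_i)$ and $\min(t_1, t_r)$, which is nonempty precisely because of (1), (2), and $t_0 < t_1$, $t_i \le t_r$.

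For inequality (1), $t_r > t_0$: suppose for contradiction $t_r \le t_0$. Then there is a correct $\Test$ operation returning $\bot$ that is invoked at time $t_0 \ge t_r$, i.e., invoked after (or when) the first $\Set(v)$ has completed. But by Lemma~\ref{valid-sar}, since $\Set(v)$ is the first $\Set(-)$ operation and it precedes this $\Test$ (its response $t_r$ is at or before the $\Test$'s invocation $t_0$), the $\Test$ must return $v \neq \bot$ --- contradicting that it returns $\bot$. (One should be slightly careful about the boundary case $t_r = t_0$: ``precedes'' requires the response of $\Set$ to be strictly before the invocation of $\Test$; if they coincide one can argue via the fact that at the instant the $\Set(v)$ returns, at least $f+1$ correct processes already have $R_i = v$, as extracted in the proof of Lemma~\ref{valid-sar}, and then invoke Lemma~\ref{testreturn1-sar} directly --- actually the cleanest route is to note that in Lemma~\ref{valid-sar}'s proof the condition $|\{r_i \mid r_i = v\}| \ge n-f$ holds at some time $t$ \emph{strictly before} $t_r$, yielding $f+1$ correct witnesses at all times $\ge t$, so any $\Test$ invoked at $t_0 \ge t_r > t$ returns $v$.)

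For inequality (2), $t_i < t_1$: suppose for contradiction $t_i \ge t_1$. Then there is a correct $\Test$ operation returning some non-$\bot$ value $v'$ whose response is at time $t_1 \le t_i$, i.e., it completes before the first $\Set(v)$ is even invoked. By Corollary~\ref{nosettest-sar}, since this correct $\Test$ returns $v' \neq \bot$, the writer $p_1$ must have executed line~\ref{setter1-sar} of some $\Set(v')$ operation before the $\Test$'s response, and moreover that $\Set(v')$ is the \emph{first} $\Set(-)$ operation in $H$. But we assumed the first $\Set(-)$ is $\Set(v)$ with invocation $t_i \ge t_1$, and line~\ref{setter1-sar} is executed after the invocation; so the first $\Set(-)$ executes line~\ref{setter1-sar} at a time $\ge t_i \ge t_1$, which cannot be before the $\Test$'s response at $t_1$ --- a contradiction. (Here I am using that the writer is correct in $H$, so ``the first $\Set(-)$'' is well-defined and $\Set(v') = \Set(v)$.)

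The main obstacle I anticipate is handling the boundary/degenerate cases cleanly: specifically, making sure the overlap argument still produces a point in the \emph{open} interval $(t_0,t_1)$ rather than just in $[t_0,t_1]$, and dealing with the situations where $t_0 = 0$ (no correct $\Test$ returns $\bot$) or $t_1 = \infty$ (no correct $\Test$ returns a non-$\bot$ value). In the $t_0 = 0$ case, inequality (1) is immediate since $t_r > 0$, and we just need a point in $(0, t_1) \cap [t_i, t_r]$; in the $t_1 = \infty$ case, inequality (2) is immediate and we need a point in $(t_0, \infty) \cap [t_i, t_r]$, which by (1) contains points just above $\max(t_0, t_i)$ as long as that maximum is strictly less than $t_r$ --- and if $t_0 < t_r$ and $t_i \le t_r$ we can take such a point, splitting further on whether $t_0 < t_i$ or not. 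Once all four combinations of ``$t_0 = 0$ or not'' and ``$t_1 = \infty$ or not'' are dispatched using (1), (2), and Lemma~\ref{iexists-sar}, the lemma follows. The rest is bookkeeping, and I would compress it by simply stating: by (1) $t_r > t_0$, by (2) $t_i < t_1$, and by Lemma~\ref{iexists-sar} $t_0 < t_1$; since also $t_i \le t_r$, the sets $[t_i, t_r]$ and $(t_0, t_1)$ are two intervals that must intersect, so pick $t$ in their intersection.
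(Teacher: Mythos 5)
Your proposal is correct and follows essentially the same route as the paper: the paper also proves the two disjointness scenarios impossible (the $\Set(v)$ interval ending before $t_0$ is ruled out via Lemma~\ref{valid-sar}, and starting after $t_1$ is ruled out via Corollary~\ref{nosettest-sar-1}, which is just the contrapositive packaging of the Corollary~\ref{nosettest-sar} fact you invoke), then appeals to Lemma~\ref{iexists-sar}. Your version is slightly more careful about the boundary cases $t_r=t_0$ and $t_i=t_1$, which the paper's case split glosses over, but the substance is identical.
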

\begin{proof}
    Suppose $\Set(v)$ is the first $\Set(-)$ operation in $H$.
    By Lemma~\ref{iexists-sar}, 
        the interval $(t_0,t_1)$ exists.
    Assume for contradiction that there is no time $t$ in the $[invocation,response]$ interval of $\Set(v)$ that is also in the interval $(t_0,t_1)$.
    There are two cases:
    \begin{compactitem}
        \item Case 1: the $[invocation,response]$ interval of $\Set(v)$ precedes the interval $(t_0,t_1)$. 
            This implies that the response time $t$ of $\Set(v)$ is less than $t_0$; so $t_0 \neq 0$.
            By Definition~\ref{t0t1-sar},
                $t_0$ is the invocation time of a $\Test$ operation by a correct reader that returns $\bot$.
            Since $t < t_0$,
                the $\Set(v)$ precedes this $\Test$.
            Thus, since $\Set(v)$ is the first $\Set(-)$ operation in $H$,
                by Corollary~\ref{valid-sar},
                the $\Test$ returns $v$ --- a contradiction.
        \item Case 2: the interval $(t_0,t_1)$ precedes the $[invocation,response]$ interval of $\Set(v)$.
             This implies that the invocation time $t$ of $\Set(v)$ is greater than $t_1$; so $t_1 \neq \infty$.
            By Definition~\ref{t0t1-sar},
                $t_1$ is the response time of a $\Test$ operation by a correct reader that returns a value $v$.
            Since $t > t_1$,
                this $\Test$ precedes the $\Set(v)$.
            Thus, since $\Set(v)$ is the first $\Set(-)$ operation in $H$,
                by Corollary~\ref{nosettest-sar-1},
                the $\Test$ returns $\bot$ --- a contradiction.
    \end{compactitem}
\end{proof}

The above Lemmas~\ref{nosettest-sar},~\ref{valid-sar}, and~\ref{tinip-sar},
    are about history $H$.
Since the writer is correct in $H$ and $H'=H|{\hct}$,
    by Observations~\ref{correctop-sar},~\ref{sameop-sar}, and~\ref{sameip-sar-hc},
    they also hold for the history $H'$, as~stated~below:

\begin{corollary}\label{nosettest-sar-c}
	In $H'$, if a $\Test$ operation by a {\ct} reader returns $v$ at time $t$,
	   then $p_1$ writes $v$ into $E_1$ at line~\ref{setter1-sar} in a $\Set(v)$ operation before time $t$.
        Furthermore, this $\Set(v)$ is the first $\Set(-)$ in $H'$.
\end{corollary}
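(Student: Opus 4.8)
The plan is to derive Corollary~\ref{nosettest-sar-c} as a direct transfer of Lemma~\ref{nosettest-sar} — which is stated for the history $H$ — to the history $H' = H|{\hct}$, exploiting the fact that in this case the writer $p_1$ is correct in $H$. Since $p_1 \in {\hct}$, and Lemma~\ref{nosettest-sar} already encapsulates the nontrivial content (itself resting on Lemma~\ref{f+1line-sar} and Lemma~\ref{wbeforer-sar}), all that remains is a routine ``restrict to correct processes'' bookkeeping argument.

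First I would check that the set of $\Set(-)$ operations, together with their precedence order, is the same in $H$ and in $H'$. Indeed, no process other than $p_1$ can invoke a $\Set(-)$ operation, and $p_1 \in {\hct}$; so by Observation~\ref{correctop-sar} an operation is a $\Set(-)$ operation in $H$ if and only if it is a $\Set(-)$ operation in $H'$, and by Observation~\ref{sameop-sar} each such operation has the same invocation and response steps, and the same internal steps (in particular the write into $E_1$ at line~\ref{setter1-sar}), at the same times in both histories. Consequently the \emph{first} $\Set(-)$ operation in $H$ is exactly the first $\Set(-)$ operation in $H'$, so the phrase ``the first $\Set(-)$'' is stable under passing from $H$ to $H'$.

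Then the main step: suppose a $\Test$ operation by a {\ct} reader $p_k$ returns $v$ at time $t$ in $H'$. By Observation~\ref{correctop-sar} this $\Test$ is also in $H$, and by Observation~\ref{sameop-sar} it returns $v$ at the same time $t$ in $H$. Applying Lemma~\ref{nosettest-sar} in $H$, the writer $p_1$ writes $v$ into $E_1$ at line~\ref{setter1-sar} of a $\Set(v)$ operation before time $t$, and this $\Set(v)$ is the first $\Set(-)$ operation in $H$. Since this write is a step of the correct process $p_1$, it occurs at the same time in $H'$, hence before $t$ in $H'$; and by the previous paragraph this $\Set(v)$ is also the first $\Set(-)$ operation in $H'$, which yields the ``furthermore'' clause. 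There is no genuine obstacle — the only mild subtlety is confirming that ``first $\Set(-)$'' is well-defined and preserved when restricting to correct processes, which is precisely what the bookkeeping paragraph settles.
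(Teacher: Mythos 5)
Your proof is correct and takes essentially the same route as the paper: the paper also obtains this corollary by transferring the $H$-version (Corollary~\ref{nosettest-sar}) to $H'=H|{\hct}$ via Observations~\ref{correctop-sar} and~\ref{sameop-sar}, using that the correct writer's steps and operations appear unchanged in $H|{\hct}$. Your extra bookkeeping about the stability of ``the first $\Set(-)$'' is exactly the detail the paper leaves implicit (note only that \ref{nosettest-sar} is stated as a corollary, not a lemma).
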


\begin{lemma}\label{valid-sar-c}
    In $H'$, if $\Set(v)$ is the first $\Set(-)$ operation and precedes a $\Test$ operation by a {\ct} reader,
		then this $\Test$ returns $v$.
\end{lemma}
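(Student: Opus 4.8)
The plan is to obtain this lemma as a direct transfer of its full-history counterpart, Lemma~\ref{valid-sar}, along the identity $H' = H|{\hct}$. We are in the case where the writer $p_1$ is correct in $H$, so Lemma~\ref{valid-sar} is applicable to $H$; the only work is to check that the two hypotheses ``$\Set(v)$ is the first $\Set(-)$ operation'' and ``$\Set(v)$ precedes the $\Test$'' carry over between $H$ and $H'$, and that a $\Test$ by a correct reader has the same response in both histories.

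Concretely, I would argue as follows. Suppose in $H'$ that $\Set(v)$ is the first $\Set(-)$ operation and that it precedes a $\Test$ operation by a correct reader $p_k$. Because the implemented register is single-writer, every $\Set(-)$ operation --- in $H$ as well as in $H'$ --- is an operation of $p_1$; and since $p_1$ is correct in $H$, Observations~\ref{correctop-sar} and~\ref{sameop-sar} give that the $\Set(-)$ operations of $p_1$, together with their $[invocation,response]$ intervals, are exactly the same in $H$ and in $H'$. Hence $\Set(v)$ is an operation of $H$, and it is the first $\Set(-)$ operation of $H$. Likewise $p_k$ is correct, so by Observation~\ref{sameop-sar} its $\Test$ occurs over the same interval in $H$ and in $H'$, whence $\Set(v)$ precedes this $\Test$ in $H$ too. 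Applying Lemma~\ref{valid-sar} to $H$ then gives that the $\Test$ returns $v$ in $H$; and since $p_k$ is correct, its response step is the same in $H$ and $H'$ (Observations~\ref{correctop-sar} and~\ref{sameop-sar}), so the $\Test$ returns $v$ in $H'$ as well.

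The one point requiring care --- and the only place where the SWMR nature of the register enters --- is the claim that ``the first $\Set(-)$ operation'' is preserved when moving between $H$ and $H'$. This rests on the facts that no Byzantine process can invoke a $\Set(-)$ on a register it does not own (so all $\Set(-)$ operations are $p_1$'s) and that in this case $p_1$ is correct (so its operations and their order are identical in $H$ and $H|{\hct}$). Once this bookkeeping is granted, everything else is a mechanical application of Lemma~\ref{valid-sar} together with Observations~\ref{correctop-sar} and~\ref{sameop-sar}; I do not anticipate any genuine obstacle. The same template would be reused for Corollary~\ref{nosettest-sar-c} and for the other transported statements in this subsection.
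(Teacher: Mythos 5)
Your proposal is correct and matches the paper's approach: the paper proves Lemma~\ref{valid-sar} for $H$ and then simply asserts that, since the writer is correct and $H'=H|{\hct}$, the statement transfers to $H'$ by Observations~\ref{correctop-sar} and~\ref{sameop-sar}. You make the transfer bookkeeping (preservation of ``first $\Set(-)$'', of precedence, and of the $\Test$ response) explicit, which is exactly the intended argument.
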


\begin{corollary}\label{tinip-sar-case1}
	In $H'$, if $\Set(v)$ is the first $\Set(-)$ operation,
        then there is a time $t$ in the $[invocation,response]$ interval of $\Set(v)$ that is also in the interval $(t_0,t_1)$.
\end{corollary}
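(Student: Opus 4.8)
The plan is to observe that Corollary~\ref{tinip-sar-case1} is nothing more than the restatement of Lemma~\ref{tinip-sar} once we pass from the history $H$ to the history $H' = H|\hct$, and that this passage is harmless because the writer $p_1$ is correct in $H$. First I would invoke Lemma~\ref{tinip-sar}, which already establishes the claim for $H$: if $\Set(v)$ is the first $\Set(-)$ operation in $H$, then there is a time $t$ in the $[invocation,response]$ interval of $\Set(v)$ that also lies in $(t_0,t_1)$.

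Next I would argue that both the hypothesis and all the data appearing in the conclusion are unchanged when $H$ is restricted to the steps of its correct processes. Since $p_1$ is correct in $H$, by Observation~\ref{correctop-sar} an operation by $p_1$ is in $H' = H|\hct$ if and only if it is in $H$; hence the set of $\Set(-)$ operations of $H'$ coincides with that of $H$, and in particular ``the first $\Set(-)$ operation'' denotes the same operation in both histories. By Observation~\ref{sameop-sar}, the $[invocation,response]$ interval of that operation is the same in $H$ and in $H'$. Finally, by Observation~\ref{sameip-sar-hc}, the times $t_0$ and $t_1$ (as given in Definition~\ref{t0t1-sar}) are the same whether computed from $H$ or from $H|\hct = H'$. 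Combining these three facts with Lemma~\ref{tinip-sar} yields the statement of the corollary verbatim for $H'$.

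The main ``obstacle'' is really only the bookkeeping: one must check that both the notion of ``the first $\Set(-)$ operation'' and the interval $(t_0,t_1)$ are invariant under the restriction $H \mapsto H|\hct$. Both are handled by the already-established Observations~\ref{correctop-sar}, \ref{sameop-sar}, and~\ref{sameip-sar-hc}, so no new argument is needed; the proof is a short appeal to Lemma~\ref{tinip-sar}, exactly parallel to the way Corollaries~\ref{nosettest-sar-c} and~\ref{valid-sar-c} were obtained from Lemmas~\ref{nosettest-sar} and~\ref{valid-sar} in the preceding paragraph.
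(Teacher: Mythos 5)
Your proposal is correct and takes essentially the same route as the paper: the paper derives Corollary~\ref{tinip-sar-case1} directly from Lemma~\ref{tinip-sar} by noting that, since the writer is correct in $H$ and $H'=H|{\hct}$, Observations~\ref{correctop-sar}, \ref{sameop-sar}, and~\ref{sameip-sar-hc} guarantee that the first $\Set(-)$ operation, its $[invocation,response]$ interval, and the times $t_0,t_1$ are all unchanged in $H'$. Your write-up merely makes this bookkeeping explicit, which is fine.
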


We now prove that the history $H'$ is linearizable with respect to an \sr.
First, we define the \emph{linearization point} of each operation in $H'$ as follows.

\begin{definition}\label{linearpoints-sar}
Let $o$ be an operation in $H'$.
    \begin{itemize}
        \item If $o$ is a $\Test$ operation that returns $\bot$,
	   the linearization point of $o$ is the time of the invocation step of~$o$.
	
	\item If $o$ is a $\Test$ operation that returns $v$,
	   the linearization point of $o$ is the time of the response step of~$o$.

        \item If $o$ is the first $\Set(-)$ operation, 
            the linearization point of $o$ is any time $t$ in the $[invocation,response]$ interval of $o$ 
            that is also inside the interval $(t_0,t_1)$.
            Note that by Corollary~\ref{tinip-sar-case1},
                such a time $t$ exists.

        \item If $o$ is a $\Set(-)$ operation but not the first $\Set(-)$ operation, 
		the linearization point of $o$ is the time of the response step of~$o$.
\end{itemize}
\end{definition}

Note that in Definition~\ref{linearpoints-sar} the linearization point of every operation $o$ is between the invocation and response time of $o$. Thus, the following holds.

\begin{observation}\label{precedes-sar}
    If an operation $o$ precedes an operation $o'$ in $H'$,
        then the linearization point of~$o$ is before the linearization point of~$o'$.
\end{observation}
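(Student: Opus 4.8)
The plan is to derive Observation~\ref{precedes-sar} directly from the remark, stated just above it, that in Definition~\ref{linearpoints-sar} the linearization point of every operation of $H'$ lies within its $[invocation,response]$ interval. So first I would state and justify that remark as an explicit fact by a short case analysis over the four clauses of Definition~\ref{linearpoints-sar}: for a $\Test$ that returns $\bot$ the linearization point is its invocation step; for a $\Test$ that returns some value $v$ it is its response step; for the first $\Set(-)$ operation it is, by construction, a time $t$ chosen inside the $[invocation,response]$ interval of that operation (and such a $t$ exists by Corollary~\ref{tinip-sar-case1}); and for every other $\Set(-)$ operation it is the response step. In each case the chosen point trivially lies between the invocation time and the response time of the operation.

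With that fact in hand, I would finish the observation by a one-line chain of inequalities. Let $o$ and $o'$ be operations of $H'$ such that $o$ precedes $o'$, and write $\iota(\cdot)$, $\rho(\cdot)$, and $\ell(\cdot)$ for invocation time, response time, and linearization point respectively. By the definition of ``precedes'' we have $\rho(o) < \iota(o')$. By the fact just established, $\ell(o) \le \rho(o)$ and $\iota(o') \le \ell(o')$. Chaining these, $\ell(o) \le \rho(o) < \iota(o') \le \ell(o')$, so $\ell(o) < \ell(o')$, which is exactly the statement of the observation.

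I do not anticipate any genuine difficulty here. The only spot that merits a moment's attention is the first-$\Set(-)$ clause of Definition~\ref{linearpoints-sar}, where the linearization point is defined only as ``some time $t$ in the $[invocation,response]$ interval that is also in $(t_0,t_1)$''; one must cite Corollary~\ref{tinip-sar-case1} to know such a $t$ exists, so that the linearization point is well-defined and indeed lies in the operation's interval. Every other part of the argument is immediate from the shape of the definitions and the meaning of the precedence relation.
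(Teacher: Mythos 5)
Your proposal is correct and matches the paper's argument exactly: the paper justifies this observation with the same remark that every linearization point in Definition~\ref{linearpoints-sar} lies within its operation's $[invocation,response]$ interval, from which the claim follows by the same chain of inequalities you give. Your extra care in citing Corollary~\ref{tinip-sar-case1} for the first-$\Set(-)$ clause is a sound (if implicit in the paper) addition.
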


We now use the linearization points of all the operations in $H'$ to define the following linearization $L$ of $H'$:
\begin{definition}\label{hlp-sar}
	Let $L$ be a \emph{sequence} of operations such that:
	\begin{enumerate}
		\item\label{uno} An operation $o$ is in $L$ if and only if it is in $H'$.
		\item\label{due} An operation $o$ precedes an operation $o'$ in $L$ if and only if
     the linearization point of~$o$ is before the linearization point of $o'$ in $H'$.
	\end{enumerate}
\end{definition}

By Definition~\ref{hlp-sar}(\ref{uno}), $L$ is a linearization of $H'$.
By Observation~\ref{precedes-sar} and Definition~\ref{hlp-sar}(\ref{due}),
    $L$ respects the precedence relation between the operations of $H'$. More precisely:
\begin{observation}\label{precedes-l-sar}
    If an operation $o$ precedes an operation $o'$ in $H'$,
        then $o$ {\precedes} $o'$ in $L$.
\end{observation}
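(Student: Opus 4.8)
The plan is to derive Observation~\ref{precedes-l-sar} as an immediate consequence of two facts that are already in place by the time the observation is stated: the monotonicity of linearization points with respect to the precedence relation in $H'$ (Observation~\ref{precedes-sar}), and the way $L$ was constructed from these linearization points (Definition~\ref{hlp-sar}, specifically clause~(\ref{due})). No new combinatorial or algorithmic reasoning is needed; the content is a short syllogism that simply composes these two statements. This parallels exactly the reasoning used for the analogous Observations~\ref{precedes-l-mar} and~\ref{precedes-l-tar} in the {\mar} and {\tar} proofs.

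Concretely, I would argue as follows. Assume $o$ precedes $o'$ in $H'$. First I would invoke Observation~\ref{precedes-sar}, which guarantees that the linearization point of $o$ is before the linearization point of $o'$ in $H'$. This is the only place the structure of the linearization points matters, and it is already discharged earlier: Observation~\ref{precedes-sar} holds precisely because Definition~\ref{linearpoints-sar} places every operation's linearization point within its own $[invocation,response]$ interval, so a precedence in $H'$ (response of $o$ before invocation of $o'$) forces the corresponding ordering of linearization points. Second, having this ordering of linearization points, I would apply Definition~\ref{hlp-sar}(\ref{due}), which defines $o$ to precede $o'$ in $L$ exactly when the linearization point of $o$ is before that of $o'$ in $H'$. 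Chaining these two steps yields that $o$ precedes $o'$ in $L$, which is the claim.

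There is essentially no obstacle here: the observation is a restatement-by-composition of Observation~\ref{precedes-sar} and the defining property of $L$, and it merely records that $L$ respects the precedence relation between the operations of $H'$ (one of the two conditions needed to conclude that $L$ is a valid linearization in the sense of Definition~\ref{LinearizableImplementationCrash}). The only point worth stating carefully is that the implication runs in one direction only --- precedence in $H'$ implies precedence in $L$ --- so I would be careful to use just the ``only if'' reading appropriate to Observation~\ref{precedes-sar} and not claim an equivalence. Since all the genuine work (defining suitable linearization points inside operation intervals, and in particular placing the first $\Set(-)$'s point inside $(t_0,t_1)$ via Corollary~\ref{tinip-sar-case1}) has already been done before this point, the proof of this observation itself is a one-line deduction.
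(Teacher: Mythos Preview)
Your proposal is correct and matches the paper's approach exactly: the paper states this observation as an immediate consequence of Observation~\ref{precedes-sar} and Definition~\ref{hlp-sar}(\ref{due}), with no additional argument. Your one-line deduction via these two facts is precisely what the paper does.
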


In the following lemmas and theorem we prove that the linearization $L$ of $H'$ conforms to the sequential specification of a SWMR \sr.

\begin{lemma}\label{mini}
    In $L$, 
	if a $\Test$ returns $\bot$,
	then no $\Set(-)$ {\precedes} the $\Test$.
\end{lemma}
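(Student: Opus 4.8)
The plan is to prove Lemma~\ref{mini} by contradiction, leveraging the linearization points defined in Definition~\ref{linearpoints-sar} together with the key facts already established about history $H'$. First I would assume, for contradiction, that in $L$ some $\Test$ operation returns $\bot$ but there is a $\Set(-)$ operation that precedes this $\Test$ in $L$. Since the writer's $\Set(-)$ operations are sequential (the writer $p_1$ is correct in Case~1), there is a \emph{first} $\Set(-)$ operation, say $\Set(v)$, and by Definition~\ref{hlp-sar}(2) the fact that some $\Set(-)$ precedes the $\Test$ in $L$ means that $\Set(v)$ also precedes the $\Test$ in $L$ (its linearization point is no later than that of any other $\Set(-)$, by Observation~\ref{precedes-sar}). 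So the linearization point of $\Set(v)$ is before the linearization point of the $\Test$.

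Next I would invoke the specific choices of linearization points: by Definition~\ref{linearpoints-sar}, the linearization point of the first $\Set(-)$ operation $\Set(v)$ is some time $t^* \in (t_0,t_1)$, so in particular $t^* < t_1$; and since the $\Test$ returns $\bot$, its linearization point is its invocation time $t'$, which by Observation~\ref{sameip-sar-hc} (and Definition~\ref{t0t1-sar}) satisfies $t' \le t_0$. Combining these gives $t' \le t_0 < t^*$, i.e. the linearization point of the $\Test$ is strictly before the linearization point of $\Set(v)$. This directly contradicts the earlier conclusion that $\Set(v)$ precedes the $\Test$ in $L$ (which required $t^* < t'$). Hence no $\Set(-)$ precedes the $\Test$ in $L$, completing the proof.

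I do not expect a serious obstacle here: the argument is essentially bookkeeping with the three inequalities $t' \le t_0$, $t_0 < t^* < t_1$, and the definition of ``precedes in $L$''. The one point that needs a little care is making sure that ``some $\Set(-)$ precedes the $\Test$ in $L$'' can be strengthened to ``the \emph{first} $\Set(-)$ precedes the $\Test$ in $L$'': this follows because the linearization point of the first $\Set(-)$ is $\le$ that of every other $\Set(-)$ — indeed, for a non-first $\Set(-)$ its linearization point is its response time, which in $H'$ comes after the whole $[invocation,response]$ interval of the first $\Set(-)$, hence after $t^*$ (which lies inside that interval). So if any $\Set(-)$ has linearization point before $t'$, so does the first one. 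With that observation in hand the rest is immediate.

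\begin{proof}
Assume, for contradiction, that in $L$ a $\Test$ operation returns $\bot$ but some $\Set(-)$ operation precedes this $\Test$ in $L$. Since the writer $p_1$ is correct in $H$ (this is Case~1), the $\Set(-)$ operations of $p_1$ are sequential in $H$, and hence also in $H'$; let $\Set(v)$ be the first such operation. By Definition~\ref{linearpoints-sar}, the linearization point $t^*$ of $\Set(v)$ lies in the $[invocation,response]$ interval of $\Set(v)$, while for any other $\Set(-)$ operation $o$ its linearization point is the time of the response step of $o$, which (since $o$ follows the entire interval of $\Set(v)$ in $H'$) is after $t^*$. Therefore $\Set(v)$ has the smallest linearization point among all $\Set(-)$ operations, and so by Definition~\ref{hlp-sar}(2), since some $\Set(-)$ precedes the $\Test$ in $L$, $\Set(v)$ also precedes the $\Test$ in $L$; that is, $t^*$ is before the linearization point $t'$ of the $\Test$.

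By Definition~\ref{linearpoints-sar}, $t^* \in (t_0,t_1)$, so in particular $t_0 < t^*$. Also, since this $\Test$ returns $\bot$, by Definition~\ref{linearpoints-sar} its linearization point $t'$ is the time of its invocation step, and by Observation~\ref{sameip-sar-hc} the invocation time of any $\Test$ operation that returns $\bot$ in $H'$ is at most $t_0$, so $t' \le t_0$. Combining, $t' \le t_0 < t^*$, i.e.\ $t' < t^*$ --- contradicting the fact that $t^*$ is before $t'$. Hence no $\Set(-)$ precedes the $\Test$ in $L$.
\end{proof}
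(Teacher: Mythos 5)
Your proof is correct and follows essentially the same route as the paper's: take the first $\Set(-)$ operation, note its linearization point lies in $(t_0,t_1)$ while the $\Test$ that returns $\bot$ has linearization point at its invocation time, which is at most $t_0$, and derive the contradiction $t' \le t_0 < t^*$. The only cosmetic difference is that you explicitly justify why the first $\Set(-)$ in $H'$ is also first in $L$ (via the response-time linearization points of the later $\Set(-)$ operations), a step the paper leaves implicit.
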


\begin{proof}
	Assume for contradiction that, 
		in $L$,
		a $\Test$ returns $\bot$ but some $\Set(-)$ operation precedes the $\Test$.
        Let $\Set(v)$ be the first $\Set(-)$ operation in $L$.
        Then this $\Set(v)$ precedes the $\Test$ in $L$.
        By Definition~\ref{hlp-sar},
            these $\Set(v)$ and $\Test$ operations are also in~$H'$,
            and the $\Set(v)$ is the first $\Set(-)$ operation in $H'$.
        By Definition~\ref{linearpoints-sar},
            the linearization point of this $\Set(v)$ in $H'$ is a time $t \in (t_0,t_1)$.
        Since the $\Test$ returns $\bot$, by Definition~\ref{linearpoints-sar},
            the linearization point of this $\Test$ in $H'$ is the time $t'$ of its invocation step
            in $H'$.
        Since $H'=H|{\hct}$, 
            by Observation~\ref{sameip-sar-hc},
            $t' \le t_0$.
        Since $t \in (t_0,t_1)$ and $t' \le t_0$, $t' < t$.
        In other words, the linearization point of the $\Test$ is before the linearization point of the $\Set(v)$ in $H'$.
        So, by Definition~\ref{hlp-sar}(2),
            the $\Test$ {\precedes} the $\Set(v)$ in $L$ --- a contradiction.
\end{proof}

\begin{lemma}\label{mo}
	In $L$,
		  if a $\Test$ returns~$v$,
            then $\Set(v)$ is the first $\Set(-)$ operation and it {\precedes} the $\Test$.
\end{lemma}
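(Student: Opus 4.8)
\textbf{Proof plan for Lemma~\ref{mo}.}
The plan is to derive everything from Corollary~\ref{nosettest-sar-c} together with the way linearization points are assigned in Definition~\ref{linearpoints-sar}. Recall that in this case the writer $p_1$ is correct in $H$ and $H' = H|{\hct}$. Suppose some $\Test$ operation returns a value $v$ (necessarily $v \neq \bot$, by our notational convention) in $L$. By Definition~\ref{hlp-sar}(1), this $\Test$ is also an operation of $H'$, and since the readers executing $\Test$ in $H|{\hct}$ are all correct, Corollary~\ref{nosettest-sar-c} applies: there is a $\Set(v)$ operation in $H'$, and moreover this $\Set(v)$ is the \emph{first} $\Set(-)$ operation in $H'$. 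So $\Set(v)$ is an operation of $H'$, hence (again by Definition~\ref{hlp-sar}(1)) an operation of $L$.

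Next I would argue that $\Set(v)$ is the first $\Set(-)$ operation in the sequence $L$. Since $p_1$ is correct in $H$ and $H' = H|{\hct}$, every $\Set(-)$ operation appearing in $H'$ is performed by $p_1$, and correct processes apply their operations sequentially; hence all $\Set(-)$ operations in $H'$ are totally ordered by the precedence relation, and the first one, namely $\Set(v)$, precedes every other $\Set(-)$ operation in $H'$. By Observation~\ref{precedes-l-sar}, $\Set(v)$ therefore precedes every other $\Set(-)$ operation in $L$ as well, so $\Set(v)$ is the first $\Set(-)$ operation in $L$.

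Finally I would show $\Set(v)$ precedes the $\Test$ in $L$ by comparing linearization points. Since $\Set(v)$ is the first $\Set(-)$ operation of $H'$, by Definition~\ref{linearpoints-sar} its linearization point is some time $t^* \in (t_0,t_1)$; in particular $t^* < t_1$. Since the $\Test$ returns the non-$\bot$ value $v$, by Definition~\ref{linearpoints-sar} its linearization point is the time $t$ of its response step, and because $H' = H|{\hct}$, Observation~\ref{sameip-sar-hc} gives $t \ge t_1$. Combining, $t^* < t_1 \le t$, i.e., the linearization point of $\Set(v)$ is strictly before that of the $\Test$ in $H'$, so by Definition~\ref{hlp-sar}(2) $\Set(v)$ precedes the $\Test$ in $L$. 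Together with the previous paragraph this establishes the lemma.

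\textbf{Main obstacle.} There is no genuinely hard step here: the lemma is essentially a bookkeeping consequence of Corollary~\ref{nosettest-sar-c}, Definition~\ref{linearpoints-sar}, and Observation~\ref{sameip-sar-hc}. The only point requiring a little care is the transfer of the word ``first'' from $H'$ to $L$, which relies on the writer being correct so that the $\Set(-)$ operations in $H'$ are sequential; once that is noted, Observation~\ref{precedes-l-sar} does the rest.
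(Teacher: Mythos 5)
Your proposal is correct and follows essentially the same route as the paper's proof: apply Corollary~\ref{nosettest-sar-c} to obtain the first $\Set(v)$ in $H'$, compare its linearization point in $(t_0,t_1)$ against the $\Test$'s response-time linearization point $\ge t_1$ via Observation~\ref{sameip-sar-hc}, and transfer the ``first'' property to $L$ via Observation~\ref{precedes-l-sar}. Your explicit justification that the $\Set(-)$ operations of the correct writer are sequential, so that firstness transfers to $L$, is slightly more detailed than the paper's but is the same argument.
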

\begin{proof}
	Suppose in $L$, a $\Test$ operation returns $v$.
	By Definition~\ref{hlp-sar}, this $\Test$ is also in $H'$.
        By Corollary~\ref{nosettest-sar-c},
             there is a $\Set(v)$ such that it is the first $\Set(-)$ operation in $H'$.
        Thus, by Definition~\ref{linearpoints-sar},
            the linearization point of this $\Set(v)$ in $H'$ is a time $t \in (t_0,t_1)$.
        Since the $\Test$ returns $v$, by Definition~\ref{linearpoints-sar},
            the linearization point of this $\Test$ in $H'$ is the time $t'$ of its response step in $H'$.
        Since $H'=H|{\hct}$,
            by Observation~\ref{sameip-sar-hc},
            $t' \ge t_1$.
        Since $t \in (t_0,t_1)$ and $t' \ge t_1$, $t < t'$.
        In other words, the linearization point of this $\Set(v)$ is before the linearization point of the $\Test$ in $H'$.
        So, by Definition~\ref{hlp-sar}(2),
            this $\Set(v)$ {\precedes} the $\Test$ in $L$. 
    Since this $\Set(v)$ is the first $\Set(-)$ operation in $H'$,
        by Observation~\ref{precedes-l-sar}, this $\Set(v)$ is also the first $\Set(-)$ operation in $L$.
        Thus, this $\Set(v)$ is the first $\Set(-)$ and it {\precedes} the $\Test$ in~$L$.
\end{proof}

By Lemmas~\ref{mini} and~\ref{mo}, $L$ conforms to the sequential specification of a SWMR {\sr}. More precisely:

\begin{theorem}\label{read-sar}
    In the linearization history $L$ of $H'$:
    \begin{compactitem}
        \item If a $\Test$ returns $v$ then $\Set(v)$ is the first $\Set(-)$ operation and it {\precedes} the $\Test$.
        \item If a $\Test$ returns $\bot$ then no $\Set(-)$ {\precedes} the $\Test$.
    \end{compactitem}
\end{theorem}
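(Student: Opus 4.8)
The plan is to obtain Theorem~\ref{read-sar} as an immediate packaging of the two lemmas that precede it, namely Lemma~\ref{mini} and Lemma~\ref{mo}; there is essentially no new argument to make, since the substantive work has already been done in those lemmas, in the placement of linearization points (Definition~\ref{linearpoints-sar}), and in the interval facts (Lemma~\ref{iexists-sar} and Corollary~\ref{tinip-sar-case1}).

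First I would dispatch the case where a $\Test$ operation returns a non-$\bot$ value $v$ in $L$. This is exactly the conclusion of Lemma~\ref{mo}: $\Set(v)$ is the first $\Set(-)$ operation in $L$ and it precedes the $\Test$ in $L$. Internally Lemma~\ref{mo} rests on Corollary~\ref{nosettest-sar-c} (a $\Test$ returning $v$ in $H'$ forces a prior $\Set(v)$ by the writer, and that $\Set(v)$ is the first $\Set(-)$, using correctness of $p_1$, Lemma~\ref{wbeforer-sar}, and Lemma~\ref{f+1line-sar}), combined with the fact that the linearization point of the first $\Set(-)$ lies inside $(t_0,t_1)$ while the linearization point of a $v$-returning $\Test$ is its response time, which by Observation~\ref{sameip-sar-hc} is at least $t_1$; hence the $\Set(v)$ is linearized strictly before the $\Test$. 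So I would simply cite Lemma~\ref{mo} for this bullet.

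Next I would dispatch the case where a $\Test$ returns $\bot$ in $L$, which is the conclusion of Lemma~\ref{mini}: no $\Set(-)$ precedes this $\Test$ in $L$. That argument is symmetric: the linearization point of a $\bot$-returning $\Test$ is its invocation time, which by Observation~\ref{sameip-sar-hc} is at most $t_0$, whereas the linearization point of the first $\Set(-)$ is strictly inside $(t_0,t_1)$ and hence strictly later, and any other $\Set(-)$ is linearized at its response step, which is even later (after the first $\Set(-)$). Citing Lemma~\ref{mini} closes this bullet, and combining the two cases yields both claims of the theorem. The only difficulty worth flagging is not in this theorem but in the lemmas it invokes — guaranteeing that $(t_0,t_1)$ is nonempty and that it overlaps the execution interval of the first $\Set(-)$, which was already handled by Lemma~\ref{iexists-sar} and Corollary~\ref{tinip-sar-case1}; once those are in hand, Theorem~\ref{read-sar} is a pure restatement.
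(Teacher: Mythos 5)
Your proposal is correct and matches the paper exactly: the paper also obtains Theorem~\ref{read-sar} as an immediate consequence of Lemma~\ref{mini} (the $\bot$ case) and Lemma~\ref{mo} (the non-$\bot$ case), with no additional argument. Your accompanying sketch of the internals of those two lemmas is also faithful to how the paper proves them.
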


\begin{theorem}\label{linearthe-sar-case1}
    If the writer is correct in $H$, the history $H$ is Byzantine linearizable with respect to a SWMR {\sr}.
\end{theorem}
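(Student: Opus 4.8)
The plan is to assemble \texttt{Theorem~\ref{linearthe-sar-case1}} directly from the machinery already developed for Case~1, exactly in the style of \texttt{Theorem~\ref{linearthe-mar-case1}} and \texttt{Theorem~\ref{linearthe-tar-case1}}. First I would recall the definition of Byzantine linearizability (Definition~\ref{def-hbl}): it suffices to exhibit a history $H'$ with $H'|{\hct} = H|{\hct}$ that is linearizable with respect to a SWMR {\sr}. For the case where the writer $p_1$ is correct, the natural choice — and the one consistent with the construction already carried out in this subsection — is simply $H' = H|{\hct}$. Since $p_1$ is correct, every operation it performs (in particular all its $\Set(-)$ operations) already appears in $H|{\hct}$, so there is nothing to add; this is precisely why Case~1 is the easy case. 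Condition $H'|{\hct} = H|{\hct}$ then holds trivially.

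Next I would invoke the linearization $L$ of $H'$ given by Definition~\ref{hlp-sar}, together with the two properties already proved about it: by Observation~\ref{precedes-l-sar}, $L$ respects the precedence relation between the operations of $H'$; and by Theorem~\ref{read-sar}, $L$ conforms to the sequential specification of a SWMR {\sr} (the two bullet points of Theorem~\ref{read-sar} are exactly the two clauses of Definition~\ref{def-sar} restated at the level of the linearization $L$). Combining these two facts with the fact that $L$ is a linearization of $H'$ (Definition~\ref{hlp-sar}(\ref{uno})) yields, by Definition~\ref{LinearizableImplementationCrash}, that $H'$ is linearizable with respect to a SWMR {\sr}. Together with $H'|{\hct} = H|{\hct}$, this gives Byzantine linearizability of $H$ by Definition~\ref{def-hbl}.

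There is essentially no obstacle here: all the substantive work (the loop invariants, the termination of the \valid-style repeat-until loops, the uniqueness lemmas, the existence of a linearization point for the first $\Set(-)$ inside $(t_0,t_1)$, and the conformance Theorem~\ref{read-sar}) has already been done earlier in the subsection. The only thing to be careful about is to cite the right lemmas in the right order and to note that $H'=H|{\hct}$ is well-defined and contains all the writer's operations precisely because $p_1$ is correct. So the proof is a short three-step bookkeeping argument, mirroring \texttt{Theorem~\ref{linearthe-mar-case1}} almost verbatim.

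\begin{proof}
Assume that the writer $p_1$ is correct in $H$.
To prove that $H$ is Byzantine linearizable with respect to a SWMR {\sr} (Definition~\ref{def-sar}),
we must prove that there is a history $H'$ such that:
\begin{compactenum}[(1)]
    \item $H'|{\hct} = H|{\hct}$.
    \item $H'$ is linearizable with respect to a SWMR {\sr}.
\end{compactenum}

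Let $H' = H|{\hct}$. Clearly, $H'$ satisfies~(1).

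Let $L$ be the linearization of $H'$ given in Definition~\ref{hlp-sar}.
\begin{compactenum}[(a)]
    \item By Observation~\ref{precedes-l-sar}, $L$ respects the precedence relation between the operations of $H'$.
    \item By Theorem~\ref{read-sar}, $L$ conforms to the sequential specification of a SWMR {\sr}.
\end{compactenum}
Thus, $H'$ also satisfies~(2).
\end{proof}
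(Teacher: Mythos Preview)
Your proof is correct and essentially identical to the paper's own proof: you choose $H' = H|{\hct}$, invoke the linearization $L$ of Definition~\ref{hlp-sar}, and appeal to Observation~\ref{precedes-l-sar} and Theorem~\ref{read-sar} to verify conditions~(1) and~(2). The formal proof block matches the paper's almost verbatim.
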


\begin{proof}
Assume that the writer $p_1$ is correct in $H$.
    To prove that $H$ is
        Byzantine linearizable with respect to a SWMR {\sr} (Definition~\ref{def-sar}),
    we must prove that there is a history $H'$ such that:
    
    \begin{compactenum}[(1)]
        \item $H'|{\hct} = H|{\hct}$.
        \item $H'$ is linearizable with respect to a SWMR {\sr}. 
    \end{compactenum}

    Let $H'= H|{\hct}$.
    Clearly, $H'$ satisfies~(1).
        
    Let $L$ be the linearization of $H'$ given in Definition~\ref{hlp-sar}.
    \begin{compactenum}[(a)]
        \item By Observation~\ref{precedes-l-sar}, $L$ respects the precedence relation between the operations of $H'$.
        \item By Theorem~\ref{read-sar}, $L$ conforms to the sequential specification of a SWMR {\sr}.
        \end{compactenum}
    Thus, $H'$ also satisfies~(2).
\end{proof}

\subsubsection*{Case 2: the writer $p_1$ is not correct in $H$.}\label{case-writer-not-correct-sar}

Recall that
$H$ is an arbitrary history of the implementation Algorithm~\ref{code-sar}, and
    {\hct} is the set of processes that are correct in $H$.
To show that $H$ is Byzantine linearizable with respect to a SWMR {\sr},
    we must prove that there is a history $H'$ such that (1)~$H'|{\hct} =H|{\hct}$ and 
(2)~$H'$ is linearizable with respect to a SWMR \sr.

We construct $H'$ from $H$ as follows.
We start from $H|{\hct}$.
    Since the writer is faulty in~$H$,
    $H|{\hct}$ \emph{does not contain any operation by the writer};
    in contrast, $H|{\hct}$ contains all the $\Test$ operations by the \emph{correct} processes in $H$.
So to ``justify'' the responses of these $\Test$ operations,
    we may have to add a $\Set(-)$ operation by the writer to $H|{\hct}$.
And we must add it such that
    resulting history conforms to the sequential specification of a {\sr}.
To do so, if there is a $\Test$ operation that returns $v$ in $H|{\hct}$,
      we add a $\Set(v)$ operation to $H|{\hct}$;
      we add it after the \emph{invocation} of the \emph{last} $\Test$ that returns $\bot$ and
        before the \emph{response} of the \emph{first} $\Test$ that returns $v$.
More precisely:

\begin{definition}\label{construct-sar}
    Let $H'$ be the history constructed from $H$ as follows.
    \begin{enumerate}[Step 1:]
	\item\label{s1} $H' = H|{\hct}$.
        \item\label{addset} If there is any $\Test$ operation that returns some value $v$ in $H'$,
		we add to $H'$ a single $\Set(v)$ operation (by the writer $p_1$) such that the $[invocation,response]$ interval of this $\Set(v)$ is inside the interval~$(t_0,t_1)$.
\end{enumerate}
\end{definition}

Since $p_1\not\in {\hct}$, we have the following:
\begin{observation}\label{hcorrectp-sar-2}
	$H'|{\hct} = H|{\hct}$.
\end{observation}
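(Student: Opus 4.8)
The plan is to obtain this as an immediate consequence of the construction of $H'$ in Definition~\ref{construct-sar}, together with the standing assumption of Case~2 that the writer $p_1$ is \emph{not} correct in $H$.

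First I would recall that, by Definition~\ref{hcstep-sar}, $H|{\hct}$ is the subhistory of $H$ consisting of exactly the steps of the processes in ${\hct}$, and that $H'$ is obtained from $H|{\hct}$ by at most one modification: in Step~2 of Definition~\ref{construct-sar} we may add a single $\Set(v)$ operation (by the writer $p_1$), and we add nothing otherwise. Hence every step appearing in $H'$ but not already in $H|{\hct}$ is a step of the writer $p_1$.

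Next I would invoke the case assumption: since $p_1$ is not correct in $H$, we have $p_1 \notin {\hct}$. Therefore, restricting $H'$ to the steps of processes in ${\hct}$ deletes all the steps of the added $\Set(v)$ operation, since those steps all belong to $p_1 \notin {\hct}$, and it keeps all the steps of $H|{\hct}$, since those are already steps of processes in ${\hct}$. In other words, $H'|{\hct} = (H|{\hct})|{\hct} = H|{\hct}$, which is exactly the claim.

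There is essentially no obstacle here: the only thing to verify is that the operation introduced in the construction is attributed to a faulty process, and this holds by design, precisely because we are in the case where the writer is Byzantine. (In the complementary case one simply takes $H' = H|{\hct}$, so the analogous statement is trivial for an even simpler reason.)
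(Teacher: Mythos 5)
Your proof is correct and matches the paper's reasoning exactly: the paper justifies this observation with the single remark that $p_1 \notin {\hct}$, which is precisely the key fact you isolate — the only operation added in Step~2 of Definition~\ref{construct-sar} belongs to the faulty writer and is therefore erased by the restriction to ${\hct}$. Your more detailed write-up of the same argument is fine.
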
	

To prove that $H$ is Byzantine linearizable, it now suffices to show that $H'$ is linearizable with respect to a SWMR {\sr} (Definition~\ref{def-sar}).

\begin{observation}\label{correctop-t-sar-2}
    A $\Test$ operation by a reader $p$ is in $H'$ if and only if this $\Test$ is in $H$ and $p \in {\ct}$.
\end{observation}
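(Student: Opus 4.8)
The plan is to unfold the two‑step construction of $H'$ in Definition~\ref{construct-sar} and check that it touches no $\Test$ operation. First I would note that Step~1 sets $H' = H|{\hct}$, so at that stage the $\Test$ operations occurring in $H'$ are precisely those occurring in $H|{\hct}$. Then I would observe that Step~2 adds to $H'$ at most one operation, namely a $\Set(v)$ operation, and that this operation is ``executed'' by the writer $p_1$ --- which, in the case under consideration, is faulty, i.e.\ $p_1 \notin {\hct}$. Since a $\Set(-)$ is not a $\Test$ operation (and, in any case, is not an operation by a reader), Step~2 neither adds nor deletes any $\Test$ operation. Hence the set of $\Test$ operations of $H'$ equals the set of $\Test$ operations of $H|{\hct}$.

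Next I would invoke the already‑established characterization of the contents of $H|{\hct}$: by Observation~\ref{correctop-sar} (which itself follows directly from Definition~\ref{hcstep-sar}, since $H|{\hct}$ is defined to consist of exactly the steps of the correct processes, at the same times), an operation $o$ by a process $p$ is in $H|{\hct}$ if and only if $o$ is in $H$ and $p \in {\hct}$. Specializing $o$ to a $\Test$ operation by a reader $p$, this says that such a $\Test$ is in $H|{\hct}$ if and only if it is in $H$ and $p$ is correct. Chaining this with the equality of $\Test$‑operation sets obtained in the first paragraph gives exactly the stated equivalence.

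I do not expect any real obstacle; this is a bookkeeping observation, analogous to Observation~\ref{correctop-sar} and to Observation~\ref{hcorrectp-sar-2}. The only subtlety worth spelling out is that the single $\Set(v)$ possibly inserted in Step~2 is an operation of the (faulty) writer rather than of a reader, so it cannot spuriously appear on either side of the ``if and only if''.
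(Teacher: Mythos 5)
Your proof is correct and matches the paper's intent: the paper states this as an unproved observation immediate from Definition~\ref{construct-sar}, and your argument --- that Step~2 adds only a single $\Set(-)$ operation by the faulty writer, so the $\Test$ operations of $H'$ are exactly those of $H|{\hct}$, which Observation~\ref{correctop-sar} characterizes --- is precisely the intended justification.
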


\begin{observation}\label{sameip-sar-2}
          In $H'$,
	\begin{itemize}
		\item The max invocation time of any $\Test$ operation that returns $\bot$ is $t_0$.
		\item The min response time of any $\Test$ operation that returns any non-$\bot$ value is $t_1$.
	\end{itemize}
\end{observation}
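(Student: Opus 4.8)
The plan is to reduce this observation directly to the analogous statement about $H|{\hct}$ that was already proved in Observation~\ref{sameip-sar-hc}. The key point to establish first is that, although $H'$ differs from $H|{\hct}$, this difference is irrelevant for $\Test$ operations: by Definition~\ref{construct-sar}, $H'$ is obtained from $H|{\hct}$ by adding \emph{at most one} operation --- a single $\Set(v)$ by the (faulty) writer $p_1$ --- and Step~2 of that construction neither adds nor removes any $\Test$ operation. Combining this with Observation~\ref{correctop-t-sar-2}, the $\Test$ operations appearing in $H'$ are exactly the $\Test$ operations performed by correct readers in $H$; and by Observation~\ref{sameop-sar}, each such $\Test$ has the same invocation step and the same response step in $H$, in $H|{\hct}$, and in $H'$.

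With that in hand, the remaining steps are immediate. First I would invoke Observation~\ref{sameip-sar-hc}: in $H|{\hct}$ the maximum invocation time of any $\Test$ that returns $\bot$ equals $t_0$, and the minimum response time of any $\Test$ that returns a non-$\bot$ value equals $t_1$ (recall that $t_0$ and $t_1$ are fixed by Definition~\ref{t0t1-sar} in terms of $\Test$ operations by correct readers in $H$). Since, by the previous paragraph, the collection of $\Test$ operations in $H'$ together with their invocation and response times coincides with that in $H|{\hct}$, these two extremal times are unchanged when computed inside $H'$, which is precisely the two bullet points of the observation.

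There is essentially no hard part here; the only thing that requires a moment of care is to confirm that the construction of $H'$ in Definition~\ref{construct-sar} perturbs \emph{only} the $\Set(-)$ side of the history (by possibly inserting one such operation) and leaves every $\Test$ operation, together with its timing, exactly as it is in $H|{\hct}$ --- so that the quantities $t_0$ and $t_1$, originally anchored to $H$ and transferred to $H|{\hct}$ in Observation~\ref{sameip-sar-hc}, remain the correct extremal times relative to the $\Test$ operations of $H'$.
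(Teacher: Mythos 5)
Your argument is correct and is exactly the justification the paper intends: the statement is presented as an Observation precisely because Step~2 of Definition~\ref{construct-sar} only inserts a single $\Set(v)$ and leaves every $\Test$ operation and its invocation/response times identical to those in $H|{\hct}$, so the extremal times from Observation~\ref{sameip-sar-hc} (equivalently, Definition~\ref{t0t1-sar}) carry over unchanged. No gaps; this matches the paper's (implicit) reasoning.
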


By Observation~\ref{correctop-t-sar-2} and Corollary~\ref{unique-sar}:
\begin{observation}\label{alltestv-sar-0-2}
     In $H'$, if $\Test$ returns $v$ and $\Test'$ returns $v'$,
	then $v = v'$.
\end{observation}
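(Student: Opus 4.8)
The plan is to reduce this claim about the constructed history $H'$ directly to Corollary~\ref{unique-sar}, which already establishes the analogous uniqueness property for $\Test$ operations by correct readers in the arbitrary history $H$. Essentially no new argument is needed: the only thing to check is that passing from $H$ to $H'$ neither introduces new $\Test$ operations nor changes the values returned by the existing ones.

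First I would recall the construction of $H'$ in Definition~\ref{construct-sar}: it is $H|{\hct}$ together with at most one added $\Set(v)$ operation by the (faulty) writer. In particular, every $\Test$ operation occurring in $H'$ already occurs in $H|{\hct}$, and hence, by Observation~\ref{correctop-t-sar-2}, it is a $\Test$ operation performed by a \emph{correct} reader in $H$. Moreover, since $H'|{\hct} = H|{\hct}$, the steps of correct processes — including their response steps — are identical in $H$ and in $H'$ (Observation~\ref{sameop-sar}), so each such $\Test$ returns in $H$ exactly the value it returns in $H'$.

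Then I would take the two $\Test$ operations from the statement: $\Test$ returning $v$ and $\Test'$ returning $v'$ in $H'$. By the previous step, both are $\Test$ operations by correct readers in $H$, returning $v$ and $v'$ respectively. Applying Corollary~\ref{unique-sar} to these two operations gives $v = v'$, which is exactly the claim.

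The only place one could stumble is in being precise about which operations survive the construction of $H'$ and about the preservation of return values under $H'|{\hct} = H|{\hct}$; but both points are immediate from Definition~\ref{construct-sar}, Observation~\ref{correctop-t-sar-2}, and Observation~\ref{sameop-sar}, so I do not expect any real obstacle here.
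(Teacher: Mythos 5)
Your proposal is correct and matches the paper exactly: the paper derives this observation in one line from Observation~\ref{correctop-t-sar-2} (every $\Test$ in $H'$ is a $\Test$ by a correct reader in $H$, with the same response) together with Corollary~\ref{unique-sar}. Your additional care about why return values are preserved under the construction of $H'$ is sound but adds nothing beyond what the paper already relies on implicitly.
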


\begin{lemma}\label{setfromcorrect-sar}
     In $H'$, if a $\Test$ returns $v$,
        there is a $\Set(v)$ such that it is the first $\Set(-)$.
\end{lemma}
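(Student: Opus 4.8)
The plan is to read the conclusion almost directly off the construction of $H'$ in Definition~\ref{construct-sar}, using Observation~\ref{alltestv-sar-0-2} to pin down the value. First I would note that the hypothesis provides a $\Test$ operation that returns a value $v$ in $H'$, and by our standing notational convention $v$ is a non-$\bot$ value. Hence the premise of Step~2 of Definition~\ref{construct-sar} is satisfied, so $H'$ contains a $\Set(v')$ operation (by the writer $p_1$) that was added in Step~2, where $v'$ is the value returned by some $\Test$ operation in $H'$.

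The next step is to argue $v' = v$. Both the given $\Test$ that returns $v$ and the $\Test$ that returns $v'$ are operations in $H'$, so Observation~\ref{alltestv-sar-0-2} immediately yields $v = v'$. Therefore $H'$ contains a $\Set(v)$ operation.

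Finally I would check that this $\Set(v)$ is the \emph{first} $\Set(-)$ operation in $H'$. Since the writer $p_1$ is not correct in $H$ (this is Case~2), we have $p_1 \notin {\hct}$, so the history $H|{\hct}$ produced by Step~1 contains no operation by $p_1$; because only the writer can invoke a $\Set(-)$ operation, $H|{\hct}$ contains no $\Set(-)$ operation at all. Step~2 then adds exactly \emph{one} $\Set(-)$ operation, namely the $\Set(v)$ identified above. Hence $\Set(v)$ is the unique $\Set(-)$ operation in $H'$, and so it is trivially the first $\Set(-)$ operation, as required.

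I do not expect any genuine obstacle here: the argument is essentially bookkeeping over the two-step construction of $H'$. The one point that needs care is that Step~2 refers to ``a $\Set(v)$ operation'' for ``some value $v$'', so one must verify this value is unambiguously determined by the history; that is precisely the content of Observation~\ref{alltestv-sar-0-2}, which I would invoke explicitly rather than leave implicit.
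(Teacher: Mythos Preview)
Your proposal is correct and follows essentially the same approach as the paper's proof: invoke Step~2 of Definition~\ref{construct-sar} to obtain a $\Set(v')$, use Observation~\ref{alltestv-sar-0-2} to conclude $v'=v$, and argue that since $p_1\notin{\hct}$ no $\Set(-)$ appears in $H|{\hct}$ and Step~2 adds at most one, so this $\Set(v)$ is the unique (hence first) $\Set(-)$ in $H'$.
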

\begin{proof}
    Suppose in $H'$, a $\Test$ returns $v$.
    By Step 2 of Definition~\ref{construct-sar},
        there is a $\Set(v')$ for some value $v'$ in $H'$ and a $\Test'$ returns $v'$ in $H'$.
    By Observation~\ref{alltestv-sar-0-2},
        $v=v'$, i.e., there is a $\Set(v)$ in $H'$.
    Since the writer $p_1$ is not correct in $H$,
        no $\Set(-)$ operation is in $H|{\hct}$.
    Since at most one $\Set(-)$ operation is added in Step 2 of Definition~\ref{construct-sar},
         $H'$ has at most one $\Set(-)$ operation and so the $\Set(v)$ is the only $\Set(-)$ in $H'$.
    Thus, there is a $\Set(v)$ such that it is the first $\Set(-)$.
\end{proof}

\begin{observation}\label{allsetadd-sar}
    If a $\Set(-)$ operation is in $H'$,
        the $[invocation,response]$ interval of this $\Set(-)$ is in the interval $(t_0,t_1)$.
\end{observation}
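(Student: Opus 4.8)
The plan is to observe that this is an almost immediate consequence of the construction of $H'$ in Definition~\ref{construct-sar}, combined with the standing hypothesis of Case~2, namely that the writer $p_1$ is \emph{not} correct in $H$. First I would record the simple structural fact that only the writer $p_1$ performs $\Set(-)$ operations. Since we are in the case where $p_1 \notin \hct$, none of $p_1$'s steps appear in $H|\hct$, and so $H|\hct$ contains no $\Set(-)$ operation whatsoever. Consequently, after Step~1 of Definition~\ref{construct-sar}, where we set $H' = H|\hct$, the history $H'$ has no $\Set(-)$ operation in it.

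Next I would identify the unique source of any $\Set(-)$ operation in the final $H'$: it can only be the single $\Set(v)$ operation introduced in Step~2 of the construction. By the explicit requirement of Step~2, this added $\Set(v)$ is placed so that its $[invocation,response]$ interval lies inside $(t_0,t_1)$. Thus any $\Set(-)$ operation appearing in $H'$ must be exactly this added operation, and therefore its $[invocation,response]$ interval is contained in $(t_0,t_1)$ --- which is precisely the claim.

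I do not expect any real obstacle, as the statement is essentially definitional once one notes that a faulty writer contributes no operations to $H|\hct$, leaving Step~2 as the only mechanism that can insert a $\Set(-)$ operation into $H'$. The single point deserving a word of care is confirming that Step~2 is well-defined, i.e.\ that the open interval $(t_0,t_1)$ into which the $\Set(v)$ must be inserted is actually nonempty; this is supplied by Lemma~\ref{iexists-sar}, which gives $t_1 > t_0$. With that in hand, the argument reduces to the two bullet points above and requires no further computation.
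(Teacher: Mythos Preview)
Your proposal is correct and takes essentially the same approach as the paper: the paper treats this as an immediate observation (stated without proof), relying on exactly the reasoning you give, namely that in Case~2 the faulty writer contributes no operations to $H|\hct$, so any $\Set(-)$ in $H'$ must be the single one inserted in Step~2 of Definition~\ref{construct-sar}, whose interval is explicitly required to lie in $(t_0,t_1)$.
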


We now prove that the constructed history $H'$ is linearizable with respect to an \sr.
First, we define the \emph{linearization point} of each operation in $H'$ as follows.
\begin{definition}\label{linearpoints-sar-2}
Let $o$ be an operation in $H'$.
    \begin{itemize}
        \item If $o$ is a $\Test$ operation that returns $\bot$,
	   the linearization point of $o$ is the time of the invocation step of~$o$.
	
	\item If $o$ is a $\Test$ operation that returns $v$,
	   the linearization point of $o$ is the time of the response step of~$o$.

        \item If $o$ is a $\Set(-)$ operation, 
            the linearization point of $o$ is any time $t$ in the $[invocation,response]$ interval of $o$.
\end{itemize}
\end{definition}

Note that in Definition~\ref{linearpoints-sar-2} the linearization point of every operation $o$ is between the invocation and response time of $o$. Thus, the following holds.

\begin{observation}\label{precedes-sar-2}
    If an operation $o$ precedes an operation $o'$ in $H'$,
        then the linearization point of $o$ is before the linearization point of $o'$.
\end{observation}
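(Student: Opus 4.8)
The plan is to obtain this observation as an immediate corollary of the structural remark that precedes Definition~\ref{linearpoints-sar-2}: for every operation $o$ in $H'$, the linearization point assigned to $o$ by Definition~\ref{linearpoints-sar-2} lies within the $[invocation,response]$ interval of $o$. First I would record why that remark holds, by inspecting the three cases of Definition~\ref{linearpoints-sar-2}: the linearization point of $o$ is either its invocation step (when $o$ is a $\Test$ returning $\bot$), its response step (when $o$ is a $\Test$ returning a non-$\bot$ value), or, when $o$ is a $\Set(-)$ operation, a time explicitly chosen inside the $[invocation,response]$ interval of $o$. In every case the linearization point is at least the invocation time of $o$ and at most the response time of $o$.

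Given that fact, the argument is a three-link chain. Suppose $o$ precedes $o'$ in $H'$. By the definition of the precedence relation, the response step of $o$ occurs before the invocation step of $o'$. Writing $\ell_o$ and $\ell_{o'}$ for the linearization points of $o$ and $o'$, I would then combine: $\ell_o$ is at most the response time of $o$; the response time of $o$ is before the invocation time of $o'$; and the invocation time of $o'$ is at most $\ell_{o'}$. Concatenating these gives $\ell_o$ before $\ell_{o'}$, which is exactly the claim.

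There is essentially no obstacle here; the only point worth a sentence is that the strict inequality in the middle link — coming from the definition of ``precedes'' — is what keeps the whole chain strict, since the two outer links are only weak (the linearization point may coincide with an invocation or a response step). This observation is the Case~2 (faulty writer) sticky-register counterpart of Observation~\ref{precedes-sar}, and of Observations~\ref{precedes-mar-2} and~\ref{precedes-tar-2} for the {\mar} and {\tar} constructions, so the same short proof transfers essentially verbatim; I would simply cite the remark immediately preceding the observation and carry out the one-line chaining above.
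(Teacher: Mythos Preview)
Your proposal is correct and matches the paper's approach exactly: the paper simply remarks, immediately before stating the observation, that in Definition~\ref{linearpoints-sar-2} the linearization point of every operation lies within its $[invocation,response]$ interval, and then records the observation as a direct consequence. Your write-up just unpacks that one-line justification, which is precisely what is intended.
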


By Definition~\ref{linearpoints-sar-2} and Observation~\ref{allsetadd-sar}, we have the following:
\begin{observation}\label{setl-sar}
     If a $\Set(-)$ operation is in $H'$,
        the linearization point of this $\Set(-)$ in $H'$ is in the interval $(t_0,t_1)$.
\end{observation}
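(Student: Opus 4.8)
The plan is to derive Observation~\ref{setl-sar} directly by combining the two facts cited just before its statement, namely Observation~\ref{allsetadd-sar} and Definition~\ref{linearpoints-sar-2}. This is a short composition argument rather than anything requiring new machinery: the work has already been done in establishing that every added $\Set(-)$ operation was inserted with its entire interval inside $(t_0,t_1)$, and the claim simply reads off that the chosen linearization point inherits this containment.

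First I would fix an arbitrary $\Set(-)$ operation $o$ in $H'$ (we are in Case~2, where the writer $p_1$ is not correct, so the only way a $\Set(-)$ operation appears in $H'$ is via Step~2 of the construction in Definition~\ref{construct-sar}). Next I would invoke Observation~\ref{allsetadd-sar}, which asserts that the $[invocation,response]$ interval of $o$ lies entirely within the interval $(t_0,t_1)$. Then I would appeal to Definition~\ref{linearpoints-sar-2}, according to which the linearization point of a $\Set(-)$ operation is some time $t$ chosen inside the $[invocation,response]$ interval of $o$. Since that interval is contained in $(t_0,t_1)$, the point $t$ must itself lie in $(t_0,t_1)$, which is exactly the claim.

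There is essentially no obstacle here: the only thing to be careful about is making sure the two cited results are applied to the same operation and that the containment is transitive (linearization point $\in$ interval $\subseteq (t_0,t_1)$). I would also implicitly rely on Lemma~\ref{iexists-sar} to know $(t_0,t_1)$ is a genuine nonempty interval, though this is already guaranteed by the fact that a $\Set(-)$ could be inserted inside it during the construction. A concise proof follows.

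\begin{proof}
Let $o$ be a $\Set(-)$ operation in $H'$.
By Observation~\ref{allsetadd-sar},
    the $[invocation,response]$ interval of $o$ is in the interval $(t_0,t_1)$.
By Definition~\ref{linearpoints-sar-2},
    the linearization point of $o$ is some time $t$ in the $[invocation,response]$ interval of $o$.
Since the $[invocation,response]$ interval of $o$ is in $(t_0,t_1)$,
    the time $t$ is in $(t_0,t_1)$.
\end{proof}
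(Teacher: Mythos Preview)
Your proposal is correct and takes essentially the same approach as the paper, which simply states the observation as an immediate consequence of Definition~\ref{linearpoints-sar-2} and Observation~\ref{allsetadd-sar} without giving a separate proof.
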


We now use the linearization points of all the operations in $H'$ to define the following linearization $L$ of $H'$:
\begin{definition}\label{hlp-sar-2}
	Let $L$ be a \emph{sequence} of operations such that:
	\begin{enumerate}
		\item\label{uno} An operation $o$ is in $L$ if and only if it is in $H'$.
		\item\label{due} An operation $o$ {\precedes} an operation $o'$ in $L$ if and only if
     the linearization point of~$o$ is before the linearization point of $o'$ in $H'$.
	\end{enumerate}
\end{definition}

By Observation~\ref{precedes-sar-2} and Definition~\ref{hlp-sar-2}(\ref{due}),
        $L$ respects the precedence relation between the operations of $H'$. More precisely:

\begin{observation}\label{precedes-l-sar-2}
    If an operation $o$ precedes an operation $o'$ in $H'$,
        then $o$ {\precedes} $o'$ in $L$.
\end{observation}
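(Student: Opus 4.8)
The plan is to obtain Observation~\ref{precedes-l-sar-2} as an immediate corollary of the two facts established just above it, namely Observation~\ref{precedes-sar-2} and the second clause of Definition~\ref{hlp-sar-2}. First I would fix two arbitrary operations $o$ and $o'$ of $H'$ with the hypothesis that $o$ precedes $o'$ in $H'$ (that is, the response step of $o$ occurs strictly before the invocation step of $o'$ in $H'$). Since $o$ and $o'$ are operations of $H'$, by the first clause of Definition~\ref{hlp-sar-2} they are also operations of $L$, so asking about their relative order in $L$ is meaningful.

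Next I would apply Observation~\ref{precedes-sar-2} — which itself follows from the fact that, under Definition~\ref{linearpoints-sar-2}, the linearization point of every operation lies inside that operation's $[invocation,response]$ interval — to conclude that the linearization point of $o$ in $H'$ is strictly before the linearization point of $o'$ in $H'$. Finally I would invoke the second clause of Definition~\ref{hlp-sar-2}, which stipulates that an operation precedes another in $L$ exactly when the former's linearization point precedes the latter's linearization point in $H'$; this yields that $o$ precedes $o'$ in $L$, completing the argument.

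There is essentially no technical obstacle here: the statement is a restatement of what Observation~\ref{precedes-sar-2} and Definition~\ref{hlp-sar-2} already say. The only point worth a sentence of care is checking that the relation ``precedes in $L$'' induced by Definition~\ref{hlp-sar-2} is a genuine total order on the operations of $H'$, so that $L$ is a well-defined sequence; this uses the standard convention that distinct steps of an execution occur at distinct times, together with the fact (noted after Definition~\ref{construct-sar}) that the $[invocation,response]$ interval of the at most one $\Set(-)$ operation added in the construction of $H'$ can be shrunk so that its linearization point does not coincide with that of any other operation. Granting this, $L$ is well defined and Observation~\ref{precedes-l-sar-2} is immediate.
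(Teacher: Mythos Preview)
Your proposal is correct and follows essentially the same approach as the paper, which simply derives the observation from Observation~\ref{precedes-sar-2} together with Definition~\ref{hlp-sar-2}(\ref{due}). One minor inaccuracy: there is no shrinking-interval remark after Definition~\ref{construct-sar} (that remark appears only for the other register types), but since at most one $\Set(-)$ is added here this point is moot and does not affect your argument.
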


In the following lemmas and theorem we prove that the linearization $L$ of $H'$ conforms to the sequential specification of a SWMR \sr.

\begin{lemma}\label{Ouahed}
    In $L$, 
	if a $\Test$ operation returns $\bot$,
	then no $\Set(-)$ operation {\precedes} the $\Test$ operation.
\end{lemma}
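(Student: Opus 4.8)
The plan is to mirror, for the faulty-writer case, the argument of Lemma~\ref{mini} from the correct-writer case. I would start by assuming for contradiction that in $L$ some $\Test$ operation returns $\bot$ but there is a $\Set(-)$ operation that precedes this $\Test$ in $L$. By Definition~\ref{hlp-sar-2}, both operations are also in $H'$, and the linearization point of the $\Set(-)$ is strictly before the linearization point of the $\Test$ in $H'$.

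Next I would pin down where these two linearization points lie. By Definition~\ref{linearpoints-sar-2}, since the $\Test$ returns $\bot$, its linearization point in $H'$ is the time $t'$ of its invocation step. By Observation~\ref{sameip-sar-2} (applied in $H'$), $t' \le t_0$. On the other hand, the $\Set(-)$ operation in $H'$ was added in Step~2 of the construction (Definition~\ref{construct-sar}) — since $p_1$ is faulty, $H|{\hct}$ contains no writer operations — so by Observation~\ref{setl-sar} its linearization point $t$ lies in the open interval $(t_0, t_1)$, hence $t > t_0$.

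Finally I would derive the contradiction: we have $t > t_0 \ge t'$, so $t > t'$, meaning the linearization point of the $\Set(-)$ is \emph{after} the linearization point of the $\Test$ in $H'$; by Definition~\ref{hlp-sar-2}(\ref{due}) this says the $\Set(-)$ does not precede the $\Test$ in $L$ — contradicting our assumption. This closes the proof.

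I don't anticipate a genuine obstacle here; the only point requiring care is to be explicit that the $\Set(-)$ operation appearing in $H'$ must be the one inserted in Step~2 of Definition~\ref{construct-sar} (because the writer is faulty, so $H|{\hct}$ contributes no $\Set(-)$), which is exactly what licenses the use of Observation~\ref{setl-sar} to place its linearization point inside $(t_0,t_1)$. The structure is otherwise identical to Lemma~\ref{mini}, with Observation~\ref{sameip-sar-2} playing the role of Observation~\ref{sameip-sar-hc} and Observation~\ref{setl-sar} playing the role of Definition~\ref{linearpoints-sar}.

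\begin{proof}
	Assume for contradiction that, in $L$, a $\Test$ operation returns $\bot$ but some $\Set(-)$ operation {\precedes} this $\Test$ operation.
	By Definition~\ref{hlp-sar-2},
		these $\Set(-)$ and $\Test$ operations are also in $H'$,
		and the linearization point $t$ of the $\Set(-)$ operation is before the linearization point $t'$ of the $\Test$ operation in $H'$.
	So $t < t'$.
	Since the writer $p_1$ is not correct in $H$, no $\Set(-)$ operation is in $H|{\hct}$; thus by Observation~\ref{correctop-t-sar-2} (and the construction of $H'$), this $\Set(-)$ operation is the one added in Step~2 of Definition~\ref{construct-sar}.
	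So by Observation~\ref{setl-sar},
		$t \in (t_0,t_1)$, and in particular $t > t_0$.
	Since the $\Test$ returns $\bot$, by Definition~\ref{linearpoints-sar-2},
		$t'$ is the time of the invocation step of this $\Test$ in $H'$.
	Since $H' = H|{\hct}$ (Observation~\ref{hcorrectp-sar-2}),
		by Observation~\ref{sameip-sar-2},
		$t' \le t_0$.
	Thus $t > t_0 \ge t'$, i.e., $t > t'$ --- a contradiction to $t < t'$.
\end{proof}
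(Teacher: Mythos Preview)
Your proof is correct and follows essentially the same route as the paper's: assume a $\Set(-)$ precedes a $\Test$ returning $\bot$ in $L$, use Observation~\ref{setl-sar} to place the $\Set$'s linearization point in $(t_0,t_1)$, use Definition~\ref{linearpoints-sar-2} and Observation~\ref{sameip-sar-2} to place the $\Test$'s linearization point at $t'\le t_0$, and derive a contradiction. One small slip: you write ``$H' = H|{\hct}$ (Observation~\ref{hcorrectp-sar-2})'', but that observation says $H'|{\hct} = H|{\hct}$, and in fact $H'$ may contain the added $\Set$; this clause is unnecessary anyway since Observation~\ref{sameip-sar-2} is stated directly for $H'$.
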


\begin{proof}
	Assume for contradiction that, 
		in $L$,
		a $\Test$ operation returns $\bot$ but a $\Set(v)$ operation {\precedes} the $\Test$ operation.
        By Definition~\ref{hlp-sar-2}(1),
            these $\Set(v)$ and $\Test$ operations are also in $H'$.
        By Observation~\ref{setl-sar},
             the linearization point of this $\Set(v)$ operation in $H'$ is a time $t \in (t_0,t_1)$.
         Since the $\Test$ operation returns $\bot$, by Definition~\ref{linearpoints-sar-2},
            the linearization point of this $\Test$ operation in $H'$ is the time $t'$ of the invocation step of the $\Test$ operation in $H'$.
        By Observation~\ref{sameip-sar-2},
            $t' \le t_0$.
        Since $t \in (t_0,t_1)$ and $t' \le t_0$, $t' < t$.
        In other words, the linearization point of the $\Test$ operation is before the linearization point of the $\Set(v)$ operation in $H'$.
        So, by Definition~\ref{hlp-sar-2}(2),
            the $\Test$ operation {\precedes} the $\Set(v)$ operation in $L$ --- a contradiction.
\end{proof}

\begin{lemma}\label{Tnenen}
	In $L$,
		  if a $\Test$ operation returns~$v$,
            then $\Set(v)$ operation is the first $\Set(-)$ operation and it {\precedes} the $\Test$ operation.
\end{lemma}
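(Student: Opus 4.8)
The plan is to mirror the structure of Lemma~\ref{mo} (the analogous statement in Case~1), adapting it to the construction of $H'$ given in Definition~\ref{construct-sar}. First I would take an arbitrary $\Test$ operation in $L$ that returns some value $v$. By Definition~\ref{hlp-sar-2}(\ref{uno}), this $\Test$ operation is also in $H'$. Then I would invoke Lemma~\ref{setfromcorrect-sar} to conclude that there is a $\Set(v)$ operation in $H'$ and that it is the \emph{first} (indeed the only) $\Set(-)$ operation in $H'$. This is the step that captures the content peculiar to Case~2: since the writer is faulty, $H|{\hct}$ has no $\Set(-)$ operations, and Step~2 of the construction adds exactly one $\Set(-)$, for the (well-defined, by Observation~\ref{alltestv-sar-0-2}) value returned by the $\Test$ operations.

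Next I would pin down the linearization points. By Observation~\ref{setl-sar}, the linearization point $t$ of this $\Set(v)$ operation in $H'$ lies in the interval $(t_0, t_1)$. Since the $\Test$ operation returns $v \neq \bot$, Definition~\ref{linearpoints-sar-2} says its linearization point $t'$ is the time of its response step in $H'$; and by Observation~\ref{sameip-sar-2}, $t' \ge t_1$. Combining $t \in (t_0, t_1)$ with $t' \ge t_1$ gives $t < t'$, i.e.\ the linearization point of the $\Set(v)$ is strictly before that of the $\Test$ in $H'$. By Definition~\ref{hlp-sar-2}(\ref{due}), $\Set(v)$ precedes the $\Test$ operation in $L$. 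Finally, since $\Set(v)$ is the only (hence first) $\Set(-)$ operation in $H'$, and $L$ contains exactly the operations of $H'$ (Definition~\ref{hlp-sar-2}(\ref{uno})), it is also the first $\Set(-)$ operation in $L$. This completes the argument.

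\begin{proof}
	Suppose in $L$, a $\Test$ operation returns $v$.
	By Definition~\ref{hlp-sar-2}(\ref{uno}), this $\Test$ operation is also in $H'$.
	By Lemma~\ref{setfromcorrect-sar},
	there is a $\Set(v)$ operation in $H'$ and it is the first $\Set(-)$ operation in $H'$.
	Thus, by Observation~\ref{setl-sar},
	the linearization point of this $\Set(v)$ operation in $H'$ is a time $t \in (t_0,t_1)$.
	Since the $\Test$ operation returns $v$, by Definition~\ref{linearpoints-sar-2},
	the linearization point of this $\Test$ operation in $H'$ is the time $t'$ of its response step in $H'$.
	By Observation~\ref{sameip-sar-2},
	$t' \ge t_1$.
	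Since $t \in (t_0,t_1)$ and $t' \ge t_1$, $t < t'$.
	In other words, the linearization point of this $\Set(v)$ operation is before the linearization point of the $\Test$ operation in $H'$.
	So, by Definition~\ref{hlp-sar-2}(\ref{due}),
	this $\Set(v)$ operation {\precedes} the $\Test$ operation in $L$.
	Since this $\Set(v)$ operation is the first $\Set(-)$ operation in $H'$,
	by Observation~\ref{precedes-l-sar-2}, this $\Set(v)$ operation is also the first $\Set(-)$ operation in $L$.
	Thus, this $\Set(v)$ operation is the first $\Set(-)$ operation and it {\precedes} the $\Test$ operation in~$L$.
\end{proof}

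The main obstacle, such as it is, is not a technical difficulty but making sure the bookkeeping about ``first $\Set(-)$ operation'' is airtight: one must use that $H|{\hct}$ contains no writer operations (since $p_1$ is faulty) together with the fact that Step~2 of Definition~\ref{construct-sar} adds \emph{at most one} $\Set(-)$, so that ``first'' and ``only'' coincide; this is exactly what Lemma~\ref{setfromcorrect-sar} packages, so the proof above simply cites it. The interval arithmetic ($t \in (t_0,t_1)$ versus $t' \ge t_1$) is the same pattern as in Lemma~\ref{mo} and poses no trouble, given that Lemma~\ref{iexists-sar} guarantees $(t_0,t_1)$ is nonempty and Observation~\ref{sameip-sar-2} transfers the definitions of $t_0, t_1$ to $H'$.
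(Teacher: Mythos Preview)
Your proof is correct and follows essentially the same approach as the paper's own proof: both invoke Lemma~\ref{setfromcorrect-sar} to obtain the unique $\Set(v)$ in $H'$, use Observation~\ref{setl-sar} and Observation~\ref{sameip-sar-2} to compare the linearization points $t\in(t_0,t_1)$ and $t'\ge t_1$, and then transfer ``first'' from $H'$ to $L$ via Definition~\ref{hlp-sar-2}(\ref{uno}) and Observation~\ref{precedes-l-sar-2}. The only difference is the order in which the two linearization points are established, which is immaterial.
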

\begin{proof}
	Suppose in $L$, a $\Test$ operation returns $v$.
	By Definition~\ref{hlp-sar-2}(1), this $\Test$ operation is also in $H'$.
        Since it returns $v$, by Definition~\ref{linearpoints-sar-2},
            the linearization point of this $\Test$ operation in $H'$ is the time $t'$ of its response step in $H'$.
        By Observation~\ref{sameip-sar-2},
            $t_1 \le t'$ ($\star$).
            
        Since this $\Test$ operation returns $v$ in $H'$,
            by Lemma~\ref{setfromcorrect-sar},
            there is a $\Set(v)$ operation such that it is the first $\Set(-)$ operation in $H'$.
        By Definition~\ref{hlp-sar-2}(\ref{uno}) and Observation~\ref{precedes-l-sar-2},
            this $\Set(v)$ operation is also the first $\Set(-)$ operation in $L$ ($\star\star$).
        By Observation~\ref{setl-sar},
             the linearization point of this $\Set(v)$ operation in $H'$ is a time $t \in (t_0,t_1)$,
             and so $t <t_1$.
        By ($\star$), $t < t'$.
        In other words, the linearization point of this $\Set(v)$ operation is before the linearization point of the $\Test$ operation in $H'$.
        So, by Definition~\ref{hlp-sar-2}(2),
            this $\Set(v)$ operation {\precedes} the $\Test$ operation in $L$.
        Thus, by ($\star\star$), this $\Set(v)$ operation is the first $\Set(-)$ operation and it {\precedes} the $\Test$ operation in $L$.
\end{proof}

By Lemmas~\ref{Ouahed} and~\ref{Tnenen}, $L$ conforms to the sequential specification of a SWMR {\sr}. More precisely:

\begin{theorem}\label{read-sar-2}
    In the linearization history $L$ of $H'$:
    \begin{compactitem}
        \item If a $\Test$ operation returns $v$ then $\Set(v)$ is the first $\Set(-)$ operation and it {\precedes} the $\Test$ operation.
        \item If a $\Test$ operation returns $\bot$ then no $\Set(-)$ operation {\precedes} the $\Test$ operation.
    \end{compactitem}
\end{theorem}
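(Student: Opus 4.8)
The plan is to derive this theorem immediately from the two lemmas that precede it, namely Lemma~\ref{Tnenen} and Lemma~\ref{Ouahed}. The theorem has two bullets: one about a $\Test$ operation that returns a non-$\bot$ value $v$, and one about a $\Test$ operation that returns $\bot$, both stated for the linearization $L$ of $H'$ given in Definition~\ref{hlp-sar-2}. These are exactly the conclusions of Lemma~\ref{Tnenen} and Lemma~\ref{Ouahed}, respectively. So the only step is to invoke Lemma~\ref{Tnenen} for the first bullet and Lemma~\ref{Ouahed} for the second, and to observe that together they say precisely that $L$ conforms to the sequential specification of a SWMR {\sr} as given in Definition~\ref{def-sar}.

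The substantive work has already been done in establishing those two lemmas, through the chain of auxiliary results they rely on: Lemma~\ref{setfromcorrect-sar} (when the writer is faulty, every $\Test$ returning $v$ forces a unique first $\Set(v)$ into $H'$), Observation~\ref{setl-sar} (the linearization point of any added $\Set(-)$ lies in the open interval $(t_0,t_1)$), and Observation~\ref{sameip-sar-2} (the boundaries $t_0$ and $t_1$ coincide with the extreme invocation/response times of the relevant $\Test$ operations in $H'$). For Lemma~\ref{Tnenen} one compares the response time $t' \ge t_1$ of the $\Test$ with the linearization point $t \in (t_0,t_1)$ of the unique $\Set(v)$ and concludes $t < t'$, hence that $\Set(v)$ precedes the $\Test$ in $L$; for Lemma~\ref{Ouahed} one compares the invocation time $t' \le t_0$ of the $\Test$ with the linearization point $t \in (t_0,t_1)$ of any $\Set(-)$ and concludes $t' < t$, hence no $\Set(-)$ can precede the $\Test$ in $L$.

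Since the two lemmas are already in hand, there is no real obstacle here; the only thing to verify is that the two bullets of the theorem are literally the statements of the two lemmas applied to the same history $L$ — which they are, once one recalls that $L$ in Definition~\ref{hlp-sar-2} is built from the linearization points of Definition~\ref{linearpoints-sar-2}, the very definitions used in the proofs of Lemmas~\ref{Ouahed} and~\ref{Tnenen}. Thus the proof is essentially a one-line citation: by Lemma~\ref{Tnenen} the first bullet holds, and by Lemma~\ref{Ouahed} the second bullet holds.
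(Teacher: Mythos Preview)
Your proposal is correct and matches the paper's approach exactly: the paper derives Theorem~\ref{read-sar-2} directly by citing Lemmas~\ref{Ouahed} and~\ref{Tnenen}, with no additional argument. Your extra paragraph recapitulating the internals of those lemmas is accurate but unnecessary for the proof itself.
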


\begin{theorem}\label{linearthe-sar-case2}
    If the writer is not correct in $H$,
    the history $H$ is Byzantine linearizable with respect to a SWMR {\sr}.
\end{theorem}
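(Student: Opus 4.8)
The plan is to follow exactly the template used for the writer-correct case (Theorem~\ref{linearthe-sar-case1}) and for the analogous writer-faulty cases of the other register types (Theorems~\ref{linearthe-mar-case2} and~\ref{linearthe-tar-case2}). By Definition~\ref{def-sar}, to show that $H$ is Byzantine linearizable with respect to a SWMR {\sr} it suffices to exhibit a history $H'$ such that (1)~$H'|{\hct} = H|{\hct}$ and (2)~$H'$ is linearizable with respect to a SWMR {\sr}. I would take $H'$ to be precisely the history built in Definition~\ref{construct-sar}: start from $H|{\hct}$ (which contains \emph{no} $\Set(-)$ operation, since the writer $p_1$ is faulty and hence $p_1 \notin {\hct}$), and if some $\Test$ operation of a correct reader returns a non-$\bot$ value $v$, insert a single $\Set(v)$ whose $[invocation,response]$ interval lies inside $(t_0,t_1)$. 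Since $(t_0,t_1)$ is nonempty by Lemma~\ref{iexists-sar}, this insertion is legitimate, and since by Observation~\ref{alltestv-sar-0-2} all correct-reader $\Test$ operations that return a non-$\bot$ value return the \emph{same} value, exactly one $\Set(-)$ is ever added, so ``the first $\Set(-)$'' is well defined in $H'$. Property (1) is then immediate from Observation~\ref{hcorrectp-sar-2}.

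For property (2), I would take $L$ to be the linearization of $H'$ defined in Definition~\ref{hlp-sar-2}, using the linearization points assigned in Definition~\ref{linearpoints-sar-2}. Two things must be checked. First, $L$ respects the precedence relation between the operations of $H'$: this is Observation~\ref{precedes-l-sar-2}, which holds because every linearization point is placed within its operation's $[invocation,response]$ interval. Second, $L$ conforms to the sequential specification of a SWMR {\sr}: this is exactly Theorem~\ref{read-sar-2}, which is assembled from Lemma~\ref{Ouahed} (a $\Test$ returning $\bot$ has no preceding $\Set(-)$ in $L$) and Lemma~\ref{Tnenen} (a $\Test$ returning $v$ has the first $\Set(-)$, namely a $\Set(v)$, preceding it in $L$). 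Putting these two facts together gives (2), and hence the theorem.

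The honest remark here is that there is no genuinely hard step left: all the substance has already been front-loaded into the supporting lemmas. The only places that require a moment of care are the ones the construction quietly relies on — that the interval $(t_0,t_1)$ into which the synthetic $\Set(v)$ is squeezed is actually nonempty (Lemma~\ref{iexists-sar}, which itself rests on the relay-style Theorem~\ref{testatestb-sar} forcing that no correct reader's $\Test$ returning $\bot$ can come after a correct reader's $\Test$ returning a value), and that at most one write is added so that uniqueness (Observation~\ref{alltestv-sar-0-2}, Lemma~\ref{setfromcorrect-sar}) is preserved in $H'$. Once those are in hand, the proof is the same three-line assembly as in the other cases: instantiate $H'$ by Definition~\ref{construct-sar}, invoke Observation~\ref{hcorrectp-sar-2} for (1), instantiate $L$ by Definition~\ref{hlp-sar-2}, and invoke Observation~\ref{precedes-l-sar-2} and Theorem~\ref{read-sar-2} for (2).

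\begin{proof}
Assume that the writer $p_1$ is not correct in $H$.
To prove that $H$ is Byzantine linearizable with respect to a SWMR {\sr} (Definition~\ref{def-sar}),
we must prove that there is a history $H'$ such that:
\begin{compactenum}[(1)]
    \item $H'|{\hct} = H|{\hct}$.
    \item $H'$ is linearizable with respect to a SWMR {\sr}.
\end{compactenum}
Let $H'$ be the history given by Definition~\ref{construct-sar}.
By Observation~\ref{hcorrectp-sar-2}, $H'$ satisfies~(1).

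Let $L$ be the linearization of $H'$ given in Definition~\ref{hlp-sar-2}.
\begin{compactenum}[(a)]
    \item By Observation~\ref{precedes-l-sar-2}, $L$ respects the precedence relation between the operations of $H'$.
    \item By Theorem~\ref{read-sar-2}, $L$ conforms to the sequential specification of a SWMR {\sr}.
\end{compactenum}
Thus, $H'$ also satisfies~(2).
\end{proof}
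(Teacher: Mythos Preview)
Your proposal is correct and essentially identical to the paper's own proof: it instantiates $H'$ via Definition~\ref{construct-sar}, uses Observation~\ref{hcorrectp-sar-2} for (1), and assembles (2) from Observation~\ref{precedes-l-sar-2} and Theorem~\ref{read-sar-2} applied to the linearization $L$ of Definition~\ref{hlp-sar-2}. The formal proof you wrote matches the paper's proof almost verbatim, and your preamble accurately identifies the supporting lemmas (Lemma~\ref{iexists-sar}, Observation~\ref{alltestv-sar-0-2}, Lemma~\ref{setfromcorrect-sar}) on which the construction quietly relies.
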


\begin{proof}
Assume that the writer $p_1$ is not correct in $H$.
    To prove that $H$ is
        Byzantine linearizable with respect to a SWMR {\sr} (Definition~\ref{def-sar}),
    we must prove that there is a history $H'$ such that:
    \begin{compactenum}[(1)]
        \item $H'|{\hct} = H|{\hct}$.
        \item $H'$ is linearizable with respect to a SWMR {\sr}. 
    \end{compactenum}
    Let $H'$ be the history given by Definition~\ref{construct-sar}.
    By Observation~\ref{hcorrectp-sar-2}, $H'$ satisfies (1).
    
    Let $L$ be the linearization of $H'$ given in Definition~\ref{hlp-sar-2}.
    \begin{compactenum}[(a)]
        \item By Observation~\ref{precedes-l-sar-2}, $L$ respects the precedence relation between the operations of $H'$.
        \item By Theorem~\ref{read-sar-2}, $L$ conforms to the sequential specification of a SWMR {\sr}.
        \end{compactenum}
    Thus, $H'$ also satisfies~(2).
\end{proof}

By Theorems~\ref{linearthe-sar-case1} and~\ref{linearthe-sar-case2}, the arbitrary history $H$ of Algorithm~\ref{code-sar} is Byzantine linearizable with respect to an {\sr} (independent of whether the writer $p_1$ is correct or not in $H$):

\begin{restatable}{theorem}{verifybl}\label{linearthe-sar}
    [\textsc{Byzantine linerizability}] The history $H$ is Byzantine linearizable with respect to a SWMR {\sr}.
\end{restatable}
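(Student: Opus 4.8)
The plan is to prove Theorem~\ref{linearthe-sar} by a direct case analysis on whether the writer $p_1$ is correct in $H$, invoking the two case-specific theorems already established. Recall that by Definition~\ref{def-hbl} and Definition~\ref{LinearizableImplementationByz}, to show that $H$ is Byzantine linearizable with respect to a SWMR {\sr} it suffices to exhibit a history $H'$ with $H'|{\hct} = H|{\hct}$ that is linearizable (in the sense of Definition~\ref{LinearizableImplementationCrash}) with respect to a SWMR {\sr}.

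First I would dispatch the case where $p_1$ is correct in $H$. Here Theorem~\ref{linearthe-sar-case1} already gives exactly the required conclusion: it was proved by taking $H' = H|{\hct}$, defining linearization points for every operation (the invocation step for a $\Test$ returning $\bot$, the response step for a $\Test$ returning a non-$\bot$ value and for every non-first $\Set(-)$, and a point inside $(t_0,t_1)$ for the first $\Set(-)$, which exists by Corollary~\ref{tinip-sar-case1}), and verifying via Theorem~\ref{read-sar} that the induced total order $L$ respects precedence (Observation~\ref{precedes-l-sar}) and conforms to the sequential specification of a {\sr}. So in this subcase $H$ is Byzantine linearizable, with witness $H' = H|{\hct}$.

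Next I would dispatch the case where $p_1$ is \emph{not} correct in $H$. Here $H|{\hct}$ contains no $\Set(-)$ operation, so Theorem~\ref{linearthe-sar-case2} supplies the required $H'$ via the construction of Definition~\ref{construct-sar}: start from $H|{\hct}$ and, if some correct $\Test$ returns a value $v$, add a single $\Set(v)$ operation whose interval lies inside $(t_0,t_1)$ (nonempty by Lemma~\ref{iexists-sar}); Observation~\ref{hcorrectp-sar-2} gives $H'|{\hct} = H|{\hct}$, and Theorem~\ref{read-sar-2} together with Observation~\ref{precedes-l-sar-2} shows the associated linearization $L$ respects precedence and conforms to the sequential specification of a {\sr}. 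So in this subcase too $H$ is Byzantine linearizable.

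Since these two cases are exhaustive, $H$ is Byzantine linearizable with respect to a SWMR {\sr} regardless of whether $p_1$ is correct, which is the claim. There is no genuine obstacle left at this stage: all the difficulty has already been absorbed into the earlier lemmas — in particular Lemma~\ref{onecorrectset0-sar} (the loop invariants bounding $\setv$ and $\setb$ and forcing agreement on witnessed values), Lemma~\ref{asker1-sar} and Lemma~\ref{notreturn0-sar} (which prevent a correct witness from ever being placed in $\setb$ once $v$ is durably witnessed), and Lemma~\ref{iexists-sar} with Lemma~\ref{t0Readintervalexists} (which guarantee the linearization intervals exist). The only thing to be careful about is purely bookkeeping: checking that the two invoked theorems are each stated with exactly the conclusion ``$H$ is Byzantine linearizable with respect to a SWMR {\sr}'', so that the case split composes cleanly into the statement of Theorem~\ref{linearthe-sar}.
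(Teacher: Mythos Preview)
Your proposal is correct and follows essentially the same approach as the paper: a two-case split on whether the writer $p_1$ is correct in $H$, directly invoking Theorem~\ref{linearthe-sar-case1} and Theorem~\ref{linearthe-sar-case2}. One small slip: in your closing discussion you cite Lemma~\ref{t0Readintervalexists}, but that lemma belongs to the authenticated-register appendix, not the sticky-register section; the relevant interval-existence fact here is only Lemma~\ref{iexists-sar}.
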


By Theorem~\ref{termination-sar} and Theorem~\ref{linearthe-sar},
    we have the following:

\stickycorrect*

\end{document}